\newtheorem{prop}{Proposition}[chapter]
\newtheorem{question}{Question}[chapter]
\newtheorem{problem}{Problem}[chapter]
\newtheorem{thm}{Theorem}[chapter]
\newtheorem{lemma}{Lemma}[chapter]
\newtheorem{cor}{Corollary}[chapter]
\newtheorem{conj}{Conjecture}[chapter]
\newtheorem{defi}{Definition}[chapter]
\newtheorem{defn}{Definition}[chapter]
\newtheorem{ex}{Exercise}[chapter]
\newtheorem{hyp}{Hypothesis}[chapter]
\newtheorem{claim}{Claim}[chapter]
\newcommand{\PP}{\ensuremath{\mathbb{P}}}
\newcommand{\cc}[1]{\ensuremath{\mbox{\textbf{#1}}}}
\newcommand{\ignore}[1]{}
\newcommand{\bitlength}[1]{{\langle {#1} \rangle}}
\newcommand{\Z}{\ensuremath{\mathbb{Z}}}
\newcommand{\C}{\ensuremath{\mathbb{C}}}
\newcommand{\Q}{\ensuremath{\mathbb{Q}}}
\newcommand{\N}{\ensuremath{\mathbb{N}}}
\newcommand{\End}{\ensuremath{\mbox{End}}}
\newcommand{\Ker}{\ensuremath{\mbox{Ker}}}
\newcommand{\Hom}{\ensuremath{\mbox{Hom}}}
\newcommand{\Tr}{\ensuremath{\mbox{Tr}}}
\newcommand{\im}{\ensuremath{\mbox{Im}}}
\newcommand{\id}{\ensuremath{\mbox{id}}}
\newcommand{\Id}{\ensuremath{\mbox{Id}}}
\newcommand{\Implies}{\ensuremath{\Rightarrow} }
\def\af{{\mathrm{Aff}}}
\def\z2{{\Z_{<2>}}}
\def\ztn{{\Z_{<2>}^d}}
\def\beq{\begin{equation}}
\def\eeq{\end{equation}}
\def\beqs{\begin{eqnarray*}}
\def\eeqs{\end{eqnarray*}}
\newcommand{\R}{\mathbb{R}}
\newcommand{\x}{\mathbf{x}}
\newcommand{\n}{\mathbf{n}}
\def\a{{\mathbf{\alpha}}}
\def\b{{\mathbf{\beta}}}
\def\l{{\mathbf{\lambda}}}
\def\g{{\mathbf{\gamma}}}
\def\poly{\mbox{poly}}
\def\perm{\mbox{perm}}
\newcommand{\Sym}{\ensuremath{\mbox{Sym}}}
\def\ie{i.\,e.\,}
\begin{document}
\title{Geometric Complexity Theory: Introduction}
\author{
Dedicated to Sri Ramakrishna \\ \\
Ketan D. Mulmuley \footnote{Part of the work on GCT was done while 
the  first author was visiting I.I.T. Mumbai to which he is grateful for its 
hospitality}
 \\
The University of Chicago
 \\   \\ Milind Sohoni\\ I.I.T.,
Mumbai \\ \\
Technical Report TR-2007-16\\
Computer Science Department \\
The University of Chicago \\
September 2007}

\maketitle

\begin{center} {\bf \LARGE Foreword} \end{center}
\bigskip
\bigskip

These are lectures notes for the introductory graduate courses on 
geometric complexity theory (GCT) in the computer science department,
the university of Chicago. Part I consists of the lecture notes 
for the course given by the first author in the spring quarter, 2007. It
gives introduction to the  basic structure of GCT. 
Part II consists of the lecture notes for the
course given by the second author in the spring quarter, 2003. It gives
introduction to invariant theory with a view towards GCT. 
No background in algebraic geometry or representation theory is assumed. 
These lecture notes in conjunction with the article \cite{GCTflip1},
which describes in detail the basic plan of GCT based on 
the  principle  called the  {\em flip}, 
should provide a high level picture of GCT assuming familiarity with 
only basic notions of algebra, such as groups, rings, fields etc.
Many of the theorems in these lecture notes are stated without proofs,
but after giving enough motivation so that they can be taken on faith.
For the readers interested in further study, Figure~\ref{ftree} shows 
logical dependence among the various papers of GCT and a suggested 
reading sequence. 

The first author is grateful to Paolo Codenotti, Joshua Grochow, 
Sourav Chakraborty and Hari Narayanan for taking notes for his lectures.

\begin{figure}[!p]
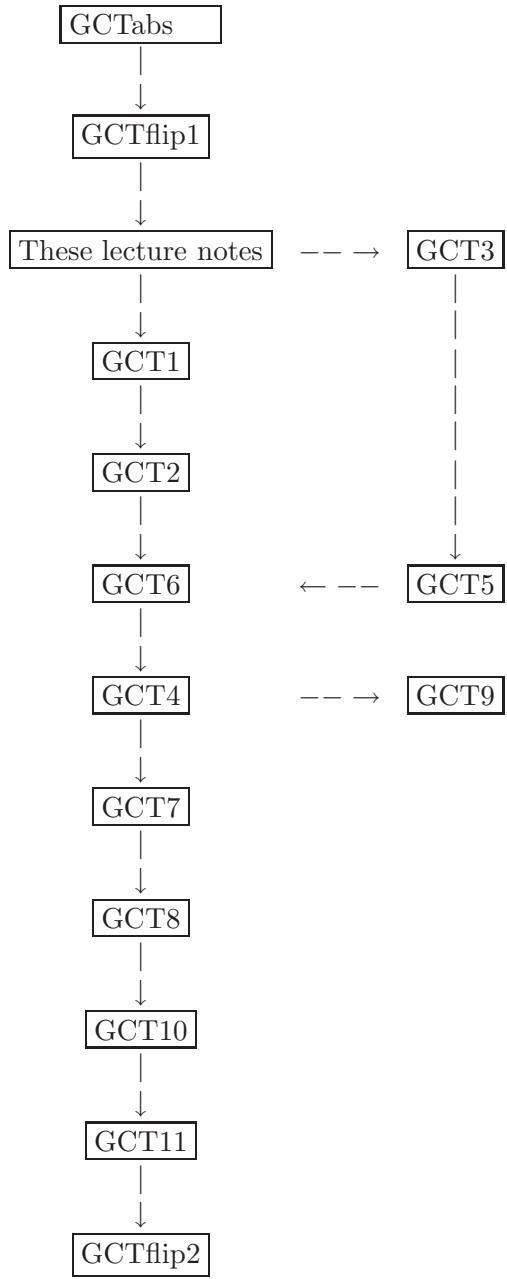

\centering
\[\begin{array} {ccc} 
\fbox{\parbox{.75in}{GCTabs}} \\
|\\
\downarrow\\
\fbox{GCTflip1}\\
|\\
\downarrow\\
\fbox{These lecture notes} & --\rightarrow & \fbox{GCT3} \\
| & & | \\
\downarrow & & | \\
\fbox{GCT1}& & | \\
|& & | \\
\downarrow & & | \\
\fbox{GCT2} & & | \\
|& & | \\
\downarrow & & \downarrow \\
\fbox{GCT6} & \leftarrow -- & \fbox{GCT5} \\
|\\
\downarrow\\
\fbox{GCT4} & --\rightarrow & \fbox{GCT9} \\
|\\
\downarrow\\
\fbox{GCT7}\\
|\\
\downarrow\\
\fbox{GCT8}\\
|\\
\downarrow\\
\fbox{GCT10}\\
|\\
\downarrow\\
\fbox{GCT11}\\
|\\
\downarrow\\
\fbox{GCTflip2}\\
\end{array}\]
\caption{Logical dependence among the GCT papers} 
\label{ftree}
\end{figure}
 
\tableofcontents

\vfil \eject

\quad\vfil \eject

\part{The basic structure of GCT \\ {\normalsize By Ketan D. Mulmuley}} 

\chapter{Overview}
\begin{center} {\Large  Scribe: Joshua A. Grochow} \end{center}


\noindent{\bf Goal:} An overview of GCT.

The purpose of this course is to give an introduction to Geometric Complexity Theory (GCT), which is an approach to proving $\cc{P}\neq \cc{NP}$ via
algebraic geometry and representation theory. 
A basic plan of this approach is described in \cite{GCTflip1,GCTflip2}. It
is partially implemented in a series of articles \cite{GCT1}-\cite{GCT11}. 
The  paper \cite{GCThyderabad} is a conference announcement of
GCT. The paper \cite{lowerbound} gives
an unconditional lower bound in a PRAM model without bit operations based on
elementary algebraic geometry, and was a starting point for 
the GCT investigation via algebraic geometry.

The only mathematical prerequisites for this course are a basic knowledge of abstract algebra (groups, ring, fields, etc.) and a knowledge of computational complexity.  In the first month
we plan to cover the representation theory of finite groups,
the symmetric group $S_n$, and  $GL_n(\C)$, and
enough algebraic geometry so that in the remaining lectures 
we  can cover  basic GCT.
Most of the background results will only be sketched or omitted.  

This  lecture uses slightly more algebraic geometry and representation theory than the reader is assumed to know in order to give a 
more complete picture of GCT. 
As the course continues, we will cover this material.

\section{Outline}
Here is an outline of the GCT approach.
Consider the  \cc{P} vs. \cc{NP} question in characteristic 0; i.e.,
over  integers.  So bit operations are not allowed, and basic operations on integers are considered to take constant time.  For a similar approach in nonzero characteristic (characteristic 2 being the classical case from a computational complexity point of view), see GCT 11.

The basic principle  of GCT is the called the {\em flip} \cite{GCTflip1}.
It  ``reduces'' (in essence, not formally)
 the lower bound problems such as 
 $\cc{P}$ vs. $\cc{NP}$ in characteristic 0 to
upper bound problems: showing that  certain decision problems  in algebraic geometry and representation theory belong to $P$.  Each of these  
decision problems is of
the form: is  a 
given (nonnegative) structural 
constant associated to some algebro-geometric or representation theoretic
object  nonzero?
This is akin to 
the  decision problem: given a matrix, is its permanent nonzero?
(We know how to solve this particular problem in polynomial time via 
reduction to the perfect matching problem.)

Next,  the preceding upper bound problems are reduced to
purely mathematical positivity hypotheses \cite{GCT6}.
The goal is to show that these and other auxilliary 
structural constants have positive formulae.
By a positive formula we mean a formula that does not involve any alternating
signs like the  usual positive formula for the permanent; in contrast 
the usual formula for the determinant involves alternating signs.

Finally, these positivity hypotheses are
``reduced'' to  conjectures in the theory of quantum groups  
\cite{GCT6,GCT7,GCT8,GCT10}
intimately related to the Riemann hypothesis over finite fields proved 
in \cite{weil2},
and
the related works \cite{beilinson,kazhdan1,lusztigbook}.
A pictorial summary of the GCT approach is shown in
Figure~\ref{fbasic}, where  the arrows 
represent reductions, rather than implications.

\begin{figure}
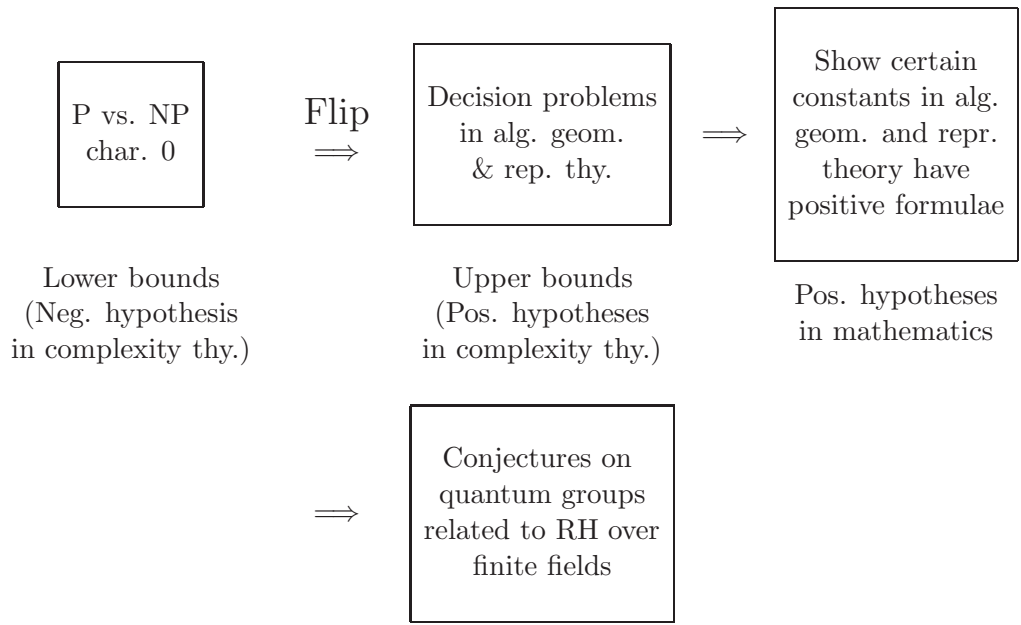

$$
\begin{array}{ccccc}
\begin{array}{|c|}
\hline
\\
\mbox{P vs. NP} \\
\mbox{char. 0} \\
\\ \hline
\end{array} & 
\begin{array}{c}
\mbox{{\Large Flip}} \\
\Longrightarrow 
\end{array}
& 
\begin{array}{|c|}
\hline \\
\mbox{Decision problems} \\
\mbox{in alg. geom.} \\
\mbox{\& rep. thy.} \\
\\ \hline
\end{array} & 
\Longrightarrow &
\begin{array}{|c|}
\hline \\
\mbox{Show certain} \\
\mbox{constants in alg.} \\
\mbox{geom. and repr.} \\ \mbox{theory have} \\
\mbox{positive formulae} \\
\\ \hline
\end{array} \\ 
\begin{array}{c}
\mbox{Lower bounds} \\
\mbox{(Neg. hypothesis} \\
\mbox{in complexity thy.)}
\end{array}
 & &
\begin{array}{c}
\mbox{Upper bounds} \\
\mbox{(Pos. hypotheses} \\
\mbox{in complexity thy.)}
\end{array}
 & &
 \begin{array}{c}
\mbox{Pos. hypotheses} \\
\mbox{in mathematics}
\end{array} \\ \\
&\Longrightarrow &
\begin{array}{|c|}
\hline \\
\mbox{Conjectures on } \\
\mbox{quantum groups} \\
\mbox{related to RH over} \\
\mbox{finite fields} \\
\\ \hline
\end{array} 
\end{array}
$$
\caption{The basic approach of GCT}
\label{fbasic}
\end{figure}

To recap:
we move from a negative hypothesis in complexity theory (that there {does
 not} exist a polynomial time algorithm for an $\cc{NP}$-complete problem)
to a positive hypotheses in complexity theory 
(that there  exist  polynomial-time algorithms
for certain decision problems) to positive hypotheses in mathematics (that certain structural constants have positive formulae) to  conjectures on
quantum groups related to the Riemann hypothesis over finite fields, the
related works  and their possible extensions.
The first reduction here is the  \emph{flip}:
we reduce a question about lower bounds, which are notoriously difficult, to
the  one about upper bounds, which we have a much better handle on.
This flip from negative to positive is already present in G\"{o}del's work:
to show something is impossible it suffices to show that something else is possible.  This was one of the motivations for the GCT approach.
The G\"{o}delian flip would not 
work for the  \cc{P} vs. \cc{NP} problem 
because it relativizes.  We can think of GCT as a form 
of nonrelativizable (and non-naturalizable, if reader knows what that means)
diagonalization.  

In summary, this approach very roughly ``reduces'' the lower bound problems
such as  $\cc{P}$ vs.  $\cc{NP}$ in characteristic zero 
to as-yet-unproved quantum-group-conjectures
related to the Riemann Hypothesis over finite fields.
As with the classical RH, there is  experimental evidence to suggest these conjectures hold -- which indirectly suggests that certain generalizations of the Riemann hypothesis over finite fields also hold -- and there are hints on
how the problem might be attacked. See \cite{GCTflip1,GCT6,GCT7,GCT8}
for  a more detailed exposition.

\section{The G\"{o}delian Flip}
We now re-visit G\"{o}del's original flip in modern language
to get the  flavor of the GCT flip.

G\"{o}del set out to answer the question:
$$
\mbox{Q: Is truth provable?}
$$
But  what ``truth'' and ``provable'' means here is not so obvious \emph{a priori}.  We start by setting the stage: in any mathematical theory, we have the syntax  (i.e. the language used) and the semantics (the domain of discussion).  In this case, we have:

\begin{table}[h]
\begin{tabular}{l|l}
Syntax (language) & Semantics (domain) \\ \hline
First order logic & \\
 ($\forall, \exists, \neg, \vee, \wedge, \dots$) & \\
Constants & 0,1 \\
Variables & $x,y,z,\dots$ \\
Basic Predicates & $>$, $<$, $=$ \\
Functions & $+$,$-$,$\times$,exponentiation \\
Axioms & Axioms of the natural numbers \N \\
 & Universe: $\N$
\end{tabular}
\end{table}

A sentence is a valid formula with all variables quantified, and by 
a \emph{truth} we mean  a sentence that is true in the domain.
By a proof we mean a valid deduction based on standard rules of inference and
the axioms of the domain, whose final result is the desired statement.  

Hilbert's program asked for  an algorithm that, given a sentence in
number theory, decides whether it is true or false. 
A special case of this is Hilbert's  10th problem, which asked for
an algorithm to decide whether a Diophantine equation (equation with only integer coefficients) has a nonzero integer solution. 
G\"{o}del showed that Hilbert's general program was not achievable.
The tenth problem remained unresolved until 1970, at which point Matiyasevich
showed its impossibility as well.

Here is the main idea of G\"{o}del's proof, re-cast in modern language.
For a Turing Machine $M$, 
whether the empty string $\varepsilon$ is in the language $L(M)$ recognized
by $M$ is undecidable.  The idea is to reduce a question
of the form $\varepsilon \in L(M)$ to a question in number theory. 
If there were an algorithm for deciding the truth of number-theoretic
statements, it would give an algorithm for the above Turing machine problem,
which we know does not exist.

The basic idea of the reduction is similar to the one in 
Cook's proof that \emph{SAT} is \cc{NP}-complete.  Namely, $\varepsilon \in L(M)$ iff there is a valid computation of $M$ which accepts $\varepsilon$.  Using Cook's idea, we can use this to get a Boolean formula: 
$$
\begin{array}{l} 
\exists m \exists \mbox{ a valid computation of } M \mbox{ with configurations of size } \leq m  \\ 
\qquad \mbox{ s.t. the computation accepts }  \varepsilon.
\end{array}
$$
Then we use G\"{o}del numbering -- which assigns a unique number to each sentence in number theory -- to translate this formula to a sentence in number theory.  The details of this should be familiar.

The key point here is: to show that truth is undecidable in number theory
(a negative statement), we show that {there exists} 
a computable reduction from $\varepsilon \stackrel{?}{\in} L(M)$ to number theory (a positive statement). 
This is the essence of the G\"{o}delian flip, which is analogous to -- and in fact was the original motivation for -- the GCT flip.

\section{More details of the GCT approach}
To begin with,  GCT  associates  to each complexity class such as 
\cc{P} and \cc{NP} a projective algebraic variety $\chi_{P}$, $\chi_{NP}$, etc.
 \cite{GCT1}.  
In fact, it associates a family of varieties $\chi_{NP}(n,m)$: 
one for each input length $n$ and circuit size $m$,
but for simplicity we suppress this here.  The languages $L$ in the associated complexity class will be points on these varieties, and the set of such points is dense in the variety.  These varieties are thus called \emph{class varieties}.  To show that $\cc{NP} \nsubseteq \cc{P}$ in characteristic zero,
it suffices to show that $\chi_{NP}$ cannot be imbedded in $\chi_P$.

These class varieties are in fact  $G$-varieties. That is, they 
have an action of the group $G=GL_n(\C)$ on them. 
This action  induces an action on the homogeneous coordinate ring of
the variety, given by $(\sigma f)(\mathbf{x}) = f(\sigma^{-1} \mathbf{x})$ for all $\sigma \in G$.  Thus the coordinate rings $R_P$ and $R_{NP}$ of $\chi_P$
and $\chi_{NP}$  are $G$-algebras, i.e., algebras with $G$-action.
Their degree $d$-components $R_P(d)$ and $R_{NP}(d)$
are thus finite dimensional $G$-representations.

For the sake of contradiction, suppose $\cc{NP} \subseteq \cc{P}$ in characteristic 0.  Then there must be an embedding of $\chi_{NP}$ into $\chi_P$ as a
$G$-subvariety, which in turn gives rise (by standard algebraic geometry arguments) to a surjection $R_P \twoheadrightarrow R_{NP}$ of the coordinate rings.  This implies
(by standard representation-theoretic arguments) that $R_{NP}(d)$ can be
embedded as  a 
$G$-sub-representation of $R_P(d)$.
The following diagram summarizes the implications.

$$
\xymatrix{
\mbox{\parbox{1in}{\centering complexity \\ classes}} & \mbox{\parbox{1in}{\centering class \\ varieties}} & \mbox{\parbox{1in}{\centering coordinate \\ rings}} & \mbox{\parbox{1in}{\centering representations \\ of $GL_n(\C)$}} \\
NP \ar@{^{(}->}[d] \ar@{~>}[r]& \chi_{NP} \ar@{^{(}->}[d] \ar@{~>}[r]& R_{NP} \ar@{~>}[r]& R_{NP}(d) \ar@{^{(}->}[d] \\
P \ar@{~>}[r]& \chi_P \ar@{~>}[r]& R_P \ar@{->>}[u] \ar@{~>}[r]& R_P(d) \\
}
$$

Weyl's theorem--that all finite-dimensional representations of $G=GL_n(\C)$
are   completely reducible, i.e. can be written  as a
direct sum of irreducible representations--implies that
both $R_{NP}(d)$ and $R_P(d)$
can be written as direct sums of irreducible $G$-representations.
An \emph{obstruction} \cite{GCT2} of degree $d$ is defined to be 
an irreducible $G$-representation occuring (as a subrepresentation) in
 $R_{NP}(d)$ but not in $R_P(d)$. Its existence implies that 
$R_{NP}(d)$ cannot be embedded as a subrepresentation of $R_P(d)$,
and hence, $\chi_{NP}$ cannot be embedded in $\chi_P$
as a $G$-subvariety; a contradiction.

We actually have  a \emph{family} of varieties $\chi_{NP}(n,m)$: 
one for each input length $n$ and circuit size $m$. 
Thus if an obstruction of some degree exists for all $n \rightarrow \infty$,
assuming $m=n^{\log n}$ (say), then  $\cc{NP} \neq \cc{P}$ in
characteristic zero.

\begin{conj} \cite{GCTflip1} There is a polynomial-time algorithm for 
constructing such obstructions. \end{conj}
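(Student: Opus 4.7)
The plan is to execute the flip described in the preceding discussion in algorithmic form: translate the obstruction-existence question into a pair of positivity questions about multiplicities in $R_{NP}(d)$ and $R_P(d)$, and then solve those questions in polynomial time by exploiting positive formulae. By Weyl complete reducibility, write $R_{NP}(d)=\bigoplus_\lambda a_\lambda^{NP}(n,d)\,V_\lambda$ and $R_P(d)=\bigoplus_\lambda a_\lambda^{P}(n,d)\,V_\lambda$, where $V_\lambda$ ranges over the irreducible $G$-representations of $GL_n(\C)$. An obstruction of degree $d$ and type $\lambda$ is exactly a partition $\lambda$ with $a_\lambda^{NP}(n,d)>0$ and $a_\lambda^{P}(n,d)=0$, so the algorithmic task is to produce such a $\lambda$ in time polynomial in $n$ (with, say, $m=n^{\log n}$).

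First, I would bound the search space. Because the class varieties $\chi_{NP}(n,m)$ and $\chi_P(n,m)$ are cut out by equations of complexity polynomial in $n$ and $m$, and the $G$-action is by $GL_n(\C)$ with fundamental representations of polynomially bounded size, one expects that if an obstruction exists at all then one exists with $d$ and $|\lambda|$ bounded by a fixed polynomial in $n$. Proving this quantitative degree bound is a nontrivial but plausible step using standard bounds in algebraic geometry (Bezout, effective Nullstellensatz, dimension counting on the class varieties). This reduces the algorithm to searching over polynomially many candidate partitions $\lambda$ and, for each, deciding whether $a_\lambda^{NP}(n,d)>0$ while simultaneously $a_\lambda^{P}(n,d)=0$.

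Second, I would implement those two decisions using positive formulae. If $a_\lambda^{NP}$ and $a_\lambda^{P}$ can be expressed as counts of combinatorial witnesses, for instance as lattice points of polytopes defined from $\lambda$, $n$, and $d$ in the spirit of hive models for Littlewood--Richardson coefficients, then $a_\lambda^{NP}>0$ is decided by exhibiting a single witness and $a_\lambda^{P}=0$ by certifying that the corresponding polytope contains no lattice point. With suitable positivity, the latter collapses, via a saturation-style phenomenon, to a polynomial-time emptiness test on a polyhedron. Combining these two tests with the search over polynomially many $\lambda$ yields the polynomial-time construction required by the conjecture.

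The hard part is precisely the positivity step, and it is the very heart of the GCT program. The multiplicities $a_\lambda^{NP}$ and $a_\lambda^{P}$ are not classical Littlewood--Richardson or Kostka numbers but exotic plethystic and Kronecker-type constants tied to the specific class-variety geometry, and the known positive formulae from the theory of symmetric functions do not apply directly. The program outlined in \cite{GCT6,GCT7,GCT8,GCT10} proposes to derive their positive formulae from conjectural structural properties of canonical bases of quantum groups, linked via \cite{weil2,beilinson,kazhdan1,lusztigbook} to the Riemann hypothesis over finite fields. I therefore expect any full proof to pass through a substantial portion of that quantum-group theory, and a reasonable interim milestone is to establish the polynomial-time algorithm \emph{conditional} on the positivity hypotheses, which cleanly separates the algorithmic and the mathematical layers.
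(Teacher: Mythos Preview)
The statement is a \emph{conjecture}, not a theorem: the paper does not prove it, and there is no ``paper's own proof'' to compare against. What the paper does is lay out a program toward it (the flip, PHflip, PH1, SH/PH2, saturated integer programming, and the quantum-group route to positivity), and your proposal is best read as a strategy sketch rather than a proof.

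As a strategy, your outline is broadly the same as the paper's: reduce to deciding, for each $\lambda$ and $d$, whether $V_\lambda$ occurs in $R_{NP}(d)$ and in $R_P(d)$; cast those multiplicities as lattice-point counts in explicit polytopes (the paper's PH1); invoke a saturation phenomenon (SH/PH2) so that linear-programming-type methods decide nonvanishing in polynomial time; and identify the quantum-group positivity program as the hard core. Your final paragraph is exactly the paper's stance: the polynomial-time algorithm is obtained \emph{conditionally} on PH1 and SH, via the saturated integer programming of Chapter~\ref{cgct5} and the chapter on the flip.

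Two places where your sketch diverges from the paper's formulation and where I would push back. First, your ``bound the search space'' step posits a polynomial bound on $d$ and $|\lambda|$ via B\'ezout/effective Nullstellensatz on the class varieties. The paper does not proceed this way: the defining equations of $\Delta_V[g]$ and $\Delta_V[f]$ are not known to have polynomial complexity in $n$, and the paper instead formulates PHflip-1 as a decision/construction problem in $\mathrm{poly}(n,m,\langle d\rangle)$ time for a \emph{given} $d$, not as an enumeration over polynomially many $\lambda$. Second, for the vanishing side $a_\lambda^P=0$, the paper's mechanism is not ``certify no lattice point'' in isolation but rather the saturation hypothesis on the stretching quasipolynomial $\tilde S_d^\lambda[g](k)$, which reduces the integer question to a polyhedral nonemptiness test; your sentence gestures at this but understates how essential SH is to make that step polynomial-time.
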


This is the GCT flip: to show that 
{no polynomial-time algorithm exists} for an \cc{NP}-complete  problem,
we hope to show that 
{there is a polynomial time algorithm} for finding  obstructions.
This task then is further reduced to finding 
polynomial time algorithms
for other decision problems in algebraic geometry and representation theory.

Mere existence of  an obstruction for all $n$ would actually suffice here.
For this, it suffices to show that there is an algorithm which, given $n$, outputs an obstruction showing that $\chi_{NP}(n,m)$ cannot be imbedded
in $\chi_P(n,m)$, when $m=n^{\log n}$.
But the conjecture is not just that there is an algorithm,
but that there is a {polynomial-time} algorithm.

The  basic principle here
is that the complexity of the proof of existence of an object (in this case, an obstruction)
is very closed tied to the computational complexity of finding that  object,
and hence, techniques underneath an easy (i.e. polynomial time) time algorithm
for deciding existence may yield an easy (i.e. feasible) proof of existence.
This is supported by much anecdotal evidence:
\begin{itemize}
\item An obstruction to planar embedding (a forbidden Kurotowski minor) can be
found in polynomial, in fact, linear 
time by variants of the usual planarity testing algorithms, and the 
underlying techniques, in retrospect, yield an algorithmic proof of
Kurotowski's theorem that every nonplanar graph contains a forbidden
minor. 
\item Hall's marriage theorem, which 
characterizes the existence of perfect matchings, in retrospect, follows
from the  techniques underlying 
 polynomial-time algorithms for finding perfect matchings.
\item The proof that a graph is Eulerian iff all vertices have even degree is,
 essentially, a polynomial-time algorithm for finding an Eulerian circuit.
\item In contrast, we know of no Hall-type theorem
for  Hamiltonians paths,  essentially, because 
finding such a path is  computationally difficult (\cc{NP}-complete).
\end{itemize}

Analogously the goal is to find a polynomial time algorithm for deciding
if there exists an obstruction for given $n$ and $m$, and then use
the underlying techniques to show that an obstruction always exists for
every large enough $n$ if $m=n^{\log n}$. The main mathematical work in GCT
takes steps  towards this goal.

\chapter{Representation theory of reductive groups}
\begin{center}{\Large   Scribe: Paolo Codenotti} \end{center}

\noindent {\bf Goal:} Basic notions in representation theory.

\noindent {\em References:} \cite{FulH,YT}

In this lecture we  review the  basic representation theory of reductive
groups as needed in this course. Most of the proofs 
will be omitted, or just sketched. For  complete proofs, see
the books by Fulton and Harris, and Fulton \cite{FulH, YT}.
The underlying field throughout this course is $\C$.

\section{Basics of Representation Theory}

\subsection{Definitions}

\begin{defi}
A \emph{representation} of a group $G$, also called a $G$-module,
is a vector space $V$ with an associated homomorphism $\rho:G\rightarrow
GL(V)$. We will refer to a representation by $V$.
\end{defi}

The map $\rho$ induces a natural action of $G$ on $V$, defined by $g\cdot v = (\rho(g))(v)$.

\begin{defi}
A map $\varphi:V\rightarrow W$ is $G$-\emph{equivariant} if the following diagram commutes:
\[\begin{CD}
V @>\varphi>> W\\
@VVgV @VVgV\\
V @>\varphi>> W
\end{CD}\]
That is, if $\varphi(g\cdot v) = g\cdot \varphi(v)$.
A $G$-equivariant map is also called $G$-invariant or a $G$-homomorphism.
\end{defi}

\begin{defi}
A subspace 
$W\subseteq V$ is said to be a \emph{subrepresentation}, or a $G$-submodule
of a representation $V$ over a group $G$ if $W$ is
$G$-\emph{equivariant}, that is if $g\cdot w\in W$ for all $w\in W$.
\end{defi}

\begin{defi}
A representation $V$ of a group $G$ is said to be \emph{irreducible} if it has no proper non-zero
$G$-subrepresentations.
\end{defi}

\begin{defi}
A group $G$ is called \emph{reductive} if every finite dimensional representation $V$ of $G$ is a direct sum of
irreducible representation.
\end{defi}

Here are some examples of reductive groups:
\begin{itemize}
\item
finite groups;
\item
the $n$-dimensional torus $(\mathbb{C}*)^n$;
\item 
linear groups:
\begin{itemize}
\item the general linear group $GL_n(\mathbb{C})$, 
\item the special linear group $SL_n(\mathbb{C})$, 
\item the orthogonal group $O_n(\mathbb{C})$ (linear transformations that preserve a symmetric form), 
\item and the symplectic group $Sp_n(\mathbb{C})$ (linear transformations that preserve a skew symmetric form);
\end{itemize}
\item
Exceptional Lie Groups
\end{itemize}

Their reductivity is a nontrivial fact. It will be proved later in this 
lecture for   finite groups, 
and the general and special linear groups.
In some sense, the  list above is complete:
all reductive groups can be constructed by basic operations from 
the components which are either in this  list or are
related to them in a simple way.

\subsection{New representations from old}

Given representations $V$ and $W$ of a group $G$,  we can construct new representations in several ways, some of which are described below.

\begin{itemize}
\item
Tensor product: $V\otimes W$. $g\cdot (v\otimes w) = (g\cdot v) \otimes (g\cdot w)$.
\item
Direct sum: $V\oplus W$.
\item
Symmetric tensor representation: The subspace 
$Sym^n(V) \subset V\otimes \dots \otimes V$ spanned by elements of the form
\[\sum_\sigma (v_1\otimes \dots \otimes v_n)\cdot \sigma 
=\sum_\sigma v_{\sigma(1)} \otimes \cdots v_{\sigma(n)},\]
where $\sigma$ ranges over all permutations in the symmetric group $S_n$.
\item
Exterior  tensor representation: The subspace $\Lambda^n(V)
\subset V\otimes \dots \otimes V$ spanned by elements of the form
 \[\sum_\sigma sgn(\sigma)(v_1\otimes \dots \otimes v_n) \cdot \sigma
=\sum_\sigma sgn(\sigma) v_{\sigma(1)} \otimes \cdots v_{\sigma(n)}.\]
\item
Let $V$ and $W$ be representations, then $\Hom(V, W)$ is also a representation, where $g\cdot \varphi$ is
defined so that the following diagram commutes:
\[\begin{CD}
V @>\varphi>> W\\
@VVgV @VVgV\\
V @>g\cdot\varphi>> W
\end{CD}\]

More precisely, \[(g\cdot \varphi)(v) = g\cdot(\varphi(g^{-1}\cdot v)).\]
\item
In particular, $V^*:V\rightarrow\C$ is a representation, and is called the \emph{dual representation}.

\item
Let $G$ be a finite group. Let $S$ be a finite $G$-set (that is, a finite set with an associated action of $G$ on its elements). We construct a vector space over any field $K$ (we will be mostly concerned with the case $K=\C$), with a basis vector associated to each element in $S$. More specifically, consider the set $K[S]$ of formal sums $\sum_{s\in S} \alpha_s e_s$, where $\alpha_s\in K$, and $e_s$ is a vector associated with $S\in s$. Note that this set has a vector space structure over $K$, and there is a natural induced action of $G$ on $K[S]$, defined by:
\[g\cdot\sum_{s\in S} \alpha_s e_s = \sum_{s\in S}\alpha_s e_{g\cdot s}.\]
This action gives rise to a representation of $G$. 
\item
In particular, $G$ is a $G$-set under the action of left multiplication. The representation we obtain in the manner described above from this $G$-set is called the \emph{regular representation}.

\end{itemize}

\section{Reductivity of  finite groups}

\begin{prop}
Let $G$ be a finite group. If $W$ is a subrepresentation of a representation $V$, then there exists a
representation $W^\bot$ s.t. $V=W\oplus W^\bot$.
\end{prop}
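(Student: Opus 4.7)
The plan is to use the standard averaging trick, which is the heart of Maschke's theorem. The key observation is that finiteness of $G$ lets us replace ``typical'' linear-algebraic objects (inner products, projections) with $G$-invariant versions by averaging over the group.

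First I would pick any Hermitian inner product $\langle \cdot, \cdot \rangle$ on $V$; this exists since $V$ is a finite-dimensional complex vector space and has nothing to do with $G$. Then I would define a new form
\[
\langle v, w \rangle_G \;=\; \frac{1}{|G|} \sum_{g \in G} \langle g \cdot v,\, g \cdot w \rangle.
\]
A routine check shows that $\langle \cdot, \cdot \rangle_G$ is again a Hermitian inner product (sesquilinearity, conjugate symmetry, and positive-definiteness are all preserved by averaging a finite sum of such forms), and moreover it is $G$-invariant: for any $h \in G$, substituting $g \mapsto gh^{-1}$ in the sum gives $\langle h \cdot v,\, h \cdot w \rangle_G = \langle v, w \rangle_G$. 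Here is where finiteness of $G$ is crucial — the factor $1/|G|$ and the finite sum both need $|G| < \infty$.

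Next I would set $W^\perp := \{ v \in V : \langle v, w \rangle_G = 0 \text{ for all } w \in W \}$, the orthogonal complement of $W$ with respect to this $G$-invariant form. Standard linear algebra (no representation theory needed) gives $V = W \oplus W^\perp$ as vector spaces. The point is then to check that $W^\perp$ is actually a $G$-subrepresentation: if $v \in W^\perp$ and $h \in G$, then for any $w \in W$ we have $\langle h \cdot v, w \rangle_G = \langle v, h^{-1} \cdot w \rangle_G = 0$, since $h^{-1} \cdot w \in W$ by $G$-invariance of $W$. Hence $h \cdot v \in W^\perp$.

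I don't expect any serious obstacle. The only subtlety is conceptual rather than technical: one must recognize that an arbitrary complement of $W$ need not be $G$-stable, and that the role of the averaged inner product is precisely to produce a canonical, $G$-equivariant choice of complement. An equivalent alternative approach, which generalizes more cleanly to reductive groups like $GL_n(\C)$ via a Haar-measure or Weyl unitary-trick analogue, would be to pick any linear projection $\pi : V \twoheadrightarrow W$ and average it to $\pi_G = \frac{1}{|G|} \sum_{g \in G} g \circ \pi \circ g^{-1}$, then take $W^\perp = \ker \pi_G$; I would mention this as the more scalable form of the argument but carry out the inner-product version for concreteness.
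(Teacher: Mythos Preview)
Your proof is correct and follows essentially the same approach as the paper: average an arbitrary Hermitian form over $G$ to obtain a $G$-invariant one, then take $W^\bot$ to be the orthogonal complement with respect to that invariant form. The paper's version omits the $1/|G|$ normalization (harmless here) and leaves the verification that $W^\bot$ is $G$-stable implicit, whereas you spell it out; your remark about the averaged-projection variant is a nice addition but not in the paper.
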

\begin{proof}
Choose any Hermitian form $H_0$ of $V$, and construct a new Hermitian form $H$ defined as:
\[H(v, w) = \sum_{g\in G} H_o(g\cdot v, g\cdot w).\]
Averaging is a useful trick that is used very often in representation theory, because it ensures $G$-invariance. In fact, $H$ is $G$-invariant, that is,
\[H(v, w)=\sum_{g\in G} H_o(g\cdot v, g\cdot w)=H(h\cdot v, h\cdot w)\]
Let $W^\bot$ be the perpendicular complement to $W$ with respect to the Hermitian form $H$. Then $W^\bot$ is
also $G$-invariant, and therefore it is a $G$-submodule.
\end{proof}

\begin{cor}
Every representation of a finite group is a direct sum of irreducible representations.
\end{cor}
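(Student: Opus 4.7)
The plan is to deduce the corollary from the preceding proposition by induction on the dimension of the representation, so I will implicitly restrict attention to finite-dimensional representations (which is the setting actually used in the definition of reductivity given earlier in the lecture).

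First I would handle the base case: if $\dim V = 0$, then $V$ is the empty direct sum of irreducibles; if $\dim V = 1$, then the only subrepresentations are $0$ and $V$, so $V$ is irreducible and we are done. For the inductive step, assume the statement holds for all representations of dimension strictly less than $\dim V$. If $V$ itself is irreducible, there is nothing to prove. Otherwise, $V$ admits a proper nonzero subrepresentation $W$, and the proposition provides a $G$-submodule $W^\perp$ with $V = W \oplus W^\perp$. Since $W$ is proper and nonzero, both $W$ and $W^\perp$ have strictly smaller dimension than $V$, so by the induction hypothesis each decomposes as a direct sum of irreducibles. Concatenating these decompositions yields the desired decomposition of $V$.

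There is essentially no hard step here: the whole content of reductivity has been packaged into the proposition via the averaging argument that produces a $G$-invariant Hermitian form and hence a $G$-stable complement. The only thing worth being careful about is to verify that $W^\perp$ is indeed a proper subrepresentation of smaller dimension, which is immediate from $W \neq 0$ and the direct sum decomposition. One could equivalently phrase the induction as: any representation $V$ of minimal dimension among those not expressible as a direct sum of irreducibles would either have to be irreducible (contradiction) or admit a proper nonzero submodule whose complement is smaller, violating minimality. Either phrasing gives the corollary in a single short argument built on the proposition.
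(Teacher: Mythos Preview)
Your argument is correct and matches the paper's intended approach: the corollary is stated immediately after the proposition with no separate proof, and the parallel statement in Part~II is proved in one line by ``applying the first [part], recursively,'' which is exactly your induction on $\dim V$. There is nothing to add.
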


\begin{lemma}(Schur)
If $V$ and $W$ are irreducible representations over $\C$, and $\varphi: V\rightarrow W$ is a homomorphism (i.e.
a $G$-invariant map), then:
\begin{enumerate}
\item Either $\varphi$ is an isomorphism or $\varphi=0$.
\item If $V=W$, $\varphi=\lambda I$ for some $\lambda \in \C$.
\end{enumerate}
\end{lemma}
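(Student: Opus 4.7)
The plan is to exploit irreducibility by showing that the kernel and image of any $G$-equivariant map are themselves subrepresentations, and then to use algebraic closedness of $\C$ for the second part.

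First I would verify that for any $G$-equivariant $\varphi: V \to W$, both $\ker(\varphi) \subseteq V$ and $\im(\varphi) \subseteq W$ are $G$-submodules. This is a short direct check: if $v \in \ker(\varphi)$, then $\varphi(g\cdot v) = g\cdot \varphi(v) = g \cdot 0 = 0$, so $g \cdot v \in \ker(\varphi)$; similarly $g \cdot \varphi(v) = \varphi(g \cdot v) \in \im(\varphi)$ for any $v$. Now applying irreducibility of $V$, we get $\ker(\varphi) = 0$ or $\ker(\varphi) = V$; applying irreducibility of $W$, we get $\im(\varphi) = 0$ or $\im(\varphi) = W$. If $\ker(\varphi) = V$ then $\varphi = 0$, and the case $\im(\varphi) = 0$ with $\ker(\varphi) = 0$ forces $V = 0$, which is excluded since an irreducible representation is nonzero by convention (else combine: a nonzero $\varphi$ must have $\ker(\varphi) = 0$ and $\im(\varphi) = W$, making it an isomorphism). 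This settles part~(1).

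For part~(2), take $V = W$ and $\varphi: V \to V$ a $G$-homomorphism. Here is where the hypothesis that the ground field is $\C$ enters essentially: since $V$ is finite-dimensional over the algebraically closed field $\C$, the endomorphism $\varphi$ has at least one eigenvalue $\lambda \in \C$. Then the map $\psi := \varphi - \lambda \Id_V$ is still $G$-equivariant (because $\lambda \Id_V$ commutes with every $g$-action trivially). By construction $\ker(\psi)$ contains the nonzero $\lambda$-eigenspace of $\varphi$, so $\psi$ is not injective, and applying part~(1) to $\psi: V \to V$ forces $\psi = 0$, i.e. $\varphi = \lambda \Id_V$.

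The only real subtlety---the ``main obstacle,'' though it is more a conceptual point than a technical one---is remembering that part~(2) genuinely needs $\C$ (or any algebraically closed field) to guarantee an eigenvalue; over $\R$ the statement would fail, as rotations in an irreducible two-dimensional real representation commute with the group action without being scalar. Everything else is essentially bookkeeping with the definitions of $G$-equivariance and irreducibility, so I would keep the write-up brief and not belabor the kernel/image verifications.
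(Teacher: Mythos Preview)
Your proof is correct and follows essentially the same approach as the paper: both use that $\ker(\varphi)$ and $\im(\varphi)$ are $G$-submodules to handle part~(1), and both use algebraic closedness of $\C$ to find an eigenvalue and then apply part~(1) to $\varphi - \lambda I$ for part~(2). Your write-up is in fact more careful than the paper's, which merely asserts that kernel and image are submodules without the explicit verification.
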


\begin{proof}

\begin{enumerate}
\item
Since $\Ker(\varphi)$, and $\im{\varphi}$ are $G$-submodules,
either $\im(\varphi)=V$ or
$\im(\varphi)=0$.
\item Let $\varphi: V\rightarrow V$. Since 
$\C$ algebraically closed, there exists an eigenvalue $\lambda$ of $\varphi$.
 Look at the map $\varphi - \lambda I:V\rightarrow V$. By ($1$),
$\varphi -\lambda I=0$ (it can't be an isomorphism because something maps to $0$). So $\varphi=\lambda I$.
\end{enumerate}
\end{proof}

\begin{cor}
Every representation is a \emph{unique} direct sum of irreducible representations. More precisely, given two
decompositions into irreducible representations,
\[V=\bigoplus V_i^{a_i}\\
  =\bigoplus W_j^{b_j},\]
there is a one to one correspondence between the $V_i$'s and $W_j$'s, and the multiplicities correspond.
\end{cor}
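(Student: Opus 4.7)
The plan is to deduce uniqueness directly from Schur's lemma, by extracting each multiplicity as the dimension of a Hom space that is defined intrinsically from $V$ (without reference to either decomposition). This reduces the combinatorial matching problem to a single dimension count.

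Concretely, I would fix any irreducible representation $U$ and consider the space $\Hom_G(U,V)$ of $G$-equivariant maps. The first step is the key computation: for any irreducible $W$, Schur's lemma says that $\Hom_G(U,W)$ is one-dimensional if $U \cong W$ and zero otherwise. Since $\Hom_G(U,-)$ is additive on direct sums (a $G$-equivariant map into $A \oplus B$ is just a pair of equivariant maps into $A$ and $B$), I get
\[\dim \Hom_G(U, \bigoplus_i V_i^{a_i}) = \sum_i a_i \dim \Hom_G(U,V_i) = a_{i_0},\]
where $i_0$ is the unique index (if any) with $V_{i_0} \cong U$. The same calculation applied to the second decomposition gives $\dim \Hom_G(U,V) = b_{j_0}$ where $W_{j_0} \cong U$.

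Now I would conclude the correspondence. Taking $U = V_i$ for each $i$ shows that some $W_j$ must be isomorphic to $V_i$ (otherwise the left side is nonzero while the right side is zero, contradiction), and that the multiplicity $b_j$ of this $W_j$ equals $a_i$. By symmetry, every $W_j$ appears among the $V_i$'s with matching multiplicity. This gives the desired bijection.

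The only subtle point, and the one I would be most careful about, is the additivity claim $\Hom_G(U, A \oplus B) \cong \Hom_G(U,A) \oplus \Hom_G(U,B)$; this is immediate from the universal property of direct sums once one checks that the projections and inclusions are $G$-equivariant, which holds because the $G$-action on $A \oplus B$ is defined componentwise. No real obstacle is expected, since the entire argument is formal once Schur's lemma is in hand.
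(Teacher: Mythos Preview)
Your argument is correct and is precisely the intended one: the paper's proof reads in its entirety ``exercise (follows from Schur's lemma),'' and what you have written is exactly the standard way to cash this out, extracting each multiplicity as $\dim \Hom_G(U,V)$.
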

\begin{proof}
exercise (follows from Schur's lemma).
\end{proof}

\section{Compact Groups and $GL_n(\C)$ are reductive}
Now we prove reductivity of compact groups.

\subsection{Compact groups}

Examples of compact groups: 
\begin{itemize}
\item $U_n(\C)\subseteq GL_n(\C)$, the unitary groups (all rows are normal and orthogonal).
\item $SU_n(\C)\subseteq SL_n(\C)$, the special unitary group.
\end{itemize}

Given a compact group, a \emph{left-invariant Haar measure} is a measure that is invariant under the left action of the group.
In other words, multiplication by a
group element does not change the area of a small  region (i.e.,
the group action is an isometry, see figure \ref{fig:compact}).

\begin{figure}
    \begin{center}
      \includegraphics[scale=0.5]{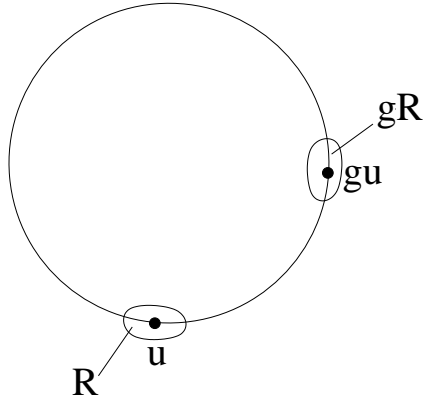}
      \caption{{\small Example of a left Haar measure for the circle
      ($U_1(\C)$). Left action by a group element $g$ on a small region
      $R$ around $u$ does not change the area.}}
      \label{fig:compact}
    \end{center}
\end{figure}

\begin{thm}
Compact groups are reductive
\end{thm}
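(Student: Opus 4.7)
The plan is to mimic the proof of reductivity for finite groups, replacing the finite sum over $G$ by an integral against the left-invariant Haar measure $d\mu$. Compactness is essential because it guarantees that the Haar measure has finite total mass, which we may normalize to $\mu(G)=1$, so averaging gives convergent integrals.

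First I would take an arbitrary finite dimensional representation $V$ of the compact group $G$ and fix any Hermitian form $H_0$ on $V$. I would then define
\[
H(v,w) \;=\; \int_G H_0(g\cdot v,\,g\cdot w)\,d\mu(g),
\]
and check two things: (a) $H$ is still a positive definite Hermitian form on $V$ (positivity of the integrand plus positivity of $\mu$), and (b) $H$ is $G$-invariant, i.e.\ $H(h\cdot v, h\cdot w)=H(v,w)$ for every $h\in G$. Claim (b) is exactly the left-invariance of the Haar measure: substituting $g\mapsto gh$ in the integral leaves $d\mu$ unchanged, and shifts $H_0(g\cdot v,g\cdot w)$ to $H_0(gh\cdot v,gh\cdot w)$, which is what we wanted.

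Next, given any $G$-subrepresentation $W\subseteq V$, I would set
\[
W^\bot \;=\;\{\,v\in V \,:\, H(v,w)=0 \text{ for all } w\in W\,\}.
\]
Then $V=W\oplus W^\bot$ as vector spaces, and $G$-invariance of $H$ together with $G$-invariance of $W$ forces $W^\bot$ to be a $G$-subrepresentation: if $v\in W^\bot$ and $w\in W$, then $H(g\cdot v,w)=H(v,g^{-1}\cdot w)=0$ since $g^{-1}\cdot w\in W$. Thus every subrepresentation has a $G$-invariant complement, and an easy induction on $\dim V$ then decomposes $V$ into irreducibles, which is the definition of reductivity.

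The main obstacle, and the only step that really requires work beyond the finite-group argument, is justifying the existence of the left-invariant Haar measure on an arbitrary compact (Hausdorff topological) group. I would treat this as a black box from measure theory, citing it as a standard result: every locally compact group admits a left-invariant Haar measure, unique up to scaling, and on a compact group this measure has finite total mass so it can be normalized to a probability measure. Once this tool is in hand, the proof runs in parallel with the finite case verbatim, with $\frac{1}{|G|}\sum_{g\in G}$ replaced by $\int_G \cdots\, d\mu(g)$. As a corollary I would then recover reductivity of $GL_n(\mathbb{C})$ and $SL_n(\mathbb{C})$ via Weyl's unitarian trick: restricting a representation to the compact subgroup $U_n(\mathbb{C})$ (resp.\ $SU_n(\mathbb{C}))$, decomposing there, and observing that a subspace invariant under the maximal compact subgroup is automatically invariant under the full complex group because the two groups share the same complex Lie algebra.
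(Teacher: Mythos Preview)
Your proposal is correct and follows essentially the same route as the paper: average an arbitrary Hermitian form against the Haar measure to obtain a $G$-invariant one, then take orthogonal complements. One tiny quibble: the substitution $g\mapsto gh$ you invoke is \emph{right}-invariance of $d\mu$, not left-invariance, but this is harmless since compact groups are unimodular.
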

\begin{proof}
We use the averaging trick again. In fact the proof is the same as in the case of finite groups, using integration instead of summation for the averaging trick. Let $H_0$ be any Hermitian form on V. Then define $H$ as:
\[H(v, w)=\int_G H(gv, gw) dG\]
where $dG$ is a left-invariant Haar measure. Note that $H$ is $G$-invariant. Let $W^\bot$ be the perpendicular complement to $W$. Then $W^\bot$ is 
$G$-invariant. Hence  it is a $G$-submodule. 
\end{proof}

The same proof as before then gives us Schur's lemma for compact groups, 
from which follows:

\begin{thm}
If $G$ is compact, then every finite dimensional representation
of $G$ is a unique direct sum of irreducible representations.
\end{thm}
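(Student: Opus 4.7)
The plan is to mirror the proof strategy used for finite groups almost verbatim: existence of the decomposition comes from the reductivity theorem for compact groups that we just proved, and uniqueness comes from Schur's lemma. The first step is to note that existence of some decomposition into irreducibles follows from the previous theorem together with induction on $\dim V$: if $V$ is not irreducible, pick a proper nonzero $G$-submodule $W$, use the averaging trick with the Haar measure to build a $G$-invariant Hermitian form $H$, take $W^\perp$, and iterate the argument on $W$ and $W^\perp$, terminating because $\dim V$ is finite.

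For uniqueness, the key step is establishing Schur's lemma for compact groups. The proof is identical to the finite-group case: if $\varphi : V \to W$ is a $G$-equivariant map between irreducibles, then $\Ker \varphi$ and $\im \varphi$ are $G$-submodules, so by irreducibility $\varphi$ is either zero or an isomorphism; when $V = W$, pick an eigenvalue $\lambda \in \C$ of $\varphi$ (using that $\C$ is algebraically closed and $V$ is finite dimensional), and observe that $\varphi - \lambda I$ is $G$-equivariant with nontrivial kernel, hence zero, so $\varphi = \lambda I$. Nothing in this argument used that $G$ is finite.

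With Schur's lemma in hand, uniqueness follows from a standard counting argument via $\Hom_G$. For any finite-dimensional $G$-module $U$ and any irreducible $X$, Schur forces $\Hom_G(X, Y) = 0$ when $X \not\cong Y$ are irreducible and $\Hom_G(X, X) = \C$, so $\dim \Hom_G(X, U)$ equals the multiplicity of $X$ in any irreducible decomposition of $U$. Applied to $V$ with the two given decompositions $\bigoplus V_i^{a_i}$ and $\bigoplus W_j^{b_j}$, this forces a bijection between the isomorphism classes of $V_i$ and $W_j$ appearing with matching multiplicities.

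The main ``obstacle'' is really not an obstacle at all: once reductivity is available and $\C$ is algebraically closed, both Schur and the uniqueness argument are purely formal and transfer from the finite case with no changes. The only subtle points worth flagging are that the eigenvalue step in Schur genuinely uses algebraic closedness of $\C$, and that the existence half silently used finite-dimensionality to terminate the inductive decomposition — so the statement as written, restricted to finite-dimensional representations over $\C$, is exactly the correct level of generality.
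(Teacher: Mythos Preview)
Your proposal is correct and matches the paper's approach exactly: the paper simply remarks that ``the same proof as before then gives us Schur's lemma for compact groups,'' and that the theorem follows from this, just as you argue. Existence comes from the reductivity theorem for compact groups proved immediately prior, and uniqueness from Schur's lemma, whose proof is purely representation-theoretic and never uses finiteness of $G$.
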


\subsection{Weyl's unitary trick and $GL_n(\C)$}

\begin{thm} (Weyl)
$GL_n(\C)$ is reductive
\end{thm}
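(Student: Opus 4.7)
The plan is to leverage the reductivity of the compact subgroup $U_n(\C)\subseteq GL_n(\C)$ already established in the previous theorem, and then bootstrap invariance under $U_n(\C)$ up to invariance under the full group $GL_n(\C)$. This is the celebrated \emph{unitary trick}: the compact real form $U_n(\C)$ is large enough inside the complex group $GL_n(\C)$ that invariance under it is automatically invariance under $GL_n(\C)$, provided we stay within the category of rational (equivalently, holomorphic) representations.

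Concretely, let $V$ be a finite-dimensional rational representation of $GL_n(\C)$ and suppose $W\subseteq V$ is a $GL_n(\C)$-subrepresentation. First, I restrict the action to $U_n(\C)$. Since $U_n(\C)$ is compact, the theorem proved above gives a $U_n(\C)$-equivariant complement $W'\subseteq V$ so that $V=W\oplus W'$ as $U_n(\C)$-modules. The entire problem then reduces to showing that this $U_n(\C)$-invariant subspace $W'$ is in fact $GL_n(\C)$-invariant.

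The key geometric fact I will use is that $U_n(\C)$ is Zariski-dense in $GL_n(\C)$. Granting this, the argument is short: the stabilizer $\mathrm{Stab}(W')=\{g\in GL_n(\C):gW'\subseteq W'\}$ is cut out inside $GL_n(\C)$ by the polynomial equations expressing that the matrix entries of $\rho(g)$ send a basis of $W'$ into $W'$, so it is a Zariski-closed subset; it contains $U_n(\C)$ by construction, hence it contains the Zariski closure, which is all of $GL_n(\C)$. Density in turn follows from the observation that the defining relations $A\bar A^{\,t}=I$ of $U_n(\C)$ involve complex conjugation and therefore cannot be written as polynomial relations in the matrix entries of $GL_n(\C)$; equivalently, $\mathfrak{gl}_n(\C)=\mathfrak{u}_n(\C)\oplus i\,\mathfrak{u}_n(\C)$, so any polynomial in the entries of $GL_n(\C)$ that vanishes on $U_n(\C)$ vanishes on its $\C$-linear span and hence on all of $GL_n(\C)$.

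The main obstacle is pinning down the category of representations. The proof as sketched only works for rational (or, more generally, holomorphic) representations, since that is precisely what makes the stabilizer condition a \emph{polynomial} condition and allows the Zariski closure argument to close. If one instead wanted to consider arbitrary abstract group representations of $GL_n(\C)$, the statement would fail, so the theorem is implicitly restricted to the rational setting used throughout GCT. Once this convention is fixed, the remaining steps are routine: uniqueness of the decomposition into irreducibles follows from Schur's lemma exactly as in the finite and compact cases.
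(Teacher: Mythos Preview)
Your proof is correct and takes essentially the same approach as the paper: Weyl's unitary trick, restricting to the compact subgroup $U_n(\C)$ and then lifting invariance back to $GL_n(\C)$. The only cosmetic difference is that the paper phrases the lifting step via the Lie algebra correspondence (irreducible $U_n(\C)$-modules coincide with irreducible $GL_n(\C)$-modules), whereas you phrase it via Zariski density of $U_n(\C)$ in $GL_n(\C)$; these are two standard and essentially equivalent implementations of the same idea, and you even invoke the Lie algebra decomposition $\mathfrak{gl}_n(\C)=\mathfrak{u}_n\oplus i\,\mathfrak{u}_n$ to justify the density.
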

\begin{proof}(general idea)

Let $V$ be a representation of $GL_n(\C)$. Then $GL_n(\C)$ acts on $V$:
\[GL_n(\C)\hookrightarrow V.\]
But $U_n(\C)$ is a subgroup of $GL_n(\C)$. Therefore we have an induced action of $U_n(\C)$ on $V$, and we can look at $V$ as a representation of $U_n(\C)$. As a representation of $U_n(\C)$, $V$ breaks into irreducible representations of $U_n(\C)$ by the theorem above. To summarize, we have:
\[U_n(\C)\subseteq GL_n(\C)\hookrightarrow V = \oplus_i V_i, \]
where the $V_i$'s are irreducible representations of $U_n(\C)$.
Weyl's unitary trick uses Lie algebra to show that every finite dimensional representation of $U_n(\C)$ is also a representation of $GL_n(\C)$, and irreducible representations of $U_n(\C)$ correspond to irreducible representations of $GL_n(\C)$. Hence each $V_i$ above is an irreducible representation of $GL_n(\C)$.
\end{proof}

Once we know these groups are reductive, the goal is to construct and 
classify their 
 irreducible finite dimensional representations. This will be done
in the next lectures: 
Specht modules for $S_n$, and Weyl modules for $GL_n(\C)$.

\chapter{Representation theory of reductive groups (cont)}
\begin{center}{\Large   Scribe: Paolo Codenotti} \end{center} 


\noindent{\bf Goal:} Basic representation theory, continued from the last
lecture.

In this lecture we  continue our introduction to representation theory. Again we refer the reader to the
book by Fulton and Harris for full details \cite{FulH}. Let $G$ be a finite group, and $V$ a finite-dimensional
$G$-representation given by a homomorphism $\rho:G \to GL(V)$. We define the {\em character} of the
representation $V$ (denoted $\chi_V$) by $\chi_V(g) = Tr(\rho(g))$.

Since $Tr(A^{-1}BA) = Tr(B)$, $\chi_V(hgh^{-1}) =
\chi_V(g)$. This means  characters are constant on conjugacy
classes (sets of the form $\{hgh^{-1}| h\in G\}$, for any $g\in G$).
We call such functions {\em class functions}.

Our goal for this lecture is to prove the following two facts:
\begin{enumerate}
\item[Goal 1] A finite dimensional representation is completely determined by its character.
\item[Goal 2] The space of class functions is spanned by the characters of irreducible representations. In fact, these characters form
an orthonormal basis of this  space.
\end{enumerate}

First, we  prove some useful lemmas about characters.
\begin{lemma}
$\chi_{V \oplus W} = \chi_V + \chi_W$
\end{lemma}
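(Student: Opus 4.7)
The proof is essentially a direct computation using the block-diagonal structure of the representation $V \oplus W$. The plan is as follows.

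First, I would recall that if $\rho_V: G \to GL(V)$ and $\rho_W: G \to GL(W)$ are the homomorphisms defining the two representations, then the direct sum representation $\rho_{V \oplus W}: G \to GL(V \oplus W)$ is defined componentwise by $\rho_{V \oplus W}(g)(v,w) = (\rho_V(g)v,\, \rho_W(g)w)$. This is just the unpacking of the definition of the direct sum of representations given in the previous lecture.

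Next, I would choose bases: let $\{v_1, \ldots, v_m\}$ be a basis of $V$ and $\{w_1, \ldots, w_n\}$ be a basis of $W$. Concatenating gives a basis of $V \oplus W$. In this basis, the matrix of $\rho_{V \oplus W}(g)$ is block-diagonal:
\[
[\rho_{V \oplus W}(g)] = \begin{pmatrix} [\rho_V(g)] & 0 \\ 0 & [\rho_W(g)] \end{pmatrix}.
\]
Taking traces, since the trace of a block-diagonal matrix is the sum of the traces of the diagonal blocks, I get $\chi_{V \oplus W}(g) = \mathrm{Tr}(\rho_V(g)) + \mathrm{Tr}(\rho_W(g)) = \chi_V(g) + \chi_W(g)$ for every $g \in G$, which is the desired equality of class functions.

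There is no real obstacle here; the lemma is basically a one-line consequence of the additivity of trace on block-diagonal matrices. The only subtlety worth flagging is to confirm that the definition of the action on $V \oplus W$ is indeed the componentwise one, and that the trace is well-defined independently of basis, both of which are standard.
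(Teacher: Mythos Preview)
Your proof is correct. The paper takes a slightly different route: rather than invoking the block-diagonal form directly, it lets $\lambda_1,\dots,\lambda_r$ and $\mu_1,\dots,\mu_s$ be the eigenvalues of $\rho(g)$ and $\sigma(g)$, observes that the eigenvalues of $(\rho\oplus\sigma)(g)$ are exactly the $\lambda_i$ together with the $\mu_j$, and then uses that the trace is the sum of the eigenvalues. Your argument is the more elementary one, since additivity of trace on block-diagonal matrices requires nothing about eigenvalues; the paper's eigenvalue phrasing, however, is chosen to run in parallel with the proof of the companion lemma $\chi_{V\otimes W}=\chi_V\chi_W$, where the eigenvalues of the Kronecker product are the pairwise products $\lambda_i\mu_j$ and the eigenvalue viewpoint is genuinely the cleanest.
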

\begin{proof}
Let $g\in G$, and let $\rho,\sigma$ be homomorphisms from $G$ into $V$ and $W$,
respectively.
Let $\lambda_1,\dots,\lambda_r$ be the eigenvalues of $\rho(g)$, and $\mu_1,\dots,\mu_s$  the eigenvalues of $\sigma(g)$.
Then $(\rho \oplus \sigma)(g) = (\rho(g),\sigma(g))$, so the eigenvalues of $(\rho \oplus \sigma)(g)$ are just the eigenvalues
of $\rho(g)$ together with the eigenvalues of $\sigma(g)$.

Then $\chi_V(g) = \sum_i \lambda_i$, $\chi_W(g) = \sum_i \mu_i$, and $\chi_{V \oplus W} = \sum_i \lambda_i + \sum_i \mu_i$.
\end{proof}

\begin{lemma}
$\chi_{V \otimes W} = \chi_V \chi_W$
\end{lemma}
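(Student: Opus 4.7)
The plan is to mimic the eigenvalue-based argument used for the previous lemma. Fix $g \in G$ and choose bases of $V$ and $W$ in which $\rho(g)$ and $\sigma(g)$ are individually diagonal; since $G$ is finite, $g$ has finite order, so $\rho(g)$ and $\sigma(g)$ are each diagonalizable (their minimal polynomials divide $x^{|G|}-1$, which has distinct roots over $\C$). Let $v_1, \dots, v_r$ be a basis of $V$ with $\rho(g) v_i = \lambda_i v_i$, and $w_1, \dots, w_s$ be a basis of $W$ with $\sigma(g) w_j = \mu_j w_j$.

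Next, I would use the fact that $\{v_i \otimes w_j\}_{i,j}$ is a basis of $V \otimes W$ and compute directly, recalling that the $G$-action on the tensor product is defined by $g \cdot (v \otimes w) = (g \cdot v) \otimes (g \cdot w)$. Thus
\[
(\rho \otimes \sigma)(g)(v_i \otimes w_j) = (\rho(g) v_i) \otimes (\sigma(g) w_j) = \lambda_i \mu_j (v_i \otimes w_j),
\]
so $\{v_i \otimes w_j\}$ is an eigenbasis of $(\rho \otimes \sigma)(g)$ with eigenvalues $\lambda_i \mu_j$.

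Finally I would sum to finish:
\[
\chi_{V \otimes W}(g) = \sum_{i,j} \lambda_i \mu_j = \Bigl(\sum_i \lambda_i\Bigr)\Bigl(\sum_j \mu_j\Bigr) = \chi_V(g)\, \chi_W(g).
\]
Since this holds for every $g \in G$, we conclude $\chi_{V \otimes W} = \chi_V \chi_W$ as class functions.

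There is no real obstacle here; the only subtle point worth flagging is the simultaneous diagonalization step, which works because we diagonalize $\rho(g)$ and $\sigma(g)$ independently on different factors rather than together on one space. For infinite reductive groups one would instead pass to a suitable element or use the fact that diagonalizable elements are Zariski-dense, but for the finite group setting of this lecture the order-of-element argument is immediate.
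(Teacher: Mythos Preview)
Your proof is correct and follows essentially the same eigenvalue approach as the paper: both argue that the eigenvalues of $(\rho\otimes\sigma)(g)$ are the products $\lambda_i\mu_j$, then sum. The paper phrases this via the Kronecker product of the matrices $\rho(g)$ and $\sigma(g)$, while you explicitly construct the eigenbasis $\{v_i\otimes w_j\}$ and justify diagonalizability using the finite order of $g$; these are the same argument with slightly different packaging.
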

\begin{proof}
Let $g\in G$, and let $\rho,\sigma$ be homomorphisms into $V$ and $W$,
respectively.
Let $\lambda_1,\dots,\lambda_r$ be the eigenvalues of $\rho(g)$, and $\mu_1,\dots,\mu_s$  the eigenvalues of $\sigma(g)$.
Then $(\rho \otimes \sigma)(g)$ is the Kronecker product of the matrices $\rho(g)$ and $\sigma(g)$. So its eigenvalues are
all $\lambda_i \mu_j$ where $1\le i \le r$, $1 \le j \le s$.

Then, $\Tr((\rho \otimes \sigma)(g)) = \sum_{i,j} \lambda_i \mu_j = \left( \sum_i \lambda_i \right) \left(\sum_j \mu_j\right)$,
which is equal to $\Tr(\rho(g)) \Tr(\sigma(g))$.
\end{proof}

\section{Projection Formula}
In this section, we derive a projection formula needed for Goal 1 
that allows us to determine the multiplicity of an irreducible
representation in another representation. Given a $G$-module $V$,
let  $V^G = \{v| \forall g\in G, g\cdot v = v \}$. We will call these
elements $G$-invariant. Let
\begin{equation}\label{equ:def}
\phi = \frac{1}{|G|} \sum_{g\in G} g \in \End(V),
\end{equation}
where each $g$, via $\rho$ is considered an element of $\End(V)$. 

\begin{lemma} 
The map  $\phi: V\rightarrow V$ is a
$G$-homomorphism;  i.e., $\phi \in Hom_G(V,V) = (Hom(V,V))^G$.
\end{lemma}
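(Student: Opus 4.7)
The plan is to verify $G$-equivariance directly from the definition by exploiting the fact that left (and right) multiplication by any fixed element of $G$ merely permutes the elements of $G$, so averaging over $G$ is invariant under such translation. Concretely, to show $\phi \in \Hom_G(V,V)$ we must check that $\phi(h \cdot v) = h \cdot \phi(v)$ for every $h \in G$ and $v \in V$, which, since $\phi$ is built from elements of $G$ acting via $\rho$, is the same as showing $h\phi = \phi h$ in $\End(V)$.

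First I would compute $h\phi$ by using the definition \eqref{equ:def}:
\[
h \phi = h \cdot \frac{1}{|G|} \sum_{g \in G} g = \frac{1}{|G|} \sum_{g \in G} hg.
\]
The substitution $g' = hg$ is a bijection from $G$ to $G$, so the sum on the right equals $\sum_{g' \in G} g'$. Hence $h\phi = \phi$. An entirely analogous computation, using instead the bijection $g \mapsto gh$, yields $\phi h = \phi$. Combining these two identities gives $h\phi = \phi = \phi h$, so $\phi$ commutes with the action of every element of $G$.

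Translating back to the statement about morphisms, for any $v \in V$ and $h \in G$ we have $\phi(h \cdot v) = (\phi h)(v) = (h\phi)(v) = h \cdot \phi(v)$, which is precisely the $G$-equivariance condition. Equivalently, recalling the $G$-action on $\Hom(V,V)$ defined in the previous lecture by $(h \cdot \varphi)(v) = h \cdot \varphi(h^{-1} \cdot v)$, the relation $h\phi h^{-1} = \phi$ says exactly that $\phi \in (\Hom(V,V))^G = \Hom_G(V,V)$.

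There is no real obstacle here; the only subtlety worth flagging is making explicit the (trivial) identification of the two formulations of $G$-equivariance, namely $\phi(h \cdot v) = h \cdot \phi(v)$ versus invariance under the conjugation action on $\End(V)$. The entire argument is the standard averaging/translation-invariance trick already used in the proof of reductivity of finite groups earlier in the excerpt.
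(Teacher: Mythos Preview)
Your proof is correct and essentially the same as the paper's: both exploit that multiplication by a fixed $h\in G$ is a bijection on $G$, so the averaging operator is invariant. The only cosmetic difference is that the paper applies the conjugation bijection $g\mapsto hgh^{-1}$ in a single step to get $h\cdot\phi=\phi$ directly for the $\End(V)$-action, whereas you establish $h\phi=\phi=\phi h$ separately via left and right translation and then combine; your version in fact records the slightly stronger observation that $\phi$ absorbs group elements on both sides.
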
 
\begin{proof} 
The set $\End(V)$ is a $G$-module, as we saw in last class,
via the following commutative diagram: for any
$\pi\in\End(V)$, and  $h\in G$:

\[\begin{CD}
V @>\pi>> V\\
@VVhV @VVhV\\
V @>h \cdot \pi>> V.
\end{CD}\]

Therefore $\pi\in Hom_G(V,V)$ (i.e., $\pi$ is a $G$-equivariant morphism) iff
 $h \cdot \pi = \pi$ for all $h\in G$.

When $\phi$ is defined as in equation (\ref{equ:def}) above,
\[h\cdot \phi = \frac{1}{|G|}\sum_g hgh^{-1} = \frac{1}{|G|}\sum_g g = \phi.\]
Thus \[h\cdot\phi = \phi,\ \forall h\in G,\] and $\phi:V\rightarrow V$ is a $G$-equivariant morphism, i.e.
$\phi\in Hom_G(V, V)$. \end{proof}

\begin{lemma}
The map $\phi$ is a $G$-equivariant projection of $V$ onto $V^G$
\end{lemma}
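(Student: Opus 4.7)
The plan is to verify the two defining properties of a projection onto $V^G$: that the image of $\phi$ lies inside $V^G$, and that $\phi$ restricted to $V^G$ is the identity. Together these imply $\phi^2 = \phi$ and $\mathrm{Im}(\phi) = V^G$, so $\phi$ is the claimed projection. The $G$-equivariance was already established in the previous lemma, so nothing further is needed on that front.

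First I would show $\phi(V) \subseteq V^G$. For any $v \in V$ and any $h \in G$,
\[
h \cdot \phi(v) \;=\; \frac{1}{|G|}\sum_{g \in G} (hg)\cdot v \;=\; \frac{1}{|G|}\sum_{g' \in G} g' \cdot v \;=\; \phi(v),
\]
where in the middle equality I reindex via $g' = hg$, using that left multiplication by $h$ is a bijection of $G$. Hence $\phi(v)$ is fixed by every $h \in G$, i.e.\ $\phi(v) \in V^G$.

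Next I would check that $\phi$ acts as the identity on $V^G$. If $v \in V^G$, then $g \cdot v = v$ for every $g \in G$, so
\[
\phi(v) \;=\; \frac{1}{|G|}\sum_{g \in G} g \cdot v \;=\; \frac{1}{|G|}\sum_{g \in G} v \;=\; v.
\]
Combining the two steps, $\phi^2(v) = \phi(\phi(v)) = \phi(v)$ for all $v \in V$ (since $\phi(v) \in V^G$), so $\phi$ is idempotent; and $\mathrm{Im}(\phi) = V^G$ since $\phi(V) \subseteq V^G$ from Step 1 and every $w \in V^G$ satisfies $w = \phi(w) \in \mathrm{Im}(\phi)$ from Step 2.

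There is no real obstacle here: the entire argument is the averaging/reindexing trick already used in the proof that finite and compact groups are reductive. The only subtlety worth flagging is the reindexing step $\sum_g hg = \sum_{g'} g'$, which relies on $h \mapsto hg$ being a bijection of the finite group $G$ — exactly the property that makes averaging produce $G$-invariants.
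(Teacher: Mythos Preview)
Your proof is correct and follows essentially the same two-step argument as the paper: first show $\phi(V)\subseteq V^G$ via the reindexing $g'=hg$, then show $\phi$ restricts to the identity on $V^G$. The only difference is cosmetic—you spell out the idempotence $\phi^2=\phi$ explicitly, whereas the paper leaves it implicit in the word ``projection.''
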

\begin{proof}
For every $w\in W$, let
\[v = \phi(w) = \frac{1}{|G|}\sum_{g\in G} g\cdot w.\]
Then \[h\cdot v = h\cdot \phi(w) = \frac{1}{|G|} \sum_{g\in G} hg \cdot w = v,\ \textrm{for any}\ h\in G.\] So
$v\in V^G$. That is, $\im(\phi) \subseteq V^G$. But if $v\in V^G$, then \[\phi(v) = \frac{1}{|G|} \sum_{g\in G}
g\cdot v = \frac{1}{|G|} |G| v = v.\] So $V^G \subseteq \im(\phi)$, and $\phi$ is the identity on $V^G$. This
means that $\phi$ is the projection onto $V^G$.
\end{proof}

\begin{lemma}\label{lemm:dim}
\[\dim(V^G)= \frac{1}{|G|}\sum_{g\in G} \chi_V(g).\]
\end{lemma}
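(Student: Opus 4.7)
The plan is to combine the two preceding lemmas with the elementary fact that the trace of a projection equals the dimension of its image. Concretely, we already know from the previous lemma that
\[
\phi = \frac{1}{|G|} \sum_{g \in G} \rho(g) \in \End(V)
\]
is a projection of $V$ onto $V^G$. So the proof reduces to computing $\Tr(\phi)$ in two different ways.

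First I would observe that since $\phi$ is a projection (i.e.\ $\phi^2 = \phi$ and $\phi|_{V^G} = \Id$), we can choose a basis of $V$ adapted to the decomposition $V = V^G \oplus \Ker(\phi)$; in this basis $\phi$ is represented by a block matrix with an identity block of size $\dim V^G$ and a zero block. Taking the trace gives $\Tr(\phi) = \dim(V^G)$. This is the conceptual heart of the argument, although it is essentially a one-liner.

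Next, by linearity of the trace applied directly to the definition of $\phi$,
\[
\Tr(\phi) = \frac{1}{|G|} \sum_{g \in G} \Tr(\rho(g)) = \frac{1}{|G|} \sum_{g \in G} \chi_V(g).
\]
Equating the two expressions for $\Tr(\phi)$ yields the claimed formula.

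There is no real obstacle here; the only subtle point is justifying $\Tr(\phi) = \dim V^G$, which I would either do by the basis argument above or, equivalently, by noting that an idempotent endomorphism on a finite-dimensional complex vector space has trace equal to its rank (its eigenvalues are all $0$'s and $1$'s, with the number of $1$'s equal to the dimension of its image). Everything else is just unwinding the definition of $\phi$ and invoking linearity of the trace.
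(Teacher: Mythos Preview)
Your proof is correct and follows exactly the same approach as the paper: compute $\Tr(\phi)$ two ways, once as $\dim(V^G)$ using that $\phi$ is a projection, and once by linearity of trace as $\frac{1}{|G|}\sum_g \chi_V(g)$. The paper states the projection-trace identity more tersely (just citing the decomposition $\phi = \phi|_{V^G} \oplus \phi|_{\Ker(\phi)}$), while you spell out the basis/eigenvalue justification, but the argument is the same.
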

\begin{proof}
We have: $dim(V^G) = \Tr(\phi)$, because $\phi$ is a projection ($\phi = \phi|_{V^G} \oplus \phi|_{Ker(\phi)}$).
Also, \[\Tr(\phi) = \frac{1}{|G|} \sum_{g\in G} \Tr_V(g) = \frac{1}{|G|} \sum_{g\in G} \chi_V(g).\]
\end{proof}

This gives us a formula for the multiplicity of the trivial representation (i.e., $dim(V^G)$) inside $V$.

\begin{lemma} \label{lschurcon}
Let $V,W$ be $G$-representations. If $V$ is irreducible, $dim(\Hom_G(V,W))$ is the multiplicity of $V$ inside
$W$. If $W$ is irreducible, $dim(\Hom_G(V,W))$ is the multiplicity of $W$ inside $V$.
\end{lemma}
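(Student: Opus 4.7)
The plan is to reduce both assertions to Schur's lemma by leveraging complete reducibility (reductivity) of $G$, together with the fact that $\Hom_G(-,-)$ distributes over direct sums in each argument. Since $G$ is assumed reductive (which is our standing hypothesis, having just been proved for finite and compact groups, and $GL_n(\C)$), every finite-dimensional $G$-representation decomposes as a direct sum of irreducibles, and the multiplicities are well-defined by the uniqueness corollary already established.

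First I would record the elementary but essential observation that for any $G$-representations $A$, $B_1, B_2$,
\[
\Hom_G(A, B_1 \oplus B_2) \cong \Hom_G(A,B_1) \oplus \Hom_G(A,B_2),
\]
and likewise in the first slot. This follows immediately from the usual universal property of direct sums applied to the $G$-equivariant maps, and it extends to finite direct sums by induction.

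Next, to prove the first statement, I would fix $V$ irreducible and decompose $W$ into isotypic components, $W = \bigoplus_j W_j^{b_j}$, where the $W_j$ are pairwise non-isomorphic irreducibles and $b_j$ is the multiplicity of $W_j$ in $W$. Applying the distributivity above gives
\[
\Hom_G(V,W) \;=\; \bigoplus_j \Hom_G(V,W_j)^{b_j}.
\]
By Schur's lemma, $\Hom_G(V,W_j) = 0$ unless $W_j \cong V$, and in the isomorphic case $\Hom_G(V,W_j)$ is one-dimensional (any two nonzero $G$-maps differ by a scalar, again by Schur). Hence $\dim \Hom_G(V,W)$ equals the multiplicity $b_{j_0}$ corresponding to the irreducible $W_{j_0} \cong V$, which is precisely the multiplicity of $V$ in $W$. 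The second statement follows by the symmetric argument: decompose $V = \bigoplus_i V_i^{a_i}$ into isotypic pieces, distribute $\Hom_G$, and apply Schur's lemma to conclude that only summands with $V_i \cong W$ contribute, each contributing one dimension.

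There is essentially no obstacle here; the only subtlety is ensuring that Schur's lemma gives $\dim \Hom_G(V,V) = 1$ when $V$ is irreducible (so that the one-dimensionality in the isomorphic case is used correctly), but this is part (2) of the version of Schur's lemma already proved in the excerpt. The argument works uniformly for all reductive $G$ because the only ingredients are complete reducibility, the additivity of $\Hom_G$, and Schur's lemma, all of which are available.
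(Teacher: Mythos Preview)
Your proof is correct and is precisely the argument the paper intends: the paper's own proof is the single line ``By Schur's Lemma,'' and you have simply spelled out the standard details (decompose the reducible factor via complete reducibility, distribute $\Hom_G$ over the direct sum, and invoke both parts of Schur's lemma).
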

\begin{proof}
By Schur's Lemma.
\end{proof}

Let $C_{class}(G)$ be the space of class functions on $(G)$, and let $(\alpha, \beta) = \frac{1}{|G|} \sum_g
\overline{\alpha}(g)\beta(g)$ be the Hermitian form on $C_{class}$

\begin{lemma}\label{lemm:onezero}
If $V$ and $W$ are irreducible $G$ representations, then
\begin{eqnarray}
(\chi_V, \chi_W)=\frac{1}{|G|} \sum_{g\in G} \overline{\chi_V}(g) \chi_W(g)=\begin{cases} 1 &
\textrm{if}\ V\cong W\\
0 & \textrm{if}\ V\ncong W.
\end{cases}
\end{eqnarray}
\end{lemma}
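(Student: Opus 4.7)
The plan is to compute the Hermitian pairing $(\chi_V,\chi_W)$ in two different ways: once by unfolding it as the average of a character, and once representation-theoretically as $\dim \Hom_G(V,W)$. Combining these via Schur's lemma (Lemma \ref{lschurcon}) then gives the claimed values $0$ and $1$.

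First I would realize the summand $\overline{\chi_V}(g)\,\chi_W(g)$ as the character of a natural $G$-representation. The key identification is
\[
\Hom(V,W) \;\cong\; V^{*}\otimes W
\]
as $G$-modules, which upgrades the $G$-action on $\Hom(V,W)$ introduced in the previous lecture to something whose character is multiplicative. By the two preliminary lemmas of this section, $\chi_{\Hom(V,W)}(g)=\chi_{V^{*}}(g)\,\chi_W(g)$. Next, since $G$ is finite every $g\in G$ has finite order, so the eigenvalues $\lambda_1,\dots,\lambda_r$ of $\rho(g)$ are roots of unity; in particular $\overline{\lambda_i}=\lambda_i^{-1}$. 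Because $V^{*}$ is given by $g\mapsto (\rho(g)^{-1})^{T}$, its eigenvalues on $g$ are $\lambda_i^{-1}=\overline{\lambda_i}$, whence
\[
\chi_{V^{*}}(g)=\overline{\chi_V(g)},\qquad\text{and therefore}\qquad \chi_{\Hom(V,W)}(g)=\overline{\chi_V(g)}\,\chi_W(g).
\]

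Now I would apply Lemma \ref{lemm:dim} to the $G$-module $\Hom(V,W)$. Noting that $\Hom(V,W)^{G}=\Hom_G(V,W)$ (a $G$-equivariant linear map is precisely a fixed point of the $G$-action on $\Hom(V,W)$), we obtain
\[
\dim \Hom_G(V,W)\;=\;\frac{1}{|G|}\sum_{g\in G}\chi_{\Hom(V,W)}(g)\;=\;\frac{1}{|G|}\sum_{g\in G}\overline{\chi_V(g)}\,\chi_W(g)\;=\;(\chi_V,\chi_W).
\]

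Finally I would invoke Lemma \ref{lschurcon}: since $V$ and $W$ are both irreducible, Schur's lemma forces every $G$-equivariant map $V\to W$ to be either $0$ (when $V\not\cong W$) or a scalar multiple of a fixed isomorphism (when $V\cong W$), so $\dim\Hom_G(V,W)$ is $1$ or $0$ accordingly. Substituting gives the desired formula. The one nontrivial step, and the main obstacle if one is being careful, is justifying $\chi_{V^{*}}=\overline{\chi_V}$, which genuinely uses the finiteness of $G$ (so that $\rho(g)$ is diagonalizable with roots-of-unity eigenvalues); for general reductive groups this identity fails and a different argument is needed for the compact case via a $G$-invariant Hermitian form.
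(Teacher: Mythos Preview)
Your proof is correct and follows essentially the same route as the paper: identify $\Hom(V,W)\cong V^*\otimes W$, compute its character as $\overline{\chi_V}\chi_W$, then apply Lemma~\ref{lemm:dim} and Lemma~\ref{lschurcon}. You simply spell out more of the details (why $\chi_{V^*}=\overline{\chi_V}$ and why $\Hom(V,W)^G=\Hom_G(V,W)$) that the paper leaves implicit.
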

\begin{proof}
Since $\Hom(V,W) \cong V^* \otimes W$,
$\chi_{\Hom(V,W)} = \chi_{V^*} \chi_W = \overline{\chi_V} \chi_W$.
Now the result  follows from   Lemmas~\ref{lemm:dim} and \ref{lschurcon}.
\end{proof}

\begin{lemma}\label{lemm:orthon}
The characters of the irreducible representations form an orthonormal set.
\end{lemma}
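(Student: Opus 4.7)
The plan is to deduce this lemma as a direct corollary of the preceding Lemma~\ref{lemm:onezero}, which already encapsulates all the content. First I would observe that characters $\chi_V$ are class functions (this was noted at the start of the lecture, using $\Tr(hgh^{-1}) = \Tr(g)$), so they are legitimately elements of the space $C_{class}(G)$ on which the Hermitian form $(\alpha,\beta) = \tfrac{1}{|G|}\sum_g \overline{\alpha(g)}\beta(g)$ is defined.

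Next, I would fix a set $\{V_i\}_{i \in I}$ of representatives, one from each isomorphism class of irreducible $G$-representations, and consider the collection $\{\chi_{V_i}\}_{i \in I}$. By Lemma~\ref{lemm:onezero}, we have
\[
(\chi_{V_i}, \chi_{V_j}) = \begin{cases} 1 & \text{if } V_i \cong V_j, \\ 0 & \text{if } V_i \ncong V_j, \end{cases}
\]
which, given our choice of representatives, simplifies to $(\chi_{V_i}, \chi_{V_j}) = \delta_{ij}$. This is precisely the statement that $\{\chi_{V_i}\}$ is an orthonormal set in $C_{class}(G)$ under the given Hermitian form.

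There is essentially no obstacle here: the hard work has already been done in establishing Lemma~\ref{lemm:onezero}, which itself rests on the projection formula (Lemma~\ref{lemm:dim}), the identification $\Hom(V,W) \cong V^* \otimes W$ together with $\chi_{V^*} = \overline{\chi_V}$, and Schur's lemma (Lemma~\ref{lschurcon}). The only subtlety worth flagging — though arguably just a bookkeeping point — is the implicit choice of isomorphism class representatives, since strictly speaking the characters of two isomorphic irreducible representations coincide as functions and so would only contribute one vector, not two, to the set.
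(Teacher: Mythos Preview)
Your proposal is correct and takes essentially the same approach as the paper, which simply says the lemma follows from Lemma~\ref{lemm:onezero}. Your version just spells out the bookkeeping about choosing representatives, but the argument is identical.
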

\begin{proof}
Follows from Lemma \ref{lemm:onezero}.
\end{proof}

If $V$,$W$ are irreducible, then $\langle \chi_V, \chi_W \rangle$ is $0$ if $V\ne W$ and $1$ otherwise.

This implies that:

\begin{thm}[Goal 1]
 A representation is determined completely by its character.
\end{thm}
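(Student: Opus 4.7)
The plan is to combine reductivity of finite groups with the orthonormality of irreducible characters established in Lemma~\ref{lemm:onezero}. Two representations with the same character must decompose into irreducibles with the same multiplicities, and hence be isomorphic.

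First I would invoke the corollary to the averaging-trick proposition from the previous chapter: since $G$ is finite (and hence reductive), any finite-dimensional representation $V$ admits a decomposition
\[
V \;\cong\; \bigoplus_i V_i^{\oplus a_i},
\]
where the $V_i$ range over a complete list of pairwise non-isomorphic irreducible $G$-representations and $a_i \in \mathbb{Z}_{\ge 0}$ are the multiplicities. By additivity of characters on direct sums (the first lemma of this section), this yields
\[
\chi_V \;=\; \sum_i a_i\, \chi_{V_i}.
\]

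Next I would pair both sides against an irreducible character $\chi_{V_j}$ using the Hermitian form $(\cdot,\cdot)$ on $C_{class}(G)$. By Lemma~\ref{lemm:onezero}, we have $(\chi_{V_i},\chi_{V_j}) = \delta_{ij}$, and therefore
\[
(\chi_{V_j},\chi_V) \;=\; \sum_i a_i\,(\chi_{V_j},\chi_{V_i}) \;=\; a_j.
\]
Thus every multiplicity $a_j$ is recovered from $\chi_V$ by an explicit inner-product formula, so the isomorphism class of $V$ is determined by $\chi_V$.

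To finish, suppose $V$ and $W$ are two finite-dimensional $G$-representations with $\chi_V = \chi_W$. Applying the computation above to both, we find that $V$ and $W$ have the same multiplicity $a_j = (\chi_{V_j},\chi_V) = (\chi_{V_j},\chi_W)$ for every irreducible $V_j$; hence $V \cong W$. There is no real obstacle: once reductivity and character orthogonality are in hand, the theorem is essentially a direct reading of coefficients in an orthonormal expansion. The only thing worth being careful about is the choice of convention in the Hermitian form (conjugation placed on the left argument), which is the one used in Lemma~\ref{lemm:onezero}, so the formula $a_j = (\chi_{V_j},\chi_V)$ is unambiguous.
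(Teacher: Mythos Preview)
Your proof is correct and follows essentially the same route as the paper: decompose $V$ into irreducibles, expand $\chi_V$ as a linear combination of irreducible characters, and read off the multiplicities via the orthonormality relations of Lemma~\ref{lemm:onezero}. The only cosmetic difference is that the paper writes $a_i = (\chi_V,\chi_{V_i})$ while you write $a_j = (\chi_{V_j},\chi_V)$; both are valid since the $a_i$ are real.
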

\begin{proof}
Let $V = \bigoplus_i V_i^{\oplus a_i}$.
So $\chi_V = \sum_i a_i \chi_{V_i}$, and 
 $a_i = (\chi_V, \chi_{V_i})$.
This gives us a formula for the multiplicity of an irreducible representation in
another representation, solely in terms of their characters.
Therefore, a representation is completely determined by its character.
\end{proof}

\section{The characters of  irreducible representations form a basis}
In this section, we address Goal 2.

Let $R$ be the regular representation of $G$, $V$ an irreducible representation of $G$.
\begin{lemma}
\[R = \bigoplus_V End(V,V),\] where $V$ ranges over all irreducible representations of $G$.
\end{lemma}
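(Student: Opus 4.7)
The plan is to apply the multiplicity formula from Goal 1 directly to the regular representation $R = \C[G]$, which reduces the statement to a one-line character calculation, followed by an easy identification of $V^{\oplus \dim V}$ with $\End(V,V)$.

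First I would compute $\chi_R$. Since $G$ acts on the distinguished basis $\{e_g\}_{g \in G}$ of $R$ by $h \cdot e_g = e_{hg}$, the matrix of $\rho_R(h)$ is a permutation matrix whose trace counts the fixed points of left multiplication by $h$. Hence $\chi_R(h) = |G|$ if $h = e$ and $\chi_R(h) = 0$ otherwise. Next, by the proof of Goal 1, the multiplicity of an irreducible $V$ in $R$ is
$$ (\chi_V,\chi_R) \;=\; \frac{1}{|G|}\sum_{g \in G} \overline{\chi_V(g)}\,\chi_R(g) \;=\; \frac{1}{|G|}\cdot\overline{\chi_V(e)}\cdot|G| \;=\; \dim V. $$
So $R \cong \bigoplus_V V^{\oplus \dim V}$, summed over the irreducible representations $V$. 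To match the stated form, pick a basis $v_1,\dots,v_{\dim V}$ of $V$ and observe that the evaluation map $\varphi \mapsto (\varphi(v_1),\dots,\varphi(v_{\dim V}))$ is a vector-space isomorphism $\End(V,V) \to V^{\oplus \dim V}$, and it is $G$-equivariant provided we equip $\End(V,V)$ with the one-sided action $(g\cdot\varphi)(v) = g\cdot\varphi(v)$.

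The main subtlety (rather than a real obstacle) is the choice of $G$-action on $\End(V,V)$: it is \emph{not} the conjugation action $(g\cdot\varphi)(v) = g\cdot\varphi(g^{-1}v)$ introduced in the previous lecture. Under that conjugation action, $\End(V) \cong V\otimes V^{*}$ contains the trivial representation exactly once by Schur, and so cannot be isomorphic to $V^{\oplus \dim V}$ whenever $V$ is nontrivial. The correct action is the one inherited from the left-regular action on $R$, acting only on the codomain; under it, the character computation above directly yields the isomorphism $R \cong \bigoplus_V \End(V,V)$.
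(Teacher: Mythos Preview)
Your proof is correct and follows essentially the same route as the paper: compute $\chi_R$ (it is $|G|$ at the identity and $0$ elsewhere) and then apply the multiplicity formula to obtain that each irreducible $V$ occurs $\dim V$ times. The paper's proof is actually terser than yours and stops at the character computation, leaving the identification $V^{\oplus \dim V} \cong \End(V,V)$ and the choice of $G$-action on $\End(V,V)$ implicit; your added remarks on the one-sided action versus the conjugation action are a helpful clarification that the paper does not supply.
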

\begin{proof}
$\chi_R(g)$ is $0$ if $g$ is not the identity and $|G|$ otherwise.

\[(\chi_R,\chi_V) = \frac{1}{|G|} \sum_{g\in G} \overline{\chi_R}(g) \chi_V(g) = \frac{1}{|G|} |G| \chi_V(e) = \chi_V(e) =
dim(V)\]
\end{proof}

Let $\alpha: G \to \C$. For any $G$-module $V$, let $\phi_{\alpha,V} = \sum_g \alpha(g) g : V \to V$

\begin{ex}
$\phi_{\alpha,V}$ is $G$ equivariant (i.e. a $G$-homomorphism) iff $\alpha$ is a class function.
\end{ex}

\begin{prop}\label{prop:zero}
Suppose $\alpha : G \to \C$ is a class function, and
 $(\alpha, \chi_V)=0$ for all irreducible representations $V$.
Then $\alpha$ is identically $0$.
\end{prop}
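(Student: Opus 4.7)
The plan is to extract information from the operators $\phi_{\beta, V}$ introduced just before the proposition, using the regular representation $R$ as a ``universal test object.'' Since the Hermitian form is defined with a complex conjugate in the first slot, the natural test function is not $\alpha$ itself but $\bar\alpha$. Observe that $\bar\alpha$ is still a class function, so by the preceding exercise $\phi_{\bar\alpha, V} = \sum_g \bar\alpha(g)\, g$ is a $G$-equivariant endomorphism of every $G$-module $V$.

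First I would apply Schur's lemma. If $V$ is irreducible, then $\phi_{\bar\alpha, V} = \lambda_V \cdot \Id$ for some scalar $\lambda_V \in \C$. Taking traces,
\[
\lambda_V \dim V \;=\; \Tr(\phi_{\bar\alpha,V}) \;=\; \sum_{g \in G} \bar\alpha(g)\, \chi_V(g) \;=\; |G|\,(\alpha, \chi_V) \;=\; 0
\]
by the hypothesis. Hence $\lambda_V = 0$, so $\phi_{\bar\alpha, V} = 0$ on every irreducible $V$. By reductivity of finite groups (proved earlier in the notes), every finite-dimensional $G$-representation decomposes as a direct sum of irreducibles, so $\phi_{\bar\alpha, V} = 0$ on every such $V$.

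Finally I would specialize to the regular representation $R$, with its natural basis $\{e_h\}_{h \in G}$ on which $g \cdot e_h = e_{gh}$. Evaluating the (now zero) operator at $e_e$ gives
\[
0 \;=\; \phi_{\bar\alpha, R}(e_e) \;=\; \sum_{g \in G} \bar\alpha(g)\,(g \cdot e_e) \;=\; \sum_{g \in G} \bar\alpha(g)\, e_g.
\]
Since the $e_g$ are linearly independent, $\bar\alpha(g) = 0$ for every $g$, and therefore $\alpha \equiv 0$.

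No step here looks genuinely hard; the one subtlety worth flagging is the need to work with $\bar\alpha$ rather than $\alpha$ so that the trace computation matches the hypothesis $(\alpha, \chi_V) = 0$ literally. Everything else is a direct combination of Schur's lemma, complete reducibility, and the fact that the regular representation ``sees'' the values of a function at every group element.
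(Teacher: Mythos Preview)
Your proof is correct and follows essentially the same strategy as the paper: apply Schur's lemma to the averaging operator on irreducibles, compute its trace to see the scalar vanishes, extend by complete reducibility, and then test on the regular representation. The only cosmetic difference is how the conjugate is handled: you work with $\phi_{\bar\alpha,V}$ so that the trace matches $(\alpha,\chi_V)$ directly, whereas the paper keeps $\phi_{\alpha,V}$ and instead observes that $\sum_g \alpha(g)\chi_V(g)$ is (up to conjugation) $|G|\,(\alpha,\chi_{V^*})$, using that $V^*$ is irreducible whenever $V$ is.
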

\begin{proof}
If $V$ is irreducible, then, by Schur's lemma, since $\phi_{\alpha,V}$ is a
 $G$-homomorphism, and $V$ is
irreducible, $\phi_{\alpha,V} = \lambda  \Id$, where $\lambda = \frac{1}{n} \Tr(\phi_{\alpha,V})$, $n =
dim(V)$. We have:

\[\lambda = \frac{1}{n} \sum_g \alpha(g) \chi_V(g) = \frac{1}{n} |G| (\alpha, \chi_{V^*}).\]

Now $V$ is irreducible iff $V^*$ is irreducible. 
So $\lambda = \frac{1}{n} |G| 0 = 0$.
Therefore, $\phi_{\alpha, V} = 0$ for any irreducible representation, and hence for any representation.

Now let $V$ be the regular representation. Since $g$ as endomorphisms of $V$
 are
linearly independent, $\phi_{\alpha,V}=0$ implies that  $\alpha(g)=0$.
\end{proof}

\begin{thm}
Characters form an orthonormal basis for the space of class functions.
\end{thm}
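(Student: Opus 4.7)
The plan is to combine the two ingredients already established: Lemma~\ref{lemm:orthon} says that the set of characters of irreducible representations is orthonormal with respect to the Hermitian form $(\alpha,\beta) = \frac{1}{|G|}\sum_g \overline{\alpha(g)}\beta(g)$ on $C_{class}(G)$, and Proposition~\ref{prop:zero} says that any class function orthogonal to every irreducible character must vanish identically. Together these give exactly an orthonormal basis.

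First I would note that orthonormality is already done, so only spanning remains. Let $S \subseteq C_{class}(G)$ denote the subspace spanned by the characters $\chi_V$ as $V$ ranges over irreducible $G$-representations. Since the Hermitian form is nondegenerate on the finite-dimensional space $C_{class}(G)$, we have a direct sum decomposition $C_{class}(G) = S \oplus S^\perp$. The task is to show $S^\perp = 0$.

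So let $\alpha \in S^\perp$. By definition of $S^\perp$, we have $(\alpha,\chi_V) = 0$ for every irreducible representation $V$. Proposition~\ref{prop:zero} applies directly and forces $\alpha \equiv 0$. Hence $S^\perp = 0$, so $S = C_{class}(G)$, and the irreducible characters span the space of class functions. Combined with Lemma~\ref{lemm:orthon}, they form an orthonormal basis.

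The work has essentially been front-loaded into Proposition~\ref{prop:zero}, whose proof used Schur's lemma on the averaged operator $\phi_{\alpha,V}$ and then exploited the linear independence of group elements acting on the regular representation; without that proposition the spanning step would be the main obstacle, but with it in hand the final theorem is a one-line consequence of orthogonal complement reasoning in a finite-dimensional Hermitian inner product space.
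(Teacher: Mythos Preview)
Your proof is correct and matches the paper's own proof, which simply cites Proposition~\ref{prop:zero} and Lemma~\ref{lemm:orthon}; you have merely unpacked the orthogonal-complement reasoning that is left implicit there.
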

\begin{proof} Follows from Proposition \ref{prop:zero}, and Lemma \ref{lemm:orthon}
\end{proof}

If $V = \bigoplus_i V_i^{\oplus a_i}$, and $\pi_i : V \to V_i^{\oplus a_i}$ is the projection operator. We have
a formula $\pi = \frac{1}{|G|} \sum_g g$ for the trivial representation. Analogously:

\begin{ex}
$\pi_i = \frac{dim V_i}{|G|} \sum_g \overline{\chi_{V_i}}(g) g$.
\end{ex}

\section{Extending to Infinite Compact Groups}
In this section, we extend the preceding results to
infinite compact groups. We must take some facts as given, since
these theorems are much more complicated than those for finite groups.

Consider compact $G$, specifically $U_n(\C)$, the unitary subgroup of $G_(\C)$.
$U_1(\C)$ is the circle group. Since $U_1(\C)$ is abelian, all its representations are one-dimensional.

Since the group $G$ is infinite, we can no longer sum over it. The idea is to replace the sum
$\frac{1}{|G|}\sum_g f(g)$ in the previous setting with $\int_G f(g)d\mu$, where $\mu$ is a left-invariant Haar
measure on $G$. In this fashion,
we can derive analogues of the preceding results  for compact groups.
 We need to normalize, so we
set $\int_G d\mu = 1$.

Let $\rho : G \to GL(V)$, where $V$ is a finite dimensional $G$-representation. Let $\chi_V(g) = \Tr(\rho(g))$.
Let $V = \bigoplus_i V_i^{a_i}$ be the complete decomposition of $V$ into irreducible representations.

We can again create a projection operator $\pi: V \to V^G$, by letting $\pi = \int_G \rho(g)d\mu$.

\begin{lemma} \label{lcompact1}
We have: \[dim(V^G) = \int_G\chi_V(g)d\mu.\]
\end{lemma}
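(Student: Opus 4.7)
The plan is to mimic the finite-group argument (Lemma~\ref{lemm:dim}) verbatim, replacing the normalized sum $\tfrac{1}{|G|}\sum_{g\in G}$ by the integral $\int_G \cdot\, d\mu$ against the normalized left-invariant Haar measure. All the algebraic manipulations in the finite case used only (i) left-invariance of the averaging operation, (ii) the normalization (total mass $1$), and (iii) linearity. The Haar measure on a compact group satisfies all three, so the same scheme applies once we have justified that operator-valued integrals against $d\mu$ are well-defined; this is standard for finite-dimensional $V$ since $\rho\colon G\to GL(V)$ is continuous and $G$ is compact, so each matrix entry of $\rho(g)$ is a bounded continuous function on $G$ and may be integrated entrywise.

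First I would define
\[
\phi \;=\; \int_G \rho(g)\, d\mu \;\in\; \End(V),
\]
understood entrywise in any basis. Using left-invariance of $d\mu$, for any $h\in G$ one has
\[
h\cdot \phi \;=\; \int_G \rho(h)\rho(g)\rho(h)^{-1}\, d\mu \;=\; \int_G \rho(hgh^{-1})\, d\mu \;=\; \int_G \rho(g)\, d\mu \;=\; \phi,
\]
so $\phi$ is $G$-equivariant, exactly as in the proof of the analogous statement for finite groups. Next I would show $\phi$ is the projection of $V$ onto $V^G$: for any $v\in V$ and any $h\in G$, left-invariance gives $h\cdot\phi(v)=\phi(v)$, so $\im(\phi)\subseteq V^G$; conversely, if $v\in V^G$ then $\rho(g)v=v$ for all $g$ and, using $\int_G d\mu=1$, $\phi(v)=\int_G v\, d\mu = v$. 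Hence $\phi|_{V^G}=\Id$ and $\phi^2=\phi$.

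With $\phi$ a projection onto $V^G$, $\dim(V^G)=\Tr(\phi)$. Finally I would interchange the trace and the integral, which is legitimate because $\Tr$ is a linear functional on $\End(V)$ and $V$ is finite-dimensional (so entrywise integration commutes with the finite sum defining the trace). This yields
\[
\dim(V^G) \;=\; \Tr(\phi) \;=\; \Tr\!\left(\int_G \rho(g)\,d\mu\right) \;=\; \int_G \Tr(\rho(g))\,d\mu \;=\; \int_G \chi_V(g)\,d\mu,
\]
which is the claim.

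The only step that is genuinely new compared with the finite case is the measure-theoretic one: defining the operator-valued integral and justifying Trace-integral interchange. This is the main (and only real) obstacle, and it is handled by finite-dimensionality of $V$ together with compactness of $G$ and continuity of $\rho$, so the integrand has bounded continuous matrix entries on a compact probability space.
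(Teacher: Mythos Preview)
Your proof is correct and follows exactly the approach the paper intends: the paper's own proof consists of the single sentence ``This result is analogous to Lemma~\ref{lemm:dim} for finite groups,'' and you have simply unpacked that analogy in full, replacing $\tfrac{1}{|G|}\sum_{g\in G}$ by $\int_G\cdot\,d\mu$ throughout.
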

\begin{proof}
This result is analogous to Lemma \ref{lemm:dim} for finite groups.
\end{proof}

For class functions $\alpha, \beta$, define an inner product \[(\alpha, \beta) = \int_G
\overline{\alpha}(g)\beta(g)d\mu.\]

Lemma~\ref{lcompact1}  applied to $\Hom_G(V,W)$ gives

\[(\chi_V, \chi_W)=\int_G \overline{\chi_V} \chi_W d\mu = dim(\Hom_G(V,W)).\]
\begin{lemma}
If $V,W$ are irreducible, $(\chi_V, \chi_W)=1$ if $V$ and $W$ are isomorphic, and $(\chi_V, \chi_W)=0$
otherwise.
\end{lemma}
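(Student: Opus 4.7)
The plan is to reduce this statement to Schur's lemma and the formula $(\chi_V, \chi_W) = \dim(\Hom_G(V, W))$ displayed immediately before the lemma, exactly mirroring the proof of Lemma~\ref{lemm:onezero} in the finite case. Essentially all the analytic work (replacing sums by Haar integrals, establishing the projection formula, extending Schur's lemma to compact groups) has already been done, so what remains is an almost purely algebraic step.

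First, I would invoke the identity
\[
(\chi_V, \chi_W) \;=\; \int_G \overline{\chi_V}(g)\,\chi_W(g)\,d\mu \;=\; \dim\bigl(\Hom_G(V, W)\bigr),
\]
which was stated just before the lemma as a consequence of Lemma~\ref{lcompact1} applied to the $G$-module $\Hom(V, W) \cong V^* \otimes W$. The key point is that $\Hom(V,W)^G = \Hom_G(V, W)$, so the integral computes the dimension of the space of $G$-equivariant maps.

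Second, I would apply Schur's lemma (in the compact-group form, already established earlier in the excerpt as an immediate consequence of reductivity and the averaging trick with the Haar measure). If $V$ and $W$ are irreducible and non-isomorphic, then any $G$-equivariant map $V \to W$ is zero, so $\dim \Hom_G(V, W) = 0$. If instead $V \cong W$, then every $G$-equivariant map $V \to V$ is a scalar multiple of the identity, so $\Hom_G(V, W) \cong \Hom_G(V, V) \cong \C$, which has dimension $1$. Combining the two cases with the displayed identity gives exactly the claimed values $0$ and $1$.

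There is no real obstacle here: the only thing one must be careful about is that Schur's lemma genuinely holds in this setting, which in turn rests on $\C$ being algebraically closed and on $\Ker \varphi$ and $\im \varphi$ being $G$-submodules — facts already used in the finite case and unchanged for compact groups. The analytic subtleties (existence and left-invariance of the normalized Haar measure, convergence of $\int_G \rho(g)\,d\mu$ for continuous finite-dimensional representations) are absorbed into Lemma~\ref{lcompact1} and the displayed integral formula, so the present lemma is a clean corollary.
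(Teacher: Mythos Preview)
Your proof is correct and is exactly the approach the paper intends: the paper's own proof simply says ``This result is analogous to Lemma~\ref{lemm:onezero} for finite groups,'' and you have faithfully spelled out that analogy using the displayed identity $(\chi_V,\chi_W)=\dim\Hom_G(V,W)$ together with Schur's lemma.
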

\begin{proof}
This result is analogous to Lemma \ref{lemm:onezero} for finite groups.
\end{proof}

\begin{lemma}
The irreducible representations are orthonormal, 
just as in Lemma \ref{lemm:orthon} in the case of finite groups.
\end{lemma}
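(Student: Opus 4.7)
The statement is the compact-group analogue of Lemma~\ref{lemm:orthon}, and just as in the finite case it should follow immediately from the preceding lemma, which established that for irreducible representations $V, W$ one has $(\chi_V, \chi_W) = 1$ when $V \cong W$ and $(\chi_V, \chi_W) = 0$ otherwise. So the plan is to simply observe that this is precisely the definition of orthonormality with respect to the inner product $(\alpha,\beta) = \int_G \overline{\alpha}(g)\beta(g)\, d\mu$ on class functions.

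More concretely, first I would pick one representative $V_\lambda$ from each isomorphism class of irreducible finite-dimensional $G$-representations (the indexing set is no longer finite as it was for finite groups, but this does not affect the formal statement of orthonormality). Then I would recall that each character $\chi_{V_\lambda}$ is a continuous class function on $G$, so it lies in the ambient inner product space under consideration. Applying the previous lemma to the pair $(V_\lambda, V_\mu)$ immediately gives $(\chi_{V_\lambda}, \chi_{V_\mu}) = \delta_{\lambda \mu}$, which is the orthonormality condition.

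There is essentially no obstacle here: all the nontrivial work (existence of the normalized left-invariant Haar measure, the projection formula $\dim(V^G) = \int_G \chi_V(g)\,d\mu$ of Lemma~\ref{lcompact1}, and its application to $\Hom_G(V,W) = (V^* \otimes W)^G$ via $\chi_{\Hom(V,W)} = \overline{\chi_V}\chi_W$) has already been absorbed into the preceding lemma. The only thing one might flag is that Schur's lemma is needed when evaluating $\dim \Hom_G(V,W)$ for irreducible $V,W$, but this was already carried over to the compact setting earlier in the chapter. Thus the proof is a one-line deduction, exactly parallel to the proof of Lemma~\ref{lemm:orthon} in the finite case.
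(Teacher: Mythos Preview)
Your proposal is correct and matches the paper's approach exactly: the paper states this lemma without any explicit proof, treating it as an immediate consequence of the preceding lemma (the orthogonality relations $(\chi_V,\chi_W)=\delta_{V\cong W}$ for irreducibles), which is precisely what you do.
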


If $V$ is reducible, $V = \bigoplus_i V_i^{\oplus a_i}$, then
\[a_i = (\chi_V, \chi_{V_i}) = \int_G \overline{\chi_{V}}\chi_{V_i} d\mu.\]
Hence
\begin{thm}
A finite dimensional representation is completely determined by its character.
\end{thm}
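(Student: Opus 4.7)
The plan is to mimic the finite-group argument (Theorem Goal~1), substituting integration against the normalized left-invariant Haar measure $d\mu$ for the averaging $\frac{1}{|G|}\sum_g$, using the pieces already assembled in this section.

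First I would invoke reductivity of compact groups (Theorem in the previous chapter): any finite dimensional representation $V$ of $G$ decomposes as $V = \bigoplus_i V_i^{\oplus a_i}$, where the $V_i$ range over pairwise non-isomorphic irreducible representations, and the multiplicities $a_i$ are well-defined nonnegative integers. Since the character is additive on direct sums (the proof of $\chi_{V\oplus W}=\chi_V+\chi_W$ is purely linear-algebraic and carries over verbatim from the finite case), we get
\[
\chi_V \;=\; \sum_i a_i \, \chi_{V_i}.
\]

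Next I would read off each multiplicity from $\chi_V$ by pairing against $\chi_{V_i}$ with the inner product $(\alpha,\beta)=\int_G \overline{\alpha}(g)\beta(g)\,d\mu$. By the orthonormality lemma for irreducible characters of compact groups (the lemma immediately preceding this theorem), $(\chi_{V_j},\chi_{V_i})=\delta_{ij}$, so bilinearity gives
\[
(\chi_V,\chi_{V_i}) \;=\; \sum_j a_j\,(\chi_{V_j},\chi_{V_i}) \;=\; a_i.
\]
Thus each $a_i$ is a functional of $\chi_V$ alone, and the decomposition $\bigoplus_i V_i^{\oplus a_i}$ is reconstructed from the character up to isomorphism.

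There is really no hard step here, since the heavy lifting has already been done: reductivity of compact groups (via the Haar-measure averaging trick) gives the decomposition, and the orthonormality lemma (proved via the analogue of Lemma~\ref{lemm:dim} applied to $\Hom_G(V,W)\cong (V^*\otimes W)^G$) gives the projection formula. The only subtle point worth flagging is measurability/continuity: one must know that $\chi_V$ is a continuous class function so that the integral $\int_G \overline{\chi_V}\chi_{V_i}\,d\mu$ makes sense, which follows from continuity of $\rho$ and of the trace. With that in place, the theorem reduces to a one-line consequence of the preceding lemma, exactly as in the finite case.
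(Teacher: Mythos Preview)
Your proposal is correct and follows essentially the same argument as the paper: decompose $V=\bigoplus_i V_i^{\oplus a_i}$ by reductivity, write $\chi_V=\sum_i a_i\chi_{V_i}$, and recover $a_i=(\chi_V,\chi_{V_i})$ from the orthonormality of irreducible characters. The paper records exactly this formula $a_i=\int_G\overline{\chi_V}\chi_{V_i}\,d\mu$ immediately before stating the theorem and deduces it with the word ``Hence''; your additional remark about continuity of $\chi_V$ is a reasonable point of care but not something the paper dwells on.
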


This achieves Goal $1$ for  compact groups. Goal $2$ is much harder:

\begin{thm}[Peter-Weyl Theorem]
(1) The characters of the irreducible representations of $G$ 
span a dense subset of
the space of continuous class functions.

(2) The coordinate functions of all irreducible matrix representations of $G$ span a dense subset of all
continuous functions on $G$.

\end{thm}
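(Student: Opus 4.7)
The strategy I would use is the classical Hilbert-space approach via convolution operators, which reduces everything to the spectral theorem for compact self-adjoint operators. The plan is to show that the left regular representation on $L^2(G)$ decomposes as an orthogonal Hilbert sum of finite-dimensional irreducible $G$-representations, and then to read off (1) and (2) as consequences.

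First I would set up the basic Hilbert space: $L^2(G)$ with respect to the normalized Haar measure, on which $G$ acts by left translation $(L_h f)(x) = f(h^{-1}x)$. For any continuous function $\phi \in C(G)$, define the convolution operator $T_\phi : L^2(G) \to L^2(G)$ by
\[
(T_\phi f)(x) = \int_G \phi(xy^{-1}) f(y)\, d\mu(y).
\]
A direct computation shows $T_\phi$ commutes with right translation, and if $\phi$ satisfies $\phi(g^{-1}) = \overline{\phi(g)}$ then $T_\phi$ is self-adjoint. The crucial point is that $T_\phi$ is a \emph{compact} operator: since $G$ is compact and $\phi$ is continuous (hence uniformly continuous), $T_\phi$ can be uniformly approximated by finite-rank operators, using the Arzelà--Ascoli theorem to show that $T_\phi$ sends bounded sets into equicontinuous families.

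Next I would apply the spectral theorem for compact self-adjoint operators: $L^2(G)$ decomposes as an orthogonal Hilbert sum of finite-dimensional eigenspaces $V_\lambda$ of $T_\phi$, with $\lambda \to 0$. Because $T_\phi$ commutes with right translation, each nonzero eigenspace is a finite-dimensional right-translation-invariant subspace of $L^2(G)$, i.e., a finite-dimensional $G$-representation. By reductivity for compact groups (already established earlier in the excerpt), each such eigenspace splits into irreducible $G$-subrepresentations. Letting $\phi$ range over an approximate identity, one checks that every $f \in L^2(G)$ lies in the closed span of the union of all such finite-dimensional subspaces, and hence in the closed span of the matrix coefficients of irreducible representations of $G$. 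A standard argument replacing approximation in $L^2$ by uniform approximation --- convolve a continuous $f$ with a continuous approximate identity to land in the range of a compact $T_\phi$ whose image consists of continuous functions and where $L^2$-convergence upgrades to uniform convergence --- then yields statement (2).

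For (1), I would restrict the above decomposition to the closed subspace of continuous class functions. The orthogonal projection $P : C(G) \to C_{\mathrm{class}}(G)$ given by averaging $(Pf)(x) = \int_G f(gxg^{-1})\,d\mu(g)$ is continuous in the uniform norm, and $P$ sends a matrix coefficient $g \mapsto \langle \rho(g)v,w\rangle$ of an irreducible representation $(\rho,V)$ to a scalar multiple of $\chi_V$ (using Schur's lemma as in Proposition/Lemma arguments already established). Applying $P$ to the uniform approximation from (2) shows that continuous class functions are uniformly approximable by $\C$-linear combinations of irreducible characters, which is (1).

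The main obstacle is the compactness of $T_\phi$ together with the promotion of $L^2$-approximation to uniform approximation; this is where all the analytic content of Peter--Weyl is concentrated, and it is exactly the step that has no analogue in the finite-group case, where the regular representation is already finite-dimensional and the decomposition of Lemma on $R = \bigoplus_V \End(V,V)$ is immediate.
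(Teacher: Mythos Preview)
The paper does not actually prove the Peter--Weyl theorem: it is stated as background and explicitly flagged as ``much harder'' than the preceding results, with no proof given (consistent with the foreword's remark that many theorems are stated without proof). So there is nothing in the paper to compare your argument against.

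That said, your outline is the standard and correct route: compactness of convolution operators on $L^2(G)$ via Arzel\`a--Ascoli (or Hilbert--Schmidt), the spectral theorem for compact self-adjoint operators to get finite-dimensional $G$-invariant eigenspaces, an approximate identity to see that these span $L^2(G)$, and then the upgrade from $L^2$ to uniform approximation. Your derivation of (1) from (2) by the conjugation-averaging projection and Schur's lemma is also the standard reduction. The one place to be slightly careful is the promotion step: you want to argue that for continuous $f$, the functions $T_\phi f$ are continuous and converge \emph{uniformly} to $f$ as $\phi$ runs through a continuous approximate identity, and that each $T_\phi f$, lying in a sum of finite-dimensional eigenspaces, is already a uniform limit of matrix coefficients. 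You have identified this as the crux, which is right.
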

By a coordinate function of a representation $\rho: G \rightarrow GL(V)$, we 
mean the function on $G$ 
corresponding to a fixed  entry of the matrix form of $\rho(g)$.
For $G=U_1(\C)$, (2) gives the Fourier series expansion on the circle.
Hence, the Peter-Weyl theorem constitutes a far reaching generalization 
of the harmonic analyis from the circle to general $U_n(\C)$.

\chapter{Representations of the symmetric group}
\begin{center} {\Large  Scribe: Sourav Chakraborty} \end{center}

\textbf{Goal: } To determine the irreducible 
representations of the Symmetric group
$S_n$ and their characters.

\

\textbf{Reference: } \cite{FulH,YT}

\

\subsection*{Recall} Let $G$ be a reductive group. Then
\begin{enumerate}
\item Every finite dimensional representation of $G$ is completely
reducible, that is, can be written as a direct sum of irreducible
representations.

\item Every irreducible representation is determined by its
character.
\end{enumerate}

Examples of reductive groups:
\begin{itemize} \item Continuous:  algebraic torus $(\mathbb{C}^*)^m$,
general linear group $GL_n(\C)$, 
special linear group $Sl_n(\mathbb{C})$, symplectic group
$Sp_n(\mathbb{C})$, orthogonal  group $O_n(\mathbb{C})$.

\item Finite: alternating group $A_n$, symmetric group $S_n$,
$Gl_n(\mathbb{F}_p)$, simple lie groups of finite type.
\end{itemize}

\section{Representations and characters of $S_n$}
The number of irreducible representations of $S_n$ is the same as the
the number of conjugacy classes in $S_n$ since the irreducible characters form 
a basis of the space of class functions.
 Each permutation can be
written uniquely as a product of disjoint cycles. The collection of
lengths of the cycles in a permutation is called the cycle type of
the permutation. So a cycle type of a permutation on $n$ elements is
a partition of $n$. And in $S_n$ each conjugacy class is
determined by the cycle type, which, in turn, 
is determined by the partition of
$n$. So the number of conjugacy class is same as the number of
partitions of $n$. Hence:
\begin{equation} \mbox{Number of irreducible representations of
$S_n$ = Number of partitions of $n$}
\end{equation}

Let $\lambda = \{ \lambda_1 \geq \lambda_2 \geq \dots \}$ be a
partition of $n$; i.e., the size $|\lambda|=\sum \lambda_i$ is $n$.
The Young diagram corresponding to $\lambda$ is a table shown in
Figure 1. It is like an inverted staircase. The top row has
$\lambda_1$ boxes, the second row has $\lambda_2$ boxes and so on.
There are exactly $n$ boxes.

\

\begin{figure}[tbh]
\begin{center}
\setlength{\unitlength}{4144sp}%
\begingroup\makeatletter\ifx\SetFigFont\undefined%
\gdef\SetFigFont#1#2#3#4#5{%
  \reset@font\fontsize{#1}{#2pt}%
  \fontfamily{#3}\fontseries{#4}\fontshape{#5}%
  \selectfont}%
\fi\endgroup%
\begin{picture}(2524,1692)(439,-1291)
\thinlines
{\color[rgb]{0,0,0}\put(451,112){\framebox(2220,277){}}
}%
{\color[rgb]{0,0,0}\put(451,-166){\framebox(1665,278){}}
}%
{\color[rgb]{0,0,0}\put(451,-443){\framebox(1387,277){}}
}%
{\color[rgb]{0,0,0}\put(451,-998){\framebox(277,277){}}
}%
{\color[rgb]{0,0,0}\put(451,-721){\framebox(277,1110){}}
}%
{\color[rgb]{0,0,0}\put(1006,-721){\framebox(277,1110){}}
}%
{\color[rgb]{0,0,0}\put(1283,-721){\framebox(278,1110){}}
}%
{\color[rgb]{0,0,0}\put(1838,-166){\framebox(278,555){}}
}%
{\color[rgb]{0,0,0}\put(2393,112){\framebox(278,277){}}
}%
{\color[rgb]{0,0,0}\put(451,-1276){\framebox(277,278){}}
}%
{\color[rgb]{0,0,0}\put(728,-998){\framebox(278,277){}}
}%
{\color[rgb]{0,0,0}\put(451,-443){\framebox(1387,0){}}
}%
{\color[rgb]{0,0,0}\put(728,-721){\framebox(278,1110){}}
}%
\put(2948,250){\makebox(0,0)[lb]{\smash{{\SetFigFont{8}{9.6}{\rmdefault}{\mddefault}{\updefault}{\color[rgb]{0,0,0}row 1}%
}}}}
\put(1145,-998){\makebox(0,0)[lb]{\smash{{\SetFigFont{8}{9.6}{\rmdefault}{\mddefault}{\updefault}{\color[rgb]{0,0,0}row 5}%
}}}}
\put(867,-1276){\makebox(0,0)[lb]{\smash{{\SetFigFont{8}{9.6}{\rmdefault}{\mddefault}{\updefault}{\color[rgb]{0,0,0}row 6}%
}}}}
\put(2393,-166){\makebox(0,0)[lb]{\smash{{\SetFigFont{8}{9.6}{\rmdefault}{\mddefault}{\updefault}{\color[rgb]{0,0,0}row 2}%
}}}}
\put(1977,-443){\makebox(0,0)[lb]{\smash{{\SetFigFont{8}{9.6}{\rmdefault}{\mddefault}{\updefault}{\color[rgb]{0,0,0}row 3}%
}}}}
\put(1866,-693){\makebox(0,0)[lb]{\smash{{\SetFigFont{8}{9.6}{\rmdefault}{\mddefault}{\updefault}{\color[rgb]{0,0,0}row 4}%
}}}}
\end{picture}%
\end{center}
\caption{Row $i$ has $\lambda_i$ number of boxes}
\end{figure}

\

For a given partition $\lambda$, we want to
construct an irreducible representation $S_{\lambda}$, called the
Specht-module of $S_n$ for the partition  $\lambda$, and calculate the
character of $S_{\lambda}$. We shall give three constructions of
$S_{\lambda}$.

\subsection{First Construction}
A numbering $T$ of a Young diagram is a filling of the boxes in its
table with distinct numbers from $1, \dots, n$. A numbering of a
Young diagram is also called a tableau. It is called  a standard tableaux
if  the numbers are strictly increasing in each row and column. By
$T_{ij}$ we mean  the value in the tableaux at $i$-th row
and $j$-th column.
We  associate with each tableaux $T$ a polynomial  in
$\mathbb{C}[X_1, X_2, \dots, X_n]$:
$$f_T = \Pi_j \Pi_{i< i'} (X_{T_{ij}} - X_{T_{i'j}}).$$

Let $S_{\lambda}$ be the subspace of $\mathbb{C}[X_1, X_2, \dots,
X_n]$ spanned by $f_T$'s, where $T$ ranges  over all tableaux of shape
$\lambda$. It is a representation of $S_n$. 
Here $S_n$ acts on $\mathbb{C}[X_1, X_2, \dots, X_n]$ as:
$$(\sigma. f) (X_1, X_2, \dots, X_n)  = f(X_{\sigma(1)}, X_{\sigma(2)},
\dots, X_{\sigma(n)})$$

\begin{thm} \begin{enumerate} \item $S_{\lambda}$ is irreducible.
\item $S_{\lambda} \not\approx S_{\lambda '}$ if $\lambda \neq
\lambda '$
\item The set $\{f_T\}$, where $T$ ranges over standard tableau of shape
 $\lambda$,
is a basis of $S_{\lambda}$.
\end{enumerate}
\end{thm}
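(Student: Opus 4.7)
The plan is to combine a combinatorial straightening argument for the basis claim with a Young symmetrizer argument for irreducibility and distinctness. First I would observe that for each column $j$ the factor $\prod_{i<i'}(X_{T_{ij}} - X_{T_{i'j}})$ is a Vandermonde determinant in the variables indexing that column. Hence, letting $C_T \subset S_n$ denote the column stabilizer of $T$ (permutations fixing every column as a set) and $b_T = \sum_{q \in C_T}\operatorname{sgn}(q)\, q$ the column antisymmetrizer, one has $f_T = b_T \cdot X^\alpha$, where $X^\alpha = \prod_{(i,j)} X_{T_{ij}}^{i-1}$ is the lex-leading monomial of $f_T$. In particular $f_T$ is alternating under $C_T$, and this identity is the bridge from the polynomial realization to the abstract Specht module.

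For part (3), I would fix lexicographic order with $X_1 > X_2 > \cdots > X_n$ and show that when $T$ is standard (entries strictly increasing along rows and down columns) the monomial $X^\alpha$ above is the strict lex-leading term of $f_T$, and that distinct standard tableaux produce distinct leading monomials; this gives linear independence. For spanning, I would establish the Garnir relations: whenever $T$ has a column-violation between two adjacent columns, $f_T$ can be rewritten as a signed sum of $f_{T'}$ with $T'$ strictly smaller in a suitable partial order on tableaux. Iterated straightening then expresses every $f_T$ in terms of standard $f_{T'}$.

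For part (1), I would pass to the Young symmetrizer $c_T = a_T b_T$, where $a_T = \sum_{p \in R_T} p$ is the row symmetrizer. The key combinatorial lemma says: for any $\sigma \notin R_T \cdot C_T$, some pair of entries lies in a common row of $T$ and a common column of $\sigma T$; such a pair produces a transposition $\tau$ with $\tau \in R_T$ and $\sigma^{-1}\tau\sigma \in C_{\sigma T}$ of opposite signs, forcing $a_T \sigma b_T = 0$. This yields $c_T \mathbb{C}[S_n] c_T = \mathbb{C}\, c_T$, and the standard argument (an idempotent $e$ with $e \mathbb{C}[G] e = \mathbb{C} e$ generates a minimal left ideal) shows $\mathbb{C}[S_n] c_T$ is irreducible. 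An $S_n$-equivariant surjection $\mathbb{C}[S_n] c_T \twoheadrightarrow S_\lambda$, obtained by sending $\sigma c_T$ to $\sigma \cdot f_T$ (well-defined because $f_T$ is, up to scalar, the image of $c_T$ acting on a suitable monomial realization), transfers irreducibility to $S_\lambda$.

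For part (2), I would use the dominance order on partitions: if $\lambda$ strictly dominates $\mu$, the same combinatorial dichotomy shows $c_T \cdot c_{T'} = 0$ for any $T$ of shape $\lambda$ and $T'$ of shape $\mu$, so the idempotent built from $c_T$ acts nontrivially on $S_\lambda$ but trivially on $S_\mu$, precluding isomorphism. For incomparable $\lambda, \mu$ one applies the argument in both directions. Finally, a dimension count closes the loop: $\dim S_\lambda$ equals the number of standard tableaux of shape $\lambda$, and $\sum_{\lambda \vdash n}(\dim S_\lambda)^2 = n!$ via RSK matches the regular representation, confirming the $S_\lambda$ exhaust all irreducibles of $S_n$. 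The principal obstacle is the combinatorial lemma underlying both $c_T \mathbb{C}[S_n] c_T = \mathbb{C}\, c_T$ and $c_T c_{T'} = 0$ for $\lambda > \mu$, together with arranging the Garnir straightening so that it terminates on standard tableaux; these are the only delicate points of the argument.
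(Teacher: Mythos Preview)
The paper does not actually prove this theorem; it is one of the results explicitly stated without proof, with the reader referred to Fulton--Harris and Fulton's \emph{Young Tableaux}. Your proposal is precisely the standard argument developed in those references, so in that sense you are aligned with what the paper intends.

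Two points deserve tightening. First, in the combinatorial lemma the transposition should satisfy $\tau \in R_T$ and $\tau \in C_{\sigma T}$ (equivalently $\sigma^{-1}\tau\sigma \in C_T$, since $C_{\sigma T}=\sigma C_T\sigma^{-1}$); what you wrote, $\sigma^{-1}\tau\sigma \in C_{\sigma T}$, is a slip. Second, and more substantively, the map you describe from $\mathbb{C}[S_n]c_T$ to $S_\lambda$ is not literally $\sigma c_T \mapsto \sigma f_T$. Since $f_T = \pm\, b_T \cdot X^{\alpha}$ rather than $c_T \cdot X^{\alpha}$, the well-defined $S_n$-map is $x \mapsto x\cdot X^{\alpha}$, which sends $c_T$ to $a_T f_T$, not to $f_T$. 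This gives a nonzero map from the simple module $\mathbb{C}[S_n]c_T$ into $S_\lambda$, hence an \emph{injection}; but an injection from a simple module does not by itself make the target simple. You need either to argue directly that the image is all of $S_\lambda$, or---cleaner---to invoke the dimension count from your part~(3) together with the equality $\dim \mathbb{C}[S_n]c_T = \#\{\text{standard tableaux of shape }\lambda\}$ to conclude the map is an isomorphism. With these two fixes the argument is complete.
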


\subsection{Second Construction}

Let $T$ be a numbering of a Young diagram with distinct numbers from
$\{1, \dots, n\}$. An element $\sigma$ in $S_n$ acts on $T$ in
the usual way by permuting the numbers.
Let $R(T), C(T) \subset S_n$ be the sets of permutations that fix
the rows and columns of $T$, respectively.
We have $R(\sigma T) = \sigma R(T) \sigma^{-1}$ and $C(\sigma T) = \sigma
R(T)\sigma^{-1}$.
We say $T\equiv T'$ if the rows of $T$ and $T'$ are the same up to
ordering. The equivalence class of $T$, called the tabloid, is denoted by
 $\{ T\}$. Its orbit is isomorphic to $S_n/R(T)$.

Let  $\mathbb{C}[S_n]$ be the group algebra of $S_n$. 
Representations of $S_n$ are  the same as the representations of
 $\mathbb{C}[S_n]$.
The element $a_T = \sum_{p \in R(T)}p$ in $\mathbb{C}[S_n]$
 is called the row symmetrizer,  $b_T = \sum_{q \in
C(T)}sign(q)q$  the
column symmetrizer, and  $c_T = a_T b_T$  the Young symmetrizer.

Let

$$V_T = b_T.\{T\} = \sum_{q\in C(T)} sign(q) \{q T\}.$$

Then $\sigma . V_T = V_{\sigma T}$. Let $S_{\lambda}$ be the
span of all $V_T$'s, where $T$ ranges over all numberings of shape
$\lambda$.

\begin{thm} \begin{enumerate} \item $S_{\lambda}$ is irreducible

\item $S_{\lambda} \not\approx S_{\lambda '}$ if $\lambda \neq \lambda
'$.

\item The set $\{V_T | T \mbox{ standard}\}$ forms a basis for $S_{\lambda}$.
\end{enumerate}
\end{thm}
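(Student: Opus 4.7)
The plan is to center everything on the Young symmetrizer $c_T = a_T b_T$, exploit the dominance order on partitions, and deduce irreducibility and distinctness from a ``sign lemma'' before tackling the basis statement.

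First I would introduce the tabloid module $M^\lambda = \mathbb{C}[S_n]\cdot\{T\}$ (isomorphic to the permutation representation on $S_n/R(T)$), so that $S_\lambda \subseteq M^\lambda$ by the construction given. The core technical lemma is: if $T$ has shape $\lambda$ and $T'$ has shape $\mu$, and if two entries share a row in $T'$ and lie in a single column of $T$, then the transposition swapping them belongs to $C(T)$ and fixes $\{T'\}$, forcing $b_T\cdot\{T'\} = 0$. A short pigeonhole/counting argument in the dominance order then yields the \emph{dominance lemma}: if $\mu$ does not dominate $\lambda$, then $b_T$ annihilates $M^\mu$; and if $\mu = \lambda$, then $b_T\cdot\{T'\}$ is either $0$ or $\pm V_T$.

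From the dominance lemma one gets $c_T\cdot M^\lambda \subseteq \mathbb{C}\cdot V_T$, so in particular $c_T^2 = n_\lambda c_T$ for some scalar $n_\lambda$, and a trace computation identifies $n_\lambda = n!/\dim S_\lambda \ne 0$. Thus $e_T = c_T/n_\lambda$ is an idempotent and $\mathbb{C}[S_n]\cdot e_T \cong \mathbb{C}[S_n]\cdot c_T$ is what we will identify with $S_\lambda$ (after checking that $\mathbb{C}[S_n]\cdot V_T$ coincides with the span of $\sigma \cdot V_T = V_{\sigma T}$, which is the second-construction definition). Irreducibility (statement 1) then follows from the criterion that $e_T\,\mathbb{C}[S_n]\,e_T$ is one-dimensional, which is exactly what the dominance lemma, applied on both sides, provides. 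For distinctness (statement 2): if $S_\lambda \cong S_\mu$ with $\lambda \ne \mu$, the dominance order is total enough that some element, say $\mu$, fails to dominate the other; then $c_{T'}$ for $T'$ of shape $\mu$ kills $M^\lambda$ and hence $S_\lambda$, while it acts on $S_\mu$ by the nonzero scalar $n_\mu$, a contradiction.

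For the basis statement (3) I would first show that the standard $V_T$ span $S_\lambda$ via the Garnir relations: any non-standard adjacent column pair yields an element of the group algebra annihilating $c_T$, and reading this relation out gives an expression of a non-standard $V_T$ as a signed sum of $V_{T'}$'s that are strictly smaller in a fixed partial order on column-strict fillings, ensuring termination. Linear independence I would handle by a leading-term argument: order tabloids lexicographically by their sequence of row-contents, observe that the maximal tabloid appearing in $V_T$ for $T$ standard is $\{T\}$ itself with coefficient $\pm 1$, and note that distinct standard $T$ yield distinct such leading tabloids. Cross-checking the total dimension against $\sum_\lambda (f^\lambda)^2 = n! = \dim \mathbb{C}[S_n]$ (combined with statements 1 and 2) then closes the argument. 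The main obstacle I expect is precisely this Garnir/straightening step: writing down the explicit signed shuffle relation and verifying it lies in the annihilator of $c_T$, while simultaneously controlling the partial order under which the straightening terminates. Parts (1) and (2) reduce cleanly to the sign/dominance lemma, but the basis statement is where the combinatorics becomes genuinely delicate.
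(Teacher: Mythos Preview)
Your proposal is correct and follows the standard James--Fulton approach (sign/dominance lemma, idempotence of $c_T/n_\lambda$, Garnir straightening, leading-tabloid independence). The paper, however, does not prove this theorem at all: it is one of the results explicitly stated without proof and deferred to the references \cite{FulH,YT}, consistent with the foreword's remark that many theorems are ``stated without proofs, but after giving enough motivation so that they can be taken on faith.'' Your argument is essentially the one found in those references, so there is nothing to contrast.

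One minor caution: in your distinctness argument you appeal to the dominance order being ``total enough.'' Dominance is genuinely only a partial order, so the cleaner phrasing is that the dominance lemma gives $c_T S_\mu = 0$ unless $\mu \trianglerighteq \lambda$ and $c_{T'} S_\lambda = 0$ unless $\lambda \trianglerighteq \mu$; an isomorphism $S_\lambda \cong S_\mu$ would force both, hence $\lambda = \mu$. Also, when you identify $\mathbb{C}[S_n]\cdot c_T$ with the span of the $V_{\sigma T}$, be careful about which side the symmetrizer acts on: the second construction applies $b_T$ to the tabloid $\{T\}$, whereas the third construction takes the left ideal $\mathbb{C}[S_n]\cdot c_T$ inside the group algebra; the isomorphism between these is the map $x \cdot c_T \mapsto x \cdot b_T \cdot \{T\}$, and you should verify it is well-defined and $S_n$-equivariant rather than asserting the two are literally equal.
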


\

\subsection{Third Construction}

Let $T$ be a canonical numbering of shape $\lambda = (\lambda_1,
\dots, \lambda_k)$. By this, we mean
the first row is numbered by  $1, \dots, \lambda_1$, the second row
by $\lambda_1 + 1, \dots \lambda_1 + \lambda_2$, and so on,  and
the rows are increasing. Let $a_{\lambda} = a_T$, $b_{\lambda} =
b_T$, and $c_\lambda=c_T = a_T. b_T$.

Then $S_{\lambda} = \mathbb{C}[S_n]. c_{\lambda}$ is a
representation of $S_n$ from the left.

\begin{thm} \begin{enumerate} \item $S_{\lambda}$ is irreducible

\item $S_{\lambda} \not\cong S_{\lambda '}$ if $\lambda \neq \lambda
'$.

\item  The basis: an exercise.
\end{enumerate}
\end{thm}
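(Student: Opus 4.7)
The plan is to prove parts (1) and (2) by exploiting the algebraic properties of the Young symmetrizer $c_\lambda = a_\lambda b_\lambda$, leaving part (3) as the stated exercise. The organizing identity I will establish is $c_\lambda^2 = n_\lambda\, c_\lambda$ with a nonzero scalar $n_\lambda$, so that $e_\lambda := c_\lambda/n_\lambda$ is an idempotent in $\C[S_n]$ and $S_\lambda = \C[S_n]\, e_\lambda$ is the principal left ideal it generates.

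The core technical step is to show $c_\lambda \cdot x \cdot c_\lambda \in \C\, c_\lambda$ for every $x \in \C[S_n]$. The easy direction is that $p\, c_\lambda\, q = \mathrm{sgn}(q)\, c_\lambda$ for all $p \in R(T)$, $q \in C(T)$, where $T$ is the canonical tableau of shape $\lambda$; I would prove the converse, that any $y \in \C[S_n]$ transforming this way is a scalar multiple of $c_\lambda$. For this I would invoke the combinatorial fact that if a permutation $\sigma$ does not factor as $pq$ with $p \in R(T), q \in C(T)$, then there is a transposition $\tau = (ij)$ lying in $R(T) \cap \sigma C(T) \sigma^{-1}$; the sign clash from applying $\tau \cdot (-) \cdot (\sigma^{-1}\tau\sigma)$ then forces the coefficient of $\sigma$ in $y$ to vanish, so $y$ is supported on $R(T)C(T)$ and is pinned down up to a scalar. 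Since $c_\lambda x c_\lambda$ manifestly satisfies this transformation law, $c_\lambda x c_\lambda \in \C\, c_\lambda$. To pin down $n_\lambda$ I would use a trace computation on right multiplication $R_{c_\lambda} : \C[S_n] \to \C[S_n]$: its trace equals $n!$ times the coefficient of the identity in $c_\lambda$, which is $1$, while the identity $R_{c_\lambda}^2 = n_\lambda R_{c_\lambda}$ exhibits $R_{c_\lambda}$ as $n_\lambda$ times a projection onto $\C[S_n]\, c_\lambda = S_\lambda$. Equating traces gives $n_\lambda = n!/\dim S_\lambda$, and in particular $n_\lambda \neq 0$ since a nilpotent operator has zero trace.

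Irreducibility (part 1) now follows from Maschke semisimplicity of $\C[S_n]$ established earlier: for any idempotent $e$ in a semisimple algebra $A$ one has the standard identification $\Hom_A(Ae,Ae) \cong (eAe)^{\mathrm{op}}$, and our identity gives $e_\lambda \C[S_n] e_\lambda = \C\, e_\lambda$, so this endomorphism ring is one-dimensional, which in the semisimple setting forces $S_\lambda$ to be simple. For non-isomorphism (part 2), totally order partitions of $n$ lexicographically and assume $\lambda > \mu$. The analogous combinatorial lemma says that for any $\sigma \in S_n$ there exist distinct $i,j$ in the same row of the canonical $\lambda$-tableau $T$ and in the same column of $\sigma T'$, where $T'$ is the canonical $\mu$-tableau; the same sign-clash trick then gives $a_\lambda \cdot \C[S_n] \cdot b_\mu = 0$, whence $c_\lambda \cdot \C[S_n] \cdot c_\mu \subseteq a_\lambda \C[S_n] b_\mu = 0$. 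Thus $c_\lambda$ annihilates $S_\mu$, whereas $c_\lambda \cdot e_\lambda = n_\lambda\, e_\lambda \neq 0$ shows $c_\lambda$ acts nontrivially on $S_\lambda$; no $\C[S_n]$-equivariant isomorphism $S_\lambda \to S_\mu$ can exist.

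The hard part is the pair of combinatorial lemmas on shared row-column indices between the two tableaux: the first encodes the non-factorization condition $\sigma \notin R(T)C(T)$, and the second converts the strict lexicographic inequality $\lambda > \mu$ into a geometric incompatibility of the two Young diagrams. Both are pigeonhole arguments, but their precise formulations are exactly what make the proof machine turn; everything downstream is formal manipulation of the Young symmetrizer against the background of semisimplicity.
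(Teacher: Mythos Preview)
The paper does not actually supply a proof of this theorem: it is stated and immediately followed by the next subsection, with the chapter's preamble deferring all proofs to \cite{FulH,YT}. Your proposal is precisely the standard argument from Fulton--Harris (Lemmas~4.21--4.26 there): the key identity $c_\lambda\,\C[S_n]\,c_\lambda \subseteq \C\,c_\lambda$, the trace computation giving $n_\lambda = n!/\dim S_\lambda \neq 0$, irreducibility from one-dimensionality of $e_\lambda\,\C[S_n]\,e_\lambda$ together with semisimplicity, and the pairwise non-isomorphism via the combinatorial lemma forcing $a_\lambda\,\C[S_n]\,b_\mu = 0$ for $\lambda > \mu$. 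So your route is exactly what the paper's cited reference does, and it is correct.
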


\subsection{Character of $S_{\lambda}$ [Frobenius character
formula]}

Let $i = (i_1, i_2, \dots, i_k)$ be such that $\sum_j 
j i_{j} = n$. Let $C_i$ be the conjugacy class consisting of
permutations with $i_j$ cycles of length $j$.
Let $\chi_{\lambda}$ be the character of $S_{\lambda}$. The goal
is to find $\chi_{\lambda}(C_i)$.

Let $\lambda: \lambda_1\ge  \dots \ge \lambda_k$ be a partition of length $k$.
Given $k$ variables $X_1, X_2, \dots, X_k$, let
$$P_j(X) = \sum X_i^j,$$
be the power sum, and 
$$\Delta(X) = \Pi_{i<j}(X_j - X_i)$$
the discriminant.
Let $f(X)$ be a formal power series on $X_i$'s. Let
$[f(X)]_{\ell_1, \ell_2, \dots, \ell_k}$  denote the coefficient
of $X_1^{\ell_1}X_2^{\ell_2}\dots X_k^{\ell_k}$ in $f(X)$.
Let $\ell_i = \lambda_i + k - i$.

\begin{thm}[Frobenius Character Formula]

$$\chi_{\lambda}(C_i) = \left[ \Delta(X)\cdot \Pi_jP_j(X)^{i_j}\right]_{\ell_1, \ell_2, \dots,
\ell_k}$$
\end{thm}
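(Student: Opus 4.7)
The plan is to compute a single trace on a tensor space in two different ways and then extract the coefficient of a carefully chosen monomial.

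First, I would work inside $W = (\C^k)^{\otimes n}$, which carries commuting actions of $GL_k(\C)$ (diagonally) and $S_n$ (by permuting tensor factors). Assuming Schur--Weyl duality, as a $GL_k(\C) \times S_n$-module
$$W \;=\; \bigoplus_{\lambda \vdash n,\; \ell(\lambda) \le k}\; V_\lambda \otimes S_\lambda,$$
where $V_\lambda$ is the Weyl module (the irreducible $GL_k(\C)$-representation of highest weight $\lambda$) and $S_\lambda$ is the Specht module constructed above. Fix $g = \mathrm{diag}(X_1, \dots, X_k)$ and $\sigma \in C_i$. On one hand, the decomposition gives
$$\mathrm{tr}_W(g\sigma) \;=\; \sum_{\lambda} s_\lambda(X)\, \chi_\lambda(C_i),$$
where $s_\lambda(X) = \mathrm{tr}_{V_\lambda}(g)$ is the Schur polynomial. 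On the other hand, a direct computation on the basis $\{e_{j_1}\otimes\cdots\otimes e_{j_n}\}$, grouping tensor factors by the cycles of $\sigma$, yields $\mathrm{tr}_W(g\sigma) = \prod_j P_j(X)^{i_j}$: each cycle of length $j$ forces the indices along the cycle to agree for a nonzero diagonal contribution and so contributes exactly one factor of $P_j(X)$.

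Next, I would multiply both sides of the resulting identity
$$\sum_\lambda s_\lambda(X)\,\chi_\lambda(C_i) \;=\; \prod_j P_j(X)^{i_j}$$
by $\Delta(X) = \prod_{i<j}(X_j - X_i)$ and apply the bialternant (Jacobi) formula $s_\lambda \cdot \Delta = a_{\lambda+\delta}$, where $\delta = (k-1, k-2, \dots, 0)$ and $a_\mu = \det(X_i^{\mu_j}) = \sum_\pi \mathrm{sgn}(\pi)\, X^{\pi(\mu)}$. This converts the left-hand side into $\sum_\lambda \chi_\lambda(C_i)\, a_{\lambda+\delta}$. Since the tuples $\lambda + \delta$ are strictly decreasing and distinct across partitions $\lambda$, the monomial $X_1^{\ell_1}\cdots X_k^{\ell_k}$ with $\ell_i = \lambda_i + k - i$ occurs in exactly one $a_{\lambda+\delta}$, namely for the partition $\lambda$ itself, and there with coefficient $1$ (from $\pi = e$). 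Reading off the coefficient of $X_1^{\ell_1}\cdots X_k^{\ell_k}$ on both sides therefore extracts precisely $\chi_\lambda(C_i)$ on the right, which yields the asserted formula.

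The main obstacle is justifying the two heavy ingredients I am taking for granted: Schur--Weyl duality (which requires showing that the $S_n$- and $GL_k(\C)$-actions on $W$ generate each other's commutants and that the multiplicity space of $V_\lambda$ is precisely $S_\lambda$), and the identification of $s_\lambda$ with the character of the Weyl module $V_\lambda$ on the diagonal torus, which itself rests on Weyl's character formula or on the bialternant identity. Once those structural facts are in place (they are naturally proved alongside the construction of Weyl modules in the next lectures), the Frobenius formula reduces to the essentially combinatorial step of extracting the coefficient of $X_1^{\ell_1}\cdots X_k^{\ell_k}$, and the argument goes through cleanly. A fallback, if one wishes to avoid $GL_k$-representation theory at this stage, is to work instead with Frobenius's characteristic map $\mathrm{ch}: R(S_n) \to \Lambda^n$ sending $\chi_\lambda \mapsto s_\lambda$ and to derive the same identity $\sum_\lambda \chi_\lambda(C_i)\, s_\lambda = \prod_j p_j^{i_j}$ using only orthogonality of characters and the explicit construction of $S_\lambda$ via Young symmetrizers; the final coefficient extraction is then identical.
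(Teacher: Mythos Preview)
The paper does not actually prove the Frobenius character formula; it is stated without proof, in keeping with the lecture-notes style (``Most of the proofs will be omitted, or just sketched''), with a pointer to \cite{FulH,YT}. So there is no paper proof to compare against.

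Your argument is correct and is essentially the standard Schur--Weyl proof found in Fulton--Harris. The two-way trace computation on $(\C^k)^{\otimes n}$ is right, and the coefficient extraction after multiplying by the Vandermonde is clean. One small point to watch: the paper's convention is $\Delta(X)=\prod_{i<j}(X_j-X_i)$, which differs by the sign $(-1)^{\binom{k}{2}}$ from the $a_\delta=\det(X_i^{k-j})=\prod_{i<j}(X_i-X_j)$ you implicitly use when writing $s_\lambda\cdot\Delta=a_{\lambda+\delta}$; make sure your sign bookkeeping is consistent with the paper's $\Delta$ when you extract the coefficient of $X_1^{\ell_1}\cdots X_k^{\ell_k}$. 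Your honest flagging of the two black boxes (Schur--Weyl duality and the identification $\mathrm{tr}_{V_\lambda}(g)=s_\lambda$) is appropriate, and your fallback via the Frobenius characteristic map is exactly the alternative route taken in Fulton's \emph{Young Tableaux}.
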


\section{The first decision problem in GCT} 
Now we can state the first hard decision problem in representation theory
that arises in the context of the flip.
Let $S_{\alpha}$ and $S_{\beta}$ be two
Specht modules of $S_n$. Since $S_n$ is reductive,
$S_{\alpha}\otimes S_{\beta}$ decomposes as $$S_{\alpha}\otimes
S_{\beta} = \bigoplus k_{\alpha \beta}^{\lambda} S_{\lambda}$$ Here
$k_{\alpha \beta}^{\lambda}$ is called the Kronecker coefficient.

\begin{problem} {\bf (Kronecker problem)} 
Given $\lambda$, $\alpha$ and $\beta$ decide if
$k_{\alpha \beta}^{\lambda} > 0$.
\end{problem}

\begin{conj}[GCT6] This can be done in polynomial time; i.e. in
time   polynomial in the bit lengths of 
the inputs $\lambda$, $\alpha$ and $\beta$.
\end{conj}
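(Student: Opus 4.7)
The plan is to attack this through the GCT flip philosophy outlined earlier in the chapter: rather than trying to decide nonvanishing by computing $k_{\alpha\beta}^\lambda$ directly (which by the Frobenius-style formulas involves alternating sums and is generically \#P-hard), I would seek a \emph{positive combinatorial formula} $k_{\alpha\beta}^\lambda = |\mathcal{K}_{\alpha\beta}^\lambda|$, where $\mathcal{K}_{\alpha\beta}^\lambda$ is an explicit set of combinatorial objects (analogous to Littlewood-Richardson tableaux for the $GL_n$ tensor product coefficients). Given such a formula, the decision question $k_{\alpha\beta}^\lambda > 0$ reduces to the question of whether $\mathcal{K}_{\alpha\beta}^\lambda$ is nonempty, and the strategy is to exhibit a polynomial-time algorithm that either constructs an element of $\mathcal{K}_{\alpha\beta}^\lambda$ or certifies that none exists. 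Crucially, the running time must be polynomial in the bit lengths of $\lambda,\alpha,\beta$, so the parts of the partitions may be exponentially large in $n$; this rules out algorithms that enumerate tableaux and forces us to exploit structural compactness, presumably by encoding $\mathcal{K}_{\alpha\beta}^\lambda$ as the lattice points of a polytope whose defining inequalities have polynomial bit length.

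Concretely, first I would follow the GCT6 template: realize $S_\alpha, S_\beta, S_\lambda$ inside an appropriate $G$-variety (the class variety $\chi_{NP}$ or its $S_n$-analogue), and reinterpret $k_{\alpha\beta}^\lambda$ as the dimension of a space of invariants in a coordinate ring graded by $GL \times GL \times GL$. Second, I would look for a quantization: pass to the associated crystal basis or canonical basis and try to identify $\mathcal{K}_{\alpha\beta}^\lambda$ with a set of highest-weight vectors/paths, following the program of \cite{GCT6,GCT7,GCT8}. Third, assuming such a combinatorial model with a polytopal description $\mathcal{K}_{\alpha\beta}^\lambda \cong P(\alpha,\beta,\lambda) \cap \Z^N$, I would invoke the nonemptiness test for lattice polytopes with short integer descriptions, i.e.\ linear programming over $\Q$, to decide $P(\alpha,\beta,\lambda) \cap \Z^N \neq \emptyset$ in polynomial time, provided a \emph{saturation-type} property holds: $k_{\alpha\beta}^\lambda > 0$ iff $P(\alpha,\beta,\lambda)$ is nonempty as a rational polytope. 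Saturation plus polynomial bit length of the inequalities then yields the algorithm by LP feasibility.

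The principal obstacle, and the reason this is only a conjecture, is that no positive combinatorial formula for $k_{\alpha\beta}^\lambda$ is known in general; the Kronecker coefficients are notoriously more subtle than Littlewood-Richardson coefficients because the tensor product of two $S_n$-irreducibles does not correspond to a geometric operation as clean as tensor product of $GL_n$-reps. Even the analogue of saturation (Knutson-Tao for LR) is not known for Kronecker, and in fact simple forms of it can fail, so one must identify the right polytopal model. The GCT strategy as indicated in the excerpt is to reduce the existence of such a positive formula to conjectures on quantum groups and their canonical bases related to the Riemann hypothesis over finite fields \cite{weil2,beilinson,kazhdan1,lusztigbook}; hence the hardest step in this proposal is the \emph{construction} of $\mathcal{K}_{\alpha\beta}^\lambda$ with the required polytopal and saturation properties, and it is this construction that the main mathematical work of \cite{GCT6,GCT7,GCT8,GCT10} is directed toward, rather than the final LP-feasibility step which would be routine once the combinatorial model is in place.
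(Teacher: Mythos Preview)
This statement is a \emph{conjecture}, not a theorem; the paper offers no proof, only a program for attacking it. Your proposal is not a proof either, but rather a sketch of that same program, and you correctly identify it as such. In that sense your write-up is accurate and well-aligned with the paper: the route you describe---find a polytopal model (PH1), establish a saturation-type property (SH/PH2), then reduce nonvanishing to a feasibility question solvable in polynomial time---is exactly the strategy the paper lays out in the later chapters on the plethysm problem and on saturated integer programming, with the quantum-group machinery of \cite{GCT4,GCT6,GCT7,GCT8} proposed as the route to PH1.

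One refinement worth noting: you invoke ``LP feasibility'' as the final algorithmic step, but the paper's mechanism is slightly more delicate. The stretching function $\tilde{\kappa}_{\alpha,\beta}^\lambda(k)=\kappa_{k\alpha,k\beta}^{k\lambda}$ is only known to be a \emph{quasipolynomial}, not a polynomial, and naive saturation can fail for Kronecker coefficients. The paper therefore introduces a generalized notion of saturation for quasipolynomials (via the index) and a corresponding \emph{saturated integer programming} algorithm (Theorem~\ref{tgct6sat}), which computes the index of the Ehrhart quasipolynomial using the Smith normal form of the affine span and then tests feasibility. Plain LP over $\Q$ would suffice only in the period-one case; the general argument needs this extra layer. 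Otherwise your identification of the obstacles---no known positive formula, no known saturation---matches the paper's own assessment.
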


\newcommand{\lecturenum5}{5}

\chapter{Representations of $GL_n(\C)$}
\begin{center} {\Large Scribe: Joshua A. Grochow} \end{center}


\noindent {\bf Goal:} To determine the irreducible representations of
$GL_n(\C)$ and their characters.

\noindent {\em References:} \cite{FulH,YT}

The goal of today's lecture is to classify all irreducible representations of $GL_n(\C)$ and compute their characters.  We will go over two approaches, the first due to Deruyts and the second due to Weyl. 

A \emph{polynomial representation} of $GL_n(\C)$ is a representation $\rho:GL_n(\C) \to GL(V)$ such that each entry in the matrix $\rho(g)$ is a polynomial in the entries of the matrix $g \in GL_n(\C)$.  

The main result is that the polynomial irreducible representations of $GL_n(\C)$ are in bijective correspondence with Young diagrams $\lambda$ of height at most $n$, i.e. $\lambda_1 \geq \lambda_2 \geq \cdots \geq \lambda_n \geq 0$.  Because of the importance of Weyl's construction (similar constructions can be used on many other Lie groups besides $GL_n(\C)$), the irreducible representation corresponding to $\lambda$ is known as the \emph{Weyl module} $V_\lambda$.

\section{First Approach [Deruyts]}
Let $X=(x_{ij})$ be a generic $n \times n$ matrix with variable entries $x_{ij}$.  Consider the polynomial ring $\C[X] = \C[x_{11},x_{12},\dots,x_{nn}]$.  Then $GL_n(\C)$ acts on $\C[X]$ by $(A \circ f)(X) = f(A^T X)$ (it is easily checked that this is in fact a left action).

Let $T$ be a tableau of shape $\lambda$.  To each column $C$ of $T$ of length $r$, we associate an $r \times r$ minor of $X$ as follows: if $C$ has the entries $i_1,\dots,i_r$, then take from the first $r$ columns of $X$ the rows $i_1,\dots,i_r$.  Visually:
$$
C = \left(\begin{array}{c}
i_1 \\
\vdots \\
i_r
\end{array}\right) 
\longrightarrow 
e_C = 
\begin{array}{ccccccl}
 & & & 1 & & \cdots & r\\
  & & & \downarrow & & & \downarrow\\
\begin{array}{c}
\\
i_1 \rightarrow \\
\\
i_2 \rightarrow \\
\vdots \\
i_r \rightarrow \\
\\
\end{array} & 
\multicolumn{6}{l}{\left(\begin{array}{ccccc}
\\
\mathbf{x_{i_1,1}} & \cdots & \mathbf{x_{i_1,r}} & \cdots & x_{i_1,n} \\
\\
\mathbf{x_{i_2,1}} & \cdots & \mathbf{x_{i_2,r}} & \cdots & x_{i_2,n} \\
\vdots & & \vdots & & \vdots  \\
\mathbf{x_{i_r,1}} & \cdots & \mathbf{x_{i_r,r}} & \cdots & x_{i_r,n} \\
\\ 
\end{array}\right)}

\end{array}
$$

(Thus if there is a repeated number in the column $C$, $e_C=0$, since the same row will get chosen twice.)  Using these monomials $e_C$ for each column $C$ of the tableau $T$, we associate a monomial to the entire tableau, $e_T = \prod_{C} e_C$.  (Thus, if in any column of $T$ there is a repeated number, $e_T=0$.  Furthermore, the numbers must all come from $\{1,\dots,n\}$ if they are to specify rows of an $n \times n$ matrix.  So we restrict our attention to numberings of $T$ from $\{1,\dots,n\}$ in which the numbers in any given column are all distinct.)

Let $V_\lambda$ be the vector space generated by the set $\{ e_T\}$,
where $T$ ranges over all such   numberings  of shape  $\lambda$.
  Then $GL_n(\C)$ acts on $V_\lambda$: for $g \in GL_n(\C)$, each row of $g X$ is a linear combination of the rows of $X$, and since $e_C$ is a minor of $X$, $g \cdot e_C$ is a linear combination of minors of $X$ of the same size, i.e. $g( e_C ) = \sum_D a^{g}_{C,D} e_D$ (this follows from standard linear algebra).  Then 
\begin{eqnarray*}
g(e_T) & = & g(e_{C_1} e_{C_2} \cdots e_{C_k}) \\
 & = & \left(\sum_D a^g_{C_1,D} e_D \right) \cdots \left( \sum_D a^g_{C_k,D} e_D\right)
\end{eqnarray*}
If we expand this product out, we find that each term is in fact $e_{T'}$ for some $T'$ of the appropriate shape.  We then have the following theorem:

\begin{thm} \label{tweyl}  \begin{enumerate}
\item $V_\lambda$ is an irreducible representation of $GL_n(\C)$.
\item The set $\{e_T | T \mbox{ is a semistandard tableau of shape } \lambda \}$ is a basis for $V_\lambda$.  (Recall that a semistandard tableau is one whose numbering is weakly increasing across each row and strictly increasing down each column.)
\item Every polynomial irreducible representation of $GL_n(\C)$ of degree $d$ is isomorphic to $V_\lambda$ for some partition $\lambda$ of $d$ of height at most $n$.
\item Every rational irreducible representation of $GL_n(\C)$ (each entry of $\rho(g)$ is a rational function in the entries of $g \in GL_n(\C)$) is isomorphic to $V_\lambda \otimes \det^k$ for some partition $\lambda$ of height at most $n$ and for some integer $k$ (where $\det$ is the determinant representation).
\item (Weyl's character formula)
Define the character $\chi_\lambda$ of $V_\lambda$ by 
$\chi_\lambda(g)=\mbox{Tr}(\rho(g))$, where 
$\rho: GL_n(\C) \rightarrow GL(V_\lambda)$
is the representation map. Then,
for $g \in GL_n(\C)$ with eigenvalues $x_1,\dots,x_n$, 
$$\chi_\lambda(g) = 
S_\lambda(x_1,\dots,x_n) := \frac{\left|x_{j}^{\lambda_i + n-i}\right|}{\left| x_{j}^{n-i}\right|}$$ (where $|y_{j}^i|$ is the determinant of the $n \times n$ matrix whose entries are $y_{ij} = y_{j}^i$, so, e.g., the determinant in the denominator is the usual van der Monde determinant, which is equal to $\prod_{i < j} (x_i - x_j)$). 
Here $S_\lambda$ is a polynomial, called the {\em Schur  polynomial}.
\end{enumerate} \end{thm}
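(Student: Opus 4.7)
The overall strategy is to pin down the structure of $V_\lambda$ through a distinguished highest weight vector, then use combinatorics of tableaux to get a basis, and finally derive the character formula either directly from that basis or via a Weyl integration argument on $U_n(\C)$. Throughout, let $T_0$ denote the canonical tableau of shape $\lambda$ with row $i$ filled entirely by $i$'s; then $e_{T_0}$ is the product $\prod_j \Delta_{\lambda'_j}$, where $\Delta_k$ is the top-left $k\times k$ principal minor of $X$ and $\lambda'$ is the conjugate partition. A direct computation shows that a diagonal matrix $g=\mathrm{diag}(t_1,\dots,t_n)$ sends $e_{T_0}$ to $t_1^{\lambda_1}\cdots t_n^{\lambda_n} e_{T_0}$, and that the upper triangular unipotent subgroup $U \subset GL_n(\C)$ fixes $e_{T_0}$ (since row operations within the first $k$ rows leave each principal minor $\Delta_k$ invariant). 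Thus $e_{T_0}$ is a highest weight vector of weight $\lambda$ for the standard Borel.

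For part (2), the plan is to run a straightening algorithm: Plücker-type (Garnir) relations among minors show that whenever a column of $T$ is not strictly increasing or whenever two adjacent columns violate the semistandard condition, $e_T$ can be rewritten as a $\Z$-linear combination of $e_{T'}$ with $T'$ strictly smaller in a lexicographic order. This terminates at semistandard tableaux, showing spanning. Linear independence of $\{e_T : T\text{ semistandard}\}$ is verified by identifying each $e_T$ with its leading monomial under a suitable term order on $\C[X]$ and checking that distinct semistandard tableaux produce distinct leading monomials. Given (2), part (1) follows by noting that $V_\lambda$ is cyclic, generated by $e_{T_0}$ (every column can be obtained by applying a suitable permutation matrix to the first few rows, so $GL_n(\C)\cdot e_{T_0}$ contains all $e_T$), and that $e_{T_0}$ is, up to scalar, the unique $U$-invariant vector of weight $\lambda$ in $V_\lambda$; any nonzero $GL_n(\C)$-submodule of $V_\lambda$ must contain a $U$-invariant vector (Lie-Kolchin / Borel fixed-point theorem on the projectivization), and weight considerations force it to contain $e_{T_0}$, hence all of $V_\lambda$.

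For part (3), the plan is to argue that every polynomial irreducible representation $W$ of $GL_n(\C)$ of degree $d$ sits inside some tensor power $(\C^n)^{\otimes d}$ (by reading off the matrix coefficients of $\rho$), so it suffices to decompose tensor powers. This decomposition is pinned down by showing $W$ contains a highest weight vector whose weight $\mu$ is a partition of $d$ of height at most $n$ (nonnegativity and dominance use polynomiality and the action of the Weyl group $S_n$), and then identifying $W$ with $V_\mu$ via the already-established uniqueness in (1). Part (4) is a short reduction: given a rational representation $\rho$, multiplying every matrix entry of $\rho(g)$ by a sufficiently high power of $\det(g)$ clears denominators, so $\rho \otimes \det^{k}$ is polynomial for some $k\ge 0$, and (3) finishes the job.

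The main obstacle is (5), Weyl's character formula. The plan is first to use the semistandard basis from (2) to write $\chi_\lambda(\mathrm{diag}(x_1,\dots,x_n)) = \sum_{T\text{ ssyt}} x^{\mathrm{wt}(T)}$, where $\mathrm{wt}(T)_i$ is the number of $i$'s in $T$; this is the combinatorial Schur function. Then the task reduces to the purely symmetric-function identity
\begin{equation*}
\sum_{T\text{ ssyt shape }\lambda} x^{\mathrm{wt}(T)} = \frac{\det(x_j^{\lambda_i+n-i})}{\det(x_j^{n-i})},
\end{equation*}
which can be proved by a Lindström–Gessel–Viennot sign-reversing involution on lattice paths, or alternatively by a direct Weyl-style argument: the character is $W$-alternating after multiplication by the Vandermonde $\Delta(x)$, and the $W$-skew polynomials of appropriate degree span a space for which the bialternants $\det(x_j^{\lambda_i+n-i})$ form a basis; matching leading terms forces the identity. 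Either route works, but the bookkeeping for the involution (or equivalently for the compatibility of the straightening relations with the character computation) is where most of the real work lies; extending the formula from the diagonal torus to all of $GL_n(\C)$ is then automatic since characters are class functions and every element is conjugate to an element of the closure of the torus.
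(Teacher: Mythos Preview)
Your outline is correct and essentially follows the standard textbook treatment (Fulton, Fulton--Harris) that the paper itself defers to. The paper does \emph{not} prove parts (1)--(4) at all: the chapter opens by saying ``most of the background results will only be sketched or omitted,'' and the theorem is simply stated with references to \cite{FulH,YT}. So your straightening argument for (2), highest-weight-vector argument for (1), and reductions for (3)--(4) are all additional content beyond what the paper supplies; they are the right arguments and match the cited references.

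For part (5) the paper does give an argument, and your first step coincides with it exactly: the paper shows (in the proposition immediately following the theorem) that each basis element $e_T$ is an eigenvector of a diagonal $g$ with eigenvalue $x(T)=\prod_i x_i^{\mu_i(T)}$, reducing to the single-column case by multiplicativity of $e_T$ over columns, whence $\chi_\lambda(\mathrm{diag}(x))=\sum_{T\text{ ssyt}} x(T)$. The paper then \emph{cites} \cite{YT} for the identity between this combinatorial Schur polynomial and the bialternant $\det(x_j^{\lambda_i+n-i})/\det(x_j^{n-i})$ rather than proving it. Your proposed LGV or Weyl-alternant argument fills in precisely that gap, so your treatment of (5) is strictly more complete than the paper's while agreeing with it on the part the paper does prove.

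One small wording issue in your (1): it is the \emph{linear span} of the $GL_n(\C)$-orbit of $e_{T_0}$ that contains every $e_T$, not the orbit itself; but this is what you need and what your cyclicity argument actually uses.
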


NB: It turns out that all holomorphic representations of $GL_n(\C)$ are rational, and, by part (4) of the theorem, the Weyl modules classify all such representations up to scalar multiplication by powers of the determinant.

We'll give here a very brief introduction to the Schur polynomial
 introduced in the above theorem, and explain why
the Schur polynomial $S_\lambda$ 
associated to $\lambda$ gives the character of $V_\lambda$.

Let $\lambda$ be a partition, and  $T$ a semistandard tableau of shape $\lambda$.  Define $x(T) = \prod_i x_i^{\mu_i(T)} \in \C[x_1,\dots,x_n]$,
 where $\mu_i(T)$ is the number of times $i$ appears in $T$.  Then
it can be shown \cite{YT} that 
$$
S_\lambda(x_1,\dots,x_n) = \sum_{T} x(T),
$$
where the sum is taken over all semistandard tableau of shape $\lambda$.

\begin{prop} $S_\lambda(x_1,\dots,x_n)$ is the character of $V_\lambda$, where $x_1,\dots,x_n$ denote the eigenvalues of an element of $GL_n(\C)$. \end{prop}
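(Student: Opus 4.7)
The plan is to compute the character directly by evaluating $\mathrm{Tr}(\rho(g))$ on diagonal matrices, exploiting the fact that the basis $\{e_T : T \text{ semistandard of shape } \lambda\}$ furnished by Theorem~\ref{tweyl}(2) consists of simultaneous eigenvectors for the diagonal torus. Since $\chi_\lambda$ is a class function and the diagonalizable matrices are dense in $GL_n(\C)$ (Zariski, and hence analytically), it suffices to evaluate $\chi_\lambda$ on a diagonal $g = \mathrm{diag}(x_1,\dots,x_n)$ with distinct nonzero entries.

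First I would identify the action of such a diagonal $g$ on the generating monomials $e_C$. The definition of the $GL_n(\C)$-action gives $(g\cdot f)(X) = f(g^T X)$, and since $g$ is diagonal, $g^T X$ is obtained from $X$ by multiplying row $i$ by $x_i$. Hence each variable transforms as $x_{ij} \mapsto x_i x_{ij}$. Now if $C$ is a column of length $r$ with entries $i_1,\dots,i_r$, the minor $e_C$ is an alternating sum of products $\prod_{s=1}^{r} x_{i_{\sigma(s)},s}$; each such product picks up exactly the scalar $x_{i_1}x_{i_2}\cdots x_{i_r}$, independent of $\sigma$. Therefore $g\cdot e_C = \bigl(\prod_{s} x_{i_s}\bigr) e_C$.

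Multiplying over the columns $C_1,\dots,C_k$ of a numbering $T$ of shape $\lambda$, one obtains
\[
g \cdot e_T \;=\; \Bigl(\prod_{i=1}^n x_i^{\mu_i(T)}\Bigr)\, e_T \;=\; x(T)\, e_T,
\]
where $\mu_i(T)$ is the number of times $i$ appears in $T$. In particular, each basis element $e_T$ indexed by a semistandard tableau is an eigenvector of $g$ with eigenvalue $x(T)$. The matrix of $\rho(g)$ with respect to this basis is diagonal, so
\[
\chi_\lambda(g) \;=\; \mathrm{Tr}(\rho(g)) \;=\; \sum_{T \text{ semistandard of shape } \lambda} x(T),
\]
which is precisely the combinatorial description of $S_\lambda(x_1,\dots,x_n)$ recalled just before the proposition.

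The only mildly subtle point is that we are summing over \emph{semistandard} tableaux rather than over all numberings: the non-semistandard $e_T$'s do appear in the spanning set used to define $V_\lambda$, but by Theorem~\ref{tweyl}(2) the semistandard ones form a basis, so only they contribute to the trace. Having established the identity on diagonal $g$ with distinct eigenvalues, both sides are continuous (in fact polynomial) class functions, so the equality extends by density to all of $GL_n(\C)$ — giving the desired identification $\chi_\lambda = S_\lambda$ as functions of the eigenvalues, which is also precisely Weyl's character formula in part~(5) of Theorem~\ref{tweyl} via the Jacobi bialternant identity $S_\lambda = |x_j^{\lambda_i+n-i}|/|x_j^{n-i}|$.
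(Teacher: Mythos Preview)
Your proof is correct and follows essentially the same approach as the paper's: reduce to diagonal $g$ by density of diagonalizables together with the class-function property, verify that each $e_T$ is a torus eigenvector with eigenvalue $x(T)$ by computing the effect of $g^T X$ on the defining minors, and read off the trace via the semistandard basis of Theorem~\ref{tweyl}(2). The only cosmetic differences are that the paper spells out the conjugation step separately and organizes the eigenvalue computation by first reducing $e_T$ to a single column $e_C$, whereas you handle the column case directly and then multiply; neither adds or removes content.
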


\begin{proof} It suffices to show this
diagonalizable $g \in GL_n(\C)$, since the diagonalizable matrices are 
dense in $GL_n(\C)$.

So let $g \in GL_n(\C)$ be diagonalizable with 
eigenvalues $x_1,\dots,x_n$. We can assume that $g$ is diagonal.
If not, let $A$ be a matrix that diagonalizes $g$. So
$AgA^{-1}$ is diagonal with $x_1,\dots,x_n$ as its diagonal entries.
If $\rho: GL_n(\C) \to GL(V_\lambda)$ is the representation corresponding to the module $V_\lambda$, then conjugate $\rho$ by $A$ to get $\rho': GL_n(\C) \to GL(V_\lambda)$ defined by $\rho'(h)=A\rho(h)A^{-1}$.  In particular, since trace is invariant under conjugation, $\rho$ and $\rho'$ have the same character.  The module corresponding to $\rho'$ is simply $A \cdot V_\lambda$, which is clearly isomorphic to $V_\lambda$ since $A$ is invertible.  Thus to compute the character $\chi_\lambda(g)$, it suffices to compute the character of $g$ under $\rho'$, i.e., when  $g$ is diagonal, as we shall assume now.

We will show that $e_T$ is an eigenvector of $g$ with eigenvalue $x(T)$, i.e. $g(e_T)=x(T)e_T$.  Then since  $\{e_T | $T$ \mbox{ is a semistandard tableau of shape } \lambda\}$ is a basis for $V_\lambda$, the trace of $g$ on $V_\lambda$ 
will  just be $\sum_T x(T)$, where the sum is over semistandard $T$ of shape $\lambda$;  this is exactly $S_\lambda(x_1,\dots,x_n)$.

We reduce to the case where $T$ is a single column.  Suppose the claim is true for all columns $C$.  Then since $e_T$ is a product of $e_C$ where $C$ is a column, the corresponding eigenvalue of $e_T$ will be $\prod_C x(C)$ (where the product is taken over the columns $C$ of $T$), which is exactly $x(T)$. 

So assume $T$ is a single column, say with entries $i_1,\dots,i_r$. 
Then $e_T$ is simply the above-mentioned $r \times r$ minor of the generic $n \times n$ matrix $X=(x_{ij})$ (do not confuse the double-indexed entries of the matrix $X$ with the single-indexed eigenvalues of $g$).  Since $g$ is diagonal,
 $g^t=g$. So $(g \circ e_T)(X) = e_T(g^t X) = e_T(gX)$.  Thus $g$ multiplies the $i_j$-th column by $x_{i_j}$. Thus  its effect on $e_T$ is simply to multiply it by $\prod_{j=1}^r x_{i_j}$, which is exactly $x(T)$. \end{proof}

\subsection{Highest weight vectors}
The subgroup $B \subset GL_n(\C)$ of lower triangular invertible matrices,
called the {\em Borel subgroup}, is solvable. So every irreducible representation of $B$ is one-dimensional.  A \emph{weight vector} for $GL_n(\C)$ is a vector $v$ which is an eigenvector
for every matrix $b \in B$.
In other words, there is a function $\lambda: B \to \C$ such that 
$b \cdot v = \lambda(b) v$ for all $b \in B$.
The restriction  of $\lambda$  to the subgroup of diagonal matrices in
$B$ is known as the \emph{weight} of $v$.

As we showed in the proof of the above proposition, 
$$
\left(\begin{array}{ccc}
x_1 & & \\
 & \ddots & \\
 & & x_n 
\end{array}\right)
e_T = x(T)e_T = x_1^{\lambda_1} \cdots x_n^{\lambda_n} e_T.$$
So $e_T$ is a weight vector with weight $x(T)$.  
Thus  Theorem~\ref{tweyl} (2) gives
a basis consisting entirely of weight vectors.
We abbreviate the weight $x(T)$ by the sequence of exponents $(\lambda_1,\dots,\lambda_n)$.  We say $e_T$ is a \emph{highest weight vector} if its weight is the highest in the lexicographic ordering (using the above sequence notation for the weight).  

Each $V_\lambda$ has a unique (up to scalars) $B$-invariant vector, 
which turns out to be the highest weight vector:
 namely $e_T$, where $T$ is canonical.  For example, for $\lambda=(5,3,2,2,1)$, the canonical $T$ is:
$$
T = \young(11111,222,33,44,5)
$$
Note that the weight of such $e_T$ is $(\lambda_1,\dots,\lambda_n)$, so that the highest weight vector uniquely determines $\lambda$, and thus the entire representation $V_\lambda$.  (This is a general feature of highest weight vectors in the representation theory of Lie algebras and Lie groups.)  Thus the irreducible representations $GL_n(\C)$ are in bijective correspondence wih the highest weights of $GL_n(\C)$, i.e. the sequences of exponents of the eigenvalues of the $B$-invariant eigenvectors.

\section{Second Approach [Weyl]}
Let $V=\C^{n}$ and consider the $d$-th tensor power $V^{\otimes d}$.  The group $GL_n(\C)$ acts on $V^{\otimes d}$ on the left by the diagonal action
$$
g(v_1 \otimes \cdots \otimes v_d) = gv_1 \otimes \cdots \otimes gv_d \mbox{ ($g \in GL(V)$) }
$$
while the symmetric group $S_d$ acts on the right by 
$$
(v_1 \otimes \cdots \otimes v_d) \tau = v_{1\tau} \otimes \cdots \otimes v_{d \tau}\mbox{ ($\tau \in S_d$) }.
$$
These two actions commute, so $V^{\otimes d}$ is a representation of $H=GL_n(\C) \times S_d$.  Every irreducible representation of $H$ is of the form $U \otimes W$ for some irreducible representation $U$ of $GL_n(\C)$ and some irreducible representation $W$ of $S_d$.  Since both $GL_n(\C)$ and $S_d$ are reductive (every finite-dimensional representation is a direct sum of irreducible representations), their product $H$ is reductive as well.  So there are some partitions $\alpha$ and $\beta$ and integers $m_{\alpha\beta}$ such that $$V^{\otimes d} = \bigoplus (V_\alpha \otimes S_\beta)^{m_{\alpha \beta}},$$ where $V_\alpha$ are Weyl modules and $S_\beta$ are Specht modules (irreducible representations of the symmetric group $S_d$).

\begin{thm} $V^{\otimes d} = \bigoplus_{\lambda} V_\lambda \otimes S_\lambda$, where the sum is taken over partitions $\lambda$ of $d$ of height at most $n$.  (Note that each summand appears with multiplicity one, so this is a ``multiplicity-free'' decomposition.)  \end{thm}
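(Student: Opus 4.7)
The plan is to match characters on both sides as $GL_n(\C)\times S_d$-modules. Since $H = GL_n(\C)\times S_d$ is a product of reductive groups, and hence reductive, $V^{\otimes d}$ is semisimple and determined up to isomorphism by its character. By Theorem~\ref{tweyl}(5), the character of $\bigoplus_\lambda V_\lambda\otimes S_\lambda$ evaluated on $(g,\sigma)$, with $g$ diagonalizable having eigenvalues $x_1,\ldots,x_n$ and $\sigma$ of cycle type $\mu$, equals $\sum_\lambda S_\lambda(x_1,\ldots,x_n)\,\chi_\lambda(C_\mu)$, where the sum runs over partitions $\lambda \vdash d$ of height at most $n$. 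So it suffices to verify that $\chi_{V^{\otimes d}}(g,\sigma)$ equals this same expression.

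I would compute the character of $V^{\otimes d}$ directly. Choose an eigenbasis $e_1,\ldots,e_n$ of $V$ for $g$; then $V^{\otimes d}$ has basis $e_{i_1}\otimes\cdots\otimes e_{i_d}$, and $(g,\sigma)$ sends such a vector to a scalar multiple of $e_{i_{1\sigma}}\otimes\cdots\otimes e_{i_{d\sigma}}$. This contributes to the trace only when $(i_1,\ldots,i_d)$ is constant on each cycle of $\sigma$. Summing over assignments of a value $j_s\in\{1,\ldots,n\}$ to each cycle of length $\mu_s$ yields
\[
\chi_{V^{\otimes d}}(g,\sigma) \;=\; \prod_s P_{\mu_s}(x_1,\ldots,x_n),
\]
the product of power sums in the eigenvalues.

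The crux is then the identity
\[
\prod_s P_{\mu_s}(x) \;=\; \sum_\lambda \chi_\lambda(C_\mu)\,S_\lambda(x),
\]
which I would derive directly from the Frobenius character formula of Chapter~4 together with the Weyl formula for $S_\lambda$. Since each $P_j$ is symmetric and $\Delta$ is antisymmetric, $\Delta(X)\prod_j P_j(X)^{i_j}$ is antisymmetric in $X_1,\ldots,X_n$. In $n$ variables, antisymmetric polynomials have a basis of determinants $a_\ell(X)=\det(X_j^{\ell_i})$ indexed by strict tuples $\ell_1>\cdots>\ell_n\ge 0$; equivalently, by partitions $\lambda$ of height at most $n$ via $\ell_i=\lambda_i+n-i$. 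Frobenius's formula identifies the coefficient of $X_1^{\ell_1}\cdots X_n^{\ell_n}$ in $\Delta\cdot\prod_j P_j^{i_j}$ with $\chi_\lambda(C_\mu)$, so the antisymmetric expansion reads $\Delta(X)\prod_j P_j(X)^{i_j} = \sum_\lambda \chi_\lambda(C_\mu)\,a_{\lambda+\rho}(X)$ with $\rho=(n-1,\ldots,0)$. Dividing by $\Delta = a_\rho$ gives the identity.

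The main subtlety, and the reason the sum truncates to partitions of height at most $n$, is precisely this restriction on the antisymmetric basis in $n$ variables: partitions of $d$ with more than $n$ parts contribute nothing after division by $\Delta$. Once the character identity is established, the multiplicity-one statement follows because the $V_\lambda\otimes S_\lambda$ are pairwise non-isomorphic irreducibles of $H$ (external tensor products of distinct irreducible factors) and the isotypic decomposition of a semisimple module is unique. I expect no serious conceptual obstacle beyond careful bookkeeping---keeping Frobenius's indexing consistent and confirming the truncation---both of which are routine.
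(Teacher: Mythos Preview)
Your argument is correct and complete: the direct trace computation giving $\prod_s P_{\mu_s}(x)$, the antisymmetrization step that converts Frobenius's coefficient formula into an expansion $\Delta\cdot p_\mu=\sum_\lambda \chi_\lambda(C_\mu)\,a_{\lambda+\rho}$, and the truncation to height $\le n$ are all handled properly.

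The paper itself does not prove this theorem; it is one of the results stated on faith. The surrounding text does, however, indicate the intended route: the subsequent Theorem~\ref{tembedweyl} asserts $V^{\otimes d}c_T\cong V_\lambda$ for any standard tableau $T$ of shape $\lambda$, and the displayed decomposition $V^{\otimes d}=\bigoplus_\lambda\bigoplus_T V^{\otimes d}c_T$ then yields Schur--Weyl duality by grouping the standard tableaux of fixed shape into a copy of $S_\lambda$. That is the classical Young-symmetrizer (double-centralizer) approach. Your character-theoretic argument is genuinely different: it bypasses the symmetrizers and the commutant calculation entirely, relying instead on the two character formulas already recorded in Chapters~4 and~5. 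The Young-symmetrizer route is more constructive---it actually produces the embedded copies of $V_\lambda$ and $S_\lambda$ inside $V^{\otimes d}$---whereas your approach is shorter, self-contained given what the notes have established, and makes the height restriction transparent as a vanishing of Schur polynomials rather than a combinatorial constraint on tableaux.
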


Now, let $T$ be any standard tableau of shape $\lambda$, and recall the Young symmetrizer $c_T$ from our discussion of the irreducible representations of $S_d$.  Then $V^{\otimes d}c_T$ is a representation of $GL_n(\C)$ from the left (since $c_T \in \C[S_d]$ acts on the right, and the left action of $GL_n(\C)$ and the right action of $S_d$ commute.)

\begin{thm} \label{tembedweyl} $V^{\otimes d}c_T \cong V_\lambda$ \end{thm}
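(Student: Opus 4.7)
The plan is to deduce the theorem directly from the Schur--Weyl decomposition stated in the previous theorem together with the structural properties of the Young symmetrizer $c_T$ developed in the lecture on $S_d$.

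First I would write
\[
V^{\otimes d} \;=\; \bigoplus_{\mu} V_\mu \otimes S_\mu,
\]
where $\mu$ ranges over partitions of $d$ of height at most $n$, the factor $GL_n(\C)$ acts on $V_\mu$, and $S_d$ acts on $S_\mu$. Since the $GL_n(\C)$- and $S_d$-actions on $V^{\otimes d}$ commute, right multiplication by $c_T \in \C[S_d]$ is $GL_n(\C)$-equivariant and preserves the decomposition. Therefore, as $GL_n(\C)$-modules,
\[
V^{\otimes d} c_T \;=\; \bigoplus_{\mu} V_\mu \otimes (S_\mu c_T).
\]
So the theorem reduces to the purely representation-theoretic statement that $S_\mu c_T = 0$ for $\mu \neq \lambda$ and $\dim (S_\lambda c_T) = 1$.

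The key input is that $c_T$ is, up to a nonzero scalar, a primitive idempotent of $\C[S_d]$ with $\C[S_d] c_T \cong S_\lambda$ as a left $S_d$-module. This is precisely the content of the third construction of Specht modules given earlier, and rests on the classical computation $c_T^2 = \alpha_T c_T$ for a nonzero scalar $\alpha_T$ (using that $R(T) \cap C(T) = \{1\}$), combined with Schur's lemma applied to the endomorphism $x \mapsto c_T x c_T$ of $\C[S_d] c_T$. Granted this, the Wedderburn decomposition $\C[S_d] = \bigoplus_\mu \mathrm{End}(S_\mu)$ places $c_T$ inside the $\lambda$-block as a rank-one element. Consequently, treating each $S_\mu$ as a right $S_d$-module via the decomposition above, multiplication by $c_T$ vanishes on $S_\mu$ for $\mu \neq \lambda$ (those blocks annihilate $c_T$) and has one-dimensional image on $S_\lambda$ (it factors through projection onto a line).

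Combining these observations,
\[
V^{\otimes d} c_T \;=\; V_\lambda \otimes (S_\lambda c_T) \;\cong\; V_\lambda \otimes \C \;\cong\; V_\lambda
\]
as $GL_n(\C)$-modules, which is the desired isomorphism.

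The main obstacle is the key input about $c_T$: verifying that $c_T^2 = \alpha_T c_T$ with $\alpha_T \neq 0$ and that the resulting idempotent $c_T/\alpha_T$ is primitive with $\C[S_d] c_T \cong S_\lambda$. Everything else is bookkeeping, though one must also be careful about the convention switch between the left $S_d$-action used to define $S_\lambda$ in the third construction and the right $S_d$-action on $V^{\otimes d}$; this amounts to passing to the opposite algebra via $\sigma \mapsto \sigma^{-1}$, which sends $c_T$ to a scalar multiple of itself and hence does not affect the conclusion.
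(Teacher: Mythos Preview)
Your approach is correct and is the standard way to deduce this from Schur--Weyl duality. The paper itself states this theorem without proof (as it does with many results in these lecture notes), so there is nothing to compare against; your argument supplies exactly what is needed.

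One small correction in your final paragraph: under the anti-automorphism $\sigma \mapsto \sigma^{-1}$ of $\C[S_d]$, the element $c_T = a_T b_T$ is sent to $b_T a_T$, which is \emph{not} in general a scalar multiple of $c_T$ (already for $\lambda=(2,1)$ one checks they differ). This does not damage your argument, however: $b_T a_T$ is also, up to a nonzero scalar, a primitive idempotent in the $\lambda$-block, so the conclusion is unaffected. In fact you can bypass the left/right conversion entirely. Once you know that $c_T$ lies in the $\lambda$-summand of the Wedderburn decomposition $\C[S_d]=\bigoplus_\mu \End(S_\mu)$ as a rank-one element, right multiplication by $c_T$ automatically kills every simple right module except the one indexed by $\lambda$, on which its image is one-dimensional; no passage through the opposite algebra is required.
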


Thus
$$
V^{\otimes d} = \bigoplus_{\lambda : |\lambda|=d} \bigoplus_{\mbox{\parbox{1in}{\centering std. tableau \\ $T$ of shape $\lambda$}}} V^{\otimes d} c_T,
$$
where $|\lambda|=\sum \lambda_i$ denotes the size of $\lambda$.
In particular, $V_\lambda$ occurs in $V^{\otimes d}$ with multiplicity $\dim(S_\lambda)$.  

Finally, we construct a basis for $V^{\otimes d} c_T$.  A \emph{bitableau} of shape $\lambda$ is a pair $(U,T)$ where $U$ is a semistandard tableau of shape $\lambda$ and $T$ is a standard tableau of shape $\lambda$.  (Recall that the the semistandard tableau of shape $\lambda$ are in natural bijective correspondence with a basis for the Weyl module $V_\lambda$, while the standard tableau of shape $\lambda$ are in natural bijective correspondence with a basis for the Specht module $S_\lambda$.)

To each bitableau we associate a vector $e_{(U,T)}=e_{i_1} \otimes \cdots \otimes e_{i_d}$ where $i_j$ is defined as follows.  Each number $1,\dots,n$ appears in $T$ exactly once.  The number $i_j$ is the entry of $U$ in the same location as the number $j$ in $T$; pictorially:
\newcommand{\ij}{\ensuremath{i_j}}
$$
\begin{array}{cc}
 U & T \\
\begin{Young}
 & & & \cr
 & $i_j$ & \cr
 \cr
 \cr
\end{Young}
 & 
\begin{Young}
 & & & \cr
 & $j$ & \cr
 \cr
 \cr
\end{Young}
\end{array}
$$
  Then:

\begin{thm} The set $\{e_{(U,T)} c_T\}$ is a basis for $V^{\otimes d} c_T$.  \end{thm}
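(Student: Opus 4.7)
The plan is to combine a dimension count with a straightening argument. First, by Theorem~\ref{tembedweyl}, $V^{\otimes d}c_T \cong V_\lambda$, and by part~(2) of Theorem~\ref{tweyl} the Deruyts basis shows that $\dim V_\lambda$ equals the number of semistandard tableaux of shape $\lambda$ with entries in $\{1,\ldots,n\}$. This is exactly the cardinality of the proposed set $\{e_{(U,T)} c_T : U \text{ semistandard of shape } \lambda\}$. Hence it suffices to show that these vectors span $V^{\otimes d} c_T$; equality of cardinality with the dimension then forces linear independence as well.

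For spanning, observe that the tensor monomials $e_{i_1} \otimes \cdots \otimes e_{i_d}$ form a basis of $V^{\otimes d}$, and each such monomial equals $e_{(W,T)}$ where $W$ is the (possibly non-semistandard) numbering of $\lambda$ obtained by placing $i_j$ in the box of $T$ that contains $j$. Applying $c_T$ on the right, the set $\{e_{(W,T)} c_T : W \text{ an arbitrary numbering of } \lambda \text{ from } \{1,\ldots,n\}\}$ spans $V^{\otimes d}c_T$. It remains to reduce each such vector to a linear combination of $e_{(U,T)} c_T$ with $U$ semistandard; this is the \emph{straightening} step.

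Straightening splits into two parts. Since $c_T = a_T b_T$ with $b_T = \sum_{q \in C(T)} \mathrm{sign}(q)\, q$ alternating along the columns of $T$, right-multiplication by $c_T$ antisymmetrizes over the indices placed in each column of $W$: a column of $W$ with a repeated entry forces $e_{(W,T)} c_T = 0$, and a column with distinct but unsorted entries may be sorted at the cost of an overall sign. Thus we may assume that the columns of $W$ are strictly increasing. If the rows of $W$ then fail to be weakly increasing, one invokes the \emph{Garnir relations}---identities in $\C[S_d]$ reflecting the fact that any antisymmetrization over a set of indices larger than the height of some column of $\lambda$ must vanish---to rewrite $e_{(W,T)} c_T$ as a signed sum of $e_{(W',T)} c_T$ for numberings $W'$ that are strictly smaller than $W$ in the lexicographic order on column reading words. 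Induction on this order terminates at semistandard tableaux, establishing spanning.

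The main obstacle is purely combinatorial: writing down the Garnir relations inside $\C[S_d]$ and checking that the straightening procedure terminates. Both steps are standard (see \cite{YT,FulH}) but require careful bookkeeping of signs and a well-chosen partial order on numberings of $\lambda$; all of the representation-theoretic content of the statement is already captured by the dimension identity furnished by Theorems~\ref{tembedweyl} and~\ref{tweyl}(2) in the first paragraph.
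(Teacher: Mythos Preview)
The paper does not actually prove this theorem; it is stated without proof, with implicit reference to \cite{FulH,YT}. Your overall architecture---dimension count via Theorems~\ref{tembedweyl} and~\ref{tweyl}(2), then spanning by straightening---is the standard one from those references and is sound in outline.

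However, your straightening step contains a genuine error stemming from the paper's convention $c_T = a_T b_T$. You claim that a repeated entry in a column of $W$ forces $e_{(W,T)} c_T = 0$, and that distinct unsorted column entries can be sorted at the cost of a sign. Both assertions would follow from $q\, c_T = \mathrm{sign}(q)\, c_T$ for $q \in C(T)$ acting on the \emph{left}, but under right multiplication the row symmetrizer $a_T$ acts first, and $q\, a_T \neq \pm a_T$ in general. Concretely, for $\lambda=(2,1)$, $T=\young(12,3)$, and $W=\young(12,1)$ (a repeated $1$ in column~$1$), one computes $e_{(W,T)} c_T = e_2 \otimes e_1 \otimes e_1 - e_1 \otimes e_1 \otimes e_2 \neq 0$.

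The fix is to reverse the order of the two reductions. What \emph{does} hold is $p\, c_T = c_T$ for every $p \in R(T)$, since $p\, a_T = a_T$; hence $e_{(W,T)} c_T = e_{(W',T)} c_T$ whenever $W'$ is obtained from $W$ by permuting entries within rows. So one first sorts the \emph{rows} of $W$ to be weakly increasing at no cost. The remaining reduction---forcing columns to be strictly increasing---is then carried out by the Garnir relations exactly as you describe, and your termination argument via a column-reading lexicographic order goes through. With this correction the proof is complete.
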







\chapter{Deciding nonvanishing of Littlewood-Richardson coefficients}
\begin{center} {\Large  Scribe: Hariharan Narayanan} \end{center}

{\bf Goal:} \, To show that nonvanishing of 
Littlewood-Richardson coefficients can be decided in polynomial time.
\\\\

\noindent {\em References:} \cite{DM2,GCT3,honey}

\section{Littlewood-Richardson coefficients}
First we  define Littlewood-Richardson coefficients,
which  are basic quantities
encountered in representation theory. 
Recall that the irreducible representations $V_\l$ of $GL_n(\C)$, the Weyl
modules, 
 are indexed by partitions $\l$, and:

\begin{thm}[Weyl]
Every finite dimensional representation of $GL_n(\C)$ is completely
reducible.
\end{thm}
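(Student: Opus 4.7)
The plan is to carry out in full the Weyl unitary trick that was merely sketched in Chapter 2. The strategy is to reduce complete reducibility for $GL_n(\C)$ to the already-proven fact that the compact group $U_n(\C)$ is reductive, and then transfer the decomposition from $U_n(\C)$ back to $GL_n(\C)$ via a Lie algebra argument.

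First I would let $V$ be a finite-dimensional representation of $GL_n(\C)$ with representation map $\rho:GL_n(\C) \to GL(V)$. Restriction along $U_n(\C) \hookrightarrow GL_n(\C)$ turns $V$ into a $U_n(\C)$-representation. Since $U_n(\C)$ is compact, the Haar-measure averaging argument given earlier in the excerpt produces a $U_n(\C)$-invariant Hermitian form $H$ on $V$, and taking orthogonal complements with respect to $H$ gives a decomposition $V = \bigoplus_i W_i$ into $U_n(\C)$-irreducible subspaces. It remains to show two things: (a) each $W_i$ is actually $GL_n(\C)$-invariant, and (b) each $W_i$ is also $GL_n(\C)$-irreducible.

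For (a), the key observation is that the real Lie algebra $\mathfrak{u}_n$ of $U_n(\C)$ (the skew-Hermitian matrices) and its multiplication by $i$ (the Hermitian matrices) together span $\mathfrak{gl}_n(\C)$ over $\R$; equivalently $\mathfrak{gl}_n(\C) = \mathfrak{u}_n \otimes_\R \C$. Differentiating the $U_n(\C)$-action one sees that a subspace $W \subseteq V$ is $U_n(\C)$-invariant iff it is $\mathfrak{u}_n$-invariant under the induced Lie algebra action $d\rho: \mathfrak{gl}_n(\C) \to \End(V)$. Because $d\rho$ is $\C$-linear, $\mathfrak{u}_n$-invariance immediately upgrades to $\mathfrak{gl}_n(\C)$-invariance. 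Finally, since $GL_n(\C)$ is connected and every element near the identity is of the form $\exp(X)$ with $X \in \mathfrak{gl}_n(\C)$, a $\mathfrak{gl}_n(\C)$-invariant subspace is $GL_n(\C)$-invariant. This shows (a). For (b), the same correspondence runs in reverse: a proper $GL_n(\C)$-invariant subspace of $W_i$ would be a proper $U_n(\C)$-invariant subspace, contradicting irreducibility of $W_i$ as a $U_n(\C)$-module.

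The hard part of the argument is the Lie-theoretic correspondence used in (a): identifying representations of the compact real form $U_n(\C)$ with holomorphic representations of its complexification $GL_n(\C)$, or equivalently showing that invariance under the real subgroup $U_n(\C)$ already forces invariance under all of $GL_n(\C)$. This requires the exponential map, connectedness of $GL_n(\C)$, and the fact that $\mathfrak{gl}_n(\C)$ is the complexification of $\mathfrak{u}_n$. Everything else, the averaging to produce $H$ and the splitting by orthogonal complements, is a direct repetition of the compact-group proof given earlier in the lecture.
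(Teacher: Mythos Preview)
Your proposal is correct and follows essentially the same approach as the paper. The paper's proof (in Chapter~2, labeled ``general idea'') restricts to $U_n(\C)$, decomposes into $U_n(\C)$-irreducibles, and then invokes ``Weyl's unitary trick uses Lie algebra'' to conclude that each summand is a $GL_n(\C)$-irreducible; you have simply unpacked that last clause by spelling out the complexification $\mathfrak{gl}_n(\C)=\mathfrak{u}_n\otimes_\R\C$ and the exponential/connectedness argument, which is exactly what the paper leaves implicit.

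One small caveat worth flagging: your claim that $d\rho$ is $\C$-linear presupposes that $\rho$ is holomorphic (or rational), not merely an abstract group homomorphism. This is the standing convention throughout the paper (cf.\ the remark after Theorem~\ref{tweyl} that all holomorphic representations are rational), so it is fine here, but in your write-up you should state the hypothesis explicitly rather than leave it tacit.
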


Let $G = GL_n(\C)$. Consider the diagonal embedding of $G
\hookrightarrow G \times G$. This is a group homomorphism. Any $G
\times G$ module, in particular, $V_\a \otimes V_\b$ can also be
viewed as a $G$ module via this homomorphism.
It then splits into irreducible
$G$-submodules:

\beq\label{split} V_\a \otimes V_\b = \oplus_\g c_{\a\b}^{\g}
V_\l.\eeq  Here $c_{\a\b}^{\g}$ is the multiplicity of $V_\g$ in
$V_\a \otimes V_\b$ and is known as the {\it Littlewood-Richardson}
coefficient.

The character of $V_\l$ is the Schur polynomial $S_\l$. Hence,
it follows from ~(\ref{split}) that the Schur polynomials satisfy
the following relation: \beq \label{eqschur1} 
S_\a S_\b = \oplus_\g c_{\a\b}^{\g}
S_\l.\eeq

\begin{thm}
 $c_{\a\b}^{\g}$ is in PSPACE.
\end{thm}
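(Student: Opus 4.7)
The plan is to exhibit $c_{\alpha\beta}^{\gamma}$ as the number of integer points of a rational polytope $P_{\alpha\beta\gamma}$ whose defining data have bit length polynomial in the input, and then use a brute enumeration argument to place the nonvanishing decision problem in PSPACE. Concretely, I would invoke the Knutson--Tao hive model (or, equivalently, the Berenstein--Zelevinsky triangles, or the classical Littlewood--Richardson skew-tableaux rule combined with a contingency-table compression): $c_{\alpha\beta}^{\gamma}$ equals the number of integer labellings $h$ of the vertices of the standard triangular grid of side $n$, satisfying the ``rhombus'' inequalities of the hive condition, and whose three boundary edge-differences are exactly the partitions $\alpha$, $\beta$, $\gamma$. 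Such a hive has $O(n^2)$ entries, each bounded by $\max(|\alpha|,|\beta|,|\gamma|)$, so each entry takes $O(\log |\gamma|)$ bits, and the total size of a hive is polynomial in the bit length of $(\alpha,\beta,\gamma)$.

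Next I would reduce the decision problem $c_{\alpha\beta}^{\gamma} > 0$ to deciding the existence of an integer point in this explicit polytope $P_{\alpha\beta\gamma}$. The boundary conditions are linear equalities with coefficients $\{0,\pm 1\}$ and right-hand-sides given by $\alpha_i,\beta_i,\gamma_i$; the hive rhombus conditions are $O(n^2)$ linear inequalities also of small coefficient complexity. Thus $P_{\alpha\beta\gamma}$ is described by polynomially many linear inequalities of polynomial bit length.

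To place the problem in PSPACE I would enumerate all candidate hives in lexicographic order, using a counter of polynomial length. For each candidate one can verify the hive inequalities and boundary conditions in polynomial time; if any candidate passes, accept, otherwise reject. Since each candidate occupies only polynomial space and we can reuse the space between candidates, the total space used is polynomial, giving the PSPACE bound. A minor technical point to address is that the entries of a valid hive can indeed be assumed bounded by $\max_i \gamma_i$ (and actually by the $\gamma_i$ themselves along the appropriate monotone path), so the enumeration range is legitimately polynomially bounded in bit length.

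The main obstacle is the appeal to the hive model (or an equivalent polytope model); once this is in hand the PSPACE bound follows essentially for free from enumeration, and in fact one expects a much sharper statement---using linear programming and the Knutson--Tao saturation theorem the same nonvanishing problem lies in $\cc{P}$, which is the stronger result pursued in the next section.
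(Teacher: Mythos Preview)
Your argument is correct, but it takes a different and considerably heavier route than the paper's. The paper's proof of the PSPACE bound uses only the identity $S_\alpha S_\beta = \sum_\gamma c_{\alpha\beta}^{\gamma} S_\gamma$ together with the explicit definition of Schur polynomials: one can compute the product $S_\alpha S_\beta$ term-by-term and extract the coefficient of any given monomial of $S_\gamma$, all in polynomial space, without ever invoking a combinatorial rule for $c_{\alpha\beta}^{\gamma}$. Your argument, by contrast, appeals to the hive model (equivalently the Littlewood--Richardson rule), which is precisely the tool the paper reserves for the \emph{next}, stronger theorem that $c_{\alpha\beta}^{\gamma}\in\#P$. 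In effect you have proved $\#P$ membership and then passed to PSPACE via $\#P\subseteq\mathrm{PSPACE}$, whereas the paper's point in separating the two theorems is that PSPACE follows from the character formula alone, and the genuinely nontrivial combinatorial content---a positive rule---is what buys the upgrade to $\#P$.

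A minor remark: you phrase the problem as deciding $c_{\alpha\beta}^{\gamma}>0$, but the statement as written concerns computing (or bounding) the value itself; your enumeration does handle this too, since the count of valid hives has polynomial bit length, but you should say so explicitly rather than restricting to nonvanishing.
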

{\em Proof:} This easily follows from
eq.(\ref{eqschur1}) and the definition of Schur polynomials. \qed

As a matter of fact, a stronger result holds:
\begin{thm}
$c_{\a\b}^{\g}$ is in \#P.
\end{thm}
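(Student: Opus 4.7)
The plan is to realize the Littlewood--Richardson coefficient $c_{\a\b}^{\g}$ as the number of lattice points of a polytope whose defining inequalities have polynomial bit length in $\bitlength{\a},\bitlength{\b},\bitlength{\g}$, and then to wrap this lattice-point count into a standard $\#P$ machine that guesses and verifies a single lattice point. This requires two ingredients: a combinatorial model with polynomial-size defining data, and a polynomial-bit bound on every coordinate of an integer point, so that a witness fits in polynomial space.

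First, I will invoke the hive model of Knutson--Tao~\cite{honey}. This gives a polytope $H(\a,\b,\g)\subset \R^N$ with $N = O(n^2)$, cut out by $O(n^2)$ linear ``rhombus'' concavity inequalities together with boundary equalities that fix the entries along the three sides of a triangular array to be the partial sums of $\a$, $\b$, and $\g$. The main theorem of \cite{honey} asserts
\[
c_{\a\b}^{\g} \;=\; \bigl|\,H(\a,\b,\g)\cap \Z^N\,\bigr|.
\]
The inequalities and equalities are explicit and have coefficients in $\{-1,0,1\}$, with right-hand sides that are partial sums of the input partitions; hence the full defining data has polynomial bit length.

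Second, I must bound each coordinate of an integer hive by a quantity of polynomial bit length so that a candidate hive is a polynomial-size witness. The boundary values are nonnegative partial sums of $\a$, $\b$ or $\g$, hence bounded above by $|\g|:=\sum_i \g_i$, whose bit length is at most $\poly(\bitlength{\g})$. The rhombus concavity inequalities then sandwich every interior coordinate between boundary values reachable by successive rhombi, so each of the $N$ coordinates lies in $[0,|\g|]$. Thus a candidate hive is an $N$-tuple of integers of polynomial bit length.

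Third, I will exhibit the $\#P$-machine. On input $(\a,\b,\g)$, it nondeterministically writes down $N$ nonnegative integers in the range $[0,|\g|]$, and then deterministically checks in polynomial time that (i) the boundary entries match the prescribed partial sums of $\a,\b,\g$, and (ii) every rhombus inequality holds. Each check is an evaluation of $O(n^2)$ linear inequalities on integers of polynomial bit length, hence polynomial time. The number of accepting branches equals $|H(\a,\b,\g)\cap\Z^N|=c_{\a\b}^{\g}$, proving membership in $\#P$.

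The only genuinely nontrivial step is the bit-length bound in the second paragraph. The classical Littlewood--Richardson rule counting skew semistandard tableaux of shape $\g/\a$ and content $\b$ does \emph{not} directly produce a $\#P$ certificate, because such a tableau has $|\b|$ cells and $|\b|$ may be exponential in $\bitlength{\b}$. The move to the hive polytope is exactly what replaces those exponentially many cells by a vector of $O(n^2)$ coordinates of polynomial bit length, and this is where the whole argument rests; once the bound is in hand, every remaining step is routine verification.
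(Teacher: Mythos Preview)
Your argument via the Knutson--Tao hive polytope is correct and complete: hives give an $O(n^2)$-dimensional polytope with $\{-1,0,1\}$ coefficients and right-hand sides of polynomial bit length, and the interior entries are indeed sandwiched in $[0,|\g|]$, so the $\#P$ machine is sound.

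The paper, however, takes a different and more elementary route. It uses the classical Littlewood--Richardson rule directly, contrary to your final paragraph. The point you raise --- that an LR skew tableau has $|\b|$ cells and so cannot be written down in polynomial space --- is dealt with by encoding the tableau not cell-by-cell but by the integers $r^i_j$ counting the number of $j$'s in row $i$. There are only $n^2$ such integers (where $n$ bounds the heights of $\a,\b,\g$), each bounded by $\max_i \g_i$, and the semistandardness, shape, content, and reverse-lattice-word conditions all translate into linear inequalities in the $r^i_j$ with $\{0,1\}$ coefficients. This gives a polytope $P^{\g}_{\a\b}$ of dimension $O(n^2)$ whose integer points are exactly the LR tableaux, and the $\#P$ argument is then identical to yours.

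So both proofs work by exhibiting a small polytope, but the paper's polytope comes straight from the LR rule via a compact encoding of tableaux, while yours imports the hive model. The paper's route avoids citing the hive theorem and keeps the constraint matrix combinatorial (all entries $0$ or $1$), which it later exploits for the strongly polynomial nonvanishing algorithm via Tardos. Your dismissal of the LR rule as ``not directly'' giving a $\#P$ certificate is thus too hasty: it does, once one records row multiplicities instead of individual cells.
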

Recall that 
 $\#P \subseteqq$ PSPACE.

\noindent {\bf Proof:}  This is an immediate consequence of the 
following Littlewood-Richardson
rule (formula) for  $c_{\a\b}^{\g}$.
To state it, we need a few definitions.

Given partitions $\gamma$ and $\alpha$, a skew Young diagram of 
shape $\gamma / \alpha$ is 
the difference between the Young diagrams for $\gamma$ and
$\alpha$, with their top-left corners aligned; cf. Figure~\ref{fig2}.
 A skew tableau of shape
$\gamma / \alpha$ is a numbering of the boxes in this diagram.
It is  called semi-standard (SST) if the
entries in each column are strictly increasing top to bottom and the
entries in each row are weakly increasing left to right; see
Figures \ref{fig1} and \ref{fig2}.
The row word row($T$) of  a skew-tableau $T$ is the sequence of
numbers obtained by reading  $T$  left to right, bottom to
top; e.g. row($T$) for Figure~\ref{fig2} is $13312211$.
It is called a reverse lattice  word, if when read right to left,
for each $i$, the number of $i$'s encountered at any point is at least 
the number of $i+1$'s encountered 
till that point; thus the row word for Figure~\ref{fig2} is a reverse
lattice word. We say that $T$ is an LR
tableau for given $\a, \b, \g$  of shape $\g / \a$ and content $\b$ if
\begin{enumerate}
\item $T$ is an  SST,
\item row($T$) is a reverse lattice word,
\item $T$ has shape $\g/\a$, and 
\item the content of $T$ is $\b$, \ie the number of $i$'s in $T$
is $\b_i$. 
\end{enumerate}

For example,  Figure~\ref{fig2} shows an LR tableau with 
$\a = (6, 3, 2)$, $\b =
(4, 2, 2)$ and $\g = (8, 6, 3, 2)$.

\vspace{.1in}

{\bf The Littlewood-Richardson rule} \cite{YT,FulH}:
$c_{\a\b}^{\g}$ is equal to  the
number of LR skew tableaux of shape $\g/\a$ and content $\b$.

\begin{figure}[tbp1]
    \[
        \young(1125,223,4)
    \]
     \caption{semi-standard Young tableau}
    \label{fig1}
\end{figure}
\begin{figure}[tbp2]
    \[
        \young(::::::11,:::122,::3,13)
    \]
     \caption{
     Littlewood--Richardson skew tableau}
    \label{fig2}
\end{figure}

\vspace{.1in}

\noindent {\em Remark:} It may be noticed that the Littlewood-Richardson
rule depends only on the partitions 
$\alpha,\beta$ and $\gamma$ and not on $n$, the rank
of $GL_n(\C)$ (as long as it is greater than or equal to the maximum
height of $\alpha,\beta$ or $\gamma$). 
For this reason, we can assume without loss of generalitity that 
$n$ is the maximum of the heights of $\alpha,\beta$  and $\gamma$,
as we shall henceforth.

Now we  express $c_{\a \b}^\g$ as the number of integer points
in some polytope $P_{\a\b}^\g$ using the Littlewood-Richardson rule:

\begin{lemma} \label{lemmaP}
There exists  a polytope $P = P_{\a\b}^\g$ of dimension polynomial in $n$
 such that the number of
integer points in it is $c_{\a\b}^\g$.
\end{lemma}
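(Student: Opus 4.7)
The plan is to construct $P_{\alpha\beta}^{\gamma}$ explicitly by coordinatizing LR skew tableaux: for $1 \leq i, j \leq n$, let $x_{ij}$ denote the number of entries equal to $j$ in row $i$ of a candidate tableau $T$. With this parameterization the shape and content conditions (items (3) and (4) in the Littlewood-Richardson rule) become
\[
\sum_{j=1}^{n} x_{ij} = \gamma_i - \alpha_i, \qquad \sum_{i=1}^{n} x_{ij} = \beta_j,
\]
together with $x_{ij} \geq 0$. Once each row is forced to list $x_{i1}$ copies of $1$, then $x_{i2}$ copies of $2$, etc., the weak increase along rows is automatic, so condition (1) reduces to column-strictness.

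Next I would translate column-strictness. The rightmost column of row $i$ occupied by an entry $\leq j$ is $\alpha_i + \sum_{k \leq j} x_{ik}$. Column-strictness between rows $i$ and $i+1$ is equivalent to demanding, for every threshold $j$, that the last position of an entry $\leq j$ in row $i+1$ lies weakly to the left of the last position of an entry $\leq j-1$ in row $i$:
\[
\alpha_{i+1} + \sum_{k \leq j} x_{i+1,k} \;\leq\; \alpha_i + \sum_{k \leq j-1} x_{i,k},
\]
which is again linear in the $x_{ij}$ (using $\alpha_{i+1} \leq \alpha_i$ to see this single inequality suffices).

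Third, I would handle the reverse-lattice-word condition (2). Reading $T$ right-to-left and top-to-bottom, $\mathrm{count}_j \geq \mathrm{count}_{j+1}$ can first fail only just after finishing all $(j{+}1)$'s in some row $r$: within a row the block of $(j{+}1)$'s is read contiguously and $\mathrm{count}_j$ does not change during it, while between rows no $j$ or $j{+}1$ is read until one enters the next $(j{+}1)$-block. At these critical prefixes the condition reads
\[
\sum_{i=1}^{r-1} x_{i,j} \;\geq\; \sum_{i=1}^{r} x_{i,j+1} \qquad (1 \leq r \leq n,\; 1 \leq j < n),
\]
another $O(n^2)$ linear inequalities. (Specializing $r=1$ correctly recovers the familiar fact that the top row of an LR tableau contains only $1$'s.)

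Assembling the four families, define $P_{\alpha\beta}^{\gamma} \subset \R^{n^2}$ as the polytope cut out by the equalities, the nonnegativities, the column-strictness inequalities, and the reverse-lattice-word inequalities above. By construction the integer points of $P_{\alpha\beta}^{\gamma}$ correspond bijectively to LR skew tableaux of shape $\gamma/\alpha$ and content $\beta$, so by the Littlewood-Richardson rule their number is $c_{\alpha\beta}^{\gamma}$, while the ambient dimension $n^2$ is polynomial in $n$. The main conceptual step is the reduction of the reverse-lattice-word condition to the finitely many ``end-of-$(j{+}1)$-block'' prefixes; everything else is a direct linear encoding of the combinatorial rule.
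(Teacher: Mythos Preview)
Your proof is correct and matches the paper's argument essentially line for line: the paper uses the same coordinates (writing $r^i_j$ for your $x_{ij}$), the same shape and content equalities, the same column-strictness inequalities $\alpha_{i+1}+\sum_{k\le j} r^{i+1}_k \le \alpha_i + \sum_{k<j} r^i_k$, and the same reverse-lattice-word inequalities $\sum_{i'\le i} r^{i'}_j \le \sum_{i'<i} r^{i'}_{j-1}$. The only cosmetic difference is that the paper also records the redundant constraints $r^i_j=0$ for $i<j$ explicitly, whereas you (correctly) observe that these already follow from your lattice-word inequalities at $r=1,2,\dots$ together with nonnegativity.
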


\noindent {\em Proof:} Let $r_{j}^i(T)$, $i\le n$, $j\le n$,
denote the number of $j$'s in the $i$-th row of $T$. 
If $T$ is an LR-tableau of shape $\g /\a$ with content $\b$ then these 
integers satisfy the following constraints:

\begin{enumerate}
\item Nonnegativity: $r^i_j \ge 0$.
\item Shape constraints: For $i \le n$,
\[  \alpha_i + \sum_j r^i_j = \gamma_i. \]
\item Content constraints: For $j\le n$:
\[ \sum_i r^i_j=\beta_j.\]
\item Tableau constraints:
\[ \alpha_{i+1}+\sum_{k \le j} r^{i+1}_k \le \alpha_i + \sum_{k' <  j} r_{k'}^i.\]
\item Reverse lattice word constraints:
$r^i_j =0$ for $i<j$, and for $i\le n$, $1<j\le n$:
\[ \sum_{i'\le i} r^{i'}_j \le \sum_{i' < i} r^{i'}_{j-1}.\]
\end{enumerate}

Let $P_{\a\b}^\g$ be the polytope defined by these constraints. Then
$c_{\a\b}^{\g}$ is the number of integer points in this polytope.
This proves the lemma.  $\hfill \Box$

The membership function of the polytope $P_{\a\b}^\g$ is clearly computable in 
time that is polynomial in the bitlengths of $\a,\b$ and $\g$. 
Hence $c_{\a\b}^{\g}$ belongs to $\#P$. This proves the theorem.
 $\hfill \Box$

The complexity-theoretic content of the Littlewood-Richardson rule
is that it puts a quantity, which is a priori only in PSPACE, in
$\#P$. 
We also have:
\begin{thm}[\cite{hari}]
$c_{\a\b}^{\g}$ is \#P-complete.
\end{thm}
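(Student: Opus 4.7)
The previous theorem already places $c_{\alpha\beta}^{\gamma}$ in $\#P$ via the Littlewood--Richardson rule and the polytope $P_{\alpha\beta}^\gamma$ of Lemma \ref{lemmaP}, so the remaining task is to exhibit $\#P$-hardness by a polynomial-time reduction from a known $\#P$-complete problem. My plan is to reduce from the problem of counting contingency tables (nonnegative integer matrices with prescribed row sums $r$ and column sums $c$), which is known to be $\#P$-complete by the work of Dyer--Kannan--Mount, passing through the intermediate problem of computing Kostka numbers $K_{\lambda\mu}$.

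First I would establish $\#P$-hardness of the Kostka number $K_{\lambda\mu}$, which counts semistandard Young tableaux of shape $\lambda$ and content $\mu$. The route is to encode a contingency table with row sums $r$ and column sums $c$ as an SSYT via a standard bijection: a contingency table $(m_{ij})$ corresponds to an SSYT built column-by-column where column $j$ contains $m_{ij}$ copies of $i$. By choosing $\lambda$ to be a wide enough rectangle minus a suitable shape, and $\mu$ encoding the column sums, one forces the count of SSYT to equal the number of contingency tables with the prescribed constraints (up to straightforward overhead), yielding $\#P$-hardness of Kostka numbers.

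Next I would reduce Kostka numbers to Littlewood--Richardson coefficients. The key observation is that when the skew shape $\gamma/\alpha$ is sufficiently ``separated'' into disjoint rows (so the reverse lattice word condition is automatically satisfied, or becomes equivalent to the column strictness already built into being an SST), the count of LR skew tableaux of shape $\gamma/\alpha$ and content $\beta$ collapses to the count of SSYT of a related straight shape with content $\beta$, i.e.\ a Kostka number. Concretely, given $\lambda$ and $\mu$, I would choose $\alpha$ to consist of long left-justifying rows that push the nontrivial portion of $\gamma$ into a staircase pattern, with $\gamma$ and $\beta$ determined so that the reverse lattice word condition is forced to hold for exactly the SSYT of shape $\lambda$ and content $\mu$. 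The parameters $\alpha, \beta, \gamma$ will have bit length polynomial in those of $\lambda, \mu$, so this is a polynomial-time reduction.

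The main obstacle is showing rigorously that this bijection preserves counts: one must verify that every LR tableau of the constructed shape corresponds uniquely to an SSYT counted by $K_{\lambda\mu}$, and vice versa. This requires checking that no ``spurious'' LR tableaux arise from the extra rows of $\alpha$, which will follow from the fact that the reverse lattice word condition, read right-to-left, is automatically satisfied on the extra rows because they are filled in the unique forced way, while on the $\lambda$-portion the LR condition becomes vacuous once the shape is sufficiently offset. Combining the reduction from contingency tables to Kostka numbers with this reduction from Kostka numbers to LR coefficients gives $\#P$-hardness of $c_{\alpha\beta}^{\gamma}$, and combined with the membership in $\#P$ proved above, yields the claimed $\#P$-completeness.
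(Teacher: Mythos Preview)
The paper does not prove this theorem at all; it merely states it and cites \cite{hari} (Narayanan's paper in \emph{J.\ Algebraic Combin.}). There is therefore no proof in the paper to compare your proposal against.

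That said, your outline does follow the strategy of the cited reference: membership in $\#P$ via the LR rule (already done in the preceding theorem of the present paper), then hardness via a chain of reductions contingency tables $\to$ Kostka numbers $\to$ LR coefficients. A few of your intermediate descriptions are imprecise. For the first reduction, the bijection you describe (reading off columns of a contingency table as SSYT columns) does not by itself land on a single Kostka number; Narayanan's actual construction builds specific partitions $\lambda,\mu$ from the row and column sums so that $K_{\lambda\mu}$ equals the number of $2 \times n$ contingency tables with the given marginals (the $2$-row case already being $\#P$-hard). For the second reduction, rather than arranging for the reverse lattice word condition to become vacuous, the cleaner route is the explicit identity expressing $K_{\lambda\mu}$ as a single LR coefficient $c_{\alpha\beta}^{\gamma}$ with $\alpha,\beta,\gamma$ of polynomial bit-length in $\lambda,\mu$; this is what \cite{hari} does. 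Your high-level plan is correct, but if you intend to write out the argument you should consult \cite{hari} for the precise constructions, since the details you sketch would not go through as stated.
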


Finally, the main complexity-theoretic result that we are interested in:

\begin{thm}[GCT3, Knutson-Tao, De Loera-McAllister]
The problem of deciding nonvanishing of $c_{\a\b}^{\g}$ is in $P$, \ie, it 
 can be solved  in time that is polynomial  in the bitlengths of $\a, \b$ 
and $\g$. In fact, it can solved in strongly polynomial time \cite{GCT3}.
\end{thm}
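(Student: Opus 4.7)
The plan is to reduce the nonvanishing problem to a linear programming feasibility question and then exploit the saturation theorem of Knutson--Tao.

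By Lemma~\ref{lemmaP}, $c_{\alpha\beta}^\gamma$ equals the number of lattice points in the polytope $P = P_{\alpha\beta}^\gamma \subseteq \mathbb{R}^{n^2}$, whose defining inequalities and equalities are explicit linear constraints in $\alpha,\beta,\gamma$ of total bitlength polynomial in the input. Thus $c_{\alpha\beta}^\gamma > 0$ iff $P \cap \mathbb{Z}^{n^2} \neq \emptyset$. In general, deciding whether a rational polytope contains a lattice point is NP-hard, so we cannot attack this directly. Instead, first I would reduce it to the \emph{emptiness} question for $P$ over $\mathbb{R}$, which is just LP feasibility and therefore lies in $P$ by the ellipsoid or interior point method.

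The crucial input that makes this reduction valid is the Knutson--Tao saturation theorem: for all positive integers $N$,
\[
c_{N\alpha,\,N\beta}^{N\gamma} > 0 \iff c_{\alpha\beta}^{\gamma} > 0.
\]
Scaling the LR constraints shows that $P_{N\alpha, N\beta}^{N\gamma} = N \cdot P_{\alpha\beta}^\gamma$, so the lattice points of $N \cdot P$ correspond to LR tableaux counted by $c_{N\alpha,N\beta}^{N\gamma}$. Hence saturation is equivalent to the statement that $P$ is nonempty (as a real polytope) iff $P$ contains a lattice point: any rational point in $P$ can be cleared of denominators by scaling, producing a lattice point of $N\cdot P$, and then saturation pulls this back to a lattice point of $P$. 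Therefore
\[
c_{\alpha\beta}^{\gamma} > 0 \iff P_{\alpha\beta}^{\gamma} \neq \emptyset,
\]
and the right-hand side is an LP feasibility question of polynomial size. This already puts the problem in $P$.

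The hard step, and the main obstacle, is the saturation theorem itself. I would invoke the Knutson--Tao proof via the honeycomb model \cite{honey}: one recasts LR tableaux as integer honeycombs with prescribed boundary $(\alpha,\beta,\gamma)$, shows that real honeycombs with rational boundary can be perturbed into a generic ``rigid'' configuration, and proves that rigid rational honeycombs are automatically integral. Scaling a real honeycomb with integer boundary by $N$ then yields an integer honeycomb, establishing saturation. Alternatively GCT3 gives an algebraic-geometric proof via semistability and geometric invariant theory, which fits more naturally in the GCT context.

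Finally, to upgrade from polynomial to \emph{strongly} polynomial time, I would follow GCT3: the number of constraints of $P$ is polynomial in $n$ and independent of the bitlengths of $\alpha,\beta,\gamma$, and the constraint matrix has small integer coefficients independent of the input. Tardos's theorem then provides a strongly polynomial LP feasibility algorithm for such combinatorial LPs, giving the strongly polynomial bound. Thus, modulo the deep saturation theorem, the algorithm is: write down $P_{\alpha\beta}^\gamma$ explicitly and test its feasibility by a (strongly polynomial) LP solver.
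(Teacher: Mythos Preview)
Your proposal is correct and follows essentially the same route as the paper: use Lemma~\ref{lemmaP} to express $c_{\alpha\beta}^\gamma$ as the number of lattice points in $P_{\alpha\beta}^\gamma$, invoke Knutson--Tao saturation to reduce lattice-point existence to real feasibility of $P$, and then apply Tardos's strongly polynomial algorithm since the constraint matrix has $0/1$ entries. The paper treats saturation purely as a black box (citing \cite{knutson}) rather than sketching the honeycomb argument, and it does not claim that GCT3 gives an independent proof of saturation---your aside to that effect is not supported by the paper---but this does not affect the correctness of your main argument.
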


Here, by a strongly polynomial time algorithm, we mean
that the number of arithmetic
steps $+, -, *, \leq, \dots$ in the algorithm 
is polynomial in the number of parts
of $\a, \b$ and $\g$ regardless of their bitlengths,
and the bit-length of each intermediate operand is
polynomial in the bitlengths of $\a, \b$ and $\g$.

\noindent {\bf Proof:} 
Let $P= P_{\a\b}^\g$ be the polytope as in Lemma~\ref{lemmaP}.
All vertices of $P$ have rational coefficients. Hence, for some
positive integer $q$, the scaled polytope $qP$ has an integer point.
It follows that, for this $q$, $c_{q\alpha,q\beta}^{q\gamma}$ is
positive. The saturation Theorem \cite{knutson} says that, in this
case, $c_{\alpha,\beta}^{\gamma}$ is positive. Hence, $P$ contains
an integer point. This implies:

\begin{lemma}
If $P \neq \emptyset$ then $c_{\a\b}^\g > 0$.
\end{lemma}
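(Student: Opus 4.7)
\medskip
\noindent\textbf{Proof proposal.} My plan is to reduce the lemma directly to the saturation theorem of Knutson and Tao cited just before the statement. The rough strategy, already sketched in the paragraph preceding the lemma, is: (i) the polytope $P = P_{\a\b}^\g$ has only rational vertices; (ii) clearing denominators at any vertex yields an integer point in the scaled polytope $qP$ for some positive integer $q$; (iii) by the Littlewood--Richardson rule that integer point witnesses the positivity of $c_{q\a, q\b}^{q\g}$; (iv) saturation then forces $c_{\a\b}^\g > 0$.

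More concretely, the first step is to pick any vertex $v$ of the non-empty polytope $P$. Such a vertex exists because $P$ is non-empty, defined by finitely many linear constraints, and bounded: the nonnegativity and shape constraints in Lemma~\ref{lemmaP} pin each coordinate $r^i_j$ to the interval $[0, \g_i - \alpha_i]$. A vertex is cut out by turning enough defining inequalities into equalities and solving the resulting linear system; since all coefficients and right-hand sides in the constraints are integers, $v \in \Q^N$. Let $q$ be a common denominator for the entries of $v$; then $qv \in \Z^N$.

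The second step is to verify the homogeneous scaling identity
\[
P_{q\a,\, q\b}^{q\g} \;=\; q \cdot P_{\a\b}^\g.
\]
This is a direct inspection of the five families of constraints in the proof of Lemma~\ref{lemmaP}: each one is linear and homogeneous under the simultaneous scaling $(\a,\b,\g,r)\mapsto (q\a,q\b,q\g,qr)$. In particular $qv$ is an integer point of $P_{q\a, q\b}^{q\g}$, so by the Littlewood--Richardson rule $c_{q\a,\,q\b}^{q\g} \ge 1 > 0$.

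The final step is to invoke the saturation theorem \cite{knutson}: for every positive integer $q$, positivity of $c_{q\a,\,q\b}^{q\g}$ implies positivity of $c_{\a\b}^\g$. This yields the conclusion. The only genuinely difficult ingredient is saturation itself, whose honeycomb-model proof is the deep input; everything else is elementary linear algebra together with the explicit form of the constraints defining $P_{\a\b}^\g$. The main obstacle is therefore entirely external to this lemma and is used here as a black box.
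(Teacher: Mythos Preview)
Your proof is correct and follows essentially the same route as the paper: the paper's argument, given in the paragraph immediately preceding the lemma, is exactly that the rational vertices of $P$ yield an integer point in some dilate $qP=P_{q\a,q\b}^{q\g}$, whence $c_{q\a,q\b}^{q\g}>0$, and then saturation \cite{knutson} gives $c_{\a\b}^{\g}>0$. You have simply fleshed out the details (boundedness of $P$, the homogeneity of the defining constraints under the scaling $(\a,\b,\g,r)\mapsto(q\a,q\b,q\g,qr)$) that the paper leaves implicit.
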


By this lemma,
to decide if $c_{\a\b}^\g > 0$, it suffices to test if $P$ is nonempty.
The polytope $P$ is given by $Ax \leq b$ where the entries of $A$ are
$0$ or $1$--such linear programs are called combinatorial.
Hence, this can be done in strongly polynomial time using Tardos' algorithm
\cite{lovasz} 
for combinatorial linear programming. 
This proves the theorem.
 $\hfill \Box$

The integer programming problem 
is NP-complete, in general. However,  linear programming works for
the specific 
integer programming problem here because of the saturation property
\cite{knutson}.

\noindent {\bf Problem}:
Find a genuinely combinatorial poly-time
algorithm for deciding non-vanishing of $c_{\a\b}^\g$.

\chapter{Littlewood-Richardson coefficients (cont)}
\begin{center}  {\Large Scribe: Paolo Codenotti} \end{center}

\noindent{\bf Goal:} We continue our study of Littlewood-Richardson 
coefficients and define Littlewood-Richardson coefficients 
for the orthogonal group $O_n(\C)$. 

\noindent {\em References:} \cite{FulH,YT} 

\subsection*{Recall}
Let us first recall some definitions and results from the last
class. Let $c_{\alpha, \beta}^\gamma$ denote the
Littlewood-Richardson coefficient for $GL_n(\C)$.
\begin{thm}[last class]
Non-vanishing of $c_{\alpha, \beta}^\gamma$ can be decided in
poly$(\bitlength{\alpha}, \bitlength{\beta}, \bitlength{\gamma})$ time,
 where $\langle \ \rangle$ denotes the bit length.
\end{thm}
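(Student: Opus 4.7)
The plan is to reduce the decision problem to a feasibility problem for a combinatorial linear program, exactly as in the previous chapter. First I would invoke the Littlewood--Richardson rule to identify $c_{\alpha,\beta}^{\gamma}$ with the number of LR skew tableaux of shape $\gamma/\alpha$ and content $\beta$, and then encode such tableaux by the nonnegative integer variables $r^i_j$ counting the occurrences of $j$ in row $i$. The shape, content, SST, and reverse-lattice-word conditions translate into linear inequalities with $0/1$ coefficients in these variables, defining a polytope $P = P_{\alpha,\beta}^{\gamma}$ of dimension polynomial in $n$ (which we may take to be the maximum of the heights of $\alpha,\beta,\gamma$) whose integer points are in bijection with LR tableaux.

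The crucial observation, and the step where the argument has real content, is the reduction from ``$c_{\alpha,\beta}^{\gamma}>0$'' to ``$P\neq\emptyset$''. One direction is trivial: an integer point of $P$ exhibits an LR tableau, hence a positive coefficient. For the converse, $P$ is a rational polytope, so if $P\neq\emptyset$ then some positive integer dilate $qP$ contains a lattice point, which means $c_{q\alpha,q\beta}^{q\gamma}>0$. Here I would apply the Knutson--Tao saturation theorem to conclude $c_{\alpha,\beta}^{\gamma}>0$. This gives the equivalence $c_{\alpha,\beta}^{\gamma}>0 \iff P\neq\emptyset$.

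Once the problem is reduced to deciding nonemptiness of $P$, I would invoke Tardos' strongly polynomial-time algorithm for combinatorial linear programming, noting that the constraint matrix $A$ defining $P$ as $Ax\leq b$ has all entries in $\{0,1,-1\}$ and that the number of variables and constraints is polynomial in $n$. This yields the stronger strongly polynomial bound asserted in \cite{GCT3}; in particular it yields a polynomial-time algorithm in the bit lengths $\bitlength{\alpha}, \bitlength{\beta}, \bitlength{\gamma}$.

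The main obstacle is really the invocation of the saturation theorem of Knutson--Tao: without it, nonemptiness of $P$ would only tell us that some dilate has a positive LR coefficient, not the original one, and the whole polynomial-time reduction would collapse. Everything else (the polytope construction from the LR rule, the polynomial bound on its dimension, and the application of Tardos' algorithm) is mechanical and follows the same template as the previous lecture; the real mathematical input is saturation, which we take as a black box.
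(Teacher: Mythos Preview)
Your proposal is correct and follows essentially the same argument as the paper: construct the polytope $P_{\alpha,\beta}^\gamma$ from the Littlewood--Richardson rule via the variables $r^i_j$, use rationality of $P$ together with the Knutson--Tao saturation theorem to reduce positivity of $c_{\alpha,\beta}^\gamma$ to nonemptiness of $P$, and then apply Tardos' strongly polynomial algorithm to the resulting combinatorial LP. You have also correctly identified saturation as the one nontrivial input.
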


The positivity hypotheses which hold here are:
\begin{itemize}
\item
$c_{\alpha, \beta}^\gamma \in \# P$, and more strongly,
\item \textbf{Positivity Hypothesis 1 (PH1):}
There exists a  polytope $P_{\alpha, \beta}^\gamma$ of dimension polynomial
in the heights of $\alpha,\beta$ and $\gamma$  such that  $c_{\alpha,
\beta}^\gamma=\varphi(P_{\alpha, \beta}^\gamma)$, where $\varphi$ indicates the number of integer points.
\item \textbf{Saturation Hypothesis (SH):} If
$c_{k\alpha, k\beta}^{k\gamma}\neq 0$ for some $k\geq 1$, then $c_{\alpha, \beta}^{\gamma}\neq 0$ [Saturation
Theorem].
\end{itemize}

\begin{proof}(of theorem)

PH$1$ + SH + Linear programming.
\end{proof}
This is the general form of algorithms in GCT. The main principle is that linear programming works for 
integer programming  when PH1 and SH hold.

\section{The stretching function}

We define $\widetilde{c}^\gamma_{\alpha, \beta} (k) = c_{k\alpha, k\beta}^{k\gamma}$.

\begin{thm}[Kirillov, Derkesen Weyman \cite{Der, Ki}]\label{thm:lrc}
$\widetilde{c}^\gamma_{\alpha, \beta} (k)$ is a polynomial in $k$.
\end{thm}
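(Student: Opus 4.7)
The plan is to combine the polytope realization from Lemma 6.1 of the previous chapter with Ehrhart theory for rational polytopes.

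First I would observe that every constraint defining $P_{\alpha,\beta}^{\gamma}$ (nonnegativity of $r^i_j$, the shape constraints $\alpha_i + \sum_j r^i_j = \gamma_i$, the content constraints $\sum_i r^i_j = \beta_j$, the tableau constraints, and the reverse lattice word constraints) is linear in the coordinates $r^i_j$ with right-hand side depending linearly and homogeneously on the parameters $(\alpha,\beta,\gamma)$. Substituting $(k\alpha, k\beta, k\gamma)$ therefore transforms the polytope by the rescaling $r^i_j \mapsto k r^i_j$, giving the identification
\[
P_{k\alpha, k\beta}^{k\gamma} \;=\; k \cdot P_{\alpha,\beta}^{\gamma}.
\]
Consequently $\widetilde{c}^{\gamma}_{\alpha,\beta}(k)$ is precisely the Ehrhart counting function of the rational polytope $P = P_{\alpha,\beta}^{\gamma}$, i.e.\ the number of lattice points in the $k$-th dilate $kP$.

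Next I would invoke Ehrhart's theorem, which asserts that for any rational polytope $P \subset \mathbb{R}^N$ of dimension $d$, the function $k \mapsto |kP \cap \mathbb{Z}^N|$ is a quasi-polynomial in $k$ of degree $d$; that is, a finite sum $\sum_{i=0}^{d} c_i(k) k^i$ in which each $c_i(k)$ is a periodic function of $k$ whose period divides the least common multiple of the denominators appearing in the vertex coordinates of $P$. This already gives the structural form of $\widetilde{c}^{\gamma}_{\alpha,\beta}(k)$ and in particular its degree bound.

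The remaining (and main) task is to upgrade this quasi-polynomial to an honest polynomial, which amounts to showing that the coefficient functions $c_i$ are constant. A clean sufficient condition is that $P_{\alpha,\beta}^{\gamma}$ is an integral polytope, i.e., all of its vertices have integer coordinates, since then Ehrhart's theorem directly yields polynomiality. I expect this to be the hard step: the system of inequalities in Lemma 6.1 is not obviously totally unimodular, so integrality of vertices has to be established by combinatorial analysis, either directly or by passing to an equivalent polytope (the hive polytope of Knutson--Tao, or the Berenstein--Zelevinsky triangles) in which the integrality of vertices is known, together with a lattice-preserving bijection between the two models. An alternative route, avoiding integrality of vertices, would be to use the saturation property together with Derksen--Weyman's quiver-representation interpretation of $c^{\gamma}_{\alpha,\beta}$ to constrain the periods of the $c_i$'s to be $1$; but the cleanest plan is the Ehrhart-plus-integrality route outlined above.
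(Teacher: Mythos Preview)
Your first two steps---identifying $P_{k\alpha,k\beta}^{k\gamma}$ with the $k$-th dilate of $P_{\alpha,\beta}^{\gamma}$ and invoking Ehrhart's theorem---match exactly what the paper does. But note that the paper does \emph{not} prove Theorem~\ref{thm:lrc}; it explicitly says ``Here we prove a weaker result,'' establishes only quasipolynomiality via Ehrhart, and then states that the full polynomiality ``does not follow from Ehrhart's result. Its proof needs representation theory,'' deferring to the cited Derksen--Weyman and Kirillov references. So there is no in-paper proof of the full statement to compare against.

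Regarding your proposed upgrade: your ``cleanest plan''---proving that $P_{\alpha,\beta}^{\gamma}$ is an integral polytope---will not work, and the paper says so directly: ``However, $P_{\alpha,\beta}^{\gamma}$ need not be integral.'' (See also the later remark in Chapter~\ref{cgct5} citing De Loera--McAllister on non-integral vertices of the closely related Gelfand--Tsetlin polytopes.) So this is a genuine obstruction, not just a technical nuisance to be checked. Your instinct to fall back on the Derksen--Weyman quiver semi-invariant interpretation is the right one: that is precisely the representation-theoretic input the paper alludes to, and it is how the theorem is actually proved in \cite{Der}. The hive/BZ route you mention is also viable in principle, but you would still need a nontrivial argument (not just a lattice bijection with $P_{\alpha,\beta}^{\gamma}$, since a lattice-preserving affine isomorphism would transport the non-integrality). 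In short: your outline correctly reproduces the paper's quasipolynomiality argument, correctly identifies where the real difficulty lies, but your preferred route through vertex integrality of $P_{\alpha,\beta}^{\gamma}$ is blocked.
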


Here we  prove a weaker result. For its statement,
we will quickly review the theory of Ehrhart
quasipolynomials (cf. Stanley \cite{Sta}).

\begin{defi}(\textbf{Quasipolynomial})
A function $f(k)$ is called 
a \emph{quasipolynomial} if
 there exist polynomials $f_i$, $1\leq i\leq \ell$, for some $\ell$
 such that 
\[f(k)=f_i(k)\ \textrm{if}\ k\equiv i\ \textrm{mod}\ \ell.\]
We  denote such a quasipolynomial $f$ by $f=(f_i)$. Here
$\ell$ is called the period of $f(k)$ (we can assume it is the
smallest such period). The degree of a quasipolynomial $f$ is the
max of the degrees of the $f_i$'s.
\end{defi}

Now let $P\subseteq \R ^m$ be a polytope given by $Ax\leq b$. Let $\varphi(P)$ be the number of integer points
inside $P$. We define the stretching function $f_P(k)=\varphi(kP)$, where $kP$ is the dilated polytope defined
by $Ax\leq kb$.

\begin{thm}(Ehrhart) The stretching function 
$f_P(k)$ is a quasipolynomial. 
Furthermore, $f_P(k)$ is a polynomial if $P$ is an
integral polytope (i.e. all vertices of $P$ 
are integral).
\end{thm}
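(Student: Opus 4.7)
The plan is to reduce to the case of a simplex by triangulation, then count lattice points in the rational cone over that simplex via its fundamental parallelepiped.

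First I would triangulate $P$ using only its vertices, so every simplex in the triangulation has rational vertices. Writing $P$ as the disjoint union of relative interiors of the (closed) simplices $\Delta$ appearing in this triangulation (of all dimensions), we get
\[
f_P(k) \;=\; \sum_\Delta \varphi\bigl(k\,\mathrm{relint}(\Delta)\bigr).
\]
By inclusion--exclusion on the face poset, each $\varphi(k\,\mathrm{relint}(\Delta))$ is an alternating sum of $\varphi(kF)$ over closed faces $F$ of $\Delta$. Since the sum of quasipolynomials is a quasipolynomial (with period the least common multiple of the pieces), it is enough to prove the theorem for a single rational simplex $\Delta$ with vertices $v_0, \ldots, v_m$.

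For the simplex case, consider the cone
\[
C \;=\; \Bigl\{\sum_{i=0}^{m} \lambda_i (v_i, 1) : \lambda_i \geq 0 \Bigr\} \;\subseteq\; \R^{d+1},
\]
where $P \subseteq \R^d$. Integer points of $k\Delta$ correspond bijectively to integer points of $C$ whose last coordinate equals $k$. Choose the smallest positive integer $N$ with $N v_i \in \Z^d$ for all $i$. Then $u_i := (N v_i, N) \in \Z^{d+1}$ generate the same cone $C$, are linearly independent, and have fundamental parallelepiped
\[
\Pi \;:=\; \Bigl\{\sum_i \lambda_i u_i : 0 \leq \lambda_i < 1\Bigr\},
\]
which is bounded, so $\Pi \cap \Z^{d+1} = \{\pi_1, \ldots, \pi_s\}$ is finite. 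Linear independence of the $u_i$ gives every lattice point $p \in C \cap \Z^{d+1}$ a unique expression $p = \pi_j + \sum_i n_i u_i$ with $n_i \in \Z_{\geq 0}$.

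The counting step is then clean: if $\pi_j$ has last coordinate $h_j \in \Z_{\geq 0}$, then $p$ has last coordinate $k$ iff $h_j + N \sum_i n_i = k$, which has $\binom{(k-h_j)/N + m}{m}$ nonnegative integer solutions when $k \equiv h_j \pmod{N}$, and none otherwise. Summing over $j$ expresses $f_\Delta(k)$ as a quasipolynomial in $k$ of degree $m$ with period dividing $N$. Reassembling the triangulation shows $f_P(k)$ is a quasipolynomial. When $P$ is integral we may take $N = 1$, the congruences become vacuous, every $\pi_j$ contributes the same binomial formula, and $f_P(k)$ becomes a genuine polynomial. The main obstacle I anticipate is the inclusion--exclusion bookkeeping needed to pass from closed simplices to the open-face decomposition of $P$ and back; this is formal but slightly delicate because lower-dimensional faces may introduce their own periods that must be reconciled into a single period for $f_P$.
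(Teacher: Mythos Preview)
The paper does not prove this theorem at all; it is quoted as a classical result of Ehrhart (with an implicit pointer to Stanley \cite{Sta}) and then used as a black box. So there is no ``paper's own proof'' to compare against.

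Your outline is a correct and standard route to Ehrhart's theorem: triangulate with no new vertices, reduce by inclusion--exclusion to closed rational simplices, lift each simplex to a simplicial cone at height $1$, tile the cone by integer translates of the half-open fundamental parallelepiped, and read off the generating function stratified by residue class of $k$ modulo $N$. One small imprecision: in the integral case $N=1$ the $\pi_j$ do \emph{not} all contribute the same binomial, since different $\pi_j$ have different heights $h_j\in\{0,\ldots,m\}$; rather, each contributes the polynomial $\binom{k-h_j+m}{m}$, and it is the fact that these are genuine polynomials (no congruence condition) that makes the sum a polynomial. The inclusion--exclusion bookkeeping you flag is indeed formal: one clean way to organize it is to prove the closed-simplex count is a quasipolynomial first, then recover the open-face counts by M\"obius inversion on the face lattice of each simplex, so that the periods automatically divide the lcm of the denominators of the vertices.
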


In view of this result, $f_P(k)$ is called the Ehrhart quasi-polynomial of $P$.
Now  $\widetilde{c}^\gamma_{\alpha, \beta} (k)$ 
is just the Ehrhart quasipolynomial of $P^\gamma_{\alpha,
\beta}$, and 
 $c^\gamma_{\alpha, \beta}=\varphi(P_{\alpha, \beta}^\gamma)$, the number of
integer points in $P_{\alpha,\beta}^\gamma$.
Moreover  $P_{\alpha,
\beta}^\gamma$ is defined by the inequality $Ax\leq b$, where $A$ is constant, and $b$ is a homogeneous linear
form in the coefficients of $\alpha$, $\beta$, and $\gamma$.

However, $P_{\alpha, \beta}^\gamma$ need not be integral. Therefore
Theorem (\ref{thm:lrc}) does not follow
from Ehrhart's result. Its proof needs representation theory.

\begin{defi}
A quasipolynomial $f(k)$ is said to be \emph{positive} if all the coefficients of $f_i(k)$ are nonnegative. In
particular, if $f(k)$ is a polynomial, then it's positive if all its coefficients are nonnegative.
\end{defi}

The Ehrhart quasipolynomial of a polytope is  positive
only in exceptional cases. In this context: 

\noindent {\bf PH$2$}  (positivity hypothesis 
$2$) \cite{KTT}: The polynomial  $\widetilde{c}^\gamma_{\alpha, \beta}(k)$ is
positive.

There is considerable computer evidence for this.

\begin{prop}
PH$2$ implies SH.
\end{prop}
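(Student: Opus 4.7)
The plan is to exploit the fact that PH2 asserts $\widetilde{c}^{\gamma}_{\alpha,\beta}(k)$ is a polynomial (by Theorem \ref{thm:lrc}) all of whose coefficients are nonnegative, and that evaluation at $k=1$ of such a polynomial is simply the sum of its coefficients. The Saturation Hypothesis is a statement about integer values of this polynomial, so translating the positivity of one coefficient into the positivity of the sum gives exactly what SH asks for.

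More concretely, first I would write
\[
\widetilde{c}^{\gamma}_{\alpha,\beta}(k) \;=\; \sum_{i=0}^{d} a_i\, k^i,
\]
where by PH2 each $a_i \ge 0$. Assume the hypothesis of SH, namely that there exists some integer $k_0 \ge 1$ with $c^{k_0\gamma}_{k_0\alpha,k_0\beta} = \widetilde{c}^{\gamma}_{\alpha,\beta}(k_0) > 0$. Since $k_0^i \ge 0$ and all coefficients are nonnegative, the value $\sum_i a_i k_0^i$ can be positive only if at least one coefficient $a_j$ is strictly positive.

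Next I would evaluate at $k=1$:
\[
c^{\gamma}_{\alpha,\beta} \;=\; \widetilde{c}^{\gamma}_{\alpha,\beta}(1) \;=\; \sum_{i=0}^{d} a_i \;\ge\; a_j \;>\; 0,
\]
which is precisely the conclusion of SH. So PH2 implies SH.

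There is really no obstacle here once PH2 and Theorem \ref{thm:lrc} are granted; the entire content is the elementary observation that a polynomial with nonnegative coefficients that is positive somewhere on $\mathbb{Z}_{\ge 1}$ must have its sum of coefficients (i.e.\ its value at $1$) positive. The actual difficulty, of course, is hidden inside PH2 itself, whose proof is not attempted here and for which the cited computer evidence is the current justification.
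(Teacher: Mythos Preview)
Your proof is correct and is essentially identical to the paper's own argument: write $\widetilde{c}^\gamma_{\alpha,\beta}(k)=\sum_i a_i k^i$ with all $a_i\ge 0$ by PH2, observe that nonvanishing at some $k\ge 1$ forces some $a_j>0$, and conclude $c^\gamma_{\alpha,\beta}=\widetilde{c}^\gamma_{\alpha,\beta}(1)=\sum_i a_i>0$.
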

\begin{proof} Look at:
\[c(k)=\widetilde{c}^\gamma_{\alpha, \beta}(k)=\sum a_i k^i.\]
If all the coefficients $a_i$ are nonnegative (by PH$2$), and $c(k)\neq 0$, then $c(1)\neq 0$.
\end{proof}

SH has a proof involving algebraic geometry \cite{Bl}. Therefore we suspect 
that the stronger  PH$2$ is a deep phenomenon related to 
algebraic geometry.

\section{$O_n(\C)$}
So far we have  talked about $GL_n(\C)$. Now we move on 
to the orthogonal group  $O_n(\C)$.
Fix $Q$, a symmetric bilinear form on $C^n$; for example,
$Q(V, W)= V^T W$. 

\begin{defi}
The orthogonal group $O_n(\C)\subseteq GL_n(\C)$ is the group
consisting of all $A\in GL_n(\C)$ s.t.  $Q(AV, AW) = Q(V, W)$
for all $V$ and $W\in \C^n$.
The subgroup $SO_n(\C)\subseteq SL_n(\C)$, where $SL_n(\C)$ is the set of matrices with determinant $1$, is defined similarly.
\end{defi}

\begin{thm}[Weyl]
The group $O_n(\C)$ is reductive
\end{thm}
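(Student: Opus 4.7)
The plan is to mirror Weyl's unitary trick exactly as it was applied to $GL_n(\C)$: exhibit a compact real subgroup of $O_n(\C)$, invoke the already-proved theorem that compact groups are reductive, and then transfer complete reducibility back to the complex group. The compact form in question is the real orthogonal group $O_n(\R) = \{A \in GL_n(\R) : A^T A = I\}$. It is cut out by polynomial equations inside $\R^{n^2}$, so it is closed; and every row of an orthogonal matrix has Euclidean norm $1$, so it is bounded. Hence $O_n(\R)$ is compact, and by the theorem of the previous chapter (proved via Haar-measure averaging of a Hermitian form), every finite-dimensional continuous representation of $O_n(\R)$ decomposes as a direct sum of irreducibles.

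Next, let $V$ be a finite-dimensional (rational) representation of $O_n(\C)$ and let $W \subseteq V$ be an $O_n(\C)$-invariant subspace. Restricting the $O_n(\C)$-action to $O_n(\R) \subseteq O_n(\C)$ makes $V$ an $O_n(\R)$-module with $W$ an $O_n(\R)$-submodule, so by compact reductivity there is an $O_n(\R)$-invariant complement $W^\perp$ with $V = W \oplus W^\perp$. First I would check that this $W^\perp$ is automatically $O_n(\C)$-invariant. The key point is Zariski density: the defining equations $A^T A = I$ for $O_n(\R)$ and $O_n(\C)$ are the same, and a standard argument shows that $O_n(\R)$ is Zariski-dense in $O_n(\C)$. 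For any $w \in W^\perp$ and any vector $\ell$ in the annihilator of $W^\perp$ in $V^*$, the function $g \mapsto \ell(g \cdot w)$ is a polynomial in the matrix entries of $g \in O_n(\C)$ vanishing on the dense set $O_n(\R)$, hence vanishing on all of $O_n(\C)$. Thus $W^\perp$ is $O_n(\C)$-invariant, and $V = W \oplus W^\perp$ is a decomposition into $O_n(\C)$-submodules.

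Iterating this splitting on the summands gives that every finite-dimensional $O_n(\C)$-module is a direct sum of irreducibles, which is the definition of reductivity. The main obstacle, and the only point where real content enters, is the Zariski-density assertion that $O_n(\R)$ is dense in $O_n(\C)$. The cleanest way I would argue it is to note that $O_n(\C)$ is a complex algebraic variety whose smooth locus is connected (of the same real dimension as $O_n(\R)$ doubled), and that its Lie algebra $\mathfrak{o}_n(\C)$ is the complexification $\mathfrak{o}_n(\R) \otimes_\R \C$ of the real Lie algebra of $O_n(\R)$; then exponentiating shows that $O_n(\R)$ meets every component of $O_n(\C)$ and generates a Zariski-dense subgroup by a dimension count. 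This is precisely the Lie-algebra ingredient of Weyl's unitary trick alluded to in the $GL_n(\C)$ proof, adapted to the orthogonal setting.
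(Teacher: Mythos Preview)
Your proposal is correct and follows exactly the approach the paper indicates: the paper's proof is a one-line pointer, ``similar to the reductivity of $GL_n(\C)$, based on Weyl's unitary trick,'' and you have simply written out that trick in the orthogonal case, using the compact real form $O_n(\R)=O_n(\C)\cap U_n(\C)$ in place of $U_n(\C)\subset GL_n(\C)$. One small wording slip: $O_n(\C)$ is smooth but has two components, not one---you already handle this correctly later when you note that $O_n(\R)$ meets both components, so just drop the phrase ``smooth locus is connected.''
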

\begin{proof}
The proof is similar to the reductivity of $GL_n(\C)$, based on Weyl's
unitary trick.
\end{proof}

The next step is to classify all irreducible polynomial
representations of $O_n(\C)$. Fix a partition $\lambda = 
(\lambda_1\geq \lambda_2\geq \dots)$ of length at most $n$.
 Let $|\lambda|= d = \sum \lambda_i$ be its size.
Let $V = \C^n$, 
$V^{\otimes d}= V \otimes \dots \otimes V\ d\ \textrm{times}$, and 
 embed the Weyl module
$V_\lambda$ of $GL_n(\C)$  in $V^{\otimes d}$ 
as per Theorem~\ref{tembedweyl}.
Define a contraction map
\[\varphi_{p,q}:V^{\otimes d}\rightarrow V^{\otimes(d-2)}\]
for $1\leq p\leq q \leq d$ by:

\[\varphi_{p,q}(v_{i_1}\otimes \dots \otimes v_{i_d}) = Q(v_{i_p},
v_{i_q})(v_{i_1}\otimes \dots \otimes \widehat{v_{i_p}}\otimes \dots
\otimes \widehat{v_{i_q}}\otimes \dots \otimes {v_{i_d}}),\] 
where $\widehat{v_{i_p}}$ means omit $v_{i_p}$.

It is  $O_n(\C)$-equivariant, i.e. the following diagram commutes: 
\[\begin{CD}
V^{\otimes d} @>\varphi_{p,q}>> V^{\otimes d-2}\\
@VV\sigma\in O_n(\C)V @VV\sigma\in O_n(\C)V\\
V^{\otimes d} @>\varphi_{p,q}>> V^{\otimes d-2}
\end{CD}\]

Let $$V^{[d]}=\bigcap_{pq} ker(\varphi_{p,q}).$$ Because the maps are equivariant, each kernel is an
$O_n(\C)$-module, and $V^{[d]}$ is an $O_n(\C)$-module.
Let $V_{[\lambda]}=V^{[d]}\bigcap V_\lambda$, where $V_\lambda
\subseteq V^{\otimes d}$ is the  embedded
 Weyl module as above. Then  $V_{[\lambda]}$ is an
$O_n(\C)$-module.

\begin{thm}[Weyl]

$V_{[\lambda]}$ is an irreducible representation of $O_n(\C)$. Moreover, the following two conditions hold:

\begin{enumerate}
\item If $n$ is odd, then $V_{[\lambda]}$ is non-zero if and only if the sum of the lengths of the first two
columns of $\lambda$ is $\leq n$ (see figure \ref{fig}).
\begin{figure}
    \begin{center}
      \psfragscanon
      \psfrag{L}{$\lambda$}
      \epsfig{file=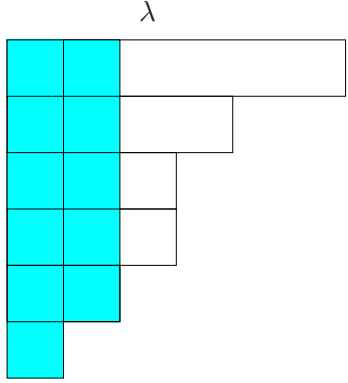, scale=.5}
      \caption{{\small The first two columns of the partition $\lambda$ are highlighted.}}
      \label{fig}
    \end{center}
\end{figure}
\item
If $n$ is odd, then each polynomial irreducible representation is isomorphic to $V_{[\lambda]}$ for some
$\lambda$.
\end{enumerate}
\end{thm}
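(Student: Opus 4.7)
The strategy is the classical Weyl approach: prove irreducibility of each $V_{[\lambda]}$ by locating a unique highest weight vector for an appropriate Borel subgroup of $O_n(\C)$, determine non-vanishing combinatorially, and finish the classification by matching highest weights with the dominant weights of polynomial $O_n(\C)$-representations.

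\textbf{Step 1 (Equivariant decomposition into traces and traceless parts).} First I would establish the $O_n(\C)$-equivariant splitting
\[
V^{\otimes d} \;=\; V^{[d]} \;\oplus\; \sum_{p<q} \psi_{p,q}\!\bigl(V^{\otimes (d-2)}\bigr),
\]
where $\psi_{p,q}: V^{\otimes(d-2)} \to V^{\otimes d}$ inserts the invariant tensor $\sum_k e_k \otimes e_k$ (dual to $Q$) at positions $p,q$. Iterating, one obtains an $O_n(\C)$-module filtration whose graded pieces are built from $V^{[d-2k]}$ with inserted copies of $Q$. This ensures $V^{[d]}$ is a genuine direct summand and lets one replace the study of $V_{[\lambda]} = V_\lambda \cap V^{[d]}$ with the study of the image of $V_\lambda$ under the $O_n(\C)$-equivariant projection $V^{\otimes d} \twoheadrightarrow V^{[d]}$.

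\textbf{Step 2 (Highest weight vector and irreducibility).} For $n=2m+1$, fix a maximal torus $T \subset O_n(\C)$ (in a basis where $Q$ takes the standard hyperbolic form $e_i \leftrightarrow e_{n+1-i}$, with $e_{m+1}$ fixed), and let $B \subset O_n(\C)$ be the Borel fixing the corresponding isotropic flag. Let $T_0$ be the canonical tableau of shape $\lambda$ and consider the vector $v_\lambda := \pi(e_{T_0}) \in V^{[d]}$, where $\pi$ is the projection of Step 1, and $e_{T_0}$ is the basis vector of $V_\lambda$ corresponding to the canonical bitableau from Theorem~\ref{tembedweyl}. I would verify that $v_\lambda$ is a $B$-eigenvector by a direct computation: the action of the unipotent radical of $B$ sends the canonical tensor into higher-weight tensors that all vanish under $\pi$, using that $\lambda_i$ appears in the $i$-th row of $T_0$. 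Then I would invoke the standard fact (for reductive groups) that a cyclic module generated by a $B$-eigenvector is irreducible provided its $B$-eigenspace is one-dimensional, which I would check by ordering weights lexicographically and ruling out other candidate highest weights in $V_{[\lambda]}$.

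\textbf{Step 3 (Non-vanishing condition).} The vector $v_\lambda$ is nonzero precisely when $e_{T_0}$ has a component in $V^{[d]}$. A contraction $\varphi_{p,q}$ is forced to vanish on $e_{T_0}$ unless the indices at positions $p,q$ are paired by $Q$; for the hyperbolic $Q$ chosen above, the pairing is $i \leftrightarrow n+1-i$. Because $T_0$ is strictly increasing down columns, a column of length $r$ uses indices $1,\dots,r$, and the two columns furnishing indices that could be $Q$-paired must jointly have length $\le n$. A careful combinatorial analysis of which $e_{T_0}$-components survive all $\varphi_{p,q}$ shows that survival is equivalent to $\mathrm{col}_1(\lambda) + \mathrm{col}_2(\lambda) \le n$. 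Conversely, if this inequality fails, an explicit $\varphi_{p,q}$ maps the whole module $V_\lambda$ to something nonzero outside $V^{[d]}$, forcing $V_{[\lambda]}=0$.

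\textbf{Step 4 (Classification for odd $n$).} Compute the weight of $v_\lambda$ under $T$; it is $(\lambda_1,\dots,\lambda_m)$ in the standard coordinates. By the general theory of highest weights for the reductive group $O_n(\C)$ (which I would quote), the dominant integral weights parameterizing polynomial irreducible representations are exactly the partitions satisfying the column-length condition of Step 3, so the assignment $\lambda \mapsto V_{[\lambda]}$ is a bijection onto the set of polynomial irreducibles. Equivalently, one can match dimensions of $V_{[\lambda]}$ (computable from Step 1 plus Weyl's branching/character formula) against Weyl's dimension formula for $O_n(\C)$.

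\textbf{Main obstacle.} The hardest step is the irreducibility argument in Step 2, specifically identifying the Borel-highest-weight vector inside $V^{[d]} \cap V_\lambda$ and verifying its uniqueness. The Weyl module $V_\lambda$ already has a highest weight vector for the $GL_n$-Borel, but the $O_n$-Borel is smaller yet its unipotent radical is different, so one must re-check invariance and carry out the combinatorics of how the contractions $\varphi_{p,q}$ interact with the basis of semistandard bitableaux. The odd-$n$ case sidesteps the spin/half-integer subtleties and avoids the degeneracy that arises when $\lambda$ has a column of length exactly $n/2$ in the even case, which is why the statement (and the plan) is clean here.
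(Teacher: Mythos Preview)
The paper does not actually prove this theorem: it is stated as a classical result of Weyl and immediately used, with the references \cite{FulH,YT} given at the start of the chapter as the place to find details. So there is no ``paper's own proof'' to compare against; what you have written is essentially the textbook argument (Weyl's construction as presented, e.g., in Fulton--Harris, Lecture~19), and the overall architecture---traceless decomposition, highest weight analysis, then matching with the dominant weights of type $B$---is the right one.

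Two points deserve tightening. First, in Step~2 your ``standard fact'' is not quite stated correctly: a cyclic module generated by a $B$-eigenvector need not be irreducible; what you want is the converse direction, that a finite-dimensional module for a connected reductive group is irreducible if and only if its highest-weight space is one-dimensional. You then need to note that for odd $n$ one has $O_n(\C)\cong SO_n(\C)\times\{\pm I\}$, so irreducibility over $SO_n$ (which is connected) suffices, with $-I$ acting by the scalar $(-1)^d$. Second, Step~3 is the place where the real work hides: the combinatorial claim that $v_\lambda\neq 0$ exactly when the first two column lengths sum to at most $n$ is not as immediate as you suggest, and the ``explicit $\varphi_{p,q}$'' in the converse direction does not by itself kill all of $V_\lambda\cap V^{[d]}$. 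The clean route is via characters (comparing the restriction of the $GL_n$ character $S_\lambda$ to $O_n$ with the Weyl character formula for type $B_m$), which is what the cited references do.
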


Let
\[V_{[\lambda]} \otimes V_{[\mu]} = \otimes_\gamma d_{\lambda, \mu}^\gamma V_{[\gamma]}\]

be the decomposition of $V_{[\lambda]}\otimes V_{[\mu]}$ into irreducibles.
Here $d_{\lambda, \mu}^\gamma$ is called the Littlewood-Richardson
coefficient of type B. The types of various connected 
reductive groups are defined as follows:
\begin{itemize}
\item $GL_n(\C)$: type A
\item $O_n(\C)$, $n$ odd: type B
\item $Sp_n(\C)$: type C
\item $O_n(\C)$, $n$ even: type D
\end{itemize}
The  Littlewood-Richardson coefficient can be defined for
any type in a similar fashion. 

\begin{thm}[Generalized Littlewood-Richardson rule]
The Littlewood-Richardson coefficient
$d_{\lambda, \mu}^\gamma \in \#P$. This also holds for any type.
\end{thm}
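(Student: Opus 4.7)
The plan is to mimic the type-$A$ proof: exhibit a combinatorial rule that expresses $d_{\lambda,\mu}^\gamma$ as a count of objects of polynomially bounded size with a polynomial-time verifiable certificate. The cleanest route reduces the computation to the type-$A$ Littlewood--Richardson coefficients $c_{\alpha\beta}^\sigma$, for which $\#P$-membership was already established in the previous chapter.

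First, I would invoke Littlewood's restriction rule to write the character of $V_{[\lambda]}$ (for $G$ of type $B$, $C$, or $D$) as a sum of Schur polynomials whose coefficients are themselves type-$A$ LR coefficients: concretely, $\mathrm{ch}(V_{[\lambda]}) = \sum_\sigma \bigl(\sum_\tau c_{\sigma\tau}^{\lambda}\bigr) S_\sigma$, where $\tau$ ranges over partitions with all columns (resp.\ all rows) of even length for types $B/D$ (resp.\ type $C$). Multiplying two such characters via the type-$A$ rule $S_\alpha S_\beta = \sum_\gamma c_{\alpha\beta}^\gamma S_\gamma$ and then re-projecting onto the irreducible characters of $G$ yields the Newell--Littlewood identity
\beq \label{NLform}
d_{\lambda,\mu}^{\gamma} \;=\; \sum_{\alpha,\beta,\delta} c_{\alpha\beta}^{\lambda}\, c_{\alpha\delta}^{\mu}\, c_{\beta\delta}^{\gamma},
\eeq
valid in the stable range, the sum running over triples of partitions constrained by $|\alpha|+|\beta|=|\lambda|$, $|\alpha|+|\delta|=|\mu|$, $|\beta|+|\delta|=|\gamma|$, so that the sizes $|\alpha|,|\beta|,|\delta|$ are uniquely determined by $|\lambda|,|\mu|,|\gamma|$ and each is at most $|\lambda|+|\mu|+|\gamma|$.

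Given (\ref{NLform}), $\#P$-membership follows at once. A witness is a tuple $(\alpha,\beta,\delta,T_1,T_2,T_3)$, where $T_1,T_2,T_3$ are LR skew tableaux of shapes $\lambda/\alpha$, $\mu/\alpha$, $\gamma/\beta$ and contents $\beta,\delta,\delta$ respectively. Each $T_i$ has bit-length polynomial in $\langle\lambda\rangle+\langle\mu\rangle+\langle\gamma\rangle$, and the verifier checks semistandardness, the shape, the content, and the reverse-lattice-word condition for each tableau in polynomial time, exactly as for type $A$.

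The main obstacle is the unstable range, where the rank $n$ is small relative to $|\lambda|+|\mu|+|\gamma|$: the right-hand side of (\ref{NLform}) then overcounts, because certain summand partitions violate the $O_n/Sp_n$ height condition, and the standard finite-$n$ modification rules (Koike--Terada, King, King--El Samra) a priori involve signed contributions that would spoil a raw $\#P$ witness. To remedy this, one replaces (\ref{NLform}) by a direct combinatorial rule in terms of symplectic or orthogonal Littlewood--Richardson tableaux --- Sundaram's rule for type $C$, and its analogues for types $B$ and $D$ --- which count semistandard $G$-tableaux of polynomially bounded size satisfying Yamanouchi-type conditions checkable in polynomial time. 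Verifying that these type-specific rules actually produce nonnegative counts (and agree with (\ref{NLform}) in the stable range) is the technical heart of the argument, but once established it delivers the desired $\#P$ upper bound uniformly in all classical types.
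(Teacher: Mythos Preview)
Your approach is correct for the classical types but takes a genuinely different route from the paper. The paper's one-line proof invokes Kashiwara's theory of crystal bases for quantum groups (see Chapter~\ref{ctowards}): for any simple Lie algebra, the tensor product $V_{q,\lambda}\otimes V_{q,\mu}$ carries a crystal basis $B_\lambda\otimes B_\mu$, and $d_{\lambda,\mu}^\gamma$ equals the number of elements $b\otimes b'\in B_\lambda\otimes B_\mu$ of weight $\gamma$ killed by every Kashiwara operator $\widetilde{e_i}$. This is a manifestly positive count whose witnesses are pairs of combinatorial tableaux (Kashiwara--Nakashima tableaux in the classical cases), with the shape, weight, and highest-weight conditions all verifiable in polynomial time; the argument is type-uniform and covers the exceptional series as well. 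Your route through the Newell--Littlewood identity and the Sundaram/Koike--Terada rules is more elementary in that it avoids the quantum-group machinery entirely and reduces everything to the type-$A$ rule already in hand, but the price is a case analysis in the unstable range and a restriction to the classical types $B,C,D$: your argument says nothing about $E_6,E_7,E_8,F_4,G_2$, where no such reduction to type $A$ is available and one is essentially forced back into the crystal-basis or Littelmann-path framework that the paper invokes from the start.
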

\begin{proof}
The most transparent proof of this theorem comes through the theory of
quantum groups \cite{Kas}; cf.  Chapter~\ref{ctowards}.
\end{proof}

As in type $A$ this leads to:
\begin{hyp}[PH$1$]
There exists a polytope $P_{\lambda, \mu}^\gamma$ of dimension
polynomial in the heights of $\lambda,\mu$ and $\gamma$  such that:
\begin{enumerate}
\item
$d_{\lambda, \mu}^\gamma=\varphi(P_{\lambda, 
\mu}^\gamma)$, the  number of integer points in 
$P_{\lambda, \mu}^\gamma$,   and
\item $\widetilde{d}_{\lambda,
  \mu}^\gamma(k)=d_{k\lambda, k\mu}^{k\gamma}$ is the Ehrhart 
quasipolynomial of $P_{\lambda, \mu}^\gamma$. 
\end{enumerate}
\end{hyp}

There are several
choices for such polytopes; e.g. the BZ-polytope \cite{BZ}. 

\begin{thm}[De Loera, McAllister \cite{DM2}]
The stretching function $\widetilde{d}_{\lambda, \mu}^\gamma(k)$ is a
quasipolynomial of degree at most $2$; so also for types $C$ and $D$.
\end{thm}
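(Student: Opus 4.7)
The plan is to realize $\widetilde{d}_{\lambda,\mu}^\gamma(k)$ as the Ehrhart quasi-polynomial of the polytope supplied by the type-B version of PH$1$, and then to control its structure by analyzing vertices. First, I would invoke PH$1$ (stated just above the theorem) to fix a polytope $P = P_{\lambda,\mu}^\gamma$, for concreteness the Berenstein-Zelevinsky polytope, such that $d_{\lambda,\mu}^\gamma = \varphi(P)$ and $\widetilde{d}_{\lambda,\mu}^\gamma(k) = \varphi(kP)$. Ehrhart's theorem, recalled earlier in the chapter, then immediately gives quasi-polynomiality of $\widetilde{d}_{\lambda,\mu}^\gamma(k)$; the finer information required for the theorem -- the bound by $2$ -- must come from additional structure of $P$.

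The crux is the claim that every vertex of $P$ is half-integral, i.e.\ has coordinates in $\tfrac{1}{2}\Z$. This is exactly where type B departs from type A: in type A the Knutson-Tao saturation theorem is equivalent to integrality of the vertices of $P_{\alpha,\beta}^\gamma$, forcing the stretching function to be a true polynomial; for type B saturation fails, and the vertices may acquire denominators of $2$, producing an honest quasi-polynomial whose period (equivalently the relevant complexity invariant in the notes' terminology) divides $2$. The natural route to half-integrality is via a folding/doubling correspondence: a partition $\lambda$ of type B lifts to a partition $\tilde\lambda$ of type A (intuitively by symmetrizing the weight via $\lambda \mapsto (\lambda_1,\ldots,\lambda_n,-\lambda_n,\ldots,-\lambda_1)$ or its partition-theoretic analogue), and $P_{\lambda,\mu}^\gamma$ is realized as the pullback, or a linear section, of the type-A polytope $P_{\tilde\lambda,\tilde\mu}^{\tilde\gamma}$ along a linear map whose matrix has entries in $\tfrac{1}{2}\Z$. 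Since the source polytope has integral vertices (saturation for type A), the target vertices lie in $\tfrac{1}{2}\Z^N$. Types C and D run in parallel using the standard foldings of $Sp_n(\C)$ and even-dimensional $SO_n(\C)$ into $GL_{2n}(\C)$.

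The principal obstacle is the explicit verification of the folding at the level of polytopes: PH$1$ only guarantees \emph{some} polytope, and different natural candidates (BZ, hive, Gelfand-Tsetlin) have different vertex sets, so one must select a realization whose facet description is manifestly compatible with the folding map and then check directly that the map has half-integer entries. A secondary obstacle is to show the bound $2$ cannot be improved to $1$, by exhibiting explicit triples $(\lambda,\mu,\gamma)$ for which the quasi-polynomial is genuinely non-polynomial -- such triples simultaneously furnish explicit counterexamples to saturation in type B, explaining why type B must behave strictly worse than type A but no worse than the half-integrality of the folding would allow.
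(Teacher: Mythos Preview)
The paper does not supply its own proof of this theorem; it is simply quoted from \cite{DM2} and used thereafter as input. So there is no paper-proof to compare against. Your proposed argument, however, has a genuine gap.

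The crux of your plan is the claim that the type-A source polytope has integral vertices, which you derive from the saturation theorem for type~A. This implication is false, and the notes say so explicitly a page earlier in the same chapter: ``However, $P_{\alpha,\beta}^\gamma$ need not be integral. Therefore Theorem~(\ref{thm:lrc}) does not follow from Ehrhart's result. Its proof needs representation theory.'' Saturation only asserts that if some dilate $kP$ contains a lattice point then $P$ already does; this is strictly weaker than integrality of the vertices. Indeed, the very authors you are citing show in \cite{loera} that type-A Gelfand--Tsetlin polytopes can have vertices with arbitrarily large denominators. So even granting a folding map with entries in $\tfrac{1}{2}\Z$, you have no control over the denominators on the type-A side to begin with, and the half-integrality conclusion for type~B does not follow. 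There is a second, independent problem: you propose to realize $P^\gamma_{\lambda,\mu}$ as a linear \emph{section} of a type-A polytope, but slicing an integral polytope by an affine subspace typically creates new vertices with uncontrolled denominators, so even genuine type-A vertex integrality would not transfer in the direction you need.

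The argument in \cite{DM2} does not route through type~A. It works directly with the Berenstein--Zelevinsky inequalities defining the type-B (resp.\ C,~D) tensor-product polytope and shows, by an explicit analysis of the constraint matrix, that every vertex has coordinates in $\tfrac{1}{2}\Z$. The period bound of~$2$ is then immediate from Ehrhart's theorem. The half-integrality is an intrinsic feature of the type-B combinatorics, not a shadow of anything in type~A.
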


A verbatim translation of the saturation property fails here
\cite{Z}): there exist
$\lambda, \mu$ and $\gamma$ such that $d_{2\lambda,
  2\mu}^{2\gamma} \neq 0$ but $d_{\lambda,  \mu}^\gamma=0$.
Therefore we  change the definition of saturation:

\begin{defi}
Given a quasipolynomial $f(k)=(f_i)$,  $index(f)$ is
the smallest $i$ such that $f_i(k)$ is not an
identically zero polynomial. If $f(k)$ is identically zero,  $index(f)=0$.
\end{defi}

\begin{defi}
A quasipolynomial $f(k)$ is \emph{saturated} if $f(index(f))\neq 0$.
In particular, if $index(f)=1$, then $f(k)$ is saturated if
$f(1)\neq 0$.
\end{defi}

A positive quasi-polynomial is clearly saturated.

\vspace{.1in}

\noindent \textbf{Positivity Hypothesis 2 (PH2)} \cite{DM2}: The stretching
  quasipolyomial $\widetilde{d}_{\lambda, \mu}^\gamma(k)$ is positive.

  \vspace{.1in}

There is
considerable evidence for this.

  \vspace{.1in}

\noindent \textbf{Saturation Hypothesis (SH)}: The stretching
quasipolynomial $\widetilde{d}_{\lambda, \mu}^\gamma(k)$ is saturated.

  \vspace{.1in}

PH2 implies SH.

\begin{thm} \cite{GCT5}
Assuming SH (or PH$2$),
positivity of the Littlewood-Richardson coefficient 
$d_{\lambda, \mu}^\gamma$ of type $B$  can be decided in
$poly(\bitlength{\lambda}, \bitlength{\mu}, \bitlength{\gamma})$ time.
\end{thm}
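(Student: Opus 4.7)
The plan is to emulate the type $A$ argument, using PH1 to encode the Littlewood--Richardson coefficient as lattice points of a polytope, then decide positivity via linear programming together with a saturation principle. First, construct the polytope $P = P_{\lambda,\mu}^\gamma$ guaranteed by PH1, of dimension polynomial in the input, with $d_{\lambda,\mu}^\gamma = \varphi(P)$ and $\widetilde{d}_{\lambda,\mu}^\gamma(k) = \varphi(kP)$ its Ehrhart quasipolynomial. As in type $A$, the defining inequalities $Ax \leq b$ should have $A$ constant with combinatorial entries and $b$ linear in $\lambda,\mu,\gamma$, so nonemptiness of $P$ can be tested in strongly polynomial time via Tardos' algorithm for combinatorial LP.

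If $P$ is empty, then $\widetilde{d} \equiv 0$ and in particular $d_{\lambda,\mu}^\gamma = 0$, so output zero. Otherwise $\widetilde{d} = (f_i)_{i=1}^{\ell}$ is a not-identically-zero quasipolynomial of period $\ell \leq 2$, by the De Loera--McAllister theorem that it has degree at most $2$. Under SH, $\widetilde{d}(\mathrm{index}(\widetilde{d})) \neq 0$; combined with $d_{\lambda,\mu}^\gamma = \widetilde{d}(1) = f_1(1)$ this yields the key equivalence
\[
d_{\lambda,\mu}^\gamma \neq 0 \iff f_1 \not\equiv 0 \iff \mathrm{index}(\widetilde{d}) = 1.
\]
So the algorithmic task reduces to deciding whether $f_1 \not\equiv 0$, i.e.\ whether there exists some odd $k \geq 1$ with $kP \cap \Z^m \neq \emptyset$.

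For this, compute a rational vertex $v$ of $P$ by LP; its coordinates' common denominator $q$ has polynomial bit-length, and $qv \in qP \cap \Z^m$, so $\widetilde{d}(q) > 0$. If $q$ is odd, then $f_1(q) > 0$, so $f_1 \not\equiv 0$, so by SH $d_{\lambda,\mu}^\gamma \neq 0$; output ``positive''. If $q$ is even, iterate the LP over the polynomially many vertices encountered via pivoting searching for an odd-denominator vertex. If none is found, formulate the auxiliary system $Q = \{(j,x) : Ax \leq (2j+1)b,\ j \geq 0\}$ and reduce the problem to its integer feasibility: a lattice point of $Q$ corresponds to an integer point of $(2j+1)P$ and hence witnesses $f_1 \not\equiv 0$.

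The main obstacle is precisely this last reduction: the auxiliary polytope $Q$ need not inherit a saturation property, so LP-nonemptiness of $Q$ does not obviously imply its integer feasibility. Overcoming this requires leveraging SH to establish the dichotomy that either (i) $f_1 \not\equiv 0$, in which case a witness $k$ of polynomially bounded bit-length exists and can be located by examining the structure of the Berenstein--Zelevinsky-type polytope $P$ (e.g., its image under a suitable parity-preserving projection), or (ii) $f_1 \equiv 0$ identically, detectable by certifying that the vertex set of $P$ lies in a sublattice incompatible with odd scalings. Making this dichotomy effective and strongly polynomial---so that the whole algorithm runs in $\mathrm{poly}(\bitlength{\lambda},\bitlength{\mu},\bitlength{\gamma})$ time---is the hard technical step, and is the crux of where SH (or PH2) has to do real work beyond the classical type $A$ saturation.
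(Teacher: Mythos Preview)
You have correctly reduced the problem to deciding whether there exists an odd $k$ with $kP \cap \Z^m \neq \emptyset$, but from that point on your proposal does not actually contain an algorithm. Searching vertices for an odd common denominator can take exponential time (the polytope may have exponentially many vertices) and need not succeed anyway: $P$ can contain points with odd denominators in its relative interior while every vertex has even denominator. Your auxiliary polytope $Q$ is an integer-programming instance with no saturation guarantee, and you acknowledge this yourself. So as written, the proof has a genuine gap precisely at the step you flag as ``the hard technical step''.

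The paper's idea that you are missing is short and clean. Observe that ``$kP$ has an integer point for some odd $k$'' is equivalent to ``$P$ contains a point of $\Z_{\langle 2\rangle}^m$'', where $\Z_{\langle 2\rangle}=\{p/q : q\ \text{odd}\}$ is the localization of $\Z$ at $2$. Now the key lemma: $P\cap \Z_{\langle 2\rangle}^m \neq \emptyset$ if and only if the \emph{affine span} $\mathrm{Aff}(P)$ meets $\Z_{\langle 2\rangle}^m$. One direction is trivial; for the other, if $z\in \mathrm{Aff}(P)\cap \Z_{\langle 2\rangle}^m$ and $u$ is any rational interior point of $P$, take an integer $q$ making $qu$ and $qz$ lie in $\Z_{\langle 2\rangle}^m$; then $\{z+\lambda q(u-z):\lambda\in \Z_{\langle 2\rangle}\}$ is a subset of $\mathrm{Aff}(P)\cap\Z_{\langle 2\rangle}^m$ that is dense on the line through $z$ and $u$, hence meets $P$. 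Finally, deciding whether $\mathrm{Aff}(P)\cap\Z_{\langle 2\rangle}^m\neq\emptyset$ is routine: compute equations $Cx=D$ for $\mathrm{Aff}(P)$ via LP, put $C$ in Smith normal form $C=USV$ in polynomial time, and check coordinatewise whether $Sy=U^{-1}D$ has a solution in $\Z_{\langle 2\rangle}^m$. This replaces your open-ended search over odd $k$ with a single linear-algebra computation.

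Two minor corrections: the De~Loera--McAllister result bounds the \emph{period} (not the degree) of $\widetilde d$ by $2$; and the reduction you wrote, $d\neq 0\iff \mathrm{index}(\widetilde d)=1$, already uses SH in both directions, so your later invocation of SH is redundant once you have that equivalence.
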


This  is also true for all types.

\begin{proof}
next class.
\end{proof}

\chapter{Deciding nonvanishing of Littlewood-Richardson coefficients for 
$O_n(\C)$} \label{cgct5}
\begin{center} {\Large Scribe: Hariharan Narayanan} \end{center}

\noindent {\bf Goal:} A polynomial time algorithm for 
deciding nonvanishing of Littlewood-Richardson coefficients for
the orthogonal group assuming SH.

\noindent {\em Reference:} \cite{GCT5}

Let $d_{\lambda,\mu}^\nu$ denote the Littlewood-Richardson coefficient of
type $B$ (i.e. for the orthogonal group $O_n(\C)$, $n$ odd) 
as defined in the earlier lecture.
In this lecture we describe a polynomial time algorithm for
deciding nonvanishing of $d_{\lambda,\mu}^\nu$ assuming the 
following positivity hypothesis PH2. Similar result also holds for
all types, though we shall only concentrate on type B in this lecture.

Let $\tilde d_{\lambda,\mu}^\nu(k)=
d_{k \lambda, k\mu}^{k \nu}$ denote the associated stretching function.
It is known to be a quasi-polynomial of period at most two 
\cite{DM2}. This means there
are polynomials $f_1(k)$ and $f_2(k)$ such that 
$$d_{k\lambda, k\mu}^{k\nu} = \left\{
                               \begin{array}{ll}
                                 f_1(k), & \hbox{if k is odd;} \\
                                 f_2(k), & \hbox{if k is even.}
                               \end{array}
                             \right.$$

\noindent {\bf Positivity Hypothesis (PH2)} \cite{DM2}:
The stretching quasi-polynomial $\tilde d_{\lambda,\mu}^\nu(k)$ is positive.
This means the coefficients of $f_1$ and $f_2$ are all non-negative.

The main result in this lecture is:

\begin{thm} \cite{GCT5} \label{tgct5}
If PH2 holds, then 
the problem of deciding the positivity (nonvanishing)  of 
$d_{\lambda\mu}^\nu$  belongs to $P$. That is, this problem
can be solved in time polynomial in the bitlengths of $\lambda,\mu$ and
$\nu$.
\end{thm}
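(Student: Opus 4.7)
The plan is to follow the type A template from the previous chapter, adapted to the weaker saturation property available in type B. The input is the triple $(\lambda,\mu,\nu)$ in binary, and the goal is a decision procedure running in time polynomial in $\langle\lambda\rangle+\langle\mu\rangle+\langle\nu\rangle$.

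First, I would invoke PH1 to obtain the combinatorial polytope $P=P_{\lambda,\mu}^\nu$, defined by inequalities $Ax\leq b$ where $A$ is a fixed integer matrix of dimension polynomial in the heights of $\lambda,\mu,\nu$ and $b$ is linear in the entries of those partitions. By PH1, $d_{\lambda,\mu}^\nu=\varphi(P)$ and $\tilde d_{\lambda,\mu}^\nu(k)=\varphi(kP)$ is the Ehrhart quasipolynomial of $P$, which by the previous lecture has period dividing $2$; write $\tilde d_{\lambda,\mu}^\nu(k)=f_i(k)$ for $k\equiv i\pmod 2$, $i\in\{1,2\}$.

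Second, I would apply Tardos' strongly polynomial algorithm for combinatorial linear programs to test whether $P$ is non-empty. If $P=\emptyset$ then $\tilde d\equiv 0$ and $d_{\lambda,\mu}^\nu=0$; output NO. Otherwise the index $i^\ast$ of $\tilde d$ is well-defined. Because the period divides $2$, the vertices of $P$ are half-integral, so $2P$ already has integer vertices; in particular $f_2(2)>0$ and $i^\ast\in\{1,2\}$.

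Third, I use PH2 to translate the positivity question into a pure index question. PH2 says that each $f_i$ has only non-negative coefficients, so $f_i\not\equiv 0$ iff $f_i(k)>0$ for every $k\geq 1$ of parity $i$. In particular $d_{\lambda,\mu}^\nu=f_1(1)>0$ iff $f_1\not\equiv 0$ iff $i^\ast=1$; this is exactly the implication PH2 $\Rightarrow$ SH in its index form. The task therefore reduces to deciding whether $i^\ast=1$ or $i^\ast=2$, equivalently whether $P$ itself contains a lattice point.

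Fourth, I would decide $P\cap\mathbb Z^m\neq\emptyset$ by a second Tardos call applied to an auxiliary combinatorial polytope $\widehat P$ constructed from the explicit type-$B$ (e.g.\ BZ) description of $P$. Using the substitution $y:=2x$, integer feasibility of $P$ becomes the parity condition ``does $2P\cap\mathbb Z^m$ meet $2\mathbb Z^m$?''; the concrete combinatorial structure of $P$, together with the half-integrality of its vertices, should let this parity constraint be encoded by additional combinatorial inequalities in auxiliary variables, producing $\widehat P$ of polynomial dimension with $\widehat P\neq\emptyset\Leftrightarrow P\cap\mathbb Z^m\neq\emptyset$. Output YES iff $\widehat P$ is non-empty.

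The hard part is this fourth step, because generic integer feasibility is NP-hard. The construction of $\widehat P$ must make essential use of the half-integrality of the vertices of $P$ (which is what makes the period divide $2$) and of the saturation implication PH2 $\Rightarrow$ SH (which is what removes any unbounded quantifier over odd dilations $k$ and allows a single LP to suffice). Writing down the inequalities defining $\widehat P$ and verifying the equivalence $\widehat P\neq\emptyset\Leftrightarrow P\cap\mathbb Z^m\neq\emptyset$ is the technical heart of the proof in \cite{GCT5}; once this is in place, the strongly polynomial running time follows from two applications of Tardos' algorithm to combinatorial LPs.
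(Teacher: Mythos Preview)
Your steps 1--3 are essentially right: under PH2, positivity of $d_{\lambda\mu}^\nu$ is equivalent to $f_1\not\equiv 0$, which is equivalent to $P$ containing an integer point. But step 4 is a genuine gap, not a technicality. You propose to encode the parity condition ``$2P\cap\Z^m$ meets $2\Z^m$'' by ``additional combinatorial inequalities in auxiliary variables'' and then apply Tardos again. This does not work: parity constraints are precisely what separate integer programming from linear programming, and half-integrality of vertices (even if it held --- the paper explicitly notes, citing \cite{loera}, that vertex denominators of $P$ can be $\Omega(l)$) does not by itself let you express them as an LP. You acknowledge that this is ``the hard part'' and that generic integer feasibility is NP-hard, but you do not actually supply the construction of $\widehat P$; you only assert that the BZ structure ``should'' make it possible.

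The paper's proof avoids integer feasibility entirely by a different relaxation. Under PH2 (Lemma~\ref{reduc}), $d_{\lambda\mu}^\nu>0$ iff $kP$ has an integer point for some \emph{odd} $k$, which is the same as $P\cap\Z_{\langle 2\rangle}^d\neq\emptyset$, where $\Z_{\langle 2\rangle}$ is the localization of $\Z$ at $2$ (rationals with odd denominator). The key point is that $\Z_{\langle 2\rangle}^d$ is \emph{dense} in $\Q^d$, so $P\cap\Z_{\langle 2\rangle}^d\neq\emptyset$ iff $\mathrm{Aff}(P)\cap\Z_{\langle 2\rangle}^d\neq\emptyset$ (Lemma~\ref{aff}). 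The latter is a question about a single affine subspace $Cx=D$, and is decided in polynomial time by computing the Smith Normal Form of $C$ and checking parity of the diagonal entries against the transformed right-hand side (Lemma~\ref{smith}). So the algorithm is: one LP call to find $\mathrm{Aff}(P)$, one Smith Normal Form computation, and a coordinate-wise parity check --- no second LP, and no attempt to detect lattice points directly.
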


We need a few lemmas for the proof.

\begin{lemma}\label{reduc}
If PH2 holds,  the following are
equivalent:
\begin{enumerate}
\item[(1)] $d_{\lambda \mu}^\nu \geq 1$.
\item[(2)]  There exists an odd integer $k$ such that $d_{k\lambda \, k\mu}^{k\nu} \geq 1$.
\end{enumerate}
\end{lemma}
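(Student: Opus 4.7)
The plan is to read off both implications directly from the period-$2$ quasi-polynomial structure of $\tilde d_{\lambda,\mu}^\nu(k)$ together with PH2. Write $\tilde d_{\lambda,\mu}^\nu(k)=f_1(k)$ when $k$ is odd and $f_2(k)$ when $k$ is even, where $f_1,f_2\in\mathbb{Q}[k]$. Note the key identity $f_1(1)=\tilde d_{\lambda,\mu}^\nu(1)=d_{\lambda,\mu}^\nu$.

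For the direction (1)$\Rightarrow$(2), simply take $k=1$, which is odd; then $d_{k\lambda,k\mu}^{k\nu}=d_{\lambda,\mu}^\nu\ge 1$ by hypothesis, so (2) holds.

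For the direction (2)$\Rightarrow$(1), suppose there is some odd positive integer $k_0$ with $d_{k_0\lambda,k_0\mu}^{k_0\nu}\ge 1$, i.e.\ $f_1(k_0)\ge 1$. PH2 tells us that every coefficient of $f_1$ is nonnegative. Since $f_1(k_0)>0$ and $k_0>0$, the polynomial $f_1$ cannot be identically zero; hence at least one of its coefficients is strictly positive. Evaluating at $k=1$ then gives
\[
f_1(1)\;=\;\sum_{i}a_i\;>\;0,
\]
where the $a_i$ are the (nonnegative) coefficients of $f_1$. Because $f_1(1)=d_{\lambda,\mu}^\nu$ is a nonnegative integer, strictly positive already forces $f_1(1)\ge 1$, i.e.\ $d_{\lambda,\mu}^\nu\ge 1$, which is (1).

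There is essentially no obstacle: the content is entirely packaged in PH2, which forbids the kind of ``cancellation at $k=1$'' that could otherwise let $f_1$ be positive at some larger odd $k$ while vanishing at $k=1$. The only point worth noting is that one must use both positivity of the coefficients (to propagate nonvanishing from some $k_0$ down to $k=1$) and the fact that Littlewood--Richardson coefficients are integers (to upgrade ``$>0$'' to ``$\ge 1$''), but neither step needs any additional machinery beyond PH2 and the quasi-polynomiality already recorded in the preceding section.
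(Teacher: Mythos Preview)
Your proof is correct and follows essentially the same approach as the paper's: both use PH2 to ensure $f_1$ has nonnegative coefficients, argue that $f_1(k_0)\ge 1$ for some odd $k_0$ forces $f_1$ to be not identically zero, and then conclude $f_1(1)>0$, which by integrality gives $d_{\lambda\mu}^\nu\ge 1$. The only cosmetic difference is that you spell out the sum $\sum_i a_i$ explicitly, but the logic is identical.
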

{\bf Proof:} Clearly $(1)$ implies $(2)$. By PH2, 
there exists a polynomial $f_1$
with non-negative coefficients such that
$$\forall \text{ odd } k,\, f_1(k) = d_{k\lambda \, k\mu}^{k\nu}.$$
Suppose that for some odd $k$, $d_{k\lambda \, k\mu}^{k\nu} \geq 1.$
Then $f_1(k) \geq 1$. Therefore $f_1$ has at least one non-zero
coefficient. Since all coefficients of $f_1$ are nonnegative,
$d_{\lambda \mu}^{\nu} = f_1(1) > 0$. Since $d_{\lambda \mu}^{\nu}$ is
an integer, $(1)$ follows.
{\hfill $\Box$}\\

\begin{defi}
Let $\Z_{<2>}$ be  the subring of \,$\Q$ obtained by localizing $\Z$ at $2$:
$$\Z_{<2>} := \left\{\frac{p}{q} \mid p, \frac{q-1}{2} \in \Z \right\}.$$
This ring consists of all fractions whose denominators are odd.
\end{defi}

\begin{lemma}\label{aff}
Let $P \in \R^d$ be a convex polytope specified by $Ax \leq
B$, $x_i \geq 0$ for all $i$, where $A$ and $B$ are integral. 
Let $\af(P)$ denote its affine span. The
following are equivalent:
\begin{enumerate}
\item[(1)] $P$ contains a point in $\Z_{<2>}^d$.
\item[(2)] $\af(P)$ contains a point in $\Z_{<2>}^d$.
\end{enumerate}
\end{lemma}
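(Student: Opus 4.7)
The direction (1)$\Rightarrow$(2) is immediate since $P\subseteq \af(P)$, so the whole content of the lemma lies in proving (2)$\Rightarrow$(1). My plan is to parameterize $\af(P)$ by an integer-affine map starting from the known $\Z_{<2>}$-point, and then exploit the density of $\Z_{<2>}$ in $\R$ to perturb into $P$.

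First, I would make the affine span explicit. Among the defining inequalities $a_i^T x\le b_i$ (including the sign constraints $-x_i\le 0$), let $I$ be the set of those that hold with equality throughout $P$; write the corresponding subsystem as $A'x=B'$, with $A',B'$ integral. Then $\af(P)=\{x:A'x=B'\}$, and the linear part $L=\ker A'$ is cut out by integer equations. A standard fact (e.g.\ via Smith normal form, or by clearing denominators in any $\Q$-basis of $\ker A'$) gives an integer basis $v_1,\dots,v_k$ of $L$. Combined with the hypothesized point $y\in\af(P)\cap\ztn$, this yields a parameterization
\[
\phi\colon\R^k\to\R^d,\qquad \phi(c)=y+\sum_{j=1}^k c_j v_j,
\]
which is affine with all coefficients in $\ztn$ (indeed in $\Z$ away from the constant term), and whose image is exactly $\af(P)$.

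Next, let $P'=\phi^{-1}(P)\subseteq\R^k$. Since $\phi$ is an affine isomorphism $\R^k\to\af(P)$ and $P\subseteq\af(P)$ has full dimension equal to $\dim\af(P)=k$, the pullback $P'$ is a full-dimensional (and non-empty) polytope in $\R^k$. In particular $P'$ has non-empty interior in $\R^k$. The key arithmetic observation is that $\z2=\{p/q:q\text{ odd}\}$ is \emph{dense} in $\R$ (for any real $\alpha$ and any $\epsilon>0$, a sufficiently large odd $q$ and the nearest integer $p$ to $q\alpha$ give $|p/q-\alpha|<\epsilon$), so $\ztn$ is dense in $\R^k$. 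Hence $P'\cap\ztn\ne\emptyset$; pick any such $c$.

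Finally, $\phi(c)=y+\sum c_j v_j$ lies in $P$ by construction, and lies in $\ztn$ because $y\in\ztn$, the $v_j$ have integer (hence $\z2$) entries, the $c_j$ lie in $\z2$, and $\z2$ is a subring of $\Q$ closed under addition and multiplication. This produces the desired point of $P\cap\ztn$ and closes the implication. The only step that is not purely formal is the existence of an integer basis for $L$, but this is a routine consequence of integrality of $A'$ and is the only place where the hypothesis ``$A,B$ integral'' is really used; everything else reduces to the density of $\z2$ in $\R$ and the fact that $P$ is full-dimensional in its affine span.
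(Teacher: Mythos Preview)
Your argument is correct. Both proofs rest on the same idea---density of $\z2$ in $\R$ together with an integer direction inside $\af(P)$---but they are organized differently. The paper does not parameterize all of $\af(P)$; instead it picks a rational point $u$ in the relative interior of $P$, draws the single line $L$ through $u$ and the given $z\in\af(P)\cap\ztn$, and then runs a one-dimensional version of your density argument on that line (clearing denominators on the two endpoints to get an integer direction $q(x-y)$, then using density of $\z2$ in $\Q$). Your version instead builds the full $k$-dimensional chart $\phi$ via an integer basis of $\ker A'$ and invokes density in $\R^k$. The paper's line trick is slightly more bare-handed (no basis or Smith normal form needed here; the paper defers Smith normal form to the next lemma), while your parameterization makes the role of the integrality hypothesis on $A$ more transparent and essentially merges the two lemmas' mechanisms into one.
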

{\bf Proof:} Since $P \subseteq \af(P)$, $(1)$ implies $(2)$. 
Now suppose $(2)$ holds. We have to show $(1)$. 
Let $z \in \ztn \cap \af(P)$.

First, consider the case when $\af(P)$ is  one dimensional. In this
case, $P$ is the line segment joining two points $x$ and $y$ in
$\Q^d$. The point $z$ can be expressed as an affine linear combination, $z =
ax + (1-a)y$ for some $a \in \Q$. There exists $q \in \Z \text{ such
that } qx \in \ztn \text{ and } qy \in \ztn.$ Note that $$\{z +
\lambda(qx - qy) \mid \lambda \in \z2\} \subseteq  \af(P) \cap \ztn.$$ 
Since $\z2$
is a dense subset of $\Q$, the l.h.s. and hence the r.h.s.
is a dense subset of $\af(P)$. Consequently, $P \cap \ztn \neq \emptyset$.

Now consider the general case.
Let $u$ be any point in the interior of $P$ with rational coordinates,
and  $L$ the line through $u$ and $z$. 
By restricting to $L$, the lemma reduces to
the preceding one dimensional case.
 {\hfill $\Box$}

\begin{lemma}\label{smith}
Let  $$P = \left\{x \mid Ax \leq
B,(\forall i) x_i \geq 0\right\} \subseteq \R^d$$
be a convex polytope  where $A$ and $B$ are integral.
Then, it is possible to determine in polynomial time whether or not
$\af(P) \cap \ztn = \emptyset$.
\end{lemma}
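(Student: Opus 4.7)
The plan is to reduce the question to a divisibility check via Smith normal form, after first computing an explicit system of linear equations cutting out $\af(P)$.

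First, I would handle the trivial case: run an LP to check whether $P$ itself is nonempty. If $P = \emptyset$ then $\af(P) = \emptyset$ and the answer is immediate, so assume $P \neq \emptyset$. Next I would determine $\af(P)$ as $\{x \mid A'x = B'\}$ with $A',B'$ integral of polynomial bit-length. This is the standard task of identifying the implicit equalities of $\{Ax \leq B, \, x \geq 0\}$: for each defining inequality $a_i^T x \leq b_i$ (including $-x_j \leq 0$), solve the LP $\max \, a_i^T x$ subject to the system. The inequalities attaining $b_i$ at optimum are the implicit equalities, and together they cut out $\af(P)$. This is a polynomial number of LP calls, each of polynomial bit-complexity.

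The core step is to decide whether the integral affine system $A'x = B'$ has a solution in $\ztn = \Z_{\langle 2\rangle}^d$. Here I would invoke the polynomial-time computation of the Smith normal form over $\Z$ (e.g.\ via Kannan--Bachem): find unimodular integer matrices $U,V$ and a diagonal $D = \mathrm{diag}(d_1,\dots,d_r,0,\dots,0)$ with $A' = UDV$. Because $V$ and $V^{-1}$ are integer matrices and in particular lie in $GL(\z2^d)$, the substitution $y = Vx$ gives a bijection $\ztn \leftrightarrow \ztn$. Setting $c = U^{-1}B' \in \Z^{?}$, the system becomes the diagonal system $Dy = c$, which splits into the scalar conditions
\[
d_i y_i = c_i \ (i\le r), \qquad 0 = c_i \ (i>r).
\]

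The second block of conditions is decided by inspection of $c$. For the first block, the equation $d_i y_i = c_i$ admits a solution $y_i \in \z2$ iff $c_i/d_i \in \z2$, i.e.\ iff $v_2(c_i) \geq v_2(d_i)$, where $v_2$ denotes the $2$-adic valuation (with the convention $v_2(0) = +\infty$). Each such test is a trivial bit-level computation. Hence $\af(P) \cap \ztn \neq \emptyset$ iff $c_{r+1} = \cdots = 0$ and $v_2(c_i) \geq v_2(d_i)$ for all $i \leq r$, and this can be decided in polynomial time.

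The only real subtlety, and the step I expect to require the most care in writing out, is the passage from the inequality description of $P$ to an integral description $A'x = B'$ of $\af(P)$ of polynomial bit-length; the Smith-normal-form reduction and the 2-adic valuation check are then completely mechanical. Everything else (LP feasibility, implicit-equality detection, Smith normal form) is classical polynomial-time machinery, so the whole procedure runs in polynomial time, as claimed.
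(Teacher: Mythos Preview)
Your proposal is correct and follows essentially the same route as the paper: obtain an integral equational description of $\af(P)$ via linear programming, compute the Smith normal form (citing Kannan--Bachem), use unimodularity of $V$ to reduce to a diagonal system over $\z2$, and finish with a coordinatewise check. Your write-up is in fact slightly more explicit than the paper's, spelling out the implicit-equality detection and the $2$-adic valuation criterion, but the underlying argument is identical.
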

{\bf Proof:} Using Linear Programming \cite{Kha79, Kar84}, a
presentation of the form $Cx = D$ can be obtained for $\af(P)$ in
polynomial time, where $C$ is an integer matrix and $D$ is a vector
with integer coordinates. We may assume that $C$ is square since
this can be achieved by padding it with $0$'s if necessary, and
extending $D$. The Smith Normal Form over $\Z$ of $C$ is a matrix
$S$ such that $C = USV$  where $U$ and $V$ are unimodular and $S$
has the form
$$\left(
  \begin{array}{cccc}
    s_{11} & 0 & \dots & 0 \\
    0 & s_{22} & \dots & 0 \\
    \vdots & \vdots & \ddots & 0 \\
    0 & 0 & \dots & s_{dd} \\
  \end{array}
\right)$$ where for $1 \leq i \leq d-1$, $s_{ii}$ divides $s_{i+1\,
i+1}$. It  can be computed  in polynomial
time \cite{KB79}. The question now reduces to whether $USVx = D$ has
a solution $x \in \ztn$. Since $V$ is unimodular, its inverse has
integer entries too, and $y:=Vx \in \ztn \Leftrightarrow x \in
\ztn$. This is equivalent to whether $Sy = U^{-1}D$ has a solution
$y \in \ztn$. Since $S$ is diagonal, this can be answered in
polynomial time simply by checking each coordinate. {\hfill $\Box$}

{\bf Proof of Theorem~\ref{tgct5}:}
By \cite{BZ}, there exists a polytope $P=P_{\lambda,\mu}^\nu$ such that
the Littlewood-Richardson coefficient  $d_{\lambda\mu}^\nu$ is equal to
the number of integer points in $P$.
This polytope is such that 
the number of integer points in the dilated polytope $kP$ is
$d_{k\lambda \,k\mu}^{k\nu}$. Assuming PH2,
we know from Lemma~\ref{reduc} that
$$P \cap \Z^d \neq \emptyset \Leftrightarrow (\exists
\text{ odd } k), kP \cap \Z^d \neq \emptyset.$$ The latter is
equivalent to $$P \cap \ztn \neq \emptyset.$$ The theorem  now
follows from Lemma~\ref{aff} and Lemma~\ref{smith}. {\hfill $\Box$}

In combinatorial optimization, LP works if the polytope is
integral. In our setting, this is not necessarily the case
\cite{loera}: the denominators of the coordinates of 
 the vertices of $P$  can be $\Omega(l)$, where $l$ is the total height
of $\lambda,\mu$ and $\nu$.
LP works here nevertheless because of PH2; it can be checked that
SH is also sufficient.

\ignore{

}

\chapter{The plethysm problem}
\begin{center} {\Large  Scribe: Joshua A. Grochow} \end{center} 

\noindent {\bf Goal:} 
In this lecture we describe the  general \emph{plethysm problem},
state  analogous positivity and saturation hypotheses for it,
and state the results from GCT 6 which imply a polynomial time algorithm
for deciding positivity of a plethysm constant assuming these
hypotheses.

\noindent {\bf Reference:} \cite{GCT6} 

\subsection*{Recall}
Recall that  a function $f(k)$ is \emph{quasipolynomial} if there are functions $f_i(k)$ for $i=1,\dots,\ell$ such that $f(k) = f_i(k)$ whenever $k \equiv i \mbox{ mod } \ell$.  The number $\ell$ is then the \emph{period} of $f$.  The \emph{index} of $f$ is the least $i$ such that $f_i(k)$ is not identically zero.  If $f$ is identically zero, then the index of $f$ is zero by convention.  We say $f$ is \emph{positive} if all the coefficients of each $f_i(k)$ are nonnegative.  We say $f$ is \emph{saturated} if $f(\mbox{index}(f)) \neq 0$.  If $f$ is positive, then it is saturated.

Given any function $f(k)$, we  associate to it the rational series $F(t) = \sum_{k \geq 0} f(k) t^k$.

\begin{prop} \label{pstan} \cite{Sta} The following are equivalent:
\begin{enumerate}
\item $f(k)$ is a quasipolynomial of period $\ell$.
\item $F(t)$ is a rational function of the form $\frac{A(t)}{B(t)}$ where $\deg A < \deg B$ and every root of $B(t)$ is an $\ell$-th root of unity.
\end{enumerate}
\end{prop}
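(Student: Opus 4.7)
The plan is to prove both directions by grouping terms of $F(t)$ according to residues modulo $\ell$ and reducing to the single underlying fact that for any polynomial $p$ of degree $d$, the generating function $\sum_{m\ge 0} p(m)\, s^m$ equals $Q(s)/(1-s)^{d+1}$ for some polynomial $Q$ with $\deg Q \le d$. This identity, a routine consequence of $\sum_{m\ge 0}\binom{m+j}{j} s^m = 1/(1-s)^{j+1}$, is the engine driving both implications.

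For (1) $\Rightarrow$ (2): Suppose $f(k) = f_i(k)$ whenever $k \equiv i \pmod{\ell}$, with each $f_i$ a polynomial of degree $d_i$. I would split
\[
F(t) \;=\; \sum_{i=0}^{\ell-1} t^{i} \sum_{m\ge 0} f_i(m\ell + i)\, (t^\ell)^{m}.
\]
Since $m \mapsto f_i(m\ell + i)$ is a polynomial of degree $d_i$ in $m$, the inner sum equals $Q_i(t^\ell)/(1-t^\ell)^{d_i+1}$ for some polynomial $Q_i$ of degree $\le d_i$. Bringing the $\ell$ summands over the common denominator $(1-t^\ell)^{D}$ with $D := 1 + \max_i d_i$ produces $F(t) = A(t)/(1-t^\ell)^D$, whose denominator's roots are exactly $\ell$-th roots of unity. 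The degree bound $\deg A < \deg B$ (in lowest terms) follows because the numerator before cancellation has $t$-degree at most $(\ell-1) + \ell d_i + \ell(D-d_i-1) < \ell D = \deg\bigl((1-t^\ell)^D\bigr)$, and reducing to lowest terms can only decrease the degree of both.

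For (2) $\Rightarrow$ (1): Given $F(t) = A(t)/B(t)$ in lowest terms with the stated properties, every root of $B$ is an $\ell$-th root of unity, so $B(t)$ divides $(1-t^\ell)^D$ for some $D \ge 1$. Multiplying numerator and denominator by $(1-t^\ell)^D / B(t)$ I rewrite $F(t) = \widetilde{A}(t) / (1-t^\ell)^D$ with $\deg \widetilde{A} < \ell D$. Grouping $\widetilde{A}$ by residues mod $\ell$ gives $\widetilde{A}(t) = \sum_{i=0}^{\ell-1} t^i \widetilde{A}_i(t^\ell)$ with $\deg \widetilde{A}_i < D$, hence
\[
F(t) \;=\; \sum_{i=0}^{\ell-1} t^i \cdot \frac{\widetilde{A}_i(t^\ell)}{(1-t^\ell)^D}.
\]
Expanding each $\widetilde{A}_i(s)/(1-s)^D = \sum_{m\ge 0} p_i(m)\, s^m$ with $p_i$ a polynomial in $m$ (again by the engine identity above) and reading off the coefficient of $t^{m\ell+i}$ gives $f(m\ell + i) = p_i(m)$. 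Since $m = (k-i)/\ell$ is an affine function of $k$ on the residue class $k \equiv i \pmod{\ell}$, the map $k \mapsto p_i((k-i)/\ell)$ is a polynomial $f_i(k)$, exhibiting $f$ as a quasipolynomial of period $\ell$.

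The main obstacle is bookkeeping rather than conceptual difficulty: one must be careful about the degree inequality $\deg A < \deg B$ surviving the common-denominator construction in the first direction, and about ensuring that the lowest-terms reduction in the second direction does not introduce roots outside the $\ell$-th roots of unity (it cannot, since cancellation only removes roots). Once these technicalities are verified, the equivalence is an elementary manipulation of generating series.
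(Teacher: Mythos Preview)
Your proof is correct and is essentially the standard argument (as found in Stanley's \emph{Enumerative Combinatorics}, to which the paper defers). The paper itself does not supply a proof of this proposition; it simply cites \cite{Sta}, so there is nothing further to compare.
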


\section{Littlewood-Richardson Problem [GCT 3,5]}
Let $G = GL_n(\C)$ and  $c_{\alpha,\beta}^\gamma$
the Littlewood-Richardson coefficient -- i.e. the multiplicity of the Weyl module $V_\gamma$ in $V_\alpha \otimes V_\beta$.  We saw 
that the positivity of $c_{\alpha,\beta}^\gamma$ can be decided
in $poly(\bitlength{\alpha},\bitlength{\beta},\bitlength{\gamma})$ time,
where $\bitlength{\cdot}$ denotes the bit-length.  Furthermore, we saw that the stretching function 
$\widetilde{c}_{\alpha,\beta}^\gamma(k) = c_{k\alpha,k\beta}^{k\gamma}$ is a polynomial, and the analogous stretching function for type $B$ is
a quasipolynomial of period at most 2.

\section{Kronecker Problem [GCT 4,6]}
Now we study the   analogous problem  for the
representations of the symmetric group (the Specht modules), called
the Kronecker problem.

Let $S_\alpha$ be the Specht module of the symmetric group $S_m$ associated to the partition $\alpha$.  Define the Kronecker coefficient $\kappa_{\lambda,\mu}^\pi$ to be  the multiplicity of $S_\pi$ in $S_\lambda \otimes S_\mu$ (considered as an $S_m$-module via the diagonal action).  In other words, write $S_\lambda \otimes S_\mu = \bigoplus_\pi \kappa_{\lambda,\mu}^\pi S_\pi$. 
We have 
$ \kappa_{\lambda,\mu}^\pi =(\chi_\lambda\chi_\mu, \chi_\pi)$, where 
$\chi_\lambda$ denotes the character of $S_\lambda$. 
By the Frobenius character formula, this can be computed in PSPACE.
More strongly, analogous to the Littlewood-Richardson problem:

\begin{conj} \cite{GCT4,GCT6} 
The Kronecker coefficient 
$\kappa_{\lambda,\mu}^\pi \in \cc{\# P}$.
In other words, there is a positive $\#P$-formula
for $\kappa_{\lambda,\mu}^\pi$.  \end{conj}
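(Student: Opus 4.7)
The plan is to produce a set of combinatorial witnesses, polynomially bounded and polynomial-time verifiable, whose cardinality equals $\kappa_{\lambda,\mu}^\pi$. The model to aim for is the Littlewood--Richardson setting, where either skew LR tableaux or integer points of the BZ-polytope already furnish such a witness set, and the deepest conceptual explanation passes through Kashiwara's global crystal basis of $U_q(\mathfrak{gl}_n)$: its cells partition the tensor product $V_\alpha \otimes V_\beta$ into blocks indexed by $\gamma$, with the $\gamma$-block having size exactly $c_{\alpha,\beta}^\gamma$. I would attempt the analogous construction for the Kronecker problem.

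First I would recast $\kappa_{\lambda,\mu}^\pi$ via Schur--Weyl duality as a restriction multiplicity for the Kronecker embedding $GL_m \times GL_n \hookrightarrow GL_{mn}$; concretely, $\kappa_{\lambda,\mu}^\pi$ equals the multiplicity with which $V_\lambda^{GL_m} \otimes V_\mu^{GL_n}$ appears in the restriction of the Weyl module $V_\pi^{GL_{mn}}$. This puts the problem into the same branching format as Littlewood--Richardson, the difference being that for LR the subgroup is a Levi, whereas here it is not. I would then look for a $q$-deformation: a nonstandard Hopf algebra $\widetilde{U}_q$ adapted to this Kronecker embedding, for which $V_\pi$ admits a flat deformation whose $q\to 1$ limit recovers the classical restriction. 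Following the Kashiwara--Lusztig programme, the target is to construct a global/canonical basis $\mathbf{B}$ of $V_\pi$ whose cells are labelled by pairs $(\lambda,\mu)$; membership of a basis vector in a given cell would be tested in polynomial time, directly exhibiting the $\cc{\# P}$-formula.

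The main obstacle is the middle step. Unlike the LR case, the Kronecker embedding is not of parabolic/Levi type, so neither the standard Drinfeld--Jimbo quantum group nor Kashiwara's grand-loop construction applies off the shelf; one must define a new Hopf-algebraic object, prove that the relevant modules deform flatly in $q$, and then extend Lusztig's geometric construction of the canonical basis (perverse sheaves on quiver-type varieties, the Decomposition Theorem, purity of intersection cohomology) to this nonstandard setting. Positivity of the resulting structure constants is precisely what would force the connection, already advertised in the outline of GCT, to the Riemann Hypothesis over finite fields via \cite{weil2,beilinson,kazhdan1,lusztigbook}. Everything flanking that step --- the Schur--Weyl reformulation on one side, and the combinatorial packaging of the canonical basis into a $\cc{\# P}$-witness on the other --- is essentially routine once the canonical basis exists; producing it, and proving the requisite positivity, is the hard part, and is exactly what keeps the conjecture open.
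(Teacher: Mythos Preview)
The statement is a \emph{conjecture}, and the paper does not prove it; there is no proof to compare against. What the paper does offer (in Chapter~\ref{ctowards}) is a sketch of the intended line of attack, and your proposal is essentially that same program: recast $\kappa_{\lambda,\mu}^\pi$ as a branching multiplicity for the embedding $GL_n\times GL_n\hookrightarrow GL(\C^n\otimes\C^n)$, observe that the standard Drinfeld--Jimbo quantization does not admit a compatible quantization of this (non-Levi) embedding, construct instead a \emph{nonstandard} quantum group $\hat G_q$ through which the embedding does quantize, and then seek a canonical basis for its irreducible modules whose positivity/cell structure would yield the $\#P$-formula. The paper records that the nonstandard $\hat G_q$ has in fact been constructed (\cite{GCT4,GCT7}), and that \cite{GCT8} gives conjecturally correct algorithms for the desired canonical bases, with PH1 for the Kronecker problem following from the conjectured properties of those bases; so you have correctly reconstructed both the strategy and the location of the gap.
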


This  is a fundamental problem in representation theory.
More concretely, it can be phrased as asking for a set of combinatorial objects $I$ and a characteristic function $\chi:\{I\} \to \{0,1\}$ such that $\chi \in \cc{FP}$ and $\kappa_{\lambda,\mu}^\pi = \sum_I \chi(I)$. 
Continuing our analogy:

\begin{conj} \cite{GCT6} 
The problem of  deciding  positivity of $\kappa_{\lambda,\mu}^\pi$ belongs
 to \cc{P}. \end{conj}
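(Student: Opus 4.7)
The plan is to transport the reduction used for Littlewood-Richardson coefficients of type $B$ (Theorem~\ref{tgct5}) to the Kronecker setting. It proceeds through an analog of the polytope hypothesis (PH1) paired with the positivity of the associated stretching quasipolynomial (PH2); once these are in place, the algorithmic machinery is essentially the same as in GCT 5.

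First I would formulate the Kronecker analogs of the GCT 5 hypotheses: \textbf{(PH1$_K$)} there is a polytope $P = P_{\lambda,\mu}^\pi \subset \R^d$ of dimension $d$ polynomial in the heights of $\lambda,\mu,\pi$, specified by a system $Ax \leq b$, $x\geq 0$, where $A$ is a fixed integer matrix and $b$ is a homogeneous linear form in the parts of $\lambda,\mu,\pi$, such that $\kappa_{\lambda,\mu}^\pi = \varphi(P)$ and the stretching function $\widetilde{\kappa}_{\lambda,\mu}^\pi(k):=\kappa_{k\lambda,k\mu}^{k\pi} = \varphi(kP)$ is the Ehrhart quasipolynomial of $P$; and \textbf{(PH2$_K$)} this quasipolynomial is positive, with period $\ell=\ell(\lambda,\mu,\pi)$ of polynomial bit length in the input.

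Granting these, the argument proceeds in three steps, modelled on GCT 5. In the first step, positivity of each component polynomial $f_i$ in the decomposition $\widetilde{\kappa}_{\lambda,\mu}^\pi(k)=(f_i)$ modulo $\ell$ gives, by the argument of Lemma~\ref{reduc}, the equivalence $\kappa_{\lambda,\mu}^\pi > 0 \Leftrightarrow \exists k$ with $\gcd(k,\ell)=1$ such that $kP$ contains a lattice point $\Leftrightarrow P \cap \Z_{(\ell)}^d \neq \emptyset$, where $\Z_{(\ell)}\subset\Q$ denotes the localization of $\Z$ at the primes dividing $\ell$. In the second step, since $\Z_{(\ell)}$ is dense in $\Q$, the proof of Lemma~\ref{aff} generalizes verbatim to show $P \cap \Z_{(\ell)}^d \neq \emptyset$ iff $\af(P) \cap \Z_{(\ell)}^d \neq \emptyset$. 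In the third step, I would compute $\af(P)$ by linear programming, obtaining a presentation $Cx=D$, and generalize Lemma~\ref{smith} by observing that the Smith normal form $C = USV$ reduces the nonemptiness test to checking, for each $i$, whether the part of $s_{ii}$ coprime to $\ell$ divides the corresponding coordinate of $U^{-1}D$; this is polynomial in the bit lengths of $\lambda,\mu,\pi$ and of $\ell$.

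The main obstacle is (PH1$_K$): no positive combinatorial, let alone polytopal, formula for the Kronecker coefficient is currently known, and producing one is a central open problem in algebraic combinatorics. A secondary obstacle is controlling the period $\ell$: in type $B$ one has $\ell \leq 2$ by the De Loera--McAllister result, whereas for the Kronecker stretching function we require the period to have polynomial bit-length in the input, which is part of the content of (PH2$_K$). Once (PH1$_K$) and (PH2$_K$) are granted, the reduction outlined above is routine, so essentially all of the mathematical work is concentrated in establishing these two hypotheses --- which GCT 6 further reduces to conjectures on quantum groups related to the Riemann hypothesis over finite fields.
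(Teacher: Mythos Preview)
This statement is a \emph{conjecture}, so the paper does not prove it; it outlines a conditional route. Your proposal is in the same spirit---assume a polytope hypothesis (PH1) and a positivity hypothesis (PH2), then run a Smith-normal-form style algorithm---but it diverges from the paper's actual framework and carries two technical slips.

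The paper does not treat the Kronecker case by a direct generalisation of the GCT~5 argument. Instead it subsumes Kronecker under the general plethysm problem and appeals to the \emph{saturated integer programming} theorem (Theorem~\ref{tgct6sat}): given PH1 and SH, one computes $\mathrm{index}(f_P)$ from the Smith normal form of the affine-span equations and answers YES iff the index equals~$1$. This route never needs to know the period~$\ell$; the index is read off from $\tilde c=\mathrm{lcm}(c_i)$. Your approach, by contrast, builds $\ell$ into the algorithm via the ring $\Z_{(\ell)}$, which forces you to add the side hypothesis that $\ell$ has polynomial bit length---an assumption the paper's framework avoids entirely.

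Two local issues in your write-up. First, the equivalence $\kappa>0 \Leftrightarrow \exists k$ with $\gcd(k,\ell)=1$ and $kP\cap\Z^d\neq\emptyset$ is correct, but it is not ``by the argument of Lemma~\ref{reduc}'': for $\ell>2$, a $k$ coprime to $\ell$ need not be $\equiv 1\pmod\ell$, so you must add the step that if $qx\in\Z^d$ with $\gcd(q,\ell)=1$ then choosing $m\equiv q^{-1}\pmod\ell$ gives $qm\equiv 1\pmod\ell$ with $qmx\in\Z^d$, whence $f_1(qm)>0$ and positivity yields $f_1(1)>0$. Second, your Smith-normal-form test is inverted: the condition for $e_i/s_{ii}\in\Z_{(\ell)}$ is that the $\ell$-part of $s_{ii}$ (the product of prime powers $p^a\| s_{ii}$ with $p\mid\ell$) divides $e_i$, not the part coprime to~$\ell$.

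You are right that the real obstruction is PH1: no polytopal (indeed no $\#P$) formula for $\kappa_{\lambda,\mu}^\pi$ is known.
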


\begin{thm} \cite{GCT6} 
The stretching function $\widetilde{\kappa}_{\lambda,\mu}^\pi(k)=\kappa_{k\lambda,k\mu}^{k\pi}$ is a quasipolynomial.
 \end{thm}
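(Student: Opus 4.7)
The plan is to realize the stretched Kronecker coefficients as dimensions of weight spaces of a finitely generated multigraded algebra, and then to invoke Proposition~\ref{pstan} via a Hilbert-series computation.

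First, fix complex vector spaces $A,B,C$ whose dimensions are at least the heights of $\lambda,\mu,\pi$, respectively, and set $H = GL(A)\times GL(B)\times GL(C)$ acting on $X = A\otimes B\otimes C$. Iterating the Cauchy formula $\Sym^m(U\otimes V) \cong \bigoplus_\nu V_\nu(U)\otimes V_\nu(V)$ first to $(A\otimes B)\otimes C$, and then expanding the Schur functor of a tensor product via $V_\nu(A\otimes B) \cong \bigoplus_{\alpha,\beta} \kappa_{\alpha,\beta}^\nu\, V_\alpha(A)\otimes V_\beta(B)$ (self-duality of Specht modules identifies this multiplicity with our Kronecker coefficient), one obtains
\[ \Sym^m(X) \;=\; \bigoplus_{|\alpha|=|\beta|=|\pi|=m} \kappa_{\alpha,\beta}^\pi \, V_\alpha(A)\otimes V_\beta(B)\otimes V_\pi(C). \]
Hence, setting $R := \Sym^\bullet(X^*)$, the Kronecker coefficient $\kappa_{k\lambda,k\mu}^{k\pi}$ equals the multiplicity of $V_{k\lambda}\otimes V_{k\mu}\otimes V_{k\pi}$ in $R$ as an $H$-module.

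Next, let $U\subset H$ be the unipotent radical of a Borel subgroup with maximal torus $T$. By highest weight theory,
\[ \kappa_{k\lambda,k\mu}^{k\pi} \;=\; \dim\bigl((R^U)_{k(\lambda,\mu,\pi)}\bigr), \]
the dimension of the $T$-weight $k(\lambda,\mu,\pi)$ subspace of $R^U$. A classical theorem of Hadziev/Grosshans, applicable because $H$ is reductive and $R$ is a finitely generated $H$-algebra, guarantees that $R^U$ is a finitely generated multigraded $\C$-algebra. Choosing homogeneous generators with $T$-weights $\vec d_1,\ldots,\vec d_r$, the multigraded Hilbert series of $R^U$ is then a rational function
\[ H_{R^U}(\vec t) \;=\; \frac{P(\vec t)}{\prod_{i=1}^r\bigl(1 - \vec t^{\,\vec d_i}\bigr)} \]
with $P(\vec t)$ a Laurent polynomial. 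Restricting to the ray by specializing $\vec t \mapsto t^{(\lambda,\mu,\pi)}$ converts this into
\[ F(t) \;=\; \sum_{k\ge 0} \widetilde\kappa_{\lambda,\mu}^\pi(k)\,t^k \;=\; \frac{\overline P(t)}{\prod_i(1 - t^{m_i})}, \qquad m_i = \langle \vec d_i,(\lambda,\mu,\pi)\rangle. \]
Every root of the denominator is a root of unity, so Proposition~\ref{pstan} yields quasipolynomiality of $\widetilde\kappa_{\lambda,\mu}^\pi(k)$, with period dividing $\mathrm{lcm}\{m_i : m_i>0\}$.

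The main obstacle I expect is the possibility that $(\lambda,\mu,\pi)$ lies on a face of the dominant chamber, so that some pairings $m_i$ vanish and the naive denominator $\prod_i(1-t^{m_i})$ is ill-defined. One handles this by either (i) grouping the generators with $m_i=0$ together, factoring out their contribution into a polynomial prefactor applied to the specialized Hilbert series, or (ii) invoking the general combinatorial fact that the Hilbert function of a finitely generated multigraded algebra along any rational ray is eventually quasipolynomial (Stanley's theory of Hilbert series of $\N^d$-graded modules). Either route is routine bookkeeping, but the specialization step is the one place where genuine care is needed rather than a wave of the hand.
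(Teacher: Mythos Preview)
Your specialization step is wrong. Substituting $t_j \mapsto t^{v_j}$ with $\vec v = (\lambda,\mu,\pi)$ into $H_{R^U}(\vec t)$ yields $\sum_{\vec a} \dim(R^U)_{\vec a}\, t^{\langle \vec a,\vec v\rangle}$; the coefficient of $t^N$ is the sum of all graded pieces over the affine hyperplane $\langle\vec a,\vec v\rangle = N$, not the single value $\dim(R^U)_{k\vec v}$ at one lattice point on the ray. A minimal counterexample: take $R^U = \C[x,y]$ with bidegrees $(1,0)$ and $(0,1)$ and ray direction $\vec v = (1,1)$. Then $\dim(R^U)_{(k,k)} = 1$ for every $k$, but your specialization gives $1/(1-t)^2 = \sum_k (k+1)t^k$. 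Extracting ray values from a multivariable rational series is a diagonal-type operation, not a substitution; these are genuinely different and the latter does not control the former.

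What you list as option (ii) is in fact the whole argument, not a patch for the degenerate case $m_i = 0$: finite generation of $R^U$ (Hadziev--Grosshans) forces its multigraded Hilbert function to be a piecewise quasipolynomial of vector-partition type, and any such function restricted to a rational ray is quasipolynomial. Promote that line to the main proof and drop the specialization entirely; then the argument is correct and standard.

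For comparison, the paper itself gives no proof here but cites \cite{GCT6}; the method indicated (see the discussion following Theorem~\ref{thm:qp}) is algebro-geometric---realize the stretching function as the Hilbert function of a suitable variety and invoke rationality of its singularities. That heavier machinery is what is needed for the class-variety analogues later in the notes; for the Kronecker case your (corrected) $R^U$ route is more elementary and suffices.
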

Note that $\kappa_{k\lambda,k\mu}^{k\pi}$ is a Kronecker coefficient for $S_{km}$.

There is also a dual definition  of the Kronecker coefficients.
Namely, consider the embedding 
\[H=GL_n(\C) \times GL_n(\C) \hookrightarrow G= GL(\C^n \otimes \C^n),\]
 where $(g,h)(v \otimes w) = (gv \otimes hw)$.  Then

\begin{prop} \cite{FulH} The Kronecker coefficient 
$\kappa_{\lambda,\mu}^\pi$ is the multiplicity of the tensor product of Weyl modules $V_\lambda(GL_n(\C)) \otimes V_\mu (GL_n(\C))$ (this is an irreducible $H$-module) in the Weyl module $V_\pi(G)$ considered as an $H$-module via the embedding above. \end{prop}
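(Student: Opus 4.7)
The plan is to deduce this from Schur–Weyl duality applied twice: once for $GL_n(\C)$ acting on $(\C^n)^{\otimes d}$ and once for $G=GL(\C^n\otimes\C^n)$ acting on $(\C^n\otimes\C^n)^{\otimes d}$. Let $d=|\pi|$. The key observation is that there is a natural isomorphism
\[
(\C^n\otimes\C^n)^{\otimes d} \;\cong\; (\C^n)^{\otimes d}\otimes (\C^n)^{\otimes d}
\]
sending $(v_1\otimes w_1)\otimes\cdots\otimes(v_d\otimes w_d)$ to $(v_1\otimes\cdots\otimes v_d)\otimes(w_1\otimes\cdots\otimes w_d)$. I would first check that this is equivariant for $H=GL_n(\C)\times GL_n(\C)$ (obvious) and for the \emph{diagonal} action of $S_d$ on the right-hand side, where $\sigma$ permutes the $d$ tensor factors in both copies simultaneously. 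On the left-hand side $S_d$ permutes the $d$ factors of the tensor product $\C^n\otimes\C^n$, which matches.

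Next I would invoke the decomposition from Weyl's second approach (Theorem 4.2 of the previous lecture, or rather its consequence $V^{\otimes d}=\bigoplus_\lambda V_\lambda\otimes S_\lambda$) in two ways. First, applied to $V=\C^n\otimes\C^n$ viewed as the standard representation of $G$, it yields
\[
(\C^n\otimes\C^n)^{\otimes d} \;=\; \bigoplus_\pi V_\pi(G)\otimes S_\pi
\]
as a $G\times S_d$-module. Restricting the $G$-action to $H$ gives the same decomposition as an $H\times S_d$-module, with each $V_\pi(G)$ replaced by its restriction $V_\pi(G)|_H$. Second, applied to each factor $\C^n$ and then tensored, it yields
\[
(\C^n)^{\otimes d}\otimes (\C^n)^{\otimes d} \;=\; \bigoplus_{\lambda,\mu} \bigl(V_\lambda\otimes V_\mu\bigr)\otimes\bigl(S_\lambda\otimes S_\mu\bigr),
\]
and under the diagonal $S_d$-action noted above, the $S_d$-module $S_\lambda\otimes S_\mu$ decomposes as $\bigoplus_\pi \kappa_{\lambda,\mu}^\pi S_\pi$ by the definition of Kronecker coefficients.

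Finally I would extract the $S_\pi$-isotypic component of both expressions for the same $H\times S_d$-module. From the first, the $S_\pi$-isotypic is $V_\pi(G)|_H\otimes S_\pi$. From the second, it is $\bigl(\bigoplus_{\lambda,\mu}\kappa_{\lambda,\mu}^\pi(V_\lambda\otimes V_\mu)\bigr)\otimes S_\pi$. Cancelling the common factor $S_\pi$ (using that $S_\pi$ is an irreducible $S_d$-module and invoking Schur's lemma, or equivalently the uniqueness of isotypic decompositions) gives
\[
V_\pi(G)\bigr|_H \;=\; \bigoplus_{\lambda,\mu} \kappa_{\lambda,\mu}^\pi\,\bigl(V_\lambda(GL_n(\C))\otimes V_\mu(GL_n(\C))\bigr),
\]
which is exactly the claim.

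The only real subtlety, and thus the main thing to get right, is the compatibility of $S_d$-actions under the tensor-shuffle isomorphism: one must verify that the permutation of the $d$ copies of $\C^n\otimes\C^n$ on the left corresponds to the \emph{diagonal} (not one-sided) $S_d$-action on the right, because this is precisely what forces the Kronecker product $S_\lambda\otimes S_\mu$ (as opposed to the outer product, which would give Littlewood–Richardson coefficients) to appear. Everything else is bookkeeping via Schur–Weyl duality and the uniqueness of isotypic decompositions.
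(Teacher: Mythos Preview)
Your argument is correct and is the standard double Schur--Weyl duality proof. Note, however, that the paper itself does not supply a proof of this proposition: it is stated with a citation to Fulton--Harris \cite{FulH} and no argument is given. Your proof is exactly the one that reference would provide, so there is nothing to compare against beyond saying that you have reconstructed the intended (cited) argument faithfully, including the crucial identification of the diagonal $S_d$-action under the shuffle isomorphism.
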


\section{Plethysm Problem [GCT 6,7]}
Next  we consider the more general plethysm problem.

Let $H=GL_n(\C)$,  $V=V_\mu(H)$  the Weyl module of $H$ corresponding to
a partition  $\mu$,
 and  $\rho: H \to G=GL(V)$ the corresponding representation map.
 Then the Weyl module $V_\lambda(G)$ of $G$ for a given partition $\lambda$
can be considered an $H$-module via  the map $\rho$.  By complete reducibility, we may decompose this $H$-representation as
$$
V_\lambda(G) = \bigoplus_\pi a_{\lambda,\mu}^\pi V_\pi(H).
$$
The coefficients $a_{\lambda,\mu}^\pi$ are known as \emph{plethsym constants}
(this definition can easily be generalized to any reductive group $H$). 
The Kronecker coefficient  is a special case of the plethsym constant 
\cite{Ki}.

\begin{thm}[GCT 6] The plethysm constant 
$a_{\lambda,\mu}^\pi \in \mbox{PSPACE}$. \end{thm}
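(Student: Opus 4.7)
The strategy is to recognize $a_{\lambda,\mu}^\pi$ as the Hall inner product of a classical plethysm of Schur functions with $s_\pi$, and then to evaluate this inner product via an explicit formula involving symmetric-group characters that are computable in polynomial space via the Frobenius character formula of Chapter~4.

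First, I would identify $\chi_\lambda \circ \rho$ as a plethysm of Schur functions. For $g \in H = GL_n(\C)$ with eigenvalues $y_1, \ldots, y_n$, Theorem~\ref{tweyl} tells us that the eigenvalues of $\rho(g)$ on $V_\mu(H)$ are the weights $y^T = \prod_i y_i^{m_i(T)}$ as $T$ ranges over semistandard tableaux of shape $\mu$ with entries in $\{1,\ldots,n\}$. Applying Weyl's character formula to $V_\lambda(G)$ gives
\[ \chi_\lambda(\rho(g)) \;=\; s_\lambda\!\left(y^T : T \in \mathrm{SSYT}(\mu,n)\right) \;=\; (s_\lambda[s_\mu])(y_1,\ldots,y_n), \]
where $s_\lambda[s_\mu]$ denotes plethysm in the ring of symmetric functions. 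Since Weyl characters are orthonormal on $H$, it follows that $a_{\lambda,\mu}^\pi = \langle s_\lambda[s_\mu],\, s_\pi\rangle$ in the Hall inner product.

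Second, I would expand both factors in the power-sum basis, in which $\langle p_\rho, p_\sigma\rangle = z_\rho\,\delta_{\rho,\sigma}$. By Frobenius's formula, $s_\alpha = \sum_{\rho \vdash |\alpha|} z_\rho^{-1}\chi_\rho^\alpha\, p_\rho$, where $\chi_\rho^\alpha$ denotes the Specht character of $S_{|\alpha|}$ associated to $\alpha$ evaluated at cycle type $\rho$; and the plethystic identity $p_k[f(x_1,x_2,\ldots)] = f(x_1^k,x_2^k,\ldots)$ yields $p_k[s_\mu] = \sum_{\tau\vdash|\mu|} z_\tau^{-1}\chi_\tau^\mu\, p_{k\tau}$, where $k\tau$ scales each part of $\tau$ by $k$. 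Multiplying out $s_\lambda[s_\mu] = \sum_{\sigma\vdash|\lambda|} z_\sigma^{-1}\chi_\sigma^\lambda \prod_i p_{\sigma_i}[s_\mu]$ and pairing with the power-sum expansion of $s_\pi$ gives the closed form
\[ a_{\lambda,\mu}^\pi \;=\; \sum_{\sigma\vdash|\lambda|}\,\sum_{\vec{\tau}} \frac{\chi_\sigma^\lambda}{z_\sigma}\left(\prod_{i=1}^{\ell(\sigma)} \frac{\chi_{\tau^{(i)}}^\mu}{z_{\tau^{(i)}}}\right)\chi_{\rho(\sigma,\vec{\tau})}^\pi, \]
where the inner sum runs over tuples $\vec{\tau} = (\tau^{(1)},\ldots,\tau^{(\ell(\sigma))})$ with each $\tau^{(i)} \vdash |\mu|$, and $\rho(\sigma,\vec{\tau})$ is the partition of $|\lambda|\,|\mu|$ obtained by concatenating the multisets $\sigma_i\tau^{(i)}$.

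Third, I would check PSPACE evaluability. The number of tuples $(\sigma,\vec{\tau})$ is bounded by $p(|\lambda|)\,p(|\mu|)^{|\lambda|}$, singly exponential in the input size. Each symmetric-group character value $\chi_\rho^\alpha$ is computable in polynomial time by the Frobenius character formula from Chapter~4, with bit length polynomial in the input since $|\chi_\rho^\alpha| \le \dim S_\alpha \le \sqrt{|\alpha|!}$; hence the rationals $\chi_\rho^\alpha/z_\rho$ and the running partial sum fit in a polynomially-sized register. Enumerating the tuples in lexicographic order, adding each contribution to the running sum and then discarding it, evaluates $a_{\lambda,\mu}^\pi$ in polynomial space. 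The substantive step is the first one, identifying $\chi_\lambda\circ\rho$ with the classical plethysm $s_\lambda[s_\mu]$, which reduces a question about the nonlinear representation $\rho$ of $GL_n(\C)$ to purely combinatorial data in symmetric-group characters; the main obstacle in the third step is merely careful bookkeeping to ensure no intermediate quantity exceeds polynomial bit length.
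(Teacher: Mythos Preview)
Your derivation of the closed form
\[
a_{\lambda,\mu}^\pi \;=\; \sum_{\sigma\vdash|\lambda|}\;\sum_{\vec{\tau}}\;
\frac{\chi_\sigma^\lambda}{z_\sigma}\left(\prod_{i=1}^{\ell(\sigma)}\frac{\chi_{\tau^{(i)}}^\mu}{z_{\tau^{(i)}}}\right)\chi_{\rho(\sigma,\vec{\tau})}^\pi
\]
is correct, and the overall strategy---enumerate an exponential family of terms, each computable in small space, and keep a running sum---does yield a PSPACE algorithm. This is, however, a different route from the one the paper indicates: the paper invokes a parallel algorithm based on Weyl's character formula, i.e.\ working directly with Schur polynomials in the eigenvalue variables of $GL_n$ and extracting the coefficient of $s_\pi$ from $s_\lambda[s_\mu]$. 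Your route instead passes through the Frobenius characteristic map to the symmetric-group side and reduces everything to $S_m$-character values. Both approaches exploit a character formula, but yours is purely combinatorial and avoids any explicit manipulation of polynomials in the $y_i$; the paper's approach stays on the $GL_n$ side and is more naturally parallelizable.

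There is one genuine slip in your third step. You assert that each $\chi_\rho^\alpha$ is ``computable in polynomial time by the Frobenius character formula from Chapter~4.'' That formula expresses $\chi_\rho^\alpha$ as a single coefficient of the product $\Delta(X)\prod_j P_j(X)^{i_j}$ in $\ell(\alpha)$ variables; extracting that coefficient by expansion costs exponential time in general, and in fact computing symmetric-group character values is not known (and not believed) to lie in $P$. What you actually need, and what does hold, is that each $\chi_\rho^\alpha$ is computable in \emph{polynomial space}: one can enumerate the contributing monomial tuples one at a time (or use Murnaghan--Nakayama term by term) while keeping only a polynomially long partial sum. With that correction your argument goes through. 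You should also be explicit that ``polynomial'' throughout means polynomial in $|\lambda|,|\mu|,|\pi|$ (equivalently, in $n$ and $|\lambda|\cdot|\mu|$); the space to index a tuple $(\sigma,\vec\tau)$ is $O(|\lambda|\cdot|\mu|\log(|\lambda||\mu|))$, which is what makes the enumeration fit.
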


This is based on a parallel algorithm to compute the plethysm constant
using  Weyl's character  formula. Continuing in our previous trend:

\begin{conj} \label{cgct6pl} \cite{GCT6}  $a_{\lambda,\mu}^\pi \in \cc{\# P}$ and 
the problem of deciding 
positivity of $a_{\lambda,\mu}^\pi$ belongs to  $\cc{P}$. \end{conj}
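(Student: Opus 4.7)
The plan is to mirror the two-step approach that succeeded for the Littlewood-Richardson coefficients of types $A$ and $B$: first establish a polyhedral $\#P$-formula for $a_{\lambda,\mu}^\pi$ (the analogue of PH1), next establish positivity (PH2) and/or saturation (SH) of the associated stretching quasipolynomial, and finally apply combinatorial linear programming in the localized lattice to decide positivity in polynomial time.

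First I would try to construct a polytope $P_{\lambda,\mu}^\pi$ of dimension polynomial in the heights of $\lambda,\mu,\pi$ and in $n$, with the property that $a_{\lambda,\mu}^\pi$ equals the number of integer points in $P_{\lambda,\mu}^\pi$ and that $\tilde a_{\lambda,\mu}^\pi(k) = a_{k\lambda,k\mu}^{k\pi}$ is the Ehrhart quasipolynomial of this polytope. This would immediately place $a_{\lambda,\mu}^\pi$ in $\#P$ since membership in $P_{\lambda,\mu}^\pi$ is polynomial-time checkable, and Ehrhart's theorem would give quasipolynomiality of the stretching function for free. A starting point is Weyl's character formula applied both to $V_\lambda(G)$ and to $V_\pi(H)$ via the embedding $\rho: H \to G = GL(V_\mu(H))$: one can express $a_{\lambda,\mu}^\pi$ as a residue/Hilbert-series coefficient in $\mbox{poly}(n,\mu)$ many variables, and attempt to rewrite it as a lattice-point count in a face of a larger polytope built from the weights of $V_\mu(H)$. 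The Kronecker special case suggests looking for such a polytope among generalizations of the BZ/honeycomb polytopes, possibly cut out by a system of shape, content, and highest-weight inequalities indexed by weight-multiplicities of $V_\mu$.

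Assuming PH1, I would then formulate PH2 (positivity of the coefficients of each polynomial component $f_i(k)$ of the quasipolynomial $\tilde a_{\lambda,\mu}^\pi(k)$) and the corresponding SH (saturation: $\tilde a_{\lambda,\mu}^\pi(\mbox{index}(\tilde a))\neq 0$). As in Chapter on type $B$, PH2 implies SH. Given PH1 and SH the positivity algorithm runs as follows: let $\ell$ be the period of the quasipolynomial and let $\Z_{(\ell)}$ denote the localization of $\Z$ at all primes dividing $\ell$; then positivity of $a_{\lambda,\mu}^\pi$ is equivalent to $P_{\lambda,\mu}^\pi$ containing a point of $\Z_{(\ell)}^d$, which by the density argument of Lemma~\ref{aff} is equivalent to $\mbox{Aff}(P_{\lambda,\mu}^\pi)$ containing such a point, which in turn can be decided in polynomial time via Smith normal form as in Lemma~\ref{smith}, provided $\ell$ is polynomially bounded in the input bitlength. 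Bounding $\ell$, or replacing the single-period argument with a reduction to the index stratum, is the main additional technical step beyond the type-$B$ case.

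The hard part will be step one. Even for the Kronecker coefficient $\kappa_{\lambda,\mu}^\pi$, which is a special case of $a_{\lambda,\mu}^\pi$, no combinatorial positive $\#P$-formula and no polytopal description are currently known; PH1 for $\kappa$ is itself a major open problem. For general plethysm the situation is worse: no combinatorial rule exists at all, so PH1 must be discovered, not merely verified. The route indicated in \cite{GCT6,GCT7,GCT8} is to obtain such a positive formula from the canonical bases of suitable nonstandard quantum groups, where positivity is expected to follow from the Riemann Hypothesis over finite fields and its extensions. Once a combinatorial positive formula is in hand, translating it into a polytopal PH1 and verifying PH2 (for which there is already considerable computer evidence in the Littlewood-Richardson cases) should be a substantial but tractable combinatorial exercise; the genuinely deep mathematical content is concentrated in the existence of the formula itself.
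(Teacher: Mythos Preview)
This is a conjecture, not a theorem, so neither you nor the paper offers a proof; rather, both outline a program. Your plan is broadly the same as the paper's --- reduce to PH1 and SH (or PH2) and then invoke a polynomial-time algorithm for the resulting integer programming problem --- but there is one genuine error and two structural differences worth noting.

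The error is in your definition of the stretching function. You write $\tilde a_{\lambda,\mu}^\pi(k)=a_{k\lambda,k\mu}^{k\pi}$, stretching all three labels. The paper is careful \emph{not} to stretch $\mu$: since $G=GL(V_\mu(H))$, changing $\mu$ changes the ambient group, and the resulting sequence of numbers does not sit inside a single representation-theoretic setup. The correct stretching function is $\tilde a_{\lambda,\mu}^\pi(k)=a_{k\lambda,\mu}^{k\pi}$, with $\mu$ held fixed. Your polytope and your saturation hypothesis must be formulated accordingly.

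On structure: you derive quasipolynomiality of $\tilde a$ from Ehrhart's theorem once PH1 is in hand. The paper reverses this order --- quasipolynomiality is established unconditionally (via algebraic geometry: the stretching function is a Hilbert function of a variety with rational singularities), and this is what motivates PH1 in the first place. This matters because PH1 is conjectural whereas quasipolynomiality is a theorem; your ordering makes even the formulation of SH/PH2 contingent on PH1.

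On the algorithm: you propose localizing at the primes dividing the period $\ell$ and worry about bounding $\ell$. The paper's saturated integer programming algorithm (Theorem~\ref{tgct6sat}) sidesteps this: it computes the index of the Ehrhart quasipolynomial directly from the Smith normal form of the affine-span equations of $P$, with no a priori bound on the period needed. So your ``main additional technical step'' is already handled in the paper's framework.
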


For the stretching function, we need to be a bit careful.  Define $\widetilde{a}_{\lambda,\mu}^\pi = a_{k\lambda,\mu}^{k\pi}$. Here
the subscript $\mu$ {is not stretched}, since that  would change $G$, 
while stretching $\lambda$ and $\pi$ only alters the  representations of $G$.

As in  the beginning of the lecture, we can associate
a function $A_{\lambda,\mu}^\pi(t) = \sum_{k \geq 0} \widetilde{a}_{\lambda,\mu}^\pi(k) t^k$ to the plethysm constant. Kirillov conjectured that $A_{\lambda,\mu}^\pi(t)$ is rational. 
In view of Proposition~\ref{pstan}, this follows from the  following stronger result:

\begin{thm} [GCT 6] \label{thm:qp} The stretching function
$\widetilde{a}_{\lambda,\mu}^\pi(k)$ is a quasipolynomial.  \end{thm}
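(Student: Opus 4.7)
The plan is to realize $\widetilde{a}_{\lambda,\mu}^\pi(k)$ as the Hilbert function of a finitely generated $\N$-graded commutative $\C$-algebra, and then invoke the Hilbert--Serre theorem together with Proposition~\ref{pstan} to conclude quasipolynomiality. First I would form two section-style algebras: $R = \bigoplus_{k \ge 0} V_{k\lambda}(G)^*$, which by Borel--Weil is the homogeneous coordinate ring of the orbit closure of a highest weight line in $\PP(V_\lambda(G))$, and $S = \bigoplus_{k \ge 0} V_{k\pi}(H)$, equipped with its Cartan product. Both are finitely generated $\N$-graded commutative $\C$-algebras, and both carry natural $H$-actions: $R$ via the plethysm map $\rho : H \to G$, and $S$ tautologically.

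Next, pass to the Segre-like diagonal subalgebra $B = \bigoplus_{k \ge 0} R_k \otimes S_k$ of the bigraded tensor product $R \otimes S$. Finite generation of Segre products of finitely generated $\N$-graded commutative $\C$-algebras is classical, so $B$ is itself finitely generated, and $H$ acts diagonally and rationally on $B$. Because $H$ is reductive, Hilbert's finiteness theorem implies that the invariant ring $B^H$ is a finitely generated $\N$-graded commutative $\C$-algebra with $B^H_0 = \C$. The degree-$k$ component computes
\[
B^H_k \;=\; \bigl(V_{k\lambda}(G)^* \otimes V_{k\pi}(H)\bigr)^H \;\cong\; \Hom_H\bigl(V_{k\lambda}(G),\, V_{k\pi}(H)\bigr),
\]
and by Schur's lemma (since $V_{k\pi}(H)$ is irreducible) its dimension equals the multiplicity of $V_{k\pi}(H)$ in $V_{k\lambda}(G)$ viewed as an $H$-module through $\rho$, namely $a_{k\lambda,\mu}^{k\pi} = \widetilde{a}_{\lambda,\mu}^\pi(k)$. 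Thus $A_{\lambda,\mu}^\pi(t)$ is exactly the Hilbert series of $B^H$.

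By the Hilbert--Serre theorem, the Hilbert series of any finitely generated $\N$-graded commutative $\C$-algebra with zeroth piece $\C$ has the rational form $P(t) / \prod_{i=1}^{r} (1 - t^{d_i})$, where the $d_i$ are the degrees of a finite homogeneous generating set; in particular, all poles lie at roots of unity. Applying Proposition~\ref{pstan} then yields that $\widetilde{a}_{\lambda,\mu}^\pi(k)$ is a quasipolynomial in $k$ (possibly for $k$ sufficiently large, which suffices).

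The main obstacle is the chain of finite-generation assertions. Finite generation of $R$ and of $S$ follows from standard section-ring theory on the flag varieties $G/B_G$ and $H/B_H$; finite generation of the Segre product $B$ is a classical commutative-algebra fact provable by exhibiting explicit generators; and finite generation of $B^H$ is Hilbert's theorem for reductive groups, where reductivity of $H$ is essential. Matching the representation-theoretic bookkeeping (highest weights, duals, Cartan versus Borel--Weil conventions) requires care, but is routine once the algebras are correctly set up.
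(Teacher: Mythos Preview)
Your plan is essentially the same as the paper's: the lecture notes do not give a full proof but only sketch that ``the stretching function is the Hilbert function of some algebraic variety,'' and your construction via $B^H = \bigl(\bigoplus_k V_{k\lambda}(G)^* \otimes V_{k\pi}(H)\bigr)^H$ is exactly a concrete realization of that idea. The identification $\dim B^H_k = a_{k\lambda,\mu}^{k\pi}$ and the chain of finite-generation steps (Borel--Weil rings, Segre product, Hilbert's invariant theorem for reductive $H$) are all sound.

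The one substantive difference is the clause you yourself flagged: Hilbert--Serre alone only gives that the Hilbert function is \emph{eventually} a quasipolynomial, because the numerator degree in $P(t)/\prod(1-t^{d_i})$ need not be strictly smaller than the denominator degree, whereas Proposition~\ref{pstan} requires $\deg A < \deg B$. The paper's sketch singles out that the variety has \emph{rational singularities}; this (via Cohen--Macaulayness and the associated vanishing) is precisely what forces the Hilbert function to coincide with its quasipolynomial for all $k \ge 0$, closing the gap between ``eventually quasipolynomial'' and ``quasipolynomial.'' So your approach and the paper's coincide, with the rational-singularities input being exactly the missing ingredient you anticipated in your parenthetical.
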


This is the main result of GCT 6, which  in some sense
allows  GCT to go forward.  Without it, there would be  little hope for 
proving  that the positivity of plethysm constants can be decided 
in polynomial time.  
Its  proof  is essentially algebro-geometric.
The basic idea is to show that the stretching function is the Hilbert function of some algebraic variety with nice (i.e. ``rational'') singularities.
Similar results are shown  for 
the   stretching functions in the algebro-geometric problems arising in GCT.

The main complexity-theoretic result in
\cite{GCT6}  shows that, under the following
positivity and saturation hypotheses (for which there is much experimental evidence), the positivity of the plethysm constants can indeed  be decided in
polynomial time (cf.  Conjecture~\ref{cgct6pl}).

The first positivity hypothesis is suggested by Theorem~\ref{thm:qp}:
 since the stretching function is a quasipolynomial, we may suspect that it is captured by some polytope:

\vspace{.1in}

\textbf{Positivity Hypothesis 1 (PH1).}  
There exists a polytope $P=P_{\lambda,\mu}^\pi$ such that:
\begin{enumerate} 
\item  $a_{\lambda,\mu}^\pi = \varphi(P)$, where $\varphi$
denotes the number of integer points inside the polytope,
\item The stretching quasipolynomial (cf. Thm. \ref{thm:qp}) 
$\widetilde{a}_{\lambda,\mu}^\pi(k)$ is equal to
the Ehrhart quasipolynomial $f_P(k)$ of $P$,
\item The dimension of $P$ is polynomial 
in $\bitlength{\lambda}, \bitlength{\mu}$, and $\bitlength{\pi}$,
\item the membership in $P_{\lambda,\mu}^\pi$ can be decided in
$\poly(\bitlength{\lambda},\bitlength{\mu},\bitlength{\pi})$ 
 time, and there is a polynomial time separation oracle \cite{lovasz} for $P$.
\end{enumerate}

Here  (4) does {not} imply that the polytope $P$
has only polynomially many constraints. In fact,
in the plethysm problem there may be a super-polynomial number of constraints.

\vspace{.1in}

\textbf{Positivity Hypothesis 2 (PH2).} The stretching quasipolynomial $\widetilde{a}_{\lambda,\mu}^\pi(k)$ is positive.

\vspace{.1in}

This implies:

\vspace{.1in}

\textbf{Saturation Hypothesis (SH).}  The stretching quasipolynomial is saturated.  

\vspace{.1in}

Theorem \ref{thm:qp} is essential to state these hypotheses,
since positivity and saturation are properties that only apply
to quasipolynomials.
Evidence for PH1, PH2, and SH can be found in GCT 6.  

\begin{thm} \cite{GCT6} Assuming PH1 and SH (or PH2), 
positivity of the plethysm constant $a_{\lambda,\mu}^\pi$
 can be decided in
$\poly(\bitlength{\lambda},\bitlength{\mu},\bitlength{\pi})$ time. \end{thm}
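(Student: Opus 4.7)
The plan is to mirror the argument of Theorem~\ref{tgct5}, extending it from the type-B setting (where the Ehrhart period is at most $2$) to polytopes with an arbitrary Ehrhart period.

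By PH1, the plethysm constant equals $\varphi(P)$, the number of integer points in the polytope $P=P_{\lambda,\mu}^\pi$, which has polynomial dimension and a polynomial-time separation oracle, and its stretching quasipolynomial $\widetilde{a}_{\lambda,\mu}^\pi(k)$ coincides with the Ehrhart quasipolynomial $f_P(k)=(f_i)$ of some period $\ell$. Under PH2, each constituent polynomial $f_i$ has nonnegative coefficients, so $f_i(1)>0$ precisely when $f_i\not\equiv 0$. Mirroring Lemma~\ref{reduc}, this yields $a_{\lambda,\mu}^\pi=f_{1\bmod\ell}(1)>0$ iff there exists a positive integer $k\equiv 1\pmod\ell$ with $kP\cap\Z^d\neq\emptyset$. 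A Chinese Remainder argument recasts this as $P\cap R^d\neq\emptyset$, where $R=\Z_{(\ell)}$ denotes the localization of $\Z$ at the primes dividing $\ell$; this specializes to $\Z_{<2>}$ in the type-B case.

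The algorithmic test then generalizes Lemmas~\ref{aff} and \ref{smith}: a density argument along rational chords in $\af(P)$ shows $P\cap R^d\neq\emptyset$ iff $\af(P)\cap R^d\neq\emptyset$, and the Smith Normal Form reduces the latter to coordinate-wise gcd checks between the diagonal entries of the Smith form of an integer presentation $Cx=D$ of $\af(P)$ and the modulus $\ell$. Such a presentation is produced in polynomial time by Gr\"otschel--Lov\'asz--Schrijver-style linear programming, using only the separation oracle guaranteed by PH1.

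The main obstacle is that, unlike the type-B case where $\ell\le 2$, the period $\ell$ is not bounded a priori. This is overcome by noting that $\ell$ divides the least common multiple of the denominators of the vertices of $P$; by PH1 these vertices have polynomial bit-length, so a polynomial-bit-length multiple $\ell'$ of $\ell$ can be extracted via LP and substituted for $\ell$ in the Smith-form step, since the ring $R$ depends only on the set of primes dividing the modulus (with care to ensure that passing from $\ell$ to $\ell'$ does not enlarge $R$, or equivalently by working with the radical). A secondary subtlety, already present in Chapter~\ref{cgct5}, is passing from the separation oracle of PH1 to an explicit integer affine presentation of $\af(P)$; this is standard and handled by the same ellipsoid-based techniques.
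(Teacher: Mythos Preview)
Your proposal follows the right template---reduce to a question about the affine span of $P$ and resolve it by Smith normal form---and the steps generalizing Lemmas~\ref{reduc}, \ref{aff}, and \ref{smith} are sound. The gap is in the sentence ``a polynomial-bit-length multiple $\ell'$ of $\ell$ can be extracted via LP.'' PH1 only guarantees a separation oracle and polynomial facet complexity; the number of vertices may be exponential, and the lcm of their denominators (hence any bound on $\ell$ you obtain that way) can have bit-length exponential in $\langle P\rangle$. You have not given a polynomial-time procedure that outputs an integer divisible by every prime factor of the Ehrhart period, and without such an $\ell'$ the ring $R_{\ell'}$ is undefined and your Smith-form test cannot be carried out.

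The paper's proof (Theorem~\ref{tgct6sat}) avoids this entirely: it never computes or bounds $\ell$. Instead it proves directly that $\mathrm{index}(f_P)=\tilde c$, where $\tilde c$ is the lcm of the reduced diagonal entries of the Smith form of an integral presentation $Cx=d$ of $\af(P)$. The argument is that $kP$ can have no integer point unless $\tilde c\mid k$ (from the equations of $\af(P)$), so $f_P(t)=f_{\bar P}(t^{\tilde c})$ with $\bar P=\tilde c\,P$; and $\bar P$, after discarding the now-integral equality constraints, is full-dimensional, whence a density argument forces $\mathrm{index}(f_{\bar P})=1$. The period $\ell(\bar P)$ appears only inside this existential density step, never in the algorithm. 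Under SH one then has $a_{\lambda,\mu}^\pi>0$ iff $\mathrm{index}(f_P)=1$ iff $\tilde c=1$, and $\tilde c$ is computed in polynomial time from the Smith form. So the paper's route and yours share the same Smith-form computation, but the paper reads off the answer as ``is $\tilde c=1$?'' rather than ``is each reduced diagonal entry coprime to $\ell'$?''---and thereby sidesteps the need to know $\ell'$ at all.
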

This follows from the polynomial time algorithm for saturated integer 
programming described in the next class.
As with Theorem~\ref{thm:qp}, this  also holds for more general problems in
algebraic geometry.  

\ignore{It follows from the 
The proof of this theorem is essentially just showing that, under the stated hypotheses, a variant of linear programming suffices for the integer program that decides if the polytope $P_{\lambda,\mu}^\pi$ contains an integer point: although general integer programming is NP-complete, saturated (positive) integer programming is in P.  PH1 and SH imply that this integer program is in fact a saturated integer program (PH1 and PH2 imply it is a positive integer program).  
}

\chapter{Saturated and positive integer programming}
\begin{center} {\Large  Scribe: Sourav Chakraborty} \end{center}

\textbf{Goal : } A polynomial time algorithm for 
saturated integer programming and its 
application to  the plethysm problem. 

\noindent {\em Reference:} \cite{GCT6}

\

\textbf{Notation : } In this class we denote by $\langle a \rangle$
the bit-length of the $a$.

\section{Saturated, positive integer programming}
Let $Ax \leq b$ be a set of inequalities. The number of constraints
can be exponential. Let $P\subset \mathbb{R}^n$ be the polytope
defined by these inequalities. The bit length of P is defined to be $\langle
P\rangle = n + \psi$, where $\psi$ is  the maximum
bit-length of a constraint in the set of inequalities. We assume
that P is given by a separating oracle. This means membership in $P$ 
can be decided  in
poly$(\langle P\rangle)$ time, and  if $x\not\in P$ then a separating
hyperplane is given as a proof as in \cite{lovasz}.

Let $f_P(k)$ be the Ehrhart quasi-polynomial of P. Quasi-polynomiality 
means there exist polynomials $f_i(k)$, $1 \le i \le l$, $l$ the period,
so that $f_P(k)=f_i(k)$ if $k=i$ modulo $l$. 
Then
$$\mbox{Index($f_P$) = min}\{i | f_i(k) \mbox{not identically $0$ as a polynomial}\}$$

The integer programming problem  is called {\em positive}
 if $f_P(k)$ is positive
whenever $P$ is non-empty, and 
{\em saturated} if $f_P(k)$ is saturated whenever $P$ is non-empty.

\begin{thm}[GCT6] \label{tgct6sat}

\begin{enumerate} \item Index$(f_P)$ can be computed
in time polynomial in the bit length $\langle P\rangle$ of $P$
assuming that the separation oracle works in poly-$\langle
P\rangle$-time.

\item Saturated and hence positive integer programming problem can be
solved in  poly-$\langle P\rangle$-time.
\end{enumerate}
\end{thm}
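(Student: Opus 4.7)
The plan is to reduce Part (2) to Part (1) and give a direct algorithm for Part (1) based on linear programming with the separation oracle together with Smith normal form, following the spirit of Lemmas~\ref{aff} and \ref{smith} of the preceding chapter. For Part (2): under the saturation hypothesis we have $f_P(\mathrm{index}(f_P)) \neq 0$ whenever $P \ne \emptyset$, so integer feasibility $P \cap \Z^n \ne \emptyset$ is equivalent to $f_P(1) \neq 0$, equivalently $\mathrm{index}(f_P) = 1$. Positive IP is a special case since positivity of the Ehrhart quasi-polynomial implies saturation. Hence a polynomial-time algorithm computing the index immediately yields the desired decision procedure for saturated, and hence positive, IP.

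For Part (1), I would characterize $\mathrm{index}(f_P)$ as the smallest positive integer $k^\star$ such that $k^\star \cdot \mathrm{aff}(P) \cap \Z^n \ne \emptyset$ and compute it in four steps. First, decide whether $P = \emptyset$ over $\R$ using the ellipsoid method with the separation oracle \cite{lovasz}; if empty, output $0$. Second, compute an integer presentation $\mathrm{aff}(P) = \{x : Cx = d\}$ by locating a maximal affinely independent set of rational points of $P$ through LP calls and clearing denominators. Third, compute the Smith normal form $C = USV$ over $\Z$ in polynomial time, which reduces solvability of $Cy = kd$ in integers to the diagonal divisibility conditions $s_{ii} \mid k\,(U^{-1}d)_i$. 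Fourth, output $k^\star = \mathrm{lcm}_i\bigl(s_{ii}/\gcd(s_{ii}, (U^{-1}d)_i)\bigr)$. All four steps run in time polynomial in $\langle P \rangle$.

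Correctness reduces to the identity $\mathrm{index}(f_P) = k^\star$. The easy inclusion $L_P := \{k \ge 1 : kP \cap \Z^n \ne \emptyset\} \subseteq k^\star \Z$ holds because any such $k$ also satisfies $k \cdot \mathrm{aff}(P) \cap \Z^n \ne \emptyset$, and by linearity of the congruences defining $\mathrm{aff}(P)$ the latter set equals exactly $k^\star \Z_{\ge 1}$. For the reverse direction, I would fix a rational point $u$ in the relative interior of $P$ (available by LP) and invoke a lattice covering argument: the integer points of $k^\star \cdot \mathrm{aff}(P)$ form a coset of a fixed full-rank sublattice $\Lambda$ in the direction space of $\mathrm{aff}(P)$, whose covering radius is bounded, so for sufficiently large $m$ the point $mk^\star u$ lies in the relative interior of $mk^\star P$ at distance exceeding the covering radius from its boundary, placing some coset element inside $mk^\star P$. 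Hence $mk^\star \in L_P$ for all large $m$. Combined with standard Ehrhart-theoretic facts ensuring that the period $\ell$ of $f_P$ is a multiple of $k^\star$, this yields $\mathrm{index}(f_P) = k^\star$.

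The main obstacle is exactly this final identification: the index is defined purely through the quasi-polynomial structure of $f_P$, whereas $k^\star$ is an affine-hull invariant readable from a presentation of $P$. All the rest uses standard machinery (Gr\"otschel--Lov\'asz--Schrijver LP with the separation oracle, Kannan--Bachem Smith form, $\gcd/\mathrm{lcm}$ arithmetic); the conceptual content lies in the lattice-covering computation relative to the relative interior of $P$, which is where the failure of $P$ to be integral is tamed. This step is the reason saturated IP escapes the general NP-hardness of integer programming and is what makes the subsequent applications to the plethysm problem (and analogous positivity problems in GCT) feasible.
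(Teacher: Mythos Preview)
Your proposal is correct and follows essentially the same approach as the paper: reduce Part (2) to Part (1), then for Part (1) use the ellipsoid method to obtain an integral presentation $Cx=d$ of $\mathrm{Aff}(P)$, compute the Smith normal form, and read off the index as $\tilde c=\mathrm{lcm}_i\bigl(s_{ii}/\gcd(s_{ii},(U^{-1}d)_i)\bigr)$. The only cosmetic difference is in justifying $\mathrm{index}(f_P)=\tilde c$: the paper rescales to $\bar P=\tilde c\,P$, observes $f_P(t)=f_{\bar P}(t^{\tilde c})$, and invokes the density argument of Lemma~\ref{aff} to get $\mathrm{index}(f_{\bar P})=1$, whereas you argue directly via a lattice covering-radius bound in the relative interior---both routes are short and equivalent.
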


The second statement follow from the first.

\begin{proof} Let $Aff(P)$ denote the affine span
of P. By \cite{lovasz} we can compute the specifications $Cx = d$, $C$ and $d$
integral,   of
$Aff(P)$ in poly$(\langle P\rangle)$ time. Without loss of
generality, by padding, we can assume that $C$ is square. By
 \cite{KB79} we find the Smith-normal form of $C$ in polynomial time.
 Let it be
$\bar{C}$. So,
$$\bar{C} = ACB$$ where $A$ and $B$ are unimodular, and $\bar{C}$
is a  diagonal matrix, where the diagonal entries $c_1,c_2, \dots$
are such that with $c_i | c_{i+1}$.

Clearly $Cx = d$ iff $\bar{C}z = \bar{d}$ where $z = B^{-1}x$ and
$\bar{d} = Ad$.

So all equations here are of form 
\begin{equation}\label{eq} \bar{c_i}z_i
= d_i\end{equation} 

 Without loss of generality we can assume that
$c_i$ and $d_i$ are relatively prime. Let $\tilde{c} = lcm (c_i)$.

\

\begin{claim} $Index(f_P) = \tilde{c}$.
\end{claim}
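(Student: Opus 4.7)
The plan is to establish both $\mathrm{Index}(f_P) \ge \tilde c$ and $\mathrm{Index}(f_P) \le \tilde c$, with the Smith normal form analysis above as the core tool.

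First, I would show that whenever $kP$ contains an integer point, one must have $\tilde c \mid k$. Indeed any integer point $x \in kP$ lies in the affine span $\af(kP)$, which is cut out by $Cx = kd$; the unimodular substitution $z = B^{-1}x$ turns this into $\bar C z = k \bar d$, i.e.\ $c_i z_i = k \bar d_i$ for each $i$. Since $B$ is unimodular, $z \in \Z^n$ iff $x \in \Z^n$, so integrality of $x$ forces integrality of each $z_i$. Under the reduction $\gcd(c_i, \bar d_i) = 1$ (achieved by dividing each diagonal equation by its own gcd, which changes neither the polytope nor the Ehrhart quasi-polynomial), this forces $c_i \mid k$ for every $i$, and hence $\tilde c = \mathrm{lcm}(c_i) \mid k$. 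Consequently $f_P(k) = 0$ for every $k$ with $\tilde c \nmid k$, so each quasi-polynomial piece $f_i$ indexed by a residue class $i \not\equiv 0 \pmod{\tilde c}$ vanishes on infinitely many integers and is therefore the zero polynomial. This yields $\mathrm{Index}(f_P) \ge \tilde c$.

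For the reverse inequality I would exhibit a nonzero piece at the residue class of $\tilde c$. Since $P$ is a nonempty rational polytope, its Ehrhart quasi-polynomial $f_P$ is not identically zero: for instance, if $r$ is a common denominator of the coordinates of the vertices of $P$ then $rP$ has integer vertices and $f_P(r) \ge 1$. By the first paragraph, $\tilde c \mid r$. Thus, working with a period of $f_P$ that is a multiple of $\tilde c$ (which is always permissible and does not alter the index), the piece $f_{r \bmod \text{period}}$ is a nonzero polynomial, and its residue class lies in the multiples of $\tilde c$. The smallest such representative inside one period is $\tilde c$ itself, so $\mathrm{Index}(f_P) \le \tilde c$.

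The main technical point is the integrality transfer in the first paragraph: pulling integrality from a point of $kP$ down to the diagonalized equations $c_i z_i = k \bar d_i$, and then extracting $\tilde c \mid k$ from coprimality. A secondary bookkeeping issue is to align the intrinsic Ehrhart period of $f_P$ with $\tilde c$ (by passing to $\mathrm{lcm}$ if needed) so that the index, which is defined via a fixed period, correctly witnesses the smallest multiple of $\tilde c$. The coprimality reduction $\gcd(c_i, \bar d_i) = 1$ is itself a minor step: $c_i \mid k \bar d_i$ is equivalent to $(c_i / \gcd(c_i, \bar d_i)) \mid k$, so we may replace each $c_i$ by $c_i / \gcd(c_i, \bar d_i)$ before computing the lcm without altering the divisibility content of the claim.
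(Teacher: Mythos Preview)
Your first direction (that $f_P(k)=0$ whenever $\tilde c \nmid k$, hence $\mathrm{Index}(f_P)\ge \tilde c$) is correct and matches the paper's reasoning.

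The reverse direction has a genuine gap. You produce some $r$ with $f_P(r)\ne 0$ and observe $\tilde c\mid r$; this only tells you that the quasi-polynomial piece indexed by $r\bmod \ell$ (for a period $\ell$ divisible by $\tilde c$) is nonzero. But $r\bmod \ell$ could be $2\tilde c$, $3\tilde c$, etc.; nothing you have written rules out $f_{\tilde c}\equiv 0$ while $f_{2\tilde c}\not\equiv 0$. Your concluding sentence, ``the smallest such representative inside one period is $\tilde c$ itself, so $\mathrm{Index}(f_P)\le \tilde c$,'' is a non sequitur: knowing that \emph{some} multiple-of-$\tilde c$ piece is nonzero does not force the \emph{first} one to be.

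The paper closes this gap by passing to $\bar P=\tilde c\,P$. The affine-span equations for $\bar P$ become $z_i=(\tilde c/c_i)\bar d_i$, which are \emph{integers}; after projecting out these fixed coordinates (an integral change of variables, so the Ehrhart count is unchanged), $\bar P$ is full-dimensional. For a full-dimensional rational polytope one then uses a density argument in the style of Lemma~\ref{aff}: the interior contains a rational point with common denominator $q\equiv 1\pmod{\ell(\bar P)}$, whence $f_{\bar P}(q)\ge 1$ and the piece $(f_{\bar P})_1$ is not identically zero. (Equivalently: every piece of the Ehrhart quasi-polynomial of a full-dimensional polytope has the same positive leading term, the volume.) This yields $\mathrm{Index}(f_{\bar P})=1$, and since $f_P(t)=f_{\bar P}(t^{\tilde c})$ one gets $\mathrm{Index}(f_P)=\tilde c$. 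Your argument is missing exactly this ``$\bar P$ is effectively full-dimensional'' step.
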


From this claim the theorem clearly follows.

\

\begin{proof}[Proof of the claim] Let $f_P(t) = \sum_{k\ge  0}f_P(k)t^k$ be the
Ehrhart Series of  $P$.

Now $kP$ will not have an integer point unless $\tilde{c}$ divides
$k$ because of  (\ref{eq}).

Hence $f_P(t) = f_{\bar{P}}(t^{\tilde{c}})$ where $\bar{P}$ is the
stretched polytope $\tilde{c}P$ and $f_{\bar{P}}(s)$ is the Ehrhart
series of $\bar{P}$. From this it follows that
$$Index(f_P) = \tilde{c}Index(f_{\bar{P}})$$

Now we show that $Index(f_{\bar{P}}) = 1$.

The equations of $\bar{P}$ are of the form $$z_i =
\frac{\tilde{c}}{c_i}d_i$$ where each $\frac{\tilde{c}}{c_i}$ is an
integer. Therefore without loss of generality we can ignore these
equations and assume the $\bar{P}$ is full dimensional.

Then it suffices to show that $\bar{P}$ contains a rational point
whose denominators are all 1 modulo $\ell(\bar{P})$, the
period of the quasi-polynomial $f_{\bar{P}}(s)$.

This follows from a simple density argument that we saw earlier (cf.
the proof of Lemma~\ref{aff}).

From this the claim follows.
\end{proof}

\end{proof}

\section{Application to the plethysm problem}

Now we can prove the result stated in the last class:

\begin{thm} Assuming  PH1 and SH, positivity of 
the plethysm constant $a^{\pi}_{\lambda,
\mu}$ can be decided  in time polynomial in 
$\langle\lambda
\rangle,\langle\mu \rangle$ and $\langle\pi \rangle$.
\end{thm}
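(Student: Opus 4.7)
The plan is to reduce the positivity question for $a_{\lambda,\mu}^{\pi}$ to an instance of saturated integer programming and then invoke the algorithm from Theorem~\ref{tgct6sat}. By PH1, there is a polytope $P = P_{\lambda,\mu}^{\pi}$ whose dimension is polynomial in $\langle\lambda\rangle, \langle\mu\rangle, \langle\pi\rangle$, whose integer points count $a_{\lambda,\mu}^{\pi}$, and which is presented by a polynomial-time separation oracle. Hence the bit-length $\langle P\rangle$ in the sense of the previous section is polynomial in the input, and the hypotheses of Theorem~\ref{tgct6sat} are satisfied for $P$.

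Next I would observe that deciding positivity of $a_{\lambda,\mu}^{\pi}$ is, by PH1(1), exactly the problem of deciding whether $P\cap \Z^d \neq \emptyset$, where $d$ is the dimension of $P$. By PH1(2), the Ehrhart quasipolynomial $f_P(k)$ coincides with the stretching quasipolynomial $\widetilde{a}_{\lambda,\mu}^{\pi}(k)$. SH asserts that this quasipolynomial is saturated; equivalently, the integer program ``does $P$ contain a lattice point?'' falls into the saturated class for which Theorem~\ref{tgct6sat} provides a polynomial-time decision procedure.

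The algorithm is then the following: given $\lambda,\mu,\pi$, construct the separation oracle for $P = P_{\lambda,\mu}^{\pi}$ (which PH1 guarantees runs in polynomial time), and run the Theorem~\ref{tgct6sat} algorithm. That algorithm computes $\mathrm{Index}(f_P)$ and, using SH, certifies that $P\cap \Z^d\neq \emptyset$ iff the appropriate dilate $\mathrm{Index}(f_P)\cdot P$ contains a rational point with denominators coprime to the period, reducing to the Smith-normal-form and density arguments already proved. Output ``positive'' iff the algorithm returns nonempty.

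There is essentially no hard step remaining: all the real work has been absorbed into PH1 (existence of a well-behaved polytope), SH (saturation of its Ehrhart quasipolynomial), and Theorem~\ref{tgct6sat} (the polynomial-time saturated integer programming algorithm). If any obstacle arises in writing this out cleanly, it will be bookkeeping: verifying that the separation oracle guaranteed by PH1(4) indeed satisfies the interface required by the Gr\"otschel--Lov\'asz--Schrijver style reductions invoked inside the proof of Theorem~\ref{tgct6sat} (e.g.\ that affine hull computation and the reduction to Smith normal form only call the oracle polynomially many times on polynomially-bounded queries), so that the overall runtime remains $\mathrm{poly}(\langle\lambda\rangle,\langle\mu\rangle,\langle\pi\rangle)$.
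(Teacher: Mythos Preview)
Your proposal is correct and follows essentially the same approach as the paper: invoke PH1 to get the polytope $P=P_{\lambda,\mu}^\pi$ with its separation oracle, observe that positivity of $a_{\lambda,\mu}^\pi$ is equivalent to $P$ containing an integer point, note that SH makes this a saturated integer program, and apply Theorem~\ref{tgct6sat}. The paper's proof is just a terser version of what you wrote; your additional remarks about the oracle interface and bookkeeping are reasonable caution but not needed for the argument as presented.
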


\begin{proof} 
Let $P=P_{\lambda,\mu}^\pi$ be the polytope as in PH1  such that
$a^{\pi}_{\lambda, \mu}$ is the number of integer points in $P$. 
The goal is to decide if $P$ contains an integer point. This integer 
programming problem is saturated because of SH. 
Hence the result follows from Theorem~\ref{tgct6sat}. 
\end{proof}

\chapter{Basic algebraic geometry}
\begin{center}  {\Large  Scribe: Paolo Codenotti} \end{center}

\noindent {\bf Goal:} 
So far we have focussed on purely representation-theoretic aspects of GCT. 
Now we have to bring in algebraic geometry.
In this lecture we review the   basic definitions and results
in algebraic geometry that will be needed for this purpose.
The proofs will be omitted or only sketched. For details,
see the  books by Mumford \cite{Mum} and Fulton \cite{YT}.

\section{Algebraic geometry definitions}

Let $V=\C^n$, and $v_1, \dots, v_n$  the coordinates of $V$.

\begin{defi}
\begin{itemize}
\item
$Y$ is an \emph{affine algebraic set} in $V$ if $Y$ is the set of simultaneous zeros of a set of polynomials in  $v_i$'s.
\item
An algebraic set that cannot be written as the union of two proper algebraic sets $Y_1$ and $Y_2$ is called \emph{irreducible}.
\item
An irreducible affine algebraic set is called an \emph{affine variety}.
\item
The ideal of an affine algebraic set $Y$ is $I(Y)$, the set of all polynomials that vanish on $Y$.
\end{itemize}
\end{defi}

For example, $Y=(v_1-v_2^2+v_3, v_3^2-v_2+4v_1)$ is an irreducible affine algebraic set (and therefore an affine variety).

\begin{thm}[Hilbert]
$I(Y)$ is finitely generated, i.e. there exist polynomials $g_1, \dots, g_k$ that generate $I(Y)$. This means
 every $f\in I(Y)$ can be written as $f=\sum f_i g_i$ for some polynomials $f_i$.
\end{thm}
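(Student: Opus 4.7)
The plan is to deduce this from the Hilbert Basis Theorem: every ideal in $\C[v_1,\dots,v_n]$ is finitely generated. Since $I(Y)$ is an ideal in this polynomial ring, the conclusion is immediate.

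The standard strategy is to prove the more general assertion that if $R$ is a Noetherian commutative ring (every ideal finitely generated, equivalently every ascending chain of ideals stabilizes), then so is the polynomial ring $R[x]$. Starting from the trivial fact that $\C$ is Noetherian (the only ideals of a field are $(0)$ and $(1)$), repeated application gives Noetherianness of $\C[v_1,\dots,v_n]=\C[v_1,\dots,v_{n-1}][v_n]$ by induction on $n$, and so $I(Y)$ is finitely generated.

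The main step is thus the implication $R$ Noetherian $\Rightarrow$ $R[x]$ Noetherian. Given an ideal $I \subseteq R[x]$, for each integer $d \ge 0$ let $J_d \subseteq R$ consist of $0$ together with the leading coefficients of all polynomials in $I$ of degree exactly $d$. Each $J_d$ is an ideal of $R$ (closure under addition uses that one can always rescale leading terms), and multiplying a polynomial in $I$ by $x$ shows $J_0 \subseteq J_1 \subseteq J_2 \subseteq \cdots$. Since $R$ is Noetherian, this chain stabilizes: $J_d = J_N$ for all $d \ge N$. For each $d = 0,1,\dots,N$, choose finitely many generators of $J_d$ and lift them to polynomials $f_{d,1},\dots,f_{d,k_d} \in I$ of degree $d$ whose leading coefficients generate $J_d$. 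I claim the finite collection $\{f_{d,i}\}$ generates $I$.

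To see this, take any $f \in I$ of degree $m$ with leading coefficient $a$. If $m \ge N$, write $a = \sum_i r_i \cdot (\text{leading coeff of } f_{N,i})$ and subtract $\sum_i r_i x^{m-N} f_{N,i}$ from $f$ to strictly lower its degree; if $m < N$, do the analogous subtraction using the $f_{m,i}$. Iterating, we reduce $f$ to $0$ and express it as an $R[x]$-linear combination of the chosen generators. The only delicate point is bookkeeping: ensuring the reduction terminates (it does, since the degree strictly drops at each step) and that we have generators available at every degree $\le N$ as well as for the stabilized regime. Once this lemma is in hand, the theorem follows with no further work, so this reduction argument is the one technical obstacle.
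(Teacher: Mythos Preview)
Your proof is correct and is the standard textbook argument for the Hilbert Basis Theorem. The paper itself does not prove this statement: it is quoted as a classical background fact in a survey chapter whose proofs are explicitly ``omitted or only sketched,'' with references to Mumford and Fulton for details. So there is nothing to compare against; you have supplied what the paper deliberately left out.

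One minor comment on exposition: your parenthetical ``closure under addition uses that one can always rescale leading terms'' is a bit misleading. The real point is simply that if $f,g\in I$ have degree $d$ with leading coefficients $a,b$, then either $a+b\neq 0$ and $f+g$ has degree $d$ with leading coefficient $a+b$, or $a+b=0$, which lies in $J_d$ by your convention that $0\in J_d$. No rescaling is needed. This does not affect correctness.
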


Let $\C[V]$, the coordinate ring of $V$, be the ring of polynomials
over the variables $v_1, \dots, v_n$. The coordinate ring
of $Y$ is defined to be  $\C[Y]=\C[V]/I(Y)$. It 
 is the set of polynomial functions over $Y$.

\begin{defi}
\begin{itemize}
\item
$P(V)$ is the \emph{projective space} associated with $V$, i.e. the set of lines through the origin in $V$.
\item
$V$ is called the \emph{cone} of $P(V)$.
\item
$\C[V]$ is called the \emph{homogeneous coordinate ring} of $P(V)$.
\item
$Y\subseteq P(V)$ is a \emph{projective algebraic set} if it is the set of simultaneous zeros of a set of homogeneous forms (polynomials) in the variables $v_1, \dots, v_n$.
It is necessary that the polynomials be homogeneous because a point in $P(V)$ is a line in $V$.
\item
A projective algebraic set $Y$ is \emph{irreducible} if it can not be expressed as the union of two proper algebraic sets in $P(V)$.
\item
An irreducible projective algebraic set is called a \emph{projective variety}.
\end{itemize}
\end{defi}

Let $Y\subseteq P(V)$ be a projective variety, and define $I(Y)$, the
ideal of $Y$ to be the set of all homogeneous forms that vanish on $Y$.
Hilbert's result implies that  $I(Y)$ is finitely generated.

\begin{defi}
The \emph{cone} $C(Y) \subseteq V$ 
of  a projective variety $Y\subseteq P(V)$ is defined
to be the set of all points on the lines in $Y$.
\end{defi}

\begin{defi}
We define the \emph{homogeneous coordinate ring} of $Y$ as $R(Y) = \C[V]/I(Y)$,
 the set of homogeneous polynomial forms on the cone of $Y$.
\end{defi}

\begin{defi}
A \emph{Zariski open subset} of $Y$ is the complement of a projective 
 algebraic
subset of $Y$. It  is called a \emph{quasi-projective variety}.
\end{defi}

Let $G=GL_n(\C)$, and  $V$ a finite dimensional representation of $G$.
Then $\C[V]$ is a $G$-module, with the action of $\sigma \in G$  defined by:
\[(\sigma\cdot f)(v) = f(\sigma^{-1} v),\ v\in V.\]
\begin{defi}
Let $Y\subseteq P(V)$ be a projective variety with ideal $I(Y)$. We say that $Y$ is a \emph{$G$-variety} if $I(Y)$ is a $G$-module, i.e.,
 $I(Y)$ is a $G$-submodule of $\C[V]$.
\end{defi}
If $Y$ is a projective variety, then $R(Y)=\C[V]/I(Y)$ is also a $G$-module. Therefore $Y$ is $G$-invariant, i.e.
\[y\in Y \Implies \sigma y \in Y,\ \forall \sigma \in G.\]
The algebraic geometry of $G$-varieties is called geometric invariant theory
(GIT).

\section{Orbit closures}
We now define  special classes of $G$-varieties called \emph{orbit closures}.
 Let $v\in P(V)$ be a point, and 
$Gv$ the orbit of $v$: 
\[Gv=\{gv|g\in G\}.\] Let the stabilizer of $v$ be 
\[H=G_v=\{g\in G| gv = v\}.\]

The orbit $Gv$ is isomorphic to the space   $G/H$ of cosets, called the
 homogeneous space. This is  a very special kind of algebraic
variety.

\begin{defi}
The   \emph{orbit closure}  of $v$ is defined by:
\[\Delta_V[v]=\overline{Gv}\subseteq P(V).\]
Here  $\overline{Gv}$ is the closure of the orbit $G v$ 
in the complex topology on $P(V)$ (see figure \ref{fig:orbit}).
\end{defi}

\begin{figure}
    \begin{center}
      \psfragscanon
      \psfrag{v}{$v$}
      \psfrag{Gv}{$Gv$}
      \psfrag{Dv}{$\Delta_V[v]$}
      \psfrag{Closure}{Limit points of $Gv$}
      \epsfig{file=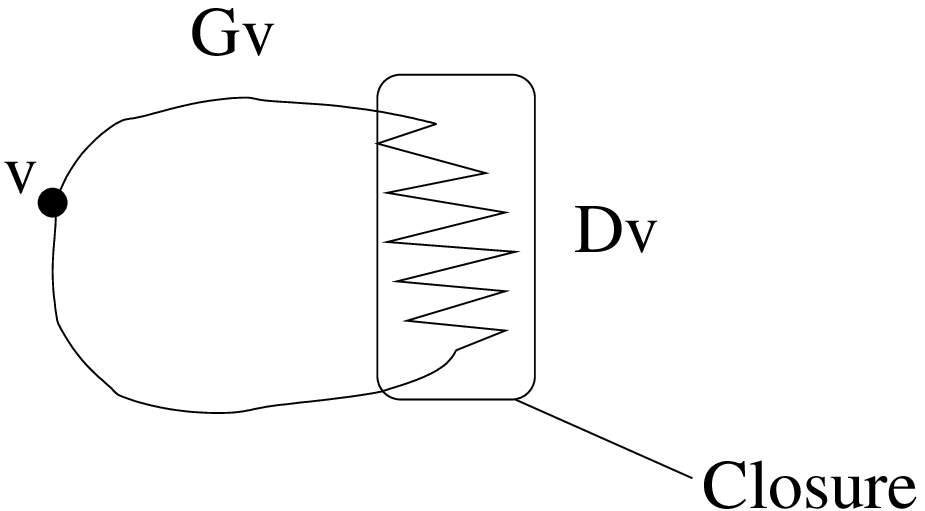, scale=1}
      \caption{{\small The limit points of $Gv$ in $\Delta_V[v]$ can be horrendous.}}
      \label{fig:orbit}
    \end{center}
\end{figure}

A basic fact of algebraic geometry:

\begin{thm}
The orbit closure $\Delta_V[v]$ is a projective $G$-variety
\end{thm}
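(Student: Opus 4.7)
The plan is to verify the three defining properties in turn: being a projective algebraic set, being irreducible, and being $G$-invariant. For each of these I would rely on general facts about algebraic group actions together with the fact that $G = GL_n(\C)$ is an irreducible algebraic group.

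First I would identify the complex-topology closure $\overline{Gv}$ with the Zariski closure of $Gv$ in $P(V)$. The orbit map $\varphi : G \to P(V)$ defined by $g \mapsto gv$ is a morphism of algebraic varieties, so by Chevalley's theorem its image $Gv$ is a constructible subset of $P(V)$. A standard fact is that for a constructible set in a quasi-projective variety, the Zariski closure and the complex-analytic closure coincide. Hence $\Delta_V[v] = \overline{Gv}$ is a Zariski-closed subset of $P(V)$, so by definition an algebraic subset, cut out by a set of homogeneous forms. This identification is the step I expect to be the main obstacle, since it requires either invoking the constructibility/closure dictionary as a black box, or else arguing more directly that $Gv$ already contains a Zariski-dense open subset of its Zariski closure (which follows from $Gv$ being the image of an irreducible variety under a morphism).

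Next I would establish irreducibility. The group $G = GL_n(\C)$ is a principal open subset of affine space $\mathrm{Mat}_n(\C)$ (the complement of the hypersurface $\det = 0$), and is therefore an irreducible affine variety. The orbit map $\varphi$ is a morphism, so $Gv = \varphi(G)$ is the image of an irreducible variety and is itself irreducible. Taking closure preserves irreducibility (if $\overline{Gv} = Y_1 \cup Y_2$ with both proper closed subsets, then $Gv = (Gv \cap Y_1) \cup (Gv \cap Y_2)$ would be reducible), so $\Delta_V[v]$ is an irreducible projective algebraic set, hence a projective variety.

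Finally, $G$-invariance is almost formal. For each $g \in G$, the action map $\sigma_g : P(V) \to P(V)$, $x \mapsto gx$, is a homeomorphism in either the Zariski or the complex topology (its inverse is $\sigma_{g^{-1}}$, also continuous/regular). Since $\sigma_g(Gv) = Gv$, and homeomorphisms commute with closure, we get $\sigma_g(\overline{Gv}) = \overline{\sigma_g(Gv)} = \overline{Gv}$. Equivalently, the ideal $I(\overline{Gv}) \subset \C[V]$ is stable under the $G$-action on $\C[V]$ defined by $(\sigma f)(w) = f(\sigma^{-1}w)$, which is the requirement for $\Delta_V[v]$ to be a $G$-variety in the sense of the preceding definition. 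Combining the three steps yields the theorem.
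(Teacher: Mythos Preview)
Your argument is correct: you verify closedness (via Chevalley's theorem and the coincidence of Zariski and analytic closures for constructible sets), irreducibility (via irreducibility of $GL_n(\C)$ and the fact that images and closures of irreducibles are irreducible), and $G$-invariance (via continuity of the $G$-action). The paper itself does not prove this theorem; it is stated as ``a basic fact of algebraic geometry'' with the reader referred to Mumford and Fulton for details, so there is no proof to compare against.
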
 

It is also called an almost homogeneous space.

Let $I_V[v]$ be the ideal of $\Delta_V[v]$, and $R_V[v]$ 
the homogeneous coordinate ring of $\Delta_V[v]$.
The  algebraic geometry of general orbit closures is hopeless,
since the closures can be horrendous (see figure
\ref{fig:orbit}). Fortunately we shall only be interested in very 
special kinds of orbit closures with good algebraic geometry.

We now define    the
simplest kind of  orbit closure, 
which is obtained  when the orbit itself is closed.
Let $V_\lambda$ be an irreducible Weyl module of $GL_n(\C)$, where $\lambda=(\lambda_1\geq\lambda_2\geq \dots \geq \lambda_n\geq 0)$ is a partition.
Let $v_\lambda$ be the  highest weight point  in $P(V_\lambda)$, i.e., the
point corresponding to the highest weight vector in $V_\lambda$.
This means 
 $bv_\lambda = v_\lambda$ for all $b\in B$,
where $B \subseteq GL_\n(C)$
is the Borel subgroup of lower triangular matrices. 
Recall  that the highest weight vector is unique.

Consider the orbit $G v_\lambda$ of $v_\lambda$.
Basic fact: 

\begin{prop} 
The orbit $Gv_\lambda$ is already closed in $P(V)$.
\end{prop}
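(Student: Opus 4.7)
The plan is to show that $Gv_\lambda$ is closed in $P(V_\lambda)$ by using the Borel fixed point theorem together with the uniqueness of the $B$-semi-invariant line in an irreducible representation. The key observation is that any orbit in an algebraic action is locally closed, so its boundary is a closed invariant subset of strictly lower dimension; if this boundary were nonempty, the fixed point theorem would force it to contain another $B$-fixed point, contradicting uniqueness.

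More concretely, first I would observe that $\overline{Gv_\lambda}\subseteq P(V_\lambda)$ is a projective $G$-variety, and that the orbit $Gv_\lambda$ is a locally closed subset of its closure (a standard fact about actions of algebraic groups: orbits are constructible and open in their closures). Therefore the boundary $\partial := \overline{Gv_\lambda}\setminus Gv_\lambda$ is a closed $G$-invariant, hence closed $B$-invariant, subset of $P(V_\lambda)$. If $\partial$ were nonempty, it would be a nonempty projective (hence complete) variety on which the connected solvable group $B$ acts, so by Borel's fixed point theorem it would contain a $B$-fixed point $[w]$.

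The next step is to characterize all $B$-fixed points in $P(V_\lambda)$. A point $[w]\in P(V_\lambda)$ is $B$-fixed iff $w$ is a $B$-semi-invariant of $V_\lambda$, i.e.\ $b\cdot w=\chi(b)w$ for all $b\in B$ and some character $\chi$ of $B$. I would invoke the basic representation-theoretic fact that in an irreducible $G$-module, the space of $B$-semi-invariants is one-dimensional (the highest weight line); this is proved by noting that any $B$-semi-invariant generates a nonzero $G$-submodule, and one then identifies its weight with the highest weight of $V_\lambda$, after which the weight-space decomposition forces uniqueness up to scalar. Consequently, the only $B$-fixed point of $P(V_\lambda)$ is $[v_\lambda]$ itself.

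Since $[v_\lambda]\in Gv_\lambda$ and $Gv_\lambda\cap\partial=\emptyset$, this $B$-fixed point cannot lie in $\partial$, so $\partial$ must be empty and $Gv_\lambda=\overline{Gv_\lambda}$ is closed. The main technical input I am relying on is Borel's fixed point theorem, together with the uniqueness of the highest weight line; the former is the genuine obstacle if one wants a self-contained proof, while the latter is really the defining property that picks out $v_\lambda$ inside $V_\lambda$. Everything else (local closedness of orbits, $G$-invariance of the boundary) is formal.
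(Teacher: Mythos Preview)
Your argument is correct and is in fact the standard proof of this result. However, the paper does not actually prove this proposition: it is introduced with the words ``Basic fact'' and stated without justification, in keeping with the lecture-notes style announced in the foreword (``Many of the theorems in these lecture notes are stated without proofs''). So there is no proof in the paper to compare against.

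It may still be worth noting that in the special case $\lambda=(1,\dots,1)$ treated immediately afterwards, the paper does give a separate, more direct argument for closedness of the Grassmannian: it simply observes that the limit of a sequence of $d$-dimensional subspaces of $V$ is again a $d$-dimensional subspace, and then identifies $Gr_d^n$ with $Gv_\lambda$ via the Pl\"ucker map. That ad hoc argument is specific to the Grassmannian and does not generalize, whereas your Borel fixed point approach handles all $G/P$ uniformly and explains \emph{why} the highest weight point is the right one to consider.
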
 

It can be shown that the stabilizer
$P_\lambda=G_{v_\lambda}$ is a
group of block lower triangular matrices, where the block lengths only
depend on $\lambda$ (see figure \ref{fig:block}). 
Such subgroups of $GL_n(\C)$ are called \emph{parabolic
subgroups}, and will be denoted by $P$.
Clearly   $Gv_\lambda \cong  G/P_\lambda=G/P$.

\begin{figure}
    \begin{center}
      \psfragscanon
      \psfrag{A1}{$A_1$}
      \psfrag{A2}{$A_2$}
      \psfrag{A3}{$A_3$}
      \psfrag{A4}{$A_4$}
      \psfrag{A5}{$A_5$}
      \psfrag{m1}{$m_1$}
      \psfrag{m2}{$m_2$}
      \psfrag{m3}{$m_3$}
      \psfrag{m4}{$m_4$}
      \psfrag{m5}{$m_5$}
      \epsfig{file=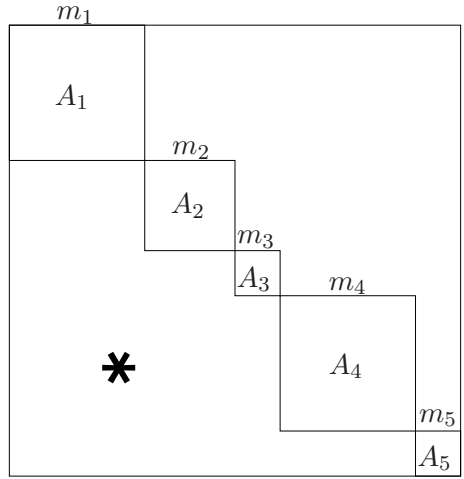, scale=.6}
      \caption{{\small The parabolic subgroup of block lower triangular matrices. The sizes $m_i$ only depend on $\lambda$.}}
      \label{fig:block}
    \end{center}
\end{figure}

\section{Grassmanians}
The simplest examples of $G/P$ are \emph{Grassmanians}.

\begin{defi}
Let $G=Gl_n(\C)$, and $V=\C^n$. 
The Grassmanian  $Gr_d^n$ is
the space  of $d$-dimensional subspaces (containing the
origin) of $V$.
\end{defi}

Examples:
\begin{enumerate}
\item
$Gr_1^2$ is the set of lines in $\C^2$ (see figure \ref{fig:lines}).
\item
More generally, $P(V)=Gr_1^n$.
\end{enumerate}

\begin{figure}
\begin{center}
\epsfig{file=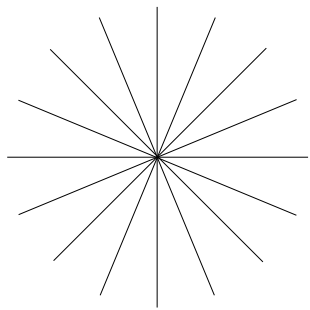, scale=0.2}
\caption{$Gr_1^2$ is the set of lines in $\C^2$.}
\label{fig:lines}
\end{center}
\end{figure}

\begin{prop} \label{pgrass} 
The Grassmanian $Gr_d^n$ is a projective variety (just like $P(V) = Gr_1^n$).
\end{prop}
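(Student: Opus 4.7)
My plan is to exhibit $Gr_d^n$ as a closed irreducible subvariety of a projective space via the \emph{Pl\"ucker embedding}. Let $V = \C^n$. Define a map
\[
\iota: Gr_d^n \to P(\Lambda^d V), \qquad W \mapsto [w_1 \wedge \cdots \wedge w_d],
\]
where $w_1, \ldots, w_d$ is any basis of $W$. First I would check well-definedness: a change of basis by $A \in GL_d(\C)$ scales $w_1 \wedge \cdots \wedge w_d$ by $\det(A)$, which leaves the line in $\Lambda^d V$ unchanged. Injectivity follows from the fact that $W$ can be recovered as $\{v \in V : v \wedge w_1 \wedge \cdots \wedge w_d = 0\}$.

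Second, I would identify the image as a projective algebraic set. Fix the standard basis $e_1, \ldots, e_n$ of $V$. Represent $W \in Gr_d^n$ by an $n \times d$ matrix $M$ whose columns span $W$; the Pl\"ucker coordinates $p_I(W)$, indexed by $d$-subsets $I \subseteq \{1, \ldots, n\}$, are the $d \times d$ minors of $M$ on the rows $I$. These are the coefficients of $w_1 \wedge \cdots \wedge w_d$ in the standard basis $\{e_I\}$ of $\Lambda^d V$. The image of $\iota$ consists precisely of those points whose coordinates satisfy the Pl\"ucker relations
\[
\sum_{k=0}^{d} (-1)^k\, p_{i_1 \cdots i_{d-1} j_k}\, p_{j_0 \cdots \widehat{j_k} \cdots j_d} = 0
\]
for all index sequences $i_1 < \cdots < i_{d-1}$ and $j_0 < \cdots < j_d$. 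These are homogeneous quadratic forms in the $p_I$, so their common vanishing locus is a projective algebraic set in $P(\Lambda^d V)$.

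Third, I would prove irreducibility. The group $G = GL_n(\C)$ acts transitively on $Gr_d^n$ by $g \cdot W = gW$, with stabilizer of the span of $e_1, \ldots, e_d$ equal to the parabolic subgroup $P$ of block lower-triangular matrices with blocks of sizes $d$ and $n-d$. Thus $Gr_d^n = G/P$, the image of the irreducible connected algebraic group $G$ under a surjective morphism, hence irreducible. Alternatively, I would cover $\iota(Gr_d^n)$ by the open affine charts $U_I = \{p_I \neq 0\}$, each isomorphic to affine space $\A^{d(n-d)}$ (by row-reducing $M$ to have identity on rows $I$), and observe that any two charts meet, forcing irreducibility.

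The main obstacle is the nontrivial step of showing that the Pl\"ucker relations cut out \emph{exactly} the image of $\iota$, not a strictly larger variety. Concretely, one must prove that an element $\omega \in \Lambda^d V$ is decomposable (of the form $w_1 \wedge \cdots \wedge w_d$) if and only if its coordinates satisfy the Pl\"ucker relations. The standard argument shows decomposability is equivalent to the contraction map $\Lambda^{d-1} V^* \to V$ determined by $\omega$ having rank at most $d$; unwinding this rank condition produces exactly the quadratic Pl\"ucker relations. Granting this classical fact, all three ingredients combine to give that $\iota(Gr_d^n)$ is a projective variety, and we transport the variety structure back to $Gr_d^n$ via $\iota$.
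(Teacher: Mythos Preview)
Your argument is correct, and it shares its starting point with the paper: both embed $Gr_d^n$ into $P(\Lambda^d V)$ via the Pl\"ucker map. The difference lies in how closedness of the image is established. The paper identifies $\Lambda^d V$ with the Weyl module $V_\lambda$ for $\lambda = (1^d)$ and shows that the Pl\"ucker map gives a $G$-equivariant isomorphism $Gr_d^n \cong G v_\lambda$, the orbit of the highest weight point; it then invokes the general fact (stated earlier in the lecture) that the orbit $Gv_\lambda$ of a highest weight point is already closed in $P(V_\lambda)$, together with a one-line topological remark that a limit of $d$-dimensional subspaces is again $d$-dimensional. You instead write down the quadratic Pl\"ucker relations explicitly and argue they cut out the image set-theoretically. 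Your route is the classical textbook one and yields explicit defining equations immediately, at the cost of the nontrivial decomposability lemma you flag. The paper's route avoids that lemma entirely but leans on the representation-theoretic machinery of highest weight orbits, which is the framework these lectures are building toward; in particular, it sets up the later discussion of $G/P$ as the simplest group-theoretic variety and the second fundamental theorem, where the Pl\"ucker (Van der Waerden) syzygies reappear as generators of the ideal rather than as the starting point.
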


It is easy to see that $Gr_d^n$ is closed (since the limit of a sequence 
of $d$-dimensional subspaces of $V$ is a $d$-dimensional subspace). 
Hence this follows from:

\begin{prop} \label{pgrass2}
Let $\lambda=(1,\ldots,1)$ be the partition of $d$, whose all
parts are $1$. Then
$Gr_d^n\cong Gv_\lambda \subseteq P(V_\lambda)$.\end{prop}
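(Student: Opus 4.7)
The plan is to realize this isomorphism explicitly as the classical Plücker embedding, using the concrete identification $V_\lambda = \Lambda^d V$ for $\lambda=(1,\ldots,1)$.

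First I would identify the Weyl module. By either the Deruyts or Weyl construction given earlier in these notes, $V_\lambda$ for $\lambda=(1^d)$ is (canonically) the $d$-th exterior power $\Lambda^d V$, with $V=\C^n$. In the Deruyts construction, a semistandard tableau of shape $(1^d)$ is a single column with strictly increasing entries $i_1<\cdots<i_d$ from $\{1,\ldots,n\}$, and the associated vector $e_T$ is the $d\times d$ minor of the generic matrix $X$ on rows $i_1,\ldots,i_d$; these are in bijection with the basis $\{e_{i_1}\wedge\cdots\wedge e_{i_d}\}$ of $\Lambda^d V$. The canonical tableau gives the highest weight vector $v_\lambda = e_1\wedge e_2 \wedge \cdots \wedge e_d$ (up to the convention that sends the Borel $B$ to the appropriate side).

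Next I would define the Plücker map $\pi:Gr_d^n \to P(\Lambda^d V)$ by $W \mapsto [w_1\wedge\cdots\wedge w_d]$, where $w_1,\ldots,w_d$ is any basis of $W$. This is well-defined on $P(\Lambda^d V)$ because a change of basis $w'_i=\sum_j a_{ij}w_j$ rescales $w_1\wedge\cdots\wedge w_d$ by $\det(a_{ij})\in\C^*$. The map is $G$-equivariant: $\pi(gW) = [gw_1\wedge\cdots\wedge gw_d] = g\cdot\pi(W)$. Since $G=GL_n(\C)$ acts transitively on $Gr_d^n$ and sends the standard $d$-plane $\langle e_1,\ldots,e_d\rangle$ to an arbitrary $d$-plane, and since $\pi(\langle e_1,\ldots,e_d\rangle)=v_\lambda$, the image of $\pi$ is exactly the orbit $Gv_\lambda$.

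The remaining step is injectivity of $\pi$, which gives the bijection $Gr_d^n \cong Gv_\lambda$ set-theoretically: from a decomposable $d$-vector $\omega=w_1\wedge\cdots\wedge w_d$ one recovers $W$ as the kernel $\{v\in V : v\wedge\omega=0\}$, which is precisely the span of $w_1,\ldots,w_d$ when they are linearly independent. To upgrade this bijection to an isomorphism of varieties, I would note that both sides are $G$-homogeneous: $Gv_\lambda$ is $G/G_{v_\lambda}$, and the stabilizer $G_{v_\lambda}$ of the line through $e_1\wedge\cdots\wedge e_d$ consists of exactly those $g\in G$ preserving the subspace $\langle e_1,\ldots,e_d\rangle$, i.e.\ the parabolic $P_\lambda$ of block (lower) triangular matrices with block sizes $d$ and $n-d$; this is also the stabilizer in $Gr_d^n$ of the standard $d$-plane, so both are $G/P_\lambda$ and $\pi$ is the induced $G$-equivariant isomorphism. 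In particular $Gv_\lambda$ is already closed in $P(V_\lambda)$, confirming the general fact cited just before the proposition. The main obstacle will be this last identification: verifying cleanly that $\pi$ is an isomorphism of varieties (not merely a bijection) and that the stabilizer computation really yields the asserted parabolic, since this is where the conventions about Borel subgroups and highest-weight vectors must be handled carefully.
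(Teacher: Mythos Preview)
Your proposal is correct and follows essentially the same approach as the paper: identify $V_\lambda$ with $\Lambda^d V$, define the Pl\"ucker map sending a $d$-plane to the wedge of a basis (equivalently, the tuple of $d\times d$ minors of a basis matrix), and argue that this is a $G$-equivariant bijection onto $Gv_\lambda$. In fact the paper leaves the key claim (equivariance and the isomorphism $Gr_d^n \cong Gv_\lambda$) as an exercise with only the hint that $v_\lambda$ corresponds to $v_1\wedge\cdots\wedge v_d$, so your injectivity argument and stabilizer computation go beyond what the paper itself spells out.
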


\proof 
For the given $\lambda$, 
$V_\lambda$ can be identified with
the  $d^{\textrm{th}}$ wedge product
$$\Lambda^d(V)=\textrm{span}\{(v_{i_1}\wedge \dots \wedge
v_{i_d})|i_1, \dots, i_d\ \textrm{are distinct}\}\subseteq
V\otimes\dots\otimes V\ (\textrm{d times}),$$ 
where 
\[(v_{i_1}\wedge \dots \wedge v_{i_d})=\frac{1}{d!}\sum_{\sigma\in
  S_d} sgn(\sigma)(v_{\sigma(i_1)}\otimes \dots \otimes
v_{\sigma(i_d)}).\]

Let $Z$ be a variable $d\times n$ matrix.
 Then  $\C[Z]$ is a $G$-module: given
$f\in \C[Z]$ and $\sigma \in GL_n(\C)$, we define the action of
$\sigma$ by 
\[(\sigma\cdot f)(Z)=f(Z\sigma).\]
Now $\Lambda^d(V)$, as a $G$-module, is isomorphic to the
span in $\C[Z]$ of all $d\times d$ minors of $Z$.

Let $A\in  Gr_d^n$ be  a $d$-dimensional subspace of $V$.
 Take any basis $\{v_1, \dots, v_d\}$ of $A$.
The point  $v_1\wedge \dots \wedge v_d \in \Lambda^d(V)$ depends 
only on the subspace $A$, and 
not on  the  basis, since
the  change of basis does not change the wedge product.
Let $Z_A$ be the $d\times n$ complex matrix
whose rows are the basis vectors  $v_1, \dots,
v_d$ of $A$. The  \emph{Plucker map} 
associates with $A$ the tuple  of all $d\times d$ minors $A_{j_1, \dots,
  j_d}$ of $Z_A$, where $A_{j_1, \dots, j_d}$ denotes the minor of
$Z_A$ formed by the columns $j_1,\dots, j_d$. This depends only
on $A$, and not on the choice of basis for $A$. 

The proposition  follows from: 
\begin{claim}
The Plucker map is a $G$-equivariant map from $Gr_d^n$ to $Gv_\lambda\subseteq P(V_\lambda)$ and $Gr_d^n\approx
Gv_\lambda\subseteq P(V_\lambda)$.
\end{claim}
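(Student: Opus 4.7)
The plan is to verify three things in sequence: well-definedness and $G$-equivariance of the Pl\"ucker map; surjectivity onto $Gv_\lambda$; and injectivity together with the fact that the bijection is an isomorphism of projective varieties. Throughout I identify $V_\lambda$ with $\Lambda^d V$ and the highest weight point $v_\lambda$ with the line spanned by $e_1\wedge\cdots\wedge e_d$, corresponding to the coordinate subspace $A_0=\mathrm{span}(e_1,\ldots,e_d)$.

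First I would check well-definedness of the Pl\"ucker map $p\colon Gr_d^n\to P(\Lambda^d V)$. Given $A\in Gr_d^n$ with two bases related by $Z_A' = MZ_A$ for some $M\in GL_d(\C)$, the tuples of maximal minors differ by the scalar $\det M\ne 0$, so the line $[v_1\wedge\cdots\wedge v_d]\in P(\Lambda^d V)$ depends only on $A$. For $G$-equivariance, if $\sigma\in GL_n(\C)$ and $A'=\sigma A$, then $\sigma v_1,\ldots,\sigma v_d$ is a basis of $A'$, and $(\sigma v_1)\wedge\cdots\wedge(\sigma v_d) = \sigma\cdot(v_1\wedge\cdots\wedge v_d)$ by the very definition of the $G$-action on $\Lambda^d V$. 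Thus $p(\sigma A)=\sigma\cdot p(A)$.

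Next I would pin down the image. Since $GL_n(\C)$ acts transitively on $Gr_d^n$ (any $d$-frame extends to an ordered basis, and any two ordered bases are related by some $\sigma\in GL_n(\C)$), the image $p(Gr_d^n)$ is a single $G$-orbit. On the coordinate subspace $A_0$ the Pl\"ucker map produces $[e_1\wedge\cdots\wedge e_d]=v_\lambda$. Combined with equivariance, this gives $p(Gr_d^n)=Gv_\lambda$.

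Then I would prove injectivity via the standard recovery lemma: for a decomposable $\omega=v_1\wedge\cdots\wedge v_d\ne 0$, a vector $w\in V$ lies in $A:=\mathrm{span}(v_1,\ldots,v_d)$ if and only if $w\wedge\omega=0$ in $\Lambda^{d+1}V$. Hence the subspace $A$ is reconstructed from the line $[\omega]$, so $p$ is injective. This already yields a $G$-equivariant bijection $Gr_d^n\to Gv_\lambda$; identifying stabilizers at the base points $A_0$ and $v_\lambda$ shows both are the block-lower-triangular parabolic $P_\lambda$, so as abstract homogeneous spaces $Gr_d^n\cong G/P_\lambda\cong Gv_\lambda$. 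Finally, to upgrade this bijection to an isomorphism of (projective) varieties, I would check that the minors $A_{j_1,\ldots,j_d}$ of $Z_A$ are regular in the standard affine charts of $Gr_d^n$ (where they become polynomials in the free entries of a reduced row echelon basis), so $p$ is a morphism; conversely, on the open set where a fixed minor is nonzero one can invert $p$ by writing the remaining columns of $Z_A$ as ratios of Pl\"ucker coordinates.

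The main obstacle I anticipate is the last point: verifying that the set-theoretic inverse is algebraic, i.e.\ that $p$ is an isomorphism rather than merely a bijective morphism. The recovery formula $w\wedge\omega=0$ gives $A$ as the kernel of a linear map whose entries are the Pl\"ucker coordinates, but to present the inverse as a morphism in affine charts one must use the standard trick of normalizing by a fixed nonvanishing maximal minor, and then express the entries of the echelon form as explicit rational functions of the Pl\"ucker coordinates. Once this is done on one chart, $G$-equivariance propagates it to a cover of $Gv_\lambda$, completing the isomorphism and (via Proposition~\ref{pgrass2}) Proposition~\ref{pgrass}.
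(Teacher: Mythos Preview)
Your proposal is correct and follows precisely the approach the paper intends: the paper leaves this claim as an exercise with the hint ``take the usual basis, and note that the highest weight point $v_\lambda$ corresponds to $v_1\wedge\dots\wedge v_d$,'' which is exactly what you do by identifying $v_\lambda$ with $[e_1\wedge\cdots\wedge e_d]$ and then using transitivity of the $G$-action on $Gr_d^n$ together with equivariance. Your treatment goes well beyond what the exercise requires by also verifying that the bijection is an isomorphism of varieties via affine charts and the recovery lemma $w\wedge\omega=0$; this is correct and standard, though for the purposes of the surrounding text (Propositions~\ref{pgrass} and~\ref{pgrass2}) the paper would likely accept the equivariant bijection $Gr_d^n\cong G/P_\lambda\cong Gv_\lambda$ as sufficient.
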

\begin{proof}
Exercise. Hint: take the usual basis, and note that the highest weight point
 $v_\lambda$ corresponds to $v_1\wedge\dots\wedge v_d$.
\end{proof}

\chapter{The class varieties}
\begin{center} {\Large  Scribe: Hariharan Narayanan} \end{center} 

\noindent {\bf Goal:}   Associate class varieties with the complexity 
classes $\#P$ and $NC$ and reduce the $NC \not = P^{\#P}$ conjecture 
over $\C$ to a conjecture that the class variety for $\#P$ cannot be
embedded in the class variety for $NC$.

\noindent {\em reference:} \cite{GCT1}

The $NC \not = P^{\#P}$ conjecture over $\C$ says that the permanent
of an $n\times n$ complex matrix $X$ cannot be expressed as a 
determinant of an $m\times m$ complex matrix $Y$, $m=\poly(n)$, whose 
entries are (possibly nonhomogeneous) linear forms in the entries of $X$.
This obviously implies the $NC \not = P^{\#P}$ conjecture over $\Z$,
since multivariate
polynomials over $\C^n$ are determined by the values that they take
over the subset $\Z^n$.
The conjecture over $\Z$ 
is implied by the usual
$NC \not = P^{\#P}$ conjecture over a finite field $F_p$,
$p \not = 2$, and hence, has to be proved first anyway. 

For this reason, we  concentrate on the $NC \not = P^{\#P}$ conjecture
over $\C$ in this lecture. The goal is to reduce this conjecture to
a statement in geometric invariant theory.

\section{Class Varieties in GCT}
Towards that end, 
we  associate with the complexity classes $\#P$ and $NC$ certain
projective algebraic varieties, which we call {\em class varieties}.
For this, we need a few definitions.

Let $G =
GL_\ell(\C)$, $V$ a finite dimensional representation of $G$.
Let  $P(V)$ be the associated
projective space, which inherits the group action. Given a point $v \in P(V)$,
let  $\Delta_V[v] =
\overline{Gv} \subseteq P(V)$ be its orbit closure. Here
$\overline{Gv}$ is the closure of the orbit 
$Gv$ in the
complex topology on $P(v)$.
It is a projective $G$-variety; i.e., a projective variety
with the action of $G$.

All class varieties in GCT are orbit closures (or their slight 
generalizations), where
 $v \in P(V)$  corresponds to a complete function for the
class in question. The choice of the  complete function is crucial, since
it determines the algebraic geometry of $\Delta_V[v]$. 

We  now associate a class variety with $NC$.
Let  $g = det(Y)$, $Y$ an $m\times m$ variable matrix. This is a complete
function for  $NC$. 
Let $V = sym^m(Y)$ be the space of
homogeneous forms in the entries of $Y$ of degree $m$. It  is a
$G$-module, $G=GL_{m^2}(\C)$,  with the action of $\sigma \in G$ given by:
$$\sigma:f(Y) \longmapsto f(\sigma^{-1}Y).$$
Here $\sigma^{-1} Y$ is defined thinking of $Y$ as an $m^2$-vector.

Let $\Delta_V[g] =\Delta_V[g,m] = \overline{Gg}$, where we think of $g$ as an
element of $P(V)$. This is the class variety associated
with $NC$. If $g$ is a different function instead of $det(Y)$, the
algebraic geometry of $\Delta_V[g]$ would have been unmanageable.
The main point  is that the algebraic geometry of $\Delta_V[g]$  is
nice, because of the very special nature of the determinant function.

We next associate a class variety with $\#P$.
Let  $h = perm(X)$, $X$ an $n \times n$ variable matrix.
Let 
$W = sym^n(X)$. It is similarly  an $H$-module, $H=GL_{k}(\C)$, $k=n^2$.
Think of $h$ as an element of $P(W)$, and let 
$\Delta_W[h] = \overline{Hh}$ be its orbit closure. It is called the
 class variety associated with
$\#P$.

Now assume that $m>n$, and think of $X$ as a submatrix of $Y$, say the lower 
principal submatrix. Fix a variable entry $y$ of $Y$ outside of $X$. 
Define the map
$\phi: W \rightarrow V$ which takes $w(x) \in W$ to 
$y^{m-n} w(x) \in V$.
This induces a map from $P(V)$ to $P(W)$ which we
call $\phi$ as well. Let 
$\phi(h) = f \in P(V)$ and  $\Delta_V[f, m, n] = \overline{Gf}$ its 
orbit closure. It is called the 
extended class variety associated with $\#P$.

\begin{prop} [GCT 1]
\begin{enumerate}
\item If $h(X) \in W$ can be computed by a circuit (over $\C$) of depth
$\le  \log^c(n)$,  $c$  a constant, then $f = \phi(h) \in
\Delta_V[g,m]$, for  $m = O(2^{\log^c n})$.
\item Conversely if $f \in \Delta_V[g, m]$ for $m = 2^{\log^c n}$, then
$h(X)$ can be approximated infinitesimally closely by a circuit of
depth $\log^{2c} m$. That is, 
$\forall \epsilon > 0$, there exists a function  $\tilde{h}(X)$ that can be
computed by a circuit of depth $\le \log^{2 c} m$ such that
 $\|\tilde{h} - h\| <
\epsilon$ in the usual norm on $P(V)$.
\end{enumerate}
\end{prop}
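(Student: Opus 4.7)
The plan is to prove both directions by adapting Valiant's classical reduction between arithmetic circuits and the determinant, carefully tracking depth, and using that membership in an orbit closure is precisely expressibility as a limit of group translates.

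\textbf{Part 1 (circuit $\Rightarrow$ orbit closure).} The starting point is Valiant's construction in its depth-parameterized form: a polynomial computed by a fan-in-two arithmetic circuit of depth $d$ unfolds to a formula of size at most $2^d$, which in turn equals the determinant of an $m\times m$ matrix of affine linear forms in the inputs with $m = 2^{O(d)}$. I would first augment the given depth-$\log^c n$ circuit for $h(X)$ by a depth-$O(\log m)=O(\log^c n)$ prefix computing $y^{m-n}$ and multiplying it in, producing a depth-$O(\log^c n)$ circuit for the homogeneous degree-$m$ polynomial $y^{m-n}h(X)$. Applying Valiant yields an $m\times m$ matrix $L(X,y)$ of \emph{homogeneous} linear forms with $\det L(X,y) = y^{m-n}h(X)$ and $m = O(2^{\log^c n})$. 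Embed $X$ and $y$ into the generic $m\times m$ matrix $Y$ at their designated positions, and let $\Lambda:\C^{m^2}\to\C^{m^2}$ be the linear map sending the vectorization of $Y$ to the vectorization of $L(X,y)$; because $\Lambda$ ignores the entries of $Y$ outside $X$ and $y$ it is singular. Perturb it to $\Lambda_\varepsilon = \Lambda + \varepsilon I$, which is invertible for generic small $\varepsilon$; the element $\sigma_\varepsilon := \Lambda_\varepsilon^{-1} \in G = GL_{m^2}(\C)$ then satisfies $(\sigma_\varepsilon\cdot g)(Y) = g(\Lambda_\varepsilon Y)$, which converges as $\varepsilon\to 0$ to $g(\Lambda Y) = \det L(X,y) = f$. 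Hence $f\in\overline{Gg} = \Delta_V[g,m]$.

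\textbf{Part 2 (orbit closure $\Rightarrow$ approximating circuit).} Conversely, suppose $f\in\overline{Gg}$ with $m = 2^{\log^c n}$. By definition there is a sequence $\sigma_k\in G$ with $\sigma_k\cdot g\to f$ in $V$. Each $\sigma_k\cdot g$ has the form $\det M_k(Y)$, where the entries of the $m\times m$ matrix $M_k(Y)$ are the linear forms in the entries of $Y$ read off as the coordinates of $\sigma_k^{-1}Y$ (viewed as an $m^2$-vector reshaped into an $m\times m$ matrix). Since $V$ is finite-dimensional, convergence in $V$ is coefficient-wise; so specializing $Y$ by placing the entries of $X$ into their designated positions, setting $y=1$, and zeroing out the remaining entries produces polynomials $\tilde h_k(X)$ converging to $h(X)$. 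The essential point is that the determinant of an $m\times m$ matrix with linear-form entries can be computed by a uniform arithmetic circuit of depth $O(\log^2 m)$ via the parallel determinant algorithms of Csanky, Berkowitz, and Chistov, using only the field operations of $\C$; hence each $\tilde h_k$ is computable in depth $O(\log^2 m)$, which in terms of $n$ equals $O(\log^{2c} n)$ (I read the paper's $\log^{2c}m$ as a notational slip for $\log^2 m$). For the given $\varepsilon>0$, take $k$ large enough that $\|\tilde h_k - h\|<\varepsilon$ in the norm on $P(V)$ and set $\tilde h := \tilde h_k$.

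\textbf{Where the main difficulty lies.} The only genuinely non-routine ingredient is the depth-faithful version of Valiant's theorem used in Part 1: classical Valiant bounds only the formula size, so one must check that (i) the matrix it produces has linear-form (not worse) entries once the circuit has been homogenized, (ii) homogenizing a depth-$d$ arithmetic circuit inflates the depth only by an additive $O(\log\deg)$, and (iii) a depth-$d$ fan-in-two circuit unfolds to a formula of size at most $2^d$ because the tree of computation paths has at most $2^d$ leaves. The closure step is what forces us to work with $\Delta_V[g,m] = \overline{Gg}$ rather than with the bare orbit $Gg$; this is precisely why Part 2 delivers only infinitesimal approximation rather than exact computation, and reflects the well-known $\mathrm{VP}$ versus $\overline{\mathrm{VP}}$ gap, which is a structural feature of the GCT framework rather than a defect.
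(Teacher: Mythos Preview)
The paper does not actually prove this proposition; it is merely stated with a citation to [GCT1] and then used to motivate the subsequent conjectures. So there is no in-paper proof to compare against. That said, your approach is the standard one underlying the cited result, and Part~2 is essentially correct as written (your reading of $\log^{2c} m$ as $\log^2 m = \log^{2c} n$ is right).

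There is, however, a genuine gap in Part~1. Valiant's construction produces a matrix whose entries are \emph{affine} linear forms, not homogeneous ones; you cannot simply assert that applying Valiant to a circuit for $y^{m-n}h(X)$ yields a matrix $L(X,y)$ of homogeneous linear forms. The correct route is to apply Valiant to a formula for $h(X)$ itself, obtaining $h(X)=\det A(X)$ with $A$ of size $m'\times m'$ having affine entries, and then homogenize $A$ using the padding variable $y$ (replace each scalar $c$ by $cy$) to get $A'(X,y)$ with homogeneous linear entries. The delicate point you skipped is that $\det A'(X,y)$ is \emph{a priori} only some homogeneous degree-$m'$ polynomial in $(X,y)$ whose restriction to $y=1$ equals $h(X)$; to conclude that it equals exactly $y^{m'-n}h(X)$ (and hence coincides with $f=\phi(h)$ once $m=m'$) you must invoke the specific structure of Valiant's matrix (e.g.\ its algebraic-branching-program form, in which every source-to-sink path has the same length, so every monomial in $\det A$ already has the same $X$-degree). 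Without this, different degree components could cancel at $y=1$ and your limit point need not be $f$. Once this is handled, your perturbation argument $\Lambda_\varepsilon=\Lambda+\varepsilon I$ is fine and is exactly why one needs the closure $\overline{Gg}$ rather than the bare orbit.
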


If the permanent $h(X)$  can be
approximated infinitesimally closely by small depth circuits, then 
every function in $\#P$ can be approximated infinitesimally closely by small
depth circuits. This is not expected. Hence:

\begin{conj}[GCT 1]
Let $h(X) = perm(X)$, $X$ an $n \times n$ variable matrix.
Then $f = \phi(h)
\not\in \Delta_V[g; m]$ if $m = 2^{polylog(n)}$ and $n$ is sufficiently
large.
\end{conj}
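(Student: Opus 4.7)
The plan is to follow the GCT flip strategy outlined in Chapter 1 and argue by contradiction, reducing the geometric non-containment to the existence of a representation-theoretic obstruction. Suppose for contradiction that $f\in\Delta_V[g,m]$ for $m=2^{\text{polylog}(n)}$ and arbitrarily large $n$. Since $\Delta_V[g,m]$ is a closed, $G$-stable subset of $P(V)$ with $G=GL_{m^2}(\C)$, it contains the orbit $Gf$ and hence its closure $\Delta_V[f,m,n]$. So $\Delta_V[f,m,n]\subseteq\Delta_V[g,m]$ as $G$-subvarieties. This induces a $G$-equivariant surjection of homogeneous coordinate rings $R[g,m]\twoheadrightarrow R[f,m,n]$, and restricting to the degree-$d$ piece gives a surjection $R_d[g,m]\twoheadrightarrow R_d[f,m,n]$ of finite-dimensional $G$-modules. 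By Weyl's theorem $G$ is reductive, so this surjection splits and $R_d[f,m,n]$ embeds as a $G$-subrepresentation of $R_d[g,m]$. In particular, every irreducible Weyl module $V_\lambda$ that occurs in $R_d[f,m,n]$ must also occur in $R_d[g,m]$, with at least the same multiplicity.

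An \emph{obstruction}, in the sense of Chapter 1, would be an irreducible Weyl module $V_\lambda$ (for some degree $d$ and partition $\lambda$, both depending on $n$) appearing in $R_d[f,m,n]$ but not in $R_d[g,m]$. Exhibiting an obstruction for each sufficiently large $n$ with $m=2^{\text{polylog}(n)}$ directly contradicts the embedding derived above, proving the conjecture. The whole task therefore reduces to constructing such obstructions, and by the flip principle it would suffice to produce a polynomial-time algorithm for doing so.

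To actually locate obstructions, one expresses the multiplicity of each $V_\lambda$ in $R_d[g,m]$ and $R_d[f,m,n]$ in terms of more basic representation-theoretic quantities. The orbit closures of $\det(Y)$ and $\phi(\perm(X))$ are almost homogeneous spaces $\overline{G/H}$ whose generic stabilizers $H$ are very rich: the determinant is preserved by $Y\mapsto AYB$ with $\det(A)\det(B)=1$, and the padded permanent by the permutation-times-diagonal symmetries of $X$ combined with scaling by $y$. Therefore, by Frobenius reciprocity, the irreducibles occurring in the two coordinate rings are controlled by branching from $G$ to these two stabilizers, and ultimately by plethysm constants, Kronecker coefficients, and Littlewood--Richardson-type coefficients of the kinds studied in the previous chapters. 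The plan is to identify a family of partitions $\lambda_n$ for which these constants vanish on the $\det$ side but not on the $\perm$ side.

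The main obstacle is precisely this last step: proving that such a family $\lambda_n$ exists for all large $n$ when $m$ is only quasi-polynomial in $n$. Making it rigorous requires the positivity and saturation hypotheses PH1, PH2, SH for the relevant plethysm and Kronecker-type constants, analogous to those formulated in Chapters~6--8 for the coefficients $c_{\alpha\beta}^\gamma$, $d_{\lambda\mu}^\gamma$, and $a_{\lambda\mu}^\pi$; under these hypotheses, nonvanishing can be decided in polynomial time by saturated integer programming (Theorem~\ref{tgct6sat}). The positivity hypotheses are in turn reduced in \cite{GCT6,GCT7,GCT8} to conjectures on canonical bases of quantum groups intimately tied to the Riemann hypothesis over finite fields. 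Thus the proposal is honestly a conditional reduction of the conjecture --- equivalently, of $\cc{NC}\ne\cc{P}^{\#\cc{P}}$ over $\C$ --- to the positivity and quantum-group conjectures that form the backbone of the GCT program, rather than an unconditional proof.
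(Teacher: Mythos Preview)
The statement is labeled a \emph{Conjecture} in the paper, not a theorem; the paper does not prove it, since it is essentially a reformulation of $\cc{NC}\neq\cc{P}^{\#\cc{P}}$ over $\C$ and remains open. So there is no ``paper's own proof'' to compare against. What you have written is an accurate summary of the GCT strategy the paper lays out in Chapters~12--15 for attacking the conjecture: pass from $f\in\Delta_V[g,m]$ to an embedding $\Delta_V[f,m,n]\hookrightarrow\Delta_V[g,m]$, then to a degree-preserving $G$-surjection of coordinate rings, then to the nonexistence of obstructions, and finally argue that obstructions ought to exist via positivity/saturation hypotheses and their quantum-group underpinnings. Your concluding acknowledgment that this is a conditional reduction rather than an unconditional proof is exactly the right assessment.

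One point deserves care. You write that, by Frobenius reciprocity, the irreducibles occurring in the two coordinate rings are controlled by branching from $G$ to the stabilizers of $\det$ and the padded permanent. This is true for the coordinate ring of the open orbit $G/H$, but the varieties in question are the orbit \emph{closures} $\overline{G/H}$, and passing from the open orbit to the closure is precisely where the difficulty concentrates. The paper isolates this step as Conjecture~\ref{cgct2} (a generalized second fundamental theorem for $\Delta_V[g]$) and only proves a partial version; the representation-theoretic data $\Pi_v$ of Proposition~\ref{pgct2b} sits inside the ideal but is not known to generate it. So the link you sketch between multiplicities and branching constants is conjectural at exactly the same depth as the statement you are trying to prove, not an independent tool you can invoke.
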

This is equivalent to:

\begin{conj}[GCT 1]
The $G$-variety  $\Delta_V[f; m, n]$ cannot be embedded as a $G$-subvariety of
$\Delta_V[g,m]$, symbolically
$$\Delta_V[f; m, n] \not\hookrightarrow \Delta_V[g, m],$$
if $m =
2^{polylog(n)}$ and $n\rightarrow \infty$.
\end{conj}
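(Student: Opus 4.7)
The plan is to establish this conjecture by the obstruction method outlined in the Overview, translating the geometric non-embeddability into a representation-theoretic statement and then exhibiting explicit obstructions for all large $n$. First I would argue by contradiction: suppose $\Delta_V[f;m,n] \hookrightarrow \Delta_V[g,m]$ as a $G$-subvariety, with $G=GL_{m^2}(\C)$. Since both orbit closures are projective $G$-varieties and the embedding is $G$-equivariant, it induces a $G$-equivariant surjection of homogeneous coordinate rings $R_V[g,m] \twoheadrightarrow R_V[f;m,n]$. Passing to the degree-$d$ graded components yields a surjection of finite-dimensional $G$-modules $R_V[g,m](d) \twoheadrightarrow R_V[f;m,n](d)$. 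By Weyl's theorem on reductivity of $GL_{m^2}(\C)$, both sides decompose uniquely into irreducible Weyl modules $V_\lambda$, and the surjection forces every $V_\lambda$ appearing in $R_V[f;m,n](d)$ to appear with at least the same multiplicity in $R_V[g,m](d)$.

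Next I would formalize an obstruction of degree $d$ as an irreducible $V_\lambda$ whose multiplicity in $R_V[f;m,n](d)$ strictly exceeds its multiplicity in $R_V[g,m](d)$, or (in the stronger form that is usually easier to exploit) a $V_\lambda$ appearing in $R_V[f;m,n](d)$ and not at all in $R_V[g,m](d)$. By the previous paragraph, exhibiting such an obstruction for some $d$ contradicts the hypothesized embedding; and doing so for every large $n$ with $m=2^{\mathrm{polylog}(n)}$ gives the conjecture. The problem therefore reduces to the existence of obstructions for the family of pairs $(f_n,g_m)$ as $n\to\infty$.

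To produce such obstructions, the plan follows the two-step strategy of GCT. The first step is to compute, or at least usefully estimate, the $G$-module decompositions of $R_V[g,m]$ and $R_V[f;m,n]$. Because $g=\det_m$ and $f$ is a homogenized permanent, their stabilizers $G_g$ and $G_f$ are large and well understood (for the determinant, essentially $GL_m\times GL_m$ acting by $Y\mapsto AYB^{-1}$ modulo transpose), so by the homogeneous-space description $Gg\cong G/G_g$ and its analogue for $f$, the multiplicities in the coordinate rings can be expressed as branching/plethysm/Kronecker-type multiplicities between $G$ and the stabilizer subgroups. The second step is to show that these multiplicities (and their differences) admit positive formulae -- of the $\#P$/polytope type seen for Littlewood--Richardson and conjectured under PH1, PH2, SH for plethysm, Kronecker, and the type-$B,C,D$ analogues -- so that nonvanishing of a suitable difference can be detected by combinatorial inequalities and/or a polynomial-time search for an obstruction label $\lambda$.

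The main obstacle is the second step: obtaining the positivity of the multiplicity differences, equivalently obtaining positive combinatorial formulae strong enough to compare the two rings. This is precisely the positivity program of GCT6--GCT8, which further reduces to conjectures on canonical bases of quantum groups tied to the Riemann hypothesis over finite fields (\cite{weil2,beilinson,kazhdan1,lusztigbook}). I therefore expect any complete proof along this route to be conditional on that deep positivity; concretely, the forward-looking work is to carry the chain \emph{non-embedding} $\Longrightarrow$ \emph{obstruction existence} $\Longrightarrow$ \emph{positive multiplicity formulae} $\Longrightarrow$ \emph{quantum-group positivity} through to a point where the final link can be attacked by the methods of Weil, Beilinson--Bernstein, Kazhdan--Lusztig, and Lusztig's theory of canonical bases.
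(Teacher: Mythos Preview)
This statement is a \emph{conjecture}, not a theorem, and the paper does not prove it. It is presented as the geometric-invariant-theoretic reformulation of the $NC \neq P^{\#P}$ conjecture over $\C$, and the subsequent chapters (``Obstructions'', ``The flip'') lay out a \emph{program} for attacking it rather than a proof. Your proposal is not a proof either---and you seem to recognize this, since you explicitly say any complete argument along these lines would be conditional on the deep positivity hypotheses.

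That said, as an outline of the intended attack, your proposal tracks the paper's program faithfully: the reduction from non-embedding to the existence of obstructions via the surjection $R_V[g,m]\twoheadrightarrow R_V[f;m,n]$ and Weyl's complete reducibility is exactly what the paper does in the ``Obstructions'' chapter; the use of the large stabilizers $G_g$ and $G_f$ to express multiplicities as branching data is the content of the ``group-theoretic varieties'' discussion; and the reduction of obstruction-existence to PH1/SH/PH2 and ultimately to quantum-group positivity is precisely the flip described in the later chapters and in \cite{GCT6,GCT7,GCT8}. One small terminological point: the paper defines an obstruction as a $V_\lambda$ that \emph{occurs} in $R_V[f;m,n]_d$ but \emph{does not occur} in $R_V[g,m]_d$, rather than the multiplicity-inequality version you also mention; the occurrence version is what the paper's arguments actually use.

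So there is no gap to name beyond the obvious one: the conjecture is open, the positivity hypotheses that would drive the argument are themselves open, and your proposal is (correctly) a summary of the research program, not a proof.
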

This is the statement in geometric invariant theory (GIT) that we sought.

\chapter{Obstructions}
\begin{center} {\Large  Scribe: Paolo Codenotti} \end{center} 

\noindent {\bf Goal:} Define an obstruction to the embedding of 
the $\#P$-class variety in the $NC$-class-variety  and describe  why it should
exist. 

\noindent {\em References:} \cite{GCT1,GCT2} 

\subsection*{Recall} 

\begin{figure}
    \begin{center}
      \epsfig{file=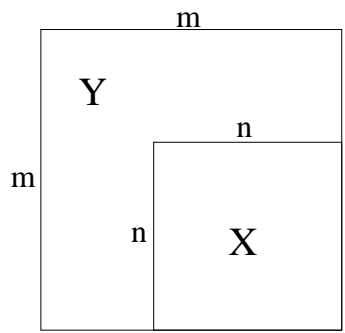, scale=.5}
      \caption{Here $Y$ is a generic $m$ by $m$ matrix, and
      $X$ is an $n$ by $n$ minor.}
      \label{fig:matr}
    \end{center}
\end{figure}

Let us first recall some definitions and results from the last
class. Let $Y$ be a generic $m\times m$  variable 
matrix, and $X$ an $n\times n$ minor of $Y$ (see figure \ref{fig:matr}). Let $g=\det(Y)$, $h=
\perm(X)$, $f=\phi(h)= y^{m-n} \perm(X)$, and 
$V=\Sym^m[Y]$  the set of homogeneous forms of degree $m$ in the entries of 
$Y$. Then $V$ is a $G$-module for
$G=GL(Y)=GL_l(\C)$,  $l=m^2$, with the action of $\sigma \in G$ given by
\[ \sigma: f(Y) \rightarrow f(\sigma^{-1} Y),\] 
where $Y$ is thought of as an $l$-vector, and 
 $P(V)$  a $G$-variety. Let 
$$\Delta_V[f; m, n]=\overline{Gf}\subseteq P(V),$$
and 
$$\Delta_V[g; m]=\overline{Gg}\subseteq P(V)$$
be the class varieties associated with $\#P$ and $NC$. 

\section{Obstructions}

\begin{conj}\cite{GCT1} \label{cembedd1}
There does not exist an embedding $\Delta_V[f; m, n] \hookrightarrow \Delta_V[g; m]$ with
$m=2^{\mbox{polylog}(n)}$, $n\rightarrow\infty$.
\end{conj}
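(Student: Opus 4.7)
The plan is to establish Conjecture~\ref{cembedd1} through the method of \emph{obstructions} as outlined in the overview chapter. Suppose for contradiction that such a $G$-equivariant embedding $\Delta_V[f;m,n] \hookrightarrow \Delta_V[g;m]$ exists for $m=2^{\mathrm{polylog}(n)}$ and infinitely many $n$. Then by the standard duality between closed subvarieties and homogeneous ideals, together with the fact that both varieties are $G$-stable, one obtains a $G$-equivariant surjection of graded coordinate rings $R_V[g;m] \twoheadrightarrow R_V[f;m,n]$. Since $G = GL_{m^2}(\C)$ is reductive (by Weyl), both rings split completely into direct sums of irreducible Weyl modules in each graded degree $d$, and the surjection forces every irreducible appearing in $R_V[f;m,n](d)$ to appear (with at least the same multiplicity) in $R_V[g;m](d)$. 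Hence it suffices to exhibit, for every large enough $n$ and every $m=2^{\mathrm{polylog}(n)}$, some degree $d=d(n,m)$ and some irreducible Weyl module $V_\lambda$ which occurs in $R_V[f;m,n](d)$ but does not occur in $R_V[g;m](d)$. Such a $V_\lambda$ is, by definition, an \emph{obstruction}.

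The first block of work is structural: I would compute, or at least characterize representation-theoretically, the multiplicity function
\[
m_\lambda^g(d) = \dim \mathrm{Hom}_G\bigl(V_\lambda,\, R_V[g;m](d)\bigr)
\]
and the analogous $m_\lambda^f(d)$, using the fact that $g=\det(Y)$ has stabilizer $\mathrm{Stab}_G(g)$ equal (up to scalars) to the group generated by left/right multiplication by $SL_m(\C)\times SL_m(\C)$ together with transposition, while $f=y^{m-n}\mathrm{perm}(X)$ has a much smaller stabilizer built out of permutation and diagonal scalings. Via the algebraic Peter--Weyl theorem for the homogeneous space $G/\mathrm{Stab}(g)$ and for the open orbit inside $\Delta_V[f;m,n]$, these multiplicities can be expressed in terms of dimensions of $\mathrm{Stab}$-invariant subspaces of $V_\lambda$. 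In particular $m_\lambda^g(d)$ is controlled by $SL_m\times SL_m$-invariants, which ties it directly to plethysm and Kronecker-type constants of the form discussed in Chapters on the plethysm and Kronecker problems.

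The second block is the production of obstructions: having written both multiplicities in terms of representation-theoretic structural constants (plethysms, Littlewood--Richardson, Kronecker), I would invoke the positivity/flip program described in the overview to search for a $\lambda$ for which $m_\lambda^f(d)>0$ while $m_\lambda^g(d)=0$. The strategy suggested in the paper is to show that deciding the existence of such $\lambda$ lies in $P$, via the chain \emph{Stretching function is quasipolynomial} $\Longrightarrow$ \textbf{PH1} (polytope model) $\Longrightarrow$ \textbf{PH2} (positivity of the stretching quasipolynomial) $\Longrightarrow$ \textbf{SH} (saturation), and then to upgrade the $P$-algorithm for deciding existence into an unconditional proof of existence for every large $n$, using the heuristic principle from Chapter~1 that a polynomial-time decision procedure often doubles as a constructive existence proof.

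The main obstacle, and the reason the conjecture remains open, is exactly this last step. Establishing \textbf{PH1}, \textbf{PH2}, and \textbf{SH} for the plethysm-type constants governing $m_\lambda^g$ and $m_\lambda^f$ is already deep; and even assuming them, one must exploit the asymmetry between the large stabilizer of $\det$ and the small stabilizer of $\mathrm{perm}$ to force a \emph{representation-theoretic gap} that persists when $m=2^{\mathrm{polylog}(n)}$. In the language of later GCT papers, the proof ultimately reduces to conjectures on quantum groups related to the Riemann hypothesis over finite fields (cf.\ Figure~\ref{fbasic}); thus the hard part will not be the formal contradiction argument above but the construction of obstructions, which in turn rests on proving that the relevant generalized Littlewood--Richardson, Kronecker, and plethysm coefficients admit positive combinatorial formulae of the sort discussed in Chapters~\ref{cgct5} and onward.
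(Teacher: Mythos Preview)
Your proposal is not a proof but an outline of the intended strategy, and you explicitly acknowledge this (``the reason the conjecture remains open''). That is exactly the status in the paper as well: Conjecture~\ref{cembedd1} is \emph{not} proved anywhere in the paper; it is stated as a conjecture and the surrounding chapters develop the obstruction-based program you describe. Your outline tracks the paper's approach quite faithfully: the contradiction setup via the surjection $R_V[g]\twoheadrightarrow R_V[f]$ and complete reducibility, the definition of an obstruction as a Weyl module occurring in $R_V[f]_d$ but not $R_V[g]_d$, the use of the stabilizers of $\det$ and $\mathrm{perm}$ (Proposition~\ref{prop:char}) to reduce multiplicities to invariant-theoretic data, the connection to plethysm and Kronecker constants, and finally the flip via PH1/PH2/SH. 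This is precisely the program laid out in Chapters~13--15 and the chapter on the flip; there is no alternative or shorter route in the paper to compare against.
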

This implies Valiant's conjecture that 
the permanent cannot be computed by circuits of polylog depth.
Now we discuss  how to go about proving the conjecture.

Suppose to the contrary,
\begin{equation}\label{equ:emb}
\Delta[f; m, n]\hookrightarrow \Delta_V[g; m].
\end{equation}

We  denote $\Delta_V[f; m, n]$ by $\Delta_V[f]$, and $\Delta_V[g; m]$ by $\Delta_V[g]$.
Let $R_V[g]$ be the homogeneous coordinate ring of $\Delta_V[g]$. The embedding (\ref{equ:emb}) implies
existence of a surjection:
\begin{equation}\label{equ:surj}
R_V[f]\twoheadleftarrow R_V[g]
\end{equation}
This is a basic fact from algebraic geometry. The reason is that $R_V[g]$ is the set of homogeneous polynomial
functions on the cone $C$ of $\Delta_V[g]$, and any such function
$\tau$ can be restricted to $\Delta_V[f]$ (see
figure \ref{figcone}). Conversely, any polynomial function on $\Delta_V[f]$ can be extended to a homogeneous
polynomial function on the cone $C$.

\begin{figure}
    \begin{center}
      \psfragscanon
      \psfrag{C}{$C$}
      \psfrag{T}{$\tau$}
      \psfrag{D}{$\Delta_V[f]$}
      \epsfig{file=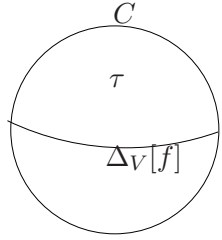, scale=.5}
      \caption{{\small $C$ denotes the cone of $\Delta_V[g]$.}}
      \label{figcone}
    \end{center}
\end{figure}

Let $R_V[f]_d$ and $R_V[g]_d$ be the degree $d$ components of $R_V[f]$ and
$R_V[g]$. These are  $G$-modules since $\Delta_V[f]$ and $\Delta_V[g]$ are
$G$-varieties. The surjection
(\ref{equ:surj}) is degree preserving. So  there is  a surjection
\begin{equation}\label{equ:surd}
R_V[f]_d\twoheadleftarrow R_V[g]_d
\end{equation}
for every $d$.
Since $G$ is reductive, both $R_V[f]_d$ and $R_V[g]_d$ are direct sums
of irreducible $G$-modules. Hence 
the surjection (\ref{equ:surd}) implies that $R_V[f]_d$ can be
embedded as  a $G$ submodule of $R_V[g]_d$.

\begin{defi}
We say that a Weyl-module $S=V_\lambda(G)$ is an \emph{obstruction} for the embedding (\ref{equ:emb}) (or,
equivalently, for the pair $(f, g)$) if $V_\lambda(G)$ occurs in $R_V[f;m, n]_d$, but not in $R_V[g; m]_d$, for
some $d$. Here  occurs means
 the multiplicity of $V_\lambda(G)$ in the decomposition of $R_V[f;m,
n]_d$ is nonzero.
\end{defi}

If an obstruction exists for given $m, n$, then the embedding (\ref{equ:emb}) does not exist.

\begin{conj}[GCT2]
An obstruction exists for the pair $(f, g)$
for all large enough $n$ if $m=2^{\textrm{polylog}(n)}$.
\end{conj}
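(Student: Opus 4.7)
The plan is to reduce the existence of an obstruction to an explicit, representation-theoretic comparison between the coordinate rings $R_V[f]_d$ and $R_V[g]_d$, leveraging the very large stabilizers of $g=\det(Y)$ and $f = y^{m-n}\mathrm{perm}(X)$ in $G = GL_{m^2}(\C)$. Because both orbit closures are $G$-varieties, each graded piece decomposes as a direct sum of Weyl modules $V_\lambda(G)$, and an obstruction is precisely an irreducible $V_\lambda(G)$ that occurs in the first decomposition but has multiplicity zero in the second.

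First I would analyze the ``upper bound'' side, namely $R_V[g]_d$. Let $H_g \subset G$ be the stabilizer of $[g] \in P(V)$; a classical theorem of Frobenius identifies $H_g$ with $(GL_m \times GL_m)/\C^\ast \rtimes \Z/2$, acting on $Y$ by $(A,B,\tau)\cdot Y = \tau(AYB^{-1})$. By the algebraic Peter--Weyl theorem,
\[
\C[G/H_g] \;=\; \bigoplus_\lambda V_\lambda(G)^\ast \otimes V_\lambda(G)^{H_g},
\]
and the coordinate ring $R_V[g]_d$ of the orbit \emph{closure} embeds as a graded $G$-submodule of this. Consequently, a sufficient condition for $V_\lambda(G)$ to be \emph{absent} from $R_V[g]_d$ is that $V_\lambda(G)^{H_g} = 0$. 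This is a purely representation-theoretic vanishing statement for $(GL_m\times GL_m)$-invariants in the Weyl module $V_\lambda(G)$, and by standard branching to $GL_{m^2} \supset GL_m\times GL_m$ it is controlled by plethysm-type constants of the form ``multiplicity of the trivial representation in the restriction of $V_\lambda$''. I would catalogue the $\lambda$'s that are forced to vanish in $R_V[g]_d$ using these branching/plethysm computations.

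Next I would do the analogous analysis for $R_V[f]_d$, where the stabilizer $H_f$ is much smaller and more combinatorial: it contains the wreath product $S_n \wr \Z/2$ coming from row/column permutations of $X$, diagonal rescalings $(D_1,D_2,c)$ with $\det(D_1D_2)c^{m-n}=1$, the $GL_{m^2-n^2-1}$ that fixes $y$ and $X$ in the remaining directions, and a unipotent radical of maps that send $X$ into $y\cdot(\text{affine translate})$. Because $H_f$ is far smaller than $H_g$ (in particular it does \emph{not} contain any analogue of the full $GL_m\times GL_m$ acting by row--column operations), many more Weyl modules $V_\lambda(G)$ have nontrivial $H_f$-invariants. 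The aim is then to produce, for each large $n$, an explicit $\lambda = \lambda(n)$ satisfying two incompatible conditions: (i) $V_\lambda(G)^{H_g} = 0$, proved via the branching rule above; and (ii) $V_\lambda(G)$ actually appears in $R_V[f]_d$ for some $d = \poly(n)$, proved by exhibiting a nonzero $G$-equivariant map from $V_\lambda(G)$ into $R_V[f]_d$, or equivalently a nonzero $H_f$-invariant in $V_\lambda(G)^\ast$ that extends to the closure. Good candidates for $\lambda$ are partitions heavily biased by the ``padding'' weight $m-n$ (e.g.\ with a long first row reflecting the $y^{m-n}$ factor), since these are the ones most visibly obstructed by the $SL_m\times SL_m$ symmetry of the determinant.

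The principal obstacle — and the reason this statement is only a conjecture — is twofold. First, one must control $R_V[g]_d$ on the orbit \emph{closure} rather than on the orbit; passing from the Peter--Weyl decomposition of $\C[G/H_g]$ to $R_V[g]_d$ requires understanding the boundary of $\overline{Gg}$, which contains all degenerations of determinants and whose singularities are far from understood. This is where the program invokes the positivity hypotheses for plethysm and Kronecker coefficients (Conjecture on $a_{\lambda,\mu}^\pi$) and ultimately the quantum-group/Riemann-hypothesis-over-finite-fields conjectures alluded to in the Overview: one hopes to prove positive formulas for the multiplicities in both $R_V[g]_d$ and $R_V[f]_d$, and then to show, asymptotically in $n$ (with $m = 2^{\mathrm{polylog}(n)}$), that the positive formula for $R_V[f]_d$ produces a $\lambda$ for which the positive formula for $R_V[g]_d$ vanishes. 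Second, the obstruction must exist for \emph{all} sufficiently large $n$ uniformly, which requires a family construction $\lambda(n)$ rather than isolated examples; the hope is that the ``flip'' — a polynomial-time algorithm that on input $n$ produces $\lambda(n)$ — both exists and, by the heuristic that feasible existence proofs mirror feasible algorithms, gives the asymptotic guarantee automatically.
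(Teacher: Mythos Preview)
The statement you are attempting is a \emph{conjecture} in the paper; there is no proof in the source to compare against. What the paper supplies instead is (i) heuristic evidence --- the Proposition following Conjecture~\ref{cgct2}, which shows that if $\Delta_V[g]=X(\Pi_g)$ and the permanent is genuinely hard, then an obstruction exists --- and (ii) a concrete program (the ``flip'', Chapter on PHflip/PH1/SH) that reduces existence of obstructions to polynomial-time decidability of the multiplicities $S_d^\lambda[f]$ and $S_d^\lambda[g]$, and from there to saturated integer programming on conjectural polytopes.

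Your outline is a reasonable sketch of the GCT strategy and correctly names the central obstacle (controlling the orbit \emph{closure}, not just the orbit). Two points are worth flagging. First, a technical slip: the sufficient condition for $V_\lambda(G)$ to be absent from $R_V[g]_d$ is not $V_\lambda(G)^{H_g}=0$. The stabilizer $H_g$ acts on the line $\C g$ via a nontrivial character, so what one needs is that the dual $V_\lambda(G)^*$ contain no $H_g$-submodule isomorphic to $(\C g)^{\otimes d\,*}$ --- semi-invariants of the correct weight, not invariants. That is exactly the paper's definition of $\Pi_g(d)$ and the content of Proposition~\ref{pgct2b}. Second, a difference in emphasis: the paper's proposed route does not proceed through direct branching computations for $GL_{m^2}\supset GL_m\times GL_m$ as you describe, but rather through proving that the stretching functions $\tilde S_d^\lambda[f](k)$, $\tilde S_d^\lambda[g](k)$ are quasipolynomials (Theorem in \cite{GCT6}) governed by explicit polytopes (PH1) with a saturation property (SH), after which linear programming decides nonvanishing and one hopes to extract an explicit family $\lambda(n)$. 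Your Peter--Weyl/branching picture is essentially the classical ($q=1$) shadow of what the paper pushes into the nonstandard quantum-group machinery of \cite{GCT4,GCT7,GCT8}; it is the right intuition, but the paper regards the quantum-group route as the one with a chance of yielding the required positivity.
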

This implies Conjecture~\ref{cembedd1}.
In essence, this   turns 
a nonexistence problem (of polylog depth circuit for the permanent)
into an existence problem (of an obstruction).

If  we replace   the  determinant here
by any other complete function in $NC$,
an obstruction need not exist. Because,
as we shall see in the next lecture,  the existence of
an obstruction crucially depends on the exceptional nature of the 
class variety constructed from the determinant.
The main  goals of GCT in this context are:
\begin{enumerate}
\item  understand the exceptional nature of the
class varieties for $NC$ and $\#P$, and 
\item  use it to prove the existence of obstructions.
\end{enumerate}

\subsection{Why are the class  varieties  exceptional?}
We now elaborate on the exceptional nature of the class varieties.
Its significance for the existence of obstructions will be discussed in
the next lecture.

Let $V$ be a $G$-module, $G=GL_n(\C)$. Let $P(V)$ be a projective variety over $V$. Let $v\in P(V)$, and recall
$\Delta_V[v]=\overline{Gv}$. Let $H=G_v$ be the stabilizer of $v$, that is, $G_v=\{\sigma\in G|\sigma v = v\}$.
\begin{defi}
We say that $v$ is \emph{characterized by its stabilizer} $H=G_v$ if $v$ is the only point in $P(V)$ such that
$hv=v,\ \forall h\in H$.
\end{defi}
If $v$ is characterized by its stabilizer, then $\Delta_V[v]$ is completely determined by the group triple
$H\hookrightarrow G\hookrightarrow K=GL(V)$.

\begin{defi}
The orbit closure $\Delta_V[v]$, when $v$ is characterized by its stabilizer, is called a \emph{group-theoretic
variety}.
\end{defi}

\begin{prop} \cite{GCT1}\label{prop:char}
\begin{enumerate}
\item The determinant $g=\det(Y)\in P(V)$
 is characterized by its stabilizer. 
Therefore $\Delta_V[g]$ is group theoretic.
\item The permanent $h=\perm(X)\in P(W)$, where $W=\Sym^n(X)$,
 is also characterized by its stabilizer. Therefore
$\Delta_W[h]$ is also group theoretic.
\item Finally, $f=\phi(h)\in P(V)$ is also characterized by its stabilizer. 
Hence $\Delta_V[f]$ is also  group theoretic.
\end{enumerate}
\end{prop}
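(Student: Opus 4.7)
The plan is to prove each of the three parts by a common two-step recipe: (1) identify a sufficiently large subgroup of the stabilizer $H = G_v$ whose action we can analyze explicitly, and (2) decompose the ambient $G$-module under this subgroup and show that the line $[v]$ is the unique line fixed with the character that $v$ carries. Throughout, ``fixed by $H$ in $P(V)$'' means spanning a one-dimensional $H$-subrepresentation of $V$, so we are really counting one-dimensional subrepresentations with a given character.

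For part (1), the stabilizer $H_g$ of $g = \det(Y)$ in $G = GL(Y)$ contains, by the classical Frobenius theorem on linear preservers of the determinant, all transformations $Y \mapsto AYB$ and $Y \mapsto AY^TB$ with $A,B \in GL_m(\C)$ and $\det(A)\det(B)=1$. Restricting to the subgroup $GL_m \times GL_m \subset H_g$, the module $V = \Sym^m(Y^*)$ decomposes by Cauchy's identity as $\bigoplus_{\lambda \vdash m,\ \ell(\lambda)\le m} V_\lambda(\C^m) \otimes V_\lambda(\C^m)$. A one-dimensional irreducible of $GL_m \times GL_m$ requires both tensor factors to be one-dimensional, forcing $\lambda$ to be a rectangle with all parts equal; the constraints $|\lambda|=m$ and $\ell(\lambda)\le m$ then force $\lambda=(1^m)$. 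This single one-dimensional summand $\det\otimes\det$ is spanned precisely by $\det(Y)$, so $[g]$ is the only $H_g$-fixed point in $P(V)$.

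For part (2), the stabilizer $H_h$ of $h = \perm(X)$ in $GL(X)$ contains the torus $T = T_L \times T_R$ of pairs of invertible diagonal matrices (with appropriate product-one normalization) together with row/column permutations and the transpose. Decompose $W = \Sym^n(X^*)$ into $T$-weight spaces; the weight of a monomial $\prod x_{ij}^{e_{ij}}$ is the pair of row and column multidegrees $\bigl((\sum_j e_{ij})_i,\ (\sum_i e_{ij})_j\bigr)$. The weight of $\perm(X)$ is the all-ones pair, and the corresponding weight space is spanned by $\{\prod_i x_{i,\sigma(i)} : \sigma \in S_n\}$. As a representation of $S_n \times S_n$ (acting by permuting rows and columns) this is isomorphic to the left regular representation of $S_n$, whose space of fixed vectors is one-dimensional and spanned by $\sum_\sigma \prod_i x_{i,\sigma(i)} = \perm(X)$. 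Hence $[h]$ is the only $H_h$-fixed line in $P(W)$.

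For part (3), partition the entries of $Y$ into $X$, the distinguished extra variable $y$, and the remaining $m^2-n^2-1$ entries $Z$. The stabilizer $H_f \subset GL(Y)$ of $f = y^{m-n}\perm(X)$ contains (a) the image of $H_h$ acting on $X$ and trivially on $y$ and $Z$, (b) a one-parameter subgroup scaling $y$ (compensated by an appropriate scaling on $X$ so as to fix $f$ in $P(V)$), and (c) the full group $GL(Z)$ acting only on the $Z$-variables. Any $H_f$-semi-invariant form in $V$ must in particular be $GL(Z)$-semi-invariant, which forces it to involve no $Z$-variables, i.e.\ to lie in $\Sym^m$ of the span of $X \cup \{y\}$. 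Inside this subspace the scaling (b) separates pure terms $y^k p_{m-k}(X)$ by weight, so fixed forms must be of this shape for a single $k$; and (a) plus part (2) forces $p_{m-k}(X)$ to be a scalar multiple of a power of $\perm(X)$, which by the degree constraint $|p_{m-k}|=m-k$ and irreducibility of $\perm$ pins down $k=m-n$ and $p_{m-k}=c\cdot\perm(X)$. Thus $[f]$ is the only $H_f$-fixed line in $P(V)$.

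The main obstacle is the bookkeeping in part (3): the scaling on $y$ in item (b) must be coordinated with a scaling on $X$ so as to preserve $f$, and one must verify that the resulting character cut out of $V$ really is uniquely realized by $f$ rather than by some spurious combination in a different $(k, \lambda)$ weight sector. Once the character constraints are tracked correctly, the representation-theoretic decomposition is routine and the three parts follow from a uniform pattern: find a big enough subgroup of the stabilizer so that the isotypic decomposition of $V$ under it has a unique one-dimensional piece of the correct character.
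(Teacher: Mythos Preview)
Your approach is sound and considerably more detailed than the paper's own argument, which merely records the stabilizers (citing Frobenius and Marcus--May type results) and asserts the characterization as ``a basic fact in classical invariant theory,'' with part~(3) dismissed as ``Similar.'' Your Cauchy-decomposition proof of part~(1) and the weight-space/regular-representation proof of part~(2) are both correct, and your insistence on tracking the \emph{character} that $v$ carries is exactly right: without it, part~(2) would fail, since the determinant is also a semi-invariant line for the monomial-by-monomial group, but with the sign character on $S_n\times S_n$ rather than the trivial one.

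There is, however, a genuine gap in part~(3). The step ``(a) plus part~(2) forces $p_{m-k}(X)$ to be a scalar multiple of a power of $\perm(X)$'' is not a valid inference: part~(2) concerns only $\Sym^n(X)$, and the claim is false in higher degree. For instance, with $n=2$ the weight-$(2,2;2,2)$ space in $\Sym^{4}(X)$ contains the $S_2\times S_2$-invariant lines $\C\cdot(x_{11}^2x_{22}^2+x_{12}^2x_{21}^2)$ and $\C\cdot x_{11}x_{12}x_{21}x_{22}$ in addition to $\C\cdot\perm(X)^2$, so $H_h$-semi-invariants in $\Sym^{2n}(X)$ are not exhausted by $\perm^2$. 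Consequently the ``degree plus irreducibility'' step that follows cannot pin down $k$.

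The fix is easy and already implicit in your setup: use the character itself to determine $k$. The torus $T_L\times T_R\subset H_h$ acts on $f$ with the all-ones weight in the $X$-variables, so any $p$ matching $\chi_f|_{H_h}$ must have $X$-degree exactly $n$; hence $k=m-n$ automatically. (Equivalently, the $y$-scaling in~(b) acts on $f$ with weight $m-n$, which forces $k=m-n$ directly, making the compensation you worry about unnecessary.) Once $k=m-n$, you are in $\Sym^n(X)$ and part~(2) applies verbatim. Reorder the argument accordingly and drop the ``power of perm'' detour.
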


\begin{proof}
\noindent (1) 
It is a fact in classical representation theory that the stabilizer of $\det(Y)$ in $G=GL(Y)=GL_{m^2}(\C)$ is
the subgroup $G_{\det}$ that consists of linear transformations of the form $Y\rightarrow AY^*B$, where $Y^*=Y$
or $Y^t$, for any $A, B \in GL_m(\C)$. It is clear that linear transformation of this form stabilize the
determinant since:
\begin{enumerate}
\item $\det(AYB) = det(A)det(B)det(Y)= c\det(Y)$, where $c=\det(A)\det(B)$. Note that the constant $c$ doesn't matter
because we get the same point in the projective space.
\item $\det(Y^*)=\det(Y)$.
\end{enumerate}
It is a basic fact in classical invariant theory that
$\det(Y)$ is the only point in $P(V)$ stabilized by $G_{\det}$.
Furthermore, the stabilizer $G_{\det}$ is reductive, since
its  connected part is $(G_{\det})_\circ\approx
GL_m\times GL_m$ with the  natural embedding
\[(G_{\det})_\circ =  GL_m\times GL_m\hookrightarrow 
GL(\C^m \otimes \C^m)=GL_{m^2}(\C)=G.\]

\noindent (2) 
The stabilizer of $\perm(x)$ is the subgroup $G_{\perm} \subseteq GL(X)=GL_{n^2}(\C)$ generated by linear
transformations of the form $X\rightarrow \lambda X^* \mu$,
where $X^*=X or X^t$, and $\lambda$ and $\mu$ are diagonal
(which change the permanent by a constant factor) or permutation matrices (which do not change the permanent).

Finally, the discrete component of $G_{\perm}$ is isomorphic to $S_2\rtimes S_n \times S_n$, where $\rtimes$
denotes semidirect product. The continuous part is $(\C^*)^n \times (\C^*)^n$.
So $G_{\perm}$ is reductive.

\noindent (3)  Similar.

\end{proof}

The main significance of this proposition is the following. Because $\Delta_V[g], \Delta_V[f]$, and
$\Delta_W[h]$ are group theoretic, the algebraic geometric problems concerning these varieties can be ``reduced"
to problems in the theory of quantum groups. So the plan is:
\begin{enumerate}
\item Use the theory of quantum groups to understand the structure of the group triple associated with the
algebraic variety.
\item Translate this understanding to the structure of the algebraic variety.
\item Use this to show the existence of obstructions.
\end{enumerate}

\chapter{Group theoretic varieties}
\begin{center} {\Large  Scribe: Joshua A. Grochow} \end{center}

\noindent {\bf Goal:} In this lecture we continue our discussion of group-theoretic varieties. We describe why obstructions should exist, and 
why the exceptional group-theoretic nature of the class varieties is
crucial for this existence.

\subsection*{Recall} 
Let $G=GL_n(\C)$,  $V$ a $G$-module, and  $\PP(V)$  the associated 
 projective space.  Let $v \in \PP(V)$ be a point characterized by its stabilizer $H=G_v \subset G$. In other words, $v$ is the only point in $\PP(V)$ stabilized by $H$.  Then $\Delta_V[v] = \overline{Gv}$ is called a \emph{group-theoretic variety} because it is completely determined by the group triple $$H \hookrightarrow G \hookrightarrow GL(V).$$

The simplest example of a group-theoretic variety is a variety of the 
form $G/P$ that we described in the earlier lecture.
Let $V=V_\lambda(G)$ be a Weyl module of $G$ and
 $v_\lambda \in \PP(V)$  the highest weight point (recall: the unique point
stabilized by the Borel subgroup $B \subset G$ of lower triangular matrices).  Then the stabilizer of $v_\lambda$ consists of block-upper triangular matrices, where the block sizes are determined by $\lambda$:
$$
P_\lambda := G_{v_\lambda} = \left(\begin{array}{ccc|cc|ccc}
* & * & * & * & * & * & * & * \\
* & * & * & * & * & * & * & * \\
* & * & * & * & * & * & * & * \\ \hline
0 & 0 & 0 & * & * & * & * & * \\
0 & 0 & 0 & * & * & * & * & * \\ \hline
0 & 0 & 0 & 0 & 0 & * & * & * \\
0 & 0 & 0 & 0 & 0 & * & * & * \\
0 & 0 & 0 & 0 & 0 & * & * & *
\end{array} \right)
$$

 The orbit  $\Delta_V[v_\lambda]=Gv_\lambda \cong G/P_\lambda$ is
a group-theoretic variety determined entirely by the  triple
$$
P_\lambda=G_{v_\lambda} \hookrightarrow G \hookrightarrow K=GL(V).
$$

The group-theoretic varieties of main interest in GCT are the
class varieties associated with the various complexity classes.

\section{Representation theoretic data}
The {\bf main principle} guiding GCT 
is that the algebraic geometry of a group-theoretic variety
 ought to be completely determined by the representation theory of the corresponding group triple.  This is a natural extension of work already pursued in mathematics by Deligne and Milne on Tannakien categories \cite{deligneMilne}, showing that an algebraic group is completely determined by its representation theory.
So the  goal is to associate to a group-theoretic variety some representation-theoretic data that will analogously 
 capture the information in the variety completely. 
We shall now illustrate this for the class variety for $NC$.
First a few definitions.

Let $v \in \PP(V)$ be the point as above
characterized by its stabilizer $G_v$. This means the line $\C v \subseteq V$
corresponding to $v$ is a one-dimensional representation of $G_v$.
  Thus $(\C v)^{\otimes d}$ is a one-dimensional degree $d$ representation, i.e. the representation  $\rho: G \to GL(\C v) \cong \C^*$
is  polynomial of degree $d$ in the entries of the matrix of an element 
 in $G$.  Recall that $\C[V]$ is the coordinate ring of $V$, and $\C[V]_d$ is its degree $d$ homogeneous component, so $(\C v)^{\otimes d} \subseteq \C[V]_d$.  

To each $v \in \PP(V)$ that is characterized by its stabilizer, we associate 
a representation-theoretic data, which is 
the set of $G$-modules 
$$\Pi_v = \bigcup_d \Pi_v(d),$$ where $\Pi_v(d)$ is the set of all irreducible $G$-submodules $S$ of $\C[V]_d$ whose duals $S^*$ do \emph{not} contain a $G_v$-submodule isomorphic to $(\C v)^{\otimes d *}$ (the dual of $(\C v)^{\otimes d}$).  The following proposition  elucidates the importance of this data:

\begin{prop} \cite{GCT2} \label{pgct2b}  $\Pi_v \subseteq I_V[v]$ (where $I_V[v]$ is the ideal of the projective variety $\Delta_V[v] \subseteq \PP(V)$). \end{prop}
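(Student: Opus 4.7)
The plan is to prove the contrapositive of Proposition~\ref{pgct2b}: if $S \not\subseteq I_V[v]$, then $S^*$ must contain a copy of $(\C v)^{\otimes d *}$ as a $G_v$-submodule. The essential content is that vanishing on the orbit closure is detected by the single evaluation functional $\mathrm{ev}_v : f \mapsto f(v)$, and the $G_v$-equivariance of this functional forces a specific one-dimensional $G_v$-type to appear in $S^*$.

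First, I would reduce the condition $S \subseteq I_V[v]$ to the pointwise condition $\mathrm{ev}_v|_S \equiv 0$. Since $\Delta_V[v] = \overline{Gv}$ and $S$ is $G$-stable, $f$ vanishes on $\Delta_V[v]$ for every $f \in S$ iff $f(gv) = 0$ for all $g \in G$ and $f \in S$, iff $(g^{-1}\!\cdot f)(v) = 0$ for all such $g,f$; as $g^{-1}\! \cdot f$ ranges over all of $S$ this collapses to $\mathrm{ev}_v|_S \equiv 0$. Next, I would compute how $G_v$ acts on $\mathrm{ev}_v$. Because $G_v$ stabilizes $[v] \in \PP(V)$, there is a character $\chi : G_v \to \C^*$ with $h \cdot v = \chi(h) v$, so $\C v \cong \C_\chi$ as a $G_v$-module. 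The intrinsic formulation of ``evaluating a degree-$d$ homogeneous form at $[v]$'' is as a $G_v$-equivariant map
\[
\mathrm{ev}_v \;:\; \C[V]_d \;\longrightarrow\; (\C v)^{\otimes d *},
\]
whose target is the fiber of the tautological line bundle $\mathcal{O}_{\PP(V)}(d)$ at $[v]$.

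Restricting to $S$ yields a $G_v$-equivariant map $S \to (\C v)^{\otimes d *}$, which by the first step is nonzero precisely when $S \not\subseteq I_V[v]$; in that case, since its target is one-dimensional, it is surjective. A nonzero $G_v$-equivariant map from $S$ onto a one-dimensional $G_v$-module $W$ produces, via duality (and the canonical identification $W \cong W^{**}$), a one-dimensional $G_v$-submodule of $S^*$ of the appropriate type, which with $W = (\C v)^{\otimes d *}$ supplies the copy of $(\C v)^{\otimes d *}$ inside $S^*$ whose existence contradicts $S \in \Pi_v$. The chief obstacle is simply bookkeeping with duality conventions: one must be careful whether ``$S^*$ contains $(\C v)^{\otimes d *}$ as a $G_v$-submodule'' is being read via an equivariant surjection $S \twoheadrightarrow (\C v)^{\otimes d *}$, an embedding $(\C v)^{\otimes d *} \hookrightarrow S^*$, or the matching character condition, and to match this to the fiber interpretation of evaluation. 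Once these conventions are pinned down, the proof is a direct formal consequence of density of $Gv$ in $\Delta_V[v]$ together with the basic equivariance of projective evaluation, and requires no further structural input about $v$ or $G_v$.
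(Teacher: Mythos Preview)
Your proposal is correct and follows essentially the same route as the paper: assume $S \not\subseteq I_V[v]$, use density of $Gv$ in $\Delta_V[v]$ together with $G$-stability of $S$ to reduce to nonvanishing of the evaluation functional at $v$, and then observe that this evaluation (restriction to the line $\C v$) is $G_v$-equivariant, so its dual furnishes the forbidden one-dimensional $G_v$-submodule in $S^*$. The paper phrases this as a proof by contradiction rather than contrapositive and is terser about the equivariance, but the argument is the same; your explicit caution about the duality bookkeeping is well placed, since that is the only spot where one can slip.
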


\begin{proof} Fix $S \in \Pi_v(d)$.  Suppose, for the sake of contradiction, that $S \nsubseteq I_V[v]$.  Since $S \subseteq \C[V]$, $S$ consists of ``functions'' on the variety $\PP(V)$ (actually homogeneous polynomials on $V$).  The coordinate ring of $\Delta_V[v]$ is $\C[V]/I_V[v]$, and since $S \nsubseteq I_V[v]$, $S$ must not vanish identically on $\Delta_V[v]$.  Since the orbit $Gv$ is dense in $\Delta_V[v]$, $S$ must not vanish identically on this single orbit $Gv$.  Since $S$ is a $G$-module, if $S$ were to vanish identically on the line $\C v$, then it would vanish on the entire orbit $Gv$, so $S$ does not vanish identically on $\C v$.

Now $S$ consists of functions of degree $d$.
Restrict them  to the line $\C v$. The dual of this restriction gives an 
injection of $(\C v)^{\otimes d *}$ as a $G_v$-submodule of $S^*$,
contradicting the definition of $\Pi_v(d)$.  \end{proof}

\section{The second fundamental theorem}
We  now ask essentially the reverse question: when does the
representation theoretic data  $\Pi_v$ generate the ideal $I_V[v]$? 
 For if $\Pi_v$ generates $I_V[v]$, then $\Pi_v$ completely captures the coordinate ring $\C[V]/I_V[v]$, and hence the variety $\Delta_V[v]$.

\begin{thm} [Second fundamental theorem of invariant theory for $G/P$]
The $G$-modules in
$\Pi_{v_\lambda}(2)$ generate the ideal $I_V[v_\lambda]$ of the orbit
 $Gv_\lambda \cong G/P_\lambda$, when $V=V_\lambda(G)$.  \end{thm}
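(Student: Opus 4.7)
The plan is to identify both $\Pi_{v_\lambda}(2)$ and the coordinate ring $R_V[v_\lambda]$ explicitly in representation-theoretic terms, and then show that the quotient of $\C[V]$ by the ideal generated by $\Pi_{v_\lambda}(2)$ has, in each degree, dimension no larger than the corresponding degree piece of $R_V[v_\lambda]$. By Proposition~\ref{pgct2b} we already have the ideal-theoretic inclusion $\Pi_{v_\lambda} \subseteq I_V[v_\lambda]$, so it is the reverse estimate that carries all the content.

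First I would pin down $\Pi_{v_\lambda}(2)$. We have $\C[V]_2 = \Sym^2 V_\lambda^*$, and as a $G$-module this decomposes as $V_{2\lambda}^* \oplus \bigoplus_\mu V_\mu^*$, the first summand being the \emph{Cartan component}, which contains the highest weight vector $v_\lambda^* \cdot v_\lambda^*$ of weight $2\lambda$. A summand $V_\mu^*$ lies in $\Pi_{v_\lambda}(2)$ iff its dual $V_\mu$ contains no $P_\lambda$-submodule isomorphic to $(\C v_\lambda)^{\otimes 2}$; by analyzing $P_\lambda$-weights of highest weight vectors this happens precisely when $\mu \neq 2\lambda$. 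Hence $\Pi_{v_\lambda}(2)$ is exactly the direct sum of the non-Cartan summands of $\Sym^2 V_\lambda^*$. On the other side, since $Gv_\lambda \cong G/P_\lambda$ is closed and $V_\lambda$ arises from the ample line bundle $L_\lambda$ on $G/P_\lambda$, the Borel--Weil theorem gives
\[ R_V[v_\lambda] \;=\; \bigoplus_{d\ge 0} H^0(G/P_\lambda, L_\lambda^{\otimes d})^* \;=\; \bigoplus_{d\ge 0} V_{d\lambda}^*, \]
with multiplication the Cartan projection $V_{d\lambda}^* \otimes V_{e\lambda}^* \twoheadrightarrow V_{(d+e)\lambda}^*$.

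Let $J \subseteq \C[V]$ be the ideal generated by $\Pi_{v_\lambda}(2)$ and set $B = \C[V]/J$. Proposition~\ref{pgct2b} gives a graded surjection of $G$-algebras $B \twoheadrightarrow R_V[v_\lambda]$, so to conclude $J = I_V[v_\lambda]$ it suffices to show $\dim B_d \le \dim V_{d\lambda}^*$ for every $d$, which I would do by induction on $d$. The base cases $d=0,1,2$ are immediate from the construction of $J$. For the induction step: $B$ is generated in degree one, so $B_d$ is a $G$-quotient of $V_\lambda^* \otimes B_{d-1} = V_\lambda^* \otimes V_{(d-1)\lambda}^*$. By Pieri-type considerations this tensor product contains $V_{d\lambda}^*$ with multiplicity one and possibly other summands $V_\nu^*$ with $\nu$ strictly lower than $d\lambda$ in dominance order; the required bound then amounts to showing that every such non-Cartan summand is killed in $B_d$.

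The main obstacle, and the crux of the proof, is precisely this last step: establishing that the quadratic relations in $\Pi_{v_\lambda}(2)$ already kill, inductively, every non-Cartan summand arising in each degree. Equivalently, one must show that the kernel of the iterated Cartan projection $\Sym^d V_\lambda^* \twoheadrightarrow V_{d\lambda}^*$ is generated, as a $\C[V]$-ideal, by its degree-two part. The standard route I would follow is to invoke Kempf's vanishing theorem for the nef line bundles $L_\lambda^{\otimes d}$ on $G/P_\lambda$, which yields projective normality of the embedding $G/P_\lambda \hookrightarrow \PP(V_\lambda)$, and then use a Koszul-type argument (or the BGG resolution) to push projective normality to quadratic generation of the ideal; for classical $G/P$ this is essentially the content of the Lancaster--Towber and Kostant theorems on the Plücker-type relations. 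Once this is in hand, the inductive comparison $\dim B_d \le \dim V_{d\lambda}^*$ closes, the surjection $B \twoheadrightarrow R_V[v_\lambda]$ becomes an isomorphism, and consequently $\langle \Pi_{v_\lambda}(2) \rangle = I_V[v_\lambda]$, as claimed.
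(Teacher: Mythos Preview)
Your reduction is correct: you identify $\Pi_{v_\lambda}(2)$ with the non-Cartan part of $\Sym^2 V_\lambda^*$, which is exactly $I_V[v_\lambda]_2$, so the theorem becomes the statement that the homogeneous ideal of $G/P_\lambda \hookrightarrow \PP(V_\lambda)$ is generated in degree two. That is indeed Kostant's theorem, and invoking it (or the Lancaster--Towber version) finishes the argument.

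This is, however, a genuinely different route from the paper's. The paper does not prove the general statement at all; it only treats the Grassmannian case (Chapter~16), and there the argument is combinatorial: one writes down the Van der Waerden syzygies explicitly, proves a \emph{straightening lemma} showing that every monomial in Pl\"ucker coordinates can be rewritten modulo these quadratic syzygies as a combination of standard monomials, and then checks that standard monomials are linearly independent as functions on the Grassmannian. From this both the second fundamental theorem and the Borel--Weil description $R_s \cong V_{s\lambda}^*$ fall out simultaneously. So the paper \emph{derives} Borel--Weil from straightening, whereas you \emph{assume} Borel--Weil and then import quadratic generation as a black box. Your approach is cleaner and uniform in $\lambda$, at the cost of hiding all the work inside Kostant's theorem; the paper's approach is self-contained and explicit, but only carried out for $\lambda$ a single column.

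Two small points of care. First, ``Pieri-type considerations'' is not the right phrase for general $\lambda$: the decomposition of $V_\lambda^* \otimes V_{(d-1)\lambda}^*$ is governed by the full Littlewood--Richardson rule, though the only fact you need (that $V_{d\lambda}^*$ occurs with multiplicity one and everything else has strictly lower highest weight) is still true. Second, the chain ``Kempf vanishing $\Rightarrow$ projective normality $\Rightarrow$ Koszul $\Rightarrow$ quadratic generation'' is not quite the right logical path: projective normality by itself does not yield quadratic generation, and Koszulness is a stronger property that you would have to prove separately. It is more honest simply to cite Kostant (or Lichtenstein's short Casimir argument) for quadratic generation directly, since that is exactly the content you need and your inductive setup does not supply an independent proof of it.
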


This theorem justifies the main principle for $G/P$,
so we can hope that  similar results
hold for  the class varieties in GCT  
(though not always exactly in the same form).

Now, let $\Delta_V[g]$ be the class variety for \cc{NC} (in other words, take $g=\det(Y)$ for a matrix $Y$ of indeterminates).  Based on the main principle, we have the following conjecture, which essentially generalizes the second fundamental theorem of invariant theory for $G/P$ to the class variety for \cc{NC}:

\begin{conj} [GCT 2] \label{cgct2} $\Delta_V[g] = X(\Pi_g)$ where $X(\Pi_g)$ is the zero-set of all  forms in the $G$-modules contained in $\Pi_g$. \end{conj}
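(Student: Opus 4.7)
The inclusion $\Delta_V[g] \subseteq X(\Pi_g)$ is immediate from Proposition~\ref{pgct2b}, so the content of the conjecture is the reverse inclusion $X(\Pi_g) \subseteq \Delta_V[g]$. The plan is to attack this via the stronger statement that the ideal $J_g \subseteq \C[V]$ generated by $\Pi_g$ has the same radical as $I_V[g]$, following the template set by the second fundamental theorem for $G/P_\lambda$ that the conjecture generalizes.

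First I would pin down the representation-theoretic skeleton of $R_V[g]$. Because $g$ is characterized by its reductive stabilizer $G_g$ (Proposition~\ref{prop:char}) and the orbit $Gg$ is dense in $\Delta_V[g]$, Frobenius reciprocity applied to the affine cone over $Gg$ implies that an irreducible $G$-module $S$ embeds in $R_V[g]_d$ with multiplicity $\dim \Hom_{G_g}\!\bigl(S,\, (\C g)^{\otimes d}\bigr)$; equivalently, $S$ appears in $R_V[g]_d$ if and only if $S^*|_{G_g}$ contains a copy of $\bigl((\C g)^{\otimes d}\bigr)^{*}$. The modules in $\Pi_g(d)$ are by construction exactly those that die in $R_V[g]_d$, and hence lie in $I_V[g]_d$. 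This reproves Proposition~\ref{pgct2b} and pinpoints the gap between $J_g$ and $I_V[g]$: it can only involve irreducible $G$-types $S$ of strictly positive multiplicity in both $I_V[g]_d$ and $R_V[g]_d$.

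With this framework in hand, the heart of the argument is a $G$-invariance plus degeneration step. Given $p \in X(J_g)$, its orbit closure $\Delta_V[p]$ also lies in $X(J_g)$; by the Hilbert--Mumford criterion, reductivity of $G_g$ implies every boundary point of $\Delta_V[g]$ arises as a limit $\lim_{t \to 0} \lambda(t) \cdot g$ along a one-parameter subgroup $\lambda \colon \C^* \to G$, and the leading weight of such a limit in $V$ is governed by the weight decomposition of $g$ under $\lambda$. Using the explicit structure $G_g^{\circ} \cong (GL_m \times GL_m)/\C^*$ and the classical decomposition of $V = \Sym^m(\C^m \otimes \C^m)$ via the Cauchy identity, one would aim to match, $G$-type by $G$-type, the irreducibles vanishing on $p$ with those detected by $\Pi_g$, and thereby force $\Delta_V[p] \subseteq \Delta_V[g]$.

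The principal obstacle is the boundary $\Delta_V[g] \setminus Gg$, which is highly reducible and corresponds geometrically to degenerations of the determinant into products of forms of lower rank. Unlike the $G/P$ case, where quadratic relations suffice to generate the ideal, here there is no fixed degree bound, and the induction must proceed up a stratification of boundary orbits of increasing singularity. Quantitative control of the multiplicities of $G$-types on each stratum appears to demand precisely the positivity-of-plethysm hypotheses introduced in the earlier lectures, so genuine progress on this conjecture likely requires a simultaneous advance on the representation-theoretic side.
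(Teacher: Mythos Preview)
The statement labeled Conjecture~\ref{cgct2} is, as the label indicates, a \emph{conjecture}: the paper does not prove it, and in fact explicitly states immediately afterwards that only a weaker, local version is known, and even that only under an extra hypothesis (a separation property for Kronecker coefficients). So there is no proof in the paper against which to compare your proposal; what you have written is a research outline toward an open problem, and you yourself acknowledge as much in the final paragraph.

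That said, two technical points about your outline are worth flagging. First, your Frobenius reciprocity step computes multiplicities in the coordinate ring of the \emph{open orbit} $Gg$, not in $R_V[g]$; the restriction map $R_V[g]\hookrightarrow \C[Gg]$ is only an injection, so Frobenius reciprocity gives upper bounds on multiplicities in $R_V[g]_d$, not equalities. The discrepancy is governed precisely by the boundary $\Delta_V[g]\setminus Gg$ that you identify as the principal obstacle, so this is not a minor correction but the crux of the matter. Second, your invocation of Hilbert--Mumford is off target: reductivity of $G_g$ bears on stability of $g$ (cf.\ Theorem~\ref{thm:detperm}), but the claim that every boundary point of $\Delta_V[g]$ is a one-parameter limit of $g$ is false in general for orbit closures and is not what Hilbert--Mumford asserts. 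Your proposal thus correctly isolates where the difficulty lies but does not supply a mechanism to overcome it; this is consistent with the conjecture's status as open.
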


\begin{thm} [GCT 2] A weaker version of the above conjecture holds. 
Specifically, assuming that the Kronecker coefficients satisfy a certain
separation property, there exists a $G$-invariant (Zariski) 
open neighbourhood $U \subseteq P(V)$ of 
the orbit $G g$  such that $X(\Pi_g) \cap U = \Delta_V[g] \cap U$. \end{thm}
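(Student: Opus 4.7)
The plan is to compare the graded $G$-algebra $A := \C[V]/(\Pi_g)$ with the homogeneous coordinate ring $R_V[g] = \C[V]/I_V[g]$ and to show they agree after inverting some $G$-invariant $\psi \in \C[V]$ that does not vanish on $Gg$. By Proposition~\ref{pgct2b}, $\Pi_g \subseteq I_V[g]$, so $\Delta_V[g] \subseteq X(\Pi_g)$ and there is a $G$-equivariant graded surjection $A \twoheadrightarrow R_V[g]$; the task is to make this a local isomorphism along the orbit.

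First I would translate membership in $\Pi_g$ into Kronecker-coefficient language. By Proposition~\ref{prop:char}(1), the connected stabilizer $G_g^{\circ} \cong GL_m(\C) \times GL_m(\C)$ embeds in $G = GL_{m^2}(\C)$ via the tensor product $(A,B) \mapsto A \otimes B$, and its character on the line $\C g$ is $\det^{-1}\otimes\det^{-1}$. Schur--Weyl duality then identifies the restriction of an irreducible polynomial $V_\pi(G) \subset \C[V]_d$ to $G_g^{\circ}$ as $\bigoplus_{\lambda,\mu} \kappa^{\pi}_{\lambda,\mu}\, V_\lambda \otimes V_\mu$; hence $V_\pi \in \Pi_g(d)$ precisely when $\kappa^{\pi}_{(d^m),(d^m)} = 0$. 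Dually, the algebraic Peter--Weyl theorem applied to the affine orbit $G\tilde g$ of the cone point --- which works because $G_g$ is reductive --- identifies the $G$-module structure of $R_V[g]_d$ as a sum of precisely those $V_\pi$ with $\kappa^{\pi}_{(d^m),(d^m)} > 0$, with explicit multiplicities controlled by $G_g$-invariants in $V_\pi^{*}$ of the appropriate weight.

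With this dictionary, both $A$ and $R_V[g]$ a priori contain isotypic components only at Kronecker-positive $V_\pi$, and the surjection $A \twoheadrightarrow R_V[g]$ is isotypic-component-preserving; if the multiplicities on each side agreed, the map would be a global isomorphism and hence $X(\Pi_g) = \Delta_V[g]$ on the nose. In general the multiplicity in $A$ may exceed that in $R_V[g]$, and the Kronecker separation property is meant to control this excess: it should guarantee that the ``spurious'' isotypic summands in $A$ all lie in the vanishing locus of some $G$-invariant $\psi \in \C[V]$ which is nonzero on $Gg$. Given such $\psi$, standard localization arguments upgrade the surjection to an isomorphism $A_\psi \xrightarrow{\sim} (R_V[g])_\psi$, and $U := P(V) \setminus X(\psi)$ is the required $G$-invariant Zariski open neighbourhood of $Gg$.

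The main obstacle, and the crux of the argument, is the extraction of $\psi$ from the separation property. Concretely, one must show that the extra Kronecker-positive multiplicities in $(\Pi_g)$ relative to $I_V[g]$ are uniformly cut out by a single $G$-invariant that survives on $Gg$. Heuristically, the separation property should force the multiplication map $A_d \otimes A_{d'} \to A_{d+d'}$ to match the multiplication in $R_V[g]$ modulo a discrepancy supported away from $Gg$, so that iterating kills the excess after finitely many localizations. Making this precise --- and verifying that a Kronecker-positive $G$-invariant nonvanishing on $g$ actually exists with the separation hypothesis --- is where the combinatorial content of the separation property is decisive, and is the step I expect to be hardest.
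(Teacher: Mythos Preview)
The paper does not prove this theorem; it is stated and attributed to \cite{GCT2} without a proof or sketch in these lecture notes. So there is no ``paper's own proof'' to compare against here, and I can only assess your outline on its own terms.

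Your general architecture --- translate membership in $\Pi_g$ into vanishing of the Kronecker coefficient $\kappa^{\pi}_{(d^m),(d^m)}$, use the algebraic Peter--Weyl theorem on the reductive-stabilizer orbit to read off which $V_\pi$ occur, and then localize at a $G$-invariant $\psi$ to kill the discrepancy --- is the right shape and matches what one expects from the GCT2 argument. Two points deserve care. First, you apply Peter--Weyl to identify $R_V[g]_d$, but Peter--Weyl computes the coordinate ring of the \emph{orbit} $Gg \cong G/G_g$, not of the closure $\Delta_V[g]$; the restriction map $R_V[g] \hookrightarrow \C[Gg]$ is only an injection, and the possible failure of surjectivity is exactly what forces the theorem to be local rather than global. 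You should phrase the comparison as $A_\psi \to (R_V[g])_\psi \hookrightarrow \C[Gg]_\psi$ and argue that both outer terms coincide with the middle one after localization.

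Second, and more seriously, the entire content of the theorem lives in the step you flag as hardest: producing $\psi$ from the separation hypothesis. As written, you have not stated what the separation property says, so the sentence ``the separation property should force the multiplication map \ldots\ to match \ldots\ modulo a discrepancy supported away from $Gg$'' is a hope rather than an argument. Without making that mechanism explicit --- in particular, explaining why a Kronecker-type inequality on structure constants pins down a single $G$-invariant hypersurface avoiding $Gg$ --- what you have is a reasonable plan but not yet a proof.
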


There is a notion of algebro-geometric complexity called \emph{Luna-Vust complexity} which  quantifies the gap between $G/P$ and class varieties.  The Luna-Vust complexity of $G/P$  is 0. The Luna-Vust complexity of the \cc{NC} 
class variety is $\Omega(\dim(Y))$.
This is analogous to the difference between circuits of constant depth and circuits of superpolynomial depth.  This is why the previous conjecture and theorem turn out to be far harder than the corresponding facts for $G/P$.  

\section{Why should obstructions exist?}
The following proposition explains why obstructions should exist to separate \cc{NC} from \cc{$P^{\# P}$}.  

\begin{prop} [GCT 2]
Let $g=\det(Y)$, $h=\perm(X)$, $f=\phi(h)$, $n=\dim(X)$,
$m=\dim(Y)$.  If Conjecture~\ref{cgct2}  holds and
the permanent cannot be approximated arbitrarily closely by circuits of poly-logarithmic depth (hardness assumption), 
then an obstruction for the pair $(f,g)$ exists for all
large enough $n$, when  $m=2^{\log^c n}$ for some constant $c$.
Hence, under these conditions,
$\cc{NC} \neq \cc{$P^{\# P}$}$ over $\C$.  \end{prop}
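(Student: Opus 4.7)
The plan is to translate the hardness assumption into the geometric non-containment $f\notin\Delta_V[g;m]$, and then to invoke Conjecture~\ref{cgct2} to extract a concrete irreducible $G$-submodule of $\C[V]_d$ that serves as an obstruction. The representation-theoretic data $\Pi_g$ is the natural place to look, because by Proposition~\ref{pgct2b} every submodule in $\Pi_g$ lies inside $I_V[g]$, while by the conjecture the common zero locus of $\Pi_g$ is exactly $\Delta_V[g]$.

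First I would apply part (2) of the proposition from the class-variety lecture: if $f\in\Delta_V[g;m]$ then $h=\perm(X)$ can be approximated arbitrarily closely by circuits of depth $\log^{2c}m$. Contrapositively, for a suitable choice of $c$ and for $m=2^{\log^c n}$, the hardness assumption yields $f\notin\Delta_V[g;m]$ for all sufficiently large $n$. Conjecture~\ref{cgct2} then gives $\Delta_V[g]=X(\Pi_g)$, so non-containment produces an irreducible $G$-submodule $S\in\Pi_g(d)$, for some degree $d$, and a form $p\in S$ with $p(f)\neq 0$. Let $V_\lambda$ be the isomorphism type of $S$.

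I claim $V_\lambda$ is an obstruction in degree $d$. For occurrence in $R_V[f]_d$: the restriction map $\C[V]_d\to R_V[f]_d$ is $G$-equivariant, and its restriction to $S$ is nonzero because $p$ does not vanish at $f\in\Delta_V[f;m,n]$. By Schur's lemma applied to the irreducible $S$, this map is injective, exhibiting $V_\lambda$ as a $G$-submodule of $R_V[f]_d$. For non-occurrence in $R_V[g]_d$: the defining condition of $\Pi_g(d)$, namely that $S^*$ carry no $G_g$-submodule isomorphic to $(\C g)^{\otimes d\,*}$, depends only on the isomorphism class of $S$ as a $G$-module, hence only on $V_\lambda$. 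Consequently every irreducible $V_\lambda$-submodule of $\C[V]_d$ lies in $\Pi_g(d)$, and therefore in $I_V[g]_d$ by Proposition~\ref{pgct2b}. Summing over these copies shows the entire $V_\lambda$-isotypic component of $\C[V]_d$ is killed in the quotient $R_V[g]_d=\C[V]_d/I_V[g]_d$, so $V_\lambda$ does not occur in $R_V[g]_d$.

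The main obstacle, of course, is the reliance on Conjecture~\ref{cgct2}: proving it amounts to generalizing the second fundamental theorem of invariant theory from the $G/P$ setting (Luna--Vust complexity zero) to the far more complicated orbit closure $\Delta_V[g;m]$, whose Luna--Vust complexity is $\Omega(\dim Y)$; only a $G$-invariant neighbourhood version is established in GCT~2. Once the obstructions are in hand, $\cc{NC}\neq\cc{$P^{\#P}$}$ over $\C$ follows immediately: obstructions for each large $n$ with $m=2^{\log^c n}$ block any $G$-equivariant embedding $\Delta_V[f;m,n]\hookrightarrow\Delta_V[g;m]$, which by part (2) of the class-variety proposition precludes even arbitrarily close polylog-depth approximation of the permanent, and a fortiori its exact polylog-depth computation.
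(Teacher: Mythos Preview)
Your proposal is correct and follows essentially the same route as the paper: hardness $\Rightarrow f\notin\Delta_V[g]$, then Conjecture~\ref{cgct2} produces some $S\in\Pi_g(d)$ not vanishing at $f$, whence $S$ occurs in $R_V[f]_d$ but, via Proposition~\ref{pgct2b}, lies in $I_V[g]_d$.

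One point where you are actually more careful than the paper: the paper concludes directly from $S\subseteq I_V[g]$ that ``$S$ does not occur in $R_V[g]$,'' tacitly identifying the specific submodule $S\subseteq\C[V]_d$ with its isomorphism type $V_\lambda$. You correctly observe that the $\Pi_g(d)$ condition depends only on the $G$-isomorphism class of $S$ (since it is phrased in terms of the $G_g$-module structure of $S^*$), so \emph{every} copy of $V_\lambda$ in $\C[V]_d$ lies in $\Pi_g(d)\subseteq I_V[g]_d$, and hence the full $V_\lambda$-isotypic component dies in the quotient. This is the right way to close the argument given the definition of obstruction in terms of occurrence of $V_\lambda$.
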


This proposition may seem a bit circular at first, since it relies on a hardness assumption.  But we do not plan to prove the existence of obstructions by proving the assumptions of this proposition.  Rather, this proposition should be taken as evidence that obstructions exist (since we expect the hardness assumption
therein  to hold, given that the  permanent is \cc{\# P}-complete), and
we will develop other methods to prove their existence.

\begin{proof} The hardness assumption implies that $f \notin \Delta_V[g]$ if $m=2^{\log^c n}$ [GCT 1].

Conjecture~\ref{cgct2}  says that $X(\Pi_g)=\Delta_V[g]$.  So there exists an irreducible $G$-module $S \in \Pi_g$ such that $S$ does not vanish on $f$. 
So $S$ occurs in $R_V[f]$ as a $G$-submodule.  

On the other hand, since $S \in \Pi_g$, $S \subseteq I_V[g]$ by 
Proposition~\ref{pgct2b}. So $S$ does not occur 
in $R_V[g]=\C[V]/I_V[g]$.  Thus  $S$ is not a $G$-submodule of $R_V[g]$, but it is a $G$-submodule of $R_V[f]$, i.e., $S$ is an obstruction.  \end{proof}

\chapter{The flip}
\begin{center} {\Large  Scribe: Hariharan Narayanan} \end{center} 

\noindent {\bf Goal:} Describe the basic principle of GCT, called the
flip, in the context of the $NC$ vs. $P^{\#P}$ problem over $\C$.
\\ \\

\noindent {\em references:} \cite{GCTflip1,GCT1,GCT2,GCT6}

\subsection*{Recall}
As in the previous lectures,
let  $g = det(Y) \in P(V)$, $Y$ an $m \times m$ variable matrix, $G =
GL_{m^2}(\C)$,  and $\Delta_V(g) = \Delta_V[g;m]=\overline{G}g \subseteq P(V)$ the 
class variety for NC. Let
$h = perm(X)$, $X$ an $n \times n$ variable matrix,
$f = \phi(h)=y^{m-n} h \in P(V)$, and 
$\Delta_V(f) = \Delta_V[f; m, n] = \overline{G}f \subseteq P(V)$ the
class variety for $P^{\#P}$. 
Let $R_V[f;m,n]$ denote the 
homogeneous coordinate ring of $\Delta_V[f;m,n]$, $R_V[g;m]$ the 
homogeneous coordinate ring of $\Delta_V[g;m]$, and 
$R_V[f;m,n]_d$ and $R_V[g;m]_d$  their degree $d$-components. 
A Weyl module  $S = V_\l(G)$ of $G$ is an obstruction of degree
$d$ for the pair $(f,g)$  if $V_\l$ occurs in $R_V[f; m, n]_d$ but not
$R_V[g; m]_d$.

\begin{conj} \cite{GCT2} \label{cobs}
An obstruction (of degree polynomial in $m$)  exists if $m =
2^{\text{polylog}(n)}$ as $n\rightarrow \infty$.
\end{conj}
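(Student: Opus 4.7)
The plan is to reduce the existence of obstructions to a representation-theoretic positivity statement, exploiting the exceptional group-theoretic nature of $\Delta_V[f;m,n]$ and $\Delta_V[g;m]$ established in Proposition \ref{prop:char}. Since both class varieties are characterized by their (reductive) stabilizers, the first step is to pin down the $G$-module structure of the homogeneous coordinate rings $R_V[f;m,n]_d$ and $R_V[g;m]_d$ purely in terms of the group triples $G_f \hookrightarrow G \hookrightarrow K$ and $G_g \hookrightarrow G \hookrightarrow K$, following the main principle articulated in Chapter 12: for a group-theoretic variety the algebraic geometry is captured by the representation theory of its triple.

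Next I would use the representation-theoretic data $\Pi_g$ from the (conjectural) second fundamental theorem (Conjecture \ref{cgct2}). By Proposition \ref{pgct2b} every irreducible $G$-module in $\Pi_g$ lies in $I_V[g;m]$ and therefore does \emph{not} appear in $R_V[g;m]$. To produce an obstruction it then suffices to locate some $S=V_\lambda(G)\in\Pi_g$ that \emph{does} appear in $R_V[f;m,n]$. This is plausible because $G_f$ is essentially $(S_2\ltimes(S_n\times S_n))\times (\C^*)^n\times(\C^*)^n$, which is small and hence forces $R_V[f;m,n]$ to carry many Weyl modules, whereas $G_g$ contains a large $GL_m\times GL_m$, which forces many Weyl modules into $\Pi_g$. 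The heuristic is that the two stabilizers are of such different sizes that their rings of covariants cannot match in the relevant range.

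To convert this heuristic into a proof I would follow the flip and phrase the comparison as a positivity decision problem. Concretely, I expect multiplicities $m^\lambda_{f,d}$ and $m^\lambda_{g,d}$ of $V_\lambda(G)$ in $R_V[f;m,n]_d$ and $R_V[g;m]_d$ to be instances of plethysm-type constants in the sense of Chapter 7. The programme of Chapters 6--8 then predicts polytopes $P^\lambda_{f,d}$ and $P^\lambda_{g,d}$, of dimension polynomial in $\bitlength{m},\bitlength{n},\bitlength{d}$, satisfying PH1; an obstruction is then a $\lambda$ for which $P^\lambda_{f,d}\neq\emptyset$ but $P^\lambda_{g,d}=\emptyset$. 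Combined with the saturation/positivity hypotheses PH2 and SH, such a separating $\lambda$ can be sought via the saturated integer programming algorithm of Theorem \ref{tgct6sat}, reducing the existence problem to a (polynomial-time) feasibility test and, ultimately, to demonstrating that the associated linear system has the required integrality.

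The hard step will be proving that a separating $\lambda$ actually exists for every large $n$ when $m=2^{\mathrm{polylog}(n)}$, not merely deciding its existence. This is a quantitative strengthening of the second fundamental theorem combined with the hardness of the permanent, and I expect it to depend on the deep positivity properties of Kronecker and plethysm-type coefficients suggested by PH2, which in turn reduce, along the lines indicated in the introduction, to the conjectures on quantum groups linked to the Riemann hypothesis over finite fields. A secondary obstacle is that Conjecture \ref{cgct2} is presently known only in the neighbourhood form stated right after it; the plan must therefore either upgrade that local statement to a global one or extract an obstruction from the smooth reductive local geometry of $\Delta_V[g;m]$ along the closed orbit $Gg$, where the group-theoretic main principle is most reliable.
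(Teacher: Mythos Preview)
This statement is a \emph{conjecture}, not a theorem; the paper offers no proof, only a programme (the ``flip'') for attacking it. Your proposal is likewise a plan rather than a proof, and it tracks the paper's programme closely: reduce occurrence of $V_\lambda$ in $R_V[f]_d$ and $R_V[g]_d$ to integer-point counting in polytopes (PH1), invoke positivity/saturation (PH2/SH) to make the decision problem tractable via saturated integer programming (Theorem~\ref{tgct6sat}), and then hope to convert the resulting algorithm into an existence proof for all large $n$. You also correctly flag the same bottleneck the paper does: the passage from ``deciding existence in polynomial time'' to ``proving existence for every $n$'' is the genuinely hard open step, contingent on the unproved positivity hypotheses and, further down, on the quantum-group conjectures.

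One difference worth noting: you route part of the argument through the conjectural second fundamental theorem (Conjecture~\ref{cgct2}) and the data $\Pi_g$, whereas the paper's flip chapter works directly with the multiplicities $S_d^\lambda[f]$, $S_d^\lambda[g]$ and their stretching quasipolynomials. The $\Pi_g$ route does appear in the notes, but in the proposition at the end of the chapter on group-theoretic varieties, where it is used only as \emph{evidence} that obstructions exist---and there it explicitly assumes hardness of the permanent, which would make it circular as a proof strategy. The flip is designed precisely to sidestep that circularity. So your plan is a faithful synthesis of the two strands in the paper, but you should be aware that the $\Pi_g$/SFT ingredient functions as motivation for why obstructions ought to exist, not as an independent mechanism for proving they do.
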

This implies $NC \not = P^{\#P}$ over $\C$. 

\section{The flip} 
In this lecture we describe an approach to prove 
the existence of such obstructions. It is  based on 
the following complexity theoretic positivity hypothesis:

\vspace{.1in}

\noindent {\bf PHflip} \cite{GCTflip1}:

\begin{enumerate}
\item Given $n, m$ and $d$, whether an obstruction of degree $d$
for $m$ and $n$ exists can be decided in $poly(n, m, \bitlength{d})$ time,
 and if
it exists, the label $\l$ of such an obstruction can be constructed
in $poly(n, m, \bitlength{d})$ time. Here $\bitlength{d}$ denotes the
bitlength of $d$.
\item \begin{enumerate} \item 
 Whether $V_\l$ occurs in $R_V[f; m, n]_d$ can be decided in
$poly(n, m, \bitlength{d}, \bitlength{\l})$ time.
\item Whether $V_\l$ occurs in $R_V[g; m]_d$ can be decided in
$poly(n, m, \bitlength{d}, \bitlength{\l})$ time.
\end{enumerate}
\end{enumerate}

This suggests the following approach for proving Conjecture~\ref{cobs}:

\begin{enumerate}
\item Find polynomial time algorithms sought  in PHflip-2 for the
basic decision problems (a) and (b) therein.
\item Using these find   a polynomial time algorithm sought  in PHflip-1 
for deciding if an obstruction exists.
\item Transform (the techniques underlying)
this ``easy'' (polynomial time)  algorithm for 
deciding if an obstruction exists for given $n$ and $m$ into an
 ``easy" (i.e., feasible)  proof of
existence of an obstruction for every $n\rightarrow \infty$ 
when  $d$ is large enough and $m=2^{\mbox{polylog}(n)}$.
\end{enumerate}

The first step here is the crux of the matter.
The main results of \cite{GCT6} say that the polynomial time
algorithms for the basic decision problems as sought in PHflip-2 
indeed exist assuming natural analogues of PH1 and SH (PH2) that
we have seen earlier in the context of the plethysm problem.
To state them, 
we  need some definitions.

Let $S_d^\l[f]=S_d^\l[f;m,n]$
be the multiplicity of $V_\l=V_\l(G)$ in $R_V[f; m, n]$. The
stretching function $\tilde{S}_d[f] = \tilde{S}_d^\l[f; m, n]$  is
defined  by
$$\tilde{S}_d^\l[f](k) := S_{kd}^{k\l}[f].$$
The stretching function  for $g$, $\tilde{S}^\lambda_d[g]=
\tilde{S}^\lambda_d[g;m]$, is defined
analogously.

The main mathematical result of \cite{GCT6} is:
\begin{thm}\cite{GCT6}
The stretching functions $\tilde{S}_d^\l[g]$ and $\tilde{S}_d^\l[f]$
are quasipolynomials assuming that the singularities of $\Delta_V[f; m,
n]$ and $\Delta_V[g; m]$ are rational.
\end{thm}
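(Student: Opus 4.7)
The plan is to interpret each stretching function as the Hilbert function of a graded algebra arising from the coordinate ring by taking unipotent invariants, and then to use the rational singularities hypothesis to apply appropriate vanishing theorems so that the Hilbert function agrees with an Euler characteristic, which is automatically quasipolynomial.

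First I would reduce the multiplicity count to weight multiplicities. Let $B\subset G$ be the Borel subgroup of lower triangular matrices, $U\subset B$ its unipotent radical, and $T=B/U$ the maximal torus. By Frobenius reciprocity, for any rational $G$-module $M$,
\[
\mathrm{mult}(V_\lambda,M)=\dim (M^U)_\lambda,
\]
the dimension of the $\lambda$-weight space of $M^U$. Applied to $M=R_V[f;m,n]_d$ this gives $S_d^\lambda[f]=\dim (R_V[f;m,n]_d^U)_\lambda$, and similarly for $g$. The passage to $U$-invariants replaces a representation-theoretic multiplicity problem by a purely commutative-algebraic one on a multigraded algebra.

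Second, I would form the $\N\times X(T)^+$-graded algebra $A^f:=R_V[f;m,n]^U$ (and analogously $A^g$). Since $G$ is reductive and $R_V[f;m,n]$ is a finitely generated $G$-algebra, Hadžiev's theorem yields that $A^f$ is itself a finitely generated $\C$-algebra. The stretching function $\tilde S_d^\lambda[f](k)$ is then the ordinary Hilbert function in $k$ of the one-parameter subalgebra $\bigoplus_{k\geq 0} (A^f)_{kd,k\lambda}$, which is a finitely generated graded subalgebra of $A^f$. Finite generation alone already yields that this Hilbert function is eventually quasipolynomial in $k$ once one exhibits a geometric realization giving the correct dimension interpretation.

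Third, and this is where the rational singularities hypothesis enters, I would realize this Hilbert function as $\dim H^0(X,\mathcal{M}^{\otimes k})^G$ for a suitable line bundle $\mathcal{M}$ on a $G$-variety $X$ built from $\Delta_V[f;m,n]$. A natural candidate is $X=\Delta_V[f;m,n]\times G/B$ with $\mathcal{M}=\mathcal{O}_{\Delta_V[f;m,n]}(d)\boxtimes \mathcal{L}_\lambda$, where $\mathcal{L}_\lambda$ is the Borel-Weil line bundle of weight $\lambda$ on $G/B$. Since $G/B$ is smooth and $\Delta_V[f;m,n]$ has rational singularities by hypothesis, $X$ has rational singularities. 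Applying a Kawamata-Viehweg / Grauert-Riemenschneider style vanishing theorem to twists of $\mathcal{M}$, the higher cohomology vanishes for $k\gg 0$, so
\[
\tilde S_d^\lambda[f](k)=\dim H^0(X,\mathcal{M}^{\otimes k})^G=\chi(X,\mathcal{M}^{\otimes k})^G
\]
in that range. The right-hand side is the $G$-equivariant Euler characteristic of a line bundle on a projective variety, and by equivariant Hirzebruch-Riemann-Roch (or more directly by the quasipolynomiality of Euler characteristics of line bundles on projective schemes with rational singularities) it is a quasipolynomial in $k$. The same argument applied to $\Delta_V[g;m]$ handles $\tilde S_d^\lambda[g](k)$.

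The hard part will be the construction in the third step: concretely identifying an auxiliary projective $G$-variety $X$ whose $G$-invariant sections of powers of an explicit line bundle compute exactly $\tilde S_d^\lambda[\cdot](k)$, and verifying that the rational singularities hypothesis on the orbit closure is inherited by $X$ so that the requisite vanishing theorem applies. A secondary technical point is handling the boundary of $Gf$ and $Gg$: orbit closures can have badly behaved limit points, and the rational singularities assumption is precisely what tames these so that the Euler characteristic computation and Hilbert function coincide for all sufficiently large $k$, yielding the quasipolynomial behavior claimed.
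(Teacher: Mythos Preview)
The paper does not actually prove this theorem; it states the result with a citation to \cite{GCT6} and offers only a one-sentence sketch: ``The basic idea is to show that the stretching function is the Hilbert function of some algebraic variety with nice (i.e. `rational') singularities.'' Your proposal is a faithful and reasonable fleshing-out of exactly this sketch---reducing the multiplicity to $U$-invariant weight spaces, realizing the stretching function as a Hilbert function of a graded algebra, and then using the rational singularities hypothesis together with vanishing theorems to pass to an Euler characteristic, which is quasipolynomial by Riemann--Roch-type considerations---so there is nothing further to compare.
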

Here rational means ``nice''; we shall not worry about the exact 
definition.

The main complexity-theoretic result is:
\begin{thm}\cite{GCT6} \label{tcomplexity}
Assuming the following mathematical 
positivity hypothesis $PH1$ and the saturation hypothesis $SH$ (or the 
stronger positivity hypothesis  $PH2$), PHflip-2 holds.
\end{thm}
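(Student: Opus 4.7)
The plan is to reduce each of the two decision problems in PHflip-2 to the saturated integer programming problem, and then invoke Theorem~\ref{tgct6sat}. The strategy mirrors exactly the one used for the plethysm constants in Chapter~\ref{cgct5} and the chapter on saturated integer programming: the roles of $a_{\lambda,\mu}^\pi$ and its polytope $P_{\lambda,\mu}^\pi$ are now played by the multiplicities $S_d^\lambda[f;m,n]$ and $S_d^\lambda[g;m]$ and their associated polytopes.

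First I would spell out what the analogue of PH1 provides in this setting. By hypothesis, for the class variety $\Delta_V[f;m,n]$ there is a polytope $P = P_d^\lambda[f;m,n]$ whose dimension is polynomial in $n,m,\bitlength{d},\bitlength{\lambda}$, whose membership and separation oracle run in $\poly(n,m,\bitlength{d},\bitlength{\lambda})$ time, and which satisfies $S_d^\lambda[f] = \varphi(P)$ together with the statement that the stretching quasipolynomial $\tilde S_d^\lambda[f](k)$ coincides with the Ehrhart quasipolynomial $f_P(k)$. Analogously, let $Q = Q_d^\lambda[g;m]$ be the corresponding polytope for $g$ with $S_d^\lambda[g] = \varphi(Q)$ and $\tilde S_d^\lambda[g](k) = f_Q(k)$. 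Under this encoding, the question of whether $V_\lambda$ occurs in $R_V[f;m,n]_d$ becomes the integer-feasibility problem ``does $P$ contain an integer point?'', and similarly for $g$ with $Q$.

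Next I would invoke SH (or PH2). Saturation of $\tilde S_d^\lambda[f](k)$ says precisely that the Ehrhart quasipolynomial $f_P(k)$ of $P$ is saturated, i.e.\ $f_P(\mathrm{index}(f_P)) \neq 0$, so this is an instance of saturated integer programming in the sense of the previous chapter; PH2 gives the stronger positive case. The same holds for $Q$. Theorem~\ref{tgct6sat} then asserts that $\mathrm{index}(f_P)$ can be computed in time polynomial in $\langle P \rangle$ using only the separation oracle (via affine hull extraction, Smith normal form over $\Z$, and the density argument of Lemma~\ref{aff}), and hence nonemptiness of $P$ can be decided in the same time bound. Since $\langle P \rangle$ and $\langle Q \rangle$ are polynomial in $n,m,\bitlength{d},\bitlength{\lambda}$ by PH1, we obtain $\poly(n,m,\bitlength{d},\bitlength{\lambda})$-time algorithms for both parts (a) and (b) of PHflip-2, which is the claim.

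The substantive content has been packaged into PH1, SH, PH2: constructing the polytopes $P_d^\lambda[f;m,n]$ and $Q_d^\lambda[g;m]$ with polynomial dimension and a polynomial-time separation oracle, and identifying their Ehrhart quasipolynomials with the stretching functions, is the genuinely hard representation-theoretic and algebro-geometric work; once granted, the algorithm is purely a linear-programming reduction. The main obstacle in the proof as structured here is ensuring that the separation oracle guaranteed by PH1 suffices for the affine-hull and Smith-normal-form manipulations of Lemma~\ref{smith}, i.e.\ that the implicit description of $P$ (possibly with super-polynomially many facets) does not obstruct the polynomial-time computation of $\mathrm{af}(P)$; this is exactly the feature built into the saturated integer programming theorem, so the reduction goes through verbatim.
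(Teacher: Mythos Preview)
Your proposal is correct and follows exactly the same approach as the paper. The paper's own argument is the two-sentence sketch ``PH1 and SH imply that the decision problems in PHflip-2 can be transformed into saturated positive integer programming problems. Hence Theorem~\ref{tcomplexity} follows from the polynomial time algorithm for saturated linear programming that we described in an earlier class,'' and you have faithfully expanded this: PH1 supplies the polytopes $P=P_d^\lambda[f]$ and $Q=Q_d^\lambda[g]$ with polynomial-time separation oracles, SH (or PH2) makes the integer-feasibility problems saturated, and Theorem~\ref{tgct6sat} finishes the job.
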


\vspace{.1in}

\noindent {\bf PH1:} 
There exists a polytope $P = P_d^\l[f]$ such that

\begin{enumerate}
\item The Ehrhart quasi-polynomial of $P$, $f_P(k)$,  is
$\tilde{S}_d^\l[f](k)$.
\item $dim(P) = poly(n, m, \bitlength{d})$.
\item Membership in $P$ can be answered in polynomial time.
\item There is a polynomial time separation oracle \cite{lovasz} for $P$.
\end{enumerate}

Similarly, there exists a polytope $Q = Q_d^\l[g]$ such that

\begin{enumerate}
\item The Ehrhart quasi-polynomial of $Q$, $f_Q(k)$,  is
$\tilde{S}_d^\l[g](k)$.
\item $dim(Q) = poly( m, \bitlength{d})$.
\item Membership in $Q$ can be answered in polynomial time.
\item There is a polynomial time separation oracle for $Q$.
\end{enumerate}

\vspace{.1in}

\noindent {\bf PH2:} 
The quasi-polynomials  $\tilde{S}_d^\l[g]$ and $\tilde{S}_d^\l[f]$ are
positive. 

\vspace{.1in}

This implies:

\vspace{.1in}

\noindent {\bf SH:}
The quasi-polynomials 
$\tilde{S}_d^\l[g]$ and $\tilde{S}_d^\l[f]$ are saturated.

\vspace{.1in}

PH1 and SH imply that the  decision problems in PHflip-2 can be
transformed into saturated positive integer programming problems.
Hence Theorem~\ref{tcomplexity} follows from the polynomial time algorithm 
for saturated linear programming that we described in an earlier class.

The decision problems in PHflip-2 are  ``hyped" up versions of the
plethysm problem discussed earlier.
The article \cite{GCT6} provides evidence for $PH1$ and $PH2$ 
for the plethysm problem. This constitutes the main evidence for
PH1 and PH2 for the class varieties in view of their group-theoretic
nature; cf. \cite{GCTflip1}.

The following problem is important in the context of PHflip-2:

\begin{problem}
Understand the $G$-module structure of the homogeneous coordinate rings
$R_V[f]_d$ and  $R_V[g]_d$.
\end{problem} 

This is an instance of the following abstract:

\begin{problem} Let $X$ be a projective group-theoretic $G$-variety.
 Let $R = \bigoplus_{d=0}^\infty
R_d$ be its homogeneous coordinate ring. Understand the $G$-module
structure of $R_d$.
\end{problem} 

The simplest  group-theoretic variety is $G/P$.
For it,  a solution to this abstract 
problem is given by the following results:

\begin{enumerate}
\item The Borel-Weil theorem.
\item The Second Fundamental theorem of invariant theory [SFT].
\end{enumerate}

These will be covered in the next class for the simplest case of 
$G/P$, the Grassmanian.

\chapter{The Grassmanian}
\begin{center} {\Large  Scribe: Hariharan Narayanan} \end{center} 
\noindent {\bf Goal:} The Borel-Weil and the second fundamental theorem
of invariant theory for the Grassmanian.

\noindent {\em Reference:} \cite{YT}

\subsection*{Recall}
Let $V = V_\l(G)$ be a Weyl module of $G= GL_n(\C)$ and $v_\l \in P(V)$  the
point corresponding to its  highest weight vector.
The orbit  $\Delta_V[v_\l] := G v_\l$, which is already closed, is 
of the form $G/P$, where $P$ is the parabolic stabilizer of $v_\l$.
 When  $\l$ is  a single column, it is called the 
{\em Grassmannian}.

An alternative description of the Grassmanian is as follows.
Assume that $\lambda$ is a single column of length $d$.
Let $Z$ be a
$d \times n$ matrix of variables $z_{ij}$.
Then  $V=V_\l(G)$ can be identified with  the span of
$d \times d$ minors of $Z$ with the action of $\sigma \in G$ given by:
$$\sigma: f(z) \mapsto f(z\sigma).$$

Let $Gr_d^n$ be the space of all d-dimensional subspaces of  $\C^n$. Let $W$
be a $d$-dimensional subspace of  $\C^n$. Let $B=B(W)$ be a basis of $W$.
Construct the $d \times n$ matrix $z_B$, whose rows are vectors in
$B$. Consider 
the  Pl\"{u}cker map from $Gr_d^n$ to $P(V)$ which maps any $W \in Gr_d^n$ 
to the tuple of $d \times
d$ minors of  $Z_B$. Here  the choice of $B=B(W)$ does not matter, since any
choice gives the same point in $P(V)$.
Then  the image of $Gr_d^n$ is precisely the 
Grassmanian $Gv_\lambda \subseteq P(V)$. 

\section{The second fundamental theorem}
Now we ask:
\begin{question}
 What is the ideal of $Gr_d^n \approx Gv_\l \subseteq
P(V)$? 
\end{question} 

The homogeneous coordinate ring of $P(V)$ is $\C[V]$.
We want an explicit set of generators of this ideal in
$\C[V]$.  This is 
given by the second fundamental theorem of invariant theory,
which we describe next.

The coordinates of $P(V)$ are in one-to-one correspondence with
 the   $d \times d$
minors of the matrix $Z$. Let each minor of $Z$ be indexed by its
columns. Thus for $1 \leq i_1 < \dots < i_d \leq n$, $Z_{i_1, \dots,
i_d}$ is a coordinate of $P(V)$ corresponding to 
the minor of $Z$ formed by
the columns $i_1,i_2,\ldots$. Let  $\Lambda(n, d)$ be  the set of ordered
$d$-tuples of $\{1, \dots, n\}$. The tuple $[i_1,\ldots,i_d]$ in
 this set 
will  be identified with  the coordinate $Z_{i_1,
\dots, i_d}$ of $P(V)$. There is a bijection between the elements of
$\Lambda(n, d)$ and of $\Lambda(n, n-d)$ obtained by associating
complementary sets:
$$\Lambda(n, d) \ni \l \leftrightsquigarrow \l^* \in \Lambda(n, n-d).$$
We define $sgn(\l, \l^*)$ to be the sign of the permutation that
takes $[1, \dots, n]$ to $[\l_1, \dots, \l_d, \l^*_1, \dots,
\l^*_{n-d}]$. 

 Given $s \in \{1. \dots, d\}$, $\alpha \in
\Lambda(n, s-1)$, $\beta \in \Lambda(n, d+1)$, and $\gamma \in
\Lambda(n, d-s)$, we now define the 
Van der Waerden Syzygy $[[\alpha,\beta,\gamma]]$, which 
is an element of the degree two component  $\C[V]_2$ 
of $\C[V]$, as follows:

\[
\begin{array}{l} 
[[\alpha,\beta,\gamma]]= \\
\sum_{\tau \in \Lambda(d+1, s)} sgn(\tau, \tau^*)[\alpha_1, \dots,
\alpha_{s-1}, \beta_{\tau^*_1}, \dots, \beta_{\tau^*_{d+1-s}}] 
[\beta_{\tau_1}, \dots, \beta_{\tau_s}, \gamma_1, \dots, \gamma_{d-s}].
\end{array}
\]

It is easy to show that this syzygy  vanishes on 
the Grassmanian $Gr_d^n$: 
because it is an alternating $(d+1)$-multilinear-form,
and hence has to vanish on any $d$-dimensional space $W \in Gr_d^n$.
Thus it belongs to the ideal of the Grassmanian. Moreover:

\begin{thm}  [Second fundamental theorem] \label{tsecondthm1}
The ideal of the Grassmanian $Gr_d^n$ is generated by 
the Van-der-Waerden syzygies.
\end{thm}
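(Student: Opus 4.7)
The plan is to establish the theorem by combining a straightening algorithm for products of Pl\"ucker coordinates (which uses only the Van-der-Waerden syzygies) with a dimension count coming from the Borel-Weil description of the homogeneous coordinate ring of the Grassmannian. Let $I$ denote the ideal of $Gr_d^n \subseteq P(V)$ and let $J \subseteq I$ be the subideal generated by the Van-der-Waerden syzygies; the containment $J \subseteq I$ is the easy part already observed in the excerpt. The goal is to prove $J = I$, equivalently that the degree-preserving surjection $\C[V]/J \twoheadrightarrow \C[V]/I = R(Gr_d^n)$ is an isomorphism in every graded degree $m$.

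First I would set up the combinatorial bookkeeping. A degree-$m$ monomial in the Pl\"ucker coordinates $[i_1,\ldots,i_d]$ is naturally encoded as a tableau $T$ with $d$ rows and $m$ columns, the $k$-th column being the strictly increasing sequence specifying the $k$-th minor. Call $T$ \emph{standard} if, in addition, the entries are weakly increasing along each row. The key lemma is the \emph{straightening algorithm}: modulo $J$, every monomial $e_T$ can be rewritten as a $\Z$-linear combination of standard monomials. I would prove this by putting a total order on tableaux (for example, concatenating columns top to bottom and comparing words lexicographically), and by observing that if two adjacent columns violate the standard condition, one can find a Van-der-Waerden syzygy $[[\alpha,\beta,\gamma]]$ whose distinguished $(d{+}1)$-set $\beta$ is precisely the union of the offending entries; the relation $[[\alpha,\beta,\gamma]] \equiv 0 \pmod{J}$ then rewrites $e_T$ as a sum of $e_{T'}$ with every $T'$ strictly smaller than $T$ in the chosen order. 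A Noetherian induction yields the lemma.

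Next I would invoke Borel-Weil for $G/P_\lambda$ (with $\lambda$ the single column of length $d$): the degree-$m$ component satisfies $R_m \cong V_{m\lambda}(G)^*$ as a $G$-module, obtained by identifying global sections of $\mathcal{O}(m)$ on the Grassmannian with the dual Weyl module. The combinatorial form of the Weyl character formula (Theorem~\ref{tweyl}) gives $\dim V_{m\lambda}$ as the number of semistandard Young tableaux of shape $m\lambda$, which is exactly the number of standard tableaux in our sense. Thus the standard monomials are simultaneously a spanning set of $(\C[V]/J)_m$ and a set of cardinality $\dim R_m$; since the surjection $\C[V]/J \twoheadrightarrow R$ is graded, a dimension count forces it to be an isomorphism in each degree, so $J = I$.

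The main obstacle is the straightening lemma: choosing the right well-founded order on tableaux and verifying that each application of a Van-der-Waerden syzygy strictly decreases the order of every non-standard summand produced. This is where the precise combinatorial shape of the syzygy—summing over all ways to split a $(d{+}1)$-set $\beta$ between two columns with the appropriate signs—is essential: it is exactly the relation forced by antisymmetrizing $d{+}1$ vectors in a $d$-dimensional space, which is why these syzygies suffice to capture all polynomial relations among minors. Once the straightening step is in hand, the Borel-Weil dimension count closes the argument cleanly, and the method generalizes (with more elaborate bookkeeping for general parabolics) to the standard monomial theory for arbitrary $G/P$.
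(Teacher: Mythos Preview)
Your proposal is correct and shares the paper's central ingredient: the straightening lemma, which rewrites every Pl\"ucker monomial modulo the Van-der-Waerden syzygies as a combination of standard monomials. The paper identifies exactly this as the crux.

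The difference lies in how you finish. You invoke Borel--Weil as an independent input to compute $\dim R_m$, then close with a dimension count. The paper instead proves directly that the standard monomials are linearly independent \emph{as functions on the Grassmannian}, and observes that this single fact, combined with straightening, yields both the Second Fundamental Theorem and the Borel--Weil statement $R_m \cong V_{m\lambda}^*$ simultaneously. Your organization is perfectly valid (Borel--Weil via line-bundle cohomology is logically independent of the syzygy description of the ideal), but it imports more machinery; the paper's route is more self-contained and has the pleasant feature that Borel--Weil drops out as a corollary rather than being assumed.
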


An alternative formulation of this result is as follows.
Let $P_\l \subseteq G$ be the stabilizer of $v_\l$.
Let $\Pi_{v_\l}(2)$ be the set of
irreducible $G$-submodules of $\C[V]_2$ whose duals do not contain a
$P_\l$-submodule isomorphic to $\C v_\l^{\otimes 2 *}$ (the dual of 
$\C v_\l^{\otimes 2}$). 
Here $\C v_\l$ denotes the line in $P(V)$ corresponding to $v_\lambda$, which
is a one-dimensional representation of $P_\l$ since it stabilizes 
$v_\l \in P(V)$.
It can be
shown that the span of the $G$-modules in $\Pi_{v_\l}(2) $ is equal to
the span of the Van-der-Waerden syzygies. Hence, 
Theorem~\ref{tsecondthm1} is equivalent to:

\begin{thm}
[Second Fundamental Theorem(SFT)] The $G$-modules in
$\Pi_{v_\l}(2)$ generate the ideal of $Gr_d^n$.
\end{thm}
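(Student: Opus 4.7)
The plan is to prove that $I_V[v_\lambda] = J$, where $J$ is the ideal generated by the Van der Waerden syzygies (equivalently, by the $G$-modules in $\Pi_{v_\lambda}(2)$, once one checks the linear span equivalence mentioned just before the theorem). First I would record the easy containment $J \subseteq I_V[v_\lambda]$, which the excerpt already observes: each $[[\alpha,\beta,\gamma]]$ is an alternating $(d+1)$-multilinear form in columns of $Z$, and hence vanishes identically on any $W \in Gr_d^n$. The content of the theorem is therefore the reverse inclusion.

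Rather than chase elements of $I_V[v_\lambda]$ directly, I would compare Hilbert functions. Let $R = \C[V]/I_V[v_\lambda]$ be the true homogeneous coordinate ring of $Gr_d^n$ and $R' = \C[V]/J$ the candidate. The canonical surjection $R' \twoheadrightarrow R$ is $G$-equivariant and degree-preserving, so it suffices to exhibit, for each $k \geq 0$, a spanning set of $R'_k$ whose cardinality is at most $\dim R_k$. For the latter, the Borel--Weil theorem (stated at the end of the preceding lecture) identifies $R_k$ with $V_{k\lambda}(G)^*$, where $k\lambda$ is the rectangular partition $(k^d)$; in particular, $\dim R_k$ equals the number of semistandard Young tableaux of shape $(k^d)$ with entries in $\{1,\dots,n\}$.

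On the $R'$-side I would run the classical straightening algorithm (Hodge, Doubilet--Rota--Stein). To each monomial $[i_{1,1},\dots,i_{1,d}]\cdots[i_{k,1},\dots,i_{k,d}]$ in Plücker coordinates assign the $d \times k$ tableau whose $j$-th column is $(i_{j,1}<\dots<i_{j,d})$; call the monomial \emph{standard} if this tableau is semistandard. Under a suitable lexicographic order on such tableaux, any nonstandard monomial has two consecutive columns violating the row-weakly-increasing condition at some row $s$; applying the Van der Waerden syzygy $[[\alpha,\beta,\gamma]]$ with $\alpha$ the prefix of length $s-1$ of the left column, $\beta$ the concatenation of the two offending column segments, and $\gamma$ the suffix of the right column, rewrites that monomial modulo $J$ as a sum of monomials whose tableaux are strictly smaller in the chosen order. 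Induction terminates on standard monomials, so they span $R'_k$. Their count is exactly the number of semistandard tableaux of shape $(k^d)$, giving $\dim R'_k \leq \dim R_k$ and hence $J = I_V[v_\lambda]$.

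The main obstacle is the verification that the straightening step, performed using only a single Van der Waerden syzygy, actually produces strictly smaller terms in the chosen order, and that every nonstandard monomial admits such a reduction; this is the combinatorial heart of the argument. Linear independence of the standard monomials in $R$ is then automatic, since we already know $\dim R_k$ from Borel--Weil and we have produced a spanning set of the same size in $R'_k$, so the surjection $R' \twoheadrightarrow R$ is forced to be an isomorphism. Finally, the equivalence asserted just before the theorem between the linear span of the Van der Waerden syzygies and the span of the irreducible $G$-submodules in $\Pi_{v_\lambda}(2)$ transports the conclusion to the stated group-theoretic formulation.
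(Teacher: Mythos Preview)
Your approach is essentially the paper's: both rest on the Straightening Lemma (standard monomials span $R'_k = (\C[V]/J)_k$) together with a matching dimension count on the true coordinate ring $R_k$. The one difference is the source of that count. The paper takes as its second primitive ingredient the \emph{linear independence of standard monomials as functions on the Grassmannian}, and then derives both SFT and Borel--Weil simultaneously from straightening plus independence. You instead invoke Borel--Weil to obtain $\dim R_k$ and recover linear independence as a corollary.

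This is mathematically fine if Borel--Weil is granted from an independent source, but within the paper's own logical development it is circular: the Borel--Weil statement for the Grassmannian is proved there \emph{from} straightening plus linear independence, so it cannot serve as an input to SFT. The cleaner route, and the one the paper intends, is to prove linear independence of the standard monomials directly (e.g., by a leading-term or specialization argument on the Pl\"ucker coordinates); once that is in hand, the surjection $R' \twoheadrightarrow R$ carries a spanning set to an independent set of the same cardinality, forcing $J = I_V[v_\lambda]$ and giving Borel--Weil for free.
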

This formulation of SFT for the Grassmanian looks very similar
to the  generalized conjectural SFT for the $NC$-class variety described in 
the earlier class. This indicates that the 
class varieties in GCT are ``qualitatively similar" to $G/P$.

\section{The Borel-Weil theorem}
We now describe the $G$-module structure of the homogeneous 
coordinate ring $R$ of the
Grassmannian $Gv_\l \subseteq P(V)$, where  $\l$
is a single column of height $d$. The goal is to give 
an explicit basis for $R$.
Let $R_s$ be the degree $s$
component of $R$. Corresponding to any numbering $T$ of the shape
$s\l$, which is a $d \times s$ rectangle, whose columns have
strictly increasing elements top to bottom, we have a monomial $m_T
 = \prod_c Z_c \in \C[V]_s$, were $Z_c$ is the coordinate of $P(V)$
indexed by the $d$-tuple $c$, and $c$ ranges over the $s$ columns of $T$.
We say that $m_T$ is (semi)-standard if the rows of $T$ are nondecreasing, when
read left to right. It is called nonstandard otherwise.

\begin{lemma}
[Straightening Lemma] Each non-standard $m_T$ can be straightened to
a normal form, as a linear combination of standard monomials,
by using  Van der Waerden Syzygies as straightening relations (rewriting
rules).
\end{lemma}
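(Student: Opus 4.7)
The plan is to use the Van der Waerden syzygies as rewriting rules together with a well-founded order on the numberings, so that every application of a syzygy strictly decreases the tableau and the process terminates in a linear combination of standard monomials.

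First I would equip the set of numberings $T$ of shape $s\lambda$ (with strictly increasing columns) with a well-order. A convenient choice: read each column of $T$ from top to bottom to produce a word in $\{1,\dots,n\}$, assemble the columns from left to right, and compare the resulting words lexicographically. Call this order $\prec$. Standard monomials will turn out to be the $\prec$-minimal representatives in each equivalence class modulo the ideal of syzygies.

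Second, given a non-standard $m_T$, I would locate the leftmost adjacent pair of columns $C_i=[a_1<\cdots<a_d]$ and $C_{i+1}=[b_1<\cdots<b_d]$ that violates the semistandard condition, i.e.\ some row $s$ with $a_s>b_s$. Apply the Van der Waerden syzygy $[[\alpha,\beta,\gamma]]$ with
\[
\alpha=(a_1,\dots,a_{s-1}),\quad \beta=(a_s,a_{s+1},\dots,a_d,b_1,b_2,\dots,b_s),\quad \gamma=(b_{s+1},\dots,b_d),
\]
so that $|\alpha|=s-1$, $|\beta|=d+1$, $|\gamma|=d-s$. One term of the syzygy, corresponding to the permutation $\tau$ that separates $\beta$ into its first $d-s+1$ entries and its last $s$ entries, reproduces $\pm[a_1,\dots,a_d]\,[b_1,\dots,b_d]$; solving the syzygy relation modulo the ideal gives an expression for the pair $[C_i][C_{i+1}]$ as a $\mathbb{Z}$-linear combination of other pairs $[C_i'][C_{i+1}']$. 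Multiplying through by the unchanged columns of $T$ rewrites $m_T$ as a combination of new monomials $m_{T'}$.

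Third, and this is the main obstacle, one must check that every resulting $m_{T'}$ satisfies $T'\prec T$. Each alternative term of the syzygy is obtained by exchanging some of $\{a_s,\dots,a_d\}$ with some of $\{b_1,\dots,b_s\}$ and then reordering the entries of the two columns to be strictly increasing (at the cost of a sign). The new column $C_i'$ receives at least one entry from $\{b_1,\dots,b_s\}$, each of which is strictly smaller than $a_s$ (and hence than the entry it replaces in positions $\geq s$ of $C_i$). After sorting, the word of $C_i'$ is lexicographically smaller than the word of $C_i$, so $T'\prec T$. Terms with a repeated index in some column vanish as minors. This is the delicate combinatorial verification that I expect to require the most care.

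Finally, by well-founded induction on $\prec$, iterating the rewriting eventually produces a $\mathbb{C}$-linear combination of standard monomials congruent to $m_T$ modulo the ideal generated by the Van der Waerden syzygies, which is the desired normal form. Together with the Second Fundamental Theorem, the same procedure shows that the standard monomials span $R_s$, and an independent dimension count (e.g.\ matching Weyl's character formula for $V_{s\lambda}$) shows they are linearly independent, giving an explicit basis and so proving the Borel--Weil description of $R$ for the Grassmannian.
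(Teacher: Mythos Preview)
The paper states this lemma without proof, deferring to Fulton \cite{YT}; your outline is essentially the standard argument found there, and it is correct. The choice of order, the selection of the violating pair $(C_i,C_{i+1})$ at some row $s$ with $a_s>b_s$, and the identification of the syzygy term that reproduces $[a_1,\dots,a_d][b_1,\dots,b_d]$ are all right.

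For the step you flag as delicate, here is a clean way to finish it. In any alternative term, the new left column $C_i'$ (as a set) satisfies $C_i'\setminus C_i\subseteq\{b_1,\dots,b_s\}$ and $C_i\setminus C_i'\subseteq\{a_s,\dots,a_d\}$, with both differences nonempty. Since every $b_j$ with $j\le s$ is $<a_s$ and every $a_k$ with $k\ge s$ is $\ge a_s$, the minimum of the symmetric difference lies in $C_i'$; hence the first position where the sorted columns disagree has the smaller entry in $C_i'$, giving $C_i'\prec C_i$ strictly. Columns $C_1,\dots,C_{i-1}$ are untouched, so $T'\prec T$.

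One small correction to your final paragraph: the logical flow in the paper (and in Fulton) runs the other way. Straightening by itself already shows that standard monomials span $R_s$, since the syzygies vanish on the Grassmannian; you do not need the Second Fundamental Theorem for that. Rather, once you \emph{also} know linear independence of the standard monomials as functions on $Gr_d^n$ (proved directly, or by your dimension count against $V_{s\lambda}^*$), both the Second Fundamental Theorem and the Borel--Weil description follow: the syzygies must generate the full ideal because the quotient already has the correct dimension.
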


For any numbering $T$ as above, express $m_T$ in a normal form as per the
lemma:
$$m_T = \sum_{\text{(Semi)-Standard
Tableau } S} \alpha(S, T),m_S$$
where $\alpha(S,T) \in \C$.

\begin{thm}
[Borel-Weil Theorem for Grassmannians] Standard monomials $\{m_T\}$
form a basis of $R_s$, where $T$ ranges over all {\it semi-standard}
tableaux of rectangular shape $s\l$.
Hence, $R_s \cong  V_{s\l}^*$, the dual of the Weyl module  $V_{s \l}$.
\end{thm}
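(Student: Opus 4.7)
The plan is to prove spanning via the Straightening Lemma combined with SFT, linear independence via restriction to a standard affine chart of the Grassmannian, and the $G$-module identification via a highest-weight and dimension comparison with the Weyl module $V_{s\lambda}$.

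For spanning, note that $\C[V]_s$ is spanned as a vector space by the monomials $m_T$ for $T$ a numbering of the $d \times s$ rectangle $s\lambda$ with strictly increasing columns; these correspond to degree-$s$ monomials in the Plücker coordinates $Z_c$. The Straightening Lemma rewrites every non-standard $m_T$, modulo the span of Van der Waerden syzygies, as a linear combination of standard monomials. By the Second Fundamental Theorem (Theorem~\ref{tsecondthm1}), these syzygies generate $I_V[Gr_d^n]$, so the straightening identity holds in $R_s = \C[V]_s / I_V[Gr_d^n]_s$. Hence the standard monomials span $R_s$.

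For linear independence, I would restrict to the standard affine open $U_{1,\dots,d} \subset Gr_d^n$ where the Plücker coordinate $Z_{1,\dots,d}$ is nonzero. Every $d$-subspace in $U_{1,\dots,d}$ has a unique basis given by the rows of $(I_d \mid Y)$ for some $d \times (n-d)$ matrix $Y$ of free entries, and each Plücker coordinate $Z_c$, after division by $Z_{1,\dots,d}$, becomes either $\pm 1$ or a specific minor of $Y$. Sending $m_T \mapsto m_T / Z_{1,\dots,d}^s$ thus embeds each standard monomial into $\C[Y]$. One splits the resulting vectors according to content (equivalently, weight under the diagonal torus), and within each weight space one exhibits a distinguished leading monomial in the entries of $Y$ indexed by $T$, read off from the rows of $T$; this gives linear independence. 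Since $U_{1,\dots,d}$ is a dense open subset of $Gr_d^n$, linear independence of the restrictions implies linear independence of the standard monomials in $R_s$.

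For the $G$-module identification, observe that $Z_{1,\dots,d}^s$ is the unique (up to scalar) standard monomial of its weight: under the diagonal torus it has weight $(s,\dots,s,0,\dots,0) = s\lambda$, and under the action $\sigma : f(z) \mapsto f(z\sigma)$ it is a highest weight vector, fixed up to scalar by the parabolic that stabilizes $v_\lambda \in P(V)$. Hence $R_s$ contains an irreducible $G$-submodule isomorphic to $V_{s\lambda}^*$. By the basis description of Weyl modules in Theorem~\ref{tweyl}, $\dim V_{s\lambda}$ equals the number of semi-standard tableaux of rectangular shape $s\lambda$ with entries in $\{1,\dots,n\}$, which is precisely the number of standard monomials — so $\dim R_s = \dim V_{s\lambda}^*$ and the inclusion is an isomorphism. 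The main obstacle is the linear independence step: spanning and the module identification follow almost formally from SFT, the Straightening Lemma, and the basis theorem for $V_\lambda$, but ensuring that the straightening procedure does not collapse any standard monomials requires the explicit local computation on $U_{1,\dots,d}$ and a weight-by-weight leading-term analysis sketched above.
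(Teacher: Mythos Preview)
Your approach is essentially the paper's: spanning via the Straightening Lemma, linear independence of the standard monomials as functions on the Grassmannian, and then the module identification. The affine-chart argument on $U_{1,\dots,d}$ and the highest-weight-plus-dimension count you supply are the standard ways to fill in the two steps the paper leaves implicit.

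One logical point worth tightening: in your spanning argument you invoke SFT (Theorem~\ref{tsecondthm1}) to conclude that the straightening identities hold in $R_s$. This is more than you need and, in the paper's organization, actually backward. For spanning you only need that the Van der Waerden syzygies lie \emph{in} the ideal of $Gr_d^n$ --- which is immediate, since they vanish on the Grassmannian --- not that they \emph{generate} it. In fact the paper derives SFT and Borel--Weil \emph{together} from the straightening lemma plus linear independence: once standard monomials are known to be linearly independent in $R$, any element of $\C[V]$ that vanishes on $Gr_d^n$ must, after straightening, have all standard-monomial coefficients zero, hence already lies in the syzygy span --- that is SFT. So SFT is a consequence of the same ingredients, not an input, and your spanning step should simply cite that the syzygies vanish on $Gr_d^n$.
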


This  gives
the $G$-module structure of $R$ completely. It follows that 
the  problem of deciding if 
$V_\beta(G)$ occurs in $R_s$ can be solved  in polynomial time: this
is so if and only if $(s\l)^* = \beta$, where $(s \l)^*$ denotes the
dual partition, whose description is left as an exercise. 

The second fundamental theorem as well as the Borel-Weil theorem
easily follow from the straightening lemma and linear 
independence of the standard monomials (as functions on the Grassmanian). 

\chapter{Quantum group: basic definitions}
\begin{center} {\Large  Scribe: Paolo Codenotti} \end{center}

\noindent {\bf Goal:} 
The basic plan to implement the flip in \cite{GCT6} is to prove 
PH1 and SH via the theory of quantum groups.
We  introduce the basic 
concepts in this theory in this and the next two lectures,
and  briefly show their relevance  in the context of PH1 in the final lecture.

\noindent {\em Reference:} \cite{KS}

\section{Hopf Algebras}

Let $G$ be a group, and $K[G]$ the ring of  functions on $G$ with values in the field $K$, which will be $\C$ in our applications. The group
$G$ is defined by the following operations:

\begin{itemize}
\item multiplication: $G\times G\rightarrow G$,
\item identity $e$: $e\rightarrow G$,
\item inverse: $G\rightarrow G$.
\end{itemize}

In order for $G$ to be a group, the following properties have to hold:
\begin{itemize}
\item $eg=ge=g$,
\item $g_1(g_2g_3)=(g_1g_2)g_3$,
\item $g^{-1}g=gg^{-1}=e$.
\end{itemize}

We now want to translate these properties to properties of $K[G]$. This should be possible since $K[G]$ contains
all the information that $G$ has. In other words, we want to translate the notion of a group in terms of $K[G]$.
This translate is called a Hopf algebra. Thus if $G$ is a group,  $K[G]$ is a Hopf algebra. Let
us first define the dual operations.

\begin{itemize}
\item Multiplication is a map:
\[\cdot:G\times G\rightarrow G.\]
So co-multiplication $\Delta$ will be a map as follows:
\[K[G\times G]=K[G]\otimes K[G]\leftarrow K[G].\]
We want $\Delta$ to be the pullback of multiplication. So 
for a given $f \in K[G]$ we define $\Delta(f) \in K[G] \otimes K[G]$ 
by: \[\Delta(f)(g_1, g_2)=f(g_1g_2).\]
Pictorially:
\[\begin{CD}
G\times G @>\cdot>> G\\
@V\Delta(f)VV @VVfV\\
k @= k
\end{CD}\]

\item
The unit is a map:
\[e\rightarrow G.\]
Therefore we want the co-unit $\epsilon$ to be a map:
\[K\underleftarrow{\epsilon} K[G],\]
defined by: for $f\in K[G]$, $\epsilon(f)=f(e)$.

\item
Inverse is a map:
\[(\ )^{-1}: G\rightarrow G.\]
We want the dual antipode $S$ to be the map:
\[K[G]\leftarrow K[G]\]
defined by: for $f\in K[G]$, $S(f)(g)=f(g^{-1})$.
\end{itemize}

The following are the abstract axioms satisfied by $\Delta, \epsilon$ and
$S$.
\begin{enumerate}
\item
$\Delta$ and $\epsilon$ are algebra homomorphisms.
\[\Delta:K[G]\rightarrow K[G]\otimes K[G]\]
\[\epsilon:K[G]\rightarrow K.\]
\item
co-associativity: Associativity is defined so that the following diagram commutes:
\[\begin{CD}
G\times G @. \times @. G @= G @.\times @. G\times G\\
@V\cdot VV  @. @V\id VV      @VV\id V @. @VV\cdot V\\
G @. \times @. G @. G @. \times @. G\\
@. @V\cdot VV @. @. @VV\cdot V @.\\
@. G @.@=@. G
\end{CD}\]

Similarly, we define co-associativity so that the following dual diagram commutes:
\[\begin{CD}
K[G] \otimes K[G] @. \otimes @. K[G] @= K[G] @.\otimes @. K[G]\otimes K[G]\\
@A\Delta AA  @. @A\id AA      @AA\id A @. @AA\Delta A\\
K[G] @. \otimes @. K[G] @. K[G] @. \otimes @. K[G]\\
@. @A\Delta AA @. @. @AA\Delta A @.\\
@. K[G] @.@=@. K[G]
\end{CD}\]

Therefore co-associativity says: \[(\Delta\otimes\id)\circ \Delta = (\id\otimes \Delta)\circ\Delta.\]
\item
The property $ge=g$ is defined so that the following diagram commutes:
\[\begin{CD}
e @. \times @. G @= G\\
@VeVV  @. @VV\id V  @VVV\\
G @. \times @. G @. \id \\
@. @VV\cdot V @. @VVV\\
@. G @. @= G
\end{CD}\]

We define the co of this property so that the following diagram commutes:
\[\begin{CD}
K @. \times @. K[G] @= K[G]\\
@A\epsilon AA  @. @AA\id A @AAA\\
K[G] @. \times @. K[G] @. \id\\
@. @AA\Delta A @. @AAA\\
@. K[G] @. @= K[G]
\end{CD}\]

That is, $\id=(\epsilon\otimes\id)\circ\Delta.$ Similarly, $ge=g$ translates to:
$\id=(\id\otimes\epsilon)\circ\Delta.$ Therefore we get

\[\id=(\epsilon\otimes\id)\circ\Delta=(\id\otimes\epsilon)\circ\Delta.\]

\item
The last property is $gg^{-1}=e=g^{-1}g$. The first equality is equivalent to requiring that the following
diagram commute:
\[\begin{CD}
 @. G @. @= G\\
@.  @V\textrm{diag}VV @. @VVV\\
G @. \times @. G @. @VVV\\
@V()^{-1} VV @. @VV\id V e\\
G @. \times @. G @. @VVV\\
@. @VV\cdot V @. @VVV\\
@. G @. @= G
\end{CD}\]
Where $\textrm{diag}:G\rightarrow G\times G$ is the diagonal embedding. The co of $\textrm{diag}$ is $m:K[G]\leftarrow K[G]\otimes K[G]$ defined by $m(f_1, f_2)(g)=f_1(g)\cdot f_2(g)$. So the co of this property will hold when the following diagram commutes:

\[\begin{CD}
 @. K[G] @. @= K[G]\\
@.  @AmAA @. @AAA\\
K[G] @. \otimes @. k[G] @. @A\nu AA\\
@ASAA @. @AA\id A K\\
K[G] @. \otimes @. K[G] @. @AAA\\
@. @AA\Delta A @. @A\epsilon AA\\
@. K[G] @. @= K[G]
\end{CD}\]
Where $\nu$ is the embedding of $K$ into $K[G]$. Therefore the last property we want to be satisfied is:
\[m\circ(S\otimes\id)\circ\Delta=\nu\circ\epsilon.\] For $e=g^{-1}g$, we
similarly  get:
\[m\circ(\id\otimes S)\circ\Delta=\nu\circ\epsilon.\]

\end{enumerate}

\begin{defi}[Hopf algebra]
A $K$-algebra $A$ is called a \emph{Hopf algebra} if there exist homomorphisms $\Delta:A\otimes A\rightarrow A$,
$S:A\rightarrow A$, $\epsilon:A\rightarrow K$, and $\nu:A\rightarrow K$ that satisfy $(1)-(4)$ above, with $A$
in place of $K[G]$.
\end{defi}

We have shown that if $G$ is a group, the ring $K[G]$
of functions on $G$ is a (commutative)
Hopf algebra, which is non-co-commutative if $G$ is non-commutative.
Thus for every usual group, we get a 
commutative Hopf algebra. However, in general, Hopf algebras may be
 non-commutative.
\begin{defi}
A \emph{quantum group} is a (non-commutative and non-co-commutative) Hopf algebra.
\end{defi}

A nontrivial example of a quantum group will be constructed in the next lecture.

Next we want to look at what happens to group theoretic notions such as representations, actions, and
homomorphisms, in the context of Hopf algebras. These will correspond to co-representations, co-actions, and
co-homomorphisms.

Let us look closely at the notion of co-representation. A representation is a map $\cdot:G\times V\rightarrow
V$, such that \begin{itemize}\item$(h_1h_2)\cdot v = h_1\cdot(h_2\cdot v)$, and \item$e\cdot v=v$.\end{itemize}
Therefore a (right) co-representation of $A$ will be a linear mapping $\varphi:V\rightarrow V\otimes A$, where
$V$ is a $K$-vector space, and $\varphi$ satisfies the following:
\begin{itemize}
\item The following diagram commutes:
\[\begin{CD}
V\otimes A\otimes A @<\id\otimes\Delta<< V\otimes A\\
@A\varphi\otimes\id AA   @AA\varphi A\\
V \otimes A @<<\varphi < V
\end{CD}\]
That is, the following equality holds:
$$(\varphi\otimes \id)\circ\varphi=(\id\otimes\Delta)\circ\varphi.$$
\item The following diagram commutes:
\[\begin{CD}
V\otimes K@<\id<< V\otimes K\\
@A\id\otimes\epsilon AA   @|\\
V \otimes A @<<\varphi < V
\end{CD}\]
That is, the following equality holds:
$$(\id\otimes\epsilon)\circ\varphi=\id$$
\end{itemize}

In fact all usual group theoretic notions can be ``Hopfified" in this sense [exercise].

Let us look now at an example. Let $$G=GL_n(\C)=GL(\C^n)=GL(V),$$ where $V=\C^n$. Let $M_n$ be the matrix space
of $n\times n$ $\C$-matrices, and ${\cal O}(M_n)$ the coordinate ring of $M_n$, $${\cal O}(M_n)=\C[U]=C[\{u_j^i\}],$$ where
$U$ is an $n\times n$ variable matrix with entries $u_j^i$. 
Let $\C[G]={\cal O}(G)$ be the coordinate ring of $G$
obtained by adjoining $det(U)^{-1}$ to ${\cal O}(M_n)$. That is,
 $\C[G]={\cal O}(G)=\C[U][\det(U)^{-1}]$, which is the
$\C$ algebra generated by  $u_j^i$'s and $\det(U)^{-1}$.

\begin{prop}
\C[G] is a Hopf algebra, with $\Delta$, $\epsilon$, and $S$ as follows.
\begin{itemize}
\item Recall that the axioms of a Hopf algebra require that \[\Delta:\C[G]\rightarrow\C[G]\otimes\C[G],\]
$$\Delta(f)(g_1,g_2)=f(g_1g_2).$$ Therefore we define $$\Delta(u_j^i) = \sum_k u^i_k\otimes u_j^k,$$ where $U$ denotes 
the generic matrix in $M_n$ as above.
\item
Again, it is required that $$\epsilon(f)=f(e).$$ Therefore we define $$\epsilon(u_j^i)=\delta_{ij},$$ where
$\delta_{ij}$ is the Kronecker delta function.
\item
Finally, the antipode is required to satisfy $S(f)(g)=f(g^{-1})$. Let $\widetilde{U}$ be the cofactor matrix of
$U$, $U^{-1}=\frac{1}{\det(U)} \widetilde{U}$, and  $\widetilde{u}_j^i$
 the entries of $\widetilde{U}$.
Then we define $S$ by:$$S(u_j^i)=\frac{1}{\det(U)}\widetilde{u}_j^i = 
(U^{-1})^i_j.$$
\end{itemize}
\end{prop}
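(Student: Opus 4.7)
The plan is to exploit the fact that $GL_n(\C)$ is an affine algebraic group: the multiplication $\mu\colon G\times G\to G$, the identity inclusion $e\colon\{*\}\to G$, and inversion $\iota\colon G\to G$ are morphisms of affine varieties, so their pullbacks on the coordinate ring give algebra homomorphisms $\Delta=\mu^*$, $\epsilon=e^*$, $S=\iota^*$; the group axioms then automatically translate into the dual axioms required of a Hopf algebra. First I would check that the formulas given on the generators $u^i_j$ are precisely these pullbacks. On $(g_1,g_2)\in G\times G$, the $(i,j)$-entry of $g_1g_2$ is $\sum_k (g_1)^i_k (g_2)^k_j$, matching $\Delta(u^i_j)=\sum_k u^i_k\otimes u^k_j$; the identity matrix gives $u^i_j(I)=\delta_{ij}$, matching $\epsilon$; and Cramer's rule, $(g^{-1})^i_j=\widetilde{g}^{\,j}_i/\det(g)$, matches $S$.

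Next I would verify that these three maps, initially defined on $\C[M_n]=\C[\{u^i_j\}]$, extend to the localization $\C[G]=\C[M_n][\det(U)^{-1}]$. This amounts to computing the image of $\det(U)$ under each structure map. By multiplicativity of the determinant applied to a product of two generic matrices, one obtains $\Delta(\det(U))=\det(U)\otimes\det(U)$, which is invertible in $\C[G]\otimes\C[G]$ with inverse $\det(U)^{-1}\otimes\det(U)^{-1}$; one has $\epsilon(\det(U))=\det(I)=1$; and $S(\det(U))=\det(U^{-1})=\det(U)^{-1}\in\C[G]$. Hence $\Delta$ and $\epsilon$ extend uniquely by sending $\det(U)^{-1}$ to the corresponding inverse, and $S$ extends by sending $\det(U)^{-1}$ to $\det(U)$; the universal property of localization makes each extension an algebra homomorphism.

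Once existence of the extensions is in hand, the Hopf axioms reduce to equalities of algebra homomorphisms on the generating set $\{u^i_j\}$, which is a mechanical verification. For co-associativity, both $(\Delta\otimes\id)\Delta(u^i_j)$ and $(\id\otimes\Delta)\Delta(u^i_j)$ evaluate to $\sum_{k,\ell} u^i_k\otimes u^k_\ell\otimes u^\ell_j$. For the co-unit, $(\epsilon\otimes\id)\Delta(u^i_j)=\sum_k\delta_{ik}u^k_j=u^i_j$, and symmetrically on the other side. For the antipode, $m(S\otimes\id)\Delta(u^i_j)=\sum_k(U^{-1})^i_k u^k_j$ is the $(i,j)$-entry of $U^{-1}U=I$, equal to $\delta_{ij}=\nu(\epsilon(u^i_j))$; the other antipode identity follows from $UU^{-1}=I$. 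Commutativity of $\C[G]$ removes any concern about anti-homomorphism conventions for $S$.

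The hard part is genuinely nonexistent here: the whole proposition is a translation of the axioms of $GL_n(\C)$ into dual language on the coordinate ring, which is only possible because $GL_n(\C)$ is a variety (hence the need for the localization at $\det(U)^{-1}$). The one step that is not purely formal is the identity $\Delta(\det(U))=\det(U)\otimes\det(U)$, needed to extend $\Delta$ past the localization; this is the single place where a genuine computation with determinants enters, and it follows either from the Cauchy--Binet formula or, most conceptually, from the multiplicativity of the determinant evaluated on the universal pair of matrices.
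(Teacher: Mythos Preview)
Your proof is correct and follows the same conceptual line as the paper: both obtain $\Delta$, $\epsilon$, $S$ as the pullbacks of multiplication, identity, and inversion on $GL_n(\C)$, so that the Hopf axioms are nothing but the dualized group axioms. The paper in fact gives no proof beyond the motivational derivation of the three formulas embedded in the proposition statement itself; you go further by explicitly handling the extension past the localization at $\det(U)^{-1}$ (via $\Delta(\det U)=\det U\otimes\det U$, $\epsilon(\det U)=1$, $S(\det U)=\det U^{-1}$) and by checking co-associativity, co-unit, and the antipode identities on generators. This is the right level of care for turning the paper's sketch into an actual proof.
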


\chapter{Standard quantum group}
\begin{center} {\Large  Scribe: Paolo Codenotti} \end{center} 

\noindent {\bf Goal:} 
In this lecture we  construct the standard (Drinfeld-Jimbo) 
quantum group,
which is a  $q$-deformation of the general linear group $GL_\n(\C)$ 
with remarkable properties. 

\noindent {\em Reference:}  \cite{KS} 

Let $G=GL(V)=GL(\C^n)$, and $V=\C^n$. In the earlier 
lecture, we constructed the commutative and non co-commutative
Hopf algebra $\C[G]$. In this lecture we
quantize $\C[G]$ to  get
a non-commutative and non-co-commutative Hopf algebra  $\C_q[G]$,
and then define
the standard quantum group $G_q=GL_q(V)=GL_q(n)$ as the virtual object 
whose coordinate ring is $\C_q[G]$.

We start by defining $GL_q(2)$ and $SL_q(2)$, for $n=2$. Then we will generalize this construction to arbitrary
$n$. 
Let  ${\cal O}(M_2)$ be 
the coordinate ring of $M_2$, the set of $2\times 2$ complex matrices,
$\C[V]$  the coordinate ring of $V$ generated by 
the coordinates $x_1$ and $x_2$ of $V$ which 
satisfy $x_1x_2=x_2x_1$. Let
$$U=\left[
\begin{tabular}{ll}
a & b \\
c & d
\end{tabular}
\right]$$ be the  generic (variable) matrix in $M_2$.
It  acts on $V=\C^2$ from the left and from the
right. Let
$$x=\left[
\begin{tabular}{l}
$x_1$ \\
$x_2$
\end{tabular}
\right].$$ The left action is defined by \[x\rightarrow x^\prime:=Ux.\] Let
$$x^\prime=\left[\begin{tabular}{l}$x_1^\prime$\\$x_2^\prime$\end{tabular}\right].$$ Similarly, the right action
is defined by\[x^T\rightarrow (x^{\prime\prime})^T:=x^TU.\] Let
$$x^{\prime\prime}=\left[\begin{tabular}{l}$x_1^{\prime\prime}$\\$x_2^{\prime\prime}$\end{tabular}\right].$$
The action of $M_2$ on $V$ satisfies\[x_1^\prime x_2^\prime=x_2^\prime x_1^\prime\textrm{,
and}\]\[x_1^{\prime\prime} x_2^{\prime\prime}=x_2^{\prime\prime} x_1^{\prime\prime}.\]

Now instead of $V$, we  take its $q$-deformation $V_q$, a quantum space,
whose coordinates $x_1$ and $x_2$ satisfy
\begin{equation}\label{eq:comm}
x_1x_2=qx_2x_1,
\end{equation}
where $q\in \C$ is a parameter. Intuitively, in quantum physics if $x_1$ and $x_2$ are position and momentum, then
$q=e^{i\hbar}$ when $\hbar$
  is Planck's constant. Let $\C_q[V]$ be the ring generated by
$x_1$ and $x_2$ with the relation ($\ref{eq:comm}$). That is,\[\C_q[V]=\C[x_1, x_2]/<x_1x_2-qx_2x_1>.\]
It is  the coordinate ring of the quantum space $V_q$.
 Now we want to quantize $M(2)$ to get $M_q(2)$, the space
of quantum $2\times 2$ matrices, and $GL(2)$ to $GL_q(2)$, the space of quantum $2\times 2$ nonsingular
matrices. Intuitively, $M_q(2)$ is the space of linear transformations of the quantum space $V_q$ which preserve
the equation (\ref{eq:comm}) under the left and right actions, 
and similarly, $GL_q(2)$ is the space of non-singular
linear transformation that preserve the equation (\ref{eq:comm}) under the
left and right actions. We now formalize
this intuition.

Let $U=\left(
\begin{tabular}{ll}
a & b \\
c & d
\end{tabular}
\right)$ be a quantum matrix whose coordinates do not commute.
The  left and  right actions of $U$ must preserve \ref{eq:comm}.

\vspace{.1in}

\noindent [Left action:] 
Let the left action be $\varphi_L:x\rightarrow Ux$, and $Ux=x^\prime$.
 Then we must have:

\[\left(\begin{tabular}{ll}$a$ & $b$ \\$c$ & $d$\end{tabular}\right)
\left( \begin{tabular}{l} $x_1$ \\ $x_2$ \end{tabular} \right) = \left(\begin{tabular}{l}$ax_1+bx_2$
\\$cx_1+dx_2$\end{tabular}\right) = \left( \begin{tabular}{l} $x_1^\prime$ \\ $x_2^\prime$
\end{tabular} \right).\]

\vspace{.1in}

\noindent [Right action:]
Let the right action be $\varphi_R:x^T\rightarrow x^T U$,
and let $x^{\prime\prime}=(x^T U)^T=U^T x$. Then we must
have:

\[\left( \begin{tabular}{ll} $x_1$ & $x_2$ \end{tabular} \right) \left(\begin{tabular}{ll}$a$ & $b$ \\$c$ & $d$\end{tabular}\right)
= \left(\begin{tabular}{l}$ax_1+cx_2$
\\$bx_1+dx_2$\end{tabular}\right) = \left( \begin{tabular}{l} $x_1^{\prime\prime}$ \\ $x_2^{\prime\prime}$
\end{tabular} \right).\]

The preservation of $x_1x_2=qx_2x_1$ under left multiplication means $$x_1^\prime x_2^\prime=qx_2^\prime x_1^\prime.
$$
That is,
\begin{equation}\label{eq:pres}
(ax_1+bx_2)(cx_1+dx_2)=q(cx_1+dx_2)(ax_1+bx_2).
\end{equation}
The left hand side of (\ref{eq:pres}) is
$$acx_1^2+bcx_2x_1+adx_1x_2+bdx_2^2=acx_1^2+(bc+adq)x_2x_1+bdx_2^2.$$
Similarly, the right hand side of (\ref{eq:pres}) is
$$q(cax_1^2+(da+cbq)x_2x_1+bdx_2^2).$$
Therefore equation (\ref{eq:pres}) implies:
\[\begin{array}{l}
    ac=qca\\
    bd=qdb\\
    bc+adq=da+qcb.
\end{array} \]
That is, 
\[\begin{array}{l}
    ac=qca\\
    bd=qdb\\
    ad-da-qcb+q^{-1}bc=0.
\end{array}\]

Similarly, since $x_1^{\prime\prime}x_2^{\prime\prime}=qx_2^{\prime\prime}x_1^{\prime\prime}$, we get:
\[\begin{array}{l}
    ab=qba\\
    cd=qdc\\
    ad-da-qbc+q^{-1}cb=0.
\end{array}\]

The last equations from each of these sets imply $bc=cb$.

So we  define ${\cal O}(M_q(2))$, the coordinate ring of the space
 of $2\times 2$ quantum matrices $M_q(2)$, to be
the $\C$-algebra with generators $a$, $b$, $c$, and $d$,
 satisfying the relations:
\begin{eqnarray*} ab=qba,
\quad ac=qca, \quad bd=qdb, \quad cd=qdc,\\ \quad bc=cb, \quad ad-da=(q-q^{-1})bc.
\end{eqnarray*}

Let
\begin{eqnarray*}
U = \left(\begin{array}{ll}a & b \\ c & d\end{array}\right) = \left(
\begin{array}{ll}u^1_{1} & u^1_2 \\ u^2_1 & u^2_2 \end{array}\right).\end{eqnarray*}
Define the quantum determinant of $U$ to be 
\[D_q=\det(U)=ad-qbc=da-q^{-1}bc.\]
Define $\C_q[G]={\cal O}(GL_q(2))$, the coordinate ring of the virtual 
quantum group $GL_q(2)$ of invertible $2\times 2$ quantum matrices, to be
\[{\cal O}(GL_q(2))={\cal O}(M_q(2))[D_q^{-1}],\]
where the square brackets indicate adjoining.

\begin{prop}
The coordinate ring
${\cal O}(GL_q(2))$ is a Hopf algebra, with $$\Delta(u^i_j)=\sum_k u_k^i\otimes u_j^k,$$ $$S(u_j^i)=\frac{1}{D_q}
\widetilde{u}_j^i=(U^{-1})^i_j,$$ $$\epsilon(u_j^i)=\delta_{i j},$$ 
where 
$\widetilde{U}=[\tilde u^i_j]$ is the cofactor 
matrix
\begin{eqnarray*}
\widetilde{U} = \left(\begin{array}{ll}d & -q^{-1}b \\ -q c & a\end{array}\right).\end{eqnarray*}
 (defined  so that $U \widetilde{U}=D_q I$) and
$U^{-1}=\tilde U/D_q$ is the inverse of $U$.
\end{prop}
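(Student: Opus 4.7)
The plan is to verify in turn that $\Delta$, $\epsilon$, and $S$ extend as specified from the generators to well-defined maps on ${\cal O}(GL_q(2))$, and then to verify the four Hopf-algebra axioms on generators (which suffices, since everything in sight will be an algebra homomorphism or anti-homomorphism). Throughout, $\Delta(u^i_j) = \sum_k u^i_k \otimes u^k_j$ is just the matrix-multiplication comultiplication encoding $U \mapsto U \otimes U$ with non-commuting entries.

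First I would check that $\Delta$, defined on generators, preserves the six quantum matrix relations, so that it extends to an algebra homomorphism ${\cal O}(M_q(2)) \to {\cal O}(M_q(2)) \otimes {\cal O}(M_q(2))$. Using the explicit expansions $\Delta(a) = a\otimes a + b\otimes c$, $\Delta(b) = a\otimes b + b\otimes d$, $\Delta(c) = c\otimes a + d\otimes c$, $\Delta(d) = c\otimes b + d\otimes d$, one verifies each relation such as $\Delta(a)\Delta(b) = q\,\Delta(b)\Delta(a)$ by a direct calculation in the two tensor slots, in which the relations $ab=qba$, $ac=qca$, $bd=qdb$, $cd=qdc$, $bc=cb$, and $ad-da=(q-q^{-1})bc$ must be invoked in each factor. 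Next, I would compute $\Delta(D_q)$ by expanding $\Delta(a)\Delta(d) - q\Delta(b)\Delta(c)$; using the quantum commutation relations this collapses to $D_q \otimes D_q$, so setting $\Delta(D_q^{-1}) = D_q^{-1}\otimes D_q^{-1}$ extends $\Delta$ to all of ${\cal O}(GL_q(2))$. That $\epsilon$ extends is immediate, since the assignment $u^i_j \mapsto \delta_{ij}$ sends every defining relation to $0=0$, and $\epsilon(D_q) = 1$ is invertible.

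For the antipode $S$, the plan is to first show that the prescription on generators defines an algebra \emph{anti}-homomorphism $S : {\cal O}(M_q(2)) \to {\cal O}(M_q(2))$, after which we extend by $S(D_q^{-1}) = D_q$. Concretely I would verify that $S(a) = d/D_q$, $S(b) = -q^{-1}b/D_q$, $S(c) = -qc/D_q$, $S(d) = a/D_q$ satisfy the quantum matrix relations with the order of multiplication reversed; this reduces to checking that $D_q$ is central, and to a handful of identities (e.g. $da = ad - (q-q^{-1})bc$) that come out of the same set of defining relations. Once $S$ exists as an anti-homomorphism, the antipode axiom $m\circ(S\otimes\id)\circ\Delta = \nu\circ\epsilon = m\circ(\id\otimes S)\circ\Delta$ amounts on generators to the matrix identities $U\widetilde U = D_q I = \widetilde U U$, which are a direct calculation with the four entries.

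Finally, coassociativity $(\Delta\otimes\id)\Delta = (\id\otimes\Delta)\Delta$ and the counit property $(\epsilon\otimes\id)\Delta = \id = (\id\otimes\epsilon)\Delta$ are both formal consequences of the matrix form $\Delta(U) = U\otimes U$ in index notation, and need only be checked on generators, where they reduce to the associativity of summation and to $\sum_k \delta_{ik} u^k_j = u^i_j$ respectively. The main obstacle I expect is the first step: the verification that $\Delta$ preserves each of the six quadratic relations of ${\cal O}(M_q(2))$ is the only part that genuinely uses all of the quantum parameters in a coupled way, and the computation $\Delta(ad-da) = (q-q^{-1})\Delta(bc)$ in particular is the one where the $q-q^{-1}$ term has to emerge from cross terms in the two tensor factors. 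Once that is in hand, together with the centrality of $D_q$ (which I would establish alongside $\Delta(D_q) = D_q\otimes D_q$), all remaining checks are routine.
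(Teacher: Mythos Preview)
The paper does not actually supply a proof of this proposition: it is stated and then immediately followed by the remark ``This is a non-commutative and non-co-commutative Hopf algebra,'' after which the lecture moves on to general $n$; the reader is implicitly referred to the cited textbook of Klimyk--Schm\"udgen. Your outline is the standard verification one finds there and is correct in structure: check that $\Delta$ and $\epsilon$ respect the six quadratic relations (so they descend to algebra maps), compute $\Delta(D_q)=D_q\otimes D_q$ and verify centrality of $D_q$ so as to localize, check that $S$ is an anti-homomorphism, and then verify the Hopf axioms on generators via $U\widetilde U=\widetilde U U=D_q I$. There is nothing to compare, since the paper offers no argument of its own; your plan is exactly the expected one.
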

This is a non-commutative and non-co-commutative Hopf algebra.

Now we go to the general $n$. Let $V_q$ be the 
 $n$-dimensional quantum space, the $q$-deformation of $V$,
 with coordinates $x_i$'s which
satisfy \begin{equation}\label{eq:commute}x_ix_j=qx_jx_i \quad \forall i<j.\end{equation} Let $\C_q[V]$ be the
coordinate ring of $V_q$ defined by \[\C_q[V]=\C[x_1, \dots, x_n]/<x_ix_j-qx_jx_i>.\] Let $M_q(n)$ be the space
of quantum $n\times n$ matrices, that is the set of linear transformations on $V_q$ which preserve
(\ref{eq:commute}) under the left as well as the right action. The left action 
is given by:
$$\left[\begin{array}{l}x_1\\...\\x_n \end{array}\right]=x\rightarrow Ux=x^\prime,$$
where $U$ is the  $n\times n$ generic quantum matrix. Similarly,
 the right action is given by: $$x^T\rightarrow
x^TU=(x^{\prime\prime})^T.$$

Preservation of (\ref{eq:commute}) under the left and right actions means:
\[x_i^\prime y_j^\prime=qx_j^\prime x_i^\prime,\quad \textrm{for}\quad i<j \]
\[x_i^{\prime\prime} y_j^{\prime\prime}=qx_j^{\prime\prime} x_i^{\prime\prime},\quad \textrm{for}\quad i<j. \]

After straightforward calculations, 
these yield the following relations on the entries $u_{ij}=u^i_j$ of $U$:
\begin{eqnarray}\label{eq:reltns}
u_{jk}u_{ik}  = q^{-1}u_{ik}u_{jk} & (i<j) \nonumber \\
u_{kj}u_{ki}  = q^{-1}u_{ki}u_{kj} & (i<j) \nonumber \\
u_{jk}u_{i\ell}  = u_{i\ell}u_{jk} & (i<j, k<\ell) \nonumber \\
u_{jl}u_{ik}  = u_{ik}u_{j\ell}-(q-q^{-1})u_{jk}u_{i\ell} & (i<j, k<\ell).
\end{eqnarray}

The quantum determinant is defined as
\[D_q=\sum_{\sigma\in S_n} (-q)^{\ell(\sigma)}u_{j1}^{i_{\sigma(1)}}
\dots u_{jn}^{i_{\sigma(n)}},\]
where $\ell(\sigma)$ denotes the length of the permutation $\sigma$,
 that is, the number of inversions in
$\sigma$. This  determinant formula is the same as the usual formula substituting $(-q)$ for $(-1)$.

We define the coordinate ring of the space $M_q(n)$ 
of quantum $n\times n$ matrices by
\[{\cal O}(M_q(n)) = \C[U]/<(\ref{eq:reltns})>,\ \textrm{and}\]
and the coordinate ring of the virtual quantum group $GL_q(n)$ by 
\[\C_q[G]={\cal O}(GL_q(n))={\cal O}(M_q(n))[D_q^{-1}].\]
We  define the quantum minors and, using these,
 the quantum co-factor matrix $\widetilde{U}$ and the quantum inverse
matrix $U^{-1}=\widetilde{U}/D_q$ in a straightforward fashion
(these constructions are left as exercises).

\begin{thm}
The algebra ${\cal O}(GL_q(n))$ is a Hopf algebra, with
\[\Delta(u^i_j)=\sum_k u_k^i\otimes u_j^k\]
\[\epsilon(u_j^i)=\delta_{ij}\]
\[S(u_j^i)=\frac{1}{D_q}\widetilde{u}_j^i=(U^{-1})^i_j\]
\[S(D_q^{-1})=D_q.\]
\end{thm}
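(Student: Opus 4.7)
The plan is to establish the three structural maps $\Delta$, $\epsilon$, $S$ are well-defined on the quotient algebra $\mathcal{O}(GL_q(n)) = \C\langle U\rangle/\langle \text{relations }(\ref{eq:reltns})\rangle[D_q^{-1}]$, and then verify the Hopf axioms by a calculation on the generators $u^i_j$. The classical commutative case $q=1$ gives a template, but nothing is automatic: because the algebra is noncommutative, each check must respect the ordering of factors.

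First I would handle $\Delta$ and $\epsilon$. Extend both from generators to the free algebra $\C\langle U\rangle$ as algebra homomorphisms via the prescribed formulas. For $\epsilon(u^i_j)=\delta_{ij}$, one simply substitutes and observes that each relation in $(\ref{eq:reltns})$ collapses: e.g. $\epsilon(u_{jk}u_{ik}-q^{-1}u_{ik}u_{jk}) = \delta_{jk}\delta_{ik}-q^{-1}\delta_{ik}\delta_{jk}$, which vanishes since $i<j$ forces at least one Kronecker delta to be zero, and similarly for the other relations. For $\Delta$, the cleanest approach is to recast the relations in R-matrix form $R U_1 U_2 = U_2 U_1 R$ (where $U_1 = U\otimes I$, $U_2 = I\otimes U$, and $R$ is the Drinfeld–Jimbo R-matrix encoding $(\ref{eq:reltns})$). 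Since $\Delta(U) = U'\cdot U''$ is the ordinary matrix product of two \emph{commuting} copies $U' = U\otimes 1$ and $U'' = 1\otimes U$ (entries from the two tensor factors commute), a standard manipulation gives $R\, \Delta(U)_1\Delta(U)_2 = R U'_1 U''_1 U'_2 U''_2 = U''_2 U'_2 U''_1 U'_1 R = \Delta(U)_2 \Delta(U)_1 R$, so the relations are preserved. A parallel lemma shows $\Delta(D_q) = D_q\otimes D_q$ and $D_q$ is central, so $\Delta$ extends to the localization by $\Delta(D_q^{-1}) = D_q^{-1}\otimes D_q^{-1}$.

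Next I would handle $S$, which must be an \emph{anti}-homomorphism. Define $S$ on generators by $S(u^i_j) = (U^{-1})^i_j = \widetilde u^i_j/D_q$ and extend to $\C\langle U\rangle^{\mathrm{op}}$. Verifying that $S$ kills the relations $(\ref{eq:reltns})$ reduces to showing that the entries $\widetilde u^i_j/D_q$ of the quantum inverse satisfy the \emph{opposite} relations; this is a consequence of the Laplace expansion formulas for quantum minors together with centrality of $D_q$, and is where the technical care is concentrated. The prescription $S(D_q^{-1}) = D_q$ is forced by $S(x^{-1}) = S(x)^{-1}$ applied to $x = D_q$ and the identity $S(D_q) = D_q^{-1}$, itself obtained by applying $S$ term-by-term to the quantum determinant formula.

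With well-definedness in hand, the Hopf axioms are routine checks on the generators $u^i_j$. Coassociativity: $(\Delta\otimes\id)\Delta(u^i_j) = \sum_{k,\ell} u^i_\ell\otimes u^\ell_k\otimes u^k_j = (\id\otimes\Delta)\Delta(u^i_j)$ after relabeling. Counit: $(\epsilon\otimes\id)\Delta(u^i_j) = \sum_k \delta_{ik}u^k_j = u^i_j$, and likewise on the right. Antipode: $m\circ(S\otimes\id)\circ\Delta(u^i_j) = \sum_k (U^{-1})^i_k\, u^k_j = (U^{-1}U)^i_j = \delta_{ij} = \nu\epsilon(u^i_j)$, using the defining property of $U^{-1}$ in $\mathcal{O}(GL_q(n))$; the other antipode identity follows symmetrically from $U U^{-1} = I$. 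The main obstacle throughout is the anti-homomorphism step for $S$: proving that the quantum cofactor entries satisfy the opposite R-matrix relation requires the full Laplace/straightening theory for quantum minors, and it is here that the proof is substantially harder than its classical $q=1$ counterpart, where commutativity makes the analogous checks trivial.
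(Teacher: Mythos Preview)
Your proof outline is correct and follows the standard textbook argument. However, the paper itself does not supply a proof of this theorem: it is stated as a known result with reference to Klimyk--Schm\"udgen \cite{KS}, and the preceding $n=2$ proposition is likewise stated without proof. So there is no ``paper's own proof'' to compare against; what you have written is essentially the argument one finds in \cite{KS} (or in Kassel, or in Parshall--Wang), namely: recast the relations $(\ref{eq:reltns})$ in the $R$-matrix form $RU_1U_2 = U_2U_1R$ to get well-definedness of $\Delta$ for free, check $\epsilon$ by substitution, handle $S$ via the Laplace expansion theory for quantum minors and centrality of $D_q$, and then verify the Hopf axioms on generators by the matrix identities $U^{-1}U = UU^{-1} = I$. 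Your identification of the antipode well-definedness as the genuinely delicate step is accurate: that is exactly where the quantum minor combinatorics enters, and it is the only place where the argument is not a direct transcription of the classical $q=1$ case.
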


We also denote the quantum group 
$GL_q(n)$ by $G_q$, $GL_q(\C^n)$  or $GL_q(V)$. 
It has to be emphasized that this is only  a virtual
object. Only its coordinate ring $\C_q[G]$ is real.
Henceforth, whenever we say  representation or action of $G_q$, we
actually mean corepresentation or coaction of $\C_q[G]$, and so forth.

\chapter{Quantum unitary group}
\begin{center}  {\Large Scribe: Joshua A. Grochow} \end{center} 

\noindent{\bf Goal:} 
Define the quantum unitary subgroup of the standard quantum group. 

\noindent {\em Reference:} \cite{KS} 

\subsection*{Recall} 
Let $V=\C^n$, $G=GL_n(\C)=GL(V)=GL(\C^n)$, and $\mathcal{O}(G)$ 
the coordinate ring of $G$.
 The quantum group $G_q = GL_q(V)$ is
the virtual object whose coordinate ring is $$\mathcal{O}(G_q) = \C[U]/\langle \mbox{ relations } \rangle,$$ 
where $U$ is the generic $n \times n$ matrix of indeterminates,
and the relations are the  quadratic relations on the coordinates
 $u_{i}^j$ defined in the last class
so as to preserve the non-commuting relations 
among the coordinates of the quantum vector space $V_q$  on which $G_q$ acts.
This coordinate ring is a  Hopf algebra.  

\section{A $q$-analogue of the unitary group}
In this lecture we 
define a $q$-analogue of the unitary subgroup
$U=U_n(\C) =U(V)  \subseteq GL_n(\C)=GL(V)=G$. This  is a
$q$-deformation $U_q=U_q(V) \subseteq G_q$ of $U(V)$. 
 Since $G_q$ is only a virtual object, $U_q$ will also be virtual.
  To define $U_q$, we must determine how to capture the notion of unitarity in the setting of Hopf algebras.  As we shall see, it is captured by the notion of a Hopf $*$-algebra.

\begin{defi} A \emph{$*$-vector space} is a vector space $V$ with an involution $*: V \to V$ satisfying 
$$
\begin{array}{lr}
(\alpha v + \beta w)^* = \overline{\alpha} v^* + \overline{\beta} w^* & (v^*)^*=v
\end{array}
$$
for all $v,w \in V$, and $\alpha,\beta \in \C$.  \end{defi}

We  think of $*$ as a generalization of complex conjugation; and
in fact every complex vector space is a $*$-vector space, where  $*$ is exactly complex conjugation.

\begin{defi} A \emph{Hopf $*$-algebra} is a Hopf algebra $(A,\Delta,\epsilon,S)$ with an involution $*: A \to A$ such that $(A,*)$ is a $*$-vector space, and:
\begin{enumerate}
\item $(ab)^* = b^* a^*$, $1^* = 1$
\item $\Delta(a^*) = \Delta(a)^*$ (where $*$ acts diagonally on the tensor product $A \otimes A$: $(v \otimes w)^* = (v^* \otimes w^*)$)
\item $\epsilon(a^*) = \overline{\epsilon(a)}$
\end{enumerate}
 \end{defi}
There 
is no explicit condition here on how $*$ interacts with the antipode $S$.

Let ${\cal O}(G)=\C[G]$ be the coordinate ring of $G$ as defined earlier.

\begin{prop} Then $\mathcal{O}(G)$ is a Hopf $*$-algebra. \end{prop}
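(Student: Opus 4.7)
The plan is to define $*$ on the generators of $\mathcal{O}(G)$ by
$$(u^i_j)^* := S(u^j_i) = (U^{-1})^j_i,$$
and extend $\C$-conjugate-linearly and multiplicatively, with $(ab)^* = b^*a^*$ (which equals $a^*b^*$ since $\mathcal{O}(G)$ is commutative). This rule is forced by the unitary-subgroup picture: for $g \in U_n(\C) \subseteq G$ one has $\overline{g^i_j} = (g^{-1})^j_i$, so $\overline{u^i_j(g)} = S(u^j_i)(g)$; since $U_n(\C)$ is Zariski-dense in $G$, the formula is the unique polynomial function on $G$ whose restriction to $U_n(\C)$ is pointwise complex conjugation. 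Equivalently, with $\tau(g) := (g^*)^{-1}$ (an antiholomorphic involution of $G$ with fixed locus $U_n(\C)$), the prescription $f^*(g) := \overline{f(\tau(g))}$ produces the same rule on generators and is manifestly an element of $\mathcal{O}(G)$.

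First I would check well-definedness. Since $\mathcal{O}(G) = \C[u^i_j][\det(U)^{-1}]$ has no relations beyond commutativity and $\det(U)\cdot\det(U)^{-1} = 1$, the only thing to verify is compatibility with the last relation. Expanding $\det(U)$ by Leibniz and using commutativity one computes $(\det U)^* = \det(S(U)) = S(\det U) = \det(U)^{-1}$, so we are forced to set $(\det U^{-1})^* = \det U$, which is consistent. The involution axiom $(a^*)^* = a$ on generators --- namely $(S(u^j_i))^* = u^i_j$ --- is cleanest from the functional description, where $\tau^2 = \mathrm{id}$ gives it immediately.

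Next I would verify compatibility with the coproduct on generators:
$$\Delta(u^i_j)^* \;=\; \Bigl(\sum_k u^i_k\otimes u^k_j\Bigr)^{\!*} \;=\; \sum_k S(u^k_i)\otimes S(u^j_k),$$
while, using the standard Hopf identity $\Delta\circ S = (S\otimes S)\circ\Delta^{\mathrm{op}}$,
$$\Delta\bigl((u^i_j)^*\bigr) \;=\; \Delta(S(u^j_i)) \;=\; \sum_k S(u^k_i)\otimes S(u^j_k),$$
so the two expressions coincide. For the counit axiom, $\epsilon((u^i_j)^*) = \epsilon(S(u^j_i)) = \epsilon(u^j_i) = \delta_{ij} = \overline{\epsilon(u^i_j)}$, using $\epsilon\circ S = \epsilon$.

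The main obstacle, such as it is, is keeping the index transposition straight when invoking $\Delta\circ S = (S\otimes S)\circ\Delta^{\mathrm{op}}$ --- this is the one place where it matters that the defining formula is $(u^i_j)^* = S(u^j_i)$ with indices swapped, rather than $S(u^i_j)$. Beyond that, the proof is essentially index bookkeeping on generators followed by the multiplicative and conjugate-linear extension, and I expect no genuine obstacle.
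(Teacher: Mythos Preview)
Your proposal is correct and follows essentially the same approach as the paper: define $*$ on generators by $(u^i_j)^* = S(u^j_i)$, motivated by the fact that this must restrict to pointwise complex conjugation on the unitary subgroup, and then verify the axioms. You are in fact more thorough than the paper, which only checks the restriction-to-$U$ property explicitly and leaves the Hopf $*$-algebra axioms (1)--(3) as an exercise; your verification of the coproduct compatibility via $\Delta\circ S = (S\otimes S)\circ\Delta^{\mathrm{op}}$ and of the counit via $\epsilon\circ S = \epsilon$ fills in exactly what the paper omits.
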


\begin{proof} 
We think of the elements in  $\mathcal{O}(G)$ as $\C$-valued functions
on  $G$ and 
define $*:\mathcal{O}(G) \to \mathcal{O}(G)$ so
that it satisfies the three conditions for a Hopf $*$-algebra, and 
\begin{enumerate}
\item[(4)] For all $f \in \mathcal{O}(G)$ and  $g \in U \subseteq G$, $f^*(g) = \overline{f(g)}$
\end{enumerate}
  Let $u_{i}^j$ be the coordinate functions
 which, together with $D^{-1}$, $D=\det(U)$, generate $\mathcal{O}(G)$. 
Because of the first condition on a Hopf $*$-algebra (relating the involution $*$ to multiplication), specifying $(u_i^j)^*$ and $D^*$
suffices to define $*$ completely.  We define
$$
(u_i^j)^* = S(u_j^i) = (U^{-1})_j^i
$$
and $D^*=D^{-1}$. We can check that this satifies (1)-(4).  Here we will
only  check (4), and leave the  remaining verification  as an exercise. 
Let $g$ be an element of the unitary group $U$.  Then $(u_i^j)^*(g) = S(u_j^i)(g) = (g^{-1})_j^i = (\overline{g})_i^j$, where the last equality follows from the fact that $g$ is unitary (i.e. $g^{-1}=g^\dagger$, where $\dagger$ denotes conjugate transpose). \end{proof}

Thus, we have defined a map $f \mapsto f^*$ purely algebraically in such a way that the restriction of $f^*$ to the unitary group $U$ is the same as taking the complex conjugate $\overline{f}$ on $U$.  

\begin{prop} The coordinate ring $\C_q[G]=\mathcal{O}(G_q)$
of the quantum group $G_q=GL_q(V)$  is also a Hopf $*$-algebra. \end{prop}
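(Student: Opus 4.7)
The plan is to mimic the construction for $\mathcal{O}(G)$ from the preceding proposition. I will work throughout under the assumption $q\in\R\setminus\{0\}$ (so that $\overline{q}=q$), which is needed so that antilinearity is compatible with the $q$-commutation relations defining $\mathcal{O}(G_q)$; for the alternative real form $|q|=1$ one obtains a different $*$-structure. On the generators, I set
\[
(u_i^j)^*:=S(u_j^i)=(U^{-1})_j^i, \qquad (D_q)^*:=D_q^{-1},
\]
and extend to the free algebra $\C\langle u_i^j,D_q^{-1}\rangle$ antilinearly with the rule $(ab)^*=b^*a^*$. Three things then need to be checked: (i) this descends to $\mathcal{O}(G_q)$; (ii) the resulting map is an involution; (iii) the Hopf $*$-algebra compatibilities with $\Delta$ and $\epsilon$ hold.

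The heart of the argument is (i): I must verify that the defining quadratic relations among the $u_i^j$ (together with $D_q\cdot D_q^{-1}=1$) are preserved. Setting $v_i^j:=S(u_j^i)$, this amounts to showing that the entries of $U^{-1}$ satisfy the same quantum matrix relations as the entries of $U$. Applying $*$ termwise to each of the four defining families in (\ref{eq:reltns})---using antilinearity, the reversal rule $(ab)^*=b^*a^*$, and $\overline{q}=q$---and then renaming, one recovers exactly the same relations among the $v$'s after the row/column swap built into the formula $v_i^j=S(u_j^i)$. The cleanest way to carry this out is to check identities such as $v_{jk}v_{ik}=q^{-1}v_{ik}v_{jk}$ for $i<j$ directly from the explicit quantum cofactor formula $S(u_j^i)=D_q^{-1}\widetilde{u}_j^i$, using standard Laplace- and Muir-type identities for quantum minors; a more conceptual route is to recognize $S$ as an anti-isomorphism of $\mathcal{O}(GL_q(n))$ onto $\mathcal{O}(GL_{q^{-1}}(n))^{\mathrm{op}}$, after which transposition provides an isomorphism back to $\mathcal{O}(G_q)$.

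For (ii), I compute $((u_i^j)^*)^*=(S(u_j^i))^*$ by expanding $S(u_j^i)$ via the quantum cofactor formula and applying $*$ termwise, which reverses products, conjugates scalars, and sends each $u$-letter to an $S(u)$. The resulting ``double inverse'' collapses back to $u_i^j$ once the $q$-powers arising from the fact that $S^2$ is inner in the standard quantum group are balanced against those coming from the row-column swap. For (iii), compatibility with $\Delta$ follows formally from the Hopf algebra identity $\Delta\circ S=(S\otimes S)\circ\tau\circ\Delta$ applied to $u_j^i$: with $\Delta(u_j^i)=\sum_k u_j^k\otimes u_k^i$ we obtain
\[
\Delta((u_i^j)^*)=\Delta(S(u_j^i))=\sum_k S(u_k^i)\otimes S(u_j^k)=\sum_k (u_i^k)^*\otimes(u_k^j)^*=(\Delta(u_i^j))^*,
\]
and compatibility with $\epsilon$ is immediate from $\epsilon\circ S=\epsilon$, giving $\epsilon((u_i^j)^*)=\epsilon(u_j^i)=\delta_{ij}=\overline{\epsilon(u_i^j)}$.

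The main obstacle is step (i): checking that $*$ respects the noncommutative defining relations is the single point where the non-commutativity of $\mathcal{O}(G_q)$ genuinely intervenes, and it is precisely here that the reality assumption on $q$ is used---a different choice (such as $|q|=1$) would force a different formula for $*$ corresponding to a different real form. Steps (ii) and (iii) are then formal consequences of identities that already hold at the level of the Hopf algebra $\mathcal{O}(G_q)$.
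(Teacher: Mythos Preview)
Your approach is correct and follows exactly the same route as the paper: define $(u_i^j)^* = S(u_j^i)$ and $D_q^* = D_q^{-1}$, then verify the Hopf $*$-algebra axioms. The paper's proof is in fact a one-liner declaring the argument ``syntactically identical'' to the classical case, so your detailed treatment---particularly your correct identification of the reality assumption $q\in\R$ and of the well-definedness check in step (i) as the only place noncommutativity genuinely intervenes---goes well beyond what the paper supplies.
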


\begin{proof} The proof is syntactically identical to the proof for $\mathcal{O}(G)$, except that the coordinate function $u_i^j$ now lives
 in $\mathcal{O}(G_q)$ and the determinant $D$ becomes the $q$-determinant $D_q$.  The definition of $*$ is: $(u_i^j)^* = S(u_j^i)$ and $D_q^* = D_q^{-1}$, essentially the same as in the classical case.  \end{proof}

Intuitively, the ``quantum subgroup'' $U_q$ of $G_q$ is  the virtual object
such that the restriction to $U_q$ of the involution $*$ just defined 
coincides with the complex conjugate.  

\section{Properties of $U_q$}
We would like the nice properties of the classical unitary group to transfer over to the quantum unitary group, and  this is indeed the case.
  Some of the nice properties of  $U$ are:
\begin{enumerate}
\item It is compact, so we can integrate over $U$.
\item we can do harmonic analysis on $U$ (viz. the Peter-Weyl Theorem,
which is an analogue for $U$ of the 
Fourier analysis on the circle $U_1$).
\item Every finite dimensional representation of $U$ has a $G$-invariant Hermitian form, and thus a unitary basis -- we
say that every finite dimensional representation of $U$ is {\em unitarizable}.
\item Every finite dimensional representation $X$ 
of $U$ is completely reducible;
this follows  from (3), since any subrepresentation $W \subseteq X$ has a perpendicular subrepresentation $W^\bot$ under the $G$-invariant Hermitian form.
\end{enumerate}

Compactness is in some sense the key here.
The question is how to define it in the quantum setting.
Following Woronowicz, we define 
compactness to mean that every finite dimensional representation of $U_q$ is
unitarizable.  Let us see what this means formally.

Let $A$ be a Hopf $*$-algebra, and $W$ a corepresentation of $A$.
Let  $\rho: W \to W \otimes A$ be the corepresentation map.
  Let $\{b_i\}$ be a basis of $W$.  Then, under $\rho$, $b_i \mapsto \sum_j b_j \otimes m_i^j$ for some $m_i^j \in A$.  We can thus define the \emph{matrix of the (co)representation} $M=(m_i^j)$
 in the basis $\{b_i\}$.  We define $M^*$ such that $(M^*)_i^j = (M_j^i)^*$.
Thus, in the classical case (i.e. when $q=1$), $M^* = M^\dag$.  

We say that the corepresentation $W$ is \emph{unitarizable} if it has  a
basis $B=\{b_i\}$  such that the corresponding matrix  $M_B$
of corepresentation  satisfies the unitarity condition: $M_B M_B^* = I$. 
In this case, we say $B$ is a unitary basis of the corepresentation $W$.

\begin{defi} A Hopf $*$-algebra $A$ is \emph{compact} if every finite dimensional corepresentation of $A$ is unitarizable.  \end{defi}

\begin{thm}[Woronowicz] The coordinte ring 
$\C_q[G]=\mathcal{O}(G_q)$ is a compact Hopf $*$-algebra.
This implies that every finite dimensional representation of  $G_q$,
by which we mean a finite dimensional coorepresentation of  $\C_q[G]$,
is completely reducible.  \end{thm}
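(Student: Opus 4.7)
The plan is to mimic the classical proof of reductivity of compact groups given earlier in the notes, in which one averages an arbitrary Hermitian form against a left-invariant Haar measure to produce a $G$-invariant one. Here the group $G_q$ is purely virtual, so there is no measure to integrate against; its role will be played by a normalized invariant positive linear functional $h : \C_q[G] \to \C$ (the Haar state). Once $h$ is in hand, the averaging argument and the subsequent complete-reducibility argument go through almost verbatim, with ``integration against $dG$'' replaced by ``evaluation of $h$'' and with care taken that everything is purely algebraic/coalgebraic.

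The first and hardest step is to construct $h$. I would demand the two-sided invariance conditions
\[
(h \otimes \id)\Delta(a) \;=\; h(a)\,1 \;=\; (\id \otimes h)\Delta(a) \qquad (a \in \C_q[G]),
\]
together with positivity $h(a^*a)\ge 0$ and normalization $h(1)=1$. To build $h$, I would decompose $\C_q[G]$ into its matrix-coefficient subspaces for irreducible corepresentations (a quantum analogue of Peter--Weyl), and define $h$ to be zero on every non-trivial matrix-coefficient block and one on the trivial block; bi-invariance then follows from the coalgebra identity $\Delta(m_i^j)=\sum_k m_i^k \otimes m_k^j$. Positivity requires more care: one proves it by realizing $h$ as a limit (or Ces\`aro average) of convex combinations of states of the form $a\mapsto \phi(a^*a)$, using the compact-type structure of the Hopf $*$-algebra. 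This is exactly Woronowicz's construction, and the main obstacle in the proof is establishing both existence and positivity of $h$; nothing earlier in the notes furnishes it directly.

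Granted $h$, I would prove unitarizability as follows. Let $W$ be a finite-dimensional corepresentation with map $\rho : W \to W \otimes \C_q[G]$, written in a basis $\{b_i\}$ as $\rho(b_i) = \sum_j b_j \otimes m_i^j$. Pick any positive-definite Hermitian form $H_0$ on $W$, with matrix $H_0 = (h^0_{ij})$, and define the averaged form by
\[
H_{ij} \;=\; \sum_{k,\ell} h\!\left( (m_i^k)^* \, m_j^\ell \right) h^0_{k\ell}.
\]
The two-sided invariance of $h$, together with the $*$-algebra identities $\Delta(a^*)=\Delta(a)^*$ and $(M^*)_i^j = (m_j^i)^*$, implies $M^* H M = H$; positivity of $h$ (applied to $\sum c_k m_i^k$) implies $H$ is positive definite. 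An orthonormal basis for $H$ then transforms $M$ into a matrix $M'$ satisfying $M'(M')^* = I$, so $W$ is unitarizable in Woronowicz's sense. This gives compactness of $\C_q[G]$.

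Finally, complete reducibility follows from unitarizability by the same argument as for ordinary compact groups. If $W \subseteq X$ is a subcorepresentation and $H$ is a $G_q$-invariant positive-definite Hermitian form on $X$ produced by the previous step, then the orthogonal complement $W^{\perp}$ with respect to $H$ is again a subcorepresentation: the invariance $M^* H M = H$ translates directly into the statement that the perpendicular condition is preserved by $\rho$. Iterating the decomposition $X = W \oplus W^{\perp}$ on finite-dimensional corepresentations terminates and expresses $X$ as a direct sum of irreducibles, completing the proof. The only non-routine ingredient is, as noted, the construction and positivity of the Haar state $h$; everything else is a formal $*$-algebraic translation of the averaging/complete-reducibility arguments already used in the notes for finite and classical compact groups.
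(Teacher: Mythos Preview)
The paper does not actually prove this theorem; it is one of the background results that the lecture notes explicitly state without proof (cf.\ the Foreword: ``Many of the theorems in these lecture notes are stated without proofs, but after giving enough motivation so that they can be taken on faith''). Immediately after stating the theorem the paper simply remarks that Woronowicz goes further to establish $q$-integration and a quantum Peter--Weyl theorem, and then moves on to classify the irreducible representations of $G_q$.

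Your proposal is a correct outline of Woronowicz's original argument: construct a bi-invariant positive Haar state $h$ on $\C_q[G]$, use it to average an arbitrary Hermitian form on a finite-dimensional corepresentation into an invariant one, deduce unitarizability, and then obtain complete reducibility by taking orthogonal complements. You also correctly flag that the existence and positivity of $h$ is the only genuinely nontrivial step; everything else is the formal $*$-coalgebraic transcription of the compact-group arguments already given in the notes. So there is nothing to compare against in the paper, and your sketch stands on its own as the standard proof.
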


Woronowicz goes further to show that we can $q$-integrate on $U_q$, and
that we can do quantized harmonic analysis on $U_q$; i.e., a quantum
analogue of the Peter-Weyl theorem holds.

Now that we know the finite dimensional representations of $G_q$ are
completely reducible, we can ask what the irreducible representations are.

\section{Irreducible Representations of $G_q$}
We  proceed by analogy with the Weyl modules $V_\lambda(G)$ for $G$.  
Recall that every polynomial irreducible representation of $G=GL_n(\C)$ is of this form.

\begin{thm} \begin{enumerate}
\item For all partitions $\lambda$ of length at most $n$, there exists a $q$-Weyl module $V_{q,\lambda}(G_q)$ which is an irreducible representation of $G_q$ such that
$$
\lim_{q \to 1} V_{q,\lambda}(G_q) = V_\lambda(G).
$$
\item The $q$-Weyl modules give all polynomial irreducible representations of $G_q$.
\end{enumerate}
\end{thm}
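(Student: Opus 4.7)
The plan is to $q$-deform Weyl's tensor-product construction of $V_\lambda(G)$ (Theorems in Chapter 5 of the excerpt), replacing the symmetric group $S_d$ by the Iwahori--Hecke algebra $H_d(q)$, and then to classify via a quantum analogue of Schur--Weyl duality. First, I would observe that the fundamental corepresentation $V_q = \C_q^n$ of $G_q$ has tensor powers $V_q^{\otimes d}$ carrying a natural left $G_q$-coaction (by iterating the coproduct $\Delta$) and a commuting right action of $H_d(q)$, where the generators $T_i$ implement a $q$-swap on the $i$-th and $(i{+}1)$-st factors consistent with the quadratic relations $x_ix_j = qx_jx_i$ that define $V_q$. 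Verifying that these two actions commute is a direct computation using the relations (\ref{eq:reltns}) defining $\mathcal O(G_q)$ and the braid relations for $T_i$.

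Next, for each partition $\lambda$ of $d$ with $\mathrm{ht}(\lambda)\le n$, fix the canonical standard tableau $T$ of shape $\lambda$ and a $q$-Young symmetrizer $c_{T,q}\in H_d(q)$ (for generic $q$ these are well defined and generate the simple $H_d(q)$-modules $S_{q,\lambda}$). Define
\[
V_{q,\lambda}(G_q) := V_q^{\otimes d}\cdot c_{T,q},
\]
which, because the two actions commute, is a $G_q$-subcorepresentation. The classical limit is transparent: under the specialization $q\to 1$, $H_d(q)$ degenerates to $\C[S_d]$, $c_{T,q}$ to the classical Young symmetrizer $c_T$, and the construction recovers $V^{\otimes d}\cdot c_T \cong V_\lambda(G)$ as in Theorem~\ref{tembedweyl}. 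This proves the limit statement in part (1).

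For irreducibility and classification, the key is quantum Schur--Weyl duality, which asserts the bimodule decomposition
\[
V_q^{\otimes d} \;\cong\; \bigoplus_{\lambda\vdash d,\ \mathrm{ht}(\lambda)\le n} V_{q,\lambda}(G_q)\otimes S_{q,\lambda}.
\]
Woronowicz's theorem from the previous section gives complete reducibility of $V_q^{\otimes d}$ as a $G_q$-corepresentation, and the semisimplicity of $H_d(q)$ for generic $q$ (a theorem of Dipper--James, deforming the semisimplicity of $\C[S_d]$) handles the other side. Semisimplicity of both commuting algebras then forces each $V_{q,\lambda}(G_q)$ to be simple; their pairwise non-isomorphism follows by comparing highest weights for a $q$-deformed Borel subalgebra, which specialize at $q=1$ to the classical highest weights $\lambda$ that already distinguish the $V_\lambda(G)$. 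For part (2), any polynomial irreducible corepresentation of $G_q$ of degree $d$ embeds in some $V_q^{\otimes d}$ (by the definition of polynomial corepresentation, applied to the matrix coefficients $u^i_j$), so the decomposition above shows it must be isomorphic to some $V_{q,\lambda}(G_q)$.

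The main obstacle will be establishing quantum Schur--Weyl duality rigorously, since the double commutant argument requires both semisimplicity of $H_d(q)$ at generic $q$ and the fact that the image of $H_d(q)$ in $\End(V_q^{\otimes d})$ is precisely the $G_q$-equivariant endomorphisms. A clean way to organize this is a flat-deformation argument: the dimensions of Hom-spaces are upper semicontinuous in $q$, so equality with the classical dimensions at $q=1$ forces equality at generic $q$, which in turn forces the decomposition above. The remaining check that $c_{T,q}$ is well defined and nonzero (not all $q$-Young symmetrizers survive at roots of unity) is a combinatorial verification valid away from the locus where $H_d(q)$ fails to be semisimple, i.e., for $q$ not a nontrivial root of unity.
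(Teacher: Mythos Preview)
The paper does not actually prove this theorem; it is one of the many results in these lecture notes that, as the Foreword announces, are ``stated without proofs, but after giving enough motivation so that they can be taken on faith.'' The chapter's reference is \cite{KS}, and the surrounding text simply asserts the existence of the $q$-Weyl modules by analogy with the classical case before moving on to the Gelfand--Tsetlin basis.

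Your proposal is the standard and correct route to this result, essentially Jimbo's quantum Schur--Weyl duality: one lets the Hecke algebra $H_d(q)$ act on $V_q^{\otimes d}$ via the $R$-matrix (your ``$q$-swap''), checks that this commutes with the $G_q$-coaction, and then runs the double-commutant argument using semisimplicity on both sides (Woronowicz for $G_q$, Dipper--James or a deformation argument for $H_d(q)$ at generic $q$). The definition $V_{q,\lambda} = V_q^{\otimes d}\cdot c_{T,q}$ and the specialization $q\to 1$ recovering the classical Young symmetrizer are exactly right. Your flat-deformation/semicontinuity argument for the centralizer is the clean way to close the loop, and your caveat about roots of unity is appropriate. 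The one step you might tighten is the claim that every polynomial irreducible corepresentation of degree $d$ embeds in $V_q^{\otimes d}$: this follows because the matrix coefficients $u^i_j$ generate $\mathcal O(M_q(n))$ and the coefficient coalgebra of any degree-$d$ polynomial comodule lands in the span of degree-$d$ monomials in the $u^i_j$, which is precisely the coefficient space of $V_q^{\otimes d}$; spelling this out makes part (2) airtight.
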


\section{Gelfand-Tsetlin basis}
To understand the $q$-Weyl modules better, 
we wish to get an explicit basis for each module $V_{q,\lambda}$. 
 We begin by defining a very useful basis -- the Gel'fand-Testlin basis -- in the classical case for $V_\lambda(G)$, and then describe the $q$-analogue of this basis.

By Pieri's rule \cite{FulH}
$$
V_\lambda(GL_n(\C)) = \bigoplus_{\lambda'} V_{\lambda'}(GL_{n-1}(\C))
$$
where the sum is taken over all $\lambda'$ obtained from $\lambda$ by removing any number of boxes (in a legal way) such that no two removed boxes come
from the same column. This is an orthogonal decomposition (relative to the
$GL_n(\C)$-invariant Hermitian form on $V_\lambda$) and it is also 
multiplicity-free, i.e.,  each $V_{\lambda'}$ appears only once.

Fix a $G$-invariant Hermitian form on $V_\lambda$.  Then the \emph{Gel'fand-Tsetlin} basis for $V_\lambda(GL_n(\C))$, denoted $GT_\lambda^n$, is the unique orthonormal basis for $V_\lambda$ such that
$$
GT_\lambda^n = \bigcup_{\lambda'} GT_{\lambda'}^{n-1},
$$
where the disjoint union is   over the $\lambda'$ as in Pieri's rule,
 and $GT_{\lambda'}^{n-1}$ is defined recursively, the case $n=1$ being 
trivial.

The dimension of $V_\lambda$ is the number of semistandard tableau of shape $\lambda$.  With any tableau $T$ of this shape, one can also explicitly associate 
a basis element $GT(T) \in GT_\lambda^n$; we shall not worry about how.

We can define the Gel'fand-Tsetlin basis  $GT_{q,\lambda}^n$ 
for $V_{q,\lambda}(G_q(\C^n))$ analogously.
We have the $q$-analogue of Pieri's rule:
$$
V_{q,\lambda}(G_q(\C^n)) = \bigoplus_{\lambda'} V_{q,\lambda'}(G_q(\C^{n-1}))
$$
where the decomposition is orthogonal and multiplicity-free, 
and the sum ranges over the same $\lambda'$ as above.  So we can define $GT_{q,\lambda}^n$ to be the unique unitary  basis  of $V_{q,\lambda}$ 
such that
$$
GT_{q,\lambda}^n = \bigcup_{\lambda'} GT_{q,\lambda'}^{n-1}.
$$
With  any semistandard tableau $T$, one can also explicitly 
associate  a basis element $GT_q(T) \in GT_{q,\lambda'}^n$; details omitted.

\chapter{Towards positivity hypotheses 
 via  quantum groups}  \label{ctowards}
\begin{center} {\Large  Scribe: Joshua A. Grochow} \end{center} 

\noindent {\bf Goal:} 
In this final brisk lecture, we indicate  the role of quantum groups in
the context of the positivity hypothesis PH1. Specifically, we sketch  how
the Littlewood-Richardson rule -- the gist of PH1 in the Littlewood-Richardson
problem --  follows from the theory of standard quantum groups. 
We  then  briefly mention analogous (nonstandard) quantum groups
for the Kronecker and plethysm problems defined in \cite{GCT4,GCT7},
and  the  theorems and conjectures for them
that would imply PH1 for these problems.

\noindent{\em References:} \cite{KS,Kas,lusztigbook,GCT4,GCT6,GCT7,GCT8}

Let $V=\C^n$, $G=GL_n(\C)=GL(V)$, $V_\lambda=V_\lambda(G)$
a Weyl module of $G$, 
$G_q=GL_q(V)$ the standard quantum group, 
$V_q$ the $q$-deformation of $V$ on which $GL_q(V)$ acts, 
$V_{q,\lambda}=V_{q,\lambda}(G_q)$ the $q$-deformation of $V_\lambda(G)$, 
and  $GT_{q,\lambda}=GT_{q,\lambda}^n$
the Gel'fand-Tsetlin basis for $V_{q,\lambda}$.

\section{Littlewood-Richardson rule via standard quantum groups}
We now sketch how the Littlewood-Richardson rule falls out of the
standard quantum group machinery, specifically the properties of the
Gelfand-Tsetlin basis.

\subsection{An embedding of the Weyl module} 
For this, we have to embed the $q$-Weyl module $V_{q,\lambda}$ in
$V_q^{\otimes d}$, where 
$d=|\lambda|=\sum \lambda_i$ is the size of $\lambda$.
We  first describe how to embed the Weyl module 
$V_\lambda$ of $G$ in $V^{\otimes d}$ in a standard 
way that can be quantized.

If $d=1$, then $V_\lambda(G)=V=V^{\otimes 1}$.
Otherwise, obtain a Young diagram $\mu$ from $\lambda$ by
removing its  top-rightmost box that can be removed 
to get   a valid Young diagram, e.g.:
\newcommand{\xx}{\mbox{x}}
$$
\begin{array}{ccc}
\young(\hfil \hfil \hfil \xx,\hfil \hfil \hfil,\hfil \hfil,\hfil \hfil,\hfil) & \leadsto & \young(\hfil \hfil \hfil,\hfil \hfil \hfil,\hfil \hfil,\hfil \hfil,\hfil) \\
\lambda & & \mu
\end{array}
$$

In the following, the box must be removed from the second row, since removing from the first row would result in an illegal Young diagram:
$$
\begin{array}{ccc}
\young(\hfil \hfil \hfil,\hfil \hfil \xx,\hfil) & \leadsto & \young(\hfil \hfil \hfil,\hfil \hfil,\hfil) \\
\lambda & & \mu
\end{array}
$$
By induction on $d$, we have a standard embedding $V_\mu(G) \hookrightarrow V^{\otimes d-1}$.  This gives us an embedding $V_\mu(G) \otimes V \hookrightarrow V^{\otimes d}$.  By Pieri's rule \cite{FulH} 
$$
V_\mu(G) \otimes V = \bigoplus_\beta V_\beta(G),
$$
where the sum is over all $\beta$ obtained from $\mu$ by adding one box in a legal way.  In particular,  $V_\lambda(G) \subset V_\mu(G) \otimes V$.
By restricting the above embedding,  we get a standard embedding $V_\lambda(G) \hookrightarrow V^{\otimes d}$.  

Now Pieri's rule also holds in a quantized setting:
$$
V_{q,\mu} \otimes V_q = \bigoplus_\beta V_{q,\beta}(G),
$$
where $\beta$ is as above. Hence, the standard embedding  $V_\lambda
\hookrightarrow V^{\otimes d}$
above can be quantized in a straightforward fashion to get a
standard embedding  $V_{q,\lambda} \hookrightarrow V_q^{\otimes d}$.
We shall denote it by $\rho$.
Here the  tensor product  is meant to be over  $\Q(q)$.
Actually, $\Q(q)$ doesn't quite work. 
We have to allow square roots of elements of $\Q(q)$,
but we won't worry about this.
For a semistandard tableau $b$ of shape $\lambda$, we denote the image of
a Gelfand-Tsetlin basis element $GT_{q,\lambda}(b) 
\in GT_{q,\lambda}$ under
 $\rho$ by $GT_{q,\lambda}^\rho(b) = \rho(GT_{q,\lambda}(b))
\in V_q^{\otimes d}$.  

\subsection{Crystal operators and crystal bases}

\begin{thm}[Crystallization]\cite{date} \label{tdate}
The Gelfand-Tsetlin basis elements crystallize at $q=0$. This means:
\begin{equation}
\lim_{q \to 0} GT_{q,\lambda}^\rho(b)  =  v_{i_1(b)} \otimes \cdots \otimes v_{i_d(b)},
\end{equation} 
 for some integer functions  $i_1(b),\dots,i_d(b)$, and 
\begin{equation}
\lim_{q \to \infty} GT_{q,\lambda}^\rho(b)  =  v_{j_1(b)} \otimes \cdots \otimes v_{j_d(b)},
\end{equation}  for some integer functions  $j_1(b),\dots,j_d(b)$. 
\end{thm}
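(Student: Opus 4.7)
The plan is to invoke Kashiwara's theory of crystal bases for $U_q(\mathfrak{gl}_n)$-modules and show that, under the standard embedding $\rho$ of $V_{q,\lambda}$ into $V_q^{\otimes d}$ built by iterating Pieri's rule, the Gelfand--Tsetlin basis degenerates at $q=0$ onto Kashiwara's crystal basis of $V_q^{\otimes d}$, whose elements are by construction pure tensors of the standard basis vectors. For the vector representation $V_q = \bigoplus_i \mathbb{Q}(q)\,v_i$, the Kashiwara operators $\tilde{e}_i,\tilde{f}_i$ act transparently and the crystal basis is just $\{v_1,\dots,v_n\}$. Kashiwara's tensor product theorem then gives the crystal basis of $V_q^{\otimes d}$ explicitly as the collection of pure tensors $v_{i_1}\otimes\cdots\otimes v_{i_d}$, equipped with an explicit combinatorial crystal graph.

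First I would verify that the embedding $\rho$ respects the crystal lattice $L \subset V_{q,\lambda}$ generated over the ring $A = \{f\in\mathbb{Q}(q):f\text{ regular at }q=0\}$ by applying lowering operators $\tilde{f}_{i_k}\cdots\tilde{f}_{i_1}$ to the highest weight vector. Since $\rho$ is constructed recursively via Pieri's rule, which at the quantum level corresponds to the branching $U_q(\mathfrak{gl}_n) \downarrow U_q(\mathfrak{gl}_{n-1})$, and since the Gelfand--Tsetlin basis is itself built from iterating exactly the same branching, the images $GT_{q,\lambda}^{\rho}(b)$ lie in the tensor power crystal lattice. Next I would invoke the essential uniqueness of crystal bases: a crystal lattice together with a basis of $L/qL$ satisfying Kashiwara's axioms is determined up to the action of the crystal graph automorphism group. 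Combined with the tensor product rule, this forces $GT_{q,\lambda}^{\rho}(b)\bmod qL$ to be a single pure tensor $v_{i_1(b)}\otimes\cdots\otimes v_{i_d(b)}$, yielding the first equation. Tracing through the recursion, the indices $i_k(b)$ are read off from the GT pattern associated to $b$, essentially producing the reverse reading word of the semistandard tableau $b$.

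For the $q\to\infty$ limit, I would apply the Cartan/bar involution $q\leftrightarrow q^{-1}$, which is well known to interchange Kashiwara's two natural crystal lattices $L$ and $\bar{L}$ on a module $V_{q,\lambda}$; transporting the argument above through this involution yields the second family of integer functions $j_1(b),\dots,j_d(b)$, corresponding to a dual reading of $b$. The main obstacle lies in the compatibility step of the second paragraph: the Gelfand--Tsetlin basis is defined via iterated orthogonal Pieri decomposition under the invariant Hermitian form, whereas Kashiwara's crystal basis is characterized through modified lowering operators at $q=0$. Matching the two canonical constructions requires the explicit $q$-analogues of the Gelfand--Tsetlin matrix coefficients for the Chevalley generators, together with the verification that the leading ($q=0$) asymptotics of these coefficients reduce the GT branching to Kashiwara's branching in the crystal limit. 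This is precisely the content of the Date--Jimbo--Miwa analysis cited as \cite{date}, and the verification, while technical, is local: it reduces to the rank one case $U_q(\mathfrak{sl}_2)$, where everything is explicit.
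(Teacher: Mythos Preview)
The paper does not give its own proof of this theorem: it is stated with the citation \cite{date} (Date--Jimbo--Miwa) and then used as a black box to motivate Kashiwara's crystal operators and crystal bases in the subsequent discussion. So there is no in-paper argument to compare your proposal against.

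That said, your outline is a reasonable conceptual organization of the result, but it is somewhat anachronistic and circular relative to the original source. The Date--Jimbo--Miwa paper \cite{date} predates and in fact motivates Kashiwara's general crystal-basis formalism; their argument is a direct computation with the explicit $q$-Gelfand--Tsetlin matrix coefficients for the Chevalley generators, analyzing the leading asymptotics as $q\to 0$. You invoke Kashiwara's uniqueness theorem and tensor-product rule as the engine and then, at the very end, defer the one genuinely nontrivial compatibility check (that the GT basis actually is a crystal basis in Kashiwara's sense) back to \cite{date}. In other words, the step you flag as ``the main obstacle'' \emph{is} the whole theorem, and the surrounding Kashiwara machinery you set up does not reduce its difficulty. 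If you want a self-contained argument, the honest route is the direct one you sketch in your final sentence: work out the $U_q(\mathfrak{sl}_2)$ action on GT vectors explicitly and track the $q\to 0$ limit. The $q\to\infty$ statement then does follow, as you say, from the bar involution $q\leftrightarrow q^{-1}$.
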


The phenomenon 
that these limits consists of monomials, i.e., simple tensors is
known as {\em crystallization}.
It is  related to the physical phenomenon of crystallization, hence the name.  
The maps $b \mapsto \overline{i}(b)=(i_1(b),\dots,i_d(b))$ and $b \mapsto \overline{j}(b)=(j_1(b),\dots,j_d(b))$ are computable in $poly(\bitlength{b})$ time (where $\bitlength{b}$ is the bit-length of $b$).  

Now we want to define a special {\em crystal basis} of $V_{q,\lambda}$ 
based on this phenomenon of crystallization.  Towards that end,
consider the following family of $n\times n$ matrices:
$$
E_i = \left(\begin{array}{cccccc}
0 & 0      &        &        & \cdots & 0 \\
  & \ddots & \ddots &        &        & \vdots \\
  &        & 0      & 1      & \cdots & 0 \\
  &        &        & 0      & \cdots & 0 \\
  &        &        &        & \ddots & \vdots \\
  &        &        &        &        & 0 
\end{array}\right), 
$$
where the only nonzero entry is a 1 in the $i$-th row and $(i+1)$-st column. 
Let $F_i=E_i^T$.  
Corresponding to $E_i$ and $F_i$, Kashiwara associates certain
operators $\hat E_i$ and $\hat F_i$ on 
$V_{q,\lambda}(G_q)$. We shall not worry about their actual construction
here (for the readers familiar with Lie algebras: these are closely related 
to the usual operators  in the Lie algebra of $G$
associated with  $E_i$ and $F_i$).

If we let   $\hat E_i$ act on $GT_{q,\lambda}^\rho(b)$,
we get some linear combination 
$$
\hat E_i(GT_{q,\lambda}^\rho(b)) = \sum_{b'} a_{b'}^b(q) GT_{q,\lambda}^\rho(b'),
$$
where $a_{b'}^b(q) \in \Q(q)$ (actually an algebraic extension of $\Q(q)$ as 
mentioned above). Essentially because of crystallization (Theorem~\ref{tdate}),
it turns out that $\lim_{q \to 0}a_{b'}^b(q)$ is always either 0 or 1, and for a given $b$, this limit is 1 for at most {one} $b'$, if any. 
A similar result holds for $\hat F_i(GT_{q,\lambda}^\rho(b))$.  This allows us to
 define the \emph{crystal operators} (due to Kashiwara):
$$
\widetilde{e_i} \cdot b = \left\{\begin{array}{ll}
b' & \mbox{ if } \lim_{q \to 0}a_{b'}^b(q) = 1, \\
0  & \mbox{ if no such $b'$ exists, }
\end{array}\right.
$$
and similarly for $\widetilde{f_i}$.  Although these operators are defined according to a particular embedding $V_{q,\lambda} \hookrightarrow V_q^{\otimes d}$
and a  basis, they can be defined intrinsically, i.e.,
without reference to the embedding or the Gel'fand-Tsetlin basis.

Now, let $W$ be a finite-dimensional representation of $G_q$, and  $R$
 the subring of functions in $\Q(q)$ regular at $q=0$ (i.e. without a
pole at $q=0$). 
 A \emph{lattice} within $W$ is  an $R$-submodule of $W$ such that $\Q(q) \otimes_R L = W$.  
(Intuition behind this definition: $R \subset \Q(q)$ is analogous to $\Z \subset \Q$.  A lattice in $\R^n$ is a $\Z$-submodule $L$ of $\R^n$ 
such that $\R \otimes_{\Z} L = \R^n$.)

\begin{defi} An (upper)  \emph{crystal basis} of a representation $W$ of $G_q$ is a pair $(L,B)$ such that
\begin{itemize}
\item $L$ is a lattice in $W$ preserved by the Kashiwara operators $\hat E_i$ and
$\hat F_i$, i.e. $\hat E_i(L) \subseteq L$ and $\hat F_i(L) \subseteq L$.
\item $B$ is a basis of $L/qL$ preserved by the crystal operators
$\widetilde{e_i}$ and $\widetilde{f_i}$, i.e.,
 $\widetilde{e_i}(B) \subseteq B \cup \{0\}$ and $\widetilde{f_i}(B) \subseteq B \cup \{0\}$.
\item The crystal operators $\widetilde{e_i}$ and $\widetilde{f_i}$ are inverse to each other wherever possible, i.e.,
 for all $b,b' \in B$, if $\widetilde{e_i}(b)=b' \neq 0$ then $\widetilde{f_i}(b')=b$,
and similarly,
 if $\widetilde{f_i}(b)=b' \not = 0$ then $\widetilde{e_i}(b')=b$.
\end{itemize}
\end{defi}

It can be shown that if $W=V_{q,\lambda}(G_q)$, then there exists a unique
 $b \in B$ such that $\widetilde{e_i}(b)=0$ for all $i$;
this corresponds to the highest weight vector of $V_{q,\lambda}$ 
(the weight vectors in $V_{q,\lambda}$ are analogous to the weight vectors
in $V_\lambda$; we do not give their exact definition here).
By the work of Kashiwara and Date et al \cite{Kas,date} above,
the Gel'fand-Tsetlin basis (after appropriate rescaling)
is in fact a crystal basis: just let
\begin{eqnarray*}
L=L_{GT} & = & \mbox{ the $R$-module generated by } GT_{q,\lambda}, and  \\
B_{GT} & = & \overline{GT_{q,\lambda}(b)},
\end{eqnarray*}
where $\overline{GT_{q,\lambda}(b)}$ is the image under
the projection $L \mapsto L/qL$ of the set of 
basis vectors in $GT_{q,\lambda}(b)$.  

\begin{thm}[Kashiwara] \begin{enumerate}
\item Every finite-dimensional $G_q$-module has a unique crystal basis (up to isomorphism).
\item Let $(L_\lambda,B_\lambda)$ be the unique crystal basis corresponding to $V_{q,\lambda}$.  Then $(L_\alpha,B_\alpha) \otimes (L_\beta,B_\beta) = (L_\alpha \otimes L_\beta, B_\alpha \otimes B_\beta)$ is
 the unique crystal basis of $V_{q,\alpha} \otimes V_{q,\beta}$,
where $B_\alpha \otimes B_\beta$ denotes $\{ b_a \otimes b_b | b_a \in B_\alpha, b_b \in B_\beta\}$.
\end{enumerate}
\end{thm}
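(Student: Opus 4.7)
The plan is to establish both parts of Kashiwara's theorem simultaneously via a ``grand loop'' induction, following Kashiwara's original strategy. I would first construct a crystal basis on the defining representation $V_q$ by taking $L_0$ to be the $R$-span of the standard basis $v_1,\ldots,v_n$ and $B_0 = \{\bar v_1,\ldots,\bar v_n\}$; this case is an immediate verification. The goal is then to bootstrap from $V_q$ to all finite-dimensional $G_q$-modules. The central observation is that every irreducible $V_{q,\lambda}$ sits inside $V_q^{\otimes d}$ via the standard embedding $\rho$, and by Theorem~\ref{tdate} together with the work of Date et al., the pair $L_\lambda := R\cdot GT_{q,\lambda}^\rho$ and $B_\lambda := \overline{GT_{q,\lambda}^\rho}$ is a crystal basis on $V_{q,\lambda}$ inherited from $V_q^{\otimes d}$. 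Existence in part~(1) then reduces to handling tensor products and direct sums.

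First I would prove the tensor product rule for Kashiwara operators: if $(L_1,B_1)$ is a crystal basis of $M_1$ and $(L_2,B_2)$ of $M_2$, then on a simple tensor $b_1\otimes b_2 \in B_1\otimes B_2$ the operators $\widetilde{e_i},\widetilde{f_i}$ act by an explicit \emph{signature} rule determined by the lengths of the $i$-strings through $b_1$ and $b_2$; namely $\widetilde{f_i}(b_1\otimes b_2)$ equals either $\widetilde{f_i}(b_1)\otimes b_2$ or $b_1\otimes \widetilde{f_i}(b_2)$ according to which $i$-string is taller. Deriving this rule amounts to expanding $\hat E_i,\hat F_i$ on the tensor product via the comultiplication $\Delta$ of $U_q$ and then invoking crystallization to see that only one term survives as $q\to 0$. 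Once the rule is in place, stability of $L_1\otimes L_2$ under $\hat E_i,\hat F_i$, stability of $B_1\otimes B_2$ modulo $q(L_1\otimes L_2)$ under $\widetilde{e_i},\widetilde{f_i}$, and the pairing property $\widetilde{f_i}(b)=b' \Leftrightarrow \widetilde{e_i}(b')=b$ all become formal. By induction on $d$ this produces a crystal basis on $V_q^{\otimes d}$, which by Theorem~\ref{tdate} restricts to the Gel'fand-Tsetlin crystal basis on each irreducible summand $V_{q,\lambda}$; this simultaneously proves part~(2) for irreducibles.

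For uniqueness in part~(1), I would exploit the fact that within any crystal basis $(L,B)$ of $V_{q,\lambda}$ there is a unique highest weight element $b_\lambda \in B$ annihilated by every $\widetilde{e_i}$, and all of $B$ is obtained from $b_\lambda$ by successive applications of the $\widetilde{f_i}$'s; hence the underlying crystal graph is intrinsic to the module and not to the choice of $(L,B)$. An isomorphism with $(L_\lambda,B_\lambda)$ is then built by sending $b_\lambda$ to the Gel'fand-Tsetlin highest weight vector and extending equivariantly under the crystal operators. For a reducible finite-dimensional $W$, compactness of $\C_q[G]$ from the previous lecture gives $W \cong \bigoplus V_{q,\lambda_i}$, and a crystal basis on $W$ decomposes as a direct sum of crystal bases on the summands; uniqueness then reduces to the irreducible case plus a short argument, using the weight decomposition of $L/qL$, that recognizes each irreducible summand inside any crystal basis of $W$. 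Part~(2) in full generality then follows from the irreducible case by distributivity of $\otimes$ over $\oplus$.

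The main obstacle is the grand loop itself. The existence of the Kashiwara operators $\hat E_i,\hat F_i$ with the crystallization property (coefficients $a_{b'}^b(q)$ having limits in $\{0,1\}$ with at most one nonzero $b'$), the tensor product rule, and the existence of \emph{some} crystal basis on each $V_{q,\lambda}$ are not independent statements: each feeds into the proof of the others. In Kashiwara's original treatment these are interlocked and proved simultaneously by induction on the total weight, using the triangular decomposition $U_q = U_q^-\cdot U_q^0\cdot U_q^+$ and a careful analysis of the Drinfeld bilinear pairing between $U_q^-$ and $U_q^+$ to control the $q\to 0$ asymptotics of matrix coefficients. Citing Theorem~\ref{tdate} as a black box, as I have done above, conceals this technical core; a self-contained treatment would have to open it up and verify that the induction closes at each stage, which is where the real work lies.
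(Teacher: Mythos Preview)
The paper does not actually prove this theorem: it is stated as a result of Kashiwara with a citation to \cite{Kas}, and the surrounding lecture only motivates the definitions and then moves on to applications (the Littlewood--Richardson rule). So there is no ``paper's own proof'' to compare against.

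Your proposal is a reasonable high-level outline of Kashiwara's original argument, and you are right that the heart of the matter is the grand loop induction, which you correctly identify as the main obstacle rather than something you have actually carried out. One small correction: your description of the tensor product rule is slightly garbled --- the decision of whether $\widetilde{f_i}$ acts on the left or right factor is governed by comparing $\varphi_i(b_1)$ with $\varepsilon_i(b_2)$ (the distances to the ends of the respective $i$-strings), not by ``which $i$-string is taller.'' Also, uniqueness for reducible modules is only up to isomorphism of crystal bases, and the argument that a crystal basis of a direct sum splits as a direct sum of crystal bases on the summands requires more than just the weight decomposition of $L/qL$; one needs that the connected components of the crystal graph correspond to the irreducible summands, which again feeds back into the grand loop. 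But as a roadmap your sketch is sound, and honestly flags where the real content lies.
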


It can be shown  that every $b \in B_\lambda$ has a weight; i.e., 
it is the image of a weight vector in $L_\lambda$ under the projection
$L_\lambda \rightarrow L_\lambda / q L_\lambda$.

Now let us see how the Littlewood-Richardson rule falls out of the properties
of the crystal bases.
Recall that   the specialization of $V_{q,\alpha}$ at $q=1$ is the
Weyl module $V_\alpha$ of  $G=GL_n(\C)$, and 
\begin{equation} \label{eqlittle} 
V_\alpha \otimes V_\beta = \bigoplus_\gamma c_{\alpha,\beta}^\gamma V_\gamma
\end{equation}
where  $c_{\alpha,\beta}^\gamma$ are the Littlewood-Richardson coefficients.
The Littlewood-Richardson rule now  follows  from the following fact:

$$
c_{\alpha,\beta}^\gamma = \#\{b \otimes b' \in B_\alpha \otimes B_\beta | 
\forall i, \ \widetilde{e_i}(b \otimes b') = 0  \mbox{ and } b \otimes b' \mbox{ has weight } \gamma\}.
$$
Intuitively,  $b \otimes b'$ here   correspond to the
highest weight vectors of the  $G$-submodules
of $V_\alpha \otimes V_\beta$ isomorphic to $V_\gamma$.  

\section{Explicit decomposition of the tensor product}  
The decomposition (\ref{eqlittle}) is only an abstract decomposition
of $V_\alpha \otimes V_\beta$ as a $G$-module.
Next we consider the explicit decomposition problem.
The goal is to find an explicit basis
${\cal B}={\cal B}_{\alpha\otimes \beta}$ 
of $V_\alpha \otimes V_\beta$ 
that is compatible with this abstract decomposition. Specifically, 
we want to construct  an explicit basis ${\cal B}$ of
 $V_\alpha \otimes V_\beta$ 
in terms of suitable explicit bases of $V_\alpha$ and $V_\beta$ 
such that ${\cal B}$ has a filtration
$$
{\cal B} = {\cal B}_0 \supseteq {\cal B}_1 \supseteq \cdots \supseteq \emptyset
$$
where each $\langle {\cal B}_i \rangle / \langle {\cal B}_{i+1} \rangle$ is an irreducible representation of $G$  and
$\langle {\cal B}_i \rangle$ denotes the linear span of ${\cal B}_i$.
Furthermore, each element 
$b\in {\cal B}$ should have a sufficiently explicit 
representation  in terms of the  basis 
${\cal B}_\alpha \otimes {\cal B}_\beta$ of $V_\alpha \otimes V_\beta$.
The explicit decomposition problem 
for the  $q$-analogue $V_{q,\alpha} \otimes V_{q,\beta}$ is similar.

For example, we have already constructed  explicit  Gelfand-Tsetlin bases of 
Weyl modules.
But it is  not known how to construct an explicit  basis ${\cal B}$ with
filtration as above
in terms of the Gelfand-Tsetlin bases of $V_\alpha$ and $V_\beta$ (except
when  the Young diagram of either $\alpha$ or $\beta$ is a single row). 

Kashiwara and Lusztig \cite{Kas,lusztigbook} construct certain
{\em canonical bases}  ${\cal B}_{q,\alpha}$ and ${\cal B}_{q,\beta}$ 
of $V_{q,\alpha}$ and $V_{q,\beta}$, and Lusztig furthermore constructs 
a {\em canonical basis} 
${\cal B}_q={\cal B}_{q,\alpha \otimes \beta}$ of 
$V_{q,\alpha} \otimes V_{q,\beta}$ such that:
\begin{enumerate}
\item ${\cal B}_q$  has a filtration as above,
\item 
Each $b \in {\cal B}_q$ has an expansion of the form
$$
b = \sum_{b_\alpha \in {\cal B}_{q,\alpha}, b_\beta \in {\cal B}_{q,\beta}}
a_{b}^{b_\alpha,b_\beta} b_\alpha \otimes b_\beta,
$$
where each $a_{b}^{b_\alpha,b_\beta}$ is a polynomial in $q$ and $q^{-1}$
with nonnegative integral coefficients,
\item {\em Crystallization}: 
For each $b$, as $q \to 0$, 
exactly one coefficient $a_{b}^{b_\alpha,b_\beta} \to 1$, 
and the remaining all vanish.
\end{enumerate} 
The proof of nonnegativity of the coefficients 
of $a_{b}^{b_\alpha,b_\beta}$ is based
on  the Riemann hypothesis (theorem) over finite fields \cite{weil2},
and  explicit formulae for these coefficients are known
in terms of perverse sheaves \cite{beilinson} 
(which are certain types of algebro-geometric objects).

This then provides a satisfactory solution to the explicit decomposition 
problem, which is far harder and deeper than the abstract decomposition 
provided by the Littlewood-Richardson rule.
By specializing at $q=1$, we also get a solution to the explicit
decomposition problem for $V_\alpha \otimes V_\beta$. This (i.e. via
quantum groups) is the only known solution to the explicit 
decomposition problem even at $q=1$. This may give some idea of the
power of the quantum group machinery.

\section{Towards nonstandard quantum groups
 for the Kronecker and plethysm problems}
Now the goal is to construct  quantum groups which can be used to
derive PH1 and explicit decomposition 
for the Kronecker and plethysm problems just as the
standard quantum group can be used for the same in
 the Littlewood-Richardson problem.

In the Kronecker problem, we let $H=GL(\C^n)$ and $G=GL(\C^n \otimes \C^n)$.
The  Kronecker coefficient
$\kappa_{\alpha,\beta}^\gamma$ is the multiplicity of 
 $V_\alpha(H) \otimes V_\beta(H)$ in $V_\gamma(G)$:
$$
V_\gamma(G) = \bigoplus_{\alpha,\beta} \kappa_{\alpha,\beta}^\gamma V_\alpha(H) \otimes V_\beta(H).
$$
The goal  is to get a positive  \cc{\# P}-formula for
$\kappa_{\alpha,\beta}^\gamma$; this is the gist of PH1 for the
Kronecker problem.

In the plethysm problem, 
we let $H=GL(\C^n)$ and $G=GL(V_\mu(H))$.  The
 plethysm constant
 $a_{\lambda,\mu}^\pi$ is the multiplicity of
$V_\pi(H)$ in $V_\lambda(G)$:
$$
V_\lambda(G) = \bigoplus_\pi a_{\lambda,\mu}^\pi V_\pi(H).
$$
Again, the goal is to get a positive 
\cc{\# P}-formula for the plethysm constant;
this is the gist of PH1 for the plethysm problem.

To apply the quantum group approach, we need a $q$-analogue of the
embedding $H \hookrightarrow G$.  Unfortunately, there is no such $q$-analogue
in the theory of standard quantum groups.
Because there is no nontrivial quantum group homomorphism from
 the standard quantum group $H_q=GL_q(\C^n)$ and to the standard 
quantum group  $G_q$.

\begin{thm}

\noindent (1) \cite{GCT4}:  Let $H$ and $G$ be as in the Kronecker problem.  Then there exists a quantum group $\hat{G}_q$ such that the homomorphism 
$H\rightarrow G$ can be quantized in the form
$H_q \hookrightarrow \hat{G}_q$.
  Furthermore, $\hat{G}_q$ has a unitary quantum subgroup $\hat{U}_q$ which corresponds to the maximal unitary subgroup $U \subseteq G$,
and a $q$-analogue of the Peter-Weyl theorem holds for $\hat G_q$.
The latter implies that every finite dimensional representation of $\hat G_q$ 
is completely decomposible into irreducibles.

\noindent (2) \cite{GCT7} There is an analogous (possibly singular)
quantum group $\hat{G}_q$ when $H$ and $G$ are as in the plethysm problem.
This also holds for general connected reductive (classical) $H$.
\end{thm}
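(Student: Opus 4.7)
My plan is to mimic the construction of the standard quantum group $GL_q(V)$ from the preceding lectures, but replace the standard quantum vector space $V_q=\C_q^n$ by an appropriately chosen quantum space $\hat V_q$ on which $H_q$ carries a natural coaction. Recall that $\mathcal{O}(GL_q(V))$ was defined by writing down a generic matrix $U$ and imposing exactly those relations on the entries $u^i_j$ required for the left and right coactions on $V_q$ to preserve its quadratic commutation relations $x_i x_j = q x_j x_i$. Applying the same recipe to a different $\hat V_q$, chosen so that the diagonal $H_q$-coaction on $\hat V_q$ preserves the commutation structure, will yield $\hat G_q$ together with the embedding $H_q \hookrightarrow \hat G_q$.

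For part (1), I would take $\hat V_q := V_q \otimes V_q$, with coordinates $z_{ij}=x_i\otimes x_j$ subject to the quadratic relations inherited from the two tensor factors. Define $\hat G_q$ as the virtual object whose coordinate ring is generated by the entries of a generic $n^2\times n^2$ matrix $\hat U$, modulo the relations required for its left and right coactions on $\hat V_q$ to preserve these inherited quadratic relations. Give $\mathcal{O}(\hat G_q)$ the standard matrix-bialgebra structure $\Delta(\hat u^i_j)=\sum_k \hat u^i_k\otimes \hat u^k_j$, $\epsilon(\hat u^i_j)=\delta_{ij}$, with antipode constructed from a suitable quantum cofactor formula. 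The embedding $H_q\hookrightarrow \hat G_q$ corresponds dually to the homomorphism sending $\hat U$ to the Kronecker product $U\otimes U$ of the generic matrix of $H_q$ with itself, which manifestly preserves $\Delta$, $\epsilon$, and the defining quadratic relations since the diagonal coaction of $H_q$ does. The Hopf $*$-structure is inherited by declaring $(\hat u^i_j)^* := S(\hat u^j_i)$, and $\hat U_q$ is the resulting virtual compact real form.

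For part (2), the natural candidate is $\hat V_q := V_{q,\mu}(H_q)$, embedded in $V_q^{\otimes d}$ ($d=|\mu|$) via the standard embedding from the last lecture. The inherited commutation relations on $\hat V_q$ are the restrictions to the image of $V_{q,\mu}$ of the quadratic relations on $V_q^{\otimes d}$; these are far more intricate than in the Kronecker case and can become degenerate, which accounts for the parenthetical "possibly singular" in the statement. The coordinate ring of $\hat G_q$ is then constructed on a generic $\dim V_{q,\mu}\times \dim V_{q,\mu}$ matrix by the same recipe, and the embedding $H_q\hookrightarrow \hat G_q$ is dual to the coaction of $H_q$ on $V_{q,\mu}$. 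For general connected reductive $H$, the same recipe applies once one fixes the relevant $q$-analogue of the appropriate irreducible $H$-representation.

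The Hopf-algebra axioms are essentially formal once the defining relations are in hand. The main obstacle — and the genuine content of the theorem — is compactness in the sense of Woronowicz: every finite-dimensional corepresentation of $\mathcal{O}(\hat G_q)$ must be unitarizable, for only then do Haar integration, orthogonality of matrix coefficients, and the $q$-analogue of Peter–Weyl follow from Woronowicz's general machinery. Unlike the standard case, compactness is not automatic, and I expect it to require a careful analysis of the $R$-matrix controlling the quadratic relations of $\hat V_q$: one should extract from it a positive quadratic form at $q=1$, use rigidity of unitarizability under perturbation in $q$, and invoke complete reducibility of $H$-representations for the base step. In the plethysm case, the additional difficulty is the singular locus, which I would handle by passing to a normalization or by mildly enlarging the ideal of relations in a way that preserves the $H_q$-embedding but smooths out the singularity.
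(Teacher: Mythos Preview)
The paper does not prove this theorem; it is stated with citations to \cite{GCT4} and \cite{GCT7}, followed only by a few sentences of commentary explaining that ``singular'' means the quantum determinant may vanish (so the antipode need not exist) and that $\hat G_q$ is called ``nonstandard'' because, while qualitatively similar to the Drinfeld--Jimbo quantum group, it is ``fundamentally different.'' There is therefore no proof in the paper against which to compare your attempt.

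Your FRT-style recipe --- choose an $H_q$-comodule $\hat V_q$, read off its quadratic relations, and define $\mathcal{O}(\hat G_q)$ as the matrix bialgebra preserving them under left and right coaction --- is in the right spirit and is indeed the starting point of the constructions in \cite{GCT4,GCT7}. But two of your steps are not proofs, and they are precisely where the content of the theorem lies. First, you write that ``the Hopf-algebra axioms are essentially formal once the defining relations are in hand''; they are not. The paper's own commentary makes this explicit: in the plethysm case the quantum determinant can vanish identically, so no antipode exists and $\hat G_q$ is only a bialgebra --- this is exactly what ``possibly singular'' means in the statement. Your proposed remedy (normalization, or ``mildly enlarging the ideal'') is a hope, not a method, and in general such enlargements destroy the $H_q$-embedding you are trying to preserve. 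Second, the $q$-Peter--Weyl theorem and the existence of the compact real form $\hat U_q$ are the entire substance of the theorem, and your sketch (``rigidity of unitarizability under perturbation in $q$'') is not an argument: Woronowicz's machinery requires a genuine Hopf $*$-algebra with a positive Haar state, and neither of these follows from an $R$-matrix that happens to be positive at $q=1$. The work in \cite{GCT4,GCT7} consists precisely in establishing these facts for the specific nonstandard $\hat G_q$, and that analysis is not recoverable from the general recipe you have outlined.
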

Since the Kronecker problem is a special case of the (generalized)
plethysm problem,
the quantum group in GCT 4 is a special case of the quantum group in GCT 7.
The quantum group in the plethysm problem can be singular, i.e., its
determinant can vanish and hence the antipode need not exist. We still
call it a quantum group because its properties are very similar
to those of the standard quantum group; e.g. $q$-analogue of the
Peter-Weyl theorem, which allows $q$-harmonic analysis on these groups.
We call the quantum group $\hat G_q$  {\em nonstandard}, because 
though it is qualitatively similar to the standard (Drinfeld-Jimbo) 
quantum group $G_q$, it is also, as expected, fundamentally different.

The article \cite{GCT8} gives a conjecturally correct  algorithm to construct 
a canonical basis of an irreducible polynomial representation of $\hat G_q$ 
which generalizes the canonical basis for a polynomial
representation of the standard quantum group as per
 Kashiwara and Lusztig. It also gives a 
conjecturally correct algorithm to construct a canonical basis of
a certain $q$-deformation of the symmetric group algebra  $\C[S_r]$
which generalizes the Kazhdan-Lusztig basis \cite{kazhdan}
of the Hecke algebra (a standard
$q$-deformation of $\C[S_r]$). 
It  is shown in \cite{GCT7,GCT8} that PH1 for the Kronecker and
plethysm problems follows assuming that these canonical bases in the
nonstandard setting have properties akin to the ones in the standard 
setting. For a discussion on SH, see \cite{GCT6}.

\part{Invariant theory with a view towards GCT \\ 
 {\normalsize By Milind Sohoni}}

\newtheorem{theorem}{Theorem}

\newcommand{\ca}[1]{{\cal #1}}
\newcommand{\f}[2]{{\frac {#1} {#2}}}
\newcommand{\embed}[1]{{#1}^\phi}
\newcommand{\stab}{{\mbox {stab}}}
\newcommand{\codim}{{\mbox {codim}}}
\newcommand{\modulo}{{\mbox {mod}\ }}
\newcommand{\Y}{\mathbb{Y}}
\newcommand{\rel}{ \backslash}
\newcommand{\sym}{{\mbox {Sym}}}
\newcommand{\idealof}{{\mbox {ideal}}}
\newcommand{\trace}{{\mbox {trace}}}
\newcommand{\polylog}{{\mbox {polylog}}}
\newcommand{\sign}{{\mbox {sign}}}
\newcommand{\rank}{{\mbox {rank}}}
\newcommand{\formula}{{\mbox {Formula}}}
\newcommand{\circuit}{{\mbox {Circuit}}}
\newcommand{\core}{{\mbox {core}}}
\newcommand{\orbit}{{\mbox {orbit}}}
\newcommand{\cycle}{{\mbox {cycle}}}
\newcommand{\ideal}{{\mbox {ideal}}}
\newcommand{\wt}{{\mbox {wt}}}
\newcommand{\level}{{\mbox {level}}}
\newcommand{\vol}{{\mbox {vol}}}
\newcommand{\vect}{{\mbox {Vect}}}
\newcommand{\val}{{\mbox {wt}}}
\newcommand{\adm}{{\mbox {Adm}}}
\newcommand{\eval}{{\mbox {eval}}}
\newcommand{\invlim}{{\mbox {lim}_\leftarrow}}
\newcommand{\directlim}{{\mbox {lim}_\rightarrow}}
\newcommand{\sformal}{{\cal S}^{\mbox f}}
\newcommand{\san}{{\cal S}^{\mbox an}}
\newcommand{\Oan}{{\cal O}^{\mbox an}}
\newcommand{\vformal}{{\cal V}^{\mbox f}}
\newcommand{\diffl}{\bar {\cal D}}
\newcommand{\diff}{{\cal D}}
\newcommand{\invf}{{\cal F}^{-1}}
\newcommand{\invox}{{\cal O}_X^{-1}}
\newcommand{\tableau}{{\mbox {Tab}}}

\chapter{Finite Groups}

\noindent {\em References:} \cite{FulH,nagata}

\section{Generalities}
Let $V$ be a vector space over $\C$, and let $GL(V)$ denote the
group of all isomorphisms on $V$. For a fixed basis of $V$,
$GL(V)$ is isomorphic to the group $GL_n (\C)$, the group of all
$n \times n$ invertible matrices. 

Let $G$ be a group and $\rho :G \rightarrow GL(V)$ be a
representation. We also denote this by the tuple $(\rho ,V)$ or
say that $V$ is a $G$-module. 
Let $Z\subseteq V$ be a subspace such that $\rho (g)(Z)\subseteq
Z$ for all $g\in G$. Then, we say that $Z$ is an {\bf invariant
subspace}. We
say that $(\rho ,V)$ is {\bf irreducible} if there is no proper
subspace $W\subset V$ such that $\rho (g)(W) \subseteq W$ for all
$g\in G$. We say that $(\rho,V)$ is {\bf indecomposable} is there
is no expression $V=W_1 \oplus W_2 $ such that $\rho (g) (W_i
)\subseteq W_i $, for all $g\in G$. 

For a point $v\in V$, the {\bf orbit} $O(v)$, and the {\bf
stabilizer} $Stab(v)$ are defined as:
\[ \begin{array}{rcl}
  O(v)&=&\{ v' \in V| \exists g\in G \: with \: \rho (g)(v)=v' \}\\
Stab(v)&=&\{ g\in G| \rho (g)(v)=v \} \end{array}
\]
One may also define $v \sim v'$ if there is a $g\in G$ such
that $\rho (g)(v)=v'$. It is then easy to show that 
$[v]_{\sim } =O(v)$. 

Let $V^* $ be the dual-space of $V$. The representation $(\rho ,V)$
induces the {\bf dual} representation $(\rho^* ,V^*)$ 
defined as $\rho^* (v^*
)(v)=v^* (\rho (g^{-1} )(v)) $. It will be convenient for $\rho^*
$ to act on the right, i.e., $((v^* )(\rho^* ))(v)=v^* (\rho
(g^{-1})(v))$. 

When $\rho $ is fixed, we abbrieviate $\rho(g)(v)$ as just $g\cdot
v$. Along with this, there are the
standard constructions of the {\bf tensor} $T^d (V)$, 
the {\bf symmetric power} $Sym^d (V)$ and the {\bf alternating
power} $\wedge^d (V)$ representations. 

Of special significance is $Sym^d (V^* )$, the space of
homogeneous polynomial functions on $V$ of degree $d$. Let $dim(V)=n
$ and let $X_1
,\ldots ,X_n $ be a basis of $V^* $. We define
as follows:
\[ R= \C[X_1 ,\ldots ,X_n ]=\oplus_{d=0}^{\infty } R_d =
\oplus_{d=0}^{\infty}  Sym^d (V^* ) \]

Thus $R$ is the ring of all polynomial functions on $V$ and is
isomorphic to the algebra (over $\C$) of $n$ indeterminates. 
Since $G$ acts on the domain $V$, $G$ also acts on all functions
$f:V\rightarrow \C $ as follows:
\[ (f\cdot g)(v)=f(g^{-1}\cdot v) \]
This action of $G$ on all functions extends the action of $G$ on
polynomial functions above. 
Indeed, for any $g\in G$, the map $t_g :R \rightarrow R$ given by
$f\rightarrow f \cdot g$ is an algebra isomorphism. This is called
the {\bf translation map}.

For an $f\in R$, we say that $f$ is an {\bf invariant} if $f\cdot
g=f $ for all $g\in G$. The following are equivalent:
\begin{itemize}
\item $f\in R$ is an invariant.
\item $Stab(f)=G$.
\item $f(g\cdot v)=f(v)$ for all $g\in G$ and $v\in V$. 
\item For all $v,v'$ such that $v'\in Orbit(v)$, we have $f(v)=f(v')$. 
\end{itemize}

If $W_1 $ and $W_2 $ are two modules of $G$ and $\phi :W_1
\rightarrow W_2 $ is a linear map such that $g\cdot \phi (w_1
)=\phi (g \cdot w_1 )$ for all $g\in G$ and $w_1 \in W_1 $ then we
say that $\phi $ is {\bf $G$-equivariant} or that $\phi $ is a
{\bf morphism of $G$-modules}.

\section{The finite group action}
Let $G$ be a finite group and $(\mu ,W)$ be a representation. 

Recall that a {\bf complex inner product} on $W$ is a map $h:W
\times W \rightarrow \C$ such that:
\begin{itemize}
\item $h(\alpha w+\beta w', w'')=\overline{\alpha
}h(w,w'')+\overline{\beta }h(w',w'') $ for all $\alpha , \beta \in
\C$ and all $w,w',w''\in W$. 
\item $h(w'', \alpha w+\beta w')=\alpha
h(w'',w)+\beta h(w'',w') $ for all $\alpha , \beta \in
\C$ and all $w,w',w''\in W$. 
\item $h(w,w)>0 $ for all $w\neq 0$. 
\end{itemize}
Also recall that if $Z\subseteq W$ is a subspace, then $Z^{\bot }$
is defined as:
\[ Z^{\bot }=\{ w\in W| h(w,z)=0 \: \forall z\in Z \} \]
Also recall that $W=Z\oplus Z^{\bot} $. 

We say that an inner product $h$ is {\bf $G$-invariant} if 
$h(g \cdot w, g\cdot w')=h(w,w')$ for all $w,w'\in W$ and $g\in
G$.

\begin{prop}
Let $W$ be as above, and $Z$ be an invariant subspace of $W$. Then
$Z^{\bot }$ is also an invariant subspace. Thus every reducible
representation of $G$ is also decomposable. 
\end{prop}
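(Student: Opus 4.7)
The plan is to use the averaging trick, which is exactly the same technique invoked earlier in the course to establish reductivity of finite groups. The approach splits naturally into two steps: first, construct a $G$-invariant Hermitian inner product on $W$; second, use that invariant inner product to build the orthogonal complement of $Z$ and verify that it is itself $G$-invariant.

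For the first step, I would pick any complex inner product $h_0$ on $W$ (one always exists, e.g., by choosing a basis and declaring it orthonormal) and define
\[
h(w, w') \;=\; \frac{1}{|G|} \sum_{g \in G} h_0(g \cdot w,\, g \cdot w').
\]
I would then check the three defining properties of a complex inner product: conjugate-linearity in the first argument, linearity in the second, and positive definiteness. Conjugate-linearity and linearity follow immediately from the corresponding properties of $h_0$, since finite sums preserve them. Positive definiteness is the key sanity check: if $w \neq 0$, then $g \cdot w \neq 0$ for each $g$, and so each $h_0(g \cdot w, g \cdot w) > 0$; the sum is therefore strictly positive. The $G$-invariance $h(g_0 w, g_0 w') = h(w, w')$ for every $g_0 \in G$ then follows from the fact that the map $g \mapsto g g_0$ is a bijection of the finite group $G$, so the summation index can be re-indexed with no change in value. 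This is the only place finiteness of $G$ is used in an essential way.

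For the second step, assume $Z \subseteq W$ is a $G$-invariant subspace. I would define $Z^{\bot}$ with respect to the newly constructed $G$-invariant form $h$, so automatically $W = Z \oplus Z^{\bot}$ as vector spaces. To verify that $Z^{\bot}$ is $G$-invariant, fix $w \in Z^{\bot}$ and $g \in G$, and check that $g \cdot w \in Z^{\bot}$. For any $z \in Z$, the $G$-invariance of $h$ gives
\[
h(g \cdot w,\, z) \;=\; h(w,\, g^{-1} \cdot z),
\]
and since $Z$ is itself $G$-invariant, $g^{-1} \cdot z \in Z$, so the right-hand side vanishes. Hence $g \cdot w$ is orthogonal to all of $Z$, which is precisely the statement $g \cdot w \in Z^{\bot}$.

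This establishes the first assertion of the proposition, and the second assertion (reducible implies decomposable) follows by definition: if $W$ is reducible it has a proper nonzero invariant subspace $Z$, and by what we just showed $W = Z \oplus Z^{\bot}$ with both summands invariant, so $W$ is decomposable. There is no real obstacle here — the only subtlety worth flagging is that the construction of $h$ depends crucially on $|G| < \infty$; the analogous argument for compact groups, which we gave earlier, replaces the finite sum by an integral against a left-invariant Haar measure, and the rest of the argument goes through verbatim.
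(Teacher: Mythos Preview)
Your proof is correct and takes essentially the same approach as the paper. The only organizational difference is that the paper separates the two steps: the proposition's proof assumes a $G$-invariant inner product $h$ is already in hand and verifies $Z^{\bot}$ is invariant via the identity $h(g\cdot x, z) = h(x, g^{-1}\cdot z)$, while the construction of $h$ by averaging is deferred to the immediately following lemma; you have simply folded both steps into a single argument.
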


\noindent
{\bf Proof}: Let $x\in Z^{\bot}, z\in Z$ and let us examine $(g\cdot
x,z)$. Applying $g^{-1}$ to both sides, we see that:
\[ h(g \cdot x,z)=h(g^{-1} \cdot g \cdot x, g^{-1} \cdot
z)=h(x,g^{-1} \cdot z)=0 \]
Thus, $G$ preserves $Z^{\bot }$ as claimed. $\Box $

Let $h$ be a complex inner product on $W$. We define the inner
product $h^G $ as follows: 
\[ h^G (w,w')=\frac{1}{|G|} \sum_{g' \in G} h(g'\cdot w, g'\cdot w')
\]

\begin{lemma}
$h^G $ is a $G$-invariant inner product.
\end{lemma}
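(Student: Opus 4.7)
The plan is to verify the two defining conditions separately: first that $h^G$ is in fact a complex inner product on $W$, and second that it is $G$-invariant. Both follow from the averaging trick together with the fact that $G$ is finite, so that the sum $\frac{1}{|G|}\sum_{g' \in G}$ makes sense and the substitution $g' \mapsto g'g$ permutes the summands.

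First I would check that $h^G$ inherits the three axioms of a complex inner product from $h$. Conjugate-linearity in the first slot and linearity in the second slot are immediate, since the map $w \mapsto g' \cdot w$ is linear for each fixed $g'$, and finite sums preserve (conjugate-)linearity; thus each summand $h(g'\cdot w, g'\cdot w')$ is conjugate-linear in $w$ and linear in $w'$, and so is the average. For positive-definiteness, observe that
\[
h^G(w,w) = \frac{1}{|G|}\sum_{g' \in G} h(g'\cdot w, g'\cdot w) \geq 0,
\]
since each term is nonnegative. Equality forces every summand to vanish, and in particular the term for $g' = e$ gives $h(w,w)=0$, whence $w=0$ by positive-definiteness of $h$.

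Next I would verify $G$-invariance. Fix $g \in G$ and compute
\[
h^G(g\cdot w, g\cdot w') = \frac{1}{|G|}\sum_{g' \in G} h(g'\cdot(g\cdot w),\, g'\cdot(g\cdot w')) = \frac{1}{|G|}\sum_{g' \in G} h((g'g)\cdot w,\, (g'g)\cdot w'),
\]
using that $\mu$ is a homomorphism. The substitution $g'' = g'g$ is a bijection of $G$ with itself, so relabeling the summation index gives
\[
h^G(g\cdot w, g\cdot w') = \frac{1}{|G|}\sum_{g'' \in G} h(g''\cdot w, g''\cdot w') = h^G(w,w'),
\]
which is the required invariance.

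There is no serious obstacle here: the only ingredient beyond formal manipulation is the finiteness of $G$ (which lets us average and re-index) and the nondegeneracy of the original $h$ (which propagates to $h^G$ via the identity element summand). The same argument, with summation replaced by integration against a left-invariant Haar measure, is precisely what extends the conclusion from finite groups to compact groups, as already used elsewhere in these notes.
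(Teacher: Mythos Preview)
Your proof is correct and follows essentially the same approach as the paper: verify the inner-product axioms (linearity from the linearity of the $G$-action, positive-definiteness from the nonnegativity of each summand), then obtain $G$-invariance by the re-indexing $g'' = g'g$. The paper's argument is somewhat terser but structurally identical.
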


\noindent
{\bf Proof}: First we see that 
\[ h^G (w,w)=\frac{1}{|G|} \sum_{g'\in G} h(w',w')\]
where $w'=g'\cdot w$. Thus $h^G (w,w)>0$ unless $w=0$. Secondly,
by the linearity of the action of $G$, we see that $h^G $ is
indeed an inner product. Finally, we see that:
\[ h^G (g\cdot w,g\cdot w')=\frac{1}{|G|} \sum_{g'\in G} 
h(g'\cdot g\cdot w, g' \cdot g \cdot w') \]
Since as $g'$ ranges over $G$, so will $g'\cdot g$ for any fixed
$g$, we have that $h^G $ is $G$-invariant. $\Box $

\begin{theorem}
\begin{itemize}
\item Let $G$ be a finite group and $(\rho ,V)$ be an indecomposable
representation, then it is also irreducible. 
\item Every representation $(\rho ,V)$ may be decomposed into
irreducible representations $V_i $. Thus $V=\oplus_i V_i $, where
$(\rho_i ,V_i )$ is an irreducible representation.
\end{itemize}
\end{theorem}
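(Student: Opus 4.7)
The plan is to deduce the theorem directly from the two preceding results: the proposition that, for a finite group, every invariant subspace has an invariant complement, and the lemma producing a $G$-invariant inner product by averaging. Essentially no new ideas are needed; the theorem is the packaging of those two facts into the Maschke-style statement.

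For the first bullet, I would argue by contrapositive. Suppose $(\rho,V)$ is reducible, so there exists a proper nonzero invariant subspace $Z \subsetneq V$. Start with any complex inner product $h$ on $V$ and form $h^G$ via the averaging trick; by the lemma, $h^G$ is a $G$-invariant inner product. Then by the proposition just proved, the orthogonal complement $Z^\perp$ (taken with respect to $h^G$) is also $G$-invariant. Since $V = Z \oplus Z^\perp$ as vector spaces and both summands are invariant, this is a decomposition of $V$ as a $G$-module into proper nonzero pieces, so $V$ is decomposable. Contrapositively, indecomposable implies irreducible.

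For the second bullet, I would proceed by induction on $\dim V$. The base case $\dim V = 0$ (or $\dim V = 1$, where any representation is automatically irreducible) is trivial. For the inductive step, if $V$ is irreducible, take the decomposition to be $V$ itself. Otherwise $V$ is reducible, hence by the first bullet decomposable, so we may write $V = W_1 \oplus W_2$ with each $W_i$ a proper nonzero $G$-invariant subspace. Since $\dim W_i < \dim V$, the inductive hypothesis applied to each $W_i$ yields decompositions $W_i = \bigoplus_j V_{i,j}$ into irreducibles; concatenating gives the desired decomposition of $V$.

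There is essentially no obstacle here: the only nontrivial input is the existence of the $G$-invariant inner product, which is precisely what the averaging argument of the lemma provides, and the fact that the orthogonal complement of an invariant subspace under such a form is itself invariant, which the proposition supplies. The finiteness of $G$ is used solely to legitimize the average $\tfrac{1}{|G|}\sum_{g \in G}$; replacing this sum by an integral against Haar measure gives the same proof for compact groups, which is exactly how the later lecture extends the result.
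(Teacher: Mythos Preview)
Your proof is correct and follows essentially the same approach as the paper: for the first part, the paper argues (contrapositively) that an invariant subspace $Z$ yields the invariant complement $Z^\perp$ and hence a nontrivial decomposition, exactly as you do; for the second part, the paper simply says ``by applying the first, recursively,'' which is your induction on $\dim V$ spelled out.
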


\noindent
{\bf Proof}: Suppose that $Z\subseteq V$ is an invariant subspace,
then $ V=Z\oplus Z^{\bot }$ is a non-trivial decomposition of $V$
contradicting the hypothesis. The second part is proved by
applying the first, recursively. $\Box $

We have seen the operation of {\bf averaging over the group} in
going from the inner product $h$ to the $G$-invariant inner
product $h^G $. A similar approach may be used for constructing
invariant polynomials functions. So let $p(X)\in R=\C[X_1 ,\ldots
,X_n ]$ be a polynomial function. We define the function $p^G
:V\rightarrow \C$ as:
\[ p^G (v)=\frac{1}{|G|} \sum_{g\in G} p(g \cdot v) \]
The transition from $p$ to $p^G $ is called the {\bf Reynold's
operator}. 

\begin{prop}
Let $p\in R$ be of degree atmost $d$, then $p^G $ is also a
polynomial function of degree atmost $d$. Next, $p^G $ is an
invariant. 
\end{prop}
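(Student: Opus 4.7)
The plan is to prove the two assertions separately, each by a short direct computation that mirrors the earlier averaging arguments (e.g., for the Hermitian form).

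For the degree bound, I would first observe that for any fixed $g\in G$ the map $v\mapsto g\cdot v$ is given by the linear operator $\rho(g)\in GL(V)$. Thus if $v$ has coordinates $(v_1,\dots,v_n)$ in a chosen basis, the coordinates of $g\cdot v$ are linear forms in $v_1,\dots,v_n$ with coefficients determined by the matrix of $\rho(g)$. Substituting linear forms into a polynomial of degree at most $d$ yields a polynomial of degree at most $d$, so each summand $p(g\cdot v)$ lies in $R_{\le d}:=\bigoplus_{e\le d}R_e$. Since $R_{\le d}$ is a subspace, the finite sum $\sum_{g\in G}p(g\cdot v)$ and its scalar multiple $p^G$ still lie in $R_{\le d}$.

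For $G$-invariance, I would use the reindexing trick. Fix $h\in G$; then by definition of the action on functions,
\[
(p^G\cdot h)(v)\;=\;p^G(h^{-1}\cdot v)\;=\;\tfrac{1}{|G|}\sum_{g\in G}p\bigl(g\cdot h^{-1}\cdot v\bigr).
\]
Setting $g':=gh^{-1}$ gives a bijection $G\to G$ (since $G$ is a group), so the sum equals $\sum_{g'\in G}p(g'\cdot v)=|G|\cdot p^G(v)$. Dividing by $|G|$ yields $(p^G\cdot h)(v)=p^G(v)$ for every $v\in V$, i.e.\ $p^G\cdot h=p^G$. Since $h\in G$ was arbitrary, this is exactly the definition of invariance given on the previous page.

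There is essentially no obstacle here; both parts are immediate once one unwinds the definitions. The only thing to be careful about is verifying that the averaging indeed lands in $R_{\le d}$ rather than merely in the algebra of polynomial functions — this is where the linearity of the $G$-action on $V$ (as opposed to a more general action) is used. The invariance argument is purely a change-of-variables in the summation index, identical in spirit to the averaging trick used earlier in the text to produce $h^G$ from $h$.
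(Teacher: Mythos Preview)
Your argument is correct. The paper states this proposition without proof, and what you have written is exactly the standard verification one expects: linearity of the $G$-action ensures each $p(g\cdot v)\in R_{\le d}$, and the change of summation index $g\mapsto gh^{-1}$ gives invariance.
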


Let $R^G $ denote the set of all invariant polynomial 
functions on the space $V$. It is easy to see that $R^G \subseteq R$ is
actually a {\em subring} of $R$. 

Let $Z\subseteq V$ be an arbitrary subset of $V$. We say that $Z$
is $G$-closed if $g\cdot z\in Z$ for all $g\in G$ and $z\in Z$. 
Thus $Z$ is a union of orbits of points in $V$.

\begin{lemma}
Let $p\in R^G $ be an invariant and let $Z=V(p)$ be the variety of
$p$. Then $Z$ is $G$-closed.
\end{lemma}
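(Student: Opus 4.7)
The plan is to unwind the definitions: we need to show that if $z \in Z = V(p)$ and $g \in G$, then $g \cdot z \in Z$. This amounts to showing $p(g \cdot z) = 0$ given $p(z) = 0$, which follows immediately from the invariance property of $p$.

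More explicitly, first I would recall from the equivalent characterizations of invariants listed earlier in the excerpt: since $p \in R^G$, we have $p(g \cdot v) = p(v)$ for all $v \in V$ and $g \in G$. (This is the third bullet in the list of equivalent conditions, and follows directly from $p \cdot g = p$ together with the definition $(p \cdot g)(v) = p(g^{-1} \cdot v)$, after substituting $g^{-1}$ for $g$.)

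Next, I would fix an arbitrary $z \in Z$ and $g \in G$. By definition of $Z = V(p)$, we have $p(z) = 0$. Applying invariance,
\[
p(g \cdot z) = p(z) = 0,
\]
so $g \cdot z \in V(p) = Z$. Since $z \in Z$ and $g \in G$ were arbitrary, $Z$ is $G$-closed.

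There is no real obstacle here; the statement is essentially a restatement of the fourth equivalent condition in the earlier list (that invariants are constant on orbits), phrased at the level of zero sets rather than level sets. One could equivalently observe that $V(p)$ is a particular level set $p^{-1}(0)$, and the fourth bullet says every level set of an invariant is a union of orbits, i.e., $G$-closed. The lemma is a warm-up for the more substantive fact that the common zero locus of any collection of invariants is $G$-closed, which underlies the construction of the GIT quotient $V /\!\!/ G = \mathrm{Spec}(R^G)$.
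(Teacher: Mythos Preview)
Your proof is correct. The paper states this lemma without proof, presumably because it is immediate from the listed equivalent characterizations of an invariant (in particular, that $p(g\cdot v)=p(v)$ for all $g,v$); your argument is precisely the one-line verification one would supply.
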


We have already seen that $O(v)$, the orbit of $v$ arises from the
equivalence class $\sim $ on $V$. Since the group is finite,
$|O(v)|\leq |G|$ for any $v$. Let $O_1 $ and $O_2 $ be disjoint
orbits.  It is essential to determine if elements of $R^G $ can
separate $O_1 $ and $O_2 $. 

\begin{lemma}
Let $O_1 $ and $O_2 $ be as above, and $I_1 $ and $I_2 $ be their
ideals in $R$. Then there are $p_1 \in I_1 $ and $p_2 \in I_2 $ so
that $p_1 +p_2 =1$. 
\end{lemma}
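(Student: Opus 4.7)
The claim is exactly the statement that the ideals $I_1$ and $I_2$ of $R=\C[X_1,\ldots,X_n]$ are comaximal, i.e.\ $I_1+I_2=R$. Since $G$ is finite, each orbit has size at most $|G|$, so $O_1$ and $O_2$ are \emph{finite} subsets of $V=\C^n$. The plan is to build a single polynomial $p_1\in I_1$ which equals $1$ identically on $O_2$; then $p_2:=1-p_1$ lies in $I_2$ and $p_1+p_2=1$, giving the lemma. Everything reduces to the fact that polynomial functions separate points of $\C^n$, which is immediate from the coordinate functions.

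\textbf{Step 1 (point separation).} For each pair $(v_1,v_2)$ with $v_1\in O_1$ and $v_2\in O_2$, the hypothesis $O_1\cap O_2=\emptyset$ gives $v_1\neq v_2$, so there is an index $i$ with $(v_1)_i\neq (v_2)_i$. Set
\[
\ell_{v_1,v_2}(X) \;=\; \frac{X_i-(v_1)_i}{(v_2)_i-(v_1)_i},
\]
so $\ell_{v_1,v_2}(v_1)=0$ and $\ell_{v_1,v_2}(v_2)=1$.

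\textbf{Step 2 (vanish on $O_1$, value $1$ at a chosen point of $O_2$).} For each fixed $v_2\in O_2$ define
\[
q_{v_2}(X) \;=\; \prod_{v_1\in O_1} \ell_{v_1,v_2}(X).
\]
This polynomial lies in $I_1$ (one factor vanishes at each point of $O_1$) and satisfies $q_{v_2}(v_2)=1$. A second round of Lagrange-type interpolation, multiplying $q_{v_2}$ by factors of the form $(q_{v_2}-q_{v_2}(v_2'))/(1-q_{v_2}(v_2'))$ for each $v_2'\in O_2\setminus\{v_2\}$ with $q_{v_2}(v_2')\neq 1$ (and by a suitable linear form separating $v_2$ from $v_2'$ in the degenerate case), produces a polynomial $r_{v_2}\in I_1$ with $r_{v_2}(v_2')=\delta_{v_2,v_2'}$ for all $v_2'\in O_2$.

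\textbf{Step 3 (assembly).} Set
\[
p_1 \;=\; \sum_{v_2\in O_2} r_{v_2},\qquad p_2 \;=\; 1-p_1.
\]
Then $p_1\in I_1$ (each summand is), and $p_1\equiv 1$ on $O_2$, so $p_2\in I_2$; by construction $p_1+p_2=1$.

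\textbf{Main obstacle.} There is no serious obstacle: the statement is simply the zero-dimensional case of the Chinese remainder theorem, or equivalently the Nullstellensatz applied to two disjoint finite point sets in $\C^n$. The only mild technicality is the Lagrange-interpolation adjustment in Step 2, which one could sidestep altogether by working in the finite-dimensional quotient $R/I(O_1\cup O_2)\cong \C^{O_1\cup O_2}$ and lifting the characteristic function $\mathbf{1}_{O_2}$ to $R$. I expect the lemma to be used immediately afterwards via the Reynolds operator: since $O_1,O_2$ are $G$-closed, averaging $p_1$ and $p_2$ over $G$ yields $p_1^G\in I_1^G$ and $p_2^G\in I_2^G$ with $p_1^G+p_2^G=1$, showing that $G$-invariants separate disjoint orbits.
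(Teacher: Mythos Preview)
Your proof is correct and follows essentially the same approach as the paper, which simply cites the Hilbert Nullstellensatz together with the remark that, since the point sets are finite, one can write down the $p_i$ explicitly by Lagrange interpolation. You have carried out that interpolation in detail (and correctly noted the cleaner alternative via the quotient $R/I(O_1\cup O_2)\cong \C^{O_1\cup O_2}$), and your closing anticipation of the Reynolds-operator application is exactly how the lemma is used downstream.
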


\noindent
{\bf Proof}: This follows from the Hilbert Nullstellensatz. Since
the point sets are finite, there is an explicit construction based
on Lagrange interpolation. $\Box $

Let $G$ be a finite group and $(\rho ,V)$ be a representation as
above. We
have see that this induces an action on $\C [X_1 ,\ldots ,X_n ]$.
Also note that this action is {\bf homogeneous}: for a $g \in G$
and $p\in R_d $, we have that $p\cdot g \in R_d $ as well. Thus
$R^G $, the ring of invariants, is a {\em homogeneous subring} of
$R$. In other words:
\[ R^G =\oplus_{d=0}^{\infty} R^G_d \]
where $R^G_d $ are invariants which are homogeneous of degree $d$.
The existence of the above decomposition implies that every
invariant is a sum of homogeneous invariants. Now, since $R^G_d
\subseteq R_d $ as a vector space over $\C$. Thus 
\[ dim_{\C} (R^G_d
) \leq dim_{\C}(R_d ) \leq {{n+d-1}\choose{n-1}} \]

We define the {\bf hilbert function} $h(R^G )$ of  $R^G $ (or for
that matter, of any homogeneous ring) as:
\[ h(R^G )=\sum_{d=0}^{\infty}  dim_{\C} (R^G_d ) z^d \]

We will see now that $h(R^G )$ is actually a rational function
which is easily computed. We need a lemma.

Let $(\rho ,W)$ be a representation of the finite group $G$. Let
\[ W^G =\{ w\in W| g\cdot w=w \} \]
be the set of all vector invariants in $W$, and this is a subspace
of $W$. . 

\begin{lemma}
Let $(\rho ,W)$ be as above. We have:
\[ dim_{\C }(W^G )=\frac{1}{|G|} \sum_{g \in G} trace(\rho (g)) \]
\end{lemma}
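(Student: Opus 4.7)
The plan is to introduce the averaging operator
\[ \phi = \frac{1}{|G|} \sum_{g \in G} \rho(g) \in \End(W), \]
and to prove the identity by showing that $\phi$ is a projection of $W$ onto $W^G$, whence $\dim_{\C}(W^G) = \trace(\phi)$, and then to compute $\trace(\phi)$ directly from its definition.

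First I would verify that $\im(\phi) \subseteq W^G$: for any $w \in W$ and any $h \in G$,
\[ h \cdot \phi(w) = \frac{1}{|G|}\sum_{g \in G} (hg) \cdot w = \frac{1}{|G|}\sum_{g' \in G} g' \cdot w = \phi(w), \]
since left-multiplication by $h$ permutes $G$. Next I would show $\phi|_{W^G} = \Id_{W^G}$: if $w \in W^G$, then every $g \cdot w = w$, so $\phi(w) = \frac{1}{|G|}|G|\,w = w$. Together these two facts say that $\phi$ is idempotent with image exactly $W^G$, i.e., $\phi$ is a projection onto $W^G$ along some complementary subspace (namely $\Ker(\phi)$). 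This proof is essentially the one given for Lemma 3.3 in the earlier chapter on representation theory, and it carries over verbatim.

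Since $\phi$ is a projection, in a basis of $W$ adapted to the decomposition $W = W^G \oplus \Ker(\phi)$ the matrix of $\phi$ is block-diagonal with an identity block of size $\dim(W^G)$ and a zero block; hence $\trace(\phi) = \dim_{\C}(W^G)$. On the other hand, by linearity of trace,
\[ \trace(\phi) = \frac{1}{|G|}\sum_{g \in G} \trace(\rho(g)), \]
which yields the desired formula.

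The only step that requires any care is the verification that $\phi$ really is a projection, and even that is routine group-theoretic averaging: the key identity $hG = G$ for any $h \in G$ is what makes $\im(\phi) \subseteq W^G$ work, and checking $\phi|_{W^G} = \Id$ is immediate. There is no serious obstacle; the lemma is a direct application of the averaging trick that has already appeared twice in the notes (once for Hermitian forms in the reductivity proof for finite groups, and once in Lemma 3.3 computing $\dim(V^G)$ for a finite group representation).
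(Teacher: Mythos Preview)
Your proof is correct and follows essentially the same strategy as the paper: introduce the averaging operator $P = \frac{1}{|G|}\sum_g \rho(g)$, show it is a projection onto $W^G$, and read off $\dim W^G = \trace(P)$.

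There is one minor difference worth noting in how the key step ``$\im P = W^G$'' is verified. You (following Lemma~3.3 of Part~I) argue directly that $h\cdot \phi(w)=\phi(w)$ via the bijection $g\mapsto hg$ on $G$, so the image lands in $W^G$, and then $\phi|_{W^G}=\Id$ gives equality. The paper's proof in Part~II instead first notes $P^2=P$, so $P$ is diagonalizable with eigenvalues in $\{0,1\}$, and then shows that the $1$-eigenspace $W_1$ coincides with $W^G$ by passing to a unitary basis: if $Pw=w$ then $w$ is the average of the $g\cdot w$, and with all $\rho(g)$ unitary this forces $g\cdot w=w$ for every $g$. Your route is shorter and avoids the change-of-basis step; the paper's route makes the spectral picture explicit. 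Either way the content is the same averaging trick, and your argument is complete as written.
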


\noindent
{\bf Proof}: Define $P=\frac{1}{|G|} \sum_{g \in G} \rho (g)$, as
the average of the representation matrices. We see that $\rho
(g)\cdot P=P \cdot \rho (g)$ and that $P^2 =P$. 
Thus $P$ is diagonalizable and the eigenvalues of $P$ are 
in the set $\{ 1,0 \}$. Let $W_1 $ and $W_0 $ be the corresponding
eigen-spaces. It is clear that $W^G \subseteq W_1 $ and that $W_1
$ is fixed by each $g\in G$. We now argue that every $w\in W_1 $
is actually an invariant. For that, let $w_g =g \cdot w$. We then
have that $Pw=w$ implies that 
\[ w=\frac{1}{|G|} \sum_{g \in G} w_g \]
Note that a change-of-basis does not affect the hypothesis nor the
assertion. We may thus assume that 
each $\rho (g)$ is unitary, we have that $w_g =w$ for all
$g\in G$. Now, the claim follows by computing $trace(P)$. $\Box $

We are now ready to state {\bf Molien's Theorem}:
\begin{theorem}
Let $(\rho ,W)$ be as above. We have:
\[ h(R^G )=\frac{1}{|G|} \sum_{g \in G} \frac{1}{det(I-z\rho (g))}
\]
\end{theorem}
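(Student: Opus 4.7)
The plan is to apply the preceding lemma (expressing $\dim(W^G)$ as an average of traces) to each graded piece $W = R_d = Sym^d(V^*)$ and then evaluate the resulting generating function. Concretely, by that lemma,
\[ \dim_\C(R^G_d) = \frac{1}{|G|} \sum_{g \in G} \trace(\rho_d(g)), \]
where $\rho_d$ denotes the induced representation of $G$ on $R_d = Sym^d(V^*)$. Multiplying by $z^d$ and interchanging the (absolutely convergent, once we introduce $z$ formally as an indeterminate) sums, I get
\[ h(R^G) = \sum_{d=0}^\infty \dim(R^G_d)\, z^d = \frac{1}{|G|} \sum_{g \in G} \sum_{d=0}^\infty \trace(\rho_d(g))\, z^d. \]

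Next I would compute the inner generating function for a fixed $g \in G$. Since $g$ has finite order, $\rho(g)$ is diagonalizable; choose a basis of $V$ in which $\rho(g)$ has eigenvalues $\lambda_1,\ldots,\lambda_n$. The dual basis $X_1,\ldots,X_n$ of $V^*$ then diagonalizes the action of $g$ on $V^*$ with eigenvalues $\lambda_1^{-1},\ldots,\lambda_n^{-1}$ (recall that $(f\cdot g)(v)=f(g^{-1}\cdot v)$, so $g$ acts on $V^*$ essentially by $\rho(g)^{-1}$). The monomials $X_1^{a_1}\cdots X_n^{a_n}$ with $\sum a_i = d$ form an eigenbasis of $R_d$, and the eigenvalue of such a monomial is $\lambda_1^{-a_1}\cdots\lambda_n^{-a_n}$. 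Therefore
\[ \sum_{d=0}^\infty \trace(\rho_d(g))\, z^d = \sum_{a_1,\ldots,a_n \geq 0} \bigl(\lambda_1^{-1}z\bigr)^{a_1}\cdots\bigl(\lambda_n^{-1}z\bigr)^{a_n} = \prod_{i=1}^n \frac{1}{1 - z\lambda_i^{-1}} = \frac{1}{\det(I - z\rho(g)^{-1})}. \]

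Finally, I would substitute and simplify. We obtain
\[ h(R^G) = \frac{1}{|G|} \sum_{g \in G} \frac{1}{\det(I - z\rho(g)^{-1})}, \]
and since $g \mapsto g^{-1}$ is a bijection of $G$, reindexing the sum replaces $\rho(g)^{-1}$ by $\rho(g)$ and yields the claimed formula. The only substantive step is the second one, where one must track conventions (left versus right action, $V$ versus $V^*$) to see that the inverse of $\rho(g)$ naturally appears and then gets absorbed by the group-inversion symmetry; the rest is routine, and there is no real obstacle once the preceding lemma on $\dim W^G$ is in hand.
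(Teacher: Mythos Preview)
Your proof is correct and follows essentially the same route as the paper: apply the dimension-as-average-of-traces lemma to each graded piece $R_d$, compute the trace on $Sym^d$ in terms of the eigenvalues of $\rho(g)$, and sum the resulting geometric series to produce the determinant. The only difference is cosmetic: the paper simply writes the eigenvalues of $\rho_d(g)$ as $\prod_i \lambda_i^{d_i}$ with $\lambda_i$ the eigenvalues of $\rho(g)$ and arrives directly at $\det(I-z\rho(g))^{-1}$, whereas you carefully track that the action on $V^*$ involves $\rho(g)^{-1}$ and then absorb the inverse via the reindexing $g\mapsto g^{-1}$; both lead to the same formula and the paper's slight sloppiness on this point is harmless for exactly the reason you give.
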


\noindent
{\bf Proof}: Let $dim_{\C}(W)=n$ and let $\{ X_1 ,\ldots ,X_n \}$
be a basis of $W^* $. 
Since $R^G =\sum_d R^G_d $ and each $R^G_d \subseteq
\C[X_1 ,\ldots ,X_n ]_d $. Note that each $C[X_1 ,\ldots ,X_n ]_d
$ is also a representation $\rho_d $ of $G$. Furthermore, it is
easy to see that if $\{ \lambda_1 ,\ldots ,\lambda_n \}$ are the
eigenvalues of $\rho (g)$, then the eigen-values of the matrix
$\rho_d (g)$ are precisely (including multiplicity) 
\[ \{ \prod_i \lambda_i^{d_i }| \sum_i d_i =d \} \] 
Thus 
\[ trace(\rho_d (g))=\sum_{\overline{d}: |\overline{d}|=d} 
\prod_i \lambda_i^{d_i } \]

We then have:
\[ \begin{array}{rcl}
h(R^G )&=&\sum_d z^d dim_{\C} (R^G_d ) \\
&=& \sum_d z^d [\frac{1}{|G|} \sum_g trace(\rho_d (g)) ]\\
&=& \frac{1}{|G|} \sum_g \frac{1}{(1-\lambda_1 (g) z)\ldots
(1-\lambda_n (g) z)} \\
&=& \frac{1}{|G|} \sum_g \frac{1}{det(I-z\rho(g))} \end{array} \]

This proves the theorem. $\Box $

\section{The Symmetric Group}

$S_n $ will denote the symmetric group of all bijections on the
set $[n]$. The standard representation of $S_n $ is obviously on
$V=\C^n $ with 
\[ \sigma \cdot (v_1 ,\ldots ,v_n )=(v_{\sigma (1)},\ldots
,v_{\sigma (n)} )\]

Thus, regarding $V$ as column vectors, and $S_n $ as the group of
$n\times n$-permutation matrices, we see that the action of
permutation $P$ on vector $v$ is given by the matrix
multiplication $P\cdot v$. 

Let $X_1 ,\ldots ,X_n $ be a basis of $V^* $. $S_n $ acts on
$R=\C[X_1 ,\ldots ,X_n ]$ by $X_i \cdot \sigma =X_{\sigma (i)} $. 
The orbit of any point $v=(v_1 ,\ldots ,v_n )$ is the collection of
all permutation of the entries of the vector $v$ and thus the size
of the orbit is bounded by $n!$. 

The invariants for this action are given by the elementary
symmetric polynomials $e_k (X)$, for $k=1,\ldots ,n$, where 
\[ e_k (X)=\sum_{i_1 <i_2 <\ldots <i_k } X_{i_1} X_{i_2} \ldots
X_{i_k } \]

Given two vector $v$ and $w$, if $w\not \in O(v)$, then there is a
$k$ such that $e_k (v) \neq e_k (w)$. This follows from the theory
of equations in one variable. 

The ring $R^G $ equals $\C [e_1 ,\ldots ,e_n ]$ and has no
algebraic dependencies. The hilbert function of $R_G $ may then be
expressed as:
\[ h(R^G )=\frac{1}{(1-z)(1-z^2 )\ldots (1-z^n )} \]
It is an exercise to verify that Molien's expression agrees with
the above.

A related action of $S_n $ is the {\bf diagonal action}:
Let $X=\{ X_1 ,\ldots ,X_n \}$, $Y=\{ Y_1 ,\ldots ,Y_n \}$, and so
on upto $W=\{ W_1 ,\ldots ,W_n \}$ be a family of $r$ (disjoint) 
variables. Let $B=\C [X,Y,\ldots ,W]$ be the ring of polynomials in the
variables of the disjoint union. 

We define the action of $S_n $ on $X\cup Y\cup \ldots \cup W$ as 
$X_i \cdot \sigma =X_{\sigma (i)} $, 
$Y_i \cdot \sigma =Y_{\sigma (i)} $, and so on. 

The matrix equivalence of this action is the action of the
permutation matrices on $n\times r$ matrices $A$, where the action
of $P$ on $A$ is given by $P\cdot A$. 

The invariants $B^G $ is obtained from the $r=1$ case by a {\em
curious} operation: Let $D_{XY}$ denote the operator:
\[ D_{XY}=Y_1 \frac{\partial}{\partial X_1 } +
\ldots +Y_n \frac{\partial}{\partial X_n } \]
The ring $B^G $ is obtained from $R^G $ by applying the operators
$D_{XY},D_{XW},D_{WX}$ and so on, to elements of $R^G $. As an
example, we have
\[ e_2 (X)=X_1 X_2 +X_1 X_3 +\ldots +X_{n-1} X_n \]
We have $D_{XY}(e_2 )$ as:
\[ D_{XY} (e_2 (X))=\sum_{i \neq j} X_i Y_j  \]
This is clearly an element of $B^G $.
 
\chapter{The Group $SL_n $}

\noindent {\em References:} \cite{FulH,nagata}

\section{The Canonical Representation}

Let $V$ be a vector space of dimension $n$, and let $x_1 ,\ldots
,x_n $ be a basis for $V$. Let $X_1 ,\ldots ,X_n $ be the dual
basis of $V^* $. 

$SL(V)$ will denote the group of all unimodular linear
transformations on $V$. In the above basis, this group is
isomorphic to that of all $n\times n$ matrices of determinant $1$,
or in other words $SL_n (\C )$. 

The standard representation of $SL(V)$ is obviously $V$ itself:
Given $\phi \in SL(V)$ and $v\in V$, we have $\phi \cdot v=\phi(v)$ 
is the action of $\phi $ on $v$.

In terms of the basis $x$ above we may write $v=[x_1 ,\ldots
,x_n][\alpha_1 ,\ldots ,\alpha_n ]^T $ and thus $\phi \cdot v$ as 
$[\phi \cdot x_1 ,\ldots ,\phi \cdot x_n ]
[\alpha_1 ,\ldots ,\alpha_n ]^T$. If $\phi \cdot x_i
=[x_1 ,\ldots ,x_n ][a_{1i} ,\ldots ,a_{ni}]^T $, then we have 
\[ \phi \cdot v=[x_1 ,\ldots ,x_n ]\left[ \begin{array}{ccc} 
a_{11} & \ldots & a_{1n} \\
\vdots & & \vdots \\
a_{n1} & \ldots & a_{nn} \end{array} \right] \left[
\begin{array}{c} 
\alpha_1 \\
\vdots \\
\alpha_n \end{array} \right] \]
We denote this matrix as $A_{\phi }$. 

We may now work with $SL_n (\C)$ or simply $SL_n $. Given a vector
$a=[\alpha_1 ,\ldots ,\alpha_n ]^T $, we see that the matrix
multiplication $A\cdot a$ is
the action of $A$ on the column vector $a$. 

Let us now understand the orbits of typical elements in the column
space $\C^n $. The typical $v\in \C^n $ is a non-zero column
vector. For any non-zero vector $w$, we see that there is an
element $A\in SL_n $ such that $w=Av$. Furthermore, for any $B\in
SL_n $, clearly $Bv\neq 0$. Thus we see that $\C^n $ has exactly
two orbits:
\[ O_0 =\{ 0\} \: \: \: O_1 \{ v \in \C^n | v\neq 0 \} \]
Note that $O_1 $ is dense in $V=\C^n $ and its closure includes the
orbit $O_0 $ and therefore, the whole of $V$. 

Let $R=\C[X_1 ,\ldots X_n ]$ be the
ring of polynomial functions on $\C^n $. We examine the action of
$SL_n $ on $\C [X]$. Recall that the action of $A$ on $X$ should be
such that the evaluations $X_i (x_j )=\delta_{ij}$ must be
conserved. Thus if the column vectors $x_i  =[0,\ldots ,0,1,0 \ldots
,0]^T $ are the basis vectors of $V$ and the row vectors 
$X_i =[0,\ldots ,0,1,0\ldots ,0]$ that of $V^* $, then a matrix
$A\in SL_n $ transforms $x_i $ to $Ax_i $ and $X_j $ to $X_j
A^{-1}$. Thus $X_j (x_i )=X_j /cdot x_i $ goes to $X_j A^{-1}A x_i
=X_j \cdot x_i $. 

Next, we examine $R=\C [X]$ for invariants. First note that the
action of $SL_n $ is homogeneous and thus we may assume that an
invariant $p$ is actually homogeneous of degree $d$. Next 
we see that $p$ must be constant on $O_1 $ and $O_0 $. In this  
case, if $p(O_1 )=\alpha $ then by the density of $O_1 $ in $V$,
we see that $p$ is actually constant on $V$. Thus $R^G =R_0 =\C $.

\section{The Diagonal Representation}

Let us now consider the diagonal representation of the above
representation. In other words, let $V^r $ be the space of all
complex $n\times r$ matrices $x$. The action of $A$ on $x$ is
obvious $A \cdot x$. Let $X$ be the $r\times n$-matrix dual to
$x$. It transforms according to $X\rightarrow XA^{-1} $. 

Let us examine the case when $r<n$ and compute the orbits in $V^r
$. Let $x\in V^r $ be a matrix and $y$ be a column vector such
that $x\cdot y=0$. We see that $(A \cdot x) \cdot y=0$ as well.
Thus if $ann(x)=\{ y \in \C^r | x\cdot y =0 \}$ is the annihilator
space of $x$, then we see that $ann(x)=ann(A \cdot x)$. 

We show that the converse is also true:
\begin{prop}
Let $r<n$ and $x,x'\in V^r $ be such that $ann(x)=ann(x')$. Then
there is an $A \in SL_n $ such that $x=A\cdot x'$. 
\end{prop}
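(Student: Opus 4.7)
The plan is to exploit the hypothesis $\ann(x) = \ann(x')$ to build a linear isomorphism between the column spaces of $x'$ and $x$, extend it to all of $\C^n$, and then use the hypothesis $r < n$ to adjust the determinant to $1$.

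First I would view $x$ and $x'$ as linear maps $\C^r \to \C^n$ (acting by $y \mapsto xy$), so that $\ann(x) = \ker(x)$ and $\ann(x') = \ker(x')$. Let $x_1,\ldots,x_r$ and $x'_1,\ldots,x'_r$ be the columns of $x$ and $x'$, and write $W = \mbox{col}(x)$, $W' = \mbox{col}(x')$, $k = \rank(x) = r - \dim\ker(x)$. Since $\ker(x')=\ker(x)$, the assignment $x'_i \mapsto x_i$ is well-defined on linear combinations: if $\sum c_i x'_i = 0$ then $(c_1,\ldots,c_r) \in \ker(x') = \ker(x)$, whence $\sum c_i x_i = 0$. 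This produces a well-defined linear isomorphism $\psi: W' \to W$ with $\psi(x'_i) = x_i$ for each $i$, and also $\rank(x') = k$.

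Second I would extend $\psi$ to a linear isomorphism $A_0: \C^n \to \C^n$. Pick any direct-sum complements $\C^n = W' \oplus U'$ and $\C^n = W \oplus U$; both $U$ and $U'$ have dimension $n-k$, and crucially $n - k \geq n - r \geq 1$ because $r < n$. Choose any linear isomorphism $\eta: U' \to U$, and set $A_0 = \psi \oplus \eta$. Then $A_0 \in GL_n(\C)$ and $A_0 x' = x$ as $n \times r$ matrices.

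Third I would adjust the determinant. Let $\delta = \det(A_0) \in \C^*$. Since $\dim U' \geq 1$, fix a basis $u'_1,\ldots,u'_{n-k}$ of $U'$ and define a new map $A$ by $A|_{W'} = \psi$ (unchanged), $A(u'_1) = \delta^{-1}\eta(u'_1)$, and $A(u'_j) = \eta(u'_j)$ for $j \geq 2$. By construction $\det(A) = \delta^{-1}\delta = 1$, so $A \in SL_n$. Since $A$ agrees with $A_0$ on $W'$ and the columns of $x'$ lie in $W'$, we still have $A x' = x$, completing the argument.

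The construction is essentially a routine linear-algebra extension; the one place where the hypothesis $r < n$ is indispensable is the last step, where we need at least one ``spare dimension'' in the complement $U'$ to rescale without disturbing the action on $W'$. If $r = n$ and $x, x'$ were both invertible (so $\ann = 0$), the proposition would fail precisely when $\det(x) \neq \det(x')$, which confirms that the dimension hypothesis is doing real work.
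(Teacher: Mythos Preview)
Your proof is correct. The key observation that $\ann(x)=\ann(x')$ makes the column-wise assignment $x'_i\mapsto x_i$ a well-defined linear isomorphism between column spaces is exactly what is needed, and your determinant fix via the spare complement dimension is clean.

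The paper takes a different, more computational route: it reduces each of $x$ and $x'$ to reduced row-echelon form by elementary row operations (left multiplication by $GL_n$), observes that equal annihilators force equal row spaces and hence identical echelon forms, and then notes that since there are at most $r<n$ pivot rows there is always a pivot-free row available to absorb a scalar so that the product of elementary matrices lands in $SL_n$. Your argument is more conceptual and basis-free (build the map directly, then extend and correct the determinant on the complement), while the paper's argument is algorithmic (drive both matrices to a common canonical form). Both rely on the same mechanism for the $SL_n$ condition: the inequality $r<n$ guarantees at least one ``free'' dimension where a determinant correction can be hidden without touching the data. Your closing remark about the failure when $r=n$ and $\det x\neq\det x'$ is a nice sanity check that neither approach can dispense with.
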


\noindent
{\bf Proof}: Use the row-echelon form construction. Make the
pivots as $1$ using appropriate diagonal matrices. Since $r<n$
these can be chosen from $SL_n $. $\Box $

This decides the orbit structure of $V^r $ for $r<n$. The largest
orbit $O_r $ is of course when $ann(x)=0$. Thus, when $x\in V_r $
is a mtrix of rank $r$, we see that $ann(x)=0$ is trivial. The
{\em generic} element of $V^r $ is of this form. Thus $O_r $ is
dense in $V^r $. For this reason, there are no non-constant
invariants. 

Another calculation is the computation of the closure of orbits.
Let $O,O'$ be two arbitrary orbits of $V^r $. We have:
\begin{prop}
$O'$ lies in the closure of $O$ if and only if $ann(O')\supseteq
ann(O)$.
\end{prop}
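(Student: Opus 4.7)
The plan is to handle the two directions separately, using the parametrization of $SL_n$-orbits by their annihilators established in the preceding proposition.

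For the forward direction, the argument is by continuity. Suppose $x' \in O' \cap \overline{O}$, so there exists a sequence $x_m \in O$ with $x_m \to x'$. For any $y \in \mbox{ann}(O) = \mbox{ann}(x_m)$ we have $x_m \cdot y = 0$ for every $m$, hence $x' \cdot y = 0$ in the limit, giving $y \in \mbox{ann}(x') = \mbox{ann}(O')$. Thus $\mbox{ann}(O) \subseteq \mbox{ann}(O')$.

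For the reverse direction, I would construct an explicit degeneration by a one-parameter subgroup of $SL_n$. Set $W = \mbox{ann}(O) \subseteq W' = \mbox{ann}(O')$, $k = r - \dim W$, and $k' = r - \dim W'$, so $k \geq k'$. Choose a basis $y_1, \ldots, y_r$ of $\C^r$ adapted to the flag $W \subseteq W' \subseteq \C^r$, so that $y_{k+1}, \ldots, y_r$ span $W$ and $y_{k'+1}, \ldots, y_r$ span $W'$; such a basis exists precisely because $W \subseteq W'$. By the preceding proposition the orbits $O$ and $O'$ have representatives $x, x'$ with $x(y_i) = f_i$ for $i \leq k$, $x(y_i) = 0$ for $i > k$, and $x'(y_i) = f_i$ for $i \leq k'$, $x'(y_i) = 0$ for $i > k'$, where $f_1, \ldots, f_n$ is the standard basis of $\C^n$ (this uses $k \leq r < n$). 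Consider the one-parameter family of diagonal matrices
\[
D_t = \mbox{diag}(\underbrace{1,\ldots,1}_{k'},\ \underbrace{t,\ldots,t}_{k-k'},\ 1,\ldots,1,\ t^{-(k-k')}),
\]
with the compensating factor $t^{-(k-k')}$ placed in the $n$-th slot. Since $k < n$, this slot is disjoint from the first $k$ positions where the $t$'s act, and $\det D_t = 1$, so $D_t \in SL_n$. As $t \to 0$, the columns of $D_t \cdot x$ indexed $1, \ldots, k'$ are preserved, those indexed $k'+1, \ldots, k$ are scaled by $t$ and vanish, and the $n$-th row of $x$ (on which $t^{-(k-k')}$ would act) is identically zero. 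Hence $D_t \cdot x \to x'$, which places $x' \in \overline{O}$; since $\overline{O}$ is $SL_n$-invariant, we conclude $O' \subseteq \overline{O}$.

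The main step is the construction in the reverse direction. The conceptual content is that $W \subseteq W'$ permits simultaneous canonical representatives adapted to both kernels, and the assumption $r < n$ provides the extra row needed to absorb the determinant constraint via the compensating factor $t^{-(k-k')}$. Without this slack, one could not realize the degeneration inside $SL_n$; indeed the analogous statement would fail for $GL_n$ replaced by $SL_n$ acting on $V^n$, which is exactly the boundary case excluded here.
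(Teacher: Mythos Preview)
Your proof is correct and follows essentially the same approach as the paper: both directions match, and for the reverse direction you, like the paper, produce an explicit diagonal one-parameter subgroup in $SL_n$ that scales the ``excess'' rows by $t$ and compensates on a zero row to keep determinant $1$. The only difference is that the paper reduces to the case where $\dim(\mbox{ann}(O'))=\dim(\mbox{ann}(O))+1$ (rank drops by exactly one) and uses the matrix $A(t)=\mbox{diag}(I_k,t,t^{-1},I_{n-k-2})$, chaining these steps implicitly, whereas you handle the full rank drop $k-k'$ in a single degeneration with $D_t=\mbox{diag}(1,\ldots,1,t,\ldots,t,1,\ldots,1,t^{-(k-k')})$.
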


\noindent
{\bf Proof}: One direction is clear. We prove the other direction
when $ann(O)\subseteq ann(O')$ and $dim(ann(O))=dim(ann(O'))-1$. 
Thus, we may assume that $O=[x]$ and $O'=[x']$ with both $x$ and
$x'$ such that $rowspace(x)\supseteq
rowspace(x')$ with $rank(x')=rank(x)-1$. Then, upto $SL_n $, we
may assume that the rows $x'[1], \ldots ,x'[k]$ match the first
$k$ rows of $x$, and that $x'[k+1]=x'[k+2]=\ldots =x'[n]=0$. Note
that $x[k+1]$ is non-zero and $x[k+2]$ exists and is zero. We
then construct the matrix $A(t)$ as follows:
\[ A(t)=\left[ \begin{array}{c|cc|c}
    I_{k} & 0 & 0 & 0 \\ \hline 
    0& t & 0 & 0 \\
    0 & 0 & t^{-1} & 0 \\ \hline
    0 & 0 & 0 & I_{n-k-2} \end{array} \right] \]
We see that $A(t)\in SL_n $ for all $t\neq 0$. Next, if we let
$x(t)=A(t)\cdot x$, then we see that:
\[ \lim_{t\rightarrow 0} x(t)=x' \]
This shows that $x'$ lies in the closure of $O$. $\Box $

We now look at the case when $r\geq n$. We have:
\begin{prop}
Let $r\geq n$ and $x,x'\in V^r $ be such that $ann(x)=ann(x')$. 
If (i) $rank(x)<n$
then there is an $A\in SL_n $ such that $x'=Ax$,  (ii) if $rank(x)=n$
there is a unique $A\in SL_n $ and a $\lambda \in \C^* $ such that
if $z=A\cdot x$, then the first $n-1$ rows of $z$ equal those of
$x'$ and $z[n]=\lambda x'[n]$. 
\end{prop}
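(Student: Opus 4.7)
The plan is to translate $\ann(x)=\ann(x')$ into an identification of row spaces, reduce each case to a uniquely determined linear map, and then adjust determinants using whatever slack is available. The key opening observation is that under the standard nondegenerate bilinear pairing on $\C^r$, the subspace $\ann(y)=\{c\in\C^r : yc=0\}$ is precisely the annihilator of the row span of $y$. Since annihilation is an involution on subspaces, the hypothesis $\ann(x)=\ann(x')$ is equivalent to $\text{rowspace}(x)=\text{rowspace}(x') =: R$, so in particular $\text{rank}(x)=\text{rank}(x')=\dim R$; call this common value $k$.

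For case (ii), where $k=n$, the rows of $x$ are linearly independent, hence the map $B \mapsto Bx$ from $M_n(\C)$ to $M_{n\times r}(\C)$ is injective. Since every row of $x'$ lies in $R$, there is a unique $B\in M_n(\C)$ with $x'=Bx$, and the rank condition on $x'$ forces $B\in GL_n$; set $\mu = \det B \in \C^*$. I will search for the conclusion in the form $Ax = \text{diag}(1,\ldots,1,\lambda)\,x'$; injectivity forces $A = \text{diag}(1,\ldots,1,\lambda)\, B$, and the constraint $\det A = \lambda\mu = 1$ pins down $\lambda = 1/\mu$ uniquely. The same injectivity, together with the fact that the last row of $x'$ is nonzero (which is forced by $\text{rank}(x')=n$), upgrades existence to uniqueness of the pair $(A,\lambda)$.

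For case (i), where $k<n$, I first permute rows (absorbing the $\pm1$ sign into a free parameter later) so that the top $k$ rows $M$ of $x$ form a basis of $R$, and similarly the top $k$ rows $M'$ of $x'$ form a basis of $R$. Write $M' = \tilde C M$ for a unique $\tilde C \in GL_k$, and express the bottom blocks of $x$ and $x'$ as $LM$ and $L'M' = L'\tilde C M$ for some $L,L' \in M_{(n-k)\times k}(\C)$. Since $M$ has full row rank, the equation $Ax=x'$ is equivalent to $A \begin{pmatrix} I \\ L \end{pmatrix} = \begin{pmatrix} \tilde C \\ L'\tilde C \end{pmatrix}$, which is solved by $A = \begin{pmatrix} \tilde C & 0 \\ L'\tilde C - DL & D \end{pmatrix}$ for any choice of $D \in GL_{n-k}$. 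The dimensional slack $n-k \geq 1$ lets me pick $D$ with $\det D = (\det \tilde C)^{-1}$ (and absorb any sign from the initial row permutation), producing $A \in SL_n$ as required.

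I do not anticipate any serious obstacle beyond the bookkeeping in case (i), which is the only delicate step. In fact, it makes precise the intuition that the mandatory scaling correction $\lambda$ in case (ii) arises exactly because $SL_n$ offers no dimensional slack once $x$ has full row rank, whereas in case (i) the zero-row block provides enough room to kill the determinant discrepancy without any scaling of $x'$.
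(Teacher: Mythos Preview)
Your argument is correct. The paper does not give a proof at all---it simply states ``The proof is easy''---so your writeup is a faithful elaboration of the intended elementary argument rather than a different route. The opening reduction $\ann(x)=\ann(x')\iff\text{rowspace}(x)=\text{rowspace}(x')$, the uniqueness-of-$B$ argument in case~(ii) via injectivity of $B\mapsto Bx$, and the block-triangular construction with free $D\in GL_{n-k}$ in case~(i) are exactly what one would expect here, and your handling of the permutation signs and the $\det D=(\det\tilde C)^{-1}$ adjustment is clean.
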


The proof is easy.
Let $x$ be  matrix in $V^r $ of rank $n$, and $C$ be a subset of
$[r]=\{ 1,2,\ldots ,r\}$ such that $det(x[C])\neq 0$. 

\begin{prop}
Let $x$ be as above. Then $O(x)$, the orbit of $x$, equals all
points $x'\in V^r $ such that (i) $ann(x)=ann(x')$ and (ii)
$det(X'[C])=det(x[C])$. The set of all rank $n$ points in $V^r $
is dense in $V^r $. 
\end{prop}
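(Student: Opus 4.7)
The plan is to prove the orbit equality by establishing the two set inclusions, handling the harder direction via the previous proposition, and then to prove density by a standard Zariski-open argument. The easy inclusion $O(x) \subseteq \{x' : ann(x') = ann(x),\ det(x'[C]) = det(x[C])\}$ I would dispatch first: if $x' = A \cdot x$ for some $A \in SL_n$, invertibility of $A$ yields $x'y = 0$ iff $xy = 0$, so $ann(x) = ann(x')$. Restricting to the columns indexed by $C$ commutes with left multiplication, so $x'[C] = A \cdot x[C]$ as $n \times n$ matrices, whence $det(x'[C]) = det(A) \cdot det(x[C]) = det(x[C])$ because $det(A) = 1$.

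For the reverse inclusion, given $x'$ satisfying (i) and (ii), I would first note that $ann(x') = ann(x)$ forces the row spaces of $x$ and $x'$ in $\C^r$ to coincide, so $rank(x') = n$ as well. Part (ii) of the previous proposition then produces a unique $A \in SL_n$ and $\lambda \in \C^*$ such that $z := A \cdot x$ satisfies $z[i] = x'[i]$ for $1 \leq i \leq n-1$ and $z[n] = \lambda \cdot x'[n]$. The heart of the argument is to show $\lambda = 1$, which I would do by computing $det(z[C])$ two ways: on one hand $z[C] = A \cdot x[C]$ gives $det(z[C]) = det(A) \cdot det(x[C]) = det(x[C])$; on the other hand, the row structure of $z$ together with multilinearity of the determinant in its rows gives $det(z[C]) = \lambda \cdot det(x'[C])$. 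Combining these with the hypothesis $det(x'[C]) = det(x[C]) \neq 0$ forces $\lambda = 1$, hence $z = x'$ and so $x' = A \cdot x \in O(x)$.

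For density, the set of rank-$n$ points in $V^r$ is the complement inside $V^r$ of the common zero locus of the $\binom{r}{n}$ maximal minors of the generic $n \times r$ matrix, and is therefore Zariski-open. It is nonempty since, for instance, the block matrix $[\,I_n \mid 0\,]$ has rank $n$ when $r \geq n$. Since $V^r \cong \C^{nr}$ is irreducible, every nonempty Zariski-open subset is dense. The only mildly subtle step in the whole argument is the two-way computation of $det(z[C])$ that pins down $\lambda$; the rest is a direct application of the previous proposition together with standard linear-algebraic and algebro-geometric facts.
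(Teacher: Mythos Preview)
Your proof is correct and is exactly the argument the paper has in mind: the paper's own ``proof'' is simply the sentence ``The proof is easy,'' relying on the reader to carry out precisely the computation you wrote, namely invoking part (ii) of the preceding proposition and then using the determinant hypothesis on the column set $C$ to force $\lambda = 1$. Your density argument via the nonempty Zariski-open locus of full-rank matrices is likewise the standard one.
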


The proof is easy. 
\begin{prop}
The orbit $O(x)$ as above is closed.
\end{prop}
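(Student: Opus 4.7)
The plan is to use the characterization of $O(x)$ given by the preceding proposition, namely $O(x) = \{ x' \in V^r : \mbox{ann}(x) = \mbox{ann}(x') \mbox{ and } \det(x'[C]) = \det(x[C])\}$, and show that both defining conditions are preserved under limits. So I take an arbitrary convergent sequence $x_k \to x'$ with each $x_k \in O(x)$ and verify that $x'$ still satisfies (i) and (ii).

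Condition (ii) is easy: the determinant is a polynomial in the matrix entries, hence continuous, so $\det(x'[C]) = \lim_k \det(x_k[C]) = \det(x[C])$. This is the key step, because $\det(x[C]) \neq 0$ by hypothesis; consequently $x'[C]$ is invertible, which forces $\mbox{rank}(x') = n$ and in particular $\dim \mbox{ann}(x') = r - n = \dim \mbox{ann}(x)$.

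For condition (i), I would argue that the annihilator behaves lower-semicontinuously under limits: if $y \in \mbox{ann}(x) = \mbox{ann}(x_k)$ for all $k$, then $x_k \cdot y = 0$ for all $k$, so by continuity $x' \cdot y = \lim_k x_k \cdot y = 0$, giving $y \in \mbox{ann}(x')$. Thus $\mbox{ann}(x) \subseteq \mbox{ann}(x')$, and combining this inclusion with the equality of dimensions established above yields $\mbox{ann}(x) = \mbox{ann}(x')$. Hence $x' \in O(x)$, proving closedness.

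There is no real obstacle here; the only point one must be careful about is that annihilators can only \emph{grow} in the limit (rank is lower-semicontinuous), so condition (i) by itself would not survive passage to the limit — it is condition (ii), which pins down the rank, that rules out this degeneration. This contrast is what distinguishes the full-rank case $r \geq n$, $\mbox{rank}(x) = n$ (closed orbit) from the lower-rank orbits analyzed earlier, where one could degenerate by sending the matrix $A(t) \in SL_n$ to a non-invertible limit and thereby produce boundary points with strictly larger annihilator.
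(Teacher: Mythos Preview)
Your proof is correct and follows essentially the same logic as the paper's: both observe that the inclusion $\mbox{ann}(x) \subseteq \mbox{ann}(x')$ is the closed condition, and that condition (ii) on the determinant forces $\mbox{rank}(x') = n$, upgrading the inclusion to equality. The only difference is presentational: the paper phrases this algebraically, writing down the defining polynomial equations $x' y_s = 0$ (for generators $y_1,\ldots,y_{r-n}$ of $\mbox{ann}(x)$) together with $\det(x'[C]) = \det(x[C])$ to exhibit $O(x)$ directly as a Zariski-closed set, whereas you argue via sequences and continuity in the Euclidean topology. Since orbits are constructible these notions of closedness agree, but the paper's formulation is slightly more in keeping with the algebraic-geometric setting.
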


\noindent
{\bf Proof}: Notice that if $A\in SL_n $ and $z=Ax$ then
$det(x[C])=det(z[C])$. We may rewrite condition (i) above as
$ann(x')\supseteq ann(x)$. Condition (ii) ensures that
$rank(x')=n$ and that condition (i) holds with equality. Thus is
$y_1 ,\ldots ,y_{r-n}$ are column vectors generating $ann(x)$,
then the equations $x'y_s =0$ and $det(x'[C])=det(x[C])$
determines the orbit $O(x)$. Thus $O(x)$ is the zero-set of some
algebraic equations and thus is closed. $\Box $

\begin{prop}
Let $x\in V^r $ be such that $rank(x)<n$. Then (i) $O(x)$ equals
those $x'$ such that $ann(x)=ann(x')$, (ii) $O(x)$ is not closed
and its closure $\overline{O(x)}$ equals points $x'$ such that
$ann(x')\supseteq ann(x)$. 
\end{prop}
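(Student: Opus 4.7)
Part (i) is immediate from the preceding proposition: the inclusion $O(x)\subseteq\{x':\ann(x')=\ann(x)\}$ is clear since $(Ax)\cdot y=A(x\cdot y)$ for $A\in SL_n$, and the reverse inclusion is exactly case (i) of the proposition stated just before. So I will focus on (ii).

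For the closure containment $\overline{O(x)}\subseteq\{x':\ann(x')\supseteq\ann(x)\}$, given any sequence $A_m\cdot x\to x'$ and any $y\in\ann(x)$, we have $(A_m x)\cdot y=A_m(x\cdot y)=0$ for every $m$, so passing to the limit gives $x'\cdot y=0$, i.e. $y\in\ann(x')$. Next I show $O(x)$ is not closed: since $\rank(x)<n$, after replacing $x$ by a suitable $Sx$ with $S\in SL_n$ (reducing to row-echelon form and adjusting the pivots by diagonal matrices of determinant $1$, using that a zero row is available), I may assume the last row $x[n]$ is zero while some earlier row $x[i]$ is nonzero. Take
\[ A(t)=\mathrm{diag}(1,\ldots,1,t,1,\ldots,1,t^{-1})\in SL_n,\]
with the $t$ placed in slot $i$ and the $t^{-1}$ in slot $n$. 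Then $A(t)\cdot x$ has its $i$-th row scaled by $t$ and the zero last row preserved; as $t\to 0$ this tends to a matrix $x^\circ$ with $\rank(x^\circ)=\rank(x)-1$, and so $\ann(x^\circ)\supsetneq\ann(x)$. Hence $x^\circ\in\overline{O(x)}\setminus O(x)$.

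The main step is the reverse inclusion: every $x'$ with $\ann(x')\supseteq\ann(x)$ lies in $\overline{O(x)}$. I would proceed by induction on $\dim\ann(x')-\dim\ann(x)$, i.e.\ on $\rank(x)-\rank(x')$. The base case of zero is handled by (i). For the inductive step it suffices to produce, from any $x$ of rank $k<n$ and any codimension-one specialization $x'$ with $\rank(x')=k-1$ and $\ann(x')\supset\ann(x)$, a one-parameter family $A(t)\cdot x$ with $\lim_{t\to 0}A(t)\cdot x$ lying in $O(x')$; then the result follows by iterating and using (i) to pass between orbits with identical annihilators. Concretely, after acting by $SL_n$ on $x$ I may arrange that rows $1,\ldots,k-1$ of $x$ equal rows $1,\ldots,k-1$ of (an $SL_n$-translate of) $x'$, that row $k$ of $x$ is nonzero, and that row $n$ of $x$ is zero (using $k<n$). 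Then the same $\mathrm{diag}(1,\ldots,1,t,1,\ldots,1,t^{-1})$ trick with $t$ in slot $k$ and $t^{-1}$ in slot $n$ kills row $k$ in the limit while keeping the other rows fixed, producing the desired $x'$ up to an element of $SL_n$.

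The main obstacle is the bookkeeping in the inductive step: I must confirm that the $SL_n$-normalization which aligns the top $k-1$ rows of $x$ with those of $x'$ (while keeping a zero row available at position $n$) can always be carried out, and that the scalar constant $\lambda$ that appeared in part (ii) of the preceding $\rank(x)=n$ proposition does not obstruct the argument here; it does not, precisely because $\rank(x)<n$ gives an extra degree of freedom (a zero row whose $t^{-1}$-scaling costs nothing) that was unavailable in the maximal-rank case. Once this degeneration is in hand, combining it with (i) via induction on the codimension of $\ann(x)$ inside $\ann(x')$ completes the identification of $\overline{O(x)}$.
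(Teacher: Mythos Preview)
Your proposal is correct and follows essentially the same route the paper takes. The paper does not write out a proof for this particular proposition, but the argument it gives for the analogous $r<n$ closure proposition is exactly your template: reduce by induction to a codimension-one annihilator jump, normalize by $SL_n$ so that the first $k-1$ rows of $x$ agree with those of $x'$, the $k$-th row of $x$ is nonzero, and a later row of $x$ is zero, then use a two-entry diagonal one-parameter subgroup $\mathrm{diag}(\ldots,t,\ldots,t^{-1},\ldots)$ to kill the $k$-th row in the limit. The only cosmetic difference is that the paper places the $t^{-1}$ in slot $k+2$ (the next available zero row) rather than in slot $n$; either choice works since $\rank(x)<n$ guarantees a zero row somewhere past position $k$.
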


We now move to the computation of invariants. The space $\C[V^r ]$
equals the space of all polynomials in the variable matrix
$X=(X_{ij})$ where $i=1,\ldots ,n$ and $j=1,\ldots ,r$. It is
clear that for any set $C$ of $n$ columns of $X$, we see that
$det(X[C])$ is an invariant. We will denote $C$ as $C=c_1 <c_2
<\ldots <c_n $ and $det(X[C])$ as $p_C $, the $C$-th {\bf Plucker}
coordinate. We aim to show now that these are the only invariants.

Let $C_0 =\{ 1<2<\ldots <n\}$ and 
$W\subseteq V^r $ be the space of all matrices $x\in V^r $
such that $x[C_0 ]=diag(1,\ldots ,1,\lambda )$. 
Let $W'$ be those elements of $W$ for which $\lambda \neq 0$. 

\begin{lemma}
Let $W'$ be as above. (i) If $x\in W'$, then $O(x) \cap W'=x$, (ii) 
for any $x\in V^r $ such that $det(x[C_0 ])\neq 0$, there is 
a unique $A\in SL_n $ such that $Ax\in W'$. 
\end{lemma}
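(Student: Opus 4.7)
Both parts reduce to a transparent matrix computation on the block $x[C_0]$, where $C_0 = \{1 < 2 < \cdots < n\}$. The action of $A \in SL_n$ on $x$ sends the submatrix $x[C_0]$ to $A \cdot x[C_0]$, so membership in $W'$ is entirely controlled by what $A$ does to this $n \times n$ block, which (in $W'$) has the prescribed form $\mathrm{diag}(1,\ldots,1,\lambda)$ with $\lambda \neq 0$.

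For (ii), let $M = x[C_0]$, which is invertible by hypothesis. We want $A \in SL_n$ with $AM = \mathrm{diag}(1,\ldots,1,\mu)$ for some (necessarily nonzero) $\mu$. Solving gives $A = \mathrm{diag}(1,\ldots,1,\mu) M^{-1}$, so $\det(A) = \mu / \det(M)$. The unimodularity constraint $\det(A) = 1$ forces $\mu = \det(M) = \det(x[C_0])$, which is nonzero; this choice in turn determines $A$ uniquely. Both existence and uniqueness in (ii) follow.

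For (i), suppose $x \in W'$ with $x[C_0] = \mathrm{diag}(1,\ldots,1,\lambda)$ and $A \in SL_n$ satisfies $Ax \in W'$, say with $(Ax)[C_0] = \mathrm{diag}(1,\ldots,1,\lambda')$. Reading the equation $A \cdot \mathrm{diag}(1,\ldots,1,\lambda) = \mathrm{diag}(1,\ldots,1,\lambda')$ column by column shows that $Ae_i = e_i$ for $i = 1,\ldots,n-1$ and $A e_n = (\lambda'/\lambda) e_n$. Hence $A = \mathrm{diag}(1,\ldots,1,\lambda'/\lambda)$, and imposing $\det(A) = 1$ gives $\lambda' = \lambda$, so $A = I$. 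Therefore $Ax = x$, and $O(x) \cap W' = \{x\}$.

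No step is truly an obstacle here: the whole proof is the observation that specifying the form of $x[C_0]$ together with $\det(A) = 1$ pins $A$ down uniquely. The only subtlety worth flagging is that $W'$ constrains only the $C_0$-block of $x$, so once $A$ is forced to be the identity by its action on $x[C_0]$, we automatically get $Ax = x$ on all $r$ columns, without any separate argument about the remaining columns.
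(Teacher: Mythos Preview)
Your proof is correct. The paper states this lemma without proof, evidently regarding it as a routine matrix calculation; your argument supplies exactly those details via the natural approach---solving for $A$ on the $C_0$-block and letting the determinant-one constraint pin down the last diagonal entry---which is the only reasonable route here.
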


Let us call $Z' \subseteq V^r $ as those $x$ such that $det(x[C_0
])\neq 0$. We then have the projection map 
\[ \pi :Z' \rightarrow W' \]
given by the above lemma. Note that $Z'$ is $SL_n$-invariant: if
$x\in Z'$ and $A\in SL_n $ then $A\cdot x \in Z'$ as well.

The ring $\C[Z']$ of regular functions on $Z'$ is 
precisely $\C[X]_{det([X[C_0 ])} $, the localization of $\C [X]$
at $det(X[C_0 ])$. 

We may parametrize $W' $ as:
\[ W=\left[ 
\begin{array}{ccc|ccc}
1 & \ldots & 0 & w_{1,n+1} & \ldots & w_{1,r} \\
\vdots & & \vdots & \vdots & & \vdots \\
0 & \ldots & w_{n,n} & w_{n,n+1} & \ldots & w_{n,r} \end{array}
\right] \]
Let 
\[ W=\{ W_{i,j} | i=1,\ldots ,n , \: j=n+1,\ldots ,r \} \cup \{
W_{n,n} \} \]
be a set of $(n-r)\cdot r +1 $ variables. 
The ring $\C[W']$ of regular function on $W'$ is precisely
$\C[W]_{W_{n,n}}$. 

Let $x\in Z'$ be an arbitrary point and $A=x[C_0 ]$. 
Let $C_{i,j}$ be the set $C_0 -i+j $. The map $\pi $ is given by:
\[ \begin{array}{rcl}
\pi (x)_{n,n}&=&det(A)\\
\pi (x)_{i,j}&=& \left\{ \begin{array}{ll}
			 det(x[C_{i,j} ])/det(A) & \: for \: i\neq
                                                 n \\
			 det(x[C_{i,j} ]) & \: otherwise
			 \end{array} \right. \\
			 \end{array} \]
The map $\pi $ causes the map:
\[ \pi^* : \C[W'] \rightarrow \C[Z'] \] 
given by:
\[ \begin{array}{rcl}
\pi^* (W_{n,n})&=& det(X[C_0 ]) \\
\pi^* (W_{i,j})&=& \left\{ \begin{array}{ll}
			 det(X[C_{i,j} ])/det(X[C_0 ]) & \: for \: i\neq
                                                 n \\
			 det(X[C_{i,j} ]) & \: otherwise
			 \end{array} \right. \\
			 \end{array} \]

Now we note that $Z'$ is dense in $V^r $. Let $p\in \C[X]^{SL_n }$
be an invariant. Clearly $p$ restricted to $W'$ defines an element
$p_{W'} $ of $\C[W']$. This then extends to $Z'$ via $\pi ^* $.
Clearly $\pi^* (p_{W'})$ must match $p$ on $Z'$. Thus we have that
$p$ is a polynomial in $det(X[C_{i,j}])$ possibly localized at
$det(X[C_0 ])$. 

Note that for a general $C$, $det(X[C])$ is already expressible in
$det(X[C_0 ])$ and $det(X[C_{i,j}])$. 

\section{Other Representations}
We discuss two other representations:

\noindent
{\bf The Conjugate Action}: 
Let ${\cal M}$ be the space of all $n\times n$ matrices with complex
entries. we define the action of $A\in SL_n $ on ${\cal M}$ as 
follows. For an $M\in {\cal M}$, $A$ acts on it by conjugation.   
\[ A\cdot M= AMA^{-1} \]
Note that $dim_{\C}({\cal M})=n^2 $. Let $X=\{ X_{ij}|1\leq
i,j\leq n \}$ be the dual space to ${\cal M}$. The invariants for
this action are $Tr(X), \ldots ,Tr(X^i ) ,\ldots ,Tr(X^n )$. 
That these are invariants ic clear, for $Tr(AM^i A^{-1})=Tr(M)$
for any $A,M$. 

Also note that $Tr(X)=X_{11}+\ldots +X_{nn} $ is linear in the
$X$'s. This indicates an invariant hyperplane in ${\cal M}$. This
is precisely the {\em trace zero matrices}. Thus ${\cal M}={\cal
M}_0 \oplus{\cal M}_1 $ where ${\cal M}_0 $ are all matrices $M$
such that $Tr(M)=0$. The one-dimensional complementary space 
${\cal M}_1 $ is
composed of multiples of the identity matrix. 

The orbits are parametrized by the {\bf Jordan canonical form}
$JCF(M)$. In other words, if $JCF(M)=JCF(M')$ them $M'\in O(M)$.
Furthermore, if $JCF(M)$ is diagonal then the orbit of $M$ is {\bf
closed} in ${\cal M}$.

\noindent
{\bf The Space of Forms}: Let $V$ be a complex vector space of
dimension $n$ and$V^* $ its dual. Let $X_1 ,\ldots ,X_n $ be the
dual basis. The space $Sym^d (V^* )$ consists 
of degree $d$ forms are formal linear
combinations of the monomials $X_1^{i_1 }\ldots X_n^{i_n }$ with
$i_1 +\ldots +i_n =d$. The typical form may be written as:
\[ f(X)=\sum_{i_1 +\ldots +i_n =d} B_{i_1 \ldots i_n } X_1^{i_1}
\ldots X_n^{i_n } \]
Let $A\in SL_n $ be such that 
\[ A^{-1}=\left[ \begin{array}{ccc} 
a_{11} & \ldots &a_{n1} \\
\vdots & & \vdots \\
a_{1n} & \ldots & a_{nn} \end{array} \right] \]
The space $Sym^d (V^* )$ is the formal linear combination of the
generaic coefficients $\{ B_{\overline{i}} | |\overline{i}|=d \}$.
The action of $A$ is obtained by substituting 
\[ X_i \rightarrow \sum_j a_{ij} X_i \]
in $f(X)$ and recoomputing the coefficients. we illustrate this
for $n=2$ and $d=2$. Then, the generic form is given by:
\[ f(X_1 ,X_2 )=B_{20}X_1^2 +B_{11}X_1^1 X_2^1 +B_{02} X_2^2 \]
Upon substitution, we get $f(a_{11}X_1 +a_{12}X_2 ,a_{21}X_1
+a_{22} X_2 )$:
Thus the new coefficients are:
\[ \begin{array}{cclll}
B_{20} & \rightarrow & B_{20} a_{11}^2 + &B_{11} a_{11}a_{21} +&B_{02}
a_{21}^2 \\
B_{11} & \rightarrow & 
B_{20} 2a_{11}a_{12} +&B_{11}(a_{11}a_{22}+ a_{12}a_{21}) +&B_{02} 
2a_{21}a_{22} \\
B_{02} & \rightarrow & B_{20} a_{12}^2 + &B_{11}a_{12}a_{22} +
&B_{02}a_{22}^2  \\
\end{array} \]

We thus see that the variables $B$ move linearly, with
coefficients as homogeneous polynomials of degree $2$ in the
coefficients of the group elements. In the above case, we know
that the {\em discriminant} $B_{11}^2 -4B_{20} B_{02}$ is an
invariant, 

These spaces have been the subject of intense analysis and their
study by Gordan, Hilbert and other heralded the beginning of
commutative algebra and invariant theory.

\section{Full Reducibility}

Let $W$ be a representation of $SL_n $ and let $Z\subseteq W$ be
an invariant subspace. The reducibility question is whether there
exists a complement $Z'$ such that $W=Z\oplus Z'$ and $Z'$ is $SL_n
$-invariant as well. 

The above result is indeed true although we will not prove it
here. There are many proofs known, each with a specific objective
in a specific situation, and each extremely instructive. 

The simplest is possibly
through the Weyl Unitary trick. In this, a suitable {\bf compact}
subgroup $U\subseteq SL_n $ is chosen. The theory of compact
groups is much like that of finite groups and a complement $Z'$ may
easily be found. It is then shown that $Z'$ is  $SL_n$-invariant.

The second attack is through showing the fill reducibility of the
module $\otimes^d V$, the $d$-th tensor product representation of
$V$. This goes through the construction of the commutator of the
same module regarded as a {\em right} $S_d $-module, with the
symmetric group permutating the contents of the $d$ positions. The
full reducibility then follows from Maschke's theorem and that of
the finite group $S_d $. 

The oldest approach was through the construction of a symbolic
reynold's operator, which is the {\bf Cayley} $\Omega $-process.
Let $\C[W]$ be the ring of polynomial functions on the space $W$.
The operator $\Omega $ is a map:
\[ \Omega: \C[W] \rightarrow \C[W]^{SL_n } \]
such that if $p$ is an invariant then $\Omega (p\cdot p')=p \cdot
\Omega
(p')$. The definition of $\Omega $ is nothing but {\bf curious}.
 
\chapter{Invariant Theory}

\noindent {\em References:} \cite{FulH,nagata}

\section{Algebraic Groups and affine actions}

An algebraic group (over $\C$) is an affine variety $G$ equipped
with (i) the group product, i.e., a morphism of algebraic
varieties $\cdot: G\times G \rightarrow G$ which is associative, 
i.e. $(g_1 \cdot g_2 )\cdot g_3 =g_1 \cdot (g_2 \cdot g_3 )$ for
all $g_1 ,g_2 ,g_3 \in G$, (ii) and
algebraic inverse $i:G \rightarrow G$, i.e, $g \cdot
i(g)=i(g)\cdot g=1_G $, where $1_G $ is (iii) a special element
$1_G \in G$, which functions as the identity, i.e., $1_G \cdot g
=g \cdot 1_G =g$ for all $g\in G$. 
Let $\C [G]$ be 
the ring of regular functions on $G$. The requirements (i)-(iii)
above are via morphisms of algebraic varities, and thus the group
product and the inverse must be defined algebraically. Thus the
product $\cdot :G \times G \rightarrow G$ results in a morphism of
$\C$-algebras $\cdot^* :\C[G] \rightarrow \C[G] \otimes \C[G] $
and the inverse into another morphism $i^* :\C[G]\rightarrow
\C[G]$. 

The essential example is obviously $SL_n $. Clearly for $G=SL_n $,
we have $\C[G]=\C[X]/(det(X)-1)$, where $X$ is the $n\times n$
indeterminate matrix. The morphism $\cdot^* $ and $i^*$ are clearly:
\[ \begin{array}{rcl}
  \cdot^* ( X_{ij}) &=& \sum_k X_{ik} \otimes X_{kj} \\
   i^* (X_{ij})&=&det(M_{ji}) \end{array} \]
where $M_{ji}$ is the corresponding minor of $X$. 

Next, let $Z$ be another affine variety with $\C[Z]$ as its ring
of regular functions. We say that $Z$ is a {\bf $G$-variety} if there is
a morphism $\mu: G \times Z \rightarrow Z$ which is a group action.
Thus, not only must $G$ act on $Z$, it must do so {\em
algebraically}. 
This $\mu $ induces the map $\mu^* $:
\[ \mu^* : \C[Z] \rightarrow \C[G] \otimes \C[Z] \]
Thus every function $f\in \C[Z]$ goes to a finite sum:
\[ \mu^* (f)=\sum_{i=1}^k h_i \otimes f_i \]
where $f_i \in \C[Z]$ and $h_i \in \C[G]$ for all $i$. 

To continue with our example, consider $SL_2 $ and $Z=Sym^d (V^* )$.
$\C[Z]=\C[B_{20}, B_{11},B_{02} ]$ and
$\C[G]=\C[A_{11},A_{12},A_{21},A_{22}]/(A_{11}A_{22}-A_{21}A_{12}-1)$.
We have for example:
\[\mu^* (B_{20}) =A_{11}^2 \otimes B_{20} +A_{11}A_{21} \otimes
B_{11} +A_{21}^2 \otimes B_{02} \]

Let $\mu: G \times Z \rightarrow Z$ and $\mu' : G \times Z'
\rightarrow Z'$ be two $G$-varities and let $\phi :Z \rightarrow
Z'$ be a morphism. We say that $\phi $ is {\bf $G$-equivariant} if
$\phi (\mu (g,z))=\mu'(g, \phi (z))$ for all $g\in G$ and $z\in
Z$. Thus $\phi $ commutes with the action of $G$. 

\section{Orbits and Invariants}
Every $g\in G$ induces an algebraic bijection on $Z$ by
restricting the map $\mu :G \times Z \rightarrow Z$ to $g$. We
denote this map by $\mu (g)$ and call it {\bf translation by $g$}. 
The map:
\[ \mu (g) :Z \rightarrow Z \]
induces the isomorphism of $\C$-algebras:
\[ \mu (g)^* : \C[Z] \rightarrow \C[Z] \]
Given any function $f\in \C[Z] $, $\mu (g)^* (f) \in \C[Z]$ is the 
{\em translated} function and denotes the action of $G$ on $f$. 
This makes $\C[Z]$ into a $G$-module.
If $\phi :Z \rightarrow Z'$ is a $G$-equivariant map, then the map
$\phi^* :\C [Z'] \rightarrow \C[Z]$ is also $G$-equivariant, for
the $G$ action on $\C[Z]$ and $\C[Z']$ as above.

We next examine the equation: 
\[ \mu^* (f)=\sum_{i=1}^k h_i \otimes f_i \]
where $f_i \in \C[Z]$ and $h_i \in \C[G]$ for all $i$. 
For a fixed $g$, we see that
\[ \mu (g)^* (f)=\sum_{i=1}^k h_i (g) \otimes f_i \]
Thus we see that every translate of $f$ lies in the
$k$-dimensional vector space $\C\cdot \{ f_1 ,\ldots ,f_k \}$. 
Let
\[ M(f)=\C \cdot\{ \mu (g)^* (f) |g \in G\} \subseteq \C \cdot \{
f_1 ,\ldots ,f_k \} \]
Clearly, $M(f)$ is a $G$-invariant subspace of $\C [Z]$.  This may
be generalized:
\begin{prop}
Let $S=\{ s_1 ,\ldots ,s_m \}$ be a finite subset of $\C[Z]$. Then there
is a finite-dimensional $G$-invariant subspace $M(S)$ of $\C[Z]$
containing $s_1 ,\ldots ,s_m$. 
\end{prop}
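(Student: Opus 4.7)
The plan is to reduce the general case to the single-element case already established in the preceding discussion, and then take a finite sum.

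First, I would invoke the construction made for a single $f \in \C[Z]$. For each $s_i \in S$, since $\mu : G \times Z \to Z$ is a morphism of affine varieties, the comorphism $\mu^*(s_i) \in \C[G] \otimes \C[Z]$ is a finite sum
\[
\mu^*(s_i) = \sum_{j=1}^{k_i} h_j^{(i)} \otimes f_j^{(i)},
\]
with $h_j^{(i)} \in \C[G]$ and $f_j^{(i)} \in \C[Z]$. Specializing at any $g \in G$ gives $\mu(g)^*(s_i) = \sum_j h_j^{(i)}(g)\, f_j^{(i)}$, so every $G$-translate of $s_i$ lies in the finite-dimensional subspace spanned by $\{f_1^{(i)}, \ldots, f_{k_i}^{(i)}\}$. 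Hence
\[
M(s_i) := \C \cdot \{\mu(g)^*(s_i) \mid g \in G\}
\]
is a finite-dimensional subspace of $\C[Z]$ containing $s_i$, and it is $G$-invariant because the set of translates of $s_i$ is stable under further translation (using associativity of the action, $\mu(g')^*\mu(g)^*(s_i) = \mu(gg')^*(s_i)$).

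Second, I would define
\[
M(S) := M(s_1) + M(s_2) + \cdots + M(s_m) \subseteq \C[Z].
\]
As a finite sum of finite-dimensional subspaces, $M(S)$ is itself finite-dimensional, with $\dim M(S) \le \sum_i \dim M(s_i) \le \sum_i k_i$. Since each $M(s_i)$ is $G$-invariant and the property of being $G$-invariant is preserved under finite sums of subspaces, $M(S)$ is $G$-invariant. Clearly $s_i \in M(s_i) \subseteq M(S)$ for every $i$, so $M(S)$ contains all of $S$, which is what we wanted.

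There is no real obstacle here once the single-element case is in hand; the only point that requires the hypothesis that $G$ acts algebraically (rather than merely abstractly) is the finiteness of the sum defining $\mu^*(s_i)$, which is exactly what guarantees $\dim M(s_i) < \infty$. Abstractly, this is just the statement that in an algebraic $G$-module $\C[Z]$, every element is contained in a finite-dimensional $G$-submodule (the so-called local finiteness of rational $G$-modules), and the lemma extends this from one element to finitely many by taking sums.
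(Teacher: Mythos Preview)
Your proof is correct and is exactly the intended argument: the paper establishes the single-element case $M(f)$ in the paragraph preceding the proposition and then simply states the generalization without further proof, so your reduction to finitely many $M(s_i)$'s and taking their sum is precisely what is being left to the reader.
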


Next, let us consider $\mu :G \times Z \rightarrow Z$ and fix a
$z\in Z$. We get the map $\mu_z :G \rightarrow X$. Since $G$ is an
affine variety, we see that the image $\mu_z (G)$ is a {\bf
constructible set}, whose closure is an affine variety. The image
is precisely the orbit $O(z)\subseteq Z$. The closure of $O(z)$
will be denoted by $\Delta (z)$. 

If $O(z)=\Delta (z)$, then we have the $G$-equivariant embedding 
$i_z :O(z) \rightarrow Z$. This gives us the map $i_z^*
:\C[Z]\rightarrow \C[O(z)]$ which is a surjection. Thus there is
an ideal $I(z)=ker(i_z^* )$ such that $\C[O(z)]\cong \C[Z]/I(z)$.
Since the map $i_z $ is $G$-equivariant, $I(z)$ is a $G$-submodule
of $\C[Z]$ and is the ideal of definition of the orbit $O(z)$. 

In general, if $I\subseteq \C[Z]$ is an ideal which is
$G$-invariant, then the variety of $I$ is also $G$-invariant and
is the union of orbits.

The second construction that we make is that of the {\bf quotient}
$Z/G$. Since $\C[Z]$ is a $G$-module, we examine $\C[Z]^G $, the
subring of $G$-invariant functions in $\C[Z]$. We define $Z/G$ as
the spectrum $Spec(\C[Z]^G )$. The inclusion $\C[Z]^G \rightarrow
\C[Z]$ gives us the quotient map:
\[ \pi :Z \rightarrow Z/G \]

\begin{ex}
Let us consider $Z=Sym^2 (V^* )\cong \C^3 $ where $V$ is the standard
representation of $G=SL_2 $. As we have seen,
$\C[Z]=\C[B_{20},B_{11},B_{02} ]$. There is only one invariant
$\delta=B_{11}^2 -4B_{02}B_{20} $. Thus $\C[Z]^G =\C[\delta ]$.
Thus $Z/G$ is precisely $Spec(\C[\delta ])=\C$, the complex
plane. The map $\pi $ is executed as follows: given a form
$aX_1^2 +bX_1 X_2 +cX_2^2 \equiv (a,b,c)\in Z$, we eveluate the
invariant $\delta $ at the point $(a,b,c)$. Thus $\pi :\C^3
\rightarrow \C$ is given by:
\[ \pi (a,b,c)=b^2 -4ac \]
Clearly, if $f,f'\in Z$ such that $f=g\cdot f'$ for $g \in SL_2 $
then $\delta (f)=\delta (f')$. We look at the converse: if $f,f'$
are such that $\delta (f)=\delta (f')$ then is it that $f\in
O(f')$? We begin with $f=aX_1^2 +bX_1 X_2 +cX_2^2 $. We assume for
the moment that $a\neq 0$. If that is the case, we make the
substitution $X_1 \rightarrow X_1 -\alpha X_2 $ and $X_2
\rightarrow X_2 $. Note that this transformation is unimodular for
all $\alpha $. This transforms $f$ to:
\[ a(X_1 -\alpha X_2 )^2 +b(X_1 -\alpha X_2 )X_2 +cX_2^2 \]
The coefficient of $X_1 X_2 $ is $-2a\alpha +b $. Thus by choosing
$\alpha $ as $b/2a$ we see that $f$ is transformed into $a''X_1^2
-c''X_2^2 $ for some $a'',c''$. By a similar token, even if $a=0$
one may do a similar transform. Thus in general, if $f$ is not the
zero form, there is a point
in $O(f)$ which is of the form $aX_1^2 -cX_2^2 $. Thus, we may
assume that both $f$ and $f'$ are in this form.  We can simplify
the form further by a diagonal element of $SL_2 $ to put both $f$
and $f'$ as $X_1^2 -cX_2^2 $ and $X_1^2 -c'X_2^2 $. 
It is now clear that $\delta (f)=\delta(f')$ implies that $c=c'$.
Thus the general answer is that if $f,f'\neq 0$ and $\delta
(f)=\delta (f')$ then $f\in O(f')$. 

Next, let us examine the form $0$. we see that $\delta (0)=0$.
Thus we see that (i) for any point $d\in \C$, if $d\neq 0$, then
$\pi^{-1} (d)$ consists of a single orbit, (ii) $\pi^{-1}(0)$
consists of two orbits, $O(0)$ and $O(X_1^2 )$, the perfect
square, (iii) the orbits $O(f)$ are closed when $\delta (f)\neq
0$, (iv) $O(X_1^2 )$ is not closed. Its closure includes the closed
orbit $0$.
\end{ex}

The above example illustrates the utility of contructing the
quotient $Z/G$ as a variety parametrizing closed orbits albeit 
with some deviant points. In the example above, the discrepancy was
at the point $0$, wherein the pre-image is closed but
decomposes into two orbits. 

We state the all-important theorem linking a space and its
quotient in the restricted case when $G$ is finite and $Z$ is a
finite-dimensional $G$-module. Thus $\C[Z]$ is a polynomial ring
and the action of $G$ is homogeneous.

\begin{theorem} \label{thm:finite}
Let $G$ be a finite group and act on the space $Z$. Let $R=\C[Z]$
and $R^G =\C[Z]^G $ be the ring of invariants. Let $\pi :Z
\rightarrow Z/G $ be the quotient map. Then
\begin{itemize}
\item[(i)] For any ideal $J \subseteq R^G $, 
we have $(J\cdot R )\cap R^G =J $.
\item[(ii)] The map $\pi $ is surjective. Further, for any $x\in Z/G$,
$\pi^{-1} (x)$ is a single orbit in $Z$.
\end{itemize}
\end{theorem}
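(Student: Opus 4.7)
The plan is to run everything through the Reynolds operator $\rho: R \to R^G$ defined by $\rho(f) = \frac{1}{|G|}\sum_{g \in G} f \cdot g$, which we have already seen is a $\C$-linear projection onto $R^G$, restricts to the identity on $R^G$, and (crucially) is $R^G$-linear in the sense that $\rho(a f) = a\,\rho(f)$ whenever $a \in R^G$.

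For (i), the inclusion $J \subseteq (J\cdot R) \cap R^G$ is immediate, so suppose $f \in (J \cdot R) \cap R^G$ and write $f = \sum_i a_i r_i$ with $a_i \in J \subseteq R^G$ and $r_i \in R$. Applying $\rho$ and using $f \in R^G$ together with the $R^G$-linearity above gives
\[
f = \rho(f) = \sum_i a_i\,\rho(r_i) \in J,
\]
since each $\rho(r_i) \in R^G$ and $J$ is an ideal of $R^G$. This is the standard ``splitting'' consequence of the existence of a Reynolds operator.

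For the surjectivity half of (ii), use the fact that $R$ is a finitely generated $\C$-algebra, so points of $Z$ and $Z/G$ correspond to maximal ideals of $R$ and $R^G$ respectively. Given a maximal ideal $\mathfrak{m} \subset R^G$, part (i) gives $(\mathfrak{m} R) \cap R^G = \mathfrak{m} \neq R^G$, so $\mathfrak{m} R$ is a proper ideal of $R$ and hence lies inside some maximal ideal $\mathfrak{n} \subset R$; the corresponding point $z \in Z$ then satisfies $\pi(z) = x$. For the second statement in (ii), suppose for contradiction that $\pi^{-1}(x)$ contains two distinct $G$-orbits $O_1, O_2$. Since $G$ is finite, each $O_i$ is a finite set and thus a closed subvariety of $Z$, with ideals $I_1, I_2 \subset R$ that are comaximal, so by the lemma already cited (Nullstellensatz / Lagrange interpolation) there exist $p_1 \in I_1$ and $p_2 \in I_2$ with $p_1 + p_2 = 1$. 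Because each $O_i$ is $G$-invariant, every translate $p_i \cdot g$ still vanishes on $O_i$, hence $\rho(p_i) \in R^G$ vanishes identically on $O_i$, while $\rho(p_1) + \rho(p_2) = \rho(1) = 1$. Thus $\rho(p_1)$ takes value $0$ on all of $O_1$ and value $1$ on all of $O_2$, contradicting the fact that functions in $R^G = \C[Z/G]$ are constant on the fiber $\pi^{-1}(x)$.

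I do not expect a serious obstacle here; the whole argument rests on the single averaging trick, which works in characteristic $0$ precisely because $|G|$ is invertible. If there is a subtle point to watch, it is the use of the Nullstellensatz to produce the separating pair $(p_1, p_2)$: one needs $\C$ algebraically closed and $Z$ affine of finite type over $\C$ so that the finite orbits $O_i$ are cut out by genuine polynomial ideals with $I_1 + I_2 = R$. Once this is in place, (i) and both halves of (ii) follow cleanly from applying $\rho$ to, respectively, an arbitrary element of $(JR)\cap R^G$ and to the partition-of-unity relation $p_1 + p_2 = 1$.
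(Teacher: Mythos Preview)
Your proof is correct and matches the paper's approach essentially step for step: both use the Reynolds operator and its $R^G$-linearity for (i), deduce surjectivity from (i) by extending a maximal ideal, and separate two putative orbits in a fiber by averaging a partition of unity coming from the Nullstellensatz. The only cosmetic difference is that the paper phrases the single-orbit argument as showing $\mathrm{rad}(J_x R) = I_{O(z)}$, whereas you contradict constancy of invariants on the fiber directly; the underlying content is identical.
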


\noindent
{\bf Proof}: Let $f_1 ,\ldots ,f_k $ be elements of $R^G $ and
$f\in R^G $ be such that:
\[ f=r_1 f_1 +\ldots +r_k f_k \]
where $r_i $ are elements of $R$. Since $G$ is finite, we apply
the Reynolds operator $p\rightarrow p^G $. 
In other words, we
have:
\begin{eqnarray*}
f &=& \frac{1}{|G|} \sum_{g\in G} f \\
 &=& \frac{1}{|G|} \sum_{g\in G} \sum_i r_i f_i  \\
 &=& \frac{1}{|G|} \sum_i f_i \sum_{g\in G} r_i   \\
 &=& \sum_i f_i \frac{1}{|G|}  \sum_{g\in G} r_i   \\
&=& \sum_i f_i r_i^G 
\end{eqnarray*}
Note that we have used the fact that if $h\in R^G
$ and $p\in R$ then $(h\cdot p)^G =h \cdot p^G $. 
Thus we see that any element $f$ of $R^G $ which is expressible as
an $R$-linear combination of elements $f_i $'s of $R^G $ is
already expressible as an $R^G $-linear combination of the same
elements. This proves (i) above. 

Now we prove (ii). Firstly, let $J$ be a maximal ideal of $R^G $.
By part (i) above, $J \cdot R$ is a proper ideal of $R$ and thus
$\pi $ is surjective. Let $x\in Z/G$ and $J_x \subseteq R^G $ be the
maximal ideal for the point $x$. Let $z$ be such that $\pi (z)=x$
and let $I_{O(x)} \subseteq R$ be the ideal of all functions in
$R$ vanishing at all points of the orbit $O(z)$ of $z$. We show
that $rad(J_x \cdot R)=I_{O(z)}$ which proves (ii). Towards that, it is
clear that (a) $J_x R \subseteq I_{O(z)}$ and (b) the variety of
$J_x R$ is $G$-invariant. Let $O(z')$ is another orbit in the
variety of $J_x\cdot R$. If $O(z') \neq O(z)$ then we already have
that there is a $p\in R^G $ such that $p(O(z))=0$ and
$p(O(z'))=1$. Since this $p\in J_x \cdot R$, the variety of $J_x
\cdot R$ excludes the orbit $O(z')$ proving our claim. $\Box $ 

\begin{theorem} \label{thm:finitegen}
Let $Z$ be a finite-dimesional $G$-module for a finite group $G$.
Then $\C[Z]^G $, the ring of $G$-invariants, is finitely generated
as a $\C$-algebra.
\end{theorem}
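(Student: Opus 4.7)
The plan is to give the classical Hilbert argument, which in this finite-group setting becomes very clean thanks to the Reynolds operator. The two ingredients are (a) Hilbert's basis theorem applied to the polynomial ring $R=\C[Z]$, and (b) the averaging (Reynolds) operator $p\mapsto p^G$ that we introduced in the context of finite groups. The theorem stated just above (Theorem~\ref{thm:finite}) already contains the key computation we will reuse.

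First I would observe that the $G$-action on $Z$ is linear, so the induced action on $R=\C[Z]$ is by degree-preserving algebra automorphisms; hence both $R$ and $R^G$ are graded, with $R^G=\bigoplus_{d\ge 0}R^G_d$ and $R^G_0=\C$. Let $R^G_{+}=\bigoplus_{d>0}R^G_d$ and let $I\subseteq R$ be the ideal $R\cdot R^G_{+}$ generated inside $R$ by all positive-degree invariants. Since $R$ is Noetherian (Hilbert's basis theorem), $I$ is finitely generated, and since it admits a homogeneous set of generators (elements of $R^G_{+}$), one can choose finitely many homogeneous invariants $f_1,\ldots,f_k\in R^G_{+}$ generating $I$ as an $R$-ideal.

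The main claim is that these $f_i$ already generate $R^G$ as a $\C$-algebra. I would prove this by induction on the degree of a homogeneous invariant $f\in R^G_d$. The base case $d=0$ is trivial. For $d>0$, we have $f\in R^G_{+}\subseteq I$, so $f=\sum_i r_i f_i$ for some $r_i\in R$, and by homogeneity we may take each $r_i$ to be homogeneous of degree $d-\deg(f_i)<d$. Now apply the Reynolds operator to both sides: since $f$ is already invariant, $f=f^G$; and because each $f_i$ is invariant, the key identity $(hp)^G=h\cdot p^G$ for $h\in R^G$, $p\in R$ (the same identity used in the proof of Theorem~\ref{thm:finite}(i)) yields
\[
 f \;=\; f^G \;=\; \sum_i (r_i f_i)^G \;=\; \sum_i r_i^G\, f_i.
\]
Each $r_i^G\in R^G$ is homogeneous of degree strictly less than $d$, so by induction each $r_i^G$ is a polynomial in $f_1,\ldots,f_k$, and therefore so is $f$.

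Since every element of $R^G$ is a $\C$-linear combination of homogeneous invariants, this completes the proof. The only ``hard'' ingredient is Hilbert's basis theorem; the $R^G$-linearity of the Reynolds operator is immediate from its definition $p^G=\tfrac{1}{|G|}\sum_{g\in G}g\cdot p$ together with the fact that $g$ acts as an algebra automorphism fixing $h\in R^G$. The only place one has to be mildly careful is to insist that the generators of $I$ be chosen to be homogeneous and invariant rather than arbitrary — but this is free because $R^G_{+}$ itself spans $I$ and is contained in $R^G$, so any finite generating set extracted from it automatically has both properties.
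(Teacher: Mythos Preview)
Your proof is correct and is essentially identical to the paper's own argument: both define the ideal $I_G\subseteq R$ generated by positive-degree invariants, invoke Hilbert's basis theorem to extract finitely many homogeneous invariant generators $f_1,\ldots,f_k$, and then run induction on degree using the Reynolds identity $(r f_i)^G=r^G f_i$ to replace the $R$-coefficients by $R^G$-coefficients. Your write-up is in fact slightly more careful than the paper's in justifying why the $f_i$ can be taken homogeneous and invariant.
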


\noindent
{\bf Proof}: Since the action of $G$ is homogeneous, every
invariant $f\in \C[Z]^G $ is the sum of homogeneous invariants.
Let $I_G \subseteq \C[Z]$ be the ideal generated by the {\bf
positive} degree invariants. By the Hilbert Basis theorem, $I_G
=(f_1 ,\ldots ,f_k )\cdot \C[Z]$, where each $f_i $ is itself an
invariant. we claim that $\C[Z]^G =\C[f_1 ,\ldots ,f_k ]\subseteq
\C[Z]$, or in other words, every invariant is a polynomial over
$\C$ in the terms $f_1 ,\ldots ,f_k $. To this end, let $f$ be a
homogeneous invariant. We prove this by induction
over the degree $d$ of $f$. Since $f\in I_G $, we have
\[ f=\sum_i r_i f_i \: \: \: \mbox{where for all $i$, } 
r_i \in \C[Z] \] 
By applying the reynolds operator, we have:
\[ f=\sum_i h_i f_i \: \: \: \mbox{where for all $i$, } 
h_i \in \C[Z]^G \] 
Now since each $h_i $ has degree less than $d$, by our induction
hypothesis, each is an element of $\C[f_1 ,\ldots ,f_k ]$. This
then shows that $f$ too is an element of $\C[f_1 ,\ldots ,f_k ]$.
$\Box $

\section{The Nagata Hypothesis}
We now generalize the above two theorems to the case of more
general groups. This generlization is possible if the group $G$
satisfies what we call the {\em Nagata Hypothesis}.

\begin{defn}
Let $G$ be an algebraic group. We say that $G$ satisfies the
Nagata hypothesis if for every finite-dimensional module $M$ over
$\C$ and a $G$-invariant hyperplane $N\subseteq M$, there is a
decomposition $M=H\oplus P$ as $G$-modules, where $H$ is a
hyperplane and $P$ is an invariant (i.e., $P$ is the trivial
representation. 
\end{defn}

The group $SL_n $ satisfies the hypothesis, and so do the
so-called reductive groups over $\C$. 

\begin{theorem}
Let $G$ satisfy the Nagata hypothesis and let $Z$ be an affine
$G$-variety. 
The ring $\C[Z]^G $ is finitely generated as a $\C$-algebra.
\end{theorem}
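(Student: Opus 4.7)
The plan is to adapt the proof of Theorem~\ref{thm:finitegen} from the finite-group setting. That proof rests on two ingredients: the Hilbert basis theorem, and the Reynolds operator, a $\C[Z]^G$-linear projection from $\C[Z]$ onto $\C[Z]^G$ built in the finite case by averaging over $G$. In the present setting averaging is unavailable, and the Reynolds operator must be manufactured from the Nagata hypothesis.

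First I would reduce to the case $Z = V$ for some finite-dimensional $G$-module $V$. Choose $\C$-algebra generators $s_1,\dots,s_m$ of $\C[Z]$ and use the earlier proposition to find a finite-dimensional $G$-invariant subspace $M \subseteq \C[Z]$ containing all the $s_i$. Then $M$ yields a $G$-equivariant surjection $\C[V] = \mathrm{Sym}(M) \twoheadrightarrow \C[Z]$, where $V = M^*$. Assuming a Reynolds operator exists compatibly with this surjection, the restriction $\C[V]^G \twoheadrightarrow \C[Z]^G$ is also surjective, so finite generation of $\C[V]^G$ implies finite generation of $\C[Z]^G$.

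Second I would construct the Reynolds operator. The goal is, for every finite-dimensional $G$-module $W$, a canonical decomposition $W = W^G \oplus W'$ with $W'$ a $G$-submodule satisfying $(W')^G = 0$; then the local projections onto the invariant part patch along the filtration of $\C[V]$ by finite-dimensional $G$-submodules to yield a $G$-equivariant projection $R : \C[V] \twoheadrightarrow \C[V]^G$, whose $\C[V]^G$-linearity is automatic because multiplication by an invariant is a $G$-equivariant self-map that commutes with $R$. To produce the decomposition I would induct on $\dim W$: if $W^G = 0$ there is nothing to do; otherwise pick $0 \neq f \in W^G$, pass to the dual $W^*$ where $(\C f)^\perp$ is a $G$-invariant hyperplane, apply the Nagata hypothesis to $W^*$ to split off a trivial one-dimensional summand (which dualizes back to a $G$-invariant hyperplane complement of $\C f$ inside $W$), and recurse on that hyperplane.

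With $R$ in hand, the argument of Theorem~\ref{thm:finitegen} runs unchanged on $V$: let $I_G \subseteq \C[V]$ be the ideal generated by the positive-degree homogeneous invariants, apply the Hilbert basis theorem to write $I_G = (f_1,\dots,f_k)\C[V]$ with the $f_i$ homogeneous invariants, and show by induction on degree that every homogeneous invariant lies in $\C[f_1,\dots,f_k]$ by applying $R$ to an expression $f = \sum r_i f_i$ to obtain $f = \sum R(r_i) f_i$ with each $R(r_i) \in \C[V]^G$ of strictly smaller degree. The main obstacle is the construction of $R$: the Nagata hypothesis is a one-step, dualized form of reductivity, so one must both dualize and iterate, and then verify that the resulting decompositions are canonical enough to patch into a single operator on all of $\C[V]$. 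The cleanest route is to characterize $W'$ intrinsically as the sum of all $G$-submodules of $W$ having no nonzero invariant vectors, which makes both uniqueness and compatibility under inclusions of finite-dimensional submodules transparent.
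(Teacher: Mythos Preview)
Your approach is correct in outline and genuinely different from the paper's. Both reduce to the $G$-module case via an equivariant embedding, but from there they diverge. The paper never builds a global Reynolds operator. Instead, Lemma~\ref{lemma:Nf} produces, for each $f$, an invariant $f^*\in M(f)$ with $f-f^*$ lying in $N(f)=\C\cdot\{s\cdot f-t\cdot f\}$; Lemma~\ref{lemma:idealext} then proves the ideal-contraction property $(\sum \C[Z]f_i)\cap\C[Z]^G=\sum\C[Z]^G f_i$ by an induction on the number of $f_i$, exploiting the identity $\sum_i(s\cdot h_i-t\cdot h_i)f_i=0$; and the passage from general $Z$ to a $G$-module $W$ uses that $\phi^*$ annihilates $N(h)$ whenever $\phi^*(h)$ is invariant. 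Your route---a canonical $R$ via $W=W^G\oplus W'$---is cleaner once established, since $\C[V]^G$-linearity and compatibility with $\phi^*$ become formal consequences of functoriality.

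One point needs repair. Your inductive step asserts that a single application of the Nagata hypothesis to $W^*$ (with invariant hyperplane $(\C f)^\perp$) yields a trivial summand $P^*$ whose annihilator $(P^*)^\perp\subseteq W$ complements $\C f$. Nagata does not promise this: it only supplies \emph{some} decomposition $W^*=H^*\oplus P^*$ with $P^*$ trivial, and $P^*$ may perfectly well lie inside $(\C f)^\perp$, in which case $f\in(P^*)^\perp$ and you have no complement. What you actually need is an invariant functional $\ell$ with $\ell(f)\neq 0$; obtaining one requires iteration (if $P^*\subseteq(\C f)^\perp$, pass to the hyperplane $\ker\ell_0\ni f$ and recurse), and that iteration is precisely the content of Lemma~\ref{lemma:Nf} applied to $W^*$. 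Your intrinsic description of $W'$ is correct, and equivalently $W'=\bigcap_{\ell\in(W^*)^G}\ker\ell$, which makes functoriality under $G$-maps immediate; but verifying that $W^G+W'=W$ (i.e., that invariant functionals separate invariant vectors) still rests on the same inductive argument. So your Reynolds operator and the paper's Lemma~\ref{lemma:Nf} are two packagings of the same induction; what you gain is a single operator $R$ that handles both the Hilbert-basis step and the descent to $\C[Z]$ uniformly, at the cost of a slightly more delicate setup.
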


The proof will go through several steps.

Let $f\in \C[Z]$ be an arbitrary element. We define two modules
$M(f)$ and $N(f)$. 
\begin{eqnarray*}
 M(f)&=&\C \cdot \{ s\cdot f | s\in G \}  \\
 N(f)&=& \C \cdot \{ s\cdot f -t\cdot f| s,t\in G \} 
 \end{eqnarray*}
Thus $M(f)\supseteq N(f)$ are finite-dimensional submodules of 
$\C[Z]$. 

\begin{lemma} \label{lemma:Nf}
Let $f\in \C[Z]$ be an arbitrary element. Then there exists an
$f^* \in \C[Z]^G \cap M(f)$ such that $f-f^* \in N(f)$.
\end{lemma}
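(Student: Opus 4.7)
The plan is to exhibit the decomposition $M(f) = N(f) \oplus \C f^*$ (with a possibly trivial second summand) by invoking the Nagata hypothesis, after first checking that $N(f)$ is an invariant hyperplane in $M(f)$ with trivial $G$-action on the quotient. The role of the Nagata hypothesis here is as a replacement for the Reynolds/averaging operator that was available in the finite-group setting (cf. Theorem~\ref{thm:finitegen}), since we can no longer average $f$ over $G$.

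First I would verify that $M(f)$ is a finite-dimensional $G$-submodule of $\C[Z]$ (this follows from the earlier proposition, because the translates of $f$ all lie in the finite-dimensional span $\C\cdot\{f_1,\ldots,f_k\}$ read off from $\mu^*(f)$). Then I would check that $N(f) \subseteq M(f)$ is $G$-stable: for any $r \in G$ and any generator $s\cdot f - t\cdot f$ of $N(f)$,
\[
r \cdot (s\cdot f - t\cdot f) = (rs)\cdot f - (rt)\cdot f \in N(f).
\]

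Next I would examine the quotient $M(f)/N(f)$. Since $M(f)$ is spanned by the translates $\{s\cdot f : s \in G\}$ and any two such translates are congruent modulo $N(f)$, the quotient has dimension at most one, and on it $G$ acts trivially (again by the defining relations of $N(f)$). If $M(f)/N(f) = 0$, i.e.\ $f \in N(f)$, the lemma is immediate by taking $f^* = 0$. Otherwise $N(f)$ is a $G$-invariant hyperplane in $M(f)$, and the induced action on the one-dimensional quotient is trivial; applying the Nagata hypothesis produces a one-dimensional invariant (hence trivial) submodule $P \subseteq M(f)$ with $M(f) = N(f) \oplus P$. Writing $f = f_1 + f^*$ with $f_1 \in N(f)$ and $f^* \in P$, we have $f^* \in M(f) \cap \C[Z]^G$ (every element of a trivial $G$-submodule is a $G$-invariant function on $Z$) and $f - f^* = f_1 \in N(f)$, as required.

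The main conceptual point, and the only non-routine step, is the passage from "$N(f)$ is an invariant hyperplane" to "there is a complementary invariant line" — this is precisely the content of the Nagata hypothesis. Everything else is bookkeeping with the definitions of $M(f)$ and $N(f)$. A small care point is the exact statement of the hypothesis as given in the excerpt: one must apply it to the invariant hyperplane $N(f) \subseteq M(f)$ to obtain a trivial complement, which is well-defined only because we have already verified that $G$ acts trivially on $M(f)/N(f)$; without that verification the hypothesis would only guarantee a decomposition of $M(f)$ into some hyperplane plus a trivial line, which would not a priori help us split off an invariant from $f$ itself.
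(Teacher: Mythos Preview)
Your argument is essentially correct in spirit, but it glosses over exactly the point the paper's proof works hardest on. You write that ``applying the Nagata hypothesis produces a one-dimensional invariant submodule $P \subseteq M(f)$ with $M(f) = N(f) \oplus P$.'' But re-read the paper's formulation of the Nagata hypothesis: given an invariant hyperplane $N \subseteq M$, it only guarantees a decomposition $M = H \oplus P$ with $H$ \emph{some} invariant hyperplane and $P$ trivial. It does \emph{not} assert $H = N$. In particular, the invariant line $P = \C f'$ that the hypothesis produces may well sit inside $N(f)$, in which case your decomposition $M(f) = N(f) \oplus P$ fails and you cannot directly split $f$. Your final paragraph flags a ``care point'' here but then resolves it incorrectly: knowing that $G$ acts trivially on $M(f)/N(f)$ does not force the hypothesis to return a complement to $N(f)$ specifically.

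The paper closes this gap by induction on $\dim M(f)$. It applies the hypothesis to get $M(f) = H \oplus \C f'$ with $f'$ invariant. If $f' \notin N(f)$ then indeed $M(f) = N(f) \oplus \C f'$ and one finishes as you do. If $f' \in N(f)$, the paper writes $f = f' + h$ with $h \in H$, observes that $M(h) \subseteq H$ has strictly smaller dimension, applies the induction hypothesis to $h$ to get $h^* \in \C[Z]^G$ with $h - h^* \in N(h)$, and then checks $N(h) \subseteq N(f)$ (since $f'$ invariant gives $s\cdot h - t\cdot h = s\cdot f - t\cdot f$) so that $f - h^* = f' + (h - h^*) \in N(f)$. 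Your direct approach would be valid under the stronger (and more commonly stated) form of linear reductivity in which every invariant submodule admits an invariant complement; under the paper's literal hypothesis, the inductive step is needed.
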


\noindent
{\bf Proof}: We prove this by induction over $dim (M(f))$. Note
that $s\cdot f=f+(s\cdot f -1\cdot f)$ and thus $M(f)=N(f)\oplus
\C \cdot f$ as vector spaces. Thus $N(f)$ is a $G$-invariant
hyperplane of $M(f)$. By the Nagata hypothesis, $M(f)=f' \oplus
H$, where $f'$ is an invariant. If $f' \not \in N(f)$ then
$M(f)=f'\oplus N(f)$ and the lemma follows. However, if $f'\in
N(f)$, then $f=f'+h$ where $h\in H$. Since $H$ is a $G$-module, we
see that $M(h)\subseteq H$ which is of dimension less than that of
$M(f)$. Thus there is an invariant $h^* $ such that $h-h^* \in
N(h)$. Note that $f-f'=h$ with $f'$ invariant, implies that
$s\cdot h-t\cdot h=s\cdot f-t\cdot f$. Thus $N(h)\subseteq N(f)$.
We take $f^* =h^*$. Examining $f-f^*$, we see that
\[ f-h^* =f'+h-h^* \in N(f) \]
This proves the lemma. $\Box $

\begin{lemma} \label{lemma:idealext}
Let $f_1 ,\ldots ,f_r \in \C[Z]^G $. Then $\C[Z]^G \cap (\sum_i
\C[Z]\cdot f_i )= \sum_i \C[Z]^G \cdot f_i $. Thus if an invariant
$f$ is a $\C[Z]$linear combination of invariants, then it is
already a $\C[Z]^G $ linear combination.
\end{lemma}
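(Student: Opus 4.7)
The plan is to mimic the proof of Theorem~\ref{thm:finite}(i) from the finite-group case, substituting the Reynolds averaging operator with the Nagata-hypothesis-based averaging supplied by Lemma~\ref{lemma:Nf}. The twist is that we apply that lemma to the $k$-tuple of coefficients $(r_1,\ldots,r_k)$ viewed as a single element of the $G$-module $\C[Z]^k$ (with the diagonal $G$-action), rather than to a single function.

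Suppose $f \in \C[Z]^G \cap \sum_i \C[Z]\cdot f_i$ and write $f = \sum_{i=1}^{k} r_i f_i$ with $r_i \in \C[Z]$. Consider the linear map $\phi : \C[Z]^k \to \C[Z]$, $\phi(u_1,\ldots,u_k) = \sum_i u_i f_i$. Because each $f_i$ is $G$-invariant, $\phi$ is $G$-equivariant. Put $\mathbf{r} = (r_1,\ldots,r_k)$ and introduce the finite-dimensional $G$-submodules of $\C[Z]^k$
\[
M := \C\cdot\{\,g\cdot \mathbf{r} : g \in G\,\}, \qquad
N := \C\cdot\{\,g\cdot \mathbf{r} - h\cdot \mathbf{r} : g,h \in G\,\}.
\]
Since $f$ is invariant and $\phi$ is $G$-equivariant, $\phi(g\cdot\mathbf{r}) = g\cdot f = f$ for every $g\in G$; subtracting, $N \subseteq \ker(\phi)$.

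The key claim is that the proof of Lemma~\ref{lemma:Nf} carries over verbatim to produce an invariant element $\mathbf{r}^* = (r_1^*,\ldots,r_k^*) \in M \cap (\C[Z]^k)^G$ with $\mathbf{r} - \mathbf{r}^* \in N$. Indeed, the case $f = 0$ is trivial (take $\mathbf{r}^* = 0$), so assume $f \ne 0$; then $\mathbf{r} \notin N$ (otherwise $f = \phi(\mathbf{r}) = 0$), so $M = N \oplus \C\mathbf{r}$, exhibiting $N$ as a $G$-invariant hyperplane in $M$. Apply the Nagata hypothesis to obtain $M = H \oplus P$ with $P$ a trivial one-dimensional summand, and run the induction on $\dim M$ exactly as in the proof of Lemma~\ref{lemma:Nf}: if the invariant line $P$ is not contained in $N$, then $M = N \oplus P$ and the $P$-component of $\mathbf{r}$ serves as $\mathbf{r}^*$; otherwise write $\mathbf{r} = \mathbf{r}' + \mathbf{h}$ with $\mathbf{r}' \in P$ and $\mathbf{h} \in H$, observe that $\dim H < \dim M$ and that $N(\mathbf{h}) \subseteq N$ (since $s\cdot\mathbf{h} - t\cdot\mathbf{h} = s\cdot\mathbf{r} - t\cdot\mathbf{r}$), and invoke the inductive hypothesis on $\mathbf{h}$ inside $H$. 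Because the $G$-action on $\C[Z]^k$ is coordinatewise, each component $r_i^*$ of the invariant tuple $\mathbf{r}^*$ is itself $G$-invariant.

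Finally, applying $\phi$ and using $N \subseteq \ker(\phi)$,
\[
f \;=\; \phi(\mathbf{r}) \;=\; \phi(\mathbf{r}^*) + \phi(\mathbf{r} - \mathbf{r}^*) \;=\; \sum_i r_i^* f_i \;\in\; \sum_i \C[Z]^G \cdot f_i,
\]
which establishes the nontrivial inclusion; the reverse inclusion is obvious. The only real obstacle in this plan is the verification that the inductive argument of Lemma~\ref{lemma:Nf} transports from $\C[Z]$ to an arbitrary finite-dimensional $G$-module such as $\C[Z]^k$, but this is essentially formal because that proof uses only the $G$-module structure and the Nagata hypothesis, both of which are available in the more general setting.
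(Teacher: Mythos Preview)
Your argument is correct, but it takes a different route from the paper's. The paper proceeds by induction on the number $r$ of generators: it applies Lemma~\ref{lemma:Nf} \emph{as stated} (i.e.\ only for scalar functions in $\C[Z]$) to the single coefficient $h_r$, writes $h_r = h'' + h'$ with $h'' \in \C[Z]^G$ and $h' \in N(h_r)$, then uses the invariance of $f$ to rewrite $h' f_r$ as a $\C[Z]$-combination of $f_1,\ldots,f_{r-1}$, and applies the induction hypothesis to the resulting $(r-1)$-term expression for $f - h'' f_r$.

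You instead package the coefficients into a single element of $\C[Z]^k$ under the diagonal action, observe that the multiplication map $\phi$ is $G$-equivariant with $N(\mathbf{r})\subseteq\ker\phi$, and then rerun the \emph{proof} (not just the statement) of Lemma~\ref{lemma:Nf} in this larger module to produce an invariant tuple in one shot. This is cleaner conceptually --- there is no outer induction on $r$, and the reason the trick works (namely that $N(\mathbf{r})$ is killed by $\phi$) is isolated explicitly --- at the modest cost of noting that the Nagata-hypothesis argument of Lemma~\ref{lemma:Nf} is really a statement about an arbitrary locally finite $G$-module, not specifically about $\C[Z]$. The paper's version has the complementary virtue of using Lemma~\ref{lemma:Nf} as a black box. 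One small point worth making explicit in your write-up: in the recursive step, $\phi(\mathbf{h}) = \phi(\mathbf{r}) - \phi(\mathbf{r}') = f - 0 = f$ (since $\mathbf{r}' \in P \subseteq N \subseteq \ker\phi$), so the hypothesis $\mathbf{h}\notin N(\mathbf{h})$ is again available and the induction goes through.
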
 

\noindent
{\bf Proof}: This is proved by induction on $r$. Say
$f=\sum_{i=1}^r h_i f_i $ with $h_i \in \C[Z]$. Applying the above
lemma to $h_r $, there is an $h'' \in \C[Z]^G $ and an $h'\in
N(h_k )$ such that $h_r =h'+h''$. We tackle $h'$ as follows.  
Since $f$ is an invariant, we have for all $s,t\in G$:
\[ \sum_{i=1}^r (s\cdot h_i -t\cdot h_i )f_i =s\cdot f -t\cdot f=0
\]
Hence:
\[ (s\cdot h_r -t\cdot h_r )f_r =\sum_{i=1}^{r-1} 
(s\cdot h_i -t\cdot h_i )f_i \]
It follows from this that $h' f_r =\sum_{i=1}^{r-1} h'_i f_i $ for
some $h'_i \in \C[Z]$. Substituting this in the expression 
\[ f=\sum_{i=1}^{r-1} h_i f_i +(h'+h'')f_r \]
we get:
\[ f-h''f_r =\sum_{i=1}^{r-1} (h_i +h'_i )f_i \]
Thus the invariant $f-h''f_r $ is $\C[Z]$-linear combination of
$r-1$ invariants, and thus the induction hypothesis applies. This
then results in an expression for $f$ as a $\C[Z]^G$-linear
combination for $f$. $\Box $

The above lemma proves part (i) of Theorem~\ref{thm:finite} for
groups $G$ for which the Nagata hypothesis holds. The route to
Theorem~\ref{thm:finitegen} is now a straight-forward adaptation of 
its proof for finite groups. In the case when $Z$ is a $G$-module,
$\C[Z]^G$ would then be homogeneous and the proof of
Theorem~\ref{thm:finitegen} holds. For general $Z$ here is a trick
which converts it to the homogeneous case:

\begin{prop}
Let $Z$ be an affine $G$-variety. Then there is a $G$-module $W$
and an equivariant embedding $\phi :Z \rightarrow W$. 
\end{prop}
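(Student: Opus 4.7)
The plan is to realize $Z$ inside the dual of a finite-dimensional $G$-invariant subspace of its coordinate ring. First I would pick a finite set of $\C$-algebra generators $f_1,\ldots,f_n$ of $\C[Z]$; these exist since $Z$ is affine of finite type, so $\C[Z]$ is a finitely generated $\C$-algebra. Then I would invoke the proposition stated earlier in the chapter which guarantees that any finite subset $S\subseteq\C[Z]$ sits inside a finite-dimensional $G$-invariant subspace $M(S)\subseteq\C[Z]$. Apply this with $S=\{f_1,\ldots,f_n\}$ to get a finite-dimensional $G$-submodule $U\subseteq\C[Z]$ containing every $f_i$.

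Next, set $W=U^{*}$, the dual $G$-module, which is finite-dimensional and hence is the sort of linear $G$-representation we want to embed into. Define $\phi:Z\to W$ by the evaluation map $\phi(z)(u)=u(z)$ for $u\in U$. The $G$-equivariance of $\phi$ is then a direct calculation using the rule $(g\cdot u)(z)=u(g^{-1}\cdot z)$: one checks that $\phi(g\cdot z)(u)=u(g\cdot z)=(g^{-1}\cdot u)(z)=\phi(z)(g^{-1}\cdot u)=(g\cdot\phi(z))(u)$, so $\phi(g\cdot z)=g\cdot\phi(z)$ for all $g\in G$ and $z\in Z$.

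It remains to see that $\phi$ is a closed embedding (not merely a morphism), which means verifying that the dual algebra map $\phi^{*}:\C[W]\to\C[Z]$ is surjective. Identify $\C[W]$ with the symmetric algebra $\mathrm{Sym}(W^{*})=\mathrm{Sym}(U)$. On the degree-one piece $U\subseteq\C[W]$, the map $\phi^{*}$ sends $u\in U$ to the function $z\mapsto u(\phi(z))=u(z)$, i.e.\ to $u$ itself viewed inside $\C[Z]$. So the image of $\phi^{*}$ contains $U$ and in particular the generators $f_1,\ldots,f_n$; since $\phi^{*}$ is an algebra homomorphism and the $f_i$ generate $\C[Z]$, it is surjective, so $\phi$ is a closed $G$-equivariant embedding.

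The only step requiring nontrivial input is the existence of the $G$-stable subspace $U$; this is exactly the earlier proposition, which itself rests on the fact that the coaction $\mu^{*}:\C[Z]\to\C[G]\otimes\C[Z]$ expresses every translate $g\cdot f$ as a $\C$-linear combination of a fixed finite set of elements of $\C[Z]$. Everything else is formal: choosing generators, dualizing, and checking equivariance and surjectivity from the construction. The only mild subtlety to be careful about is keeping the conventions straight between the action on $\C[Z]$ via $(g\cdot f)(z)=f(g^{-1}z)$ and the induced contragredient action on $W=U^{*}$, but this is a routine bookkeeping matter rather than a genuine obstacle.
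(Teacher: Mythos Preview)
Your proof is correct and follows essentially the same route as the paper: choose algebra generators of $\C[Z]$, enlarge them to a finite-dimensional $G$-stable subspace $U$ using the earlier proposition, take $W=U^{*}$, and observe that $\phi^{*}$ hits the generators and is therefore surjective. You are simply more explicit than the paper about the dualization and the equivariance check, but the argument is the same.
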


\noindent
{\bf Proof}: Since $\C[Z]$ is a finitely generated $\C$-algebra, 
$\C[Z]=\C[f_1 ,\ldots ,f_k ]$ where $f_1 ,\ldots
,f_k $ are some specific elements of $\C[Z]$. 
By expanding the list of generators, we may assume that the vector
space $\C \cdot\{ f_1,\ldots ,f_k \} \subseteq \C[Z]$ is a
finite-dimensional $G$-module. Let us construct $W$ as an
isomorphic copy of this module with the dual basis $W_1 ,\ldots
,W_k $. We construct the map $\phi^* :\C[W] \rightarrow \C[Z]$ by
defining $\phi (W_i )=f_i $. Since $\C[W]$ is a free algebra, we
indeed have the surjection:
\[ \phi^* :\C[W]  \rightarrow \C[Z] \] 
This proves the proposition. $\Box $

We are now ready to prove:
\begin{theorem} \label{thm:generalfinitegen}
Let $G$ satisfy the Nagata hypothesis. Let $Z$ be an affine
$G$-variety with coordinate ring $\C[Z]$. Then $\C[Z]^G $ is
finitely generated as a $\C$-algebra.
\end{theorem}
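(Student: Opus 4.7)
The plan is to reduce the problem to the case of a linear representation and then mimic the proof of Theorem~\ref{thm:finitegen}, using Lemmas~\ref{lemma:Nf} and \ref{lemma:idealext} (which hold under the Nagata hypothesis) as replacements for the Reynolds operator.

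First I would invoke the proposition just proved to produce an equivariant closed embedding $\phi : Z \hookrightarrow W$ into some finite-dimensional $G$-module $W$, inducing a $G$-equivariant surjection of $\C$-algebras $\phi^* : \C[W] \twoheadrightarrow \C[Z]$. The first key observation will be that this surjection actually restricts to a surjection on invariants, i.e.\ $\phi^*(\C[W]^G) = \C[Z]^G$. To see this, given $h \in \C[Z]^G$, pick any lift $f \in \C[W]$ with $\phi^*(f) = h$ and apply Lemma~\ref{lemma:Nf} to find $f^* \in \C[W]^G \cap M(f)$ with $f - f^* \in N(f)$. Because $\phi^*$ is $G$-equivariant and $h$ is $G$-invariant, every generator $s\cdot f - t\cdot f$ of $N(f)$ maps to $s\cdot h - t\cdot h = 0$, so $\phi^*(f^*) = h$. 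Hence it suffices to prove finite generation in the special case $Z = W$, i.e.\ when the coordinate ring is a polynomial ring on which $G$ acts by degree-preserving (homogeneous) transformations.

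Next, I would carry out the homogeneous case exactly as in the finite-group proof. Since the $G$-action on $\C[W]$ is graded, $\C[W]^G$ is a graded subring. Let $I_G \subseteq \C[W]$ be the ideal generated by all homogeneous invariants of strictly positive degree. By the Hilbert basis theorem $I_G$ is finitely generated, and one can select finitely many homogeneous invariants $f_1,\dots,f_m$ of positive degrees that already generate $I_G$ as an ideal of $\C[W]$. I claim $\C[W]^G = \C[f_1,\dots,f_m]$, and prove it by induction on degree. The only thing at issue is the induction step: for a homogeneous invariant $f$ of degree $d>0$, the membership $f \in I_G$ gives an expression $f = \sum_i r_i f_i$ with $r_i \in \C[W]$; here is where for a finite group one simply applies the Reynolds averaging $r_i \mapsto r_i^G$. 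Under only the Nagata hypothesis we have no Reynolds operator, but Lemma~\ref{lemma:idealext} provides exactly the required substitute: it guarantees a rewrite $f = \sum_i h_i f_i$ with $h_i \in \C[W]^G$. Each $h_i$ is homogeneous of degree strictly less than $d$, so the induction hypothesis places each $h_i$ in $\C[f_1,\dots,f_m]$, and consequently $f \in \C[f_1,\dots,f_m]$ as well.

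Combining the two reductions finishes the argument: $\C[W]^G$ is finitely generated by $f_1,\dots,f_m$, and since $\phi^* : \C[W]^G \twoheadrightarrow \C[Z]^G$ is a surjective $\C$-algebra homomorphism, $\C[Z]^G$ is generated by $\phi^*(f_1),\dots,\phi^*(f_m)$ as a $\C$-algebra. I expect the main subtle point, and the spot where the Nagata hypothesis is really being used, to be the surjectivity statement $\phi^*(\C[W]^G) = \C[Z]^G$ together with the Reynolds-free induction step -- both rest entirely on Lemmas~\ref{lemma:Nf} and \ref{lemma:idealext}, which is where full reducibility (guaranteed by the Nagata hypothesis via the decomposition $M(f) = f^* \oplus H$ in the proof of Lemma~\ref{lemma:Nf}) does all the real work.
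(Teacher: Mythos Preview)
Your proposal is correct and follows essentially the same route as the paper: the surjectivity $\phi^*(\C[W]^G)=\C[Z]^G$ is proved exactly as in the paper via Lemma~\ref{lemma:Nf}, and the linear (homogeneous) case is handled by the adaptation of Theorem~\ref{thm:finitegen} using Lemma~\ref{lemma:idealext} in place of the Reynolds operator, which the paper sketches in the paragraph preceding the embedding proposition and then takes as known in the proof. The only organizational difference is that you spell out the homogeneous case inside the proof, whereas the paper treats it as already established.
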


\noindent
{\bf Proof}: We construct the equivariant surjection $\phi^*
:\C[W]\rightarrow \C[Z]$. Note that $\C[W]^G $ is already finitely
generated over $\C$, say $\C[W]^G =\C[h_1 , \ldots ,h_k ]$. We
claim that $\phi^* (h_1 ), \ldots , \phi^* (h_k )$ generate $\C[Z]^G $.

Now, let $f\in \C[Z]^G$. By the surjectivity
of $\phi^* $, there is an $h\in \C[W]$ such that $\phi^* (h)=f$. 
Consider the space $N(h)$. A typical generator of $N(h)$ is
$s\cdot h -t\cdot h$. Applying $\phi^* $ to this, we see that:
\begin{eqnarray*}
\phi^* (s\cdot h-t \cdot h)&=& s\cdot \phi^*(h) -t \cdot \phi^*
(h)\\
&=& f-f=0 
\end{eqnarray*}
By an earlier lemma there is an invariant $h^* $ such that $h-h^*
\in N(h)$. thus applying $\phi^* $ we see that $\phi^* (h^* )=\phi
(h)=f$. Thus there is an invariant $h^* $ such that $\phi^* (h^*
)=f$. Now $h^* \in \C[h_1 ,\ldots ,h_k ]$ implies that $f=\phi^*
(h^* ) \in \C[\phi^* (h_1 ), \ldots ,\phi^* (h_k ) ]$. $\Box $

\chapter{Orbit-closures}

\noindent {\em Reference:} \cite{kempf,nagata} 

In this chapter we will analyse the validity of
Theorem~\ref{thm:finite} for general groups $G$ with the Nagata
hypothesis. The objective is to analyse the map $\pi :Z
\rightarrow Z/G$. 

\begin{prop}
Let $G$ satisfy the Nagata hypothesis and act on the affine
variety $Z$. Then the map $\pi :Z \rightarrow Z/G$ is surjective.
\end{prop}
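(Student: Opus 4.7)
The plan is to reduce surjectivity of $\pi$ to the ideal-extension lemma (Lemma~\ref{lemma:idealext}), which has already been established for groups satisfying the Nagata hypothesis. Since $Z/G = \spec(\C[Z]^G)$, a point $x \in Z/G$ corresponds to a maximal ideal $J \subseteq \C[Z]^G$, and producing a preimage in $Z$ amounts to exhibiting a maximal ideal $M \subseteq \C[Z]$ with $M \cap \C[Z]^G = J$. By standard commutative algebra (lying-over in the easy form), it suffices to show that the extended ideal $J \cdot \C[Z]$ is a \emph{proper} ideal of $\C[Z]$; then any maximal ideal $M$ containing $J\cdot\C[Z]$ contracts to an ideal of $\C[Z]^G$ containing $J$, which by the maximality of $J$ must equal $J$.

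First, I would pick a maximal ideal $J \subseteq \C[Z]^G$ corresponding to the given point $x \in Z/G$. Then I would argue by contradiction: suppose $J \cdot \C[Z] = \C[Z]$. Then there exist $f_1,\dots,f_r \in J$ and $r_1,\dots,r_r \in \C[Z]$ with
\[
1 \;=\; \sum_{i=1}^r r_i f_i .
\]
The element $1 \in \C[Z]^G$ is an invariant, so Lemma~\ref{lemma:idealext} (which is exactly where the Nagata hypothesis is used) allows us to rewrite this as $1 = \sum_i h_i f_i$ with $h_i \in \C[Z]^G$. But this forces $1 \in J$, contradicting the assumption that $J$ is a proper (maximal) ideal of $\C[Z]^G$. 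Hence $J\cdot\C[Z]$ is proper.

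Now choose any maximal ideal $M \subseteq \C[Z]$ with $J\cdot\C[Z] \subseteq M$, and let $z \in Z$ be the corresponding point. The contraction $M \cap \C[Z]^G$ is a proper ideal of $\C[Z]^G$ containing $J$, hence equals $J$ by maximality of $J$. This means precisely that $\pi(z) = x$, proving surjectivity.

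The only nontrivial step is the properness of $J \cdot \C[Z]$, and that is exactly the content already carried by Lemma~\ref{lemma:idealext}; for finite groups the same step is Theorem~\ref{thm:finite}(i), whose proof used the Reynolds operator, and here the Nagata hypothesis plays the analogous role. Everything else is bookkeeping with the maximal-spectrum interpretation of $Z/G$, so I do not expect any real obstacle once the ideal-extension lemma is in hand.
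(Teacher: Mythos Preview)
Your proof is correct and follows exactly the route the paper takes: use Lemma~\ref{lemma:idealext} to conclude that $J\cdot\C[Z]$ is a proper ideal of $\C[Z]$ for any maximal ideal $J\subseteq\C[Z]^G$, and then deduce surjectivity. The paper's own proof is a two-line version of yours, simply asserting that properness of $J\cdot\C[Z]$ implies surjectivity; you have filled in the standard lying-over bookkeeping that the paper leaves implicit.
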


\noindent
{\bf Proof}: Let $J\subseteq \C[Z]^G $ be a maximal ideal. By
lemma~\ref{lemma:idealext} $J\cdot \C[Z]$ is a proper ideal of
$\C[Z]$. This implies that $\pi $ is surjective. $\Box $

\begin{theorem} \label{theorem:closed-separation}
Let $G$ satisfy the Nagata hypothesis and act on the affine
variety $Z$. 
Let $W_1 $ and $W_2 $ be $G$-invariant (Zariski-)closed subsets of
$Z$ such that $W_1 \cap W_2 $ is empty. 
Then there is an invariant $f\in \C[Z]^G $ such that $f(W_1
)=0$ and $f(W_2 )=1$. 
\end{theorem}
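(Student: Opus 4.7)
The plan is to exhibit the desired invariant as the output of Lemma~\ref{lemma:Nf} (the Nagata-style averaging lemma) applied to a cleverly chosen polynomial. First I would set things up via the Nullstellensatz: let $I_1 = I(W_1)$ and $I_2 = I(W_2)$ be the ideals in $\C[Z]$. Because $W_1 \cap W_2 = \emptyset$, the variety of $I_1 + I_2$ is empty, so $I_1 + I_2 = \C[Z]$ and in particular $1 = h_1 + h_2$ for some $h_1 \in I_1$, $h_2 \in I_2$. Since $W_1$ and $W_2$ are $G$-invariant closed sets, the ideals $I_1$ and $I_2$ are $G$-submodules of $\C[Z]$.

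Next, I would apply Lemma~\ref{lemma:Nf} to $h_1$, producing an invariant $f := h_1^* \in M(h_1) \cap \C[Z]^G$ with $h_1 - f \in N(h_1)$. The candidate separator is this $f$, and the remaining work is to verify $f(W_1) = 0$ and $f(W_2) = 1$. The first is immediate: since $I_1$ is $G$-invariant and contains $h_1$, we have $M(h_1) \subseteq I_1$, so $f \in I_1$ and $f$ vanishes on $W_1$.

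For the value on $W_2$, the key observation is that $N(h_1) \subseteq I_2$. Indeed, for any $w \in W_2$ and any $g \in G$, the point $g^{-1} w$ lies in $W_2$ (by $G$-invariance of $W_2$), so $(g \cdot h_1)(w) = h_1(g^{-1} w) = 1 - h_2(g^{-1} w) = 1$. Consequently every generator $s \cdot h_1 - t \cdot h_1$ of $N(h_1)$ evaluates to $1 - 1 = 0$ on $W_2$, proving $N(h_1) \subseteq I_2$. Hence $f - 1 = (h_1^* - h_1) - h_2 \in N(h_1) + I_2 \subseteq I_2$, which gives $f(W_2) = 1$, as required.

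The main conceptual obstacle is the point I just addressed: one must check that the invariant produced by Lemma~\ref{lemma:Nf} not only lies in $I_1$ (which is automatic from $G$-invariance of $I_1$) but also has the prescribed value on $W_2$. This requires examining the averaging procedure relative to $W_2$ and leveraging $h_1 \equiv 1 \pmod{I_2}$ together with the $G$-invariance of $W_2$; no finer structural information about $G$ or $Z$ is needed beyond the Nagata hypothesis, which is already encapsulated in Lemma~\ref{lemma:Nf}.
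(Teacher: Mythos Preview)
Your proposal is correct and follows essentially the same approach as the paper: use the Nullstellensatz to write $1 = h_1 + h_2$ with $h_i \in I_i$, apply Lemma~\ref{lemma:Nf} to $h_1$, and observe that the correction term in $N(h_1)$ lies in $I_2$. The only cosmetic difference is that you verify $N(h_1) \subseteq I_2$ by direct evaluation on points of $W_2$, whereas the paper deduces it algebraically from the identity $(s\cdot h_1 - t\cdot h_1) = -(s\cdot h_2 - t\cdot h_2) \in I_2$; both arguments encode the same fact.
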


\noindent
{\bf Proof}: Let $I_1 $ and $I_2 $ be the ideals of $W_1 $ and
$W_2 $ in $\C[Z]$. Since their intersection is empty, by Hilbert
Nullstellensatz, we have $I_1 +I_2 =1$. Whence there are functions
$f_1\in I_1 $ and $f_2 \in I_2 $ such that $f_1 +f_2 =1$. For
arbitrary $s,t\in G$ we have:
\[ (s\cdot f_1 -t\cdot f_1 )+(s\cdot f_2 -t\cdot f_2 )=1-1=0 \]
Note that $M(f_i )$ and $N(f_i )$ are submodules of the
$G$-invariant ideal $I_i $. Now applying lemma~\ref{lemma:Nf} to
$f_1 $, we see that there are elements $f'_i \in N(f_i )$ such
that $f_1 +f'_1 \in \C[Z]^G $. Whence we see that:
\[ (f_1 +f'_1 )+(f_2 +f'_2 )=1 \]
where $f=f_1 +f'_1 $ is an invariant. Since $f_1 +f'_1 \in I_1 $ we
see that $f(W_1 )=0$. On the other hand, $f=1-(f_2 +f'_2)\in 1+I_2
$ and thus $f(W_2 )=1 $. $\Box $

Recall that $\Delta (z)$ denotes the closure of the orbit $O(z)$
in $Z$.

\begin{theorem} \label{thm:orbitclosure}
Let $G$ satisfy the Nagata hypothesis and act on an affine variety
$Z$. We define the relation $\approx $ on $Z$ as follows: 
$z_1 \approx z_2 $ if and only if $\Delta (z_1 ) \cap \Delta (z_2
)$ is non-empty. Then 
\begin{itemize}
\item[(i)] $\approx $ is an equivalence relation.
\item[(ii)] $z_1 \approx z_2 $ iff $f(z_1 )=f(z_2 )$ for all $f\in
\C[Z]^G $.
\item[(iii)] Within each $\Delta (z)$ there is a unique closed
orbit, and this is of minimum dimension among all orbits in
$\Delta (z)$.
\end{itemize}
\end{theorem}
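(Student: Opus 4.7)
The plan is to establish part (ii) first by invoking the separation theorem (Theorem~\ref{theorem:closed-separation}), then deduce (i) as a formal consequence of (ii), and finally prove (iii) by a dimension-induction combined with separation and uniqueness.

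For (ii), the key observation is that any invariant $f \in \C[Z]^G$ is constant on every orbit $O(z)$ by $G$-invariance, and hence by continuity of the polynomial $f$, constant (with value $f(z)$) on the closure $\Delta(z) = \overline{O(z)}$. So if $z_1 \approx z_2$, picking any $w \in \Delta(z_1) \cap \Delta(z_2)$ gives $f(z_1) = f(w) = f(z_2)$ for every invariant $f$. Conversely, if $\Delta(z_1) \cap \Delta(z_2) = \emptyset$, these are two disjoint closed $G$-invariant subsets of $Z$, and Theorem~\ref{theorem:closed-separation} produces $f \in \C[Z]^G$ with $f \equiv 0$ on $\Delta(z_1)$ and $f \equiv 1$ on $\Delta(z_2)$, so $f(z_1) \neq f(z_2)$.

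For (i), reflexivity and symmetry of $\approx$ are immediate from its definition. The only non-trivial point is transitivity, which I would obtain from (ii): the condition ``$f(z_1) = f(z_2)$ for every $f \in \C[Z]^G$'' is manifestly transitive, hence so is $\approx$.

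For (iii), the orbit $O(z)$ is the image of the irreducible variety $G$ under the algebraic morphism $\mu_z$, so it is a constructible, irreducible set dense in its closure $\Delta(z)$; a standard argument then shows $O(z)$ is open in $\Delta(z)$, so the boundary $\Delta(z) \setminus O(z)$ is either empty (meaning $O(z)$ is already closed) or a closed $G$-invariant subset of strictly smaller dimension. Iterating by induction on $\dim \Delta(z)$ produces at least one closed orbit inside $\Delta(z)$, and since any orbit of minimum dimension in $\Delta(z)$ must be closed by the same boundary-dimension argument, the minimum-dimension claim follows. For uniqueness, suppose $O_1 \neq O_2$ are two closed orbits in $\Delta(z)$; being closed, $G$-invariant, and disjoint, they are separated by some invariant $f$ via Theorem~\ref{theorem:closed-separation}, but the continuity argument from (ii) shows that every invariant is constant on $\Delta(z)$ and in particular takes the same value on $O_1$ and $O_2$, a contradiction. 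The main obstacle is the local-closedness of $O(z)$ in (iii), which requires Chevalley's theorem on constructible images; once this is in hand, the dimension induction is routine and parts (i) and (ii) reduce to formal consequences of the separation theorem.
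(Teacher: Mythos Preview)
Your proof is correct and follows essentially the same approach as the paper: derive (i) from (ii), prove (ii) by combining the continuity-of-invariants argument with the separation theorem (Theorem~\ref{theorem:closed-separation}), and handle (iii) via uniqueness from separation plus a dimension argument. In fact you give more detail than the paper for (iii); where the paper simply writes ``That it is of minimum dimension follows algebraic arguments,'' you spell out the standard boundary-dimension induction, correctly flagging that the openness of $O(z)$ in $\Delta(z)$ is the one nontrivial input from algebraic geometry.
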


\noindent
{\bf Proof}: It is clear that (ii) proves (i). Towards (ii), if
$\Delta (z_1 ) \cap \Delta (z_2 ) $ is empty, then by
Theorem~\ref{theorem:closed-separation}, there is an invariant
separating the two points. Thus it remains to show that if $\Delta
(z_1 0 \cap \Delta (z_2 )$ is {\em non-empty}, and $f$ is any
invariant, then $f(z_1 )=f(z_2 )$. So let $z$ be an element of the
intersection. Since $f$ is an invariant, we have $f(O(z_1 ))$ is a
constant, say $\alpha $. Since $O(z_1 )\subseteq \Delta (z_1 )$ is
dense, and $f$ is continuous, we have $f(z)=\alpha $. Thus $f(z_1
)=f(z_2 )=f(z)=\alpha $. 

Now (iii) is easy. Clearly $\Delta (z)$ cannot have {\em two}
closed distinct orbits, for otherwise they woulbe separated by
an invariant. But this must take the same value on all points of
$\Delta (z)$. That it is of minimum dimension follows algebraic
arguments. $\Box $

\begin{defn}
Let $Z$ be a $G$-variety and $z\in Z$. We say that $z$ is {\bf
stable} if the orbit $O(z)\subseteq Z$ is closed in $Z$.
\end{defn}

By the above theorem, every point $x$ of $Spec(C[Z]^G )$ corresponds
to exactly one stable point: the point whose orbit is of minimum
dimension in $\pi^{-1}(x)$.

\begin{ex}
Consider the action of $G=SL_n $on ${\cal M}$, the space of $n\times
n$-matrices by conjugation. Thus, given $A\in SL_n $ and $M\in
{\cal M}$, we have:
\[A\cdot M=AMA^{-1} \]
Let $R=\C[{\cal M}]=\C[X_{11},\ldots ,X_{nn} ]$ be the ring of
functions on ${\cal M}$. The invariants $R^G $ is generated as a
$\C$-algebra by the forms $e_i (X)=Tr(X^i )$, for $i=1,\ldots ,n$. 
The forms $\{ e_i |1\leq i \leq n\}$ are algebraically independent
and thus $R^G $ is the polynomial ring $\C[e_1 ,\ldots ,e_n ]$.
Clearly then $Spec(R^G ) \cong \C^n $ and we have:
\[ \pi : {\cal M} \cong \C^{n^2} \rightarrow \C^n \]
Given a matrix $M$ with eigenvalues $\{ \lambda_1 ,\ldots
,\lambda_n \}$, we have:
\[ e_i (M)=\lambda_1^i +\ldots +\lambda_n^i \]
Thus, by the fundamental theorem of algebra, the image $\pi (M)$ 
determines the set $\{ \lambda_1 ,\ldots ,\lambda_n \}$ (with
multiplicities). On the
other hand, given any tuple $\mu=(\mu_1 ,\ldots ,\mu_n )$ there is a
unique set 
$\lambda_{\mu } =\{ \lambda_1 ,\ldots ,\lambda_n \}$ such that $\sum_r
\lambda_r^i =\mu_i $. Clearly, for the diagonal matrix 
$D(\lambda_{\mu })=
diag(\lambda_1 ,\ldots ,\lambda_n )$, we have that $\pi (D(\lambda
))=\mu $. This verifies that $\pi $ is surjective. 

For a given $\mu $, the set $\pi^{-1}(\mu )$ are all matrices  
$M$ with $Spec(M)=\lambda =\lambda_{\mu }$. By the Jordan
canonical form (JCF), this set may be stratified by the various Jordan
canonical blocks of spectrum $\lambda $. If $\lambda $ has no
multiplicities then $\pi^{-1} (\mu )$ consists of just one orbit:
matrices $M$ such that $JCF(M)=D(\lambda )$. For a general
$\lambda $, the orbit of $M$ with $JCF(M)=D(\lambda )$ is the
unique closed orbit of minimum dimension. All other orbits contain
this orbit in its closure. Thus stable points $M\in {\cal M}$ are
the diagonalizable matrices. 

As an example, consider the case when
$n=2$ and the matrix:
\[ N=\left[\begin{array}{cc}
	 \lambda & 1 \\
	 0 & \lambda \end{array} \right] \]
Consider the family $A(t)=diag(t,t^{-1} )\in SL_2 $. We see
that:
\[ N(t)=A(t)NA(t)^{-1}=\left[\begin{array}{cc}
	 \lambda & t^2  \\
	 0 & \lambda \end{array} \right] \]
Thus $\lim_{t\rightarrow 0} N(t)=diag(\lambda ,\lambda )$, the
diagonal matrix. 
\end{ex}

Thus, we see that the invariant ring $\C[Z]^G $ puts a different
equivalence relation $\approx $ on points in $Z$ which is coarser
than $\cong $, the orbit equivalence relation. The relation
$\approx $ is more `topological' than group-theoretic and
correctly classifies orbits by their separability by invariants.
The special case of Theorem~\ref{thm:orbitclosure} when $Z$ is a
representation was analysed by Hilbert in 1893. The point $0\in Z$
is then the smallest closed orbit, and the equivalence class
$[0]_{\approx}$ is termed as the {\bf null-cone} of $Z$. We see
that the null-cone consists of all points $z\in Z$ such that $0$
lies in the orbit-closure $\Delta (z)$ of $z$. It was Hilbert who
discovered that if $0\in \Delta (z)$ then $0$ lies in the
orbit-closure of a $1$-parameter diagonal subgroup of $SL_n $. 
To understand the intricay of Hilbert's constructions, it is
essential that we understand diagonal subgroups of $SL_n $.
 
\chapter{Tori in $SL_n $}

\noindent {\em Reference:} \cite{kempf,nagata} 

Let $\C^* $ denote the multiplicative group of non-zero complex
numbers. A torus is the abstract group $(\C^* )^m $ for some $m$.
Note that $\C^* $ is an abelian algebraic group with 
$\C[G]=\C[T,T^{-1}]$. Furthermore, $\C^* $ has a compact subroup
$S^1 =\{ z\in \C, |z|=1\}$, the unit circle. 

Next, let us look at representations of tori. For $\C^* $, the
simplest representations are indexed by integers $k\in \Z$. So let
$k\in \Z$. The representation $\C_{[k]}$ corresponds to the
$1$-dimensional vector space $\C$ with the action:
\[ t \cdot z= t^k z \]
Thus a non-zero $t\in \C^* $ acts on $z$ by multiplication of the
$k$-th power. 
Next, for $(\C^* )^m $, let $\chi =(\chi [1],\ldots ,\chi [m])$ be
a sequence of integers. For such a $\chi $, we define the
representation $\C_{\chi }$ as follows: Let $\overline{t}=
(t_1 ,\ldots ,t_m
)\in (\C^* )^m $ be a general element and $z\in \C$. The action is
given by:
\[ \overline{t}\cdot z =t_1^{\chi [1]}\ldots t_m^{\chi [m]} z \]
Such a $\chi $ is called a {\bf character} of $(\C^* )^m $. 

These $1$-dimensional representations of tori are crucial in the
analysis of algebraic group actions.

Let us begin by understanding the structure of algebraic
homomorphisms from $\C^* $ to $SL_n (\C )$. So let 
\[ \lambda :\C^* \rightarrow SL_n (\C) \]
be such a map such that $\lambda (t)=[a_{ij}(t)]$ where $a_{ij}(T)
\in \C[T,T^{-1} ]$. An important substitution is for $t=e^{ i
\theta }$, and we obtain a $2\pi $-periodic map 
\[ \overline{\lambda} :\C \rightarrow SL_n \]
We see that $\overline{\lambda }(0)=I$. Let the derivative at $0$
for $\overline{\lambda }$ be $X$. 

We have the following general lemma:
\begin{lemma}
Let $f:\C \rightarrow SL_n $ be a smooth map such that $f(0)=I$
and $f'(0)=X $, where $X$ is an $n\times n$-matrix. Then for
$\theta \in \C$, 
\[ \lim_{k\rightarrow \infty}
\left[f \left( \frac{\theta}{k}\right)\right]^k = e^{\theta X} \]
\end{lemma}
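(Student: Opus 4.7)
The plan is to linearize the problem via the matrix logarithm and then exponentiate back. For any $n \times n$ matrix $A$ sufficiently close to $I$, the series $\log A = \sum_{j \ge 1} (-1)^{j+1}(A-I)^j/j$ converges absolutely and satisfies $\exp(\log A) = A$. Since $f$ is smooth with $f(0)=I$, for every fixed $\theta \in \C$ there exists $K_0 = K_0(\theta)$ such that for all $k \ge K_0$ the matrix $f(\theta/k)$ lies in the domain of convergence of $\log$, so $Y_k := \log f(\theta/k)$ is well-defined.

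First I would Taylor-expand $\log f$ near $0$. Writing $f(s) = I + sX + s^2 R(s)$, where $R(s)$ is bounded on a neighbourhood of $0$ by smoothness, and using the expansion of $\log$ about $I$, one gets
\[
  \log f(s) = (f(s) - I) - \tfrac{1}{2}(f(s)-I)^2 + \cdots = sX + s^2 \widetilde R(s),
\]
with $\widetilde R(s)$ bounded near $0$. Substituting $s = \theta/k$ yields
\[
  Y_k = \frac{\theta}{k} X + \frac{\theta^2}{k^2}\widetilde R(\theta/k),
\]
so that $k Y_k = \theta X + O(1/k)$ converges to $\theta X$ as $k \to \infty$.

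Next I would use the fact that since $Y_k$ commutes with itself, $\exp(Y_k)^k = \exp(k Y_k)$; combined with $\exp(Y_k) = \exp(\log f(\theta/k)) = f(\theta/k)$, this gives the key identity
\[
  [f(\theta/k)]^k = \exp(k Y_k).
\]
By continuity of the matrix exponential, $\exp(k Y_k) \to \exp(\theta X)$, which is the desired conclusion.

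The routine but essential technical points are (i) justifying convergence of $\log f(\theta/k)$ once $k$ is large enough, which is immediate from $f(\theta/k) \to I$, and (ii) controlling the remainder $\widetilde R(s)$ uniformly on a small neighbourhood of $0$ in the real segment $[0,\theta/K_0]$ so that the $O(1/k)$ error is actually uniform. I do not expect a serious obstacle; the only mild subtlety is the identity $A^k = \exp(k \log A)$, which holds because $\log A$ is a polynomial (in the limit, power series) in $A$ and therefore commutes with $A$, making the ordinary scalar functional identity carry over verbatim to the matrix setting.
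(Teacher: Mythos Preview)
Your proof is correct and is precisely a detailed implementation of what the paper indicates: the paper's entire argument is the single sentence ``The proof follows from the local diffeomorphism of the exponential map in the neighborhood of the identity matrix,'' and your use of $\log$ as the local inverse of $\exp$ to linearize, followed by $[f(\theta/k)]^k=\exp(k\log f(\theta/k))\to\exp(\theta X)$, is exactly how one unpacks that hint.
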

The proof follows from the local diffeomorphism of the exponential
map in the neighborhood of the identity matrix.

Applying this lemma to $\overline{\lambda}$ we see that $e^{\theta
X} $ is in the image of $\overline{\lambda }$ for all $\theta $.
Now, since $\overline{\lambda }$ is $2\pi $-periodic, we must have
$e^{(n2\pi +\theta )X}= e^{\theta X}$. This forces (i) $X$ to be
diagonalizable, and (ii) with integer eigenvalues. This proves:
\begin{prop} \label{prop:diagonal}
Let $\lambda :\C^* \rightarrow SL(V)$ be an algebraic
homomorphism. Then the image of $\lambda $ is closed and 
$V\cong \C_{[m_1]}\oplus \ldots \oplus \C_{[m_n
]} $, for some integers $m_1 ,\ldots ,m_n $, where
$n=dim_{\C}(V)$. 
\end{prop}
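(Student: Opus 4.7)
The plan is to bootstrap the preceding lemma into a full formula for $\bar\lambda$, then diagonalize. First I would observe that because $\lambda:\C^*\to SL(V)$ is a group homomorphism and $(e^{i\theta/k})^k=e^{i\theta}$, we have $\bar\lambda(\theta/k)^k=\lambda(e^{i\theta/k})^k=\lambda(e^{i\theta})=\bar\lambda(\theta)$. Letting $k\to\infty$ and invoking the preceding lemma gives the identity $\bar\lambda(\theta)=e^{\theta X}$ for all $\theta\in\C$, where $X=\bar\lambda'(0)$. Thus $\bar\lambda$ is genuinely the one-parameter subgroup generated by $X$, not merely an enlargement of its image.

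Next I would extract the integrality. The map $\bar\lambda$ is $2\pi$-periodic, so $e^{2\pi X}=I$. A matrix $M$ satisfying $e^{M}=I$ must be diagonalizable with eigenvalues in $2\pi i\Z$ (split $M$ into semisimple plus nilpotent parts; $e^{M}=I$ forces the nilpotent part to vanish since otherwise $e^{M}-I$ would be nonzero nilpotent, and the semisimple part has eigenvalues $\mu$ with $e^\mu=1$). Applied to $2\pi X$, this gives $X=i\,\mathrm{diag}(m_1,\dots,m_n)$ in a suitable basis $v_1,\dots,v_n$ of $V$, with each $m_j\in\Z$. Since $\lambda$ lands in $SL(V)$, taking determinants yields $\sum_j m_j=0$.

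In this basis, $\bar\lambda(\theta)v_j=e^{i\theta m_j}v_j$, so $\lambda(t)v_j=t^{m_j}v_j$ on the unit circle $S^1\subset\C^*$. Since both sides are regular functions of $t\in\C^*$ and agree on the Zariski-dense subset $S^1$, they agree on all of $\C^*$. This exhibits each line $\C v_j$ as a $\C^*$-submodule isomorphic to $\C_{[m_j]}$, proving $V\cong\C_{[m_1]}\oplus\cdots\oplus\C_{[m_n]}$.

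Finally, for closedness of the image, I would argue directly in the diagonal basis: the image equals $\{\mathrm{diag}(t^{m_1},\dots,t^{m_n}):t\in\C^*\}$, which is the common zero set in the torus of diagonal unimodular matrices of the finitely many monomial equations $y_j^{m_k}=y_k^{m_j}$ (after clearing signs to handle negative exponents), hence Zariski-closed in $SL(V)$. Alternatively, one may appeal to the general fact that the image of a homomorphism of algebraic groups is closed, but the explicit monomial description is self-contained and sharper. The mildly delicate point, and the one I would be most careful about, is the diagonalizability step that packages the analytic input (Lie-theoretic exponential) into purely algebraic output; once that is in hand the remaining assertions are essentially formal.
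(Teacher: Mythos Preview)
Your argument follows the paper's line exactly: substitute $t=e^{i\theta}$, use the preceding lemma together with the homomorphism identity $\bar\lambda(\theta/k)^k=\bar\lambda(\theta)$ to obtain $\bar\lambda(\theta)=e^{\theta X}$, and then let $2\pi$-periodicity force $X$ to be diagonalizable with eigenvalues in $i\Z$. You have in fact been more careful than the paper, which only says ``$e^{\theta X}$ is in the image of $\bar\lambda$'' and does not spell out the passage from $S^1$ to all of $\C^*$; your Zariski-density step handles that cleanly.

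One small correction on the closedness paragraph: the monomial equations $y_j^{m_k}=y_k^{m_j}$ (even together with the unimodular constraint $\prod_j y_j=1$) do not in general cut out the image on the nose. For example, with $n=3$ and $(m_1,m_2,m_3)=(2,2,-4)$ the image is $\{(s,s,s^{-2}):s\in\C^*\}$, but your equations also allow $(s,-s,-s^{-2})$. The correct defining equations come from a $\Z$-basis of the kernel of the character map $\Z^n\to\Z$, $e_j\mapsto m_j$; the vectors $m_k e_j-m_j e_k$ only generate a finite-index sublattice of that kernel. Your stated alternative---the general fact that the image of a homomorphism of algebraic groups is closed---is correct and is all that is needed here (the paper itself does not prove closedness), so the overall proof stands.
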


Based on this, we have the generalization:
\begin{prop} \label{prop:rdiagonal}
Let $\lambda :(\C^* )^r  \rightarrow SL(V)$ be an algebraic
homomorphism. Then the image of $\lambda $ is closed and 
$V\cong \C_{\chi_1 }\oplus \ldots \oplus \C_{\chi_n } $, for some 
integers $\chi_1 ,\ldots ,\chi_n $, where $n=dim_{\C}(V)$. 
\end{prop}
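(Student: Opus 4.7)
The plan is to bootstrap from Proposition~\ref{prop:diagonal} by restricting $\lambda$ to each coordinate subgroup of $(\C^*)^r$ and then using commutativity to diagonalize simultaneously. First, for each $i \in \{1,\dots,r\}$, let $\iota_i : \C^* \hookrightarrow (\C^*)^r$ denote the inclusion into the $i$-th factor, and set $\lambda_i = \lambda \circ \iota_i$. Each $\lambda_i$ is an algebraic homomorphism $\C^* \to SL(V)$, so by Proposition~\ref{prop:diagonal} there is a basis of $V$ in which $\lambda_i(\C^*)$ acts diagonally by integer characters. In particular, each operator $\lambda_i(t)$ is diagonalizable.

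Second, since $(\C^*)^r$ is abelian and $\lambda$ is a group homomorphism, the images $\lambda_i(\C^*)$ and $\lambda_j(\C^*)$ commute pointwise for all $i,j$. A standard linear algebra fact says that any family of commuting diagonalizable operators on a finite-dimensional $\C$-vector space can be simultaneously diagonalized. Choose such a common eigenbasis $\{e_1,\dots,e_n\}$ of $V$. Then for each $k$, the $1$-dimensional subspace $\C e_k$ is preserved by every $\lambda_i(\C^*)$, hence by the whole image of $\lambda$. The action of $(\C^*)^r$ on $\C e_k$ is given by an algebraic character $\chi_k : (\C^*)^r \to \C^*$, and any such algebraic character is of the form $\overline{t} \mapsto t_1^{\chi_k[1]}\cdots t_r^{\chi_k[r]}$ for some integers $\chi_k[i]$ (this uses that $\C[\C^*,T^{-1}]^\times = \C^* \cdot T^{\Z}$, applied factor by factor, or equivalently Proposition~\ref{prop:diagonal} applied to each $\lambda_i$ restricted to $\C e_k$). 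This yields the decomposition $V \cong \C_{\chi_1} \oplus \cdots \oplus \C_{\chi_n}$.

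For the closedness of the image, observe that once the decomposition above is in hand, the image of $\lambda$ is the subgroup of diagonal matrices
\[
\{\,\mathrm{diag}(t^{\chi_1},\dots,t^{\chi_n}) \mid t \in (\C^*)^r\,\} \subset SL(V),
\]
where $t^{\chi_k}$ is shorthand for $t_1^{\chi_k[1]}\cdots t_r^{\chi_k[r]}$. This is the image of the algebraic group homomorphism $(\C^*)^r \to (\C^*)^n$ given by the matrix of exponents $M = (\chi_k[i])$; its image is cut out in the torus of diagonal matrices by the monomial equations coming from the integer kernel of $M^T$, so it is Zariski closed in the diagonal torus, and the diagonal torus is closed in $SL(V)$.

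The main obstacle is the simultaneous diagonalization step: one must verify that the operators $\lambda_i(t)$, as $i$ and $t$ vary, really do form a commuting family of diagonalizable operators. Commutativity is immediate from the abelianness of $(\C^*)^r$, and diagonalizability of each $\lambda_i(t)$ follows from Proposition~\ref{prop:diagonal} applied to $\lambda_i$. After that, extracting integer characters on each common eigenline is a direct application of the rank-one case, so the whole argument reduces cleanly to Proposition~\ref{prop:diagonal} plus linear algebra.
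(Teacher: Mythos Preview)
Your proof is correct and is precisely the argument the paper is gesturing at: the paper does not actually supply a proof of this proposition, but merely introduces it with ``Based on this, we have the generalization'' immediately after Proposition~\ref{prop:diagonal}. Your approach---restrict $\lambda$ to each coordinate $\C^*$, invoke Proposition~\ref{prop:diagonal} to get diagonalizability of each $\lambda_i$, use commutativity of $(\C^*)^r$ to simultaneously diagonalize, and then read off an integer character on each common eigenline---is exactly the standard way to fill in this gap, and your closedness argument via monomial relations on the diagonal torus is fine (one could also simply cite the general fact that the image of a morphism of affine algebraic groups is closed).
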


Thus, in effect, for every homomorphism $\lambda :(\C^* )^r 
\rightarrow SL_n $, there is a fixed invertible matrix $A$ such that
for all $\overline{t} \in (\C^* )^r $, the conjugate $A\lambda
(\overline{t})A^{-1}$ is diagonal.

A torus in $SL_n $ is defined as an abstract subgroup $H$ of $SL_n
$ which is isomorphic to $(\C^* )^r $ for some $r$. The {\bf
standard maximal torus} $D$ of $SL_n $ is the diagonal matrices
$diag(t_1 ,\ldots ,t_n )$ where $t_i \in \C^* $ and $t_1 t_2
\ldots t_n =1$. 

This clears the way for the important theorem:

\begin{theorem} \label{thm:sln-tori}
\begin{itemize}
\item[(i)] Every torus is contained in a maximal torus. 
All maximal tori in $SL_n $ are isomorphic to
$(\C^*)^{n-1}$.

\item[(ii)] If $T$ and $T'$ are two maximal tori then there is an
$A\in SL_n $ such that $T'=ATA^{-1}$. Thus all maximal tori are
conjugate to $D$ above. 

\item[(iii)] Let $N(D)$ be the normalizer of $D$ and $N(D)^o $ be
the connected component of $N(D)$. Then $N(D)^o =D$ and $N(D)/D$
is the {\bf Weyl group} $W$, isomorphic to the symmetric group
$S_n $. 

\end{itemize}
\end{theorem}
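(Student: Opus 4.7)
The plan is to deduce all three parts essentially from the simultaneous diagonalizability of torus actions established in Proposition~\ref{prop:rdiagonal}. First I would tackle part (ii). Given a maximal torus $T \cong (\C^*)^r \hookrightarrow SL_n(\C)$, the inclusion is an algebraic homomorphism, so Proposition~\ref{prop:rdiagonal} gives a decomposition $\C^n = \C_{\chi_1} \oplus \cdots \oplus \C_{\chi_n}$ into character spaces. Choosing a basis adapted to this decomposition produces $A \in GL_n(\C)$ with $ATA^{-1} \subseteq D$, the standard diagonal torus. After rescaling $A$ (dividing by a suitable scalar) we can arrange $A \in SL_n$. If $T$ is maximal, then $ATA^{-1}$ must equal $D$ (pending the maximality of $D$, proved below), which gives conjugacy of maximal tori.

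For (i), the same argument shows every torus is conjugate into $D$, so it suffices to prove $D$ is a maximal torus and compute its dimension. The dimension computation is immediate: $D$ is cut out in $(\C^*)^n$ by the single relation $t_1 \cdots t_n = 1$, giving $D \cong (\C^*)^{n-1}$. For maximality, I would show that the centralizer $Z_{SL_n}(D)$ equals $D$ itself: any matrix commuting with all of $D$ must commute in particular with a diagonal matrix having pairwise distinct entries, which forces it to be diagonal. Since any torus $T' \supseteq D$ must centralize $D$ (tori are abelian), we get $T' \subseteq D$, proving $D$ is maximal. Combined with simultaneous diagonalizability, every torus sits inside some conjugate of $D$, hence inside a maximal torus.

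For (iii), conjugation by $A \in N(D)$ sends $D$ to itself, and since the action of $A$ permutes the weight spaces of the tautological $D$-representation on $\C^n$, I get a homomorphism $\pi: N(D) \to S_n$ where $\pi(A)$ is the permutation on the standard basis vectors (weight spaces). The kernel is precisely $Z_{SL_n}(D) = D$ by the centralizer computation above. To see $\pi$ is surjective, for any $\sigma \in S_n$ one takes the permutation matrix $P_\sigma$ (with a suitable sign flip in one entry, if necessary, to land in $SL_n$); this lies in $N(D)$ and maps to $\sigma$. Thus $N(D)/D \cong S_n$. Finally, since $N(D)/D$ is a finite (hence discrete) group, the connected component $N(D)^\circ$ must map trivially to $S_n$, forcing $N(D)^\circ \subseteq D$, and the reverse inclusion is obvious as $D$ is connected.

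The main obstacle is actually quite mild here: the whole theorem is essentially a corollary of Proposition~\ref{prop:rdiagonal}, together with the centralizer computation for $D$. The one technical point worth double-checking is the normalization $A \in SL_n$ in part (ii) (rescaling by an $n$-th root of $\det(A)$) and the sign twist needed to put a permutation matrix into $SL_n$ when $\mathrm{sign}(\sigma) = -1$; both are routine. The more conceptual content --- that all maximal tori are conjugate and that the Weyl group arises as the discrete symmetries of a maximal torus --- is all packaged into Proposition~\ref{prop:rdiagonal} and the observation that the centralizer of a generic diagonal matrix is diagonal.
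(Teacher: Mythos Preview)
The paper states Theorem~\ref{thm:sln-tori} without proof, treating it as a standard structural result about $SL_n$ to be quoted rather than derived. Your argument is correct and follows the natural route: everything reduces to Proposition~\ref{prop:rdiagonal} (simultaneous diagonalizability of a torus action) together with the elementary computation that the centralizer of $D$ in $SL_n$ is $D$ itself. The technical points you flag---rescaling $A$ by an $n$-th root of its determinant to land in $SL_n$, and the sign twist on permutation matrices when $\mathrm{sign}(\sigma)=-1$---are exactly the small adjustments needed, and your handling of them is fine.
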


\begin{defn}
Let $G$ be an algebraic group. $\Gamma (G)$ will denote the
collection of all {\bf $1$-parameter subgroups} of $G$, i.e., 
morphisms $\lambda :\C^* \rightarrow G$. $X(G)$ will denote
the collection of all {\bf characters} of $G$, i.e., homomorphisms
$\chi :G \rightarrow \C^* $.
\end{defn}

We consider the case when $G=(\C^* )^r $. Clearly, for a given
$\lambda :\C^* \rightarrow G$, there are integers $m_1 ,\ldots
,m_r \in \Z$ such that:
\[ \lambda (t)=(t^{m_1 },\ldots ,t^{m_r }) \]
In the same vein, for the character $\chi :G \rightarrow \C^* $,
we have integers $a_1 ,\ldots ,a_r $ such that 
\[ \chi (t_1 ,\ldots ,t_r )=\prod_i t_i^{a_i } \]
We also have the composition $\lambda \circ \chi :\C^* \rightarrow
\C^* $, where by:
\[ \lambda \circ \chi (t)=t^{m_1 a_1 +\ldots +m_r a_r } \]
Consolidating all this, we have:

\begin{theorem}
Let $G=(\C^* )^r $. Then $\Gamma (G) \cong \Z^r $ and $X(G) \cong
\Z^r $. Furthermore, there is the pairing 
\[ <\: ,\: >:\Gamma (G) \times X(G) \rightarrow \Z \]
which is a unimodular pairing on lattices.
\end{theorem}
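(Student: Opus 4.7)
The plan is to classify $\Gamma(G)$ and $X(G)$ by reducing to the one-dimensional case and then to compute the pairing explicitly.

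First I would handle $X(G)$. A character is an algebraic group homomorphism $\chi: (\C^*)^r \to \C^*$, and the coordinate ring of $(\C^*)^s$ is the Laurent polynomial ring $\C[T_1^{\pm 1}, \ldots, T_s^{\pm 1}]$. So $\chi$ is specified by its comorphism $\chi^*: \C[S^{\pm 1}] \to \C[T_1^{\pm 1}, \ldots, T_r^{\pm 1}]$, which is determined by the image of the unit $S$. The key observation is that the units of the Laurent polynomial ring $\C[T_1^{\pm 1}, \ldots, T_r^{\pm 1}]$ are exactly the monomials $c\, T_1^{a_1} \cdots T_r^{a_r}$ with $c \in \C^*$ and $a_i \in \Z$ (a standard fact proved by looking at a Newton polytope or using the $\Z$-grading). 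Writing down the homomorphism condition $\chi(\mathbf{1}) = 1$ forces $c = 1$. Hence $X(G)$ is in bijection with $\Z^r$ via $\chi \leftrightarrow (a_1, \ldots, a_r)$, and under pointwise multiplication of characters this bijection is a group isomorphism.

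Next I would handle $\Gamma(G)$. A one-parameter subgroup $\lambda: \C^* \to (\C^*)^r$ is the same as an $r$-tuple of characters $\C^* \to \C^*$, one for each coordinate projection. By the $r=1$ case just established, each such character is of the form $t \mapsto t^{m_i}$ for a unique $m_i \in \Z$. So $\Gamma(G) \cong \Z^r$ via $\lambda \leftrightarrow (m_1, \ldots, m_r)$.

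Finally, the pairing. Given $\lambda \in \Gamma(G)$ with exponents $\mathbf{m} = (m_1,\ldots,m_r)$ and $\chi \in X(G)$ with exponents $\mathbf{a} = (a_1,\ldots,a_r)$, the composition $\chi \circ \lambda: \C^* \to \C^*$ satisfies $\chi(\lambda(t)) = \prod_i (t^{m_i})^{a_i} = t^{\sum_i m_i a_i}$, so $\langle \lambda, \chi \rangle = \sum_i m_i a_i$. Under the identifications above, this is the standard dot product on $\Z^r \times \Z^r$, whose matrix in the standard bases is the identity, hence unimodular.

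No step is genuinely hard here; the only place where care is required is the computation of units in the Laurent polynomial ring (to classify characters), which is the real content of the argument. Once that is in hand, the rest is bookkeeping with exponents and the pairing is manifestly the standard one.
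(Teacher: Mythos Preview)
Your proof is correct. The classification of units in a Laurent polynomial ring over a field is exactly the algebraic content needed, and your reduction of $\Gamma(G)$ to the $r=1$ case of $X(G)$ via the coordinate projections is clean. The pairing computation is then immediate.

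Your route, however, is genuinely different from the paper's. The paper does not argue via comorphisms and units; instead, the theorem is stated as a consolidation of the discussion immediately preceding it, which in turn rests on the analytic Lemma and Proposition~\ref{prop:diagonal}. There the key move is to substitute $t=e^{i\theta}$ into an algebraic homomorphism $\lambda:\C^*\to SL(V)$, obtain a $2\pi$-periodic map, and use the local exponential diffeomorphism to force the derivative at the identity to be diagonalizable with integer eigenvalues. The classification of characters is then handled ``in the same vein.'' So the paper's argument is Lie-theoretic/analytic (periodicity and $\exp$), while yours is purely algebraic (structure of units in $\C[T_1^{\pm1},\ldots,T_r^{\pm1}]$). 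Your approach has the advantage of working uniformly over any algebraically closed base field without invoking analysis; the paper's approach has the advantage of fitting into the broader Lie-theoretic framework it is already developing, and makes the diagonalizability of tori in $SL_n$ transparent for the later discussion of weight spaces.
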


\begin{ex}
Let $G=(\C^*)^3 $ and $\lambda $ and $\chi $ be as follows:
\begin{eqnarray*}
\lambda (t) &=& (t^3 ,t^{-1} ,t^2 ) \\
\chi (t_1 ,t_2 ,t_3 ) &=& t_1^{-1} t_2 t_3^2 
\end{eqnarray*}
Then, $\lambda \cong [3,-1,2]$ and $\chi \cong [-1,1,2]$. We
evaluate the pairing:
\[ <\lambda , \chi >=3\cdot -1 +(-1)\cdot1 +2\cdot 2=0 \]
\end{ex}

We now turn to the special case of $D\subseteq SL_n $, the maximal
torus which is isomorphic to $(\C^* )^{n-1}$. By the above
theorem, $\Gamma (D), X(D) \cong \Z^{n-1}$. However, it will more
convenient to identify this space as a subset of $\Z^n$. So let:
\[ \Y^n =\{ [m_1 ,\ldots ,m_n ]\in \Z^n | m_1 +\ldots +m_n =0 \}
\]
It is easy to see that $\Y^n \cong \Z^{n-1}$. In fact, we will set
up a special bijection $\theta :\Y^n \rightarrow \Z^{n-1} $
defined as:
\[ \theta ([m_1 ,m_2 ,\ldots ,m_n ])=[m_1 ,m_1 +m_2 ,\ldots ,m_1
+\ldots +m_{n-1} ] \]
The inverse $\theta^{-1}$ is also easily computed:
\[ \theta^{-1} [a_1 ,\ldots ,a_{n-1}]=[a_1 ,a_2 -a_1 ,a_3 -a_2 ,
\ldots ,a_{n-1}-a_{n-2}, -a_{n-1} ] \]
This $\theta $ corresponds to the $\Z$-basis of $\Y^n $
consisting of the vectors $e_1 -e_2 ,\ldots ,e_{n-1}-e_n$ where $e_i
$ is the standard basis of $\Z^n $. This is also equivalent to the
embedding $\theta^* :(\C^* )^{n-1} \rightarrow D$ as follows:
\[ (t_1 ,\ldots ,t_{n-1}) \rightarrow \left[ 
		   \begin{array}{ccccc}
		   t_1 & 0 & \ldots &  & 0 \\
		   0 & t_1^{-1} t_2 & 0 & \ldots & 0\\
		   & & \vdots & & \\
		   0 & \ldots & 0 & t_{n-2}^{-1} t_{n-1} & 0\\
		   0 & & \ldots & 0 & t_{n-1}^{-1} \end{array}
		   \right] \]
A useful computation is to consider the inclusion $D\subseteq D^*
$, where $D^* \subseteq GL_n $ is subgroup of {\em all} diagonal
matrices. Clearly $\Gamma (D) \subseteq \Gamma (D^* )$, however
there is a surjection $X(D^* )\rightarrow X(D)$. It will be useful
to work out this surjection explicitly via $\theta $ and $\theta^*
$. If $[m_1
,\ldots .m_n ]\in \Z^n \cong X(D^* )$, then it maps to $[m_1 -m_2 ,
\ldots ,m_{n-1} -m_n ] \in \Z^{n-1} \cong X((\C^* )^{n-1}) $ via
$\theta^* $. If
we push this back into $\Y^n $ via $\theta^{-1}$, we get:
\[ [m_1 ,\ldots ,m_n ]\rightarrow [m_1 -m_2 ,2m_2 -m_1 -m_3 ,
,\ldots , 2m_{n-1}-m_{n-2}-m_{n} , m_n -m_{n-1} ] \]

We are now ready to define the {\bf weight spaces} of an $SL_n
$-module $W$. So let $W$ be such a module. By restricting this
module to $D\subseteq G$, via Proposition \ref{prop:rdiagonal}, we 
see that $W$ is a direct sum $W=\C_{\chi_1 } \oplus \ldots \oplus
\C_{\chi_N } $, where $N=dim_{\C}(W)$. Collecting identical
characters, we see that:
\[ W =\oplus_{\chi \in X(D)} \C_{\chi}^{m_{\chi }} \]
Thus $W$ is a sum of $m_{\chi }$ copies of the module $\C_{\chi
}$. Clearly $m_{\chi }=0$ for all but a finite number, and is
called the {\bf multiplicity} of $\chi $. For a given module $W$,
computing $m_{\chi }$ is an intricate combinatorial exercise and
is given by the celebrated {\bf Weyl Character Formula}. 

\begin{ex}
Let us look at $SL_3 $ and the weight-spaces for some modules of
$SL_3 $. All modules that we discuss will also be $GL_3 $-modules
and thus $D^* $ modules. The formula for converting $D^* $-modules
to $D$-modules will be useful. This map is $\Z^3 \rightarrow \Y^3
$ and is given by:
\[ [m_1 ,m_2 ,m_3 ]\rightarrow [m_1 -m_2 , 2m_2 -m_1 -m_3 ,m_3 -m_2 ] \]

The simplest $SL_n $ module is $\C^3 $ with the basis $\{ X_1 ,X_2
,X_3 \}$ with $D^* $ weights $[1,0,0], [0,1,0]$ and $[0,0,1]$.
This converted to $D$-weights give us $\{ [1,-1,0], [-1,2,-1],
[0,-1,1] \}$, with $\C_{[1,-1,0]} \cong \C\cdot X_1 $ and so on. 

The next module is $Sym^2 (\C^3 )$ with the basis $X_i^2 $ and
$X_i X_j $. The six $D^* $ and $D$-weights with the weight-spaces
are given below:
\[ 
\begin{array}{|c|c|c|}\hline 
\mbox{$D^*$-wieghts} & \mbox{$D$-weights} & \mbox{weight-space}
\\ \hline \hline
[2,0,0] & [2,-2,0] &X_1^2 \\ \hline
[0,2,0] & [-2,4,-2] & X_2^2 \\ \hline
[0,0,2] & [0,-2,2] & X_3^2 \\ \hline
[0,1,1] & [-1,1,0]& X_2 X_3 \\ \hline
[1,0,1] & [1,-2,1]& X_1 X_3 \\ \hline
[1,1,0] & [0,1,-1]& X_1 X_2 \\ \hline
\end{array} \]

The final example is the space of $3\times 3$-matrices ${\cal M}$ 
acted upon by conjugation. We see at once that ${\cal M}={\cal
M}_0 \oplus \C \cdot I $ where ${\cal M}_0 $ is the
$8$-dimensional space of trace-zero matrices, and $\C \cdot I$ is
$1$-dimensional space of multiples of the idenity matrix. Weight
vectors are $E_{ij}$, with $1\leq i,j \leq 3$. The $D^* $ weights
are $[1,-1,0], [1,0,-1],[0,1,-1], [-1,0,1], [0,-1,1], [-1,1,0]$
and $[0,0,0]$. The multiplicity of $[0,0,0]$ in ${\cal M}$ is $3$
and in ${\cal M}_0 $ is $2$. Note that $E_{ii}\not \in {\cal M}_0
$. The $D$-weights are $[2,-3,1],[1,0,-1],[-1,3,-2]$ and its
negatives, and obviously $[0,0,0]$. 

\end{ex}

The normalizer $N(D)$ gives us an action of $N(D)$ on the weight
spaces. If $w$ is a weight-vector of weight $\chi $, $t\in D$
and $g\in N(D)$, then $g\cdot w$ is also a weight vector. Afterall
$t\cdot (g \cdot w)=g\cdot t' \cdot w$ where $t'=g^{-1}tg$. Thus
\[ t\cdot (g \cdot w)=\chi (t') (g \cdot w) \]
whence $g\cdot w$ must also be a weight-vector with some weight
$\chi '$. This $\chi '$ is easily computed via the action of $D^*
$. Here the action of $N(D^* )$ is clear: if $\chi =[m_1 ,\ldots
,m_n ]$, then $\chi ' =[m_{\sigma (1)},\ldots ,m_{\sigma (n)} ]$
for some permutation $\sigma \in S_n $ determined by the component
of $N(D^* )$ containing $g$. Thus the map $\chi $ to $\chi '$ for
$D$-weights in the case of $SL_3 $ is as follows:
\[ [m_1 -m_2 , 2m_2 -m_1 -m_3 ,m_3 -m_2 ]  \rightarrow [m_{\sigma
(1)} -m_{\sigma(2)} ,2m_{\sigma (2)} -m_{\sigma (1)} -m_{\sigma
(3)}, m_{\sigma (3)}-m_{\sigma (2)} ] \]
\noindent
{\bf Caution}: Note that though $\Y^3 \subseteq \Z^3 $ is an $S_3
$-invariant subset, the action of $S_3 $ on $\chi \in \Y^3 $ is
{\bf different}. Note that, e.g., in the last example above,
$[2,-3,1]$ is a weight but not the `permuted' vector $[-3,2,1]$. 
This is because of our
peculiar embedding of $\Z^{n-1} \rightarrow \Y^n $. 
 
\chapter{The Null-cone and the Destabilizing flag}

\noindent {\em Reference:} \cite{kempf,nagata} 

The fundamental result of {\bf Hilbert} states:
\begin{theorem} \label{thm:hilbert}
Let $W$ be an $SL_n $-module, and let $w\in W$ be an element of
the null-cone. Then there is a $1$-parameter subgroup $\lambda
:\C^* \rightarrow SL_n $ such that 
\[ \lim_{t \rightarrow 0} \lambda (t)\cdot w =0_W \]
\end{theorem}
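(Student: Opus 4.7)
The plan is to implement the classical Hilbert--Mumford argument in three steps: first produce an analytic curve in $G = SL_n$ witnessing the degeneration $gw \to 0$, then compress this curve to a torus via a Cartan-type decomposition over $\C((t))$, and finally read off a destabilizing one-parameter subgroup from a weight-space computation.

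\textbf{Step 1 (Curve selection).} Since $0 \in \overline{G \cdot w}$ and $G \cdot w$ is dense in its closure, there is a morphism $\gamma \colon \text{Spec}\,\C[[t]] \to \overline{Gw}$ sending the generic point into $Gw$ and the closed point to $0$ (this is just curve selection on an irreducible variety, or the valuative criterion applied to the orbit closure). The orbit map $G \to Gw$ is a smooth surjection with reductive stabilizer, and after base change to a sufficiently ramified formal disc $\text{Spec}\,\C[[s]] \to \text{Spec}\,\C[[t]]$, the principal fiber $\gamma$ can be lifted to a morphism $g \colon \text{Spec}\,\C((s)) \to G$ with $g(s) \cdot w$ extending to a regular $W[[s]]$-valued point specializing to $0$ at $s=0$. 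Replacing $t$ by $s$, I will assume from now on an analytic arc $g(t) \in SL_n(\C((t)))$ with $g(t) \cdot w \to 0$ as $t \to 0$.

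\textbf{Step 2 (Cartan decomposition).} Applying the Smith normal form in the PID $\C[[t]]$ to $g(t)$, viewed in $GL_n(\C((t)))$, and adjusting determinants so everything stays in $SL_n$, I can write
\[ g(t) \;=\; k_1(t)\, \lambda(t)\, k_2(t), \]
where $k_1(t), k_2(t) \in SL_n(\C[[t]])$ are regular at $t=0$, and $\lambda \colon \C^* \to D$ is an algebraic one-parameter subgroup of the diagonal maximal torus $D \subseteq SL_n$, of the form $\lambda(t) = \mathrm{diag}(t^{a_1}, \dots, t^{a_n})$ with $\sum a_i = 0$. This is the $SL_n$ instance of the loop-group $KAK$ (elementary-divisor) decomposition; see Theorem~\ref{thm:sln-tori} for the classification of tori this appeals to.

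\textbf{Step 3 (Weight decomposition).} Since $k_1(t)$ has a finite limit $k_1(0) \in SL_n$ as $t \to 0$, left-multiplication by $k_1(t)^{-1}$ preserves the property of tending to $0$, so
\[ \lambda(t)\, k_2(t)\, w \;\longrightarrow\; 0 \quad \text{as } t \to 0. \]
Let $w_0 := k_2(0) \cdot w \in G \cdot w$ and decompose $W$ into $D$-weight spaces $W = \bigoplus_\chi W_\chi$, so that $\lambda(t)$ acts on $W_\chi$ by the scalar $t^{\langle \lambda, \chi\rangle}$. Expanding the analytic path $k_2(t) w \in W[[t]]$ component-wise gives, for each weight $\chi$,
\[ (k_2(t) w)_\chi \;=\; w_0^\chi + t\, v_1^\chi + t^2\, v_2^\chi + \cdots, \]
and therefore the $\chi$-component of $\lambda(t) k_2(t) w$ equals $t^{\langle\lambda,\chi\rangle}(w_0^\chi + t v_1^\chi + \cdots)$. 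Since this must tend to $0$ as $t\to 0$, the leading coefficient $w_0^\chi$ must vanish for every $\chi$ with $\langle \lambda, \chi\rangle \le 0$. Consequently
\[ \lambda(t)\, w_0 \;=\; \sum_{\langle\lambda,\chi\rangle > 0} t^{\langle\lambda,\chi\rangle} w_0^\chi \;\longrightarrow\; 0. \]

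\textbf{Step 4 (Conjugate back to $w$).} Setting $g_0 := k_2(0) \in SL_n$, the one-parameter subgroup $\lambda'(t) := g_0^{-1} \lambda(t) g_0$ satisfies $\lambda'(t) \cdot w = g_0^{-1} \lambda(t) w_0 \to 0$, which is the desired destabilizing $1$-parameter subgroup.

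\textbf{Main obstacle.} The substantive content sits in Steps 1 and 2: producing an arc in $G$ itself (not merely in the orbit) from a formal curve in $\overline{Gw}$, and then normalizing it via the Cartan decomposition over the DVR $\C[[t]]$. The first requires a base change to clear the stabilizer torsor, and the second relies on the Smith normal form combined with the $SL_n$ determinant adjustment. The weight-space argument in Step~3, by contrast, is a short formal manipulation once the torus normalization is in place, and Step~4 is immediate. Both obstacles are standard but are where all the real input from algebraic group theory enters.
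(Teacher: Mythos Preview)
The paper explicitly declines to prove this theorem (immediately after the statement it says ``We will not prove this statement here''), so there is no in-paper argument to compare against. What you have written is the standard Hilbert--Mumford proof via the Cartan (Iwahori) decomposition of $SL_n(\C((t)))$, and the logical skeleton of Steps~2--4 is correct: Smith normal form over the PID $\C[[t]]$ gives the factorisation $g(t)=k_1(t)\lambda(t)k_2(t)$ with $\sum a_i=0$ forced by $\det g=1$, and your weight-by-weight limit computation in Step~3 is clean and accurate.

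There is one genuine inaccuracy in Step~1. You assert that the orbit map $G\to Gw$ has \emph{reductive} stabilizer. For points of the null-cone this is typically false: already for $w=X_1^2\in Sym^2((\C^2)^*)$ under $SL_2$ the stabilizer is (up to a $\Z/2$) the one-dimensional unipotent group, certainly not reductive. Fortunately your argument never uses reductivity. What Step~1 actually needs is that the $G_w$-torsor over $\mathrm{Spec}\,\C((t))$, obtained by pulling back $G\to Gw$ along the generic point of your curve, becomes trivial after a finite base change. That holds because $\C((t))$ has cohomological dimension~$1$: for the identity component $G_w^\circ$ one has $H^1(\C((t)),G_w^\circ)=1$ by Steinberg's theorem (using that unipotent radicals have vanishing $H^1$ in characteristic~$0$ and the reductive quotient is handled by the $C_1$ property), and a further finite extension kills the finite group $\pi_0(G_w)$. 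So your conclusion in Step~1 stands; just delete the word ``reductive'' and replace it with this cohomological justification.
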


In other words, if the zero-vector $0_W $ lies in the
orbit-closure of $w$, then there is a $1$-parameter subgroup
taking it there, in the limit. We will not prove this statement
here. Our objective for this chapter is to interpret the geometric
content of the theorem. We will show that there is a {\em standard
form} for an element of the null-cone. For well-known
representations, this standard form is easily identified by
geometric concepts. 

\section{Characters and the half-space criterion}
To begin, let $D$ be the fixed maximal torus. For any $w\in W$, we
may express:
\[ w=w_1 +w_2 +\ldots + w_r \]
where $w_i \in W_{\chi_i }$, the weight-space for character
$\chi_i $. Note the the above expression is unique if we insist
that each $w_i $ be non-zero. The set of characters $\{ \chi_1
,\ldots ,\chi_r \}$ will be called the {\bf support} of $w$ and
denoted as $supp(w)$. 
Let $\lambda :\C^* \rightarrow SL_n $ be such that $Im(\lambda
)\subseteq D$. In  this case, the action of $t\in \C^* $ via
$\lambda $ is easily described:
\[ t \cdot w= t^{(\lambda ,\chi_1 )}w_1 +\ldots +t^{(\lambda
,\chi_r )} w_r \]
Thus, if $\lim_{t \rightarrow 0} t \cdot w$ exists (and is $0_W $),
then for all $\chi \in supp(w)$, we have $(\lambda ,\chi ) \geq 0$
(and further $(\lambda ,\chi )>0$).

Note that $(\lambda ,\chi )$ is implemented as a linear functional
on $\Y^n $. Thus, if $\lim_{t \rightarrow 0} t\cdot w $ exists (and
is $)_W $) then there is a {\bf hyperplane} in $\Y^n $ such that
the support of $w$ is on one side of the hyperplane ({\bf
strictly} on one side of the hyperplane). The normal to this 
hyperplane is given
by the conversion of $\lambda $ into $\Y^n $ notation.

On the other hand if the support $supp(w)$ enjoys the 
geometric/combinatorial 
property, then by the approximability of reals by rationals, we
see that there is a $\lambda $ such that $\lim_{t \rightarrow 0}
t\cdot w$ exists (and is zero). 

Thus for $1$-parameter subgroups of $D$, Hilbert's theorem
translates into a combinatorial statement on the lattice subset
$supp(w)\subset \Y^n $. We call this the ({\bf strict}) 
{\bf half-space} property. In the general case, we know that given
any $\lambda :\C^* \rightarrow SL_n $, there is a maximal torus
$T$ containing $Im(\lambda )$. By the conjugacy result on maximal
tori, we know that $T=ADA^{-1}$ for some $A\in SL_n $. Thus, we
may say that $w$ is in the null-cone iff there is a translate
$A\cdot w$ such that $supp(A\cdot w)$ satisfies the strict half-space  
property. 

\begin{ex}
Let us consider $SL_3 $ acting of the space of forms of degree
$2$. For the standard torus $D$, the weight-spaces are $\C \cdot
X_i^2 $ and $\C \cdot X_i X_j $. Consider the form $f=(X_1 +X_2 +X_3
)^2 $. We see that $supp(f)$ is set of all characters of $Sym^2
(\C^3 )$ and {\bf does not} satisfy the combinatorial property.
However, under a basis change $A$:
\[ \begin{array}{rcl}
	  X_1 & \rightarrow & X_1 +X_2 +X_3 \\
	  X_2 & \rightarrow & X_2 \\
	  X_3 & \rightarrow & X_3 \end{array} \]
we see that $A\cdot f=X_1^2 $. Thus $A\cdot f$ {\bf does} satisfy
the strict half-space property. Indeed consider the $\lambda $
\[ \lambda (t) = \left[ \begin{array}{ccc}
t & 0 & 0 \\
0 & t^{-1} & 0 \\
0 & 0 & 1 \end{array} \right] \]
We see that 
\[ \lim_{t \rightarrow 0} t\cdot (A \cdot f)=t^2 X_1^2 =0 \]
Thus we see that every form in the null-cone has a standard form
with a very limited sets of possible supports. 

Let us look at the module ${\cal M}$ of $3\times 3$-matrices under
conjugation. Let us fix a $\lambda $: 
\[ \lambda (t) = \left[ \begin{array}{ccc}
t^{n_1} & 0 & 0 \\
0 & t^{n_2 } & 0 \\
0 & 0 & t^{n_3} \end{array} \right] \]
such that $n_1 +n_2 +n_3 =0$. We may assume that $n_1 \geq n_2
\geq n_3 $. Looking at the action of $\lambda (t)$ on a general
matrix $X$, we see that:
\[ t \cdot X= (t^{n_i -n_j } x_{ij} ) \]
Thus if $\lim_{t\rightarrow 0} t\cdot X$ is to be $0$ then
$x_{ij}=0$ for all $i>j$. In other words, $X$ is strictly
upper-triangular. Considering the general $1$-parameter group 
tells us that $X$ is in the null-cone iff there is an $A$ such
that $AXA^{-1} $ is strictly upper-triangular. In other words, $X$
is {\bf nilpotent}. The $1$-parameter subgroup identifies this
transformation and thus the flag of nilpotency.
\end{ex}

\section{The destabilizing flag}
In this section we do a more refined analysis of elements of the
null-cone. The basic motivation is to identify a unique set of
$1$-parameter subgroups which drive a null-point to zero. The
simplest example is given by $X_1^2 \in Sym^2 (\C^3 )$. Let $\lambda
$, $\lambda '$  and $\lambda ''$ be as below:
\[ \lambda (t) = \left[ \begin{array}{ccc}
t & 0 & 0 \\
0 & t^{-1} & 0 \\
0 & 0 & 1 \end{array} \right] 
\: \: \:  \lambda' (t) = \left[ \begin{array}{ccc}
t & 0 & 0 \\
0 & 1 & 0 \\
0 & 0 & t^{-1} \end{array} \right] 
\: \: \:  \lambda'' (t) = \left[ \begin{array}{ccc}
t & 0 & 0 \\
0 & 0& -1 \\
0 & t^{-1} & 0 \end{array} \right] \]
We see that all the three $\lambda $, $\lambda '$  and $\lambda ''$ 
drive $X_1^2 $ to
zero. The question is whether these are related, and 
to classify such $1$-parameter subgroups. Alternately, 
one may view this to a
more refined classification of points in the null-cone, such as
the stratification of the nilpotent matrices by their Jordan
canonical form.

There are two aspects to this analysis. Firstly, to identify a metric
by which to choose the 'best' $1$-parameter subgroup driving a 
null-point to zero. Next, to show that there is a unique equivalence 
class of such 'best' subgroups.

Towards the first objective, let $\lambda: \C^* \rightarrow SL_n $ 
be a $1$-parameter subgroup. Without loss of generality, we may assume
that $Im(\lambda ) \subseteq D$. If $w$ is a null-point then we have:
\[ t\cdot w =t^{n_1} w_1 + \ldots + t^{n_k} w_k \]
where $n_i >0$ for all $i$. Clearly, a measure of how fast $\lambda $
drives $w$ to zero is $m(\lambda )=min \{ n_1 ,\ldots ,n_k \}$. 
Verify that this really does not depend on the choice of the maximal
torus at all, and thus is well-defined.

Next, we see that for a $\lambda $ as above, we consider $\lambda^2 
:\C^* \rightarrow SL_n $ such that $\lambda^2 (t)=\lambda (t^2 )$. 
It is easy to see that $m(\lambda^2 )=2 \cdot m(\lambda )$. Clearly, 
$\lambda $ and $\lambda^2 $ are intrinsically identical and 
we would like to have a measure invariant under such scaling. This 
comes about by associating a length to each $\lambda $. Let $\lambda $ 
be as above and let $Im(\lambda )\subseteq D$. Then, there are 
integers $a_1 ,\ldots ,a_n $ such that 
\[ \lambda (t) = \left[ \begin{array}{cccc}
t^{a_1 } & 0 & 0 & 0 \\ 
0 & t^{a_2 } & 0 & 0 \\
  &          & \vdots & \\
  0 & 0 & 0 & t^{a_n } \\ \end{array} \right] \]

We define $\| \lambda \|$ as 
\[ \| \lambda \| =\sqrt{a_1^2 + \ldots + a_n^2 } \]

We must show that this does not depend on the choice of the maximal 
torus $D$. Let ${\cal T} (SL_n )$ denote the collection of all maximal tori
of $SL_n $ as abstract subgroups. For every $A \in SL_n $, we may 
define the map $\phi_A : {\cal T} \rightarrow {\cal T} $ defined by 
$T \rightarrow
ATA^{-1}$. The stabilizer of a torus $T$ for this action of $SL_n $ is
clearly $N(T)$, the normalizer of $T$. Also recall that $N(T)/T=W$ is 
the (discrete) weyl group. Let $Im(\lambda )\subseteq D \cap D'$ for 
some two maximal tori $D$ and $D'$. Since there is an $A$ such that 
$AD'A^{-1}=D$, it is clear that $\| \lambda \| =\| A \lambda 
A^{-1} \|$. Thus, we are left to check if $\| \lambda ' \| =\| 
\lambda \| $ when (i) $Im (\lambda ), Im(\lambda ') \subseteq D$, and 
(ii) $\lambda ' =A \lambda A^{-1}$ for some $A\in SL_n $. This throws 
the question to invariance of $\| \lambda \|$ under $N(D)$, or in 
other words, symmetry under the weyl group. Since $W \cong S_n $, the 
symmetric group, and since $\sqrt{a_1^2 +\ldots + a_n^2 }$ is a 
symmetric function on $a_1 ,\ldots ,a_n $, we have that $\| \lambda 
\| $ is well defined.

We now define the {\em efficiency} of $\lambda $ on a null-point $w$ 
to be
\[ e(\lambda )= \frac{m(\lambda )}{\| \lambda \|} \]
We immediately see that $e(\lambda )=e(\lambda^2 )$.

\begin{lemma} \label{lemma:unique}
Let $W$ be a representation of $SL_n $ and let $w\in W$ be a 
null-point. Let ${\cal N} (w,D)$ be the collection of all 
$\lambda :\C^* \rightarrow D$ such that $\lim_{t\rightarrow 0}
t \cdot w =0_W $. If ${\cal N}(w,D)$ is non-empty then there is 
a unique $\lambda' \in {\cal N}(w,D)$ which maximizes the efficiency, 
i.e., $e(\lambda' )>e(\lambda )$ for all $\lambda \in {\cal N}(w,D)$ 
and $\lambda \neq (\lambda')^k $ for any $k\in \Z$. This $1$-parameter
subgroup will be denoted by $\lambda (w,D)$.
\end{lemma}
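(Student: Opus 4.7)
The plan is to work in the real vector space $\Gamma(D)_{\R} := \Gamma(D) \otimes_{\Z} \R$ and exploit its convex-geometric structure. On $\Gamma(D)_{\R}$, the Euclidean norm $\|\cdot\|$ extends as a strictly convex, positively homogeneous function, and
\[ m(\lambda) = \min_{\chi \in \mathrm{supp}(w)} \langle \lambda, \chi \rangle \]
extends to a continuous, concave, positively homogeneous, piecewise-linear function whose value is positive precisely on the open convex cone
\[ \mathcal{N}(w,D)_{\R} = \{ \lambda \in \Gamma(D)_{\R} : \langle \lambda, \chi \rangle > 0 \text{ for all } \chi \in \mathrm{supp}(w) \}. \]
The original $\mathcal{N}(w,D)$ is the set of integer lattice points in $\mathcal{N}(w,D)_{\R}$, and since the efficiency $e(\lambda) = m(\lambda)/\|\lambda\|$ is scale-invariant, maximizing $e$ on $\mathcal{N}(w,D)$ reduces to maximizing the continuous function $m$ on the compact region $K := \overline{\mathcal{N}(w,D)_{\R}} \cap \{\|\lambda\| = 1\}$.

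First I would establish existence. Since $\mathcal{N}(w,D)$ is nonempty by hypothesis, so is $\mathcal{N}(w,D)_{\R}$, and hence $K$ is a nonempty compact set; continuity of $m$ yields a maximizer $\lambda^* \in K$ with $m^* := m(\lambda^*) > 0$. For uniqueness of the optimal direction, suppose $\lambda_1, \lambda_2 \in K$ both attain $m^*$ but are not positive scalar multiples of one another. Set $\lambda_0 = \tfrac{1}{2}(\lambda_1 + \lambda_2)$. Concavity of $m$ gives $m(\lambda_0) \geq \tfrac{1}{2}(m(\lambda_1) + m(\lambda_2)) = m^*$, while strict convexity of the Euclidean norm forces $\|\lambda_0\| < 1$. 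Hence $e(\lambda_0) > m^*$, contradicting maximality. Thus the optimal ray in $\mathcal{N}(w,D)_{\R}$ is unique.

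It remains to descend back to the lattice. The function $m$ is linear on each cone of the rational polyhedral fan cut out by the hyperplanes $\langle \lambda, \chi - \chi'\rangle = 0$ for $\chi, \chi' \in \mathrm{supp}(w)$: on such a cone $C$ there is a rational vector $\bar\chi$ (a convex combination of the active $\chi_i$'s, with rational coefficients determined by the face structure) such that $m|_C = \langle \cdot, \bar\chi\rangle$. The Lagrange optimality condition for maximizing this linear functional on the sphere then forces $\lambda^*$ to be proportional to $\bar\chi$; hence the optimal ray is rational, and the primitive integer vector on this ray is the desired $\lambda(w,D)$, with every other optimal element of $\mathcal{N}(w,D)$ of the form $\lambda(w,D)^k$ for a positive integer $k$. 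The main obstacle will be verifying the rationality of $\bar\chi$ when $\lambda^*$ lies on the boundary of several linearity cones so that the active set $S(\lambda^*) = \{\chi : \langle \lambda^*, \chi\rangle = m^*\}$ involves multiple characters; here one must argue that the finite linear system determining the optimal convex combination has rational solutions, which follows from the integrality of the $\chi_i$'s and standard rational polyhedral theory.
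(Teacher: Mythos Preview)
Your proposal is correct and follows essentially the same convex-geometric approach as the paper: both pass to $\Gamma(D)\otimes\R$, identify $\mathcal{N}(w,D)$ with the lattice points of the open polar cone $Cone(\mathrm{supp}(w))^{\circ}$, and argue that the scale-invariant efficiency $e$ has a unique maximizing ray which is rational. The paper's proof is a one-line ``routine analysis,'' whereas you spell out the two key steps---uniqueness from strict convexity of the norm plus concavity of $m$, and rationality from the piecewise-linear (rational polyhedral) structure of $m$---so your argument is in fact a fleshed-out version of theirs rather than a different route.
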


\noindent
{\bf Proof}: Suppose that ${\cal N}(w,D)$ is non-empty. Then in the
weight-space expansion of $w$ for the maximal torus $D$, we see that
$supp(w)$ staisfies the half-space property for some $\lambda 
\in \Y^n $. Note that the $\lambda \in {\cal N}(w,D)$ are parametrized
by lattice points $\lambda \in \Y^n $ such that $(\lambda , \chi ) 
>0$ for all $\chi \in supp (w)$. Let $Cone(w)$ be the conical
combination (over $\R $) of all $\chi \in supp(w)$ and $Cone(w)^{\circ}
$ its {\bf polar}. Thus, in other words, ${\cal N}(w,D)$ is 
precisely the collection of lattice points in the cone 
$Cone(w)^{\circ}$. Next, we see that $e(\lambda )$ is a convex 
function of $Cone(w)^{\circ}$ which is constant over rays $\R^+ \cdot 
\lambda $ for all $\lambda \in Cone(w)^{\circ}$. By a routine 
analysis, the maximum of such a function must be a unique ray 
with rational entries. This proves the lamma. $\Box $

This covers one important part in our task of identifying the 'best' 
$1$-parameter subgroup driving a null-point to zero. The next part is 
to relate $D$ to other maximal tori.

Let $\lambda :\C^* \rightarrow SL_n $ and let $P(\lambda )$ be defined
as follows:
\[ P(\lambda )=\{ A \in SL_n | \lim_{t\rightarrow 0} \lambda (t) 
A \lambda (t^{-1}) =I \in SL_n \} \]

Having fixed a maximal torus $D$ containing $IM(\lambda )$, we 
easily identify $P(\lambda )$ as a {\bf parabolic} subgroup, i.e., 
block upper-triangular. Indeed, let 
\[ \lambda (t) = \left[ \begin{array}{cccc}
t^{a_1 } & 0 & 0 & 0 \\ 
0 & t^{a_2 } & 0 & 0 \\
  &          & \vdots & \\
  0 & 0 & 0 & t^{a_n } \\ \end{array} \right] \]
with $a_1 \geq a_2 \geq \ldots \geq a_n $ (obviously with 
$a_1+ \ldots + a_n =0$). Then 
\[ P(\lambda )=\{ (x_{ij} | x_{ij}=0 \mbox{ for all $i,j$ such that 
$a_i <a_j$} \} \]
The {\bf unipotent radical} $U(\lambda )$ is a normal subgroup of 
$P(\lambda )$ defined as:
\[ U(\lambda )=(x_{ij}) \mbox{ where } = \left\{ \begin{array}{rl}
x_{ij}=0 & \mbox{ if } a_i <a_j  \\
x_{ij}=\delta_{ij} & \mbox{ if } a_i =a_j \\
\end{array} \right. \]

\begin{lemma}
Let $\lambda \in {\cal N}(w,D)$ and let $g\in P(\lambda )$, then (i) 
$g \lambda g^{-1} \subseteq P(\lambda )$ and $P(g \lambda g^{-1})=
P(\lambda )$, (ii) $g \lambda g^{-1} \in {\cal N}(w,gDg^{-1})$, and 
(iii) $e(\lambda )=e(g \lambda g^{-1})$. 
\end{lemma}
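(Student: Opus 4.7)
The plan is to reduce all three statements to the asymptotic expansion of $\lambda(t)\cdot w$ as $t\to 0$, exploiting the defining property of $P(\lambda)$ that the conjugate $\lambda(t) g \lambda(t)^{-1}$ has a (finite, invertible) limit for $g\in P(\lambda)$.

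For (i), I would first establish the general formula $P(g\lambda g^{-1}) = g P(\lambda) g^{-1}$. This is essentially tautological from the definition: $A \in P(g\lambda g^{-1})$ iff $\lim_{t\to 0} (g\lambda(t)g^{-1}) A (g\lambda(t)g^{-1})^{-1}$ exists, which, after conjugating by $g^{-1}$, reduces to $g^{-1}Ag \in P(\lambda)$. Next, the image of $\lambda$ itself lies in $P(\lambda)$ (since $\lambda(t)\lambda(s)\lambda(t)^{-1}=\lambda(s)$ for all $t,s$), so the image of $g\lambda g^{-1}$ lies in $g P(\lambda) g^{-1}$. Because $g\in P(\lambda)$ by hypothesis, $g P(\lambda) g^{-1} = P(\lambda)$, giving both assertions of (i) simultaneously.

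For (ii), let $g_0 := \lim_{t\to 0} \lambda(t) g^{-1} \lambda(t)^{-1}$, which exists because $g^{-1}\in P(\lambda)$, and which is invertible (its inverse being $\lim \lambda(t) g \lambda(t)^{-1}$). Writing
\[ (g\lambda(t) g^{-1})\cdot w \;=\; g\,\bigl(\lambda(t) g^{-1} \lambda(t)^{-1}\bigr)\bigl(\lambda(t)\cdot w\bigr), \]
the inner factor converges to the invertible endomorphism $g_0$, while $\lambda(t)\cdot w \to 0$ by hypothesis; so $g\lambda g^{-1} \in \mathcal{N}(w, gDg^{-1})$. For (iii), the norm $\|\cdot\|$ was already shown to be Weyl- and conjugation-invariant, so $\|g\lambda g^{-1}\| = \|\lambda\|$ at once. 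The main content is therefore $m(g\lambda g^{-1}) = m(\lambda)$. Expand $\lambda(t)\cdot w = t^{m(\lambda)} w_0 + t^{m(\lambda)+1} w_1 + \cdots$ with $w_0 \neq 0$; substituting into the displayed identity gives
\[ (g\lambda(t)g^{-1})\cdot w \;=\; t^{m(\lambda)}\, g\,g_0\, w_0 \;+\; O\!\bigl(t^{m(\lambda)+1}\bigr), \]
and since both $g$ and $g_0$ are invertible on $W$, the leading coefficient $g g_0 w_0$ is nonzero. Hence $m(g\lambda g^{-1}) = m(\lambda)$ and the efficiencies agree.

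The main obstacle I anticipate is bookkeeping the leading-order term carefully in (iii): one must verify that no cancellation between the $t^{m(\lambda)}$-contribution of $g_0$ and higher-order corrections lowers the order of vanishing. This is handled by the fact that $g_0$ is the genuine $t\to 0$ limit (so the error is $O(t)$ rather than of the same order), together with the invertibility of $g_0$ ensuring $g_0 w_0 \neq 0$. Everything else is a formal consequence of the parabolic structure and of the already-established norm invariance under conjugation.
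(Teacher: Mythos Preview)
Your argument is correct. The paper does not actually give a proof of this lemma: it simply states that ``this actually follows from the construction of the explicit $SL_n$-modules and is left to the reader.'' So there is no paper proof to compare to line by line; what you have written is a legitimate and self-contained way of filling in that gap.

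A brief remark on the two routes. The hint in the paper (``explicit $SL_n$-modules'') suggests a coordinate computation: diagonalize $\lambda$, write $g\in P(\lambda)$ in block-upper-triangular form for the weight grading of $W$, and read off directly that conjugation by $g$ preserves the set of $t$-exponents and hence $m(\lambda)$. Your approach is more intrinsic: you never pick coordinates on $W$, instead factoring $(g\lambda(t)g^{-1})\cdot w = g\bigl(\lambda(t)g^{-1}\lambda(t)^{-1}\bigr)\bigl(\lambda(t)\cdot w\bigr)$ and using only that $\lambda(t)g^{-1}\lambda(t)^{-1}$ has an invertible limit $g_0$. This is cleaner and generalizes unchanged to any reductive $G$, whereas the explicit-module argument is tied to $SL_n$. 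The one place to be slightly careful is your leading-term bookkeeping in (iii): it is justified because the entries of $\lambda(t)g^{-1}\lambda(t)^{-1}$ are genuine polynomials in $t$ (no negative powers, by the definition of $P(\lambda)$), and $\rho$ is polynomial, so $\rho\bigl(\lambda(t)g^{-1}\lambda(t)^{-1}\bigr)=\rho(g_0)+O(t)$ as you implicitly use. With that noted, your (iii) is airtight.
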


This actually follows from the construction of the explicit
$SL_n$-modules and is left to the reader. We now come to the unique 
object that we will define for each $w\in W$ in the null-cone. This is 
the parabolic subgroup $P(\lambda )$ for any 'best' $\lambda $. We 
have already seen above that if $\lambda'$ is a 
$P(\lambda )$-conjugate of a best $\lambda $ then $\lambda'$ is 'equally
best' and $P(\lambda )=P(\lambda ')$. 

We now relate two general equally best $\lambda $ and $\lambda '$. For 
this we need a preliminary definition and a lemma:

\begin{defn}
Let $V$ be a vector space over $\C$. A {\bf flag} ${\cal F}$ of $V$ 
is a sequence $(V_0 ,\ldots ,V_r )$ of nested subspaces $0=V_0 \subset
V_1 \subset \ldots \subset V_r =V$.
\end{defn}

\begin{lemma}
Let $dim_{\C} (V)=r$ and let ${\cal F}=(V_0 ,\ldots ,V_r )$ and 
${\cal F}'=(V'_0 ,\ldots, V'_r )$ be two (complete) flags for $V$. 
Then there is a basis $b_1 ,\ldots ,b_r $ of $V$ and a permutation 
$\sigma \in S_r $ such that $V_i =\overline{\{ b_1 ,\ldots ,b_i \}}$
and $V'_i =\overline{\{ b_{\sigma (1)},\ldots ,b_{\sigma (i)} \}}$
for all $i$.
\end{lemma}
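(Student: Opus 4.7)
The plan is to build the basis and the permutation simultaneously from the double filtration induced by the two flags on $V$. For $0 \le i, j \le r$ set $V_{i,j} = V_i \cap V'_j$ and $d_{i,j} = \dim V_{i,j}$. I will extract the permutation $\tau = \sigma^{-1}$ from a ``jump matrix'' $a_{i,j}$ measuring where new dimensions appear in this bi-filtration, then pick basis vectors in the corresponding cells.

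Define $a_{i,j} = (d_{i,j} - d_{i-1,j}) - (d_{i,j-1} - d_{i-1,j-1})$. The first step is to show that the matrix $(a_{i,j})_{1 \le i,j \le r}$ is a permutation matrix. Because $V_i/V_{i-1}$ is one-dimensional, the natural injection $V_{i,j}/V_{i-1,j} \hookrightarrow V_i/V_{i-1}$ forces $d_{i,j} - d_{i-1,j} \in \{0,1\}$, and similarly for the $j-1$ slice. Thus a priori $a_{i,j} \in \{-1,0,1\}$. The exclusion of $-1$ is the one point that needs a small argument: if $d_{i,j-1} > d_{i-1,j-1}$, pick $v \in V_i \cap V'_{j-1}$ with $v \notin V_{i-1}$; then $v$ lies in $V_i \cap V'_j$ but not in $V_{i-1}$, hence $d_{i,j} > d_{i-1,j}$, forcing $a_{i,j} = 0$. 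Telescoping in $j$ with $V_{i,0}=0$ and $V_{i,r}=V_i$ gives $\sum_j a_{i,j} = 1$, and telescoping in $i$ gives $\sum_i a_{i,j} = 1$, so $(a_{i,j})$ is indeed a permutation matrix. Define $\tau \in S_r$ by $a_{i,\tau(i)} = 1$.

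For each $i$, the equality $a_{i,\tau(i)}=1$ unpacks to say $V_{i,\tau(i)} \supsetneq V_{i-1,\tau(i)}$ while $V_{i,\tau(i)-1} = V_{i-1,\tau(i)-1}$. Choose
$$b_i \in V_{i,\tau(i)} \setminus V_{i-1,\tau(i)}.$$
Then $b_i \in V_i \setminus V_{i-1}$ directly; and if $b_i$ were in $V'_{\tau(i)-1}$, it would lie in $V_i \cap V'_{\tau(i)-1} = V_{i-1,\tau(i)-1} \subseteq V_{i-1}$, a contradiction, so $b_i \in V'_{\tau(i)} \setminus V'_{\tau(i)-1}$. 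Thus $b_i$ is compatible with both flags simultaneously, in complementary ``slots'' indexed by $i$ and $\tau(i)$.

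It remains to verify the flag identities for $\sigma := \tau^{-1}$. Linear independence of $b_1,\dots,b_r$ follows inductively from $b_i \notin V_{i-1} \supseteq \mathrm{span}\{b_1,\dots,b_{i-1}\}$, and a dimension count gives $V_i = \mathrm{span}\{b_1,\dots,b_i\}$. For the second flag, the set $\{i : \tau(i) \le j\} = \{\sigma(1),\dots,\sigma(j)\}$ has size $j$, and each corresponding $b_i$ lies in $V'_{\tau(i)} \subseteq V'_j$; linear independence plus $\dim V'_j = j$ forces equality $V'_j = \mathrm{span}\{b_{\sigma(1)},\dots,b_{\sigma(j)}\}$. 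The one step requiring genuine care, and thus the main obstacle in writing this out cleanly, is the ruling out of $a_{i,j} = -1$ via the ``if a jump happens to the left, it also happens below'' argument; everything else is bookkeeping once $(a_{i,j})$ is known to be a permutation matrix.
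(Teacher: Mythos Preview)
Your proof is correct. The key steps --- that $(a_{i,j})$ is a $\{0,1\}$-matrix with row and column sums equal to $1$, and that a vector $b_i$ chosen in the indicated cell $V_{i,\tau(i)}\setminus V_{i-1,\tau(i)}$ automatically lies in $V_i\setminus V_{i-1}$ and in $V'_{\tau(i)}\setminus V'_{\tau(i)-1}$ --- are all argued cleanly, including the one nontrivial point (ruling out $a_{i,j}=-1$).

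The paper, by contrast, offers no argument beyond the single sentence ``This is proved by induction on $r$.'' Your construction is genuinely different in spirit: rather than peeling off one dimension at a time, you compute the full relative-position permutation of the two flags in one shot via the jump matrix of the bi-filtration $V_i\cap V'_j$. This is essentially the combinatorial core of the Bruhat decomposition, and it has the advantage of producing $\sigma$ canonically (it is determined by the flags, not by inductive choices) and of making transparent why the result is really a statement about double cosets $B\backslash GL_r/B$. An inductive proof would be shorter to state but hides this structure; your approach costs a bit more bookkeeping but yields more information.
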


This is proved by induction on $r$.

\begin{cor} \label{cor:inter}
Let $\lambda $ and $\lambda '$ be two $1$-parameter subgroups and 
$P(\lambda )$ and $P(\lambda ')$ be their corresponding parabolic
subgroups. Then there is a maximal torus $T$ of $SL_n $ such that 
$T \subseteq P(\lambda ) \cap P(\lambda ')$.
\end{cor}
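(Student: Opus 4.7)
The plan is to realize each parabolic $P(\lambda)$ as the stabilizer of a flag of $\C^n$ and then invoke the preceding lemma to produce a common basis diagonalizing both flags.

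First I would attach to any one-parameter subgroup $\lambda:\C^{*}\to SL_n$ a flag $\mathcal{F}(\lambda)$ of $\C^n$ as follows. Conjugate so that $\text{Im}(\lambda)$ lies in the standard diagonal torus; then $\C^n$ decomposes as a direct sum of weight spaces $W_{a_1}\oplus\cdots\oplus W_{a_s}$ under $\lambda$, with weights ordered $a_1>a_2>\cdots>a_s$. Define $V_i = W_{a_1}\oplus\cdots\oplus W_{a_i}$, giving a flag $0=V_0\subset V_1\subset\cdots\subset V_s=\C^n$. Reading off the explicit block upper-triangular description of $P(\lambda)$ recalled just before the statement, one checks directly that $P(\lambda)$ is exactly the subgroup of $SL_n$ stabilizing this flag, and that this description is independent of the choice of maximal torus containing $\text{Im}(\lambda)$. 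The same construction applied to $\lambda'$ produces a second flag $\mathcal{F}(\lambda')$ whose stabilizer is $P(\lambda')$.

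Next, I would refine each of $\mathcal{F}(\lambda)$ and $\mathcal{F}(\lambda')$ arbitrarily to complete flags $\widetilde{\mathcal{F}}$ and $\widetilde{\mathcal{F}}{}'$ of $\C^n$. Since refinement of a flag only shrinks the stabilizer, any subgroup stabilizing $\widetilde{\mathcal{F}}$ automatically stabilizes $\mathcal{F}(\lambda)$, and similarly on the primed side. Now apply the preceding lemma to the pair $\widetilde{\mathcal{F}}, \widetilde{\mathcal{F}}{}'$: there exists a basis $b_1,\ldots,b_n$ of $\C^n$ and a permutation $\sigma\in S_n$ such that $\widetilde{V}_i = \overline{\{b_1,\ldots,b_i\}}$ and $\widetilde{V}{}'_i = \overline{\{b_{\sigma(1)},\ldots,b_{\sigma(i)}\}}$ for every $i$.

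Finally, let $T\subseteq SL_n$ be the maximal torus consisting of those elements of $SL_n$ which are diagonal in the basis $b_1,\ldots,b_n$. Every element of $T$ preserves each line $\C b_j$, hence preserves both coordinate flags $\widetilde{\mathcal{F}}$ and $\widetilde{\mathcal{F}}{}'$, and therefore stabilizes the coarser flags $\mathcal{F}(\lambda)$ and $\mathcal{F}(\lambda')$ as well. Thus $T\subseteq P(\lambda)\cap P(\lambda')$, as required. The only step that requires any care is the identification of $P(\lambda)$ with a flag stabilizer in the presence of repeated weights, but this is immediate from grouping equal weights into a single block; no genuine obstacle arises.
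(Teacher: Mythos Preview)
Your proof is correct and follows essentially the same approach as the paper: associate to each parabolic its flag, refine both to complete flags, apply the preceding lemma to obtain a common basis, and take $T$ to be the diagonal torus in that basis. The paper's version is simply a terse outline of exactly these steps, so your argument is a faithful expansion of it.
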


\noindent
{\bf Proof}: It is clear that there is a correspondence between 
parabolic subgroups of $SL_n $ and flags. We refine the flags associated
to the parabolic subgroups $P(\lambda )$ and $P(\lambda ')$ to complete
flags and apply the above lemma. $\Box $

We are now prepared to prove Kempf'd theorem:

\begin{theorem} \label{thm:kempf}
Let $W$ be a representation of $SL_n $ and $w\in W$ a null-point. Then 
there is a $1$-parameter subgroup $\lambda \in \Gamma (SL_n )$ such
that (i) for all $\lambda ' \in \Gamma (SL_n )$, we have $e(\lambda )
\geq e(\lambda ')$, and (ii) for all $\lambda '$ such that $e(\lambda )
=e(\lambda ')$ we have $P(\lambda )=P(\lambda ')$ and that there is a 
$g \in P(\lambda )$ such that $\lambda '=g\lambda g^{-1}$.
\end{theorem}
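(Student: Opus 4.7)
The plan is to establish (i) by a finiteness reduction to a single maximal torus, and (ii) by combining Lemma~\ref{lemma:unique} with Corollary~\ref{cor:inter} and the structure theory of parabolics. Throughout, $\lambda$ and its positive integer powers are treated as the same ``destabilizing ray,'' since $e(\lambda)=e(\lambda^k)$ and $P(\lambda)=P(\lambda^k)$ for $k>0$.

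\textbf{Existence of a global maximum (part (i)).} Fix the standard maximal torus $D$. By Hilbert's theorem some $\mu\in\Gamma(SL_n)$ drives $w$ to zero; any destabilizing $\mu$ has image in a maximal torus $T=ADA^{-1}$, so $A^{-1}\mu A\in{\cal N}(A^{-1}\!\cdot\! w,\,D)$ has the same efficiency as $\mu$. Hence
\[
\sup_{\mu}e(\mu)\ =\ \sup_{A\in SL_n}\,e\bigl(\lambda(A^{-1}\!\cdot\! w,\,D)\bigr),
\]
where $\lambda(w',D)\in{\cal N}(w',D)$ is the unique efficiency maximizer produced by Lemma~\ref{lemma:unique} whenever ${\cal N}(w',D)$ is non-empty. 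The key finiteness observation is that ${\cal N}(w',D)$ and its optimal efficiency depend on $w'$ only through $\mathrm{supp}_D(w')$, a subset of the finite set of $D$-weights of $W$. Thus the function $w'\mapsto e(\lambda(w',D))$ takes only finitely many values, so the supremum is attained by some $A_\ast$, and $\lambda:=A_\ast\lambda(A_\ast^{-1}\!\cdot\! w,D)A_\ast^{-1}$ is a global efficiency maximizer.

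\textbf{Uniqueness up to $P(\lambda)$-conjugacy (part (ii)).} Let $\lambda'$ be any other $1$-parameter subgroup with $e(\lambda')=e(\lambda)=:e^\ast$. By Corollary~\ref{cor:inter} there is a maximal torus $T\subseteq P(\lambda)\cap P(\lambda')$. Since the parabolic $P(\lambda)$ acts transitively on its own maximal tori, there exist $g\in P(\lambda)$ and $g'\in P(\lambda')$ with $g\lambda g^{-1},\ g'\lambda'{g'}^{-1}\in\Gamma(T)$; by the lemma preceding Corollary~\ref{cor:inter}, these conjugations preserve both the efficiency and the parabolic. Both conjugates then lie in ${\cal N}(w,T)$ and attain $e^\ast$, so by Lemma~\ref{lemma:unique} they lie on a common ray. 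After replacing by suitable positive integer powers we obtain $g\lambda g^{-1}=g'\lambda'{g'}^{-1}$, and therefore $\lambda'=h\lambda h^{-1}$ with $h:={g'}^{-1}g$. Applying $P(\cdot)$ gives $P(\lambda')=hP(\lambda)h^{-1}$, and together with $h\in P(\lambda)$ (discussed below) this yields $P(\lambda)=P(\lambda')$, completing (ii).

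\textbf{Main obstacle.} The delicate step is establishing $h\in P(\lambda)$, equivalently that any element conjugating $\lambda$ to another efficiency-maximizer must normalize $P(\lambda)$. A direct argument is circular: $h={g'}^{-1}g$ with $g'\in P(\lambda')$, and knowing $P(\lambda)=P(\lambda')$ is exactly what is to be proved. The resolution requires a genuine convexity input: one shows that if $P(\lambda)\neq P(\lambda')$, then by combining $\lambda$ and $\lambda'$ inside the reductive Levi of a common parabolic one produces a $\lambda''$ with $e(\lambda'')>e^\ast$, contradicting maximality. Equivalently, the Euclidean norm $\|\cdot\|$ is used to show that inside the rational convex cone ${\cal N}(w,T)$ the ratio $m/\|\cdot\|$ is \emph{strictly} maximized on a single ray, and that $U(\lambda)$-conjugates of $\lambda$ exhaust the set of $\lambda'$ with $P(\lambda')=P(\lambda)$ and $e(\lambda')=e(\lambda)$. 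This strict-convexity argument, together with the self-normalization property $N(P(\lambda))=P(\lambda)$ of parabolic subgroups, is the core of Kempf's theorem and the only place where the specific Euclidean structure of $\|\cdot\|$ is used essentially rather than merely to give a bounded ratio.
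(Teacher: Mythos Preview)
Your argument for (i) is exactly the paper's: reduce to the finitely many possible supports in a fixed maximal torus, so the optimal efficiency takes only finitely many values and the supremum is attained.

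For (ii) your approach is also the paper's, but you have manufactured an obstacle that is not there. You correctly invoke the lemma that $P$-conjugation preserves the parabolic, i.e.\ $P(g\lambda g^{-1})=P(\lambda)$ and $P(g'\lambda'{g'}^{-1})=P(\lambda')$. Having obtained $g\lambda g^{-1}=g'\lambda'{g'}^{-1}$ from Lemma~\ref{lemma:unique}, simply apply $P(\cdot)$ to \emph{this} equation (not to $\lambda'=h\lambda h^{-1}$): the left side is $P(\lambda)$, the right side is $P(\lambda')$, so $P(\lambda)=P(\lambda')$ immediately. Now $h={g'}^{-1}g$ with $g'\in P(\lambda')=P(\lambda)$ and $g\in P(\lambda)$, hence $h\in P(\lambda)$, and (ii) is done. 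This is precisely how the paper argues.

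Your ``Main obstacle'' paragraph is therefore superfluous: no additional strict-convexity argument, no Levi manipulation, and no appeal to self-normalization of parabolics is needed beyond what you already have. The circularity you worry about arises only because you applied $P(\cdot)$ to the rearranged equation $\lambda'=h\lambda h^{-1}$ instead of to the equation you actually derived; this loses the information that $g$ and $g'$ separately lie in the respective parabolics.
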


\noindent
{\bf Proof}: Let ${\cal N}(w)$ be all elements of $\Gamma (SL_n )$ 
which drive $w$ to zero. Let $\Xi (W)$ be the (finite) collection of 
$D$-characters appearing in the representation $W$. For every 
$\lambda (w,T)$ such that $Im(\lambda )\subseteq D$, we may consider an
$A\in SL_n $ such that $A \lambda A^{-1}\in {\cal N} (A \cdot w, D)$ 
and $e(\lambda )=e(A\lambda A^{-1})$. Since the 'best' element of 
${\cal N}(A\cdot w,D)$ is determined by $supp(A\cdot w) \subseteq 
\Xi$, we see that there are only finitely many possibilities for 
$e(A\cdot w, A \lambda A^{-1})$ and therefore for $e(\lambda )$ for 
the 'best' $\lambda $ driving $w$ to zero.

Thus the length $k$ of any sequence $\lambda (w,T_1 ), \ldots ,\lambda
(w,T_k )$ such that $e(\lambda (w,T_1 ))< \ldots <  e(\lambda (w,T_k ))
$ must be bounded by the number $2^{\Xi }$. This proves (i).

Next, let $\lambda_1 =\lambda(w,T_1 )$ and $\lambda_2 =\lambda (w,T_2 )$
be two 'best' elements of ${\cal N}(w,T_1 )$ and ${\cal N}(w,T_2 )$
respectively. By corollary \ref{cor:inter}, we have a torus , say $D$, 
and $P(\lambda (w,T_i ))$-conjugates $\lambda_i $ such that (i) 
$e(\lambda_i )=e(\lambda (w,T_i ))$ and (ii) $Im(\lambda_i ) \subseteq 
D$. By lemma \ref{lemma:unique}, we have $\lambda_1 =\lambda_2 $
and thus $P(\lambda_1 )=P(\lambda_2 )$. On the other hand, 
$P(\lambda (w,T_i ))=P(\lambda_i )$ and this proves (ii). $\Box $

Thus \ref{thm:kempf} associates a unique parabolic subgroup $P(w)$
to every point in the null-cone. This subgroup is called the 
{\bf destabilizing flag} of $w$. Clearly, if $w$ is in the 
null-cone then so is $A \cdot w$, where $A\in SL_n $. Furthermore, it 
is clear that $P(A \cdot w)=AP(w)A^{-1}$. 

\begin{cor} \label{cor:stab}
Let $w\in W$ be in the null-cone and let $G_w \subseteq SL_n $
stabilize $w$. Then $G_w \subseteq P(w)$.
\end{cor}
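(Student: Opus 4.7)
The plan is to exploit the uniqueness properties of Kempf's ``best'' one-parameter subgroup (Theorem~\ref{thm:kempf}) together with the fact that $G_w$ acts on the set of optimal destabilizing one-parameter subgroups. First I would fix a one-parameter subgroup $\lambda \in \Gamma(SL_n)$ achieving the maximum efficiency $e(\lambda)$ in driving $w$ to zero, so that by definition $P(\lambda) = P(w)$. Given $g \in G_w$, the natural thing to do is to form the conjugate $\lambda' = g\lambda g^{-1}$ and argue that $\lambda'$ is also optimal for $w$, forcing $P(\lambda') = P(\lambda)$, and ultimately pinning $g$ down inside $P(w)$.

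The key technical step is showing $e(\lambda') = e(\lambda)$. Since $g \cdot w = w$, we compute
\[
\lambda'(t) \cdot w \;=\; g\lambda(t)g^{-1} \cdot w \;=\; g \cdot (\lambda(t) \cdot w).
\]
If $w = \sum_i w_i$ is the $\lambda$-weight decomposition of $w$, with $w_i$ in the weight-$n_i$ subspace, then $g$ carries the weight-$n_i$ subspace of $\lambda$ bijectively onto the weight-$n_i$ subspace of $\lambda'$; so $w = \sum_i g \cdot w_i$ is the $\lambda'$-weight decomposition of $w$, and in particular $m(\lambda') = m(\lambda)$ (the vectors $g \cdot w_i$ are nonzero and live in distinct weight spaces). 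Combined with the fact recorded earlier in the chapter that $\|\cdot\|$ on $\Gamma(SL_n)$ is conjugation-invariant, this yields $e(\lambda') = e(\lambda)$, so $\lambda'$ is also a best destabilizing one-parameter subgroup for $w$.

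Now Theorem~\ref{thm:kempf}(ii) applies twice: it gives $P(\lambda') = P(\lambda) = P(w)$, and it produces an $h \in P(w)$ with $\lambda' = h\lambda h^{-1}$. Hence $g\lambda g^{-1} = h\lambda h^{-1}$, i.e.\ $h^{-1}g$ commutes with $\lambda(t)$ for every $t$. But the centralizer of a one-parameter subgroup inside $SL_n$ is the Levi factor of $P(\lambda)$ (block-diagonal with respect to the $a_i$-grading appearing in the explicit description of $P(\lambda)$ and $U(\lambda)$), and is therefore contained in $P(\lambda) = P(w)$. This gives $h^{-1}g \in P(w)$, and so $g = h(h^{-1}g) \in P(w)$, as required.

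The main obstacle I anticipate is really the verification that $e(\lambda') = e(\lambda)$: one has to be careful that the weight-space decomposition transports cleanly under $g$-conjugation (this is where $g \cdot w = w$ is essential, since otherwise the two decompositions would not share the same coefficients $n_i$), and one has to invoke the (earlier-proved) conjugation-invariance of $\|\lambda\|$. Everything else is a direct application of the uniqueness clause of Kempf's theorem and the standard Levi decomposition $Z(\lambda) \subseteq P(\lambda)$ visible from the block description in the previous section.
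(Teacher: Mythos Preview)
Your argument is correct, but the paper takes a shorter route. Both proofs begin the same way: from $g\cdot w = w$ one sees that $g\lambda g^{-1}$ is again an optimal destabilizing one-parameter subgroup, so Theorem~\ref{thm:kempf}(ii) gives $P(g\lambda g^{-1}) = P(\lambda)$. Since $P(g\lambda g^{-1}) = gP(\lambda)g^{-1}$, this already says that $g$ normalizes $P(w)$. At this point the paper simply invokes the standard fact that a parabolic subgroup of $SL_n$ is its own normalizer, and concludes $g\in P(w)$ immediately. You instead push further with the conjugacy clause of Theorem~\ref{thm:kempf}(ii) to produce $h\in P(w)$ with $h^{-1}g$ centralizing $\lambda$, and then use that the centralizer of $\lambda$ is the Levi factor of $P(\lambda)$. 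Your route has the virtue of staying entirely within the machinery already built in this chapter (you never appeal to the self-normalizing property, which the paper states without proof), at the cost of a longer argument and an extra structural fact about Levi subgroups. The paper's route is slicker but leans on an unproved black box.
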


\noindent
{\bf Proof}: Let $g \in G_w $. Since $g \cdot w=w$, we see that
$gP(w)g^{-1} =P(w)$, and that $g$ normalizes $P(w)$. Since the 
normalizer of any parabolic subgroup is itself, we see that $g \in
P(w)$. $\Box $
 
\chapter{Stability}

\noindent {\em Reference:} \cite{kempf,GCT1} 

Recall that $z \in W$ is stable iff its orbit $O(z)$ is closed in $W$. 
In the last chapter, we tackled the points in the null-cone, i.e.,
points in the set $[0_W ]_{\approx}$, or in other words, points which 
close onto the stable point $0_W $. A similar analysis may be done for 
arbitrary stable points. 

Following kempf, let $S\subseteq W$ be a closed $SL_n $-invariant 
subset. Let $z\in W$ be arbitrary. If the orbit-closure $\Delta (z)$ 
intersects $S$, then we associate a unique parabolic subgroup $P_{z,S} 
\subseteq SL_n $ as a witness to this fact. The construction of this 
parabolic subgroup is in several steps.

As the first step, we construct a representation $X$ of $SL_n $ and 
a closed $SL_n $-invariant embedding $\phi :W \rightarrow X$ such that
$\phi^{-1} (0_X )=S$, scheme-theoretically. This may be done as 
follows: since $S$ is a closed sub-variety of $W$, there is an ideal
$I_s =(f_1 ,\ldots ,f_k )$ of definition for $S$. We may further
assume that the vector space $\overline{\{ f_1 ,\ldots ,f_k \}}$ is 
itself an $SL_n $-module, say $X$. We assume that $X$ is
$k$-dimensional. 

We now construct the map $\phi :W \rightarrow X$ as follows:
\[ \phi (w) = (f_1 (x), \ldots ,f_k (x)) \]
Note that $\phi (S)=0_X$ and that $I_S =(f_1 ,\ldots ,f_k )$ ensure that
the requirements on our $\phi $ do hold.

Next, there is an adaptation of (Hilbert's) Theorem \ref{thm:hilbert}
which we do not prove:

\begin{theorem} \label{thm:hilbert2}
Let $W$ be an $SL_n $-module and let $y\in W$ be a stable point. Let 
$z \in [y]_{\approx}$ be an element which closes onto $y$. Then there 
is a $1$-parameter subgroup $\lambda :\C^* \rightarrow SL_n $ such that
\[ \lim_{t \rightarrow 0} \lambda (t) \cdot w \in O(y) \]
Thus the limit exists and lies in the closed orbit of $y$. 
\end{theorem}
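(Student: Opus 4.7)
The plan is to reduce Theorem~\ref{thm:hilbert2} to the classical Hilbert theorem (Theorem~\ref{thm:hilbert}) by means of the $\phi$-construction given at the beginning of this chapter. First I would apply that construction to the closed $SL_n$-invariant set $S := O(y)$ (it is closed precisely because $y$ is stable). This produces an $SL_n$-representation $X$ together with an equivariant polynomial morphism $\phi: W \to X$ whose components span an $SL_n$-submodule of $\C[W]$ generating the ideal $I_{O(y)}$, so that $\phi^{-1}(0_X) = O(y)$ scheme-theoretically.

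The next step is to observe that $\phi(z)$ sits in the null-cone of $X$. Indeed, $y \in \Delta(z)$ by hypothesis, and continuity together with $SL_n$-equivariance of $\phi$ give
\[
0_X \;=\; \phi(y) \;\in\; \phi(\overline{O(z)}) \;\subseteq\; \overline{SL_n \cdot \phi(z)} \;=\; \Delta(\phi(z)).
\]
Applying Theorem~\ref{thm:hilbert} to $\phi(z) \in X$ then produces a one-parameter subgroup $\lambda: \C^* \to SL_n$ with $\lim_{t \to 0} \lambda(t) \cdot \phi(z) = 0_X$. If we can show that $\lim_{t \to 0} \lambda(t) \cdot z$ exists in $W$, the equivariance $\phi(\lambda(t)\cdot z) = \lambda(t)\cdot\phi(z)$ forces this limit to lie in $\phi^{-1}(0_X) = O(y)$, delivering the theorem.

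The hard part will be precisely this last existence-of-limit assertion. Decompose $W = \bigoplus_n W_n$ and $X = \bigoplus_n X_n$ under the $\C^*$-action via $\lambda$; equivariance of $\phi$ gives $\phi(W_n) \subseteq X_n$, and writing $z = \sum_n z_n$ the already-established limit in $X$ translates into $\phi(z_n) = 0$ for all $n \leq 0$, i.e. $z_n \in O(y)$ set-theoretically for $n \leq 0$. What remains to be excluded is a nonzero negative-weight component $z_n$ with $n < 0$, which would obstruct extension of $t \mapsto \lambda(t) \cdot z$ to $t = 0$. I would attack this either (a) by passing to the normalization of the closure of the graph of $t \mapsto \lambda(t) \cdot z$ in $\A^1 \times W$ and applying a valuative-criterion argument --- the image in $\A^1 \times X$ already extends over $t = 0$ with fiber $O(y)$, which is affine, so one tries to lift the section back to $W$; or (b) by replacing $\lambda$ with a conjugate $u\lambda u^{-1}$ where $u$ lies in the unipotent radical of the parabolic $P(\lambda)$, chosen to absorb the offending negative-weight components of $z$. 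Carrying out either of these modifications rigorously is the main technical obstacle; once executed, the two earlier steps close the proof.
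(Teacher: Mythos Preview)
The paper does not prove Theorem~\ref{thm:hilbert2}; it is introduced with ``an adaptation of (Hilbert's) Theorem~\ref{thm:hilbert} which we do not prove''. So there is no paper proof to compare against, only the surrounding discussion.

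That surrounding discussion, however, directly addresses your strategy and explains why it breaks. In the paragraph following the theorem the paper uses Theorem~\ref{thm:hilbert2} to conclude that $\phi(z)$ is in the null-cone of $X$, and then explicitly warns that the converse implication --- the one you need --- is ``unproved (and untrue)'': a one-parameter subgroup driving $\phi(z)$ to $0_X$ need \emph{not} have $\lim_{t\to 0}\lambda(t)\cdot z$ exist in $W$. Your step~4 is exactly an attempt to establish this false converse for the particular $\lambda$ produced by Theorem~\ref{thm:hilbert}, and you correctly flag it as the hard part; but neither of your proposed repairs is carried out, and the paper's remedy is different --- it does not try to repair a $\lambda$ coming from $X$, but instead restricts the optimization over $\Gamma(SL_n)$ from the outset to those $\lambda$ for which the limit in $W$ already exists, taking the nonemptiness of this set as given by Theorem~\ref{thm:hilbert2}.

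There is also a concrete error in your weight-space analysis. The map $\phi$ is a \emph{polynomial} morphism, not a linear one, so $SL_n$-equivariance does not give $\phi(W_n)\subseteq X_n$; a pure weight-$n$ vector in $W$ can land in a mixture of weight spaces in $X$ depending on the degrees of the components $f_i$. Consequently the conclusion ``$\phi(z_n)=0$ for $n\le 0$, hence $z_n\in O(y)$'' does not follow. As for fix~(b), conjugating $\lambda$ by $u\in U(\lambda)$ replaces the weight decomposition of $z$ by that of $u^{-1}\cdot z$; since $u^{-1}$ has only nonnegative $\lambda$-weights this moves components upward in weight, but there is no mechanism here that kills a genuinely negative-weight piece of $z$. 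The actual proof (in Kempf's paper, or via the Hilbert--Mumford valuative argument) requires substantially more than either sketch provides.
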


Now suppose that $\Delta (z) \cap S$ is non-empty. Then there must 
be stable $y \in \Delta (z)$. We apply the theorem to $O(y)$ and 
obtain the $\lambda $ as above. This shows that there is indeed a 
$1$-parameter subgroup driving $z$ into $S$. Next, it is easy 
to see that 
\[ \lim_{t \rightarrow 0} [\lambda (t) \cdot \phi (z) ] =0_X \]
Thus $\phi (z)$ actually lies in the null-cone of $X$. We may 
now be tempted to apply the techniques of the previous chapter to 
come up with the 'best' $\lambda $ and its parabolic, now called 
$P(z,S)$. This is almost the technique to be adopted , except
that this 'best' $\lambda$ drives $\phi (z)$ into $0_X$ but 
$\lim_{t \rightarrow 0} [\lambda (t)\cdot z]$ (which is supposed 
to be in $S$) may not exist! This is because we are using the 
unproved (and untrue) converse of the assertion that $1$-parameter 
subgroups which drive $z$ into $S$ drive $\phi (z)$ into $0_X$.

This above argument is rectified by limiting the domain of allowed 
$1$-parameter subgroups to (i) $Cone(supp(\phi (z))^{\circ}$ as 
before, and (ii) those $\lambda $ such that $\lim_{t \rightarrow 
0} [\lambda (t) \cdot z]$ exists. This second condition is also a 
'convex' condition and then the 'best' $\lambda $ 
does exist. This completes the construction of $P(z,S)$. 

As before, if $G_z \subseteq SL_n $ stabilizes $z$ then it normalizes 
$P(z,S)$ thus must be contained in it:

\begin{prop} \label{prop:stable}
If $G_z $ stabilizes $z$ then $G_z \subseteq P(z,S)$.
\end{prop}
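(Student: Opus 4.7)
The plan is to mirror the proof of Corollary \ref{cor:stab}, which handled the special case when $S=\{0_W\}$ (the null-cone). The essential mechanism is that $P(z,S)$ is canonically and uniquely determined by the pair $(z,S)$, so any symmetry of this pair must normalize $P(z,S)$; combined with the classical fact that a parabolic subgroup is its own normalizer, this forces the symmetry to lie inside the parabolic.

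More concretely, first I would verify the equivariance statement $P(g \cdot z, S) = g\, P(z,S)\, g^{-1}$ for every $g \in SL_n$. This should follow by tracing through the construction of $P(z,S)$: the embedding $\phi: W \hookrightarrow X$ is $SL_n$-equivariant by construction (since $X$ was built from an $SL_n$-invariant ideal $I_S$), so $\phi(g\cdot z) = g \cdot \phi(z)$. Consequently the set of $1$-parameter subgroups driving $\phi(z)$ to $0_X$ while keeping $\lim_{t\to 0}\lambda(t)\cdot z$ inside $S$ is simply conjugated by $g$ when we replace $z$ by $g\cdot z$; the efficiency function $e(\lambda)$ is invariant under this conjugation (as was already noted in the null-cone analysis), and therefore the optimizing $\lambda$ transforms equivariantly. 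This yields $P(g\cdot z, S) = g P(z,S) g^{-1}$.

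Next, for $g \in G_z$ we have $g \cdot z = z$, hence $g P(z,S) g^{-1} = P(z,S)$, i.e., $g$ lies in the normalizer $N_{SL_n}(P(z,S))$. The final step invokes the standard fact from the theory of linear algebraic groups that a parabolic subgroup of $SL_n$ equals its own normalizer, so $N_{SL_n}(P(z,S)) = P(z,S)$, giving $g \in P(z,S)$.

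The only step that requires any care is the equivariance of the construction of $P(z,S)$, since this parabolic was defined through a two-sided convex optimization (restricting to $1$-parameter subgroups in $\mathrm{Cone}(\mathrm{supp}(\phi(z)))^{\circ}$ and additionally to those for which $\lim_{t\to 0}\lambda(t)\cdot z$ exists). Both conditions are intrinsic to the pair $(z,S)$ and transform covariantly under the $SL_n$-action, so the uniqueness assertion in Theorem \ref{thm:kempf} (suitably adapted as indicated in the paragraph preceding the proposition) delivers a canonical $P(z,S)$ and hence the equivariance. Once equivariance is established, the normalizer argument from Corollary \ref{cor:stab} transfers verbatim, and no further work is needed.
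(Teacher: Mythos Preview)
Your proposal is correct and follows exactly the paper's approach: the paper simply says ``As before, if $G_z \subseteq SL_n$ stabilizes $z$ then it normalizes $P(z,S)$ thus must be contained in it,'' deferring entirely to the argument of Corollary~\ref{cor:stab}. Your write-up is in fact more detailed than the paper's, since you spell out the equivariance $P(g\cdot z,S)=gP(z,S)g^{-1}$ and note why both convex constraints in the optimization transform covariantly, whereas the paper leaves this implicit in the phrase ``as before.''
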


Let us now consider the {\bf permanent} and the {\bf determinant}. Let 
${\cal M}$ be the $n^2 $-dimensional space of all $n\times n$-matrices.
Since $det$ and $perm$ are homogeneous $n$-forms on ${\cal M}$, we 
consider the $SL({\cal M})$-module $W=Sym^n ({\cal M}^* )$. We recall 
now certain stabilizing groups of the $det$ and the $perm$. We will 
need the definition of a certain group $L'$. This is defined as the 
group generated by the permutation and diagonal matrices in $GL_n $. 
In other words, $L'$ is the normalizer of the complete standard 
torus $D^* \subseteq GL_n $. $L$ is defined as that subgroup of
$L'$ which is contained in $SL_n $. 

\begin{prop}
\begin{itemize}
\item[(A)] Consider the group $K=SL_n \times SL_n $. We define 
the action $\mu_K $ of  typical element $(A,B) \in K$ on $X \in 
{\cal M}$ as given by:
\[ X \rightarrow AXB^{-1} \]
Then (i) ${\cal M}$ is an irreducible representation of $K$ and 
$Im(K)\subseteq SL({\cal M})$, and (ii) $K$ stabilizes the determinant.

\item[(B)] Consider the group $H= L \times L$. We define 
the action $\mu_H $ of  typical element $(A,B) \in H$ on $X \in 
{\cal M}$ as given by:
\[ X \rightarrow AXB^{-1} \]
Then (i) ${\cal M}$ is an irreducible representation of $H$ and 
$Im(H)\subseteq SL({\cal M})$, and (ii) $H$ stabilizes the permanent.
\end{itemize}
\end{prop}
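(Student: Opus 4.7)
The proposition has two parts with three sub-claims each: irreducibility of ${\cal M}$ under the group action, the containment in $SL({\cal M})$, and preservation of the named polynomial ($\det$ or $\perm$). The containments and preservations are routine; the nontrivial content is the two irreducibility statements, and especially irreducibility under $H = L \times L$ in (B), since $H$ is disconnected and not connected reductive.

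For part (A), my first step would be to identify ${\cal M} \cong \C^n \otimes (\C^n)^*$ as a module for $K = SL_n \times SL_n$, with $(A, B)$ acting as the external tensor product $A \otimes (B^{-1})^T$. Since the standard representation $\C^n$ and its dual $(\C^n)^*$ are both irreducible representations of $SL_n$, their external tensor product is an irreducible $K$-module by the standard fact that irreducibles of a product of groups are precisely external tensor products of irreducibles of the factors. For the inclusion $\im(K) \subseteq SL({\cal M})$, I would compute $\det(P \otimes Q) = \det(P)^n \det(Q)^n$ for $n \times n$ matrices $P, Q$ acting on $\C^n \otimes \C^n$, obtaining $\det(A \otimes (B^{-1})^T) = \det(A)^n \det(B)^{-n} = 1$. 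Finally, $\det(AXB^{-1}) = \det(X)$ follows from multiplicativity of the determinant.

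For part (B), preservation of $\perm$ reduces to two well-known facts: the permanent is invariant under independent permutations of rows and columns (each such permutation merely reindexes the sum over $S_n$ in the definition of $\perm$), and $\perm(D_1 X D_2) = \det(D_1) \det(D_2) \perm(X)$ for diagonal $D_1, D_2$ (since the factor $\prod_i d_{1,i} \prod_i d_{2, \sigma(i)} = \det(D_1) \det(D_2)$ is independent of the permutation $\sigma$ in the expansion). Writing each element of $L$ as a diagonal matrix times a permutation matrix whose product lies in $SL_n$, these combine to give $\perm(AXB^{-1}) = \perm(X)$ for $(A, B) \in H$. The inclusion $\im(H) \subseteq SL({\cal M})$ is immediate from $H \subseteq K$ and part (A). For irreducibility, the plan is to use the torus $T = D \times D \subseteq H$, where $D \subseteq L$ is the $SL_n$-part of the diagonal. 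Under $T$, the weight decomposition of ${\cal M}$ is $\bigoplus_{i,j} \C E_{ij}$ with $E_{ij}$ the matrix unit, and weight $\chi_{ij}(A, B) = a_i b_j^{-1}$; one checks these are pairwise distinct characters when $n \geq 2$. Hence any $T$-invariant subspace is a direct sum $\bigoplus_{(i,j) \in S} \C E_{ij}$ for some $S \subseteq [n] \times [n]$, and the permutation component of $H$, acting on the index set by $(\sigma, \tau) \cdot (i, j) = (\sigma(i), \tau(j))$, is transitive, forcing $S$ to be empty or all of $[n] \times [n]$.

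The one subtlety I expect is that a pure permutation matrix $P_\sigma$ lies in $SL_n$ only when $\sign(\sigma) = +1$, so in executing the transitive permutation argument I would pair $P_\sigma$ with a diagonal twist of determinant $\sign(\sigma)$ whenever $\sigma$ is odd. Such a twist rescales a matrix unit $E_{ij}$ by a nonzero scalar but still permutes the set of weight lines transitively, which is all the irreducibility argument requires. Beyond this minor bookkeeping, the entire proof is elementary, invoking only the weight-space decomposition of a torus representation and the standard fact that external tensor products of irreducibles are irreducible.
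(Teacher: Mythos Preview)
Your argument is correct and complete. The paper states this proposition without proof, presumably regarding it as standard, and proceeds directly to use it in the stability theorem for $\det$ and $\perm$; so there is nothing to compare against beyond noting that your write-up supplies exactly the kind of elementary verification the paper omits. Your handling of the one genuine subtlety---that an odd permutation matrix lies outside $SL_n$ and must be twisted by a diagonal of determinant $-1$, which still permutes the weight lines $\C E_{ij}$ transitively---is precisely what is needed to make the irreducibility argument for $H$ airtight.
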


We are now ready to show:

\begin{theorem} \label{thm:detperm}
The points $det$ and $perm$ in the $SL({\cal M}$-module $W=Sym^n ({\cal
M}^*)$ are stable.
\end{theorem}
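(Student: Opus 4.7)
\medskip

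\noindent\emph{Proof plan.} The plan is to deduce stability of $\det$ and $\perm$ from the largeness and irreducibility properties of their stabilizers $K = SL_n\times SL_n$ and $H = L\times L$, combined with the destabilizing-parabolic technology developed in the previous chapter (Theorem~\ref{thm:kempf} and its extension summarized in Proposition~\ref{prop:stable}). The strategy is by contradiction: if $\det$ (respectively $\perm$) were not stable, its orbit closure in $W = \mathrm{Sym}^n({\cal M}^*)$ would properly contain a closed orbit $O(y)$, and we would extract a proper parabolic of $SL({\cal M})$ into which the stabilizer must embed; the irreducibility of ${\cal M}$ as a $K$-module (resp.\ $H$-module) will rule this out.

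First I would argue that it suffices to show the following: every proper parabolic subgroup $P \subsetneq SL({\cal M})$ fails to contain $K$ (resp.\ $H$). Recall that a parabolic subgroup of $SL({\cal M}) = SL_{n^2}$ is, by definition, the stabilizer of a (partial) flag in ${\cal M}$, so in particular every proper parabolic stabilizes some nonzero proper subspace ${\cal M}' \subsetneq {\cal M}$. Now by Proposition~\ref{prop:proposition} (A)(i) the $K$-module ${\cal M} \cong \C^n \otimes \C^n$ is irreducible, since $\C^n$ is the defining irreducible of each factor of $SL_n \times SL_n$; therefore $K$ cannot stabilize any such ${\cal M}'$, so $K \not\subseteq P$ for any proper parabolic $P$. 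The analogous statement for $H = L \times L$ is equally clean: a subspace of $\C^n$ invariant under both the diagonal torus in $SL_n$ and the full permutation group $S_n$ must be a direct sum of coordinate lines invariant under $S_n$, forcing it to be $0$ or all of $\C^n$, so $\C^n$ is $L$-irreducible and ${\cal M}$ is $(L\times L)$-irreducible.

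Next I would carry out the contradiction. Suppose $\det \in W$ is not stable, so $O(\det)$ is not closed and $\overline{O(\det)}\setminus O(\det)$ is a nonempty closed $SL({\cal M})$-invariant subset $S$. By Theorem~\ref{thm:orbitclosure} (iii), $S$ contains a (unique) closed orbit $O(y)$, and by Theorem~\ref{thm:hilbert2} applied to $z=\det$ and this $O(y)$, the orbit closure of $\det$ meets $S$ nontrivially along a limit reached by some $1$-parameter subgroup. The construction sketched at the start of this chapter then associates to $(\det,S)$ a canonical parabolic subgroup $P(\det,S) \subseteq SL({\cal M})$, and this parabolic is proper because $\det \notin S$. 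Proposition~\ref{prop:stable} says $K = G_{\det} \subseteq P(\det,S)$, contradicting the irreducibility paragraph above. The identical argument with $\perm$, $H$, and the appropriate boundary $S'$ shows $\perm$ is stable.

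The only real work is checking that $K$ and $H$ really do act irreducibly on ${\cal M}$ (already asserted in the preceding proposition, but in a plan I would verify it rather than simply cite) and, more delicately, verifying that the destabilizing parabolic $P(\det,S)$ is indeed proper in $SL({\cal M})$. The latter is the step I expect to be the main obstacle: one needs to rule out the degenerate possibility that the ``best'' $1$-parameter subgroup witnessing $\overline{O(\det)} \cap S \neq \emptyset$ is trivial, which would force $\det$ itself to lie in $S$; this is essentially guaranteed by the variant of Hilbert--Mumford stated as Theorem~\ref{thm:hilbert2}, but one should be careful that the embedding $\phi:W\to X$ used to reduce to the null-cone case is set up so that $\phi(\det)\neq 0_X$ (which holds because $\det \notin S$ and $\phi^{-1}(0_X)=S$).
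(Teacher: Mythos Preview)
Your proposal is correct and follows essentially the same approach as the paper's proof: assume instability, take $S$ to be the unique closed orbit in the boundary, invoke the destabilizing parabolic $P(\det,S)$ and Proposition~\ref{prop:stable} to trap the stabilizer inside a proper parabolic, then contradict the irreducibility of ${\cal M}$. One small slip: you write $K = G_{\det}$, but $K$ is only a subgroup of the full stabilizer (the latter also contains the transpose); this is harmless since $K \subseteq G_{\det} \subseteq P(\det,S)$ is all you need.
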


\noindent
{\bf Proof}: Lets look at $det$, the $perm$ being similar. 
If $det$ were not stable, then there 
would be a closed $SL({\cal M})$-invariant subset $S \subset W$ 
such that $det \not \in S$ but closes onto $S$: just take $S$ 
to be the unique closed orbit in $[det]_{\approx}$. Whence there is a 
parabolic $P(det,S)$ which, by Proposition \ref{prop:stable}, would 
contain $K$. This would mean that there is a $K$-invariant flag in 
${\cal M}$ corresponding to $P(det,S)$. This contradicts the
irreducibility of ${\cal M}$ as a $K$-module. $\Box $

\end{document}